\begin{document}

\title{Semiring Provenance for Lightweight Description Logics}

\author{Camille Bourgaux}
\orcid{0000-0002-8806-6682}
\email{camille.bourgaux@ens.fr}
\affiliation{%
  \institution{DI ENS, ENS, CNRS, PSL University \& Inria}
  \city{Paris}
  \country{France}}

\author{Ana Ozaki}
\orcid{0000-0002-3889-6207}
\email{ana.ozaki@uib.no}
\affiliation{%
  \institution{University of Oslo}
  \city{Oslo}
  \country{Norway}}

\author{Rafael Pe\~naloza}
\orcid{0000-0002-2693-5790}
\email{rafael.penaloza@unimib.it}
\affiliation{%
  \institution{University of Milano-Bicocca}
  \city{Milan}
  \country{Italy}
}

\renewcommand{\shortauthors}{Bourgaux, Ozaki, \& Pe\~naloza}

\newcommand{\homo}{\ensuremath{g}\xspace}

\newcommand{\mn}[1]{\ensuremath{\mathsf{#1}}}
\newcommand{\mi}[1]{\ensuremath{\mathit{#1}}}

\newcommand{\CB}[1]{\textcolor{orange}{CB: #1}}
\newcommand{\AO}[1]{\textcolor{blue!40!green}{AO: #1}}
\newcommand{\RP}[1]{\textcolor{blue!40!red}{RPN: #1}}

\newcommand{\posbool}{\ensuremath{\sf{PosBool}[\semiringVars]}\xspace}
\newcommand{\lin}{\ensuremath{\sf{Lin}[\semiringVars]}\xspace}
\newcommand{\why}{\ensuremath{\sf{Why}[\semiringVars]}\xspace}
\newcommand{\whybis}[1]{\ensuremath{\sf{Why}[#1]}\xspace}
\newcommand{\sorp}{\ensuremath{\sf{Sorp}[\semiringVars]}\xspace}
\newcommand{\trio}{\ensuremath{\sf{Trio}[\semiringVars]}\xspace}
\newcommand{\boolseries}{\ensuremath{\mathbb{B}\llbracket\semiringVars\rrbracket}\xspace}
\newcommand{\trioseries}{\ensuremath{\sf{Trio}\llbracket\semiringVars\rrbracket}\xspace}
\newcommand{\series}{\ensuremath{\mathbb{N}^\infty\llbracket\semiringVars\rrbracket}\xspace}
\newcommand{\polynomials}{\ensuremath{\mathbb{N}[\semiringVars]}\xspace}
\newcommand{\boolpolynomials}{\ensuremath{\mathbb{B}[\semiringVars]}\xspace}

\newcommand{\semiringVars}{\ensuremath{{\sf X}}\xspace}
\newcommand{\polynomsemiring}{(\mathbb{N}[\semiringVars], + ,\times,0,1)}
\newcommand{\powerseriessemiring}{\powerseriessemiringshort= (\Nbb^\infty\llbracket\semiringVars\rrbracket, + ,\times,0,1)}

\newcommand{\Just}{\ensuremath{{\sf Just_\Omc}}\xspace}

\newcommand{\Rew}{\mn{Rew}}
\newcommand{\ext}[1]{\boldsymbol{#1}}

\newcommand{\plusidem}{$\oplus$\mbox{-}idempotent\xspace}
\newcommand{\timesidem}{$\otimes$-idempotent\xspace}

\makeatletter
\newcommand{\bigplus}{%
  \DOTSB\mathop{\mathpalette\mattos@bigplus\relax}\slimits@
}
\newcommand\mattos@bigplus[2]{%
  \vcenter{\hbox{%
    \sbox\z@{$#1\sum$}%
    \resizebox{!}{0.9\dimexpr\ht\z@+\dp\z@}{\raisebox{\depth}{$\m@th#1+$}}%
  }}%
  \vphantom{\sum}%
}
\makeatother

\newcommand{\p}[2]{\ensuremath{{\sf prov}_{#1}({#2})}\xspace}

\newcommand{\provdb}{\Pmc_{DB}}
\newcommand{\provdat}{\Pmc_{Dat}}
\newcommand{\elem}{\kappa}
\newcommand{\elema}{\kappa}
\newcommand{\elemb}{\kappa'}
\newcommand{\elemc}{\chi}
\newcommand{\eleme}{\epsilon}
\newcommand{\semiringset}{\ensuremath{K}}
\newcommand{\zero}{\ensuremath{\mathbb{0}}\xspace}
\newcommand{\one}{\ensuremath{\mathbb{1}}\xspace}
\newcommand{\semiring}{(\semiringset,\oplus,\otimes,\zero,\one)}
\newcommand{\semiringshort}{\mathbb{K}}

\newcommand{\annot}{\mi{label}}
\newcommand{\monomial}{\ensuremath{m}\xspace}
\newcommand{\monomials}{\ensuremath{\mn{mon}}\xspace}

\newcommand{\query}{q}
\newcommand{\ans}{\vec{a}}

\newcommand{\constSet}{\bf{C}}
\newcommand{\varSet}{\bf{V}}
\newcommand{\predSet}{\bf{R}}
\newcommand{\pred}{p}
\newcommand{\program}{\Pmc}
\newcommand{\database}{\Dmc}
\newcommand{\goal}{\mn{goal}}
\newcommand{\inter}{I}
\newcommand{\tuple}{\ensuremath{\mathbf{t}}}
\newcommand{\derivationtree}{\tau}
\newcommand{\leaves}{\mn{leaves}}

\newcommand{\NC}{\ensuremath{{\sf N_C}}\xspace}
\newcommand{\NI}{\ensuremath{{\sf N_I}}\xspace}
\newcommand{\NR}{\ensuremath{{\sf N_R}}\xspace}
\newcommand{\NM}{\ensuremath{{\sf N_M}}\xspace} 
\newcommand{\NV}{\ensuremath{{\sf N_V}}\xspace} 
   
\newcommand{\pair}[2]{\ensuremath{({#1},{#2})}} 
\newcommand{\polynomial}{\ensuremath{p}\xspace} 
\newcommand{\sfm}{\ensuremath{{\sf m}}\xspace}

\newcommand{\KB}{\Kmc}
\newcommand{\ontology}{\Omc}
\newcommand{\ABox}{\Amc}
\newcommand{\TBox}{\Tmc}
\newcommand{\axiom}{\ensuremath{\alpha}\xspace}
\newcommand{\consequence}{\ensuremath{c}\xspace}
\newcommand{\justification}{\Mmc}
\newcommand{\ontologyp}{\Pmc}
\newcommand{\ABoxp}{\Bmc}
\newcommand{\TBoxp}{\Smc}
\newcommand{\axiomp}{\ensuremath{\beta}\xspace}

%
%
\newcommand{\tup}[1]{\langle #1 \rangle}
\newcommand{\ie}{i.e.\ }
\newcommand{\wrt}{w.r.t.\ }
\newcommand{\cf}{cf.\ }
\newcommand{\eg}{e.g.\ }
\newcommand{\modapprox}{\mathrel{\scalebox{1}[1.5]{$\shortmid$}\mkern-3.1mu\raisebox{0.1ex}{$\approx$}}}
%
\newcommand{\ELHIbot}{\ELHI^n_\bot}
\newcommand{\ELHIbotrestr}{\ELHI^{n,-}_\bot}
\newcommand{\logic}{\Lmc}
\newcommand{\DLLITE}{\text{DL-Lite}\xspace}
\newcommand{\DLLiteR}{\text{DL-Lite$_\Rmc$}\xspace}
\newcommand{\ELHlhs}{\ensuremath{\mathcal{ELH}_{lhs}}\xspace}
\newcommand{\ELHrhs}{\ensuremath{\mathcal{ELH}_{rhs}}\xspace}
\newcommand{\ELHr}{\ensuremath{\mathcal{ELH}^r}\xspace}
\newcommand{\ELH}{\ensuremath{\mathcal{ELH}}\xspace}
\newcommand{\ELHI}{\ensuremath{\mathcal{ELHI}}\xspace}
\newcommand{\FLo}{\ensuremath{{\cal F\!LE}}\xspace}
\newcommand{\FLE}{\ensuremath{{\cal F\!LE}}\xspace}
\newcommand{\EL}{\ensuremath{\mathcal{EL}}\xspace}
\newcommand{\ALC}{\ensuremath{\mathcal{ALC}}\xspace}
\newcommand{\ALCH}{\ensuremath{{\cal ALCH}}\xspace}
\newcommand{\ALCD}{\ensuremath{{\cal ALC(D)}}\xspace}
\newcommand{\ALCO}{\ensuremath{{\cal ALCO}}\xspace}
\newcommand{\ALCK}{\ensuremath{{\cal ALCK}}\xspace}
\newcommand{\ALCI}{\ensuremath{{\cal ALCI}}\xspace}
\newcommand{\ALCIO}{\ensuremath{{\cal ALCIO}}\xspace}
\newcommand{\ALCQO}{\ensuremath{{\cal ALCQO}}\xspace}
\newcommand{\ALCQI}{\ensuremath{{\cal ALCQI}}\xspace}
\newcommand{\ALCQIO}{\ensuremath{{\cal ALCQIO}}\xspace}
\newcommand{\ALCOD}{\ensuremath{{\cal ALCO(D)}}\xspace}
\newcommand{\ALCKD}{\ensuremath{{\cal ALCK(D)}}\xspace}
\newcommand{\ALCOKD}{\ensuremath{{\cal ALCOK(D)}}\xspace}
\newcommand{\ALCID}{\ensuremath{{\cal ALC}^-(\Dmc)}\xspace}
\newcommand{\ALCDmin}{\ensuremath{{\cal ALC}f(\Dmc)}\xspace}
\newcommand{\ALCFD}{\ensuremath{{\cal ALCF(D)}}\xspace}
\newcommand{\ALCPD}{\ensuremath{{\cal ALCP(D)}}\xspace}
\newcommand{\ALCRPD}{\ensuremath{{\cal ALC}^{rp}({\cal D})}\xspace}
\newcommand{\ALCRPID}{\ensuremath{{\cal ALC}^{rp,-}({\cal D})}\xspace}
\newcommand{\BIGL}{\ensuremath{{\cal ALCP}^{rp,-,\sqcap}({\cal D})}\xspace}
\newcommand{\ALCF}{\ensuremath{{\cal ALCF}}\xspace}
\newcommand{\ALCFI}{\ensuremath{{\cal ALCFI}}\xspace}
\newcommand{\ALCFIO}{\ensuremath{{\cal ALCFIO}}\xspace}
\newcommand{\ALCNR}{\ensuremath{{\cal ALCNR}}\xspace}
\newcommand{\ALCQ}{\ensuremath{{\cal ALCQ}}\xspace}
\newcommand{\ALCN}{\ensuremath{{\cal ALCN}}\xspace}
\newcommand{\HS}{\ensuremath{{\sf HS}}\xspace}
\newcommand{\shiq}{\ensuremath{{\cal SHIQ}}\xspace}
\newcommand{\shoq}{\ensuremath{{\cal SHOQ}}\xspace}
\newcommand{\TDL}{\ensuremath{{\cal TDL}}\xspace}
\newcommand{\OIL}{\ensuremath{{\sf OIL}}\xspace}
\newcommand{\DAMLOIL}{\ensuremath{{\sf DAML{+}OIL}}\xspace}
\newcommand{\shiqt}{\ensuremath{\mathbbm{Q}\text{-}{\cal SHIQ}}\xspace}
\newcommand{\shiqtva}{\ensuremath{\shiqt^\ast}\xspace}
\newcommand{\shiqtvb}{\ensuremath{\shiqt^{\ast\ast}}\xspace}
\newcommand{\SHIQ}{\ensuremath{{\cal SHIQ}}\xspace}
\newcommand{\SHIQO}{\ensuremath{{\cal SHIQO}}\xspace}
\newcommand{\SHOQ}{\ensuremath{{\cal SHOQ}}\xspace}
\newcommand{\SHOQD}{\ensuremath{{\cal SHOQ(D)}}\xspace}
\newcommand{\SHOQKD}{\ensuremath{{\cal SHOQK(D)}}\xspace}

%
%
\newcommand{\cclass}{\textsc{C}\xspace}
\newcommand{\PTime}{\textsc{PTime}\xspace}
\newcommand{\NP}{\textsc{NP}\xspace}
\newcommand{\coNP}{\textsc{coNP}\xspace}
\newcommand{\NL}{\textsc{NLogSpace}\xspace}
\newcommand{\PSpace}{\textsc{PSpace}\xspace}
\newcommand{\NPSpace}{\textsc{NPSpace}\xspace}
\newcommand{\ExpTime}{\textsc{ExpTime}\xspace}
\newcommand{\TwoExpTime}{\textsc{2-ExpTime}\xspace}
\newcommand{\NExpTime}{\textsc{NExpTime}\xspace}
\newcommand{\TwoNExpTime}{\textsc{2-NExpTime}\xspace}
\newcommand{\exptime}{\textsc{exptime}\xspace}
\newcommand{\nexptime}{\textsc{nexptime}\xspace}
\newcommand{\ExpSpace}{\textsc{ExpSpace}\xspace}
\newcommand{\TwoExpSpace}{\textsc{2-ExpSpace}\xspace}
\newcommand{\expspace}{\textsc{expspace}\xspace}
\newcommand{\NExpSpace}{\textsc{NExpSpace}\xspace}
\newcommand{\TwoNExpSpace}{\textsc{2-NExpSpace}\xspace}

%
%

%
%
\newcommand{\Amc}{\ensuremath{\mathcal{A}}\xspace}
\newcommand{\Bmc}{\ensuremath{\mathcal{B}}\xspace}
\newcommand{\Cmc}{\ensuremath{\mathcal{C}}\xspace}
\newcommand{\Dmc}{\ensuremath{\mathcal{D}}\xspace}
\newcommand{\Emc}{\ensuremath{\mathcal{E}}\xspace}
\newcommand{\Fmc}{\ensuremath{\mathcal{F}}\xspace}
\newcommand{\Gmc}{\ensuremath{\mathcal{G}}\xspace}
\newcommand{\Hmc}{\ensuremath{\mathcal{H}}\xspace}
\newcommand{\Imc}{\ensuremath{\mathcal{I}}\xspace}
\newcommand{\Jmc}{\ensuremath{\mathcal{J}}\xspace}
\newcommand{\Kmc}{\ensuremath{\mathcal{K}}\xspace}
\newcommand{\Lmc}{\ensuremath{\mathcal{L}}\xspace}
\newcommand{\Mmc}{\ensuremath{\mathcal{M}}\xspace}
\newcommand{\Nmc}{\ensuremath{\mathcal{N}}\xspace}
\newcommand{\Omc}{\ensuremath{\mathcal{O}}\xspace}
\newcommand{\Pmc}{\ensuremath{\mathcal{P}}\xspace}
\newcommand{\Qmc}{\ensuremath{\mathcal{Q}}\xspace}
\newcommand{\Rmc}{\ensuremath{\mathcal{R}}\xspace}
\newcommand{\Smc}{\ensuremath{\mathcal{S}}\xspace}
\newcommand{\Tmc}{\ensuremath{\mathcal{T}}\xspace}
\newcommand{\Umc}{\ensuremath{\mathcal{U}}\xspace}
\newcommand{\Vmc}{\ensuremath{\mathcal{V}}\xspace}
\newcommand{\Wmc}{\ensuremath{\mathcal{W}}\xspace}
\newcommand{\Xmc}{\ensuremath{\mathcal{X}}\xspace}
\newcommand{\Ymc}{\ensuremath{\mathcal{Y}}\xspace}
\newcommand{\Zmc}{\ensuremath{\mathcal{Z}}\xspace}

%
%
\newcommand{\Amf}{\ensuremath{\mathfrak{A}}\xspace}
\newcommand{\Bmf}{\ensuremath{\mathfrak{B}}\xspace}
\newcommand{\Cmf}{\ensuremath{\mathfrak{C}}\xspace}
\newcommand{\Dmf}{\ensuremath{\mathfrak{D}}\xspace}
\newcommand{\Emf}{\ensuremath{\mathfrak{E}}\xspace}
\newcommand{\Fmf}{\ensuremath{\mathfrak{F}}\xspace}
\newcommand{\Gmf}{\ensuremath{\mathfrak{G}}\xspace}
\newcommand{\Hmf}{\ensuremath{\mathfrak{H}}\xspace}
\newcommand{\Imf}{\ensuremath{\mathfrak{I}}\xspace}
\newcommand{\Jmf}{\ensuremath{\mathfrak{J}}\xspace}
\newcommand{\Kmf}{\ensuremath{\mathfrak{K}}\xspace}
\newcommand{\Lmf}{\ensuremath{\mathfrak{L}}\xspace}
\newcommand{\Mmf}{\ensuremath{\mathfrak{M}}\xspace}
\newcommand{\Nmf}{\ensuremath{\mathfrak{N}}\xspace}
\newcommand{\Omf}{\ensuremath{\mathfrak{O}}\xspace}
\newcommand{\Pmf}{\ensuremath{\mathfrak{P}}\xspace}
\newcommand{\Qmf}{\ensuremath{\mathfrak{Q}}\xspace}
\newcommand{\Rmf}{\ensuremath{\mathfrak{R}}\xspace}
\newcommand{\Smf}{\ensuremath{\mathfrak{S}}\xspace}
\newcommand{\Tmf}{\ensuremath{\mathfrak{T}}\xspace}
\newcommand{\Umf}{\ensuremath{\mathfrak{U}}\xspace}
\newcommand{\Vmf}{\ensuremath{\mathfrak{V}}\xspace}
\newcommand{\Wmf}{\ensuremath{\mathfrak{W}}\xspace}
\newcommand{\Xmf}{\ensuremath{\mathfrak{X}}\xspace}
\newcommand{\Ymf}{\ensuremath{\mathfrak{Y}}\xspace}
\newcommand{\Zmf}{\ensuremath{\mathfrak{Z}}\xspace}

%
%
\newcommand{\Abf}{\ensuremath{\mathbf{A}}\xspace}
\newcommand{\Bbf}{\ensuremath{\mathbf{B}}\xspace}
\newcommand{\Cbf}{\ensuremath{\mathbf{C}}\xspace}
\newcommand{\Dbf}{\ensuremath{\mathbf{D}}\xspace}
\newcommand{\Ebf}{\ensuremath{\mathbf{E}}\xspace}
\newcommand{\Fbf}{\ensuremath{\mathbf{F}}\xspace}
\newcommand{\Gbf}{\ensuremath{\mathbf{G}}\xspace}
\newcommand{\Hbf}{\ensuremath{\mathbf{H}}\xspace}
\newcommand{\Ibf}{\ensuremath{\mathbf{I}}\xspace}
\newcommand{\Jbf}{\ensuremath{\mathbf{J}}\xspace}
\newcommand{\Kbf}{\ensuremath{\mathbf{K}}\xspace}
\newcommand{\Lbf}{\ensuremath{\mathbf{L}}\xspace}
\newcommand{\Mbf}{\ensuremath{\mathbf{M}}\xspace}
\newcommand{\Nbf}{\ensuremath{\mathbf{N}}\xspace}
\newcommand{\Obf}{\ensuremath{\mathbf{O}}\xspace}
\newcommand{\Pbf}{\ensuremath{\mathbf{P}}\xspace}
\newcommand{\Qbf}{\ensuremath{\mathbf{Q}}\xspace}
\newcommand{\Rbf}{\ensuremath{\mathbf{R}}\xspace}
\newcommand{\Sbf}{\ensuremath{\mathbf{S}}\xspace}
\newcommand{\Tbf}{\ensuremath{\mathbf{T}}\xspace}
\newcommand{\Ubf}{\ensuremath{\mathbf{U}}\xspace}
\newcommand{\Vbf}{\ensuremath{\mathbf{V}}\xspace}
\newcommand{\Wbf}{\ensuremath{\mathbf{W}}\xspace}
\newcommand{\Xbf}{\ensuremath{\mathbf{X}}\xspace}
\newcommand{\Ybf}{\ensuremath{\mathbf{Y}}\xspace}
\newcommand{\Zbf}{\ensuremath{\mathbf{Z}}\xspace}

%
%
\newcommand{\Asf}{\ensuremath{\mathsf{A}}\xspace}
\newcommand{\Bsf}{\ensuremath{\mathsf{B}}\xspace}
\newcommand{\Csf}{\ensuremath{\mathsf{C}}\xspace}
\newcommand{\Dsf}{\ensuremath{\mathsf{D}}\xspace}
\newcommand{\Esf}{\ensuremath{\mathsf{E}}\xspace}
\newcommand{\Fsf}{\ensuremath{\mathsf{F}}\xspace}
\newcommand{\Gsf}{\ensuremath{\mathsf{G}}\xspace}
\newcommand{\Hsf}{\ensuremath{\mathsf{H}}\xspace}
\newcommand{\Isf}{\textnormal{I}\xspace}
\newcommand{\Jsf}{\ensuremath{\mathsf{J}}\xspace}
\newcommand{\Ksf}{\ensuremath{\mathsf{K}}\xspace}
\newcommand{\Lsf}{\ensuremath{\mathsf{L}}\xspace}
\newcommand{\Msf}{\ensuremath{\mathsf{M}}\xspace}
\newcommand{\Nsf}{\ensuremath{\mathsf{N}}\xspace}
\newcommand{\Osf}{\ensuremath{\mathsf{O}}\xspace}
\newcommand{\Psf}{\ensuremath{\mathsf{P}}\xspace}
\newcommand{\Qsf}{\ensuremath{\mathsf{Q}}\xspace}
\newcommand{\Rsf}{\ensuremath{\mathsf{R}}\xspace}
\newcommand{\Ssf}{\ensuremath{\mathsf{S}}\xspace}
\newcommand{\Tsf}{\ensuremath{\mathsf{T}}\xspace}
\newcommand{\Usf}{\ensuremath{\mathsf{U}}\xspace}
\newcommand{\Vsf}{\ensuremath{\mathsf{V}}\xspace}
\newcommand{\Wsf}{\ensuremath{\mathsf{W}}\xspace}
\newcommand{\Xsf}{\ensuremath{\mathsf{X}}\xspace}
\newcommand{\Ysf}{\ensuremath{\mathsf{Y}}\xspace}
\newcommand{\Zsf}{\ensuremath{\mathsf{Z}}\xspace}
%
%
\newcommand{\asf}{\ensuremath{\mathsf{a}}\xspace}
\newcommand{\bsf}{\ensuremath{\mathsf{b}}\xspace}
\newcommand{\csf}{\ensuremath{\mathsf{c}}\xspace}
\newcommand{\dsf}{\ensuremath{\mathsf{d}}\xspace}
\newcommand{\esf}{\ensuremath{\mathsf{e}}\xspace}
\newcommand{\fsf}{\ensuremath{\mathsf{f}}\xspace}
\newcommand{\gsf}{\ensuremath{\mathsf{g}}\xspace}
\newcommand{\hsf}{\ensuremath{\mathsf{h}}\xspace}
\newcommand{\isf}{\ensuremath{\mathsf{i}}\xspace}
\newcommand{\jsf}{\ensuremath{\mathsf{j}}\xspace}
\newcommand{\ksf}{\ensuremath{\mathsf{k}}\xspace}
\newcommand{\lsf}{\ensuremath{\mathsf{l}}\xspace}
\newcommand{\msf}{\ensuremath{\mathsf{m}}\xspace}
\newcommand{\nsf}{\ensuremath{\mathsf{n}}\xspace}
\newcommand{\osf}{\ensuremath{\mathsf{o}}\xspace}
\newcommand{\psf}{\ensuremath{\mathsf{p}}\xspace}
\newcommand{\qsf}{\ensuremath{\mathsf{q}}\xspace}
\newcommand{\rsf}{\ensuremath{\mathsf{r}}\xspace}
\newcommand{\ssf}{\ensuremath{\mathsf{s}}\xspace}
\newcommand{\tsf}{\ensuremath{\mathsf{t}}\xspace}
\newcommand{\usf}{\ensuremath{\mathsf{u}}\xspace}
\newcommand{\vsf}{\ensuremath{\mathsf{v}}\xspace}
\newcommand{\wsf}{\ensuremath{\mathsf{w}}\xspace}
\newcommand{\xsf}{\ensuremath{\mathsf{x}}\xspace}
\newcommand{\ysf}{\ensuremath{\mathsf{y}}\xspace}
\newcommand{\zsf}{\ensuremath{\mathsf{z}}\xspace}

\newcommand{\Nbb}{\ensuremath{\mathbb{N}}\xspace}

\newcommand{\queryrewriting}{\ensuremath{q^\ast}\xspace}
\newcommand{\equivalent}[1]{\ensuremath{[{#1}]}\xspace}
\newcommand{\equivalentclass}[1]{\ensuremath{[[{#1}]]}\xspace}
\newcommand{\auxs}[1]{\ensuremath{\mathsf{{\sf aux}}(#1)}\xspace} 
\newcommand{\roles}[1]{\mathsf{rol}(#1)}
\newcommand{\representative}[1]{\ensuremath{[{#1}]}\xspace}
\newcommand{\nrepresentative}[1]{\ensuremath{[\![{#1}]\!]}\xspace}
\newcommand{\aux}[3]{\ensuremath{d^{{#1}}_{{#2}{#3}}}\xspace}
\newcommand{\individuals}[1]{\mathsf{ind}(#1)}
\newcommand{\signature}[1]{\mathsf{sig}(#1)}
\newcommand{\NMrep}{\ensuremath{{\sf N_{[M]}}}\xspace}
\newcommand{\NMnrep}{\ensuremath{{\sf N_{[\![M]\!]}}}\xspace}
\newcommand{\cupdot}{\mathbin{\mathaccent\cdot\cup}}
\newcommand{\nonomial}{\ensuremath{n}\xspace}
\newcommand{\onomial}{\ensuremath{o}\xspace}
\newcommand{\variable}{\ensuremath{v}\xspace}

\newcommand{\NF}{\mn{NF}}
\newcommand{\CR}{\mn{CR}}
\newcommand{\R}{\mn{R}}

\newcommand{\blue}[1]{#1}
\newcommand{\new}[1]{#1}
\newtheorem{remark}[theorem]{Remark}
\newtheorem{claim}[theorem]{Claim}

\begin{abstract}
We investigate semiring provenance---a successful framework originally defined in the relational database setting---for description logics. In this context, the ontology axioms are annotated with elements of a commutative semiring and these annotations are propagated to the ontology consequences in a way that reflects how they are derived. We define a provenance semantics for a language that encompasses several lightweight description logics and show its relationships with semantics that have been defined for ontologies annotated with a specific kind of annotation (such as fuzzy degrees). We show that under some restrictions on the semiring, the semantics satisfies desirable properties (such as extending the semiring provenance defined for databases). We then focus on the well-known why-provenance, for which we study the complexity of problems related to the provenance of an assertion or a conjunctive query answer. Finally, we consider two more restricted cases which correspond to the so-called positive Boolean provenance and lineage in the database setting. For these cases, we present relationships with well-known notions related to explanations in description logics and complete our complexity analysis. As a side contribution, we provide conditions on an $\ELHI_\bot$ ontology that guarantee tractable reasoning. 
\end{abstract}

\maketitle


\section{Introduction}
Description logics (DLs) are a well-known family of formalisms, \new{typically equivalent to fragments of first-order logic}, in which conceptual knowledge about a particular domain 
and facts about specific individuals are expressed in an ontology, using unary and binary predicates called concepts and 
roles \shortcite{dlhandbook}. 
Important reasoning tasks performed over DL ontologies are axiom entailment, \ie deciding whether a given DL axiom follows from the ontology; and query answering, focussing in particular on database-style conjunctive queries. 
Since scalability \new{of reasoning methods} is crucial when using large ontologies, DLs with favorable computational properties have been investigated. 
In particular, the DL-Lite and \EL families propose many dialects that allow for axiom entailment in polynomial time, and Boolean conjunctive query (BCQ) entailment in \NP \shortcite{DBLP:journals/jar/CalvaneseGLLR07,BBL-IJCAI05,DBLP:conf/dlog/Rosati07}. Many real-world ontologies use languages from these families, which underlie the OWL 2 QL and OWL 2 EL profiles of the Semantic Web standard ontology language \shortcite{owl2-profiles}. 

Several extensions of DLs go beyond simple axiom or query entailment and enrich the results with additional information. One can mention fuzzy DLs to express vagueness \cite{DBLP:conf/aaai/Straccia98,DBLP:conf/sum/BorgwardtP17}, possibilistic DLs to handle uncertainty \cite{DBLP:journals/ijar/Hollunder95,DBLP:journals/ijis/QiJPD11,DBLP:journals/logcom/BenferhatB17}, or the bag semantics, which associates multiplicities to query results \shortcite{DBLP:conf/ijcai/NikolaouKKKGH17,DBLP:journals/ai/NikolaouKKKGH19}. 
Another kind of information that may be expected to accompany reasoning results is an \emph{explanation}. It may indeed be crucial to know how a consequence---e.g.\ an axiom or a query answer---has been derived from the ontology. 
\new{In DLs, this problem} has been studied mostly focusing on explaining axiom entailment, in particular concept subsumption, through \emph{axiom pinpointing}, which consists in finding one or all minimal subsets of the ontology that entail the consequence, called \emph{justifications} \shortcite{DBLP:conf/ijcai/SchlobachC03,DBLP:conf/semweb/KalyanpurPHS07,Pena-AP20,DBLP:conf/sum/OzakiP18}. 
A similar approach was investigated for explaining ontology-mediated query answers, but focussing only on the minimal subsets of facts which together with the conceptual knowledge entail the query 
\shortcite{DBLP:conf/ijcai/CeylanLMV19,DBLP:conf/ecai/CeylanLMV20}. 
Alternative approaches that consider the whole ontology provide more involved proof-based explanations for BCQ entailment 
 \shortcite{BorgidaCR08,CroceL18,DBLP:conf/dlog/AlrabbaaBKK22}. Investigation of proofs for axiom entailment that go beyond axiom pinpoiting is also an active line of research \shortcite{DBLP:conf/lpar/AlrabbaaBBKK20,DBLP:conf/ruleml/AlrabbaaBHKKRW22}.

\new{Within the context of databases, the framework of \emph{semiring provenance}  \cite{Green07-provenance-seminal,GreenT17} generalizes the semantics of queries over databases annotated with different kinds of annotations, such as multiplicities, trust levels, 
costs, clearance levels, etc. 
Indeed, it defines the semantics of positive relational algebra queries over databases annotated by elements of an arbitrary commutative semiring, which is an algebraic structure with two binary \blue{commutative} operators called the \emph{addition} and the \emph{multiplication}. 
Intuitively, joint use of the data corresponds to the semiring multiplication while alternative use of the data corresponds to the semiring addition\blue{, and the commutativity of the two operations ensures that the queries $q\wedge q'$ and $q'\wedge q$ are equivalent, and similarly for $\vee$}. 
The semantics is inductively defined on the structure of the query, in a similar way as the semantics of such queries over non-annotated databases.
}

\new{The name ``provenance'' refers to the original idea of tracing the origin of a query answer. 
Indeed, when applied to databases whose tuples are annotated with identifiers (called variables, or provenance tokens), the semiring provenance framework allows to capture many notions of provenance or explanation that have been considered in the database community for nearly 30 years, such as lineage or why-provenance (see, \eg surveys and discussion papers by~\citet{DBLP:journals/ftdb/CheneyCT09},~\citet{Bun2013},~\citet{DBLP:journals/sigmod/Senellart17}, and~\citet{DBLP:journals/ftdb/Glavic21}). This is done by introducing so-called \emph{provenance semirings} 
\cite{DBLP:journals/mst/Green11,GreenT17}. Provenance semiring elements are expressions built from variables associated with each tuple of the database\blue{, using the semiring addition and multiplication}. Depending on the provenance semiring used, these elements may be, for instance, polynomials with coefficients from $\mathbb{N}$, polynomials with Boolean coefficients, etc. 
Such a provenance expression provides a representation of how tuples 
\blue{can be used (jointly or alternatively) to obtain} 
a query result.} 
\new{An important feature of the semiring provenance framework is that a provenance expression can be used to compute the annotation that would be associated to the query result if the database was annotated by elements of any commutative semiring in which the considered provenance semiring can be homomorphically embedded (for example, if a query result has provenance $x\times y + z$, one can obtain the multiplicity of this result in the database in which the tuple with identifier $x$ is annotated by $2$, the one with identifier $y$ by $3$, and the one with identifier $z$ by $4$, by replacing the variables by these values and evaluating the resulting expression in the semiring of the natural integers, i.e., $2\times 3+4=10$).} 

Semiring provenance has also been studied for Datalog queries, for which it is defined based on the set of all derivation trees for the query \shortcite{Green07-provenance-seminal,DBLP:conf/icdt/DeutchMRT14,DBLP:journals/vldb/DeutchGM18},   
and has drawn interest beyond relational databases, 
notably in the context of the Semantic Web \shortcite{Dividino2009,DBLP:conf/semweb/BunemanK10,DBLP:journals/internet/TheoharisFKC11,DBLP:journals/ws/ZimmermannLPS12,Geerts16-provenance}, but also 
XML \cite{DBLP:conf/pods/FosterGT08}, 
graph databases \cite{RamusatMS18,DBLP:phd/hal/Ramusat22}, and 
expressive logics \shortcite{DBLP:journals/siglog/Tannen17,DBLP:conf/csl/DannertGNT21,DBLP:journals/corr/abs-2412-07986}.

In this work, we investigate semiring provenance for description logic ontologies. 
An important feature that distinguishes DL ontologies from the relational database setting is that not only  facts but also  axioms expressing conceptual knowledge can be annotated and, thus, \new{taken into account in the computation of the provenance}. 
\blue{Note that even though Datalog rules also express conceptual knowledge, they are generally considered as part of the query in the database setting, hence not annotated. It is however easy to emulate annotated rules by adding one annotated fact per rule and including it in the rule premises, but this is not permitted by the DL syntax.}  
We also consider not only query answering---as 
done for Datalog queries, which can be seen as \blue{pairs of a rule-based ontology and an atomic query}
---but also axiom entailment, such as concept subsumption, which is 
a classical reasoning task for DL. 
\new{There have already been several proposals to use some kinds of semiring provenance with description logic ontologies \shortcite{attributedDL,provenance-DL-dannert-gradel,provenance-DLLite,provenance-ELHr,Penaloza2023}, with different goals and semantics. We refer to Section~\ref{sec:relatedwork} for a detailed comparison between these proposals and the one we introduce in this paper.}

\new{Our main goal is to define a \blue{semiring provenance semantics for DL ontologies, \ie a} semantics for ontologies annotated with elements of an arbitrary commutative semiring, which distinguishes us from~\citet{attributedDL},~\citet{provenance-DLLite}, and~\citet{provenance-ELHr} who consider ontologies annotated with variables only. 
Our semantics should satisfy some desirable properties (such as being consistent with the semiring provenance defined in the relational database context, or allowing to evaluate a provenance expression \blue{from a given provenance semiring} in another semiring \blue{into which the provenance semiring can be homomorphically embedded,} and obtain the correct value),  
at least under some restrictions on the semiring (such as idempotency of the operations). 
This distinguishes us from~\citet{provenance-DL-dannert-gradel} and~\citet{Penaloza2023}, whose notions of provenance do not coincide with the one defined for relational databases, even if the semiring is fully idempotent \blue{(\cf Section~\ref{sec:relatedwork})}. 
Our semantics should also be defined independently from any specific reasoning algorithm, as the usual model-theoretic classical DL semantics (in the same way that the \blue{semiring provenance} 
of \blue{queries evlatuated over} annotated databases is independent from how the (Datalog) query is evaluated).} 
\blue{The DL language we consider in this paper, denoted by $\ELHIbot$, is a syntactic restriction of $\ELHI_\bot$ \cite{DBLP:conf/rweb/BienvenuO15}, the DL  which features conjunctions, existential restrictions, inverse roles and role inclusions, and encompasses several dialects of the \DLLITE and \EL families, hence the main lightweight DLs of the literature. The syntactic restriction requires that the ontology is normalized to avoid conjunctions and qualified existential restrictions on the right-hand side of concept inclusions.} 
\blue{In addition to the definition of a semiring provenance semantics for $\ELHIbot$ ontologies, we conduct a preliminary complexity analysis of the problems of deciding whether an annotated ontology entails an annotated query and computing the provenance of a query, focussing on some specific provenance semirings such that at least some of the desirable properties we consider are satisfied.} 

The rest of this work is organised as follows. 
\begin{itemize}
\item \new{In Section~\ref{sec:prelim}, we provide the \blue{relevant} background on description logics, semirings, and semiring provenance in the context of databases. 
We also introduce the $\ELHIbotrestr$ fragment of $\ELHIbot$ and \blue{show} that in this language, satisfiability and axiom entailment are \PTime-complete and BCQ entailment is \NP-complete (Theorem~\ref{theo:ELHIrestrcomplexity}). \blue{The definition of} this fragment is a side contribution of the paper, since it provides some insights on the complexity of reasoning with $\ELHI_\bot$ ontologies. }

\item \new{In Section~\ref{sec:prov-definition}, we define a semiring provenance semantics for $\ELHIbot$ (Section~\ref{sec:semanticdef}) and explain our design choices and the restrictions of our setting (Sections \ref{sec:designchoices} and~\ref{sec:syntacticRestrictions}). Specifically, we define the semantics of entailment of annotated axioms and BCQs from an annotated ontology, as well as the provenance of an axiom or a BCQ \wrt an annotated ontology. We show that satisfiability of an ontology does not depend on the annotations (Lemma~\ref{lem:relationship-annotated-standard-models}) and that computing the provenance of a rooted tree-shaped BCQ can be reduced to computing the provenance of a concept assertion (Theorem~\ref{th:instancequeries}).  We then investigate whether the semantics captures some well-known semantics for ontologies annotated with specific kinds of information. In particular, we show that under some conditions, it captures the Zadeh semantics of fuzzy DLs~\blue{\cite{Zadeh-IC65}} as well as a possibilistic semantics~\blue{\cite{DBLP:journals/ijis/QiJPD11,DBLP:journals/logcom/BenferhatB17}} (Proposition~\ref{prop:poss:prov}), and a notion of boundary that has been defined in the context of access control~\blue{\cite{BaKP-JWS12}} (Proposition~\ref{prop:access:prov}), but does not capture the product-based possibilistic semantics~\blue{\cite{BBKN17}} nor the bag semantics~\blue{\cite{DBLP:conf/ijcai/NikolaouKKKGH17,DBLP:journals/ai/NikolaouKKKGH19}}.}

\item \new{In Section~\ref{sec:classical-res-counterparts}, we show that, under some conditions, some classical results for $\ELHIbot$ ontologies can be transferred to annotated ontologies. 
\begin{itemize}
\item We show that an annotated $\ELHIbot$ ontology can be \blue{translated to a normal form} while preserving the provenance of its consequences (Theorem~\ref{th:normal-form}). 
\blue{This result will be used by the algorithms presented in Section~\ref{sec:why}.}
\item Given a satisfiable annotated $\ELHIbot$ ontology, we define \blue{its} canonical model\blue{,} which satisfies exactly the annotated assertions and BCQs that are entailed by the ontology (Theorem~\ref{thm:can-model-main}). 
We also define \blue{the} canonical model of the ontology and a concept $C$ (resp.\ role $P$) satisfiable \wrt the ontology. 
\blue{It allows} us to check whether the ontology entails an annotated concept inclusion between basic concepts (resp.\ role inclusion) with $C$ (resp.\ $P$) in the left-hand side, under the condition that the semiring is multiplicatively idempotent and some additional conditions on the ontology in the concept case (Theorems~\ref{thm:can-model-main-gci} and \ref{thm:can-model-main-ri}). 
\item We show that the problems of deciding entailment of an annotated concept assertion and of deciding entailment of an annotated concept inclusion are \blue{polynomially reducible to each other} if the semiring is multiplicatively idempotent (under some \blue{assumptions about} the ontology) (Theorem~\ref{th:red-concept}), and that a similar result holds for role assertions and role inclusions even if the semiring is not idempotent (Theorem~\ref{th:red-role}).
\end{itemize}}

\item \new{In Section~\ref{sec:expected-properties}, we show that our provenance semantics satisfies some desirable properties when restricted to commutative semirings that are additively and  multiplicatively idempotent and discuss the problems that arise beyond this setting. 
\begin{itemize}
\item We show that when the semiring is positive, an axiom or a BCQ has a non-zero provenance iff it is entailed by the non-annotated version of the ontology 
(with the additional condition that the semiring is multiplicatively idempotent in the case of a concept inclusion) (Theorem~\ref{th:sem-entailment}). \blue{This property ensures that the semiring provenance semantics reflects the entailment or non-entailment of axioms
and queries from the non-annotated ontology.}
\item We show that if the semiring is additively idempotent, our notion of provenance is consistent with the one defined for relational algebra queries or Datalog queries over annotated databases. This \blue{means} that computing the provenance of a query in our setting can be reduced to computing the provenance of a (Datalog) query as defined by~\citet{Green07-provenance-seminal} in the following cases: 
for a BCQ if the ontology contains only annotated assertions (Theorem~\ref{th:sem-algebra-cons-query}); 
for a concept assertion query $A(a)$ and an ontology consisting of annotated assertions and non-recursive concept inclusions with concept $A$ as right-hand side (Theorem~\ref{th:sem-algebra-cons}); 
and for a BCQ and an ontology that does not have existential role restrictions on the right nor role disjointness (that is, it is equivalent to a Datalog program) (Theorem~\ref{th:sem-cons-datalog}). 
We also show that our notion of provenance coincides with an alternative provenance semantics that has been defined for Datalog even if the semiring is not idempotent (Lemma~\ref{th:sem-cons-datalog-SAM}). 
\item We show that if a semiring can be homomorphically embedded in another one, one can obtain the provenance value of an axiom or a BCQ \wrt the ontology in the second semiring from the provenance value in the first one \blue{by applying the homomorphism} if the semirings are additively idempotent (and multiplicatively idempotent in the case of a concept inclusion, under some conditions on the ontology) (Theorem~\ref{th:sem-commut-hom}).
\end{itemize}}

\item \new{In Section~\ref{sec:why}, we focus on the problems of deciding annotated entailment and computing provenance using 
the \why provenance semiring that corresponds to the well-known why-provenance in the database setting. 
Since the \why semiring is additively idempotent but not multiplicatively idempotent \blue{(see Section~\ref{sec:database-prov} and Figure~\ref{fig:hierarchy})}, \blue{and we show in Section~\ref{sec:expected-properties} that multiplicative idempotence is necessary to satisfy some desirable properties in the case of concept inclusion entailment,} 
we focus on assertion and BCQ entailment. We provide algorithms and complexity results for computing the provenance of assertions and BCQs in this semiring for $\ELHIbot$ and $\ELHIbotrestr$.
\begin{itemize}
\item  We present a completion algorithm that derives in exponential time all annotated assertions that are entailed by a \why-annotated $\ELHIbot$ ontology (Theorem~\ref{prop:completionalgorithmELHI}), which allows us to compute the provenance of assertions in exponential time (Corollary~\ref{cor:complexity:provenanceassertion-ELHI}). 
\item For the case of $\ELHIbotrestr$\!, we adapt the completion algorithm so that \blue{it derives in polynomial time all entailed annotated assertions whose annotation size is bounded by a constant} 
(Theorem~\ref{prop:completionalgorithmELHIrestr}), which gives an upper bound for deciding annotated assertion entailment \blue{that is} exponential in the size of the annotation but polynomial in the size of the ontology (Corollary~\ref{th:complexity:provmonomial}). We then improve this complexity upper bound to \PSpace  (Theorem~\ref{th:pspace}).
\item For conjunctive queries, we present an algorithm based on an annotation-aware rewriting of the query using the completion of the ontology, and obtain exponential complexity upper bounds for the problem of deciding annotated BCQ entailment and of computing the provenance of a BCQ in $\ELHIbot$ (Theorem~\ref{th:CQ-answering-algorithm} and Corollaries~\ref{th:complexity-cq-elhi-entailment} and \ref{th:complexity-cq-elhi}) and an \NP upper bound for the problem of deciding annotated BCQ entailment in $\ELHIbotrestr$ if the annotation size is fixed (Theorem~\ref{th:complexity:provBCQmonomial}). 
\end{itemize}}

\item \new{In Section~\ref{sec:posbool-lin}, \blue{we further consider two provenance semirings, \posbool and \lin, which correspond to the so-called positive Boolean provenance and lineage in the database setting.} 
Since 
\posbool and \lin are additively and multiplicatively idempotent, provenance in these semirings satisfies all properties studied in Section~\ref{sec:expected-properties}. Hence \blue{we also consider the provenance of concept inclusions} 
in this section. 
We exhibit relationships with well-known notions related to explanations in DLs and also provide algorithms and complexity results for $\ELHIbot$ and $\ELHIbotrestr$. 
\begin{itemize}
\item We show the relationships between annotated entailments in \why, \posbool and \lin, and that the provenance in \posbool and \lin can be obtained from the provenance in \why (Propositions~\ref{prop:why-lin-pos} and \ref{prop:why-lin}).
\item We show that, under some conditions, provenance in the \posbool semiring can be computed via the set of justifications (Proposition~\ref{prop:ap:pos}). One can thus take advantage of the \blue{already mentioned} large body of work on DL axiom pinpointing. 
\item We show that provenance in the \lin semiring can be computed in polynomial time for $\ELHIbotrestr$ ontologies (and in exponential time for $\ELHIbot$) (Theorem~\ref{th:complexityRelevance}). This is done via an adaptation of the completion algorithm of Section~\ref{sec:why} (Theorem~\ref{completionAlgoRelevance}).
\end{itemize}}

\item \new{In Section~\ref{sec:relatedwork}, we review the other frameworks that use some form of semiring provenance for DLs and compare them with ours. }

\item \new{In Section~\ref{sec:conclusion}, we conclude with a discussion of some of our results and evoke possible future work.}
\end{itemize}

\new{This paper is \blue{closely related to} previous work by~\citet{provenance-DLLite} and~\citet{provenance-ELHr}. In particular, it adapts several ideas, examples or algorithms given by~\citet{provenance-ELHr} (\cf Section~\ref{sec:relatedwork} for details).} 
\new{ \citet{provenance-DLLite} consider provenance for ontology-based data access, a setting where a database is enriched with (i)~a DL-Lite$_\Rmc$ ontology and (ii)~mappings between 
the database and the ontology, and~\citet{provenance-ELHr} consider axiom entailment and query answering in \ELHr. 
In these two papers, the ontology axioms are annotated with provenance variables, and consequences of the ontology (axioms or queries) are annotated with provenance polynomials expressing their provenance information.} 
\blue{In contrast, we define the semantics of an ontology annotated by elements of an arbitrary commutative semiring, investigate whether it captures some of the existing semantics for annotated ontologies (such as the fuzzy semantics or the bag semantics), and under which conditions the polynomials we obtain when the axioms are annotated with provenance variables can be faithfully evaluated in some other semiring, among other desirable properties for a semiring provenance semantics for DL. 
}  

For the ease of reading, we chose to delegate most of the proofs (which may be long and technical but mostly rely on classical techniques) to \blue{Appendices \ref{app:proofs-prov-def}-\ref{app:ELHIrestrcomplexity} and provide proof sketches in the main text}.


\section{Preliminaries}\label{sec:prelim}
\blue{In this section, we recall the syntax and semantics of description logics, focussing on $\ELHIbot$, a syntactic restriction of $\ELHI_\bot$. We also introduce the lightweight $\ELHIbotrestr$ fragment of $\ELHIbot$. We then provide the relevant background on semirings and semiring provenance in the context of databases.}

\subsection{Description Logics and the $\ELHIbot$ Language}\label{sec:prelimDL}

Our main focus \new{is} on a syntactic restriction of the DL language $\ELHI_\bot$ \blue{\cite{DBLP:conf/rweb/BienvenuO15}} \new{that we call $\ELHIbot$ (where $n$ stands for ``normalized''),} 
which 
\blue{can express conservative extensions of ontologies in}
  dialects of the well-known 
\EL \cite{BBL-IJCAI05} and \DLLITE \shortcite{ACKZ-JAIR09} families of 
lightweight DLs.  

\paragraph{Syntax} Let \NC, \NR and \NI be three mutually disjoint countable sets of
\emph{concept-}, \emph{role-}, and \emph{individual names}, respectively. 
Given a DL language \logic, an \logic \emph{ontology} \Omc is a finite set of \emph{axioms} whose form depends on \logic. 
We consider the following kinds of axioms. 
\begin{itemize}
\item Concept and role \emph{assertions} are of the form $A(a)$
 and $R(a,b)$, respectively, with $A\in \NC$, $R\in \NR$, $a,b\in \NI$. 
\item \emph{General concept inclusions} (GCIs) are expressions of the form $C\sqsubseteq D$, \new{where $C$ and $D$ are \emph{$\ELHIbot$ concepts}} 
built according to the grammar rules:
\[
  C::= A\mid \exists P.C \mid C\sqcap C\mid \top \quad\quad\quad D::= A\mid\exists P.\top\mid \bot \quad\quad\quad P::= R\mid R^-
\]
where $R\in \NR$,~$A\in \NC$. We often use $\exists P$ as a shorthand for $\exists P.\top$, and we call \emph{basic concepts} the concepts of the form $A$ or $\exists P$. 
\item Positive \emph{role inclusions} (RIs) are of the form $P\sqsubseteq Q$ with $P,Q::=R\mid R^-$ for $R\in\NR$ and 
\emph{negative role inclusions} are of the form $P\sqcap Q\sqsubseteq \bot$.%
\footnote{$\ELHI_\bot$ normally does not allow negative RIs. 
\blue{We include them in the definition of $\ELHIbot$ so that DL-Lite$_\Rmc$ is a fragment of  $\ELHIbot$. Indeed, DL-Lite$_\Rmc$
allows for axioms of the form $P\sqsubseteq \neg Q$ \cite{DBLP:journals/jar/CalvaneseGLLR07}, equivalently 
expressed as $P\sqcap Q\sqsubseteq \bot$.}}
\end{itemize}
An $\ELHIbot$ \emph{ontology} is a finite set of axioms of any of the available forms. 
\blue{We are in particular interested in two fragments of $\ELHIbot$: \DLLiteR and (a syntactic restriction of) $\EL$ extended with role inclusions and range restrictions, \ELHr.} 
\begin{itemize}
\item A \emph{\DLLiteR ontology} may contain assertions, positive and negative role inclusions, and GCIs of one of the 
following restricted forms: $D_1\sqsubseteq D_2$ or $D_1\sqcap D_2\sqsubseteq \bot$ where $D_1$ and $D_2$ 
are 
\emph{basic concepts}. 
\item 
An \ELHr \emph{ontology} may contain all the considered axioms with the restrictions that 
(i)~$\bot$ does not occur in the ontology 
and (ii)~there is no inverse role (\ie $P::=R$), 
except for \emph{range restrictions} of the form $\exists R^-\sqsubseteq A$. 
\end{itemize}

In the standard DL literature, languages from the \EL family usually allow for GCIs of the general form 
\new{$C_1\sqsubseteq C_2$ with $C_1,C_2$ constructed as in the grammar rule for $C$ introduced above} \blue{\cite{BBL-IJCAI05,BBL-EL08,DBLP:conf/rweb/BienvenuO15}}.  
\new{A set of} such GCIs \new{$\Tmc$} can be translated into \new{a conservative extension $\Tmc'$ of $\Tmc$ in} our syntax \new{in polynomial time} 
by exhaustively applying the following
rules: 
\begin{itemize}
	\item replace $C\sqsubseteq C_1\sqcap C_2$ 
	by $C\sqsubseteq C_1$ and $C\sqsubseteq C_2$, 
	\item replace $C_1\sqsubseteq \exists P.C_2$ 
	by $C_1\sqsubseteq \exists S$, \mbox{$S\sqsubseteq P$} and $\exists S^-\sqsubseteq C_2$ 
	where $S$ is a fresh role name.   
\end{itemize}
The reason for this syntactic restriction 
is that conjunctions or qualified restrictions of a role on the right-hand side 
of GCIs lead to counter-intuitive behavior when dealing with provenance. 
We discuss this issue in further detail in Section~\ref{sec:syntacticRestrictions}.

Given an ontology \Omc, we denote by $\signature{\Omc}$ the \emph{signature} of \Omc; that is, the set of concept and role names 
that occur in $\Omc$; and by $\individuals{\Omc}$ the set of individual names that occur in $\Omc$. For $R\in\NR$, we let 
$\mn{inv}(R)=R^-$ and $\mn{inv}(R^-)=R$.

\paragraph{Semantics} The semantics of DL languages is defined through \emph{interpretations}, in the spirit of first-order
logic. 
An interpretation is a pair $\Imc=(\Delta^\Imc ,\cdot^\Imc)$
where $\Delta^\Imc$ is a non-empty set (called the \emph{domain} of \Imc), 
and $\cdot^\Imc$ is the \emph{interpretation function}, which maps every $a\in\NI$ to $a^\Imc\in\Delta^\Imc$; 
every $A\in\NC$ to $A^\Imc\subseteq \Delta^\Imc$; 
and every   $R\in\NR$ to $R^\Imc\subseteq \Delta^\Imc\times\Delta^\Imc$. 
The interpretation function 
$\cdot^\Imc$ is extended to complex expressions as follows: 
  \begin{align*} 
	(\top)^\Imc = {} & \Delta^\Imc; \\ 
	(\bot)^\Imc = {} & \emptyset; \\ 
	(R^-)^\Imc = {} & \{(e,d)\mid (d,e)\in R^\Imc\};\\	
	    (\exists P.C)^\Imc = {} & \{d\mid \new{\exists e\in C^\Imc}
    \text{ s.t. }(d,e)\in P^\Imc\};\\
    (C\sqcap D)^\Imc = {} &C^\Imc\cap D^\Imc.
  \end{align*} 
The interpretation \Imc \emph{satisfies} the concept assertion $A(a)$ if $a^\Imc\in A^\Imc$; 
the role assertion $R(a,b)$ if $(a^\Imc,b^\Imc)\in R^\Imc$; 
the GCI $C\sqsubseteq D$ if $C^\Imc\subseteq D^\Imc$; 
the positive RI $P\sqsubseteq Q$ if $P^\Imc\subseteq Q^\Imc$; and 
the negative RI $P\sqcap Q\sqsubseteq \bot$ if $P^\Imc\cap Q^\Imc=\emptyset$. 
The satisfaction of an axiom $\alpha$ by $\Imc$ is denoted $\Imc\models \alpha$. 
The interpretation $\Imc$ is a \emph{model} of the ontology $\Omc$, denoted by $\Imc\models\Omc$, iff 
$\Imc\models \alpha$ for every $\alpha\in\Omc$. 
Finally, $\Omc$ \emph{entails} an axiom $\alpha$ if $\Imc\models \alpha$ for every model $\Imc$ of $\Omc$.

\paragraph{Queries}
A \emph{conjunctive query} (CQ) is an existentially quantified \new{first-order} formula of the form 
$\exists \vec{y}\, \phi(\vec{x}, \vec{y})$ where $\phi(\vec{x}, \vec{y})$ is a conjunction of atoms of the form $A(t)$ or $R(t,t')$ with $A\in\NC$, $R\in\NR$, and terms $t,t'\in \NI\cup\vec{x}\cup\vec{y}$.  We denote by $\mn{atoms}(q)$ and $\mn{terms}(q)$ the sets of atoms and terms of a query $q$. A CQ consisting of a single atom is called an \emph{atomic query}.  
A \emph{union of conjunctive queries} (UCQ) is a finite disjunction of CQs (over the same free variables). A query is \emph{Boolean} if it has no free variables. 
An interpretation $\Imc$ satisfies a Boolean CQ (BCQ) $q:=\exists \vec{y}\, \phi(\vec{y})$, written $\Imc\models q$, iff there is a \emph{match for $q$ in $\Imc$}, where a match for $q$ in $\Imc$ is a function $\pi:\mn{terms}(q) \rightarrow \Delta^\Imc$ such that $\pi(t)=t^\Imc$ for every \new{$t \in \NI\cap\mn{terms}(q)$}, \new{and for every $t,t'\in \mn{terms}(q)$:}
\begin{itemize}
\item $\pi(t)\in A^\Imc$ for every $A(t) \in \mn{atoms}(q)$, and 
\item  $(\pi(t), \pi(t')) \in R^\Imc$ for every $R(t,t') \in \mn{atoms}(q)$. 
\end{itemize}
A BCQ $q$ is \emph{entailed} by an ontology \Omc, written $\Omc\models q$, if and only if $\Imc\models q$ for every model $\Imc$ of $\Omc$. 
A tuple of constants $\ans$ is an \emph{answer} to a CQ $q(\vec{x}):=\exists \vec{y}\, \phi(\vec{x}, \vec{y})$ over $\Omc$ if $\ans$ and $\vec{x}$ have the same \new{length} and $\Omc\models q(\ans)$ where $q(\ans)$ is the BCQ obtained by replacing the variables from $\vec{x}$ with the corresponding constants from $\ans$. 

\new{We will sometimes use \emph{rooted tree-shaped queries}. Such a query is a CQ $q$ such that:
\begin{itemize}
\item $\mn{terms}(q)$ contains exactly one constant or free variable, all other terms being existentially quantified variables,
\item the undirected graph $\{(x,y)\mid R(x,y)\in\mn{atoms}(q)\}$ is a tree whose root is a constant or a free variable,
\item for every $x,y\in\mn{terms}(q)$, there is at most one role name $R$ such that $R(x,y)\in\mn{atoms}(q)$ or $R(y,x)\in\mn{atoms}(q)$,
and only one of these is the case.
\end{itemize}
It is well-known that there is a 
 correspondence between $\ELHIbot$ concepts and rooted tree-shaped queries (see, e.g, \cite[page 2]{DBLP:journals/jair/GlimmLHS08}): 
given an $\ELHIbot$ concept $C$, there is a rooted tree-shaped query $q_C(x)$ that retrieves all instances of $C$ and 
can be built by structural induction on $C$; and given a rooted tree-shaped query $q(x)$, there is an $\ELHIbot$ concept $C_q$ such that $q(x)$ retrieves all instances of $C_q$ and $C_q$ can be build by structural induction on the tree-structure of $q(x)$. For example, the $\ELHIbot$ concept $A\sqcap\exists R.(B\sqcap\exists S^-.D)$ corresponds to the CQ $q(x)=\exists yz\, A(x)\wedge R(x,y)\wedge B(y)\wedge S(z,y)\wedge D(z)$.}

\begin{table*}
\caption{Combined complexity of axiom and BCQ entailment in different DL languages. }
\label{tab:comb:complexity}
\centering
\begin{tabular}{@{}lccc@{}}
\toprule
& axiom entailment   &  BCQ entailment \\ 
\midrule
\multirow{2}{*}{$\ELHIbot$} & \ExpTime-complete& \ExpTime-complete
\\
& \new{ \shortcite{DBLP:conf/rweb/BienvenuO15}}& \new{ 
\shortcite{DBLP:conf/rweb/BienvenuO15}}
\\
\midrule
\multirow{2}{*}{\EL}  & \PTime-complete & \NP-complete\\
 & \new{ \shortcite{BBL-IJCAI05,DBLP:conf/kr/CalvaneseGLLR06}}& \new{ \shortcite{DBLP:conf/dlog/Rosati07}}
\\
\midrule
\multirow{2}{*}{\blue{\ELHr}}  & \blue{\PTime-complete} & \blue{\NP-complete}\\
 & \blue{ \shortcite{BBL-EL08,DBLP:conf/kr/CalvaneseGLLR06}}& \blue{ \shortcite{LTW:elcqrewriting09}}
\\
\midrule
\multirow{2}{*}{\DLLiteR}  & in \PTime & \NP-complete
\\
 & \new{\shortcite{DBLP:journals/jar/CalvaneseGLLR07}}& \new{ \shortcite{DBLP:journals/jar/CalvaneseGLLR07}}
\\
\midrule
\multirow{2}{*}{$\ELHIbotrestr$} & \PTime-complete & \NP-complete
\\
& \new{(Theorem \ref{theo:ELHIrestrcomplexity})}& \new{(Theorem \ref{theo:ELHIrestrcomplexity})}
\\
\bottomrule
\end{tabular}
\end{table*}

\paragraph{Complexity}
\new{Given an ontology $\Omc$ (a BCQ $q$, or an axiom $\alpha$), we denote by $|\Omc|$ ($|q|$, or $|\alpha|$, 
respectively) and call the \emph{size} of $\Omc$ ($q$, $\alpha$, resp.) the length of the string representing $\Omc$ 
($q$, $\alpha$, resp.), where elements of $\NC$, $\NR$, $\NI$ and variables are considered of length one.} 
The complexity of axiom entailment and BCQ entailment in the different
languages we consider \blue{(including the new fragment of $\ELHIbot$ we introduce in the next paragraph)} is summarised in Table~\ref{tab:comb:complexity}. 
\new{These results refer} only to the \emph{combined complexity}, where 
everything 
is part of the input and measured accordingly \new{(\ie the input size is $|\Omc|+|\alpha|$ in the case of axiom entailment, $|\Omc|+|q|$ in the case of BCQ entailment)}.

\paragraph{\new{The $\ELHIbotrestr$ fragment of $\ELHIbot$}} To discuss lightweight DLs such as \ELHr and \DLLiteR, we 
define a fragment of $\ELHIbot$ which extends them and shares their good computational properties\blue{, which we call $\ELHIbotrestr$}. 
\blue{To define this fragment, we first need to introduce the normal form of an $\ELHIbot$ ontology.} An $\ELHIbot$ ontology $\Omc$ is in \emph{normal form} if all its GCIs are of the form
$$A\sqsubseteq B,\ A\sqcap A'\sqsubseteq B,\  A\sqsubseteq \exists R, \ A\sqsubseteq \exists R^-, \ \exists R.A\sqsubseteq B, \text{ or } \exists R^-.A\sqsubseteq B$$ with $R\in\NR$, $A,A'\in\NC\cup\{\top\}$, $B\in\NC\cup\{\bot\}$. Every $\ELHIbot$ ontology $\Omc$ can be translated  in linear time into an ontology $\Omc'$ in normal form which is a conservative extension of $\Omc$ (\cf Section~\ref{subsection:normalization} for the normalization process in the more general case where the ontology can be annotated). We \blue{also need to} define a binary relation $\sqsubseteq_\Omc$ over $(\signature{\Omc}\cap\NR)\cup\{R^-\mid R\in \signature{\Omc}\cap\NR\}$ as the transitive closure of the relation defined by $$\{(S,P), (\mn{inv}(S),\mn{inv}(P))\mid S\sqsubseteq P\in\Omc\}\cup\{(R,R), (R^-,R^-)\mid R\in\signature{\Omc}\cap\NR\}.$$

\begin{definition}[$\ELHIbotrestr$]\label{def:ELHIrestr}
An $\ELHIbot$ ontology $\Omc$ \emph{belongs to} $\ELHIbotrestr$ if 
\begin{enumerate}
\item $\Omc$ is in normal form, and   
\item \new{if $C\sqsubseteq \exists P_1\in\Omc$, $P_1\sqsubseteq_\Omc P_2$, and $\exists\mn{inv}(P_2).A\sqsubseteq B \in \Omc$, then $A=\top$. }
\end{enumerate}
\end{definition}
\new{In particular, every \DLLiteR ontology is an $\ELHIbotrestr$ ontology (its GCIs are already in normal form and do not contain qualified role restrictions) and every \ELHr ontology can be normalized into a conservative extension in $\ELHIbotrestr$ (since it does not contain inverse roles except in GCIs of the form $\exists R^-\sqsubseteq A$). 
	Since the binary relation $\sqsubseteq_\Omc $ can be constructed in polynomial time and all other checks are based on the syntax of $\ELHIbotrestr$\!, one can verify whether an ontology belongs to $\ELHIbotrestr$ in polynomial time \wrt its size.}
	
\blue{By restricting the language in this way, we can decrease the complexity of axiom and BCQ entailment from \ExpTime-complete in $\ELHIbot$ to \PTime-complete (axiom entailment) and \NP-complete (BCQ entailment) in $\ELHIbotrestr$, as in \DLLiteR and \ELHr (\cf Table~\ref{tab:comb:complexity}). Indeed,} the algorithms we develop in Section \ref{sec:why} will allow us to prove the following result \blue{(\cf Appendix~\ref{app:ELHIrestrcomplexity})}.

\begin{restatable}{theorem}{theoELHIrestrcomplexity}\label{theo:ELHIrestrcomplexity}
For ontologies that belong to $\ELHIbotrestr$\!, satisfiability and axiom entailment are  \PTime-complete and BCQ entailment is \NP-complete.
\end{restatable}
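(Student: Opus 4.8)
\emph{Proof idea.} The plan is to give a consequence-based saturation procedure --- a completion calculus in the spirit of the standard \EL and \DLLITE algorithms --- together with a matching canonical model. From the saturated set one reads off consistency of $\Omc$ and all entailed axioms in polynomial time, giving the \PTime bounds; the canonical model is universal for conjunctive query answering, yielding the \NP upper bound for BCQ entailment, while the \NP lower bound is inherited from \EL and \DLLITE.

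I would first fix the finite set of \emph{derivable statements}: simple inclusions $A\sqsubseteq B$ and $A\sqcap A'\sqsubseteq B$ with $A,A'\in\NC\cup\{\top\}$ and $B\in\NC\cup\{\bot\}$; generating statements $A\sqsubseteq\exists P$; positive and negative role inclusions $P\sqsubseteq Q$ and $P\sqcap Q\sqsubseteq\bot$; and assertional statements $B(a)$, $P(a,b)$ and $\bot(a)$ for $a,b\in\individuals{\Omc}$. Since $\Omc$ is already in normal form there are only polynomially many such statements, each of constant size. The completion rules are the natural ones --- composing concept inclusions (including the rule for left-hand conjunctions), propagating role memberships along positive role inclusions and their inverses, instantiating $\exists P.A\sqsubseteq B$ at an element known to have a $P$-successor in $A$, generating an $\exists P$-successor for each generating statement and computing the type it must have, and clash detection via $\bot$ and via $P\sqcap Q\sqsubseteq\bot$ --- and one closes the initial set (the assertions and normal-form axioms of $\Omc$) under them. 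The decisive structural point, and the only place where condition~2 of Definition~\ref{def:ELHIrestr} enters, is that a freshly generated anonymous $P$-successor never acquires a concept membership depending on its predecessor: any axiom $\exists Q.A\sqsubseteq B\in\Omc$ that could fire at such a witness through its back-edge must have $Q$ a super-role of $\mn{inv}(P)$ \wrt the role hierarchy of $\Omc$; since $P$-successors are forced (by the generating statement producing the witness), $\Omc\models C\sqsubseteq\exists\mn{inv}(Q)$ for some $C\in\NC$, so condition~2 applied to the role $\mn{inv}(Q)$ forces $A=\top$. Consequently the ``type'' of an anonymous witness depends only on $P$ and $\Omc$, and the saturation tracks at most polynomially many distinct types --- exactly the phenomenon whose absence makes full $\ELHI_\bot$ \ExpTime-hard. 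The fixpoint is reached after polynomially many rounds, each computable in polynomial time. Consistency is then read off from whether a clash ($\bot(a)$, $\top\sqsubseteq\bot$, or a negative-role violation) has been derived; and to decide $\Omc\models\alpha$ for an arbitrary $\ELHI_\bot$ axiom $\alpha$, one first brings $\alpha$ into normal-form vocabulary by the standard translation (fresh names for the subconcepts on the left-hand side of a GCI), which one checks preserves membership in $\mathcal{ELHI_\bot^{\mi{restr}}}$, and then tests whether the corresponding statement is derived. This establishes the \PTime claims.

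Soundness of the calculus is routine, each rule being a valid first-order inference. For completeness I would build the canonical model $\Imc$: its domain is $\individuals{\Omc}$ together with a forest of anonymous trees, one successor per applicable generating statement, anonymous elements carrying exactly the $P$-determined types described above; role edges follow the saturated positive and negative role inclusions together with the inverse-role semantics. Using condition~2 one verifies $\Imc\models\Omc$, and a structural induction shows that $\Imc\models\alpha$ iff the corresponding statement is derived --- which gives completeness for axiom entailment and, via the same model, consistency of $\Omc$ whenever no clash was derived. I expect the main obstacle to be making this argument precise in the presence of inverse roles: one has to check that the restricted typing of anonymous elements really does satisfy every GCI $\exists P.A\sqsubseteq B$ and every simple inclusion at anonymous elements, even when edges are traversed backwards, and this is exactly where condition~2 --- together with the family of entailments $\Omc\models C\sqsubseteq\exists P$ over which it quantifies --- must be invoked with care.

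Finally, for BCQ entailment $\Omc\models q$ iff $\Imc\models q$ by universality of $\Imc$, \ie iff there is a homomorphism $h$ from $q$ into $\Imc$. Since $q$ has boundedly many atoms, the image of $h$ contains at most $|q|$ anonymous elements, each reachable by a generating path of length at most $|q|$; such a homomorphism is therefore described by polynomially much data --- for each term of $q$, either an individual, or a generating path together with its ($P$-determined) type --- whose correctness is checked in polynomial time against the saturated set, so BCQ entailment is in \NP. The matching lower bound is inherited: BCQ entailment is already \NP-hard for \EL and for \DLLITE, and after a routine normalization the corresponding ontologies lie in $\mathcal{ELHI_\bot^{\mi{restr}}}$, their only left-hand existentials being of the form $\exists P.\top\sqsubseteq B$, so condition~2 holds vacuously. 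Hence BCQ entailment is \NP-complete.
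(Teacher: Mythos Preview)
Your approach is correct and broadly aligned with the paper's. For satisfiability and axiom entailment, both you and the paper rely on a completion/saturation procedure whose set of derivable statements stays polynomial precisely because condition~2 of Definition~\ref{def:ELHIrestr} forces the type of an anonymous witness to be independent of its predecessor's type; the paper obtains this as a by-product of its provenance-tracking completion calculus (Theorem~\ref{th:complexity:provmonomial}, point~(3) of the proof), simply stripping the monomial annotations and observing that only polynomially many axiom shapes survive, while you sketch essentially the same calculus directly. The paper then reduces GCI and RI entailment to assertion entailment via the constructions of Theorem~\ref{prop:reductions-ELHI}, whereas you test derived statements directly; both are routine.

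For BCQ entailment the two proofs diverge in presentation. The paper takes the \emph{query-rewriting} route: it adapts Definition~\ref{def:rewriting} (without monomials) to the polynomial-size saturated set, then guesses a rewriting sequence together with a match of the rewritten query into the saturated ABox, and verifies both in polynomial time (cf.\ Theorem~\ref{th:complexity:provBCQmonomial}). You instead work directly with the canonical model and guess a homomorphism whose anonymous image is described by short generating paths and $P$-determined types. These are two standard and interchangeable techniques --- each rewriting step corresponds exactly to peeling off a query variable that maps to an anonymous leaf --- so neither buys anything the other does not; the paper's choice is natural because the rewriting machinery is already in place for the provenance results, while yours is more self-contained. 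One point you gloss over and should make explicit is \emph{why} a match can always be taken to hit only anonymous elements at depth $\le |q|$: this needs the observation that subtrees rooted at elements with the same generating role are isomorphic (again by condition~2), so deep matches can be folded upward.
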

	
The intuition behind the definition of $\ELHIbotrestr$ is that we want to avoid situations in which the 
ontology entails pairs of consequences of the form $C\sqsubseteq \exists P$ and $\exists \mn{inv}(P).A\sqsubseteq B$,
which together imply $C\sqcap A\sqsubseteq \exists P.B$. Observe that the subsumer in this consequence is a qualified
existential restriction. The idea of the restriction is then that each of the ``anonymous parts'' of a model should be 
constrained by a single atomic concept; that is, adding $A(a)$ should not allow to qualify the $P$-successor required by 
$C(a)$ and $C\sqsubseteq \exists P$.

\subsection{\new{Semirings and Infinite Sums}}

In semiring provenance, database facts are annotated with elements of algebraic structures known as \emph{commutative semirings}. 
\new{We recall here definitions related to semirings that will be useful in this work. We refer the interested reader to Section 2.2 of the PhD thesis of \citet{DBLP:phd/hal/Ramusat22} for references and discussion of alternative definitions, and to the recent work by \citet{DBLP:conf/birthday/BrinkeGMN24} for a discussion of the properties infinitary operations must satisfy to provide informative provenance analysis over infinite domains.}

\paragraph{Semirings} A semiring $\semiringshort=\semiring$ consists of a set $\semiringset$ 
equipped with two binary \blue{operations: the \emph{addition},~ $\oplus$, which is associative and commutative and has 
an identity element called~$\zero$, and the \emph{multiplication},~$\otimes$, which is associative, has 
an identity element called~$\one$, and is such that 
$\otimes$ distributes over $\oplus$ and $\zero$ is annihilating for~$\otimes$.}  
\new{Explicitly, a semiring satisfies the following properties for all $a,b,c\in\semiringset$:
\begin{itemize}
\item $(a\oplus b)\oplus c=a\oplus(b\oplus c)$ (associativity of $\oplus$);
\item $a\oplus b=b\oplus a$ (commutativity of $\oplus$);
\item $a\oplus\zero=\zero\oplus a=a$ (\zero identity of $\oplus$);
\item $(a\otimes b)\otimes c=a\otimes(b\otimes c)$ (associativity of $\otimes$);
\item $a\otimes\one=\one\otimes a=a$ (\one identity of $\otimes$); 
\item $a\otimes(b\oplus c) = (a\otimes b)\oplus (a\otimes c)$ (distributivity of $\otimes$ over $\oplus$); and
\item $a\otimes\zero=\zero\otimes a=\zero$ (\zero annihilating for $\otimes$).
\end{itemize}}
A semiring is called \emph{commutative} when $\otimes$ is commutative. 
We use the convention according to which multiplication has precedence over addition to omit parentheses 
(\ie $a\oplus b\otimes c$ is $a\oplus (b\otimes c)$). \new{If additive inverses exist (\ie for every $a\in\semiringset$, there exists $b\in\semiringset$ such that $a\oplus b = \zero$), then $\semiringshort$ is in fact a \emph{ring}.}

A semiring is \emph{$\oplus$-idempotent} (resp.\  \emph{$\otimes$-idempotent}) if for every $a\in K$, 
$a\oplus a=a$ (resp.\ $a\otimes a=a$). 
It is \emph{absorptive} if for every $a,b\in K$, $a \otimes b \oplus a= a$. Note that if a semiring is absorptive, then it is also \plusidem (since $a\oplus a=a\otimes\one\oplus a = a$). 
Finally, a semiring is \emph{positive} if for every $a,b\in K$, (i)~$a \otimes b =\zero$ iff 
$a=\zero$ or $b=\zero$, and 
(ii)~$a \oplus b =\zero$ iff 
$a=b=\zero$. 

\begin{example}\label{ex:semirings}
We will consider the following commutative semirings in our running example.
\begin{itemize}
\item The counting semiring $\mathbb{N}=(\mathbb{N}, +, \times, 0, 1)$ is the set of the natural integers equipped with the usual addition and multiplication between integers. 

\item The tropical semiring $\mathbb{T}=(\mathbb{R}^\infty_+, \min , +, \infty, 0)$, used to represent costs \footnote{This version of tropical semiring has been considered in the semiring provenance literature (to represent data access costs), but the min-plus tropical semiring is typically defined with $\mathbb{R}^\infty$ instead of $\mathbb{R}^\infty_+$ as domain and there exist other tropical semirings, such as the max-plus one $(\mathbb{R}^{-\infty}, \max,+, -\infty,0)$.}, is the set of \new{non-negative} real numbers extended with the symbol $\infty$, with the minimum operation as addition ($a\oplus b=\min(a,b)$) and standard addition as multiplication ($a\otimes b=a+b$). 
\item The Viterbi semiring $\mathbb{V}=([0,1], \max, \times, 0,1)$, used to represent confidence scores, is the set of real numbers between $0$ and $1$ with the maximum operation as addition ($a\oplus b=\max(a,b)$) and the usual multiplication between reals ($a\otimes b=a\times b$). 

\item The fuzzy semiring $\mathbb{F}=([0,1],\max,\min ,0,1)$, used to represent truth degrees, is the set of real numbers between $0$ and $1$ with the maximum operation as addition ($a\oplus b=\max(a,b)$) and the minimum operation as multiplication ($a\otimes b=\min(a,b)$). 

\item  The \L ukasiewicz semiring $\mathbb{L}=([0,1],\max,\star_L,0,1)$, used to represent truth values, is the set of real numbers between $0$ and $1$ with the maximum operation as addition ($a\oplus b=\max(a,b)$) and multiplication defined by $\blue{a\otimes b=} a\star_L b=\max(0,a+b-1)$. 

\item The access control semiring $\mathbb{A}=(\{P<C<S<T<0\},\min,\max,0,P)$, used to represent clearance levels required to access data \cite{GreenT17}, is the set $\{P,C,S,T,0\}$ (for ``public'', ``confidential'', ``secret'', ``top-secret'', and ``nobody knows'') with the minimum and maximum operations \wrt the obvious total ordering $P<C<S<T<0$ as addition and multiplication respectively. 
\end{itemize}
From these examples, the counting semiring is the only one which is not absorptive nor $\oplus$-idempotent;  
the fuzzy and the access control semirings are the only ones which are $\otimes$\mbox{-}idempotent; 
and the \L ukasiewicz semiring is the only one that is not positive since,
for example, $0.1\star_L 0.1=\max(0,0.1+0.1-1)=0$. 
\end{example}

\new{An example of a \emph{non-commutative} semiring is the semiring of \blue{formal} languages over a finite alphabet $A$: $(2^{A^*},\cup,\cdot,\emptyset,\{\epsilon\})$ where $2^{A^*}$ is the set of \blue{all} 
sets of words over $A$, $\epsilon$ is the empty word and $\cdot$ is element-wise concatenation: $S_1\cdot S_2=\{w_1w_2\mid w_1\in S_1, w_2\in S_2\}$.}

\paragraph{Infinite sums} 
\new{In this work we need to compute the addition over possibly uncountably many semiring elements and thus consider the notion of \emph{complete} 
semirings~\cite{Krob1987,Karner1992}.} 
\new{A semiring $\semiringshort=\semiring$ is \emph{complete} (resp.\ \emph{$\omega$-complete}) if for every (resp.\ at most countable) family $(a_i)_{i\in I}$ of elements in $\semiringshort$ indexed by $I$, we can define an element $\bigoplus_{i\in I}a_i$ in $\semiringset$ satisfying:}
\new{
\begin{itemize}
\item if $I$ is empty: $\bigoplus_{i\in I}a_i=\zero$;
\item if $I=\{1,\dots,n\}$ is non-empty finite: $\bigoplus_{i\in \{1,\dots,n\}}a_i=a_1\oplus\dots\oplus a_n$;
\item $\bigoplus_{i\in I} a_i=\bigoplus_{j\in J}(\bigoplus_{i\in I_j} a_i)$ if $\bigcup_{j\in J}I_j=I$ and $I_j\cap I_{j'}=\emptyset$ for every $j\neq j'$;
\item for every $b\in\semiringset$: $b\otimes(\bigoplus_{i\in I}a_i)=\bigoplus_{i\in I}(b\otimes a_i)$ and $(\bigoplus_{i\in I}a_i)\otimes b=\bigoplus_{i\in I}(a_i\otimes b)$. 
\end{itemize}
}

Given a semiring $\semiringshort=\semiring$, let $\leq$ 
be the binary relation over $\semiringset$ such that $a\leq b$ if and only if there exists $c\in \semiringset$ with $a\oplus c =b$. 
\new{The semiring $\semiringshort$ is \emph{naturally ordered} if $\leq$ is a partial order, $\oplus$ and $\otimes$ are monotone in each argument, and $\zero$ is the least element \blue{\wrt $\leq$}. Datalog provenance was originally defined for \emph{$\omega$-continuous} semirings~\cite{Green07-provenance-seminal}, which are naturally ordered $\omega$-complete semirings that additionally satisfy some conditions related to the existence of least upper bounds according to $\leq$ \cite{Karner1992,DBLP:reference/hfl/Kuich97}.}

\begin{example}
\blue{All semirings introduced in Example \ref{ex:semirings} are naturally ordered and the counting semiring $\mathbb{N}$ is the only one which} 
is not \new{$\omega$-complete, hence not complete}. \new{A complete version $\mathbb{N}^{\infty}$ of $\mathbb{N}$ is obtained by extending it with $\infty$, with both operations extended as expected, and $\Sigma_{i\in I} n_i$ defined as follows for every family $(n_i)_{i\in I}$ of elements from $\mathbb{N}\cup\{\infty\}$: 
$\Sigma_{i\in I} n_i=\infty$ if there exists $i\in I$ such that $n_i=\infty$;
$\Sigma_{i\in I} n_i=\infty$ if there are infinitely many $i\in I$ such that $n_i\neq 0$;
and $\Sigma_{i\in I} n_i= \Sigma_{i\in I_f} n_i$ otherwise, with $I_f=\{i\in I\mid n_i\neq 0\}$.} 
\end{example}

\paragraph{Homomorphisms} A \emph{semiring homomorphism} from a semiring $\semiringshort = \semiring$ to another $\mathbb{K'} = (K', + ,\cdot, 0, 1)$ is a mapping $h : K \rightarrow K'$ such that $h(\zero) = 0$, $h(\one) = 1$, and for all $a,b\in \semiringset$, $h(a \oplus b) = h(a) + h(b)$ and $h(a \otimes b) = h(a) \cdot h(b)$. 
\new{A semiring homomorphism between ($\omega$-)complete semirings is \emph{($\omega$-)complete} if $h(\bigoplus_{i\in I}a_i)=\bigplus_{i\in I}h(a_i)$ for every (countable) family $(a_i)_{i\in I}$.}  
A semiring homomorphism between $\omega$-continuous semirings is \emph{$\omega$-continuous} if it preserves least upper bounds. 

\subsection{Semiring Provenance in the Database Setting}\label{sec:database-prov}
In the database setting, the framework of \emph{semiring provenance} generalizes the semantics of queries over 
annotated databases.    
We explain the main notions of this field next, assuming a basic understanding of relational databases \cite{DBLP:books/aw/AbiteboulHV95}. 

\paragraph{Queries over annotated databases} 
In semiring provenance, database facts are annotated with elements of \emph{commutative semirings}. 
The semantics of positive relational algebra queries \new{(using operations select, project, natural join, rename and union, which are known to be equivalent to union of conjunctive queries and SQL {\sc select-from-where-union} queries)} over databases annotated with elements of any commutative 
semiring is defined inductively on the structure of the query \cite{Green07-provenance-seminal,GreenT17}. Intuitively, joint 
use of the data corresponds to the semiring multiplication while alternative use of the data corresponds to the semiring 
addition. In other words, we \emph{multiply} the annotations of
 all the facts that together produce an answer, and 
\emph{add} such products over all combinations yielding the same answer.

\begin{example}\label{ex:annotated-database}
\new{\sloppy{Consider the three facts $\alpha_1=\mn{Deity}(\mn{Dionysus})$, $\alpha_2=\mn{mother}(\mn{Dionysus},\mn{Semele})$ and $\alpha_3=\mn{mother}(\mn{Dionysus},\mn{Demeter})$}, stating that Dionysus is a deity who has mothers Semele and Demeter, and the query $q:=\exists xy\, \mn{Deity}(x) \wedge \mn{mother}(x,y)$ that asks if there is a deity who has a mother. 
The answer is \emph{yes} and it can be derived either using $\alpha_1$ and $\alpha_2$ or using $\alpha_1$ and $\alpha_3$, interpreting $y$ by either $\mn{Semele}$ or $\mn{Demeter}$. We now annotate the facts with elements of the semirings introduced in Example~\ref{ex:semirings}. 
Table~\ref{tab:exa25} shows possible annotations and the resulting annotation of the query answer in each semiring.\qedhere}

\begin{table*}[t]
\caption{\new{The results of provenance computation over different semirings and annotations.}}
\label{tab:exa25}
\centering
\new{\begin{tabular}{@{}c c c c c@{}}
\toprule
& $\alpha_1$ & $\alpha_2$ & $\alpha_3$ & $q$\\
\midrule
Multiplicities & \multirow{2}{*}{$3$} & \multirow{2}{*}{$2$} & \multirow{2}{*}{$1$} & \multirow{2}{*}{$3\times 2 +3\times 1= 9$}\\
$\mathbb{N}=(\mathbb{N}, +, \times, 0, 1)$\\
\midrule
Costs & \multirow{2}{*}{$1$} & \multirow{2}{*}{$5$} & \multirow{2}{*}{$8$} & \multirow{2}{*}{$\min(1+5, 1+8)=6$}\\
$\mathbb{T}=(\mathbb{R}^\infty_+, \min , +, \infty, 0)$\\
\midrule
Confidence & \multirow{2}{*}{$0.9$} & \multirow{2}{*}{$0.6$} & \multirow{2}{*}{$0.4$} & \multirow{2}{*}{$\max(0.9\times 0.6, 0.9\times 0.4)=0.54$}\\
$\mathbb{V}=([0,1], \max, \times, 0,1)$\\
\midrule
Truth degrees & \multirow{2}{*}{$0.9$} & \multirow{2}{*}{$0.8$} & \multirow{2}{*}{$0.2$} & \multirow{2}{*}{$\max(\min(0.9,0.8),\min(0.9,0.2))=0.8$}\\
$\mathbb{F}=([0,1],\max,\min ,0,1)$\\
\midrule
Truth values & \multirow{2}{*}{$0.9$} & \multirow{2}{*}{$0.8$} & \multirow{2}{*}{$0.2$} &\multicolumn{1}{l}{$\max(\max(0,0.9+0.8-1),$}\\
$\mathbb{L}=([0,1],\max,\star_L,0,1)$ &&&& \multicolumn{1}{r}{$\max(0, 0.9+0.2-1))=0.7$}\\
\midrule
Clearance levels & \multirow{2}{*}{$P$} & \multirow{2}{*}{$C$} & \multirow{2}{*}{$S$} & \multirow{2}{*}{$\min(\max(P,C),\max(P,S))=C$}\\
$\mathbb{A}$ ($P<C<S<T<0$)\\
\bottomrule
\end{tabular}}
\end{table*}
\end{example}

\begin{remark}[Probabilistic databases]
\blue{There is no semiring $\semiringshort=([0,1],\oplus,\otimes,0,1)$ that directly captures the semantics of databases annotated with probabilities. Indeed, if we consider two facts $A(a)$ and $B(b)$, both annotated with probability $0.5$, the provenance of both queries $\exists xy\ A(x)\wedge A(y)$ and $\exists xy\ A(x)\wedge B(y)$ will be $0.5\otimes 0.5$ while the probabilities of these queries (according to the classical semantics of probabilistic databases) differ ($0.5$ for the first one versus $0.25$ for the second one). The reason is that
probabilities are not truth functional, while provenance is. However, query probabilities can be computed using positive Boolean provenance, which is captured by a semiring (\cf Section~\ref{sec:applications-of-posbool-provenance}) \cite{DBLP:journals/sigmod/Senellart17}.}
\end{remark}

\citet{Green07-provenance-seminal} also define semiring provenance for 
Datalog queries \new{(\ie queries formed by a \blue{finite} set of Datalog rules with a distinguished output predicate)}, using derivation trees. The approach associates to the query the sum over all its derivation trees of 
the \new{products of the annotations of the tree leaves (which correspond to database facts)}. 
\new{To be able to handle infinitely many derivation trees and relate the provenance of a Datalog query with  the least fixpoint of a system of fixpoint equations, they consider $\omega$-continuous semirings. However, their definition of provenance for Datalog queries is already well-defined for semirings that are only $\omega$-complete. \blue{Indeed, the sum over all derivation trees is well-defined as soon as infinite sums of semiring elements are well-defined}.} 

\paragraph{Provenance semirings} 

Provenance semirings were introduced to abstract from a specific semiring and compute a representation of the provenance. 
\new{Given a finite set $\semiringVars$ of 
\emph{variables}\footnote{\new{In the literature $\semiringVars$ is often only assumed to be countable. Following \citet{DBLP:journals/sigmod/Senellart17}, we assume $\semiringVars$ to be finite for simplicity, since variables are only used to annotate the elements of finite sets.} \blue{More precisely, for every database $\Dmc$, we can let $\semiringVars_\Dmc=\{x_1,\dots,x_n\}$ where $n$ is the number of facts in $\Dmc$, and compute provenance \wrt $\Dmc$ using $\mi{Prov}[\semiringVars_\Dmc]$.}} 
which are used to annotate the database facts and can be thought of as identifiers, a \emph{provenance semiring} $\mi{Prov}[\semiringVars]=(\mi{Prov}[\semiringVars],\oplus,\otimes,\zero,\one)$ parametrized by $\semiringVars$ 
is a \blue{commutative} semiring over a space of 
\emph{provenance expressions} built from variables from~$\semiringVars$, $\zero$, $\one$, $\oplus$ and $\otimes$.}
\new{ 
We recall below the definitions of provenance semirings that have been considered in the literature \cite{DBLP:journals/mst/Green11,DBLP:conf/icdt/DeutchMRT14,GreenT17}. 
\blue{We will see that all the considered provenance expressions can be written as \emph{polynomials}. Assuming that $\semiringVars=\{x_1,\dots,x_n\}$, recall that a \emph{monomial} is a formal product of these variables, possibly raised to a nonnegative power: $x_1^{e_1}\dots x_n^{e_n}$ with $e_i\in\mathbb{N}$ for $1\leq i\leq n$. Exponents equal to $1$ can be omitted, as well as variables whose exponent is equal to $0$ (\eg $x_1^2x_2^1x_3^0$ can be written $x_1^2x_2$). A \emph{polynomial with variables from $\semiringVars$ and coefficients from a commutative ring $C=(C,\oplus^C,\otimes^C,\zero^C,\one^C)$} is a finite linear combination of monomials: $\bigoplus^C_{(e_1,\dots,e_n)\in I}\mi{coef}{(e_1,\dots,e_n)}x_1^{e_1}\dots x_n^{e_n}$ where $I$ is a finite subset of $\mathbb{N}^n$ and $\mi{coef}{(e_1,\dots,e_n)}\in C$ for every $(e_1,\dots,e_n)\in I$. Monomials whose coefficient is equal to $\zero^C$ can be omitted.}
\begin{itemize}
\item The \emph{provenance polynomials semiring for $\semiringVars$} is $\polynomials=(\polynomials,+,\times, 0,1)$ where $\polynomials$ is the set of polynomials with variables from $\semiringVars$ and coefficients from $\mathbb{N}$, with the operations defined as usual.
\item The \emph{Boolean provenance polynomials semiring for $\semiringVars$} is $\boolpolynomials=(\boolpolynomials,+,\times, 0,1)$ where $\boolpolynomials$ is the set of polynomials over variables $\semiringVars$ with Boolean coefficients.
\item The \emph{trio semiring for $\semiringVars$}, denoted $\trio$, is the quotient semiring of $\polynomials$ by $\approx_f$, where $\approx_f$ is the congruence relation defined by $p_1\approx_f p_2$ iff $f(p_1)=f(p_2)$ with $f$ the function that ``drops exponents''. \blue{Recall that the elements of the quotient semiring are the equivalence classes for the congruence relation. In practice, we represent them by polynomials without exponents.}
\item  The \emph{sorp semiring for $\semiringVars$}, denoted $\sorp$, is the quotient semiring of $\polynomials$ by $\approx$, where $\approx$ is the smallest congruence relation on $\polynomials$ that identifies polynomials according to absorption.
\item The \emph{why-provenance semiring for $\semiringVars$} is $\why=(\why, \cup, \Cup, \emptyset,\{\emptyset\})$ where $\why$ consists of the set of all possible sets of subsets of $\semiringVars$ and $\Cup$ denotes pairwise union: $S_1\Cup S_2=\{s_1\cup s_2\mid s_1\in S_1, s_2\in S_2\}$. 
\item The \emph{semiring of positive Boolean functions over $\semiringVars$} is $\posbool=(\posbool, \vee, \wedge, 0, 1)$ where $\posbool$ is the set of classes of equivalent positive Boolean expressions over variables $\semiringVars$ 
(which involve only disjunction, conjunction, and constants $1$ and $0$ for true and false but without any negations). 
We identify elements of \posbool with their irredundant disjunctive normal form. \posbool is isomorphic to the bounded distributive lattice freely generated by~$\semiringVars$.
\item The \emph{lineage semiring for $\semiringVars$} is $\lin=(\lin,\cup,\cup^*, \emptyset^*,\emptyset)$ where $\lin=2^\semiringVars\cup\{\emptyset^*\}$ is the set of all subsets of $\semiringVars$ extended with an element $\emptyset^*$, $\cup$ is the usual union, $S_1\cup^*S_2=S_1\cup S_2$ if $S_1,S_2\neq\emptyset^*$ and $\emptyset^*\cup S=S\cup\emptyset^*=S$, $\emptyset^*\cup^* S=S\cup^*\emptyset^*=\emptyset^*$. Intuitively, $\emptyset^*$ and $\cup^*$ are introduced because taking $\oplus=\otimes=\cup$ and $\zero=\one=\emptyset$ would not satisfy the requirement that in a semiring, $\zero$ is annihilating for $\otimes$. $\lin$ is a semiring such that for every $a,b\neq \zero$, $a\oplus b=a\otimes b$ (denoted by $\oplus\approx\otimes$).
\end{itemize}
We will often write a set of variables $\{x_1,\dots,x_n\}$ as their product, \ie as the \emph{monomial} $x_1\times\dots \times x_n$ (or $x_1\cdots x_n$) and see a set of such sets (\ie an element of \why) as a \emph{sum of monomials}. Similarly we can write, e.g., the formula $x_1\wedge x_2\vee x_3$ as $x_1\times x_2 + x_3$, and represent all provenance expressions from \why, \posbool or \lin as \emph{polynomials}. 
Figure \ref{fig:hierarchy} shows the hierarchy of expressiveness for these provenance semirings. Intuitively, annotating a query result by $x^2+xy\in\polynomials$ expresses that it can be obtained either by using twice a fact annotated by $x$ or once a fact annotated by $x$ and once a fact annotated with $y$, while annotation $x+xy\in\why$ only indicates that the query result can be obtained by using a fact annotated with $x$ or two facts annotated with $x$ and $y$, thus is less informative.} \blue{More formally, $\semiringshort_1$ is considered more expressive than $\semiringshort_2$ if there exists a surjective semiring homomorphism from $\semiringshort_1$ to $\semiringshort_2$.}  
In this work, we mostly focus on the provenance semirings \why, \posbool and \lin and discuss the difficulties that arise for 
more expressive provenance semirings \blue{when they are used for annotating ontologies}. These provenance semirings correspond to notions of query explanation that have long been considered in the 
context of databases: \why corresponds to the \emph{why-provenance}, or \emph{witness basis}, \posbool corresponds to 
the \emph{minimal witness basis}, and \lin corresponds to 
\emph{lineage}~\cite{BKT01:dbprovenance,DBLP:journals/ftdb/CheneyCT09}. 

\begin{example}[Example~\ref{ex:annotated-database} continued]\label{ex:annotated-database-bis}
\blue{Assume that the three facts $\alpha_1$, $\alpha_2$ and $\alpha_3$ are annotated by the variables $x_1$, $x_2$ and $x_3$ respectively. The annotation of the query $q$ is $x_1x_2+x_1x_3$ for all provenance semirings we consider except for \lin, for which it is $x_1x_2x_3$. If we now consider the query $q':=\exists xyz\, \mn{Deity}(x) \wedge \mn{mother}(x,y)\wedge \mn{mother}(x,z)$, the provenance of $q'$ is $x_1x_2^2+x_1x_3^2+2x_1x_2x_3$ in \polynomials, $x_1x_2^2+x_1x_3^2+x_1x_2x_3$ in $\boolpolynomials$ and \sorp, $x_1x_2+x_1x_3+2x_1x_2x_3$ in \trio, $x_1x_2+x_1x_3+x_1x_2x_3$ in \why, $x_1x_2+x_1x_3$ in \posbool, and $x_1x_2x_3$ in \lin.}
\end{example}

\begin{figure}
\begin{center}
\begin{tikzpicture}
[level distance=2.2cm,
level 1/.style={sibling distance=4.5cm},
level 2/.style={sibling distance=4.5cm}]
\node (poly) {{\polynomials}}
	child {node (boolpoly) {{\boolpolynomials}}
		child {node (sorp) {{\sorp}}}
		child {node (why) {{\why}}
			child {node (posbool) {{\posbool}}}
			child {node (lin) {{\lin}}}
			}
		}
	child {node (trio) {{\trio}}};
\draw [thick, ->] (poly) -- (boolpoly) node[near start,left] {\small\plusidem};	
\draw [thick, ->] (poly) -- (trio) node[near start,right] {\small\new{drop exponents}};	

\draw [thick, ->] (boolpoly) -- (sorp) node[near start,left] {\small absorptive};
\draw [thick, ->] (boolpoly) -- (why) node[near start,right] {\small\new{drop exponents}};

\draw [thick, ->] (trio) -- (why) node[near start,right] {\small\plusidem};

\draw [thick, ->] (why) -- (posbool) node[near start,left] {\small absorptive};
\draw [thick, ->] (why) -- (lin) node[near start,right] {\small$\oplus \approx \otimes$};

\draw [thick, ->] (sorp) -- (posbool) node[near start,left] {\small\timesidem};
\end{tikzpicture}
\end{center}
\caption[]{A hierarchy of provenance semirings
	 \cite[Fig.~2]{GreenT17}.\footnotemark 
	\blue{An arrow from $\semiringshort_1$ to $\semiringshort_2$}  
	means that there exists a surjective semiring homomorphism from $\semiringshort_1$ to $\semiringshort_2$\blue{, i.e., that $\semiringshort_1$ is more expressive than $\semiringshort_2$}.}
\label{fig:hierarchy}
\end{figure}
\footnotetext{\new{The original picture by \citet{GreenT17} has a typo: \trio and \why are not $\otimes$-idempotent (since \blue{$(x+y)\times(x+y)=x+y+ 2xy$ in \trio and $(x+y)\times(x+y)=x+y+xy$ in \why}) but rather can be obtained by ``dropping exponents'' (\cf \cite{DBLP:journals/mst/Green11} for the precise definition of \trio).}\label{footonotewhytrionotidem}}

A provenance semiring $\mi{Prov}[\semiringVars]$ \blue{(such that $\semiringVars\subseteq\mi{Prov}[\semiringVars]$)}  \emph{specializes} correctly to a semiring $\semiringshort$, if any \new{function} $\nu : \semiringVars \rightarrow K$ extends uniquely to a ($\omega$-continuous if $\mi{Prov}[\semiringVars]$ and $\semiringshort$ are $\omega$-continuous) semiring homomorphism $h : \mi{Prov}[\semiringVars] \rightarrow K$ \new{such that $h(x)=\nu(x)$ for every $x\in\semiringVars$}, allowing the computations for $\semiringshort$ to factor through the computations for $\mi{Prov}[\semiringVars]$ \shortcite{DBLP:conf/icdt/DeutchMRT14,DBLP:conf/kr/BourgauxBPT22}. 

\begin{example}
If we consider the semirings from Example \ref{ex:semirings} and the provenance semirings in Figure \ref{fig:hierarchy}, only 
\polynomials specializes correctly to the counting semiring $(\mathbb{N}, +, \times, 0, 1)$. 
\blue{Indeed, if $\nu : \semiringVars \rightarrow \mathbb{N}$ is such that $\nu(x)=2$, there is no homomorphism $h : \boolpolynomials \rightarrow \mathbb{N}$ such that $h(x)=2$ (since it would require that $2=h(x)=h(x+x)=h(x)+h(x)=2+2=4$ as $x+x=x$ in $\boolpolynomials$), and similarly, there is no homomorphism $h : \trio \rightarrow \mathbb{N}$ such that $h(x)=2$ (since it would require that $2=h(x)=h(x\times  x)=2\times 2=4$ as $x\times x=x$ in $\trio$). 
}

\sorp and all provenance 
semirings above it specialize correctly to the tropical semiring $(\mathbb{R}^\infty_+, \min , +, \infty, 0)$, the Viterbi semiring 
$([0,1], \max, \times, 0,1)$, and the \L ukasiewicz semiring $([0,1],\max,\star_L,0,1)$; and \posbool along with all provenance 
semirings above it specialize correctly to the fuzzy semiring $([0,1],\max,\min ,0,1)$ and the access control semiring 
$(\{P<C<S<T<0\},\min,\max,0,P)$. 
\end{example}

\begin{remark}
\new{Interestingly, \trio and \why do not specialize correctly to themselves \blue{(if $\ \semiringVars$ contains at least two variables)}: if $\nu(x)=x+y$ and $\nu(z)=z$ for every $z\in\semiringVars\setminus\{x\}$, there is no semiring homomorphism $h$ \blue{such that $h(v)=\nu(v)$ for every $v\in\semiringVars$}  
because it would imply that \blue{$x+y=h(x)=h(x\times x)=h(x)\times h(x)=(x+y)\times(x+y)$ and $(x+y)\times(x+y)\neq x+y$ in \trio and \why (\cf footnote \ref{footonotewhytrionotidem})}.}
\end{remark}

\new{As mentioned before,} for Datalog queries, infinite provenance expressions may be needed. 
\new{For example, the Datalog query that consists of the recursive rule $A(x)\leftarrow A(x)$ and the rule 
$\mn{goal}\leftarrow A(x)$ where $\mn{goal}$ is the output predicate, evaluated 
over the database that contains a single fact $A(a)$ annotated with $x_0\in\semiringVars$, has infinitely many derivation trees and its provenance as defined by \citet{Green07-provenance-seminal} is an infinite sum of $x_0$, which is not an element of $\polynomials$.} 
\blue{Such infinite provenance expressions are expressed using \emph{formal power series}.} 
A formal power series with variables from $\semiringVars$ and coefficients from $C$ is a mapping \blue{$S$ from $\monomials(\semiringVars)$ to $C$, where 
$\monomials(\semiringVars)$ is the set of all monomials over $\semiringVars$. It} 
can be written as a possibly 
infinite sum $S=\Sigma_{\monomial\in\monomials(\semiringVars)}S(\monomial)\monomial$. 
$C\llbracket\semiringVars\rrbracket$ denotes the set of formal power series with variables from $\semiringVars$ 
and coefficients from~$C$.  
\citet{Green07-provenance-seminal} define the 
\emph{Datalog provenance semiring} as $\series$, the commutative $\omega$-continuous semiring of formal power 
series with coefficients from $\Nbb^\infty=\Nbb\cup\{\infty\}$. 
We can obtain a hierarchy similar to that of Figure \ref{fig:hierarchy} \new{for $\omega$-continuous commutative provenance semirings} where \polynomials is replaced by \series, \boolpolynomials is replaced by 
$\boolseries$, 
\new{and $\trio$ is replaced by $\trioseries$ (obtained from $\series$ as \trio is obtained from $\polynomials$).  Note that since we assume that $\semiringVars$ is finite, there is no difference between polynomials and formal power series when considering \why, \sorp, \posbool and \lin. 
There is no hierarchical relationship between \series and \polynomials because there is no \emph{surjective} semiring 
homomorphism from $\polynomials$ to $\series$ nor from $\series$ to $\polynomials$.}

A provenance semiring $\mi{Prov}[\semiringVars]$ is \emph{universal for a \new{class} of semirings} if it specializes correctly to each semiring of this \new{class. This is equivalent to say that $\mi{Prov}[\semiringVars]$ has the universal mapping property for this class over $\semiringVars$, or that $\mi{Prov}[\semiringVars]$ is the free algebra generated by $\semiringVars$ for this class}. 
\new{\citet{Green07-provenance-seminal} showed that $\polynomials$ is universal for commutative semirings and $\series$ is universal for commutative $\omega$-continuous semirings. 
It is also well-known that 
$\posbool$ specializes correctly to every commutative semiring that is \timesidem and absorptive. 
Note that a provenance semiring specializing correctly to another does not mean that the former is more expressive than the latter. For instance, $\polynomials$ specializes correctly to $\series$, but is not more expressive than $\series$.}


\section{\new{A Semiring Provenance Semantics for Description Logics}}\label{sec:prov-definition}

\new{In this section, we define a provenance semantics for $\ELHIbot$ ontologies. We  explain our design choices and the restrictions of our setting. Finally, we investigate whether it captures some well-known semantics for ontologies annotated with specific kinds of information. A more detailed analysis of the properties of our semantics is conducted in Sections~\ref{sec:classical-res-counterparts} and~\ref{sec:expected-properties}.}

\subsection{Provenance Semantics for Annotated Ontologies }\label{sec:prov-ontology}

Let $\semiringshort=\semiring$ be a \new{commutative semiring} and \logic be a DL language 
encompassed by $\ELHIbot$. 
A \emph{\mbox{($\semiringshort$-)}annotated \logic ontology}  is a pair $\Omc^\semiringshort=\tup{\Omc,\lambda}$ where 
$\Omc$ is an \logic ontology and $\lambda$ is an annotation function 
$\lambda:\Omc\mapsto \semiringset\setminus\{\zero\}$. We often treat $\Omc^\semiringshort$ as the set of pairs
$\{(\alpha,\lambda(\alpha))\mid \alpha\in \Omc\}$ and call such pairs $(\alpha,\lambda(\alpha))$ \emph{annotated axioms}. 
\new{We next define the semantics of $\semiringshort$-annotated \logic ontologies before showing that annotations do not impact satisfiability and discussing our design choices.}

\subsubsection{\blue{Basic Requirements}}\label{sec:semanticreq}
\blue{Let us start with a few basic requirements and intuitions that guide our design choices, focusing on entailment and provenance of assertions, as they are closer than GCIs or RIs to the database (Datalog) queries for which the provenance notion that inspires us has been defined.
\begin{itemize}
\item We want to define a model-theoretic semantics for annotated ontologies, as classically done in DL. Specifically, in this semantics 
\begin{itemize}
\item the satisfaction of an annotated axiom by an interpretation should be defined independently from the rest of the ontology;
\item the satisfiability of an annotated ontology should not depend on its annotations; and
\item assertion entailments should not depend on the annotations: Given a $\semiringshort$-annotated ontology $\Omc^\semiringshort$ and an assertion $\alpha$ such that $\Omc\models\alpha$, there should exist some $\elem\in\semiringset$ such that $\Omc^\semiringshort\models (\alpha,\elem)$.
\end{itemize}
\item We want to define the provenance of a DL axiom or query so that it consists in a single semiring element obtained from  its annotations in the annotated ontology models (as provenance in database is obtained from the annotations associated to the query matches or derivation trees).
\item The provenance of a DL axiom or query should be such that joint use of the axioms corresponds to the semiring multiplication while alternative use of the axioms corresponds to the semiring addition. In particular, the following two conditions should hold.
\begin{itemize}
\item ``Irrelevant'' axioms should not influence the provenance of a consequence: Given a $\semiringshort$-annotated ontology $\Omc^\semiringshort$ and an assertion $A(a)$ such that $A$ and $a$ do not occur in $\Omc$, for every assertion $\alpha$, the provenance of $\alpha$ \wrt $\Omc^\semiringshort$ should be the same as its provenance \wrt $\Omc^\semiringshort\cup\{(A(a),\elem)\}$, for any $\elem\in\semiringset$.
\item ``Irrelevant'' semiring elements should not influence the provenance of a consequence, except maybe in a few cases (i.e., trivial consequences): Given a satisfiable ontology $\Omc$ and an assertion $\alpha$, the provenance of $\alpha$ \wrt $\Omc^\semiringshort=\tup{\Omc,\lambda}$ should be the same as its provenance \wrt $\Omc^{\semiringshort'}=\tup{\Omc,\lambda}$ where $\semiringshort'$ is the commutative semiring obtained by extending $\semiringshort$ with an element $\infty_\semiringshort$ such that $\infty_\semiringshort\otimes\elem=\infty_\semiringshort\oplus\elem=\infty_\semiringshort$ for every $\elem\neq\zero$.
\end{itemize}
\end{itemize}
}
\subsubsection{\new{Semantics Definition}}\label{sec:semanticdef}

The semantics of annotated 
ontologies extends the classical notion of
interpretations with annotations. 
A \emph{($\semiringshort$-)annotated interpretation} is a triple
$\Imc=(\Delta^\Imc, K ,\cdot^\Imc)$
where $\Delta^\Imc$ is a non-empty set (the
\emph{domain} of \Imc), 
and $\cdot^\Imc$ maps
\begin{itemize}
\item every $a\in\NI$ to $a^\Imc\in\Delta^\Imc$;
\item every $A\in\NC$ to 
$A^\Imc\subseteq \Delta^\Imc\times K$;
\item every   $R\in\NR$ to 
$R^\Imc\subseteq \Delta^\Imc\times\Delta^\Imc\times K$.
\end{itemize}
We extend
$\cdot^\Imc$ to complex expressions as follows: 
  \begin{align*} 
	(\top)^\Imc = {} & \Delta^\Imc\times \{\one\}; \\ 
	(\bot)^\Imc = {} & \emptyset; \\ 
	(R^-)^\Imc = {} & \{(e,d,\elem)\mid (d,e,\elem)\in R^\Imc\};\\	
	(\exists P.C)^\Imc = {} & \{(d,\elema\otimes \elemb)\mid \exists e\in\Delta^\Imc  
    \text{ s.t. }(d,e,\elema)\in \new{P}^\Imc, (e, \elemb)\in C^\Imc\};
    \\
    (C\sqcap D)^\Imc = {} & \{(d,\elema\otimes \elemb)\mid (d,\elema)\in C^\Imc, (d,\elemb)\in D^\Imc\};
    \\
    \new{(P\sqcap Q)^\Imc={}}  & \new{\{(e,d,\elema\otimes\elemb)\mid (e,d,\elema)\in P^\Imc, (e,d,\elemb)\in Q^\Imc\}}. 
  \end{align*}
The annotated interpretation \Imc \emph{satisfies}:
  \begin{align*}  
    & \pair{C\sqsubseteq D}{\elem}, && \new{\text{if, for all } d\in\Delta^\Imc,\elemb \in K, }\  
    (d,\elemb)\in C^\Imc\text{ implies  } (d,\elem\otimes \elemb)\in D^\Imc\!;
   	\\
	& \pair{P\sqsubseteq Q}{\elem}, &&  \new{\text{if, for all  } d, e\in\Delta^\Imc,\elemb \in K, }\ 
    (d,e,\elemb)\in P^\Imc \text{ implies  } (d,e,\elem\otimes \elemb)\in Q^\Imc\!; \\
	& \pair{P\sqcap Q\sqsubseteq \bot}{\elem}, && \text{if } 
   \new{ (P\sqcap Q)^\Imc}=\emptyset;  \\
& \pair{A(a)}{\elem}, && \text{if } (a^\Imc,\elem)\in A^\Imc\!;  \\ 
&   \pair{R(a,b)}{\elem}, && \text{if } (a^\Imc,b^\Imc,\elem)\in R^\Imc\!.  	
  \end{align*}
Note that since $\bot^\Imc=\emptyset$, the satisfaction of a GCI of the form $(C\sqsubseteq \bot,\elem)$ (\ie $ (d,\elemb)\in C^\Imc$ implies  
$(d,\elem\otimes \elemb)\in \bot^\Imc$) is equivalent to $C^\Imc=\emptyset$. 
\new{An annotated interpretation \Imc is a \emph{model} of the annotated ontology $\Omc^\semiringshort$, denoted $\Imc\models\Omc^\semiringshort$, if
it satisfies all annotated axioms in $\Omc^\semiringshort$;   
\blue{$\Omc^\semiringshort$ is \emph{satisfiable} if it has a model;}
$\Omc^\semiringshort$ \emph{entails} an annotated axiom 
$(\alpha,\elem)$, denoted $\Omc^\semiringshort\models (\alpha,\elem)$,
if $\Imc\models (\alpha,\elem)$ for every model $\Imc$ of~$\Omc^\semiringshort$.}

\new{Some remarks are in \blue{order}. Intuitively, each annotation of a domain element $d\in\Delta^\Imc$ in the interpretation $C^\Imc$ of an $\ELHIbot$ concept $C$ corresponds to some match of the corresponding rooted tree-shaped query $q_C(d)$ in $\Imc$ seen as an annotated database.  We explain in Section~\ref{sec:designchoices} why we choose to allow for multiple annotations rather than using a single annotation that sums over all such matches, \blue{and why we use sets of annotations (so that the same domain element $d\in\Delta^\Imc$ cannot be annotated several times by the same semiring element $\elem$ in the interpretation of a concept) rather than multisets}. 
GCIs propagate these annotations, multiplying them by their own annotation \blue{(following the intuition that the multiplication corresponds to the ``joint use'' of $C\sqsubseteq D$ and the fact that $d$ is in the interpretation of $C$ to obtain that $d$ is in the interpretation of $D$)}. That is why the interpretation of a concept 
name may contain several pairs $(d,\elem)$ with the same domain element $d\in\Delta^\Imc$, allowing for several provenance annotations to be dealt with simultaneously. 
Interpretations of role names behave similarly.
}

\begin{example}\label{ex:running-onto}
Consider the following ontology:
\begin{align*}
\Omc=\{&\mn{Deity}(\mn{Dionysus}), \mn{mother}(\mn{Dionysus},\mn{Semele}),\mn{mother}(\mn{Dionysus},\mn{Demeter}),\\&
 \mn{Deity}(\mn{Demeter}), \mn{father}(\mn{Dionysus},\mn{Zeus}), \mn{Deity}(\mn{Zeus}), \\&
 \exists\mn{parent}.\mn{Deity}\sqsubseteq \mn{Deity}, \mn{mother}\sqsubseteq\mn{parent}, \mn{father}\sqsubseteq\mn{parent}
\}.
\end{align*}
We define the annotated ontologies $\Omc^{\why}=\tup{\Omc,\lambda_\semiringVars}$, $\Omc^{\mathbb{T}}=\tup{\Omc,\lambda_{\mathbb{T}}}$, and $\Omc^{\mathbb{F}}=\tup{\Omc,\lambda_{\mathbb{F}}}$, annotated with variables, costs and truth degrees respectively, as follows.
\begin{center}
\begin{tabular}{l c c c}
& $\lambda_\semiringVars$ & $\lambda_{\mathbb{T}}$ & $\lambda_{\mathbb{F}}$ \\
\mn{Deity}(\mn{Dionysus})& $x_1$ & 1 & 0.9\\
\mn{mother}(\mn{Dionysus},\mn{Semele})& $x_2$ & 5 & 0.8 \\
\mn{mother}(\mn{Dionysus},\mn{Demeter})& $x_3$ & 8 & 0.2\\
\mn{Deity}(\mn{Demeter})& $x_4$ & 1 & 0.9\\
\mn{father}(\mn{Dionysus},\mn{Zeus})& $x_5$ & 2 & 0.9\\
\mn{Deity}(\mn{Zeus})& $x_6$ & 1 & 1\\
$\exists\mn{parent}.\mn{Deity}\sqsubseteq \mn{Deity}$& $y_1$ & 2 & 0.5\\
$\mn{mother}\sqsubseteq\mn{parent}$& $y_2$ & 1 & 1\\
$\mn{father}\sqsubseteq\mn{parent}$& $y_3$ & 1 & 1
\end{tabular}
\end{center}
Let $\Imc $ be a ${\why}$-annotated interpretation such that $\Imc \models\Omc^{\why}$. \new{Recall that ${\why}$, \blue{as all semirings we consider}, is commutative, so that the order of variables in monomials does not matter (we write them below in the lexicographic order).}
\begin{itemize}
\item Since $\Imc \models (\mn{Deity}(\mn{Dionysus}),x_1)$, then $(\mn{Dionysus}^{\Imc },x_1)\in \mn{Deity}^{\Imc }$. 
\item Since $\Imc \models (\mn{mother}(\mn{Dionysus},\mn{Demeter}),x_3)$, then 
	$(\mn{Dionysus}^{\Imc },\mn{Demeter}^{\Imc },x_3)\in \mn{mother}^{\Imc }$. 
	
	It follows that $\Imc \models (\mn{mother}\sqsubseteq\mn{parent}, y_2)$ yields 
	$(\mn{Dionysus}^{\Imc },\mn{Demeter}^{\Imc },x_3y_2)\in \mn{parent}^{\Imc }$. 
	
	Moreover, $\Imc \models (\mn{Deity}(\mn{Demeter}),x_4)$ implies 
	$(\mn{Demeter}^{\Imc },x_4)\in \mn{Deity}^{\Imc }$. 
	
	Hence from 
	$\Imc \models (\exists\mn{parent}.\mn{Deity}\sqsubseteq \mn{Deity}, y_1)$ we obtain 
	$(\mn{Dionysus}^{\Imc },x_3x_4y_1y_2)\in \mn{Deity}^{\Imc }$. 
\item Similarly, since $\Imc \models (\mn{father}(\mn{Dionysus},\mn{Zeus}),x_5)$, 
	$\Imc \models (\mn{Deity}(\mn{Zeus}),x_6)$, 
	$\Imc \models (\mn{father}\sqsubseteq\mn{parent}, y_3)$ and $\Imc \models (\exists\mn{parent}.\mn{Deity}\sqsubseteq \mn{Deity}, y_1)$, it holds that $(\mn{Dionysus}^{\Imc },x_5x_6y_1y_3)\in \mn{Deity}^{\Imc}$. 
\end{itemize}
\blue{Hence, $\Omc^{\why}$ entails $(\mn{Deity}(\mn{Dionysus}), x_1)$, $(\mn{Deity}(\mn{Dionysus}), x_3x_4y_1y_2)$ and $(\mn{Deity}(\mn{Dionysus}), x_5x_6y_1y_3)$.}

We obtain the following results similarly, by considering $\mathbb{T}$-annotated and $\mathbb{F}$-annotated interpretations that are models of $\Omc^{\mathbb{T}}$ and $\Omc^{\mathbb{F}}$ respectively. Recall that the multiplication of the tropical semiring $\mathbb{T}$ is the usual addition and its addition is $\min$, while the multiplication of the fuzzy semiring $\mathbb{F}$ is $\min$ and its addition is $\max$.
\begin{itemize}
\item \blue{$\Omc^{\mathbb{T}}$ entails $(\mn{Deity}(\mn{Dionysus}), 1)$, $(\mn{Deity}(\mn{Dionysus}), 12)$ and $(\mn{Deity}(\mn{Dionysus}), 6).$}

\item \blue{$\Omc^{\mathbb{F}}$ entails $(\mn{Deity}(\mn{Dionysus}), 0.9)$, $(\mn{Deity}(\mn{Dionysus}), 0.2)$ and $(\mn{Deity}(\mn{Dionysus}), 0.5)$}. \qedhere
\end{itemize}
\end{example}

\new{Contrary to the ontology annotations, we allow for $\zero$ to occur in annotated interpretations. \blue{Indeed, if the semiring is not positive and there are some $\elema,\elemb\neq\zero$ such that $\elema\otimes\elemb=\zero$, 
every model $\Imc$ of $\Omc^\semiringshort=\{(A(a),\elema),(A\sqsubseteq B,\elemb)\}$ is such that $(a^\Imc,\elema\otimes\elemb)=(a^\Imc,\zero)\in B^\Imc$, so disallowing $\zero$ would make $\Omc^\semiringshort$ unsatisfiable, which goes against our basic requirement that annotations do not change the satisfiability of the ontology (\cf~Lemma~\ref{lem:relationship-annotated-standard-models})}. We shall however see in Theorem~\ref{th:sem-entailment} that when the semiring is positive, 
\blue{for every $\Omc^\semiringshort$, $\Omc\models A(a)$ iff there exists $\elem\neq\zero$ such that $\Omc^\semiringshort\models(A(a),\elem)$,} 
and similarly for role assertions.} 

The interpretation of $\top$ by $\Delta^\Imc\times \{\one\}$ yields that $A^\Imc\not\subseteq \top^\Imc$ in general, in contrast with the classical semantics of DL, and that $\emptyset\not\models (A\sqsubseteq \top,\one)$. However, interpreting $\top$ by $\Delta^\Imc\times \semiringset$ would ``flood'' the annotations of models of annotated ontologies such that $\top$ occurs in the left-hand side of some GCIs, \blue{as exemplified below.}
\begin{example}
\blue{Consider $\Omc^\semiringshort=\{(A(a),\one), (A\sqsubseteq \exists R,\one),(\exists R.\top\sqsubseteq B,\one)\}$. For every model $\Imc$ of $\Omc^\semiringshort$, $(a^\Imc,\one)\in A^\Imc$, so since $\Imc\models (A\sqsubseteq \exists R,\one)$, there exists $e\in\Delta^\Imc$ such that $(a^\Imc,e,\one\otimes\one)\in R^\Imc$, \ie $(a^\Imc,e,\one)\in R^\Imc$. Since $(a^\Imc,e,\one)\in R^\Imc$ and $(e,\one)\in\top^\Imc$, we obtain that $(a^\Imc,\one)\in (\exists R.\top)^\Imc$. Hence, since $\Imc\models (\exists R.\top\sqsubseteq B,\one)$, it follows that $(a^\Imc,\one)\in B^\Imc$. Thus, $\Omc^\semiringshort\models (B(a),\one)$. Moreover, the annotated interpretation defined by $\Delta^\Imc=\{a\}$, $a^\Imc=a$, $A^\Imc=\{(a,\one)\}$, $B^\Imc=\{(a,\one)\}$, and $R^\Imc=\{(a,a,\one)\}$ is a model of $\Omc^\semiringshort$ so for every $\elem\neq\one$, $\Omc^\semiringshort\not\models (B(a),\elem)$.} 

\blue{In contrast, if we interpret $\top$ by $\Delta^\Imc\times \semiringset$, then for every model $\Imc$ of $\Omc^\semiringshort$ we would obtain that for every $\elem\in \semiringset$, $(a^\Imc,\one\otimes\elem)\in (\exists R.\top)^\Imc$, \ie $(a^\Imc,\elem)\in (\exists R.\top)^\Imc$, so that $(a^\Imc,\one\otimes\elem)\in B^\Imc$, \ie $(a^\Imc,\elem)\in B^\Imc$, and $\Omc^\semiringshort\models (B(a),\elem)$. In particular, this would go against our requirement that ``irrelevant'' semiring elements should not influence the provenance of assertions entailed by a satisfiable ontology (\cf provenance definition below).}
\end{example}

\paragraph{\new{Provenance}} \new{The \emph{provenance annotations} of an axiom $\alpha$ in an annotated interpretation \Imc is the set $\p{\Imc}{\alpha}=\{\elem \mid \Imc\models (\alpha,\elem)\}$. 
If the semiring $\semiringshort$ is complete (or if $\semiringshort$ is $\omega$-complete and $\semiringset$ is countable),} we define the \emph{provenance} of $\alpha$ \wrt $\Omc^\semiringshort$ 
as 
\new{
\begin{align}\label{eq:provenancesum}
\Pmc(\alpha, \Omc^\semiringshort) := 
\bigoplus_{\Omc^\semiringshort\models (\alpha,\elem)} \elem. 
\end{align}}%
\new{Thanks to the conditions \blue{imposed above} on~$\semiringshort$, the sum in Equation~\eqref{eq:provenancesum} is well-defined even if there are infinitely many $\elem\in\semiringset$ such that 
$\Omc^\semiringshort\models (\alpha,\elem)$ (note that there can be uncountably many such $\elem$ only if $\semiringset$ is uncountable). Moreover, if there is no 
\blue{$\elem\in\semiringset$ such that 
$\Omc^\semiringshort\models (\alpha,\elem)$}
(\ie the sum is empty), then $\Pmc(\alpha, \Omc^\semiringshort)=\zero$.} 

\begin{remark}\label{rem:unsat-onto}
	\blue{It holds that $\Pmc(\alpha, \Omc^\semiringshort) =\bigoplus_{\elem\in\bigcap_{\Imc\models \Omc^{\semiringshort}}\p{\Imc}{\alpha}} \elem$. 
		Indeed, for every $\elem\in\semiringset$, $\elem\in \bigcap_{\Imc\models \Omc^{\semiringshort}}\p{\Imc}{\alpha}$ iff $\Imc\models (\alpha,\elem)$ for every model $\Imc$ of $\Omc^{\semiringshort}$, which is equivalent to $\Omc^{\semiringshort}\models (\alpha,\elem)$. 
		Note that when $\Omc^\semiringshort$ is unsatisfiable, \ie has no model, $\Omc^\semiringshort\models (\alpha,\elem)$ holds for every $\elem\in\semiringset$ and $\Pmc(\alpha, \Omc^\semiringshort)= \bigoplus_{\elem\in\semiringset}\elem$.}
\end{remark}

\begin{example}[Example~\ref{ex:running-onto} continued]
\blue{We have seen in Example~\ref{ex:running-onto} that $\Omc^{\why}$ entails $(\mn{Deity}(\mn{Dionysus}), x_1)$, $(\mn{Deity}(\mn{Dionysus}), x_3x_4y_1y_2)$ and $(\mn{Deity}(\mn{Dionysus}), x_5x_6y_1y_3)$. Moreover, we can verify that there is no other element $\elem\in\why$ such that $\Omc^{\why}$ entails $(\mn{Deity}(\mn{Dionysus}), \elem)$ by considering the model $\Imc$ of $\Omc^{\why}$ defined by $\Delta^\Imc=\{dio,dem,sem,zeu\}$, $\mn{Dionysus}^\Imc=dio$, $\mn{Demeter}^\Imc=dem$, $\mn{Semele}^\Imc=sem$ and $\mn{Zeus}^\Imc=zeu$ and} 
\begin{align*}
\blue{\mn{Deity}^\Imc=}&\blue{\{(dio,x_1), (dio,x_3x_4y_1y_2), (dio,x_5x_6y_1y_3), (dem,x_4), (zeu,x_6)\},}\\
\blue{\mn{mother}^\Imc=}&\blue{\{(dio,sem,x_2), (dio, dem,x_3)\},}\\
\blue{\mn{father}^\Imc=}&\blue{\{(dio,zeu,x_5)\},}\\
\blue{\mn{parent}^\Imc=}&\blue{\{(dio,sem,x_2y_2), (dio, dem,x_3y_2),(dio,zeu,x_5y_3)\}.}
\end{align*}
\blue{Hence, the provenance of $\mn{Deity}(\mn{Dionysus})$ \wrt $\Omc^{\why}$ is }
$$\blue{\Pmc(\mn{Deity}(\mn{Dionysus}), \Omc^{\why})=x_1+x_3x_4y_1y_2+x_5x_6y_1y_3.}$$
\blue{Similarly, we can show that }
\blue{$\Pmc(\mn{Deity}(\mn{Dionysus}), \Omc^{\mathbb{T}})=1$, 
and 
$\Pmc(\mn{Deity}(\mn{Dionysus}), \Omc^{\mathbb{F}})=0.9.$} 
\end{example}

\begin{remark}[Unsatisfiable concept/role]\label{rem:unsat-left}
If an $\ELHIbot$ concept $C$ is unsatisfiable \wrt $\Omc$, \ie such that for every model $\Imc$ of $\Omc$, 
$C^\Imc=\emptyset$, then for every $\semiringshort$-annotated version 
$\Omc^\semiringshort=\tup{\Omc,\lambda}$, $\ELHIbot$ concept $D$ and $\elem\in\semiringset$, it holds that 
$\Omc^\semiringshort\models (C\sqsubseteq D,\elem)$, so that 
$\Pmc(C\sqsubseteq D, \Omc^\semiringshort)=\bigoplus_{\elem\in\semiringset}\elem$. The same holds for unsatisfiable roles. 
\end{remark}
\blue{Remark~\ref{rem:unsat-left} shows that our notion of provenance may not be informative (in the sense that the provenance is not connected to the annotations of the ontology axioms that yield the consequence) in some cases, in particular when some axioms trivially hold (which are, arguably, not the kind of consequences one is  generally interested in). We   discuss in details the properties of this notion of provenance in subsequent sections but can already illustrate below that some axioms that are usually regarded as tautologies may not be entailed in annotated ontologies, hence have provenance $\zero$. 
	 }
\begin{example}\label{ex:tautologies-with-prov-zero}
\blue{Consider the empty ontology $\Omc=\emptyset$ and the Viterbi semiring $\mathbb{V}=([0,1], \max, \times, 0,1)$. We show that $\Pmc(A\sqsubseteq \top,\Omc^{\mathbb{V}})=0$ and $\Pmc(A\sqcap B\sqsubseteq A,\Omc^{\mathbb{V}})=0$ for every $B\in\NC$ (in particular, $\Pmc(A\sqcap A\sqsubseteq A,\Omc^{\mathbb{V}})=0$), \ie these tautologies have the same zero-provenance as axioms that are not entailed by $\Omc$. Indeed, the annotated interpretation $\Imc$ defined by $\Delta^\Imc=\{a\}$ and $B^\Imc=\{(a,0.5)\}$ for every $B\in\NC$ is a model of $\Omc^{\mathbb{V}}$ such that  for any $\elem\in[0,1]$, $\Imc\not\models (A\sqsubseteq\top,\elem)$ (since $\top^\Imc=\{(a,1)\}$ so that $\{(a,\elem\times 0.5)\}\not\subseteq \top^\Imc$) and  $\Imc\not\models (A\sqcap B\sqsubseteq A,\elem)$ (since $(A\sqcap B)^\Imc=\{(a,0.25)\}$ and $\{(a,\elem\times 0.25)\}\not\subseteq A^\Imc$).}
\end{example}
\blue{Note, however, that for every $\semiringshort$-annotated interpretation $\Imc$ and $\ELHIbot$ concept $C$, $\Imc\models (C\sqsubseteq C, \one)$, so that these ``most basic'' tautologies always have a non-zero provenance.}

\subsubsection{\new{Satisfiability}}
The next lemma shows how annotated and non-annotated models are related, and that satisfiability does not depend on the annotations.

\begin{definition}
A (classical) interpretation $\Imc=(\Delta^\Imc,\cdot^\Imc)$ and a $\semiringshort$-annotated interpretation 
$\Imc^\semiringshort=(\Delta^{\Imc^\semiringshort},K,\cdot^{\Imc^\semiringshort})$ 
\emph{coincide on their non-annotated part} iff (i)~$\Delta^\Imc=\Delta^{\Imc^\semiringshort}$;
(ii)~for all $a\in\NI$, $a^\Imc=a^{\Imc^\semiringshort}$; 
(iii)~for all $A\in\NC$ and $d\in\Delta^\Imc$, it holds that $d\in A^\Imc$ iff there exists $\elem\in\semiringset$ such that 
	$(d,\elem)\in A^{\Imc^\semiringshort}$; and 
(iv)~for all $R\in\NR$ and $d,e\in\Delta^\Imc$, it holds that $(d,e)\in R^\Imc$ iff there exists $\elem\in\semiringset$ such 
	that $(d,e,\elem)\in R^{\Imc^\semiringshort}$. 
\end{definition}

\begin{restatable}{lemma}{lemrelationshipannotatedstandardmodels}
\label{lem:relationship-annotated-standard-models}
Let $\Omc^\semiringshort=\tup{\Omc,\lambda}$ be a $\semiringshort$-annotated ontology. The following claims hold.
\begin{enumerate}[(i)]
\item For each model $\Imc$ of $\Omc$, there exists a model $\Imc^\semiringshort$ of $\Omc^\semiringshort$ such that $\Imc$ and $\Imc^\semiringshort$ coincide on their non-annotated~part. 
\label{claim-add-annot}
\item For each model $\Imc^\semiringshort$ of $\Omc^\semiringshort$ there exists a model $\Imc$ of $\Omc$ such that $\Imc$ and $\Imc^\semiringshort$ coincide on their non-annotated~part. 
\label{claim-remove-annot}
\end{enumerate}
\end{restatable}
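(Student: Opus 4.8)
The plan is to prove each direction by explicitly constructing the desired model, starting from the given one, and then verifying that satisfaction of every kind of axiom is preserved. For claim~\eqref{claim-add-annot}, given a classical model $\Imc=(\Delta^\Imc,\cdot^\Imc)$ of $\Omc$, I would build $\Imc^\semiringshort$ on the same domain with the same interpretation of individual names, and set $A^{\Imc^\semiringshort}:=A^\Imc\times\semiringset$ for every $A\in\NC$ and $R^{\Imc^\semiringshort}:=R^\Imc\times\semiringset$ for every $R\in\NR$. In other words, we attach \emph{all} semiring elements to every domain element (or pair) that is in the classical extension. By construction $\Imc$ and $\Imc^\semiringshort$ coincide on their non-annotated part, and $A^{\Imc^\semiringshort}$, $R^{\Imc^\semiringshort}$ are countable whenever $\semiringset$ is (which we may assume, or else just replace $\semiringset$ by a countable subsemiring large enough; but in fact the definition permits any countable set and the excerpt's third remark shows countability is not essential). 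The key observation is that for complex concepts $C$ this ``saturation'' is preserved: one shows by induction on the structure of $C$ that $C^{\Imc^\semiringshort}=C^\Imc\times\semiringset$ when $C$ is a left-hand-side concept, using that $\otimes$ has $\semiringset$ as an absorbing-under-surjection set in the sense that $\{\elema\otimes\elemb\mid \elema,\elemb\in\semiringset\}$ need not be all of $\semiringset$ --- so here I would instead carry the slightly weaker invariant ``for every $d\in C^\Imc$ and every $\elem\in\semiringset$, $(d,\elem)\in C^{\Imc^\semiringshort}$, and $C^{\Imc^\semiringshort}\subseteq C^\Imc\times\semiringset$'', which is exactly what is needed and is trivially closed under $\exists P.C$ and $C\sqcap D$ (using $\one\in\semiringset$ to get any target annotation via $\elem\otimes\one$). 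Then satisfaction of each annotated axiom $(\alpha,\lambda(\alpha))$ follows: for a GCI $(C\sqsubseteq D,\elem)$, if $(d,\elemb)\in C^{\Imc^\semiringshort}$ then $d\in C^\Imc$, hence $d\in D^\Imc$ since $\Imc\models C\sqsubseteq D$, hence $(d,\elem\otimes\elemb)\in D^{\Imc^\semiringshort}$ by saturation; RIs and assertions are analogous; negative RIs and $C\sqsubseteq\bot$ hold because emptiness of the classical extension forces emptiness of the annotated one.

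For claim~\eqref{claim-remove-annot}, the construction goes the other way: given $\Imc^\semiringshort=(\Delta^{\Imc^\semiringshort},K,\cdot^{\Imc^\semiringshort})$ a model of $\Omc^\semiringshort$, define $\Imc$ on the same domain and individuals, with $A^\Imc:=\{d\mid \exists\elem\in\semiringset,\ (d,\elem)\in A^{\Imc^\semiringshort}\}$ and $R^\Imc:=\{(d,e)\mid \exists\elem\in\semiringset,\ (d,e,\elem)\in R^{\Imc^\semiringshort}\}$, i.e.\ the projection onto the non-annotated part; coincidence is then immediate from the definition. Again one shows by induction on $C$ that $C^\Imc=\{d\mid \exists\elem,\ (d,\elem)\in C^{\Imc^\semiringshort}\}$: the cases $\top$, $\bot$, $R^-$, $\exists P.C$, $C\sqcap D$ all go through directly, the only point to watch being that $\zero$ is never an annotation of an axiom (by definition of $\lambda$) but \emph{can} appear inside $C^{\Imc^\semiringshort}$ --- that is harmless here since we only take the existential projection. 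Then for a GCI $C\sqsubseteq D\in\Omc$ with annotation $\elem=\lambda(C\sqsubseteq D)\neq\zero$: if $d\in C^\Imc$ then $(d,\elemb)\in C^{\Imc^\semiringshort}$ for some $\elemb$, so $(d,\elem\otimes\elemb)\in D^{\Imc^\semiringshort}$ since $\Imc^\semiringshort\models(C\sqsubseteq D,\elem)$, hence $d\in D^\Imc$; so $\Imc\models C\sqsubseteq D$. Positive RIs, assertions, and the $\bot$-axioms are handled the same way.

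The step I expect to be the main obstacle --- or at least the one requiring the most care --- is the structural induction establishing the ``saturation'' invariant for direction~\eqref{claim-add-annot}, because the naive statement $C^{\Imc^\semiringshort}=C^\Imc\times\semiringset$ is \emph{false} in general (the set of products $\elema\otimes\elemb$ need not be all of $\semiringset$, e.g.\ in the access control semiring $\otimes=\max$ so products never fall below the larger of the two factors). The fix, as indicated above, is to carry the one-sided invariant ``$C^\Imc\times\semiringset\subseteq C^{\Imc^\semiringshort}$ and $C^{\Imc^\semiringshort}\subseteq C^\Imc\times\semiringset$'' where the first inclusion is what powers the GCI/RI satisfaction proofs and is preserved inductively precisely because $\one$ is a multiplicative identity (so from $(e,\elemb)\in C^{\Imc^\semiringshort}$ for \emph{all} $\elemb$ we can hit any desired product by choosing the other factor to be $\one$, or by choosing $\elemb$ itself to be the target). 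A secondary, purely bookkeeping point is to note that satisfiability is the same for $\Omc$ and $\Omc^\semiringshort$, which is an immediate corollary: a model of one yields a model of the other on the same domain. I would state and use the two structural-induction facts as small sublemmas to keep the proof readable, and relegate the routine case checks to the appendix as the authors do elsewhere.
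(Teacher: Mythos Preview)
Your approach for direction~\eqref{claim-remove-annot} is the same as the paper's: project out the annotations and verify axiom-by-axiom.

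For direction~\eqref{claim-add-annot} your route is genuinely different. The paper does \emph{not} saturate with all of $K$; instead it seeds $A^{\Imc^\semiringshort_0}$ with $\{(d,\zero)\mid d\in A^\Imc\}$ together with the pairs coming from the annotated assertions, and then closes iteratively under three rules (one per shape of GCI/RI), showing at each step that the non-annotated projection stays inside $\Imc$ and that all annotated axioms become satisfied in the limit. Your ``attach every $\elem\in K$'' construction is shorter and avoids the fixpoint argument, and the verification that each annotated axiom is satisfied is indeed immediate once the invariant $C^{\Imc^\semiringshort}=C^\Imc\times K$ is in place.

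Two corrections to your write-up. First, your worry that $\{\elema\otimes\elemb\mid\elema,\elemb\in K\}$ might be a proper subset of $K$ is unfounded: every $\elem\in K$ equals $\elem\otimes\one$, so the set of products is all of $K$ in \emph{any} semiring (your access-control example is wrong: $\max(\elem,P)=\elem$ since $P$ is the $\otimes$-identity). Hence the equality $C^{\Imc^\semiringshort}=C^\Imc\times K$ holds outright by a clean structural induction, and your ``weaker'' two-inclusion invariant is just equality restated. Second, the countability issue is real: the paper's definition requires $A^{\Imc^\semiringshort}$ to be countable, and $A^\Imc\times K$ is uncountable whenever $K$ is (fuzzy, tropical, Viterbi, \L ukasiewicz are all uncountable). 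Your fix via the countable subsemiring generated by $\{\zero,\one\}\cup\lambda(\Omc)$ does work and is the right move; the paper's iterative construction sidesteps this by only ever producing annotations that are finite $\otimes$-products of the seed values, so countability comes for free.
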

\begin{proof}
\new{(i) Given a model $\Imc$ of $\Omc$, we obtain a model $\Imc^\semiringshort$ of $\Omc^\semiringshort$ as follows: $a^{\Imc^\semiringshort}=a^\Imc$ for every $a\in\NI$, $A^{\Imc^\semiringshort}=A^\Imc\times\semiringset$ for every $A\in\NC$ and $R^{\Imc^\semiringshort}=R^\Imc\times\semiringset$  for every $R\in\NR$. 
\noindent (ii) Given a model $\Imc^\semiringshort$ of $\Omc^\semiringshort$, we obtain a model $\Imc$ of $\Omc$ by dropping the annotations from $\Imc^\semiringshort$.}
\end{proof}

\subsubsection{\new{Provenance Design Choices}}\label{sec:designchoices}
\new{One could think of many other ways of defining provenance for description logics. We explain the choices we made here.}
\new{\paragraph{Use of annotated models} The most established notion of provenance for Datalog queries is based on proof trees \cite{Green07-provenance-seminal,DBLP:conf/icdt/DeutchMRT14}, and one could try to extend such definition to DLs instead of proposing a definition based on annotated models. 
Defining proof trees in DL in a way that leads to a reasonable notion of provenance is however far from being straightforward. In particular, the set of \emph{tree proofs for $\Omc\models\alpha$} defined by \citet{DBLP:conf/lpar/AlrabbaaBBKK20,DBLP:conf/cade/AlrabbaaBBKK21} contains all trees consisting of a single hyperedge $(\Omc',\alpha)$ with $\Omc'\subseteq\Omc$ such that $\Omc'\models\alpha$, hence using this set to define provenance would \blue{go against our basic requirement that ``irrelevant'' axioms should not influence the provenance (except in the case of }
\timesidem and absorptive semirings, for which it will amount to consider only minimal such $\Omc'$). To obtain meaningful tree proofs, \citet{DBLP:conf/lpar/AlrabbaaBBKK20,DBLP:conf/cade/AlrabbaaBBKK21} thus consider only those that can be found in a \emph{derivation structure produced by a deriver}, which defines the class of allowed inference steps. 
However, a semantics for annotated DLs based on such tree proofs would depend on the choice of the deriver, which does not seem a proper definition for a \emph{semantics} and is not in line with the way semantics of existing annotated DLs (such as fuzzy DLs) are defined.
} 
\new{\paragraph{Definition of annotated models} One could think of many options for the definition of annotated interpretations.} 
First, one could interpret complex concepts by, e.g., tuples of the form $(d,\bigoplus_{(d,\elem)\in C^\Imc, (d,\elem')\in D^\Imc}\elem\otimes\elem')$ for $(C\sqcap D)^\Imc$ (\ie using the ``relational database provenance'' of the query $q_{C\sqcap D}(x)=C(x)\wedge D(x)$ in $\Imc$ to obtain the annotations).  However, \blue{``non-minimal models''} 
would then make most entailments of annotated assertions fail. 
\blue{For example, consider $\Omc^\semiringshort=\{(A(a),\elema), (B(a),\elemb), (A\sqcap B\sqsubseteq C,\one)\}$. Both $\Imc$ defined by $A^\Imc=\{(a^\Imc,\elema)\}$, $B^\Imc=\{(a^\Imc,\elemb)\}$ and $C^\Imc=\{(a^\Imc,\elema\otimes\elemb)\}$ and $\Jmc$ defined by $A^\Jmc=\{(a^\Jmc,\elema),(a^\Jmc,\epsilon)\}$, $B^\Jmc=\{(a^\Jmc,\elemb)\}$ and $C^\Jmc=\{(a^\Jmc,\elema\otimes\elemb\oplus\epsilon\otimes\elemb)\}$ would be models of $\Omc^\semiringshort$ under this definition. Hence, if $\elema\otimes\elemb\oplus\epsilon\otimes\elemb\neq \elema\otimes\elemb$, $\Omc^\semiringshort$ would not entail $C(a)$ annotated with any semiring element, which goes against our basic requirement that entailment of assertions does not depend on the annotations.}

\new{A workaround to avoid this problem while using a single semiring annotation per tuple of domain elements 
has been considered in the context of Datalog queries, in which ~\citet{DBLP:conf/kr/BourgauxBPT22} proposed two alternative ways of 
defining provenance semantics based on annotated models. One (called \emph{set-annotated model-based semantic}) is 
similar to the one we choose here (\cf Lemma~\ref{th:sem-cons-datalog-SAM}) while the other (called \emph{annotated 
model-based semantics}) is defined on a restricted class of semirings where every set of semiring elements has a \emph{greatest lower bound}. This second semantics requires that each tuple in an interpretation is annotated by a unique semiring element, with the condition that, e.g., $\Imc\models (A(a),\elem)$ if $(a^\Imc,\chi)\in A^\Imc$ for some $\chi\geq \elem$ and $\Imc\models (C\sqsubseteq D,\elem)$ if $(d,\elem')\in C^\Imc$ implies $(d,\chi)\in D^\Imc$ for some $\chi\geq \elem\otimes\elem'$. The provenance of $\alpha$ is then defined as the infimum of $\elem$ such that $\Imc\models (\alpha,\elem)$ over all models $\Imc$ of $\Omc^\semiringshort$. However, this definition is more complex and it has been shown for the case of Datalog that both semantics coincide when $\oplus$ is idempotent and suffer from undesirable behaviors when $\oplus$ is not idempotent.}

\new{We shall indeed see that our semantics may have some counter-intuitive behaviors when $\oplus$ is not idempotent. 
This is due to the use of sets of annotated tuples to interpret concepts and roles, which does not allow us to account for 
the fact that the same semiring element can be used multiple times: e.g., if 
$\Omc^\semiringshort=\{(R(a,b),\elem),(R(a,c),\elem),(\exists R.\top\sqsubseteq A,\one)\}$, then 
$\Pmc(A(a),\Omc^\semiringshort)=\elem$ while we could arguably expect it to be $\elem\oplus\elem$ to take into account 
both assertions, \blue{since the entailment of $A(a)$ from $\Omc$ is equivalent to $a$ being an answer to the query $A(x) \vee \exists y\, R(x,y)$ evaluated over the assertions of $\Omc$ and the database provenance of this query is $\elem\oplus\elem$} (we will come back to this question in more details in Section~\ref{sec:expected-properties}). 
A natural way to obtain the desired provenance in this specific example would be to use \emph{multisets} instead of sets to interpret concepts and roles (so that $(\exists R.\top)^\Imc$ would be defined by the multiset $\{(d,\chi)\mid (d,e,\chi)\in R^\Imc\}$ and  contain $(a^\Imc,\elem)$ at least twice in a model of $\Omc^\semiringshort$), and set conditions such as $\Imc\models (C\sqsubseteq D,\elem)$ means that if $(d,\elem')$ occurs $k$ times in $C^\Imc$, then $(d,\elem\otimes\elem')$ must occur at least $k$ times in $D^\Imc$. 
\blue{However, using multisets would not solve this issue in all cases: if $\Omc^\semiringshort=\{(B_1(a),\elem),(B_2(a),\elem), (B_1\sqsubseteq C,\one), (B_2\sqsubseteq C,\one)\}$, one would obtain a provenance of $\elem$ for $C(a)$ while the database provenance of the query $C(a)\vee B_1(a)\vee B_2(a)$ over the assertions of $\Omc^\semiringshort$ is $\elem\oplus\elem$. To take into account the two assertions, we would need to require that $(B_1\sqsubseteq C,\one)$ and $(B_2\sqsubseteq C,\one)$ are satisfied by $\Imc$ iff $C^\Imc$ contains the (multiset) union of $B_1^\Imc$ and $B_2^\Imc$. However, this goes against our basic requirement that the satisfaction of an axiom by an interpretation is independent from potential other axioms.} We thus choose to stick with sets which allow for simpler definitions and a homogeneous treatment of the above examples.
}

Finally, one could require that annotations in interpretations are closed under $\oplus$, \eg that $(c,\elem)\in A^\Imc$ and $(c,\elem')\in A^\Imc$ implies that $(c,\elem\oplus\elem')\in A^\Imc$. However, \blue{this would cause problems when $\semiringshort$ is not \plusidem (\eg $\Omc^{\mathbb{N}^\infty}=\{(A_1(a),1), (A_2(a),2), (A_1\sqsubseteq A,1), (A_2\sqsubseteq A,1)\}$ would entail $(A(a),n)$ for every $n\in\mathbb{N}$, so that the provenance of $A(a)$ \wrt $\Omc^{\mathbb{N}^\infty}$ would be $\infty$), and} this would lead to arguably undesirable entailments \blue{(or lack of desirable entailments, depending on how the semantics of complex concepts is defined)} 
even if $\oplus$ is required to be idempotent\blue{, as illustrated in Example~\ref{ex:interpretationclosedunderplus}}. 
\begin{example}\label{ex:interpretationclosedunderplus}
\blue{Consider the following \why-annotated ontology
\begin{align*}
\Omc^{\why}=\{ & (A_1(a),x_1), (A_2(a),x_2), (A_1\sqsubseteq A,y_1), (A_2\sqsubseteq A,y_2), (A\sqsubseteq \exists R,y),\\
	&(\exists R^-.\top\sqsubseteq B_1,z),  (\exists R^-.\top\sqsubseteq B_2,u), (B_1\sqcap B_2\sqsubseteq C,v), (\exists R.C\sqsubseteq D,w)\}.
\end{align*}
Let us first show that under our semantics, 
$$\Pmc(D(a),\Omc^{\why})=x_1y_1yzuvw+x_2y_2yzuvw=(x_1y_1+x_2y_2)yzuvw,$$ 
witnessing as expected (in line with the intuition that in the why-provenance, monomials correspond to ``witnesses'' for a consequence) the two possible ways to obtain $D(a)$, using either $A_1(a)$ and $A_1\sqsubseteq A$ or $A_2(a)$ and $A_2\sqsubseteq A$, together with all the other GCIs. Recall that in \why, monomials represent subsets of $\semiringVars$ so we can drop the exponents in monomials.} 
\blue{Let $\Imc$ be a model of $\Omc^{\why}$. For $i\in\{1,2\}$, we have that $(a^\Imc,x_i)\in A_i^\Imc$ and $(a^\Imc,x_iy_i)\in A^\Imc$. Since $\Imc\models  (A\sqsubseteq \exists R,y)$, there must exist $e_i\in\Delta^\Imc$ such that $(a^\Imc, e_i, x_iy_iy)\in R^\Imc$. Hence $(e_i, x_iy_iy)\in (\exists R^-.\top)^\Imc$ so $(e_i, x_iy_iyz)\in B_1^\Imc$ and $(e_i, x_iy_iyu)\in B_2^\Imc$, so that $(e_i, x_iy_iyzu)\in (B_1\sqcap B_2)^\Imc$. It follows that $(e_i, x_iy_iyzuv)\in C^\Imc$. Hence 
$(a^\Imc, x_iy_iyzuv)\in (\exists R.C)^\Imc$ so $(a^\Imc, x_iy_iyzuvw)\in D^\Imc$. We thus have $\Omc^{\why}\models (D(a), x_iy_iyzuvw)$ for $i\in\{1,2\}$. Moreover, the following annotated interpretation $\Jmc$ shows that there is no other element $\elem\in\why$ such that $\Omc^{\why}\models (D(a), \elem)$.
\begin{align*}
A_1^\Jmc=&\{(a^\Jmc,x_1)\} && B_1^\Jmc=\{(e_1,x_1y_1yz),(e_2,x_2y_2yz)\}
\\
 A_2^\Jmc=&\{(a^\Jmc,x_2)\} && B_2^\Jmc=\{(e_1,x_1y_1yu),(e_2,x_2y_2yu)\}
\\
A^\Jmc=&\{(a^\Jmc,x_1y_1),(a^\Jmc,x_2y_2)\} && C^\Jmc=\{(e_1,x_1y_1yzuv),(e_2,x_2y_2yzuv)\}
\\
R^\Jmc=&\{(a^\Jmc,e_1,x_1y_1y),(a^\Jmc,e_2,x_2y_2y) \}&&D^\Jmc=\{(a^\Jmc,x_1y_1yzuvw),(a^\Jmc,x_2y_2yzuvw)\}
\end{align*}
Now, if we require that interpretations are closed under $+$, we would obtain that for every annotated model $\Imc$ of $\Omc^{\why}$, $(a^\Imc,x_1y_1)\in A^\Imc$, $(a^\Imc,x_2y_2)\in A^\Imc$, and $(a^\Imc,x_1y_1+x_2y_2)\in A^\Imc$. 
If we keep the definitions of complex concepts interpretations $(\exists P.E)^\Imc$ and $(E\sqcap F)^\Imc$ as they are, by applying the same steps as above to $(a^\Imc,x_1y_1+x_2y_2)\in A^\Imc$, we additionally obtain that there exists $(a^\Imc,e, (x_1y_1+x_2y_2)y)\in R^\Imc$, $(e,(x_1y_1+x_2y_2)yz)\in B_1^\Imc$ and $(e, (x_1y_1+x_2y_2)yu)\in B_2^\Imc$.  Thus, $(e, (x_1y_1+x_2y_2)yz\times (x_1y_1+x_2y_2)yu)\in (B_1\sqcap B_2)^\Imc$, \ie $(e, (x_1y_1+x_2y_2+x_1y_1x_2y_2)yzu)\in (B_1\sqcap B_2)^\Imc$, and $(e, (x_1y_1+x_2y_2+x_1y_1x_2y_2)yzuv)\in C^\Imc$. Thus, $(a^\Imc, (x_1y_1+x_2y_2)y\times(x_1y_1+x_2y_2+x_1y_1x_2y_2)yzuv)\in(\exists R.C)^\Imc$, \ie $(a^\Imc,(x_1y_1+x_2y_2+x_1y_1x_2y_2)yzuv)\in (\exists R.C)^\Imc$, which yields $(a^\Imc,(x_1y_1+x_2y_2+x_1y_1x_2y_2)yzuvw)\in D^\Imc$. Hence the provenance of $D(a)$ \wrt $\Omc^{\why}$ would be $$x_1y_1yzuvw+x_2y_2yzuvw+(x_1y_1+x_2y_2+x_1y_1x_2y_2)yzuvw=x_1y_1yzuvw+x_2y_2yzuvw+x_1y_1x_2y_2yzuvw.$$ 
The monomial $x_1y_1x_2y_2yzuvw$ is not in line with the notion of witness used in the why-provenance: intuitively, $D(a)$ only follows from the fact that $a$ has an $R$-successor, which has two independent causes, $A_1(a)$ and $A_2(a)$, so there is no way to use both assertions in a witness for the entailment of $D(a)$.
}

\blue{To avoid the entailment of $(D(a),x_1y_1x_2y_2yzuvw)$, we can alternatively define the interpretation of the complex concept $\exists P.E$ as the \emph{closure by $\oplus$} of $\{(d,\elem\otimes\elem')\mid \exists e\in\Delta^\Imc\text{ s.t. }(d,e,\elem)\in P^\Imc, (e,\elem')\in E^\Imc\}$, and similarly for conjunction, so that the interpretation $\Jmc'$ defined from the above $\Jmc$ by adding $(a^\Jmc, x_1y_1+x_2y_2)$ to $A^\Jmc$ and $(a^\Jmc, x_1y_1yzuvw+x_2y_2yzuvw)$ to $D^\Jmc$ would be a model of $\Omc^{\why}$ (since, in particular, we would have $(\exists R)^{\Jmc'}=\{(a^\Jmc,x_1y_1y),(a^\Jmc,x_2y_2y), (a^\Jmc,x_1y_1y+x_2y_2y)\}$ thanks to the closure of $(\exists R)^{\Jmc'}$ by $+$). However, we next show that this definition would go against our basic requirement that assertion entailment does not depend on the annotations.} 
\blue{Indeed, consider the following \why-annotated ontology
\begin{align*}
\Omc^{\why}=\{ & (A(a),x+y), (B(a),z), (A\sqsubseteq \exists R,1),
	(\exists R^-.B\sqsubseteq C,1), (\exists R.C\sqsubseteq D,1)\}.
\end{align*}
Clearly, $\Omc\models D(a)$. However, if we consider annotated interpretations closed under $\oplus$ such that the interpretations of complex concepts are also defined with the closure operation, $\Jmc$ and $\Hmc$ below would both be models of $\Omc^{\why}$, so there would be no $\elem\in\why$ such that $(a^\Imc,\elem)\in D^\Imc$ for every model $\Imc$ of $\Omc^{\why}$. 
\begin{align*}
A^\Jmc=&\{(a^\Jmc, x+y)\}&& A^\Hmc=\{(a^\Hmc, x+y)\}
\\
B^\Jmc=&\{(a^\Jmc, z)\} &&B^\Hmc=\{(a^\Hmc, z)\} 
\\
R^\Jmc=&\{(a^\Jmc, e, x+y)\}&&R^\Hmc=\{(a^\Hmc, e_1, x),(a^\Hmc, e_2, y)\}
\\ 
C^\Jmc=&\{(e, (x+y)z)\}&&C^\Jmc=\{(e_1, xz),(e_2, yz)\}
\\
D^\Jmc=&\{(a^\Jmc, (x+y+xy)z)\}&&D^\Hmc=\{(a^\Hmc, xz),(a^\Hmc, yz),(a^\Hmc, (x+y)z)\}
\end{align*}}
\end{example}

\subsubsection{\new{Choice of DL Language}}\label{sec:syntacticRestrictions}

\new{We explain here the syntactic restrictions we impose on $\ELHIbot$ and the difficulties of extending the semantics to \blue{more expressive languages}.}

\paragraph{Syntactic restrictions}
We \new{start with} the restrictions on the form of the right-hand side of the $\ELHIbot$ GCIs \new{(namely, GCIs have right-hand side $D::= A\mid\exists P.\top\mid \bot$, forbidding to use a conjunction or qualified role existential restriction contrary to what is usually allowed in the \EL family)}. 
Example~\ref{ex:conj-right} illustrates \new{a counter-intuitive behavior} when a conjunction occurs on the right-hand side of a GCI, \new{if we extend the semantics to GCI with conjunctions in the right-hand side as expected: $\Imc\models (C\sqsubseteq C_1\sqcap C_2,\elem)$ if $(d,\elemb)\in C^\Imc$ implies  $(d,\elem\otimes \elemb)\in (C_1\sqcap C_2)^\Imc$.} 
 Qualified role restrictions lead to the same kind of behavior as they can be seen as a kind of implicit conjunction. 
 \begin{example}\label{ex:conj-right}
 \new{Consider $\Omc^{\semiringshort}=\{(A\sqsubseteq B\sqcap C,\elema), (A(a),\elemb)\}$ where $\semiringshort$ is a commutative \mbox{\timesidem} semiring}. The following interpretations 
which interpret  $a$ by itself are models of~$\Omc^{\semiringshort}$:
\begin{equation*}
\begin{array}{l@{\ }l@{\ }l}
A^{\Imc_1}=\{(a,\elemb)\}, &\quad B^{\Imc_1}=\{(a,\elema\otimes \elemb)\}, \quad & C^{\Imc_1}=\{(a,\elema\otimes \elemb)\}\\
A^{\Imc_2}=\{(a,\elemb)\}, &\quad  B^{\Imc_2}=\{(a,\elema)\}, \quad & C^{\Imc_2}=\{(a,\elemb)\}
\\
A^{\Imc_3}=\{(a,\elemb)\}, &\quad  B^{\Imc_3}=\{(a,\one)\}, \quad & C^{\Imc_3}=\{(a,\elema\otimes \elemb)\}
\end{array}
\end{equation*}
Since the semantics does not provide a unique way to ``split'' the semiring element $\elema\otimes \elemb$ between the two elements of the conjunction, if $\elema\neq\one$ or $\elemb\neq\one$, then $\Omc^{\semiringshort}\not\models (B(a), \elemc)$ for any $\elemc\in\semiringset$, 
\blue{which goes against our basic requirement that assertion entailment should not depend on the annotations. }
\end{example}
\new{A similar assumption has been made in previous work on provenance for knowledge bases. In particular, the most similar works considered either DL-Lite, which does not allow for conjunction or qualified role restriction \cite{provenance-DLLite}, or $\ELHr$ with the same syntactic restriction \cite{provenance-ELHr}.  In the same way,~\citet{DBLP:conf/kr/BourgauxBPT22} defined several notions of provenance for Datalog assuming that rules are normalized to have only one atom in the head. Dealing with conjunction in the head has also proven difficult in other contexts related to annotated databases or knowledge bases. For example,~\citet{DBLP:conf/lics/HernichK17} defined two bag semantics in the context of data exchange and proved that in the case where the mappings do not have existentially quantified variables in the head, query answering under both semantics is in \PTime \wrt data complexity when mappings have only one atom in the head and becomes \coNP-complete when two atoms are allowed. \citet{DBLP:journals/ai/NikolaouKKKGH19} defined bag semantics for ontology-based data access, where the assertions of a DL-Lite ontology are created via mappings from a database, only for the case where the mappings have one atom in the head.
Moreover, the decision algorithm for fuzzy \EL by~\citet{BoPe-DL13} is only correct if conjunctions on the right are disallowed and the undecidability results for the cases where $\otimes$ is not
idempotent \cite{BoDP-AIJ15} rely heavily on the right-hand side conjunctions and existential restrictions. 
Finally, in the context of explanations for DLs,~\citet{DBLP:journals/ai/PenalozaS17} showed that even if existential
restrictions are disallowed, conjunctions on the right-hand side of GCIs increase the complexity of computing justifications.
}

One could argue that it would be better to define the semantics so that only $\Imc_1$ was a model of $\Omc^{\semiringshort}$ in Example~\ref{ex:conj-right}, instead of restricting the language as we do. We explain next why this is not so simple. 

One possibility is to change the definition of satisfaction of a GCI by an interpretation such that $\Imc\models (A\sqsubseteq B\sqcap C,\elem)$ iff for every $(d,\elemb)\in A^\Imc$, then $(d,\elem\otimes\elemb)\in B^\Imc$ and $(d,\elem\otimes\elemb)\in C^\Imc$, and similarly for qualified role restrictions. 
This approach leads to a counter-intuitive behavior. For instance if $\Omc^{\semiringshort}=\emptyset$, then $\Omc^{\semiringshort}\not\models (A\sqcap B\sqsubseteq A\sqcap B, \one)$, since there is a model $\Imc$ of $\Omc^{\semiringshort}$ such that $A^\Imc=\{(a,\elema)\}$ and $B^\Imc=\{(a,\elemb)\}$, so that $(A\sqcap B)^\Imc=\{(a, \elema\otimes\elemb)\}$, and $(a, \elema\otimes\elemb)\notin A^\Imc$, $(a, \elema\otimes\elemb)\notin B^\Imc$. 
In contrast, our definition of satisfaction ensures that for every interpretation $\Imc$ and concept $C$, $\Imc\models (C\sqsubseteq C, \one)$. 

Another possibility is to modify the interpretation of conjunctions and qualified role restrictions such that
\begin{align*}
(C\sqcap D)^\Imc & {} =\{(d,\elem)\mid (d,\elem)\in C^\Imc, (d,\elem)\in D^\Imc\} \\ 
\text{and }(\exists R.C)^\Imc & {} = \{(d,\elem)\mid \exists e\in\Delta^\Imc    \text{ s.t. }(d,e,\elem)\in R^\Imc, (e, \elem)\in C^\Imc\}. 
\end{align*}
However, this would lead to the loss of many natural and wanted entailments, for example 
\blue{\[\{(A(a),\elema), (B(a),\elemb), (A\sqcap B\sqsubseteq C,\one)\}\models (C(a), \chi)\] would not hold for any $\chi\in\semiringset$, going against our requirement that assertion entailment does not depend on the annotations.}

Hence, restricting the syntax to prevent conjunctions on the right and defining the semantics as usual in DLs seems to be the most natural way of handling provenance in DL languages with conjunction. 
\blue{Compared to \EL ontologies in normal form,} the main restriction in our language is the avoidance of qualified existential restrictions on the right-hand side.

\paragraph{\blue{Semantics of other DLs}} 
\new{It is not clear how to extend the semantics to a DL language featuring disjunction ($C \sqcup D$) and negation ($\neg C$) such as $\mathcal{ALC}$. Indeed, while it seems natural to define $(C \sqcup D)^\Imc=C^\Imc\cup D^\Imc$, so that, e.g., $\{(C\sqcup D\sqsubseteq A, \elem)\}$ is equivalent to $\{(C\sqsubseteq A, \elem),(D\sqsubseteq A, \elem)\}$, and $\Imc\models (A\sqsubseteq C \sqcup D, \elem)$ iff $(e,\elem')\in A^\Imc$ implies that either $(e,\elem\otimes\elem')\in C^\Imc$ or $(e,\elem\otimes\elem')\in D^\Imc$, it is unclear how to define $(\neg C)^\Imc$. In particular, we would like that \blue{$(\neg(A\sqcap B))^\Imc=(\neg A\sqcup \neg B)^\Imc$, or that} $(A\sqsubseteq C \sqcup D, \elem)$ is equivalent to $(A\sqcap \neg C\sqsubseteq D,\elem)$, which would not hold if we define $(\neg C)^\Imc$ by $(\Delta^\Imc\times\semiringset)\setminus C^\Imc$. \blue{Note that already in the case of databases, defining provenance for queries with negation has proven challenging (\cf discussion in Section~\ref{sec:future-work}). Regarding universal role restriction ($\forall P.C$), $\ELHIbot$ already allows us to express $A\sqsubseteq \forall R.B$ as $\exists R^-.A\sqsubseteq B$ but it is unclear how to define $(\forall P.C)^\Imc$.}}

\subsection{Querying Annotated Ontologies}

	To define the provenance of (Boolean) CQs, we consider \emph{extended CQs}, which replace concept and role predicates with binary and ternary predicates respectively,  where the last term of the tuple is used for provenance information. 
Given a CQ $q(\vec{x})=\exists \vec{y}\, \phi(\vec{x},\vec{y})$, its \emph{extended version} is $\ext{q}(\vec{x})=\exists \vec{y}\exists\new{\vec{z}}\, \phi'(\vec{x},\vec{y},\new{\vec{z}})$ where $\new{\vec{z}}$ is a tuple of variables disjoint from $\vec{x}\cup\vec{y}$ and $\phi'(\vec{x},\vec{y},\new{\vec{z}})$ is obtained from $\phi(\vec{x},\vec{y})$ by replacing each $A(\new{t})$ by $A(\new{t},\new{z})$ and each $R(\new{t,t'})$ by $R(\new{t,t',z})$ with $\new{z\in\vec{z}}$ such that $\new{z}$ does not occur anywhere else in $\phi'(\vec{x},\vec{y},\new{\vec{z}})$. 
We use $P(\new{\vec{t},t})$ to refer to an atom which is either $A(\new{t_1,t})$
or $R(\new{t_1,t_2,t})$, and $P(\new{\vec{t},t})\in \ext{q}$ if $P(\new{\vec{t},t})$ occurs in~$\ext{q}$. 
A $\semiringshort$-annotated interpretation $\Imc$ satisfies an extended BCQ $\ext{q}:=\exists \vec{y}\exists \vec{z}\, \phi(\vec{y},\vec{z})$, written $\Imc\models \ext{q}$, iff there is a \emph{match for $\ext{q}$ in $\Imc$}, where a match for $\ext{q}$ in $\Imc$ is a function $\pi:\mn{terms}(\ext{q}) \rightarrow \Delta^\Imc\cup \semiringset$ such that $\pi(t)=t^\Imc$ for every $t \in \NI\cap\mn{terms}(\ext{q})$, and $\pi(\vec{t},t)\in P^\Imc$ for every $P(\vec{t},t)\in \mn{atoms}(\ext{q})$, where $\pi(\vec{t},t)$ is a shorthand for $(\pi(t_1),\pi(t))$ or $(\pi(t_1),\pi(t_2),\pi(t))$ depending on the arity of $P$. 
An extended BCQ $\ext{q}$ is \emph{entailed} by an annotated ontology $\Omc^\semiringshort$, written $\Omc^\semiringshort\models \ext{q}$, if and only if $\Imc\models \ext{q}$ for every model $\Imc$ of $\Omc^\semiringshort$. 
The following proposition is a direct consequence of the definition of (extended) BCQ entailment and Lemma \ref{lem:relationship-annotated-standard-models}.
\begin{proposition}\label{prop:cq-entailment-annotated-non-annotated}
	For every annotated ontology $\Omc^\semiringshort$ and BCQ $q$, $\Omc\models q$ iff $\Omc^\semiringshort\models \ext{q}$ where $\ext{q}$ is the extended version of $q$.
\end{proposition}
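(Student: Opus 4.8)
The plan is to prove the two directions separately, and in each case to use the appropriate half of Lemma~\ref{lem:relationship-annotated-standard-models} to pass between a classical model of $\Omc$ and an annotated model of $\Omc^\semiringshort$ that agrees with it on the non-annotated part, then transfer a match for $q$ to a match for $\ext q$ (or conversely) by reading off or forgetting the semiring components of tuples.

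For the direction from left to right, I would fix an arbitrary model $\Imc^\semiringshort$ of $\Omc^\semiringshort$ and, by Lemma~\ref{lem:relationship-annotated-standard-models}(\ref{claim-remove-annot}), obtain a model $\Imc$ of $\Omc$ coinciding with $\Imc^\semiringshort$ on the non-annotated part. From $\Omc\models q$ we get $\Imc\models q$, i.e.\ a match $\pi:\mn{terms}(q)\to\Delta^\Imc$ for $q$ in $\Imc$. I then build a match $\pi'$ for $\ext q$ in $\Imc^\semiringshort$ by letting $\pi'$ agree with $\pi$ on $\mn{terms}(q)$ and, for each fresh provenance variable $t\in\vec t$, exploiting the defining property of the extended version that $t$ occurs in exactly one atom $P(\vec x,t)$ of $\ext q$: since $\pi$ witnesses the corresponding atom $P(\vec x)$ of $q$, clauses (iii)--(iv) of ``coinciding on the non-annotated part'' yield some $\elem\in\semiringset$ such that $\pi(\vec x)$ together with $\elem$ lies in $P^{\Imc^\semiringshort}$, and I set $\pi'(t):=\elem$. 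This $\pi'$ is a match for $\ext q$ in $\Imc^\semiringshort$, so $\Imc^\semiringshort\models\ext q$; as $\Imc^\semiringshort$ was arbitrary, $\Omc^\semiringshort\models\ext q$. The converse is symmetric: fix an arbitrary model $\Imc$ of $\Omc$, apply Lemma~\ref{lem:relationship-annotated-standard-models}(\ref{claim-add-annot}) to get a model $\Imc^\semiringshort$ of $\Omc^\semiringshort$ coinciding with it on the non-annotated part, obtain a match $\pi'$ for $\ext q$ in $\Imc^\semiringshort$ from $\Omc^\semiringshort\models\ext q$, and take $\pi$ to be the restriction of $\pi'$ to $\mn{terms}(q)$; for each atom $P(\vec x)$ of $q$ with associated provenance variable $t$, $\pi'$ gives $\pi(\vec x)$ paired with $\pi'(t)$ in $P^{\Imc^\semiringshort}$, so by (iii)--(iv) we get $\pi(\vec x)\in P^{\Imc}$, whence $\pi$ is a match for $q$ in $\Imc$ and $\Imc\models q$; since $\Imc$ was arbitrary, $\Omc\models q$.

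The only genuinely delicate point is the bookkeeping about the variables $\vec t$: one must invoke the defining property of the extended version---each $t\in\vec t$ labels a single atom and occurs nowhere else---so that the assignment $\pi'(t):=\elem$ is well defined, and one must observe that the $\vec t$ are fresh variables rather than individual names, so that the requirement $\pi'(t)=t^\Imc$ for $t\in\NI$ imposes no constraint on them. Everything else is a routine unfolding of the definitions of match, of (extended) BCQ entailment, and of coincidence on the non-annotated part.
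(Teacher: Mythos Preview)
Your proof is correct and is precisely the natural unfolding of what the paper intends: the paper does not spell out a proof but simply states that the proposition is ``a direct consequence of the definition of (extended) BCQ entailment and Lemma~\ref{lem:relationship-annotated-standard-models}'', and your argument is exactly that consequence made explicit, using parts~(\ref{claim-remove-annot}) and~(\ref{claim-add-annot}) of the lemma for the two directions respectively and transferring matches by adding or dropping the semiring components.
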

Given an extended BCQ $\ext{q}$ and a $\semiringshort$-annotated interpretation \Imc, let $\nu_\Imc(\ext{q})$ denote
the set of all matches of $\ext{q}$ in \Imc. 
The \emph{provenance annotations} of $\ext{q}$ in \Imc is defined as the set 
$\p{\Imc}{\ext{q}}=\left\{\bigotimes_{P(\vec{t},t)\in \ext{q}} \pi(t) \mid \pi\in\nu_\Imc(\ext{q})\right\}$ of elements of $\semiringshort$,  where $\pi(t)$ is the last element of the tuple $\pi(\vec{t},t)\in P^\Imc$ \new{(which is a semiring element)}. 
\new{We write $(q,\elem)$ to denote an \emph{annotated BCQ}.
The semantics of annotated BCQs is as expected:
$\Imc\models (q,\elem)$ if $\elem\in \p{\Imc}{\ext{q}}$  where $\ext{q}$ is the extended version of $q$, and $\Omc^\semiringshort\models (q,\elem)$
if  $\Imc \models\Omc^\semiringshort $ implies $\Imc\models (q,\elem)$, for all $\semiringshort$\mbox{-}annotated interpretations \Imc. If $\semiringshort$ is complete (or $\semiringshort$ is $\omega$-complete and $\semiringset$ is countable), we define the \emph{provenance} of $q$ \wrt $\Omc^\semiringshort$ as 
\begin{align*}
	\Pmc(q, \Omc^\semiringshort) := \bigoplus_{ \Omc^{\semiringshort} \models(q,\elem)} \elem.
\end{align*}

\blue{Analogous to Remark~\ref{rem:unsat-onto},  $\Pmc(q, \Omc^\semiringshort) = \bigoplus_{\elem\in\bigcap_{\Imc\models \Omc^{\semiringshort}}\p{\Imc}{\ext{q}}} \elem$ and   when $\Omc^\semiringshort$ is unsatisfiable, $\Pmc(q, \Omc^\semiringshort) = \bigoplus_{ \elem\in\semiringset} \elem$.}

\begin{example}[Example \ref{ex:running-onto} continued]
Consider the extended version of the query that asks for the deities having some parent, $\ext{q}(x)=\exists yt_1t_2\, \mn{Deity}(x,t_1)\wedge \mn{parent}(x,y,t_2)$. 

We saw in Example \ref{ex:running-onto} that all models $\Imc$ of $\Omc^{\why}$ satisfy
$(\mn{Dionysus}^{\Imc},x_1)\in \mn{Deity}^{\Imc}$\negmedspace, 
$(\mn{Dionysus}^{\Imc},x_3x_4y_1y_2)\in \mn{Deity}^{\Imc}$\negmedspace, and 
$(\mn{Dionysus}^{\Imc},x_5x_6y_1y_3)\in \mn{Deity}^{\Imc}$\negmedspace. In addition, it holds
that 
$(\mn{Dionysus}^{\Imc},\mn{Semele}^{\Imc},x_2y_2)\in \mn{parent}^{\Imc}$, $(\mn{Dionysus}^{\Imc},\mn{Demeter}^{\Imc},x_3y_2)\in \mn{parent}^{\Imc}$, and
$(\mn{Dionysus}^{\Imc},\mn{Zeus}^{\Imc},x_5y_3)\in \mn{parent}^{\Imc}$. 
Hence, for every model $\Imc$ of $\Omc^{\why}$, there are the following matches for $\ext{q}(\mn{Dionysus})$ in $\Imc$:
\begin{itemize}
\item $\pi_1(\mn{Dionysus})=\mn{Dionysus}^{\Imc}$, $\pi_1(y)=\mn{Semele}^{\Imc}$, $\pi_1(t_1)=x_1$, $\pi_1(t_2)=x_2y_2$;
\item $\pi_2(\mn{Dionysus})=\mn{Dionysus}^{\Imc}$, $\pi_2(y)=\mn{Demeter}^{\Imc}$, $\pi_2(t_1)=x_1$, $\pi_2(t_2)=x_3y_2$;
\item $\pi_3(\mn{Dionysus})=\mn{Dionysus}^{\Imc}$, $\pi_3(y)=\mn{Zeus}^{\Imc}$, $\pi_3(t_1)=x_1$, $\pi_3(t_2)=x_5y_3$;
\item $\pi_4(\mn{Dionysus})=\mn{Dionysus}^{\Imc}$, $\pi_4(y)=\mn{Semele}^{\Imc}$, $\pi_4(t_1)=x_3x_4y_1y_2$, $\pi_4(t_2)=x_2y_2$;
\item $\pi_5(\mn{Dionysus})=\mn{Dionysus}^{\Imc}$, $\pi_5(y)=\mn{Demeter}^{\Imc}$, $\pi_5(t_1)=x_3x_4y_1y_2$, $\pi_5(t_2)=x_3y_2$;
\item $\pi_6(\mn{Dionysus})=\mn{Dionysus}^{\Imc}$, $\pi_6(y)=\mn{Zeus}^{\Imc}$, $\pi_6(t_1)=x_3x_4y_1y_2$, $\pi_6(t_2)=x_5y_3$;
\item $\pi_7(\mn{Dionysus})=\mn{Dionysus}^{\Imc}$, $\pi_7(y)=\mn{Semele}^{\Imc}$, $\pi_7(t_1)=x_5x_6y_1y_3$, $\pi_7(t_2)=x_2y_2$;
\item $\pi_8(\mn{Dionysus})=\mn{Dionysus}^{\Imc}$, $\pi_8(y)=\mn{Demeter}^{\Imc}$, $\pi_8(t_1)=x_5x_6y_1y_3$, $\pi_8(t_2)=x_3y_2$;
\item $\pi_9(\mn{Dionysus})=\mn{Dionysus}^{\Imc}$, $\pi_9(y)=\mn{Zeus}^{\Imc}$, $\pi_9(t_1)=x_5x_6y_1y_3$, $\pi_9(t_2)=x_5y_3$.
\end{itemize}
It follows that  
\begin{align*}
\Pmc(q(\mn{Dionysus}), \Omc^{\why}) = {} & x_1x_2y_2+x_1x_3y_2+x_1x_5y_3 + x_2x_3x_4y_1y_2 + x_3x_4y_1y_2 + {}\\
 & x_3x_4x_5y_1y_2y_3
+x_2x_5x_6y_1y_3y_2+x_3x_5x_6y_1y_2y_3+x_5x_6y_1y_3.
\end{align*}

We obtain the following results similarly, by considering the matches for $\ext{q}(\mn{Dionysus})$ in the $\mathbb{T}$-annotated and $\mathbb{F}$-annotated interpretations that are models of $\Omc^{\mathbb{T}}$ and $\Omc^{\mathbb{F}}$ respectively.
\begin{itemize}
\item $\Pmc(q(\mn{Dionysus}), \Omc^{\mathbb{T}})=4$;

\item $\Pmc(q(\mn{Dionysus}), \Omc^{\mathbb{F}})=0.9$. \qedhere
\end{itemize}
\end{example}

\new{Given a possibly complex $\ELHIbot$ concept $C$, let $q_C(x)$ be the rooted tree-shaped query that retrieves all instances of $C$ (\cf Section~\ref{sec:prelimDL}).} 
The following lemma 
establishes the relationship between the answers of $\ext{q_C}(x)$ and their provenance annotations in an interpretation and the interpretation of $C$.

\begin{restatable}{lemma}{LemComplexConceptQueryInter}\label{lem:complex-concept-query-inter}
For every $\ELHIbot$ concept $C$, if $\ext{q_C}(x)$ is the extended version of the rooted tree-shaped query ${q_C}(x)$, then 
for every annotated interpretation $\Imc$ and for every $d\in\Delta^\Imc$, we have that $\{\bigotimes_{P(\vec{t},t)\in \ext{q_C}(x)} \pi(t) \mid \pi\in\nu_\Imc(\ext{q_C}(x)), \pi(x)=d\}=\{\elem\mid (d, \elem)\in C^\Imc\}$.
\end{restatable}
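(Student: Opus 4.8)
The plan is to prove the claim by structural induction on the $\ELHI_\bot$ concept $C$, matching the inductive construction of the rooted tree-shaped query $q_C(x)$ and its extended version $\ext{q_C}(x)$. The statement to establish is that, for every annotated interpretation $\Imc$ and every $d\in\Delta^\Imc$,
\[
\Bigl\{\bigotimes_{P(\vec{t},t)\in \ext{q_C}(x)} \pi(t) \;\Bigm|\; \pi\in\nu_\Imc(\ext{q_C}(x)),\ \pi(x)=d\Bigr\}=\{\elem\mid (d,\elem)\in C^\Imc\}.
\]
The base case is $C=A$ for $A\in\NC$: here $q_A(x)=A(x)$, so $\ext{q_A}(x)=\exists t\, A(x,t)$, and a match with $\pi(x)=d$ is exactly a choice of $\elem$ with $(d,\elem)\in A^\Imc$, the single atom contributing the product $\elem$; both sides coincide with $\{\elem\mid(d,\elem)\in A^\Imc\}$. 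The case $C=\top$ is similar using $(\top)^\Imc=\Delta^\Imc\times\{\one\}$ and the convention that $\top(x)$ contributes the neutral element $\one$ (one should note here how $\top$ is handled in the query translation; I would spell this out as whatever the paper's structural-induction scheme dictates).

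For the inductive step I would treat the two constructors separately. For $C=C_1\sqcap C_2$, the query $q_C(x)$ is the conjunction of (renamed-apart copies of) $q_{C_1}(x)$ and $q_{C_2}(x)$ sharing only the root variable $x$, so a match $\pi$ for $\ext{q_C}(x)$ with $\pi(x)=d$ decomposes uniquely into a match $\pi_1$ for $\ext{q_{C_1}}(x)$ and a match $\pi_2$ for $\ext{q_{C_2}}(x)$, both sending $x$ to $d$, on disjoint sets of atoms. Hence the product of the provenance variables over $\ext{q_C}(x)$ factors as $(\bigotimes_{\ext{q_{C_1}}}\pi_1(t))\otimes(\bigotimes_{\ext{q_{C_2}}}\pi_2(t))$. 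By the induction hypothesis the first factor ranges exactly over $\{\elema\mid(d,\elema)\in C_1^\Imc\}$ and the second over $\{\elemb\mid(d,\elemb)\in C_2^\Imc\}$, independently, so the set of products is $\{\elema\otimes\elemb\mid(d,\elema)\in C_1^\Imc,(d,\elemb)\in C_2^\Imc\}$, which by definition is $\{\elem\mid(d,\elem)\in(C_1\sqcap C_2)^\Imc\}$. For $C=\exists P.C'$, the query $q_C(x)$ is $\exists y\, \bigl(P(x,y)\wedge q_{C'}(y)\bigr)$ with $P\in\{R,R^-\}$; a match $\pi$ with $\pi(x)=d$ picks some $e=\pi(y)$ with $(d,e,\elema)\in P^\Imc$ (where $(d,e,\elema)\in R^{-\,\Imc}$ means $(e,d,\elema)\in R^\Imc$, matching the clause $(R^-)^\Imc=\{(e,d,\elem)\mid(d,e,\elem)\in R^\Imc\}$) together with a match $\pi'$ for $\ext{q_{C'}}(y)$ sending $y$ to $e$. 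The total product is $\elema\otimes\bigl(\bigotimes_{\ext{q_{C'}}}\pi'(t)\bigr)$; ranging over all choices and applying the induction hypothesis to $\ext{q_{C'}}(y)$ at the point $e$, the set of products is $\{\elema\otimes\elemb\mid\exists e\in\Delta^\Imc,\ (d,e,\elema)\in P^\Imc,\ (e,\elemb)\in C'^\Imc\}=\{\elem\mid(d,\elem)\in(\exists P.C')^\Imc\}$, exactly the semantic clause for $(\exists P.C)^\Imc$.

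The main obstacle is not any single inductive step but rather pinning down the bookkeeping of the extended-query translation precisely enough that the decomposition of matches into sub-matches is genuinely a bijection: one must argue that the fresh provenance variables $\vec{t}$ introduced for distinct atoms are all distinct and shared with nothing else (guaranteed by the ``scheme for assigning variables'' assumption in the definition of the extended version), so that the product $\bigotimes_{P(\vec{t},t)\in\ext{q_C}}\pi(t)$ splits cleanly along the subtrees, and conversely that any pair of sub-matches glues back to a match of the whole query because the subqueries share only the distinguished variable. I would also be a little careful that the claim is a set equality, so multiplicities are irrelevant — this is why no $\omega$-continuity or idempotency hypothesis is needed — and that the correspondence between $\ELHI_\bot$ concepts and rooted tree-shaped queries (recalled just before the lemma) is exactly the one along which the induction is set up. Once the variable-disjointness bookkeeping is stated cleanly, each step is a routine unfolding of the definitions of $\cdot^\Imc$ on complex concepts and of a match for an extended BCQ.
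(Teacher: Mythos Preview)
Your proposal is correct and follows essentially the same approach as the paper: a structural induction on $C$ with base case $C\in\NC\cup\{\top\}$ and inductive cases for $\sqcap$ and $\exists P.C'$, each unwinding the definition of $\cdot^\Imc$ and the match decomposition. The paper's proof is terser (it does not spell out the variable-disjointness bookkeeping you flag), but the argument is the same.
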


Theorem \ref{th:instancequeries} establishes that computing the provenance of query answers for CQs that are tree-shaped and rooted can be reduced in polynomial time to computing the provenance of assertions, \new{using the fact 
that a rooted tree-shaped CQ can be seen as a syntactic variant of a possibly complex $\ELHIbot$ concept $C$ (\cf Section~\ref{sec:prelimDL}). The idea of the reduction is to introduce a fresh concept name $A_C$ that subsumes $C$. 
This theorem allows us to extend all results given for assertion entailment in this paper to rooted tree-shaped BCQ entailment.}

\begin{restatable}{theorem}{Theorembasic}
\label{th:instancequeries}
For every annotated $\ELHIbot$ ontology $\Omc^\semiringshort$, $\ELHIbot$ concept $C$ and $a\in\NI$, 
we have $$\Pmc(q_C(a), \Omc^\semiringshort)=\Pmc(A_C(a), \Omc^\semiringshort\cup\{(C\sqsubseteq A_C,\one)\}),$$ where $A_C$ is a fresh concept name.
\end{restatable}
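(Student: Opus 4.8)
The plan is to expand both sides of the claimed identity into sums of the shape $\bigoplus_{\elem\in S}\elem$ and to show that the two index sets coincide; the only real ingredient is a back-and-forth between the models of $\Omc^\semiringshort$ and those of the enlarged ontology $\Omc^\semiringshort_+:=\Omc^\semiringshort\cup\{(C\sqsubseteq A_C,\one)\}$, which is available because $A_C$ is fresh. First I would dispose of the unsatisfiable case: if $\Omc^\semiringshort$ is satisfiable then so is $\Omc^\semiringshort_+$ (turn any model $\Imc$ of $\Omc^\semiringshort$ into a model of $\Omc^\semiringshort_+$ by setting $A_C^\Imc:=C^\Imc$, which satisfies $(C\sqsubseteq A_C,\one)$ since $\one\otimes\elem=\elem$), and conversely $\Omc^\semiringshort_+\supseteq\Omc^\semiringshort$; so the two ontologies are equisatisfiable, and when they are unsatisfiable both sides of the identity are $\bigoplus_{\elem\in\semiringset}\elem$ by definition. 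Assume henceforth that both are satisfiable.

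Next I would rewrite each side. By the definition of the provenance of an assertion (Equation~\eqref{eq:provenancesum}), $\Pmc(A_C(a),\Omc^\semiringshort_+)=\bigoplus_{\elem\in S_A}\elem$ with $S_A=\bigcap_{\Imc\models\Omc^\semiringshort_+}\{\elem\mid(a^\Imc,\elem)\in A_C^\Imc\}$. On the other side, $\Pmc(q_C(a),\Omc^\semiringshort)=\Pmc(\ext{q_C}(a),\Omc^\semiringshort)$ by definition, and since the matches of the Boolean query $\ext{q_C}(a)$ in an annotated interpretation $\Imc$ are exactly the matches $\pi$ of $\ext{q_C}(x)$ with $\pi(x)=a^\Imc$, the instance of Lemma~\ref{lem:complex-concept-query-inter} at $d=a^\Imc$ gives $\p{\Imc}{\ext{q_C}(a)}=\{\elem\mid(a^\Imc,\elem)\in C^\Imc\}$, whence $\Pmc(q_C(a),\Omc^\semiringshort)=\bigoplus_{\elem\in S_q}\elem$ with $S_q=\bigcap_{\Imc\models\Omc^\semiringshort}\{\elem\mid(a^\Imc,\elem)\in C^\Imc\}$. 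It remains to prove $S_q=S_A$.

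For $S_q\subseteq S_A$, take $\elem\in S_q$ and any model $\Imc$ of $\Omc^\semiringshort_+$; then $\Imc\models\Omc^\semiringshort$ (as $\Omc^\semiringshort\subseteq\Omc^\semiringshort_+$), so $(a^\Imc,\elem)\in C^\Imc$, and since $\Imc$ satisfies $(C\sqsubseteq A_C,\one)$ we get $(a^\Imc,\one\otimes\elem)=(a^\Imc,\elem)\in A_C^\Imc$. For $S_A\subseteq S_q$, take $\elem\in S_A$ and any model $\Imc$ of $\Omc^\semiringshort$, and let $\Imc'$ agree with $\Imc$ everywhere except $A_C^{\Imc'}:=C^\Imc$; since $A_C$ is fresh, $\Imc'$ still satisfies $\Omc^\semiringshort$, and it satisfies $(C\sqsubseteq A_C,\one)$ because $C^{\Imc'}=C^\Imc=A_C^{\Imc'}$, so $\Imc'\models\Omc^\semiringshort_+$; hence $(a^{\Imc'},\elem)\in A_C^{\Imc'}=C^\Imc$, and $a^{\Imc'}=a^\Imc$ gives $(a^\Imc,\elem)\in C^\Imc$. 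Thus $S_q=S_A$, so $\Pmc(q_C(a),\Omc^\semiringshort)=\Pmc(A_C(a),\Omc^\semiringshort_+)$; the reduction merely appends the axiom $(C\sqsubseteq A_C,\one)$ built directly from $C$, so it is computable in polynomial time, as asserted before the theorem.

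I do not expect a serious obstacle: once Lemma~\ref{lem:complex-concept-query-inter} is in hand --- it carries the inductive work of tying $\ext{q_C}$ to the interpretation of $C$ --- the argument is model-theoretic bookkeeping. The one point to keep in mind is that the step $A_C^{\Imc'}:=C^\Imc$ must yield a legitimate annotated interpretation; this is immediate if one uses the earlier Remark that the countability requirement on the interpretation of concept and role names can be dropped without affecting the provenance (or, retaining that requirement, one checks by an easy induction that $C^\Imc$ is countable for every $C\not\equiv\top$, the case $C\equiv\top$ being trivial as then $q_C(a)$ and $A_C(a)$ are both entailed with annotation $\one$).
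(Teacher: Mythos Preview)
Your proposal is correct and follows essentially the same approach as the paper: both use Lemma~\ref{lem:complex-concept-query-inter} to identify $\p{\Imc}{\ext{q_C}(a)}$ with $\{\elem\mid(a^\Imc,\elem)\in C^\Imc\}$, and both prove the equality of the two intersections via the same model-theoretic back-and-forth (every model of $\Omc^\semiringshort_+$ is a model of $\Omc^\semiringshort$, and every model of $\Omc^\semiringshort$ extends to one of $\Omc^\semiringshort_+$ by setting $A_C$ to $C$). Your explicit treatment of the unsatisfiable case and of the countability of $C^\Imc$ are minor additions that the paper glosses over, but they do not change the substance of the argument.
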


\subsection{Relationship with Semantics for Specific Annotations}
\label{sec:sem:ann}

We now discuss some weighted reasoning tasks which have been studied in the literature, and show how some of them fall 
into our general approach. 
Regarding the use of provenance as explanation, we will discuss in details  in Section~\ref{sec:pin:prov} the relationship between our provenance semantics and one of the most studied related problems in the DL community, axiom 
pinpointing \cite{Pena-AP20}. 

\paragraph{Fuzzy DL}
The original work on fuzzy DLs introduced by~\citet{DBLP:conf/aaai/Straccia98} considers fuzzy assertions of the
form $(\alpha,n)$ where $n\in[0,1]$ represents the membership degree for the assertion $\alpha$, and the
terminological axioms are all considered to be classical; that is, they are precise.
Fuzzy interpretations
generalise classical interpretations by considering concepts and roles as \emph{fuzzy} unary and binary predicates,
respectively. More precisely, every concept name $A\in\NC$ is interpreted as a function $A^\Imc:\Delta^\Imc\to[0,1]$ and
likewise role names $R\in\NR$ are interpreted as (binary) functions $R^\Imc:\Delta^\Imc\times\Delta^\Imc\to[0,1]$. 
Following the
then-popular Zadeh semantics~\cite{Zadeh-IC65}, which for the scope of $\ELHIbot$ in the setting of
\citet{DBLP:conf/aaai/Straccia98} coincides with the standard G\"odel semantics of mathematical fuzzy logic \cite{Hajek-98}, conjunctions
are interpreted through the minimum ($\min$) operator over the interval $[0,1]$, while disjunctions and existential 
restrictions are based on the maximum ($\max$) \new{operator}.
A GCI $C\sqsubseteq D$ is satisfied by a fuzzy interpretation \Imc if the membership degree for $C$ is less than or
equal to the degree for $D$ at every domain element; i.e., if for every 
$d\in\Delta^\Imc$ it holds that $C^\Imc(d)\le D^\Imc(d)$, 
\new{and a fuzzy assertion $(A(a),n)$ is satisfied by $\Imc$ if $A^\Imc(a^\Imc)\geq n$.} 
The entailment relation in this logic is defined in the standard manner: given a fuzzy ontology
$\Omc^{\new{f}}$, we get that $\Omc^{\new{f}}\modapprox(C(a),n)$ iff $C^\Imc(a^\Imc)\geq n$ in every fuzzy
model \Imc of $\Omc^{\new{f}}$. 
The general notions have later been extended to consider also membership degrees in the 
TBox~\cite{BoDP-AIJ15} and reasoning tasks such as query answering~\cite{MaTu-JIST14,MaTZ-DL15}. 
In such a setting, \citet{PaPe-TPLP22}  showed that for DL-Lite and BCQs, $\Omc^{\new{f}}\modapprox(\alpha,n)$ iff 
$\Omc_{\ge n}\models\alpha$ where $\Omc_{\ge n}$ is the classical ontology---called the \emph{$n$-cut}---containing only those axioms of $\Omc^{\new{f}}$ annotated with a degree greater or equal to $n$. 
Hence, the next proposition and Theorem~19 by \citet{PaPe-TPLP22} allow us to reduce BCQ (hence also assertion) entailments over fuzzy DL-Lite ontologies to computation of provenance in the fuzzy~semiring. 

\begin{restatable}{proposition}{proppossprov}
\label{prop:poss:prov}
\new{Let $\mathbb{F}=([0,1],\max,\min ,0,1)$ be the fuzzy semiring. For every satisfiable $\mathbb{F}$-annotated $\ELHIbot$ ontology 
$\Omc^{\mathbb{F}}$ and $n\in[0,1]$, $$\Pmc(\alpha,\Omc^{\mathbb{F}})\geq n\text{ iff }\Omc_{\ge n}\models \alpha\quad\text{holds if:}$$ 
\begin{enumerate}
\item $\alpha$ is a BCQ, an assertion, or an RI whose left-hand side is satisfiable \wrt $\Omc$; or
\item $\alpha$ is a GCI between basic concepts whose left-hand side is satisfiable \wrt $\Omc$ and $\Omc$ does not contain any GCI with $\top$ as left-hand side.
\end{enumerate}}
\end{restatable}

\new{The next example shows that we cannot extend the proposition to GCIs or RIs with unsatisfiable left-hand~side.}

\begin{example}\label{ex:fuzzy-unsat-left}
 If $\Omc^\mathbb{F}=\{(A\sqsubseteq B, 0.5),(A\sqsubseteq C, 0.5), (B\sqcap C\sqsubseteq \bot, 0.9)\}$, then 
 we have $\Omc_{\geq 0.9}\not\models A\sqsubseteq D$ for every $D\in\NC\new{\setminus\{A\}}$; but 
 \new{since $A$ is unsatisfiable \wrt $\Omc$}, it follows that $\Pmc(A\sqsubseteq D,\Omc^{\mathbb{F}})=1$ 
 (\cf Remark \ref{rem:unsat-left}). \new{Note that one can replace concept names by role names in this example.}
 \end{example}
\new{The reason for the requirement on the GCIs of $\Omc$ in the case where $\alpha$ is a GCI is illustrated below.}

\begin{example}\label{ex:fuzzy-GCI-top}
\new{If $\Omc^\mathbb{F}=\{(\top\sqsubseteq D,1)\}$, then $\Omc_{\geq 1}\models C\sqsubseteq D$ for every $C\in\NC$, but for every $C\in\NC\setminus\{D\}$, $\Pmc(C\sqsubseteq D,\Omc^{\mathbb{F}})=0$. Indeed, the annotated model $\Imc$ of $\Omc^\mathbb{F}$ defined by $D^\Imc=\Delta^\Imc\times\{1\}$ and $C^\Imc=\Delta^\Imc\times\{0.5\}$ does not satisfy $(C\sqsubseteq D,n)$ for any $n\in[0,1]$.}
\end{example}

The $n$-cuts are widely used in the fuzzy logic community. In the fuzzy DL literature, it is more common to use a
method called \emph{crispification} in which different concept names are used in a classical interpretation to represent the 
truth degrees assigned by a fuzzy interpretation \shortcite{BDG-URSW08,BDGS-IJUF12,BMPT-JoDS16}.
While crispification allows for a more fine-grained semantic analysis in general, by ``cutting'' also the interpretations at
the desired degree $n$, one obtains an interpretation which is a model of the $n$-cut whenever the fuzzy interpretation
is a model of the original ontology, thus preserving the result.

\paragraph{Possibilistic DL} 
\new{\citet{DBLP:journals/ijar/Hollunder95} originally introduced a possibilistic extension of the DL $\mathcal{ALCN}$ by annotating each axiom with either a \emph{necessity} or a \emph{possibility} value but most recent works (covering various DLs such as $\mathcal{ALC}$ or DL-Lite) only consider one kind of annotation, which corresponds to necessity value and is also called certainty or confidence degree \cite{DBLP:journals/ijis/QiJPD11,DBLP:journals/logcom/BenferhatB17}. 
In this context, a possibilistic ontology $\Omc^p$ is a set of annotated axioms $(\alpha,n)$ with $n\in]0,1]$. 
The semantics is based on a possibility distribution $\pi_{\Omc^p}$ over all the possible  interpretations \Imc, which depends on the degrees of the axioms not satisfied by \Imc: $\pi_{\Omc^p}(\Imc)=1$ if $\Imc\models \alpha$ for every $(\alpha, n)\in {\Omc^p}$, and  $\pi_{\Omc^p}(\Imc)=\min\{1-n\mid(\alpha,n)\in\Omc^p,\Imc\not\models\alpha\}$ otherwise. For a \emph{satisfiable} ontology, $\Omc^p\models_\pi (\alpha,n)$ iff $n\geq 1-\sup\{\pi_{\Omc^p}(\Imc)\mid\Imc\not\models\alpha\}$. 
Although at first sight this semantics does not exactly fit our definition, it has been shown that reasoning in possibilistic DLs can be reduced to classical reasoning over $n$-cuts,  where the $n$-cut $\Omc_{\geq n}$ of $\Omc^p$ is defined as in the fuzzy case: $\Omc^p\models_\pi (\alpha,n)$ iff $\Omc_{\geq n}\models \alpha$ \cite[Section 2.2]{DBLP:journals/ijis/QiJPD11}. This comes from a general result on possibilistic first-order logic knowledge bases \cite[Proposition 11]{PossibilisticLogicHandbook}. 
A tight necessity degree can be computed by taking the maximum over $n$ such that $\Omc_{\geq n}\models \alpha$.} 
Hence, by Proposition~\ref{prop:poss:prov}, possibilistic reasoning can be reduced
to provenance computation over the 
fuzzy semiring. 

\new{The same is not true for the \emph{product-based} variant of possibilistic semantics, where $\pi_{\Omc^p}(\Imc)=1$ if $\Imc\models \alpha$ for every $(\alpha, n)\in {\Omc^p}$, and  $\pi_{\Omc^p}(\Imc)=\Pi_{(\alpha,n)\in\Omc^p,\Imc\not\models\alpha}(1-n)$ otherwise \shortcite{BBKN17}. Indeed, if $\Omc^p=\{(A(a), 0.5), (B(a),0.5), (A\sqsubseteq C,1),(B\sqsubseteq C,1)\}$, then $\Omc^p\models_\pi (C(a), 0.75)$ under this semantics while $\Omc_{\geq 0.75}\not\models C(a)$. Actually, for every commutative semiring $\semiringshort=([0,1], \oplus, \otimes, 0,1)$, if $\Omc^\semiringshort$ is $\Omc^p$ interpreted as a $\semiringshort$-annotated ontology, $\Pmc(C(a), \Omc^\semiringshort)=0.5$ (since $0.5\otimes 1=0.5$ and there is an annotated model $\Imc$ of $\Omc^\semiringshort$ such that $C^\Imc=\{(a^\Imc, 0.5\otimes 1)\}$). 
} 

\new{Possibilistic semantics often target reasoning with unsatisfiable ontologies, by using the inconsistency degree $1-\sup\{\pi_{\Omc^p}(\Imc)\}$ to define $\Omc^p\models_\pi(\alpha,n)$ when the ontology may be unsatisfiable. 
Since the provenance of any axiom over an unsatisfiable ontology is the sum of all the semiring elements, we restrict our comparison to the case of satisfiable ontologies.}

\paragraph{Access control} 
\citet{BaKP-JWS12}  proposed to annotate axioms in an ontology with a label belonging
to a so-called \emph{access lattice} \new{$(L,\leq)$, which can be seen as a generalization of the dual of the access control semiring of Example~\ref{ex:semirings}. In this context, a clearance level $\ell\in L$ gives access to all axioms of greater level in the annotated ontology $\Omc^a$, \ie to $\Omc_{\geq\ell}=\{\alpha\mid (\alpha,v)\in \Omc^a, v\geq\ell\}$. Contrary to the access control semiring, clearance levels may be incomparable. Recall that a lattice is a partially ordered set in which every set of elements has a unique supremum (join) and a unique infimum (meet).} 
\new{Given an ontology with access levels $\Omc^a$ and an axiom $\alpha$, $\ell\in L$ is called an $(\Omc^a,\alpha)$-boundary if for every $\ell'\in L$ that is join prime\footnote{\blue{Given a lattice $(L,\leq)$ and a set $L'\subseteq L$, $\ell'\in L$ is join prime relative to $L'$ if for every $M'\subseteq \{\bigotimes_{m\in M}m\mid M\subseteq L'\}$, $\ell'\leq \bigoplus_{m\in M'}m$ implies that there is $m_0\in M'$ such that $\ell'\leq m_0$.}} relative to the set of labels that annotate $\Omc^a$, it holds that $\ell'\leq\ell$ iff $\Omc_{\geq\ell'}\models \alpha$. Each ontology defines a lattice element 
computed as the meet of the labels of its axioms. One of the main results of \citet{BaKP-JWS12} (Theorem~4.3) shows that a boundary of  $\alpha$ is the join of the labels of all subontologies that entail $\alpha$: formally, $\sup\{ \inf_{\beta\in\Mmc}(\lambda(\beta)) \mid \Mmc\subseteq\Omc, \Mmc\models \alpha\}$ is an $(\Omc^a,\alpha)$-boundary. A bounded distributive lattice $(L,\leq)$ is a lattice such that  the operations of join and meet distribute over each other and there exists a greatest element $\top$ and a least element $\bot$ such that $\bot\leq \ell\leq \top$ for every $\ell\in L$. It naturally corresponds to a semiring $\mathbb{L_A}=(L,\new{\sup,\inf},\bot,\top)$ where both operations (the join \new{($\sup$)} and meet \new{($\inf$)} of the lattice) are idempotent. Proposition~\ref{prop:access:prov} relates the $(\Omc^a,\alpha)$-boundary given by \citet{BaKP-JWS12} and the provenance of $\alpha$ in $\Omc^{\mathbb{L_A}}$, where $\Omc^{\mathbb{L_A}}$ is $\Omc^a$ interpreted as an $\mathbb{L_A}$-annotated ontology, with the same restrictions as  Proposition~\ref{prop:poss:prov}.} 

\begin{restatable}{proposition}{propaccessprov}
\label{prop:access:prov}
\new{Let $(L,\leq)$ be a bounded distributive lattice and $\mathbb{L_A}=(L,\new{\sup,\inf},\bot,\top)$. For every satisfiable $\mathbb{L_A}$-annotated $\ELHIbot$ ontology 
$\Omc^{\mathbb{L_A}}=\tup{\Omc,\lambda}$, $$\Pmc(\alpha,\Omc^{\mathbb{L_A}})=\sup\{ \inf_{\beta\in\Mmc}(\lambda(\beta)) \mid \Mmc\subseteq\Omc, \Mmc\models \alpha\}\quad\text{holds if:}$$ 
\begin{enumerate}
\item $\alpha$ is a BCQ, an assertion, or an RI whose left-hand side is satisfiable \wrt $\Omc$; or
\item $\alpha$ is a GCI between basic concepts whose left-hand side is satisfiable \wrt $\Omc$ and $\Omc$ does not contain any GCI with $\top$ as left-hand side.
\end{enumerate}
}
\end{restatable}

\paragraph{Bag semantics} 
\new{
\citet{DBLP:conf/ijcai/NikolaouKKKGH17,DBLP:journals/ai/NikolaouKKKGH19} defined a bag semantics for \DLLiteR ontologies whose assertions are annotated by multiplicities from $\mathbb{N}\cup\{\infty\}$. 
The semantics is based on bag interpretations, which are very similar to fuzzy interpretations, except that concept and role names are interpreted by bags $A^\Imc:\Delta^\Imc\rightarrow \mathbb{N}\cup\{\infty\}$ and $R^\Imc:\Delta^\Imc\times\Delta^\Imc\rightarrow \mathbb{N}\cup\{\infty\}$. 
The interpretation of inverse roles is as expected and $(\exists R)^\Imc$ maps each $d\in\Delta^\Imc$ to $\Sigma_{e\in\Delta^\Imc} R^\Imc(d,e)$. 
A bag interpretation $\Imc$ satisfies an assertion $A(a)$ with multiplicity $n$ if $A^\Imc(a^\Imc)\geq n$ and a GCI $C\sqsubseteq D$ if $C^\Imc(d)\leq D^\Imc(d)$ for every $d\in\Delta^\Imc$. 
Given a bag ontology $\Omc^{\new{b}}$, entailment of an assertion $A(a)$ \blue{with multiplicity $n$} under bag certain semantics is defined by $\Omc^b\models^b (A(a),n)$ if $n$ is the minimum $A^\Imc(a^\Imc)$ over all bag models of $\Omc^b$. This semantics is not captured by provenance in $\Nbb^{\infty}=(\Nbb\cup\{\infty\}, +, \times, 0, 1)$. Consider $\Omc_1^b=\{(A(a),2), (B(a),3), (A\sqsubseteq C,1), (B\sqsubseteq C,1)\}$. It holds that 
$\Omc_1^b\models^b (C(a),3)$ while if $\Omc_1^{{\Nbb^{\infty}}}$ is $\Omc_1^b$ interpreted as a $\Nbb^{\infty}$-annotated ontology, $\Pmc(C(a),\Omc_1^{{\Nbb^{\infty}}})=5$ (since every annotated model $\Imc$ of $\Omc_1^{\Nbb^{\infty}}$ is such that $\{(a^\Imc,2), (a^\Imc,3)\}\subseteq C^\Imc$). 
Note that replacing $\Nbb^\infty$ by $\semiringshort=(\Nbb\new{\cup\{\infty\}}, \max, \times, 0, 1)$ does not capture the bag semantics either. While we would have $\Pmc(C(a),\Omc_1^{\new{\semiringshort}})=3$, if we consider $\Omc_2^b=\{(R(a,b),2), (R(a,c),3), (\exists R\sqsubseteq C,1)\}$, then $\Omc_2^b\models^b (C(a),5)$ but $\Pmc(C(a),\Omc_2^{\semiringshort})=3$. 
Actually, for every $\semiringshort=(\mathbb{N}\cup\{\infty\},\oplus,\otimes, 0,1)$, we can see that \blue{(i) $\{(a^\Imc,2), (a^\Imc,3)\}\subseteq C^\Imc$ for every annotated model $\Imc$ of $\Omc_1^\semiringshort$ or of $\Omc_2^\semiringshort$} (since $n\otimes 1=n$), \blue{and (ii) that there is a model $\Imc$ of $\Omc_1^\semiringshort$ and $\Omc_2^\semiringshort$ such that $C^\Imc=\{(a^\Imc,2), (a^\Imc,3)\}$}, so that $\Pmc(C(a),\Omc_1^\semiringshort)=\Pmc(C(a),\Omc_2^\semiringshort)=2\oplus 3$, while the multiplicity of $C(a)$ under the bag semantics is not the same in $\Omc_1^b$ and in $\Omc_2^b$. One could modify the provenance semantics to capture bag semantics by adapting the so-called annotated model-based semantics for Datalog defined by \citet{DBLP:conf/kr/BourgauxBPT22} for a restricted class of semirings, but as already discussed in Section~\ref{sec:designchoices}, both semantics suffer from undesirable behaviors when $\oplus$ is not idempotent and coincide otherwise.
} 

\section{\new{Extension of Results from Classical DL} }\label{sec:classical-res-counterparts}
\new{In this section, we investigate under which conditions some classical results for 
DLs of the \EL family 
can be transferred to $\semiringshort$-annotated ontologies.}

\subsection{\new{Normal Form}}\label{subsection:normalization}
\new{In DL, it is customary to \blue{convert} ontologies \blue{into some} \emph{normal form} when working with languages from the $\EL$ family, to simplify the reasoning methods \cite{BBL-IJCAI05,BBL-EL08}. \blue{We present a similar conversion for annotated ontologies.} 
An annotated $\ELHIbot$ ontology $\Omc^\semiringshort$ is in \emph{normal form} \blue{if $\Omc$ is in normal form (as defined in Section~\ref{sec:prelimDL}), i.e.,} if for every GCI $(\alpha,\elem)\in \Omc^\semiringshort$, $\alpha$ is of one of the forms 
$$A\sqsubseteq B,\ A\sqcap A'\sqsubseteq B,\  A\sqsubseteq \exists R, \ A\sqsubseteq \exists R^-, \ \exists R.A\sqsubseteq B, \text{\ or \ } \exists R^-.A\sqsubseteq B$$ with $R\in\NR$, $A,A'\in\NC\cup\{\top\}$, $B\in\NC\cup\{\bot\}$. Every annotated $\ELHIbot$ ontology $\Omc^\semiringshort$ 
can be transformed, in polynomial time, into an ontology in normal form, which entails the same annotated axioms as $\Omc^\semiringshort$ over the signature of $\Omc^\semiringshort$.  \blue{Such an ontology can be} built by applying exhaustively the following rules, where $\widehat{C},\widehat{D}\notin\NC\cup\{\top\}$ and \blue{a fresh concept $A$ (not appearing so far in the ontology) is used at each transformation step.}
\begin{equation*}
\begin{array}{l@{\ }r@{\ }l@{\ }l}
\NF_1: &  (C\sqcap\widehat{D}\sqsubseteq E,\, \elem) &\longrightarrow&(\widehat{D}\sqsubseteq A,\, \one), (C\sqcap A\sqsubseteq E, \, \elem)\\
 \NF_2: &  (\widehat{C}\sqcap D\sqsubseteq E,\, \elem) &\longrightarrow&(\widehat{C}\sqsubseteq A,\, \one), (A\sqcap D\sqsubseteq E, \, \elem)\\
 \NF_3: &   (\exists P.\widehat{C}\sqsubseteq D,\, \elem) &\longrightarrow&(\widehat{C}\sqsubseteq A,\, \one), (\exists P. A\sqsubseteq D, \, \elem)\\
 \NF_4: &   (\widehat{C}\sqsubseteq\exists P,\, \elem) &\longrightarrow&(\widehat{C}\sqsubseteq A,\, \one), (A\sqsubseteq \exists P, \, \elem)
\end{array}
\end{equation*}}%
Note that \blue{the resulting set of annotated axioms} does not contain any pair of annotated axioms $(\alpha,\elem)$, $(\alpha,\elem')$ with $\elem\neq\elem'$, so 
\blue{it} is a proper annotated ontology. 
\blue{Indeed, observe that for every pair of axioms among those introduced by the rules, either they have been introduced by different rule applications, so the last introduced axiom contains some fresh concept name $A$ that does not appear in the other, or they have been introduced by the same rule application and the fresh concept name $A$ introduced by this rule occurs in different sides of $\sqsubseteq$ in each of them. Hence each introduced axiom (without annotation) is new.}
\begin{restatable}{theorem}{NormalisationELHI}\label{th:normal-form}
\new{Let $\semiringshort$ be a commutative semiring, $\Omc^\semiringshort$ a $\semiringshort$-annotated $\ELHIbot$ ontology, $\alpha$ an axiom, and $\elem\in\semiringset$. 
Let $\NF(\Omc^\semiringshort)$ be obtained by applying exhaustively Rules $\NF_1$-$\NF_4$ to~$\Omc^\semiringshort$.
\begin{itemize}
\item If $\Omc^\semiringshort\models (\alpha,\elem)$, then $\NF(\Omc^\semiringshort)\models (\alpha,\elem)$.
\item If $\NF(\Omc^\semiringshort)\models (\alpha,\elem)$ and every concept name occurring in $\alpha$ also occurs in $\Omc^\semiringshort$, then $\Omc^\semiringshort\models (\alpha,\elem)$.
\end{itemize}}
\end{restatable}
\begin{corollary}\label{cor:normal-form}
\new{If $\Omc^\semiringshort$ and $\NF(\Omc^\semiringshort)$ are as in Theorem \ref{th:normal-form}, then for every $\ELHIbot$ axiom $\alpha$ over the signature of $\Omc^\semiringshort$, $\Pmc(\alpha, \Omc^\semiringshort)=\Pmc(\alpha, \NF(\Omc^\semiringshort))$.}
\end{corollary}
\begin{remark}
\blue{There may be several ontologies in normal form obtained by applying exhaustively Rules $\NF_1$-$\NF_4$ to~$\Omc^\semiringshort$, depending on the order in which we apply the rules. For example $\{(B\sqcap C\sqcap D\sqsubseteq E,\elem)\}$ can be normalized in $\{(C\sqcap D\sqsubseteq A,\one),(B\sqcap A\sqsubseteq E,\elem)\}$ by applying $\NF_1$ or in $\{(B\sqcap C\sqsubseteq A,\one),(A\sqcap D\sqsubseteq E,\elem)\}$ by applying $\NF_2$.}
\end{remark}

\subsection{Canonical Model}\label{sec:canonical-model}

In DLs of the \EL family, every ontology has a \emph{canonical (or universal) model} which can be homomorphically mapped into any other model of the ontology \blue{(see, e.g., \cite[Section 3.2]{DBLP:conf/rweb/KontchakovZ14}, for a definition of the canonical model of an \EL ontology). Such a model allows us to investigate what is  entailed by the ontology without the need to consider all the possible models. In particular, a BCQ is entailed by an ontology if and only if it holds in its canonical model. 
It can be constructed via a forward chaining procedure similar to the \emph{oblivious chase} for databases \cite{DBLP:journals/ws/CaliGL12} (see also the definition of the oblivious chase given by~\citet{DBLP:conf/ijcai/BednarczykFO20}, which we took inspiration from). 
We adapt this construction 
to annotated $\ELHIbot$ ontologies.}

Given an annotated $\ELHIbot$ ontology $\Omc^\semiringshort$ that is satisfiable, we define a canonical model $\Imc_{\Omc^\semiringshort}$ of  $\Omc^\semiringshort$ as follows. 
Let $\Imc_0$ be \new{the $\semiringshort$-annotated interpretation} such that:
\begin{itemize}
	\item $\Delta^{\Imc_0}:=\NI$;
	\item $a^{\Imc_0}:=a$,  
	for all $a\in\NI$;
	\item $(a,\elem)\in A^{\Imc_0}$ iff 
$(A(a),\elem)\in\Omc^{\semiringshort}$; 
	\item $(a,b,\elem)\in R^{\Imc_0}$ iff 
$(R(a,b),\elem)\in\Omc^{\semiringshort}$. 
\end{itemize}
\blue{
Given two annotated interpretations $\Imc_n$ and $\Imc_{n+1}$ such that $a^{\Imc_n}=a^{\Imc_{n+1}}=a$ for all $a\in\NI$, we say that $\Imc_{n+1}$ is obtained from $\Imc_n$ by applying the ``chase rule'' to $(\alpha,\elem)\in\Omc^\semiringshort$ and $(\vec{d},\elem')\in E^{\Imc_n}$  if one of the following conditions holds (note that $P$ and $Q$ can be role names or inverse roles):
\begin{itemize}
\item $\alpha=P\sqsubseteq Q$, $(\vec{d})=(d,d')$, $E=P$, $Q^{\Imc_{n+1}}=Q^{\Imc_n}\cup\{(d,d',\elem\otimes \elemb)\}$,\\ $\Delta^{\Imc_{n+1}}=\Delta^{\Imc_{n}}$, and $F^{\Imc_{n+1}}=F^{\Imc_{n}}$ for all concept and role names $F\neq Q$;
\item  $\alpha=C\sqsubseteq A$, $(\vec{d})=(d)$, $E=C$, $A^{\Imc_{n+1}}=A^{\Imc_{n}}\cup\{(d,\elem\otimes \elemb)\}$,\\ $\Delta^{\Imc_{n+1}}=\Delta^{\Imc_{n}}$, and $F^{\Imc_{n+1}}=F^{\Imc_{n}}$ for all concept and role names $F\neq A$;
\item $\alpha=C\sqsubseteq \exists P$,  $(\vec{d})=(d)$, $E=C$, $\Delta^{\Imc_{n+1}}=\Delta^{\Imc_{n}}\cup\{d_f\}$ with $d_f\notin\Delta^{\Imc_{n}}$,  
 $P^{\Imc_{n+1}}=P^{\Imc_{n}}\cup\{(d,d_f,\elem\otimes \elemb)\}$,\\ and $F^{\Imc_{n+1}}=F^{\Imc_{n}}$ for all concept and role names $F\neq P$. 
\end{itemize}
There exists a (potentially infinite) sequence $\Imc_0,\Imc_1,\dots$ such that 
\begin{enumerate}[(i)]
\item for every $i\geq 0$, $\Imc_{i+1}$ is obtained from $\Imc_i$ by applying the chase rule to $(\alpha_i,\elem_i)\in\Omc^\semiringshort$ and $(\vec{d}_i,\elem'_i)\in E_i^{\Imc_i}$, 
\item for every $i,j\geq 0$ such that $i\neq j$, $(\alpha_i,\elem_i)\neq(\alpha_j,\elem_j)$ or $(\vec{d}_i,\elem'_i)\neq (\vec{d}_j,\elem'_j)$ (the chase rule is not applied twice to the same $(\alpha,\elem)$ and $(\vec{d},\elem')$), and 
\item for every $i\geq 0$ if there is $(\alpha,\elem)\in\Omc^\semiringshort$ and $(\vec{d},\elem')\in E^{\Imc_i}$ to which the chase rule can be applied (\ie such that $E$ is the left-hand side of $\alpha$), then there is $j\geq 0$ such that $\Imc_{j+1}$ is obtained from $\Imc_j$ by applying  the chase rule with $(\alpha,\elem)\in\Omc^\semiringshort$ and $(\vec{d},\elem')\in E^{\Imc_j}$ (fairness condition, the rule is applied at some point in the sequence, note that it may be the case that $0\leq j\leq i$).
\end{enumerate} 
} 
Existence of such a sequence can be shown by applying the rule in \blue{a level-saturating fashion 
(\blue{as done by \citet{DBLP:journals/ws/CaliGL12} in the database context}): define the level of a tuple $(\vec{d},\elem')\in E^{\Imc_n}$ as $i$ such that $(\vec{d},\elem')\in E^{\Imc_i}$ and $(\vec{d},\elem')\notin E^{\Imc_{i-1}}$ and apply the chase rule to some tuple of degree $i$ only if it has already been applied to all $(\vec{e},\chi)\in F^{\Imc_n}$ of degree $j<i$ to which it can be applied.} 
We define $\Imc_{\Omc^\semiringshort} := \bigcup_{n \geq 0} \Imc_n$, \ie $\Delta^{\Imc_{\Omc^\semiringshort}}= \bigcup_{n \geq 0}\Delta^{\Imc_n}$; for every $a\in\NI$, $a^{\Imc_{\Omc^\semiringshort}}=a$; for every $A\in\NC$, $A^{\Imc_{\Omc^\semiringshort}}=\bigcup_{n \geq 0}A^{\Imc_n}$; and for every $R\in\NR$,  $R^{\Imc_{\Omc^\semiringshort}}=\bigcup_{n \geq 0}R^{\Imc_n}$. Even if there may be different sequences $\Imc_0,\Imc_1,\dots$, \blue{the resulting} $\Imc_{\Omc^\semiringshort}$ is unique up to renaming of the fresh domain elements introduced (by the same arguments that show uniqueness of the oblivious chase), 
which allows us to talk about \emph{the} canonical model of $\Omc^\semiringshort$.} 
Proposition~\ref{lem:can} is an easy consequence of the definition of $\Imc_{\Omc^\semiringshort}$.
\begin{proposition}\label{lem:can}
	If $\Omc^\semiringshort$ is a satisfiable $\semiringshort$-annotated $\ELHIbot$ ontology then $\Imc_{\Omc^\semiringshort}\models\Omc^\semiringshort$.
\end{proposition}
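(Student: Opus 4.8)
The plan is to split the annotated axioms of $\Omc^\semiringshort$ into the \emph{positive} ones---assertions, positive role inclusions, and GCIs $C\sqsubseteq D$ with $D\neq\bot$, which are exactly those driving the construction rules---and the \emph{negative} ones, i.e.\ negative role inclusions $\pair{P\sqcap Q\sqsubseteq\bot}{\elem}$ and GCIs $\pair{C\sqsubseteq\bot}{\elem}$, which are never used as rules. The positive axioms will follow from fairness of rule application together with a continuity property of the chase; the negative axioms will follow from satisfiability of $\Omc^\semiringshort$.

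First I would record two bookkeeping facts. \emph{Monotonicity}: each rule only adds domain elements and tuples, so $\Imc_n$ is contained componentwise in $\Imc_{n+1}$; hence $(\Imc_n)_{n\ge 0}$ is a chain, $\Imc_{\Omc^\semiringshort}=\bigcup_{n\ge 0}\Imc_n$ is well defined, and $R^{\Imc_{\Omc^\semiringshort}}=\bigcup_n R^{\Imc_n}$, $(R^-)^{\Imc_{\Omc^\semiringshort}}=\bigcup_n (R^-)^{\Imc_n}$ for every $R\in\NR$. \emph{Continuity}: for every $\ELHI_\bot$ concept $C$ that may occur on the left of a GCI, $C^{\Imc_{\Omc^\semiringshort}}=\bigcup_{n\ge 0}C^{\Imc_n}$. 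The inclusion $\supseteq$ is monotonicity; for $\subseteq$ I would induct on $C$, the cases $A$ and $\top$ being immediate, the case $C_1\sqcap C_2$ following because a pair $(d,\elema\otimes\elemb)$ with $(d,\elema)\in C_1^{\Imc_{\Omc^\semiringshort}}$ and $(d,\elemb)\in C_2^{\Imc_{\Omc^\semiringshort}}$ lifts (by the induction hypothesis and monotonicity) to a common stage $\Imc_{\max(n_1,n_2)}$, and the case $\exists P.C'$ similarly, using that a single role witness $e$ with $(d,e,\elema)\in P^{\Imc_{\Omc^\semiringshort}}$ and $(e,\elemb)\in C'^{\Imc_{\Omc^\semiringshort}}$ lifts to a common finite stage. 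With these in hand, each positive axiom is handled uniformly: given a trigger of its rule in $\Imc_{\Omc^\semiringshort}$, continuity places the trigger in some $\Imc_n$, fairness guarantees the rule is eventually applied there, so the required tuple appears in some $\Imc_m\subseteq\Imc_{\Omc^\semiringshort}$. Assertions hold because $\Imc_0$ already contains the pair $(a,\elem)$ for each $\pair{A(a)}{\elem}\in\Omc^\semiringshort$ (and likewise for role assertions), which persists by monotonicity. For a GCI $\pair{C\sqsubseteq\exists P}{\elem}$ one should note that the rule adds $(d,d_f,\elem\otimes\elemb)$ to $P^{\Imc_{n+1}}$ with $d_f$ fresh, and since $(d_f,\one)\in\top^{\Imc_{n+1}}$ and $\one$ is the multiplicative identity, this yields $(d,\elem\otimes\elemb)\in(\exists P.\top)^{\Imc_{n+1}}$, exactly matching the satisfaction clause.

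The only genuinely nontrivial part is the negative axioms, and here I would use satisfiability. Fix a model $\Jmc\models\Omc^\semiringshort$. By induction along the stages of the chase I would build a map $\homo\colon\Delta^{\Imc_{\Omc^\semiringshort}}\to\Delta^{\Jmc}$ that is the identity on $\individuals{\Omc}$ and satisfies $(d,\elem)\in A^{\Imc_{\Omc^\semiringshort}}\Rightarrow(\homo(d),\elem)\in A^{\Jmc}$ and $(d,e,\elem)\in R^{\Imc_{\Omc^\semiringshort}}\Rightarrow(\homo(d),\homo(e),\elem)\in R^{\Jmc}$ for all $A\in\NC$, $R\in\NR$: when the $C\sqsubseteq\exists P$ rule fires on $(d,\elemb)\in C^{\Imc_n}$, we have $(\homo(d),\elemb)\in C^{\Jmc}$ (by the induction hypothesis and a structural induction showing that $\homo$ preserves membership in all complex concepts and inverse roles), so $\Jmc\models\pair{C\sqsubseteq\exists P}{\elem}$ supplies a witness to serve as $\homo(d_f)$. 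Then, if $C^{\Imc_{\Omc^\semiringshort}}\neq\emptyset$ for some $\pair{C\sqsubseteq\bot}{\elem}\in\Omc^\semiringshort$, or $(d,e,\elem)\in P^{\Imc_{\Omc^\semiringshort}}\cap Q^{\Imc_{\Omc^\semiringshort}}$ for some $\pair{P\sqcap Q\sqsubseteq\bot}{\elem}\in\Omc^\semiringshort$, applying $\homo$ contradicts $\Jmc\models\Omc^\semiringshort$; hence all negative axioms hold in $\Imc_{\Omc^\semiringshort}$. (Alternatively, by Lemma~\ref{lem:relationship-annotated-standard-models} the ontology $\Omc$ is satisfiable and the non-annotated reduct of $\Imc_{\Omc^\semiringshort}$ is a---possibly redundant---variant of the classical canonical model of $\Omc$; since negative axioms only constrain which elements and pairs populate concepts and roles, they transfer from the classical model.) The main obstacle is thus this homomorphism bookkeeping---checking that $\homo$ can always be extended when the existential rule fires and that membership preservation lifts through $\sqcap$ and $\exists P.C'$---whereas all reasoning about the positive axioms is indeed the ``easy consequence of the definition'' that the statement advertises.
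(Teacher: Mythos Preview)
Your proof is correct. The paper gives no proof at all, calling the proposition ``an easy consequence of the definition of $\Imc_{\Omc^\semiringshort}$,'' so there is no approach to compare against. Your split into positive and negative axioms, with the former handled by monotonicity, continuity of complex-concept interpretations along the chase, and fairness, and the latter handled via a homomorphism into an actual model $\Jmc$, is the standard way to make this precise. Note that your homomorphism construction for the negative part is exactly the content of Lemma~\ref{lem:homomo} (stated just after this proposition, with proof in the appendix), whose proof does not depend on Proposition~\ref{lem:can}; so you are effectively anticipating that lemma rather than introducing anything new. Your parenthetical alternative via Lemma~\ref{lem:relationship-annotated-standard-models} and the classical canonical model is less clean than the direct homomorphism argument, since it requires identifying the non-annotated reduct of $\Imc_{\Omc^\semiringshort}$ with a specific classical construction, but the primary argument stands on its own.
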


Let $\Imc=(\Delta^\Imc,K,\cdot^\Imc)$ and $\Jmc=(\Delta^\Jmc,K,\cdot^\Jmc)$ be $\semiringshort$-annotated interpretations and $d,e \in \Delta^\Imc$. 
A \emph{homomorphism} $\homo: \Imc \rightarrow \Jmc$ is a function from $\Delta^\Imc$ to $\Delta^\Jmc$ such that:

\begin{itemize}
	\item for all individual names $a \in \NI$,   $\homo(a^{\Imc})=a^{\Jmc}$;
	\item for all concept names $A \in \NC$, if $(d,\elem) \in A^{\Imc}$ then $(\homo(d),\elem) \in A^{\Jmc}$;
	\item for all role names $R \in \NR$, if $(d,e,\elem) \in R^{\Imc}$ then $(\homo(d),\homo(e),\elem) \in R^{\Jmc}$.
\end{itemize} 

Lemma \ref{lem:homomo} states that $\Imc_{\Omc^\semiringshort}$ has the usual property of canonical models.
\begin{restatable}{lemma}{lemhomomo}\label{lem:homomo} For every model $\Imc$ of  a satisfiable $\Omc^\semiringshort$, 
	there is a homomorphism 
	$\homo: \Imc_{\Omc^\semiringshort} \rightarrow \Imc$.
\end{restatable}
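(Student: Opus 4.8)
The plan is to build the homomorphism $\homo\colon\Imc_{\Omc^\semiringshort}\to\Imc$ by induction on the stages $\Imc_n$ of the construction of the canonical model, defining a compatible family of partial maps $\homo_n\colon\Delta^{\Imc_n}\to\Delta^\Imc$ and taking $\homo=\bigcup_{n\ge 0}\homo_n$. For the base case $\Imc_0$, recall that $\Delta^{\Imc_0}=\NI$ and $a^{\Imc_0}=a$; since $\Imc$ is a model of $\Omc^\semiringshort$, for every annotated assertion $(A(a),\elem)\in\Omc^\semiringshort$ we have $(a^\Imc,\elem)\in A^\Imc$, and similarly for role assertions. So setting $\homo_0(a):=a^\Imc$ already satisfies the three homomorphism conditions restricted to the pairs/triples present in $\Imc_0$, because those pairs/triples come precisely from the assertions in $\Omc^\semiringshort$.

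For the inductive step, suppose $\homo_n$ is a homomorphism from $\Imc_n$ into $\Imc$ (in the sense that it preserves all annotated membership facts present in $\Imc_n$), and consider the rule applied to obtain $\Imc_{n+1}$. The two ``completion'' rules (for $P\sqsubseteq Q$ and for $C\sqsubseteq A$) do not add new domain elements, so we keep $\homo_{n+1}:=\homo_n$ and only need to check that the newly added fact is preserved. This is where one uses that $\Imc\models\Omc^\semiringshort$ together with the induction hypothesis: e.g.\ if $(d,e,\elemb)\in P^{\Imc_n}$ triggered the rule for $(P\sqsubseteq Q,\elem)$, then by IH $(\homo_n(d),\homo_n(e),\elemb)\in P^\Imc$, and since $\Imc$ satisfies $(P\sqsubseteq Q,\elem)$ we get $(\homo_n(d),\homo_n(e),\elem\otimes\elemb)\in Q^\Imc$, which is exactly the image of the new triple $(d,e,\elem\otimes\elemb)\in Q^{\Imc_{n+1}}$. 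Here one needs the auxiliary observation that if $(d,\elemb)\in C^{\Imc_n}$ for a (left-hand-side) concept $C$ built from the grammar, then $(\homo_n(d),\elemb)\in C^{\Imc}$; this follows by a short structural induction on $C$ using the definition of $\cdot^\Imc$ on $\exists P.C$ and $C\sqcap C'$ and the fact that $\homo_n$ preserves role and concept memberships with their annotations. The only rule that enlarges the domain is the one for $(C\sqsubseteq\exists P,\elem)$, which adds a fresh $d_f$ with $(d,d_f,\elem\otimes\elemb)\in P^{\Imc_{n+1}}$ where $(d,\elemb)\in C^{\Imc_n}$. By the structural observation, $(\homo_n(d),\elemb)\in C^\Imc$; since $\Imc\models(C\sqsubseteq\exists P,\elem)$, there is some $e\in\Delta^\Imc$ with $(\homo_n(d),e,\elem\otimes\elemb)\in P^\Imc$; set $\homo_{n+1}(d_f):=e$ and leave $\homo_{n+1}$ unchanged elsewhere. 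Fairness of rule application guarantees that every fact of $\Imc_{\Omc^\semiringshort}$ appears at some finite stage, so $\homo=\bigcup_n\homo_n$ is well defined and is a homomorphism.

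The step I expect to require the most care is the handling of the existential rule, for two reasons. First, the choice of witness $e\in\Delta^\Imc$ is non-canonical, so one must be slightly careful that extending $\homo_n$ by $d_f\mapsto e$ never conflicts with a later extension --- this is fine because each fresh $d_f$ is introduced exactly once and is never the $d$ of a subsequent existential application in a way that would reassign it, but it is worth stating explicitly. Second, one must make sure the structural-induction lemma about left-hand-side concepts $C$ is set up correctly, including the treatment of inverse roles $R^-$ (the construction allows $P$ to be an inverse role in the $\exists P$ rule), using the clause $(R^-)^\Imc=\{(e,d,\elem)\mid(d,e,\elem)\in R^\Imc\}$; this is routine but is the one place where the annotation $\elem$ has to be tracked through the inverse. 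Everything else is bookkeeping, and the $\omega$-continuity of $\semiringshort$ plays no role here since we only manipulate individual annotated facts, not infinite sums.
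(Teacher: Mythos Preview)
Your proposal is correct and follows essentially the same approach as the paper: build $\homo$ as the union of stage-wise maps $\homo_n$ defined by induction on the canonical-model construction, using the structural observation (the paper's Lemma~\ref{lem:hom}) that homomorphisms preserve membership in complex left-hand-side concepts, and extending $\homo_n$ by choosing a witness in $\Imc$ for each fresh element introduced by the existential rule. The paper's proof is a bit more terse about the well-definedness of the witness choice and does not explicitly flag the inverse-role subtlety, but the structure is the same.
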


Theorem~\ref{thm:can-model-main}, \blue{which relies on Lemma~\ref{lem:homomo},}  states that the canonical model behaves as expected for annotated assertions and BCQs entailment (note that annotated BCQs actually subsume annotated assertions, so that the two first points are consequences of the last one). 

\begin{restatable}{theorem}{thmcanmodelmain}\label{thm:can-model-main}
\new{Let $\semiringshort$ be a commutative semiring.} For every  satisfiable $\semiringshort$-annotated $\ELHIbot$ ontology $\Omc^\semiringshort$  the following hold:
 \begin{itemize}
 	\item for every $\elem\in K$, 
 	$a\in\NI$, and $A\in\NC$, $\Omc^\semiringshort\models (A(a),\elem)$ iff $\Imc_{\Omc^\semiringshort}\models (A(a),\elem)$; 
 	\item for every $\elem\in K$,  $a,b\in\NI$, and $R\in\NR$, $\Omc^\semiringshort\models (R(a,b),\elem)$ iff $\Imc_{\Omc^\semiringshort}\models (R(a,b),\elem)$;
 		\item for every $\elem\in K$ and  
 	  BCQ $q$, 
 	$\Omc^\semiringshort\models (q,\elem)$ iff $\Imc_{\Omc^\semiringshort}\models (q,\elem)$. 
 \end{itemize}
\end{restatable}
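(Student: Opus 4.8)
The plan is to obtain the theorem as a corollary of the two facts already established about the canonical model: that $\Imc_{\Omc^\semiringshort}$ is itself a model of $\Omc^\semiringshort$ (Proposition~\ref{lem:can}), and that it maps homomorphically into every model of $\Omc^\semiringshort$ (Lemma~\ref{lem:homomo}). Since an annotated assertion $(A(a),\elem)$, resp.\ $(R(a,b),\elem)$, is satisfied by an annotated interpretation exactly when the annotated BCQ consisting of the single atom $A(a)$, resp.\ $R(a,b)$, is satisfied — the match for the corresponding one-atom extended query just picks a pair $(a^\Imc,\elem)\in A^\Imc$, resp.\ a triple $(a^\Imc,b^\Imc,\elem)\in R^\Imc$, and its provenance product is that single label — the first two items follow from the third, so I would only argue the BCQ case (as the statement itself already notes).

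For the ``only if'' direction, suppose $\Omc^\semiringshort\models(q,\elem)$. By Proposition~\ref{lem:can}, $\Imc_{\Omc^\semiringshort}$ is a model of $\Omc^\semiringshort$, so by the definition of entailment $\Imc_{\Omc^\semiringshort}\models(q,\elem)$; this direction is immediate and needs nothing beyond soundness of the chase. For the ``if'' direction, suppose $\Imc_{\Omc^\semiringshort}\models(q,\elem)$, and let $\Imc$ be an arbitrary model of $\Omc^\semiringshort$; I must show $\Imc\models(q,\elem)$. Let $\homo\colon\Imc_{\Omc^\semiringshort}\to\Imc$ be the homomorphism from Lemma~\ref{lem:homomo}. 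From $\Imc_{\Omc^\semiringshort}\models(q,\elem)$ there is a match $\pi\in\nu_{\Imc_{\Omc^\semiringshort}}(\ext{q})$ with $\bigotimes_{P(\vec{t},t)\in\ext{q}}\pi(t)=\elem$. I would define $\pi'$ on $\mn{terms}(\ext{q})$ by setting $\pi'(s)=\homo(\pi(s))$ on the object terms $s$ (individual names and variables of $q$) and $\pi'(t)=\pi(t)$ on the provenance terms $t\in\vec{t}$, which range over $\semiringset$ and on which $\homo$ plays no role. Then I check that $\pi'$ is a match for $\ext{q}$ in $\Imc$: for an individual name $a$, $\pi'(a)=\homo(\pi(a))=\homo(a^{\Imc_{\Omc^\semiringshort}})=a^{\Imc}$ by the match condition on $\pi$ and the homomorphism condition on individuals; for a concept atom $A(x,t)\in\ext{q}$, $(\pi(x),\pi(t))\in A^{\Imc_{\Omc^\semiringshort}}$ yields $(\homo(\pi(x)),\pi(t))\in A^{\Imc}$, i.e.\ $(\pi'(x),\pi'(t))\in A^{\Imc}$, by preservation of concept-name memberships, and similarly for each role atom $R(x_1,x_2,t)$. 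Hence $\pi'\in\nu_{\Imc}(\ext{q})$, and since $\pi'$ agrees with $\pi$ on every provenance term, $\bigotimes_{P(\vec{t},t)\in\ext{q}}\pi'(t)=\elem$, so $\elem\in\p{\Imc}{\ext{q}}$ and $\Imc\models(q,\elem)$. As $\Imc$ was an arbitrary model, $\Omc^\semiringshort\models(q,\elem)$.

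I expect the only delicate point to be purely bookkeeping: keeping the two sorts of terms of $\ext{q}$ apart when composing with $\homo$, and noticing that a homomorphism — defined only on $\Delta^{\Imc_{\Omc^\semiringshort}}$ and required to transfer membership only for concept \emph{names} and role \emph{names} — is nonetheless enough here, precisely because extended CQ atoms never mention complex concepts. No fresh combinatorial reasoning about the chase construction itself is needed at this stage, that content being already packaged into Proposition~\ref{lem:can} and Lemma~\ref{lem:homomo}; the converse inclusion in each ``iff'' is supplied entirely by the soundness of $\Imc_{\Omc^\semiringshort}$, and the forward inclusion entirely by transferring the witnessing match along $\homo$.
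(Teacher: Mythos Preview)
Your proof is correct and follows essentially the same approach as the paper: reduce the first two items to the BCQ case, use Proposition~\ref{lem:can} for the ``only if'' direction, and for the ``if'' direction use the homomorphism from Lemma~\ref{lem:homomo} to push the witnessing match forward. The only difference is that the paper packages the step ``homomorphisms preserve satisfaction of annotated BCQs'' as a separate lemma (Lemma~\ref{lem:homcq}), whereas you prove it inline; your construction of $\pi'$ is exactly the one in that lemma's proof.
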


\new{In the classical $\EL$ family, 
	 one can define a canonical model $\Imc_{C,\Omc}$ of a concept $C$ and an ontology $\Omc$ such that if $C$ is satisfiable \wrt $\Omc$ (\ie there exists a model $\Imc$ of $\Omc$ such that $C^\Imc\neq\emptyset$), then for every   concept $D$, $\Omc\models C\sqsubseteq D$ iff $d_C\in D^{\Imc_{C,\Omc}}$ where $d_C$ is a distinguished domain element that ``represents'' $C$~(see e.g.~\cite{DBLP:journals/jsc/LutzW10}). Intuitively, $\Imc_{C,\Omc}$ is built as the canonical model of $\Omc$ but starts from a modified $\Imc_0$ which is such that $d_C\in C^{\Imc_0}$. For example, if $C=A\sqcap \exists R.B$, $d_C$ is added to $A^{\Imc_0}$, $(d_C, x_{RB})$ to $R^{\Imc_0}$ and $x_{RB}$ to $B^{\Imc_0}$.} 
\new{Example~\ref{ex:problem-gcis-sem-entail} illustrates the difficulty to extend this result to the annotated case, \ie obtain a canonical model $\Jmc$ of $C$ and $\Omc^\semiringshort$ such that $(d_C,\elem)\in D^\Jmc$ iff $\Omc^\semiringshort\models (C\sqsubseteq D,\elem)$.}  
\begin{example}\label{ex:problem-gcis-sem-entail}
	\blue{It would be difficult to obtain a canonical model $\Jmc$ of $A\sqcap B$ and $\Omc^{\posbool}=\{(A\sqsubseteq D, x)\}$ such that $(d_{A\sqcap B},\elem)\in D^\Jmc$ iff $\Omc^{\posbool}\models (A\sqcap B\sqsubseteq D,\elem)$. 
		We first argue that there is no $\elem\in\posbool$ such that $\Omc^{\posbool}\models (A\sqcap B\sqsubseteq D, \elem)$. Indeed, the \posbool-annotated interpretation $\Imc$ below is a model of $\Omc^{\posbool}$ and there is no $\elem$ such that $x=\elem\times y$.
		\begin{align*}
			A^{\Imc}=\{(e,1)\} \quad B^{\Imc}=\{(e,y)\} \quad (A\sqcap B)^{\Imc}=\{(e,y)\} \quad D^{\Imc}=\{(e,x)\}
		\end{align*}	 
		For every annotated model $\Jmc$ of $\Omc^{\posbool}$, if there is $(d,\chi)\in (A\sqcap B)^\Jmc$ as we would expect for a representative $d$ of $A\sqcap B$, there must be some $(d,\chi')\in A^\Jmc$ so that $(d,\chi'\times x)\in D^\Jmc$, while, by the above argument, we have that $\Omc^{\posbool}\not\models (A\sqcap B\sqsubseteq D, \chi'\times x)$.}
\end{example}
\new{We thus restrict our attention to \emph{basic concepts} (\ie concept names and concepts of the form $\exists P$). For such a basic concept $C$, we define the canonical model} 
$\Imc_C$ as the interpretation 
with domain $\Delta^{\Imc_C}:=\{d_C\}\cup \NI$ if $C$ is a concept name, and $\Delta^{\Imc_C}:=\{d_C, d_f\}\cup \NI$ otherwise, such that $a^{\Imc_C}:=a$ for all
$a\in\NI$, 
$A^{\Imc_C}:=\{(d_A,\one)\}$ if $C=A\in\NC$, 
$R^{\Imc_{C}}:=\{(d_{\exists R},d_f,\one)\}$ if $C=\exists R$, and $R^{\Imc_{C}}:=\{(d_f,d_{\exists R^-},\one)\}$ if $C=\exists R^-$, and all other concept and role names are mapped to the empty set. 
The canonical model $\Imc_{C,\Omc^\semiringshort}$ for a basic concept $C$ and a
$\semiringshort$-annotated $\ELHIbot$ ontology $\Omc^\semiringshort$
\new{such that $C$ is satisfiable \wrt $\Omc^\semiringshort$} is defined in the same way as the canonical model for $\Omc^\semiringshort$
except that $\Imc_0$ \blue{is replaced by} $\Imc_C$ (and  assertions if they are present in the ontology).

\begin{restatable}{theorem}{thmcanmodelmaingci}\label{thm:can-model-main-gci}
Assume $\semiringshort$ is a commutative \timesidem semiring.  
	Let $\Omc^\semiringshort$
	be a  satisfiable $\semiringshort$-annotated $\ELHIbot$ ontology 
	\new{such that $\Omc$ does not contain any GCI with $\top$ as left-hand side,}
	and let $C,D$ be \emph{basic} concepts \new{such that $C$ is} satisfiable 
	w.r.t. $\Omc^\semiringshort$. 
	Then, 
for every $\elem\in K$,   $(d_C,\elem)\in D^{\Imc_{C,\Omc^\semiringshort}}$ iff ${\Omc^\semiringshort}\models (C\sqsubseteq D,\elem)$.
\end{restatable}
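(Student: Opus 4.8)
The plan is to prove the two implications separately; only the forward one ($\Rightarrow$) will use $\otimes$-idempotency.

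For the direction $(\Leftarrow)$ I would argue directly. Assume $\Omc^\semiringshort\models(C\sqsubseteq D,\elem)$. Since $C$ is satisfiable w.r.t.\ $\Omc^\semiringshort$ the canonical model $\Imc_{C,\Omc^\semiringshort}$ exists, and the construction (exactly as for Proposition~\ref{lem:can}) gives $\Imc_{C,\Omc^\semiringshort}\models\Omc^\semiringshort$. Moreover $\Imc_0$ starts from $\Imc_C$, which places $(d_C,\one)\in C^{\Imc_C}$: immediately when $C$ is a concept name, and through the fresh element $d_f$ (annotated $\one$) together with the definition of $(\exists P.\top)^{\Imc_C}$ when $C=\exists P$. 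Hence $(d_C,\one)\in C^{\Imc_{C,\Omc^\semiringshort}}$, and since $\Imc_{C,\Omc^\semiringshort}$ satisfies $(C\sqsubseteq D,\elem)$ we get $(d_C,\elem\otimes\one)=(d_C,\elem)\in D^{\Imc_{C,\Omc^\semiringshort}}$. Note this needs neither idempotency nor that $C,D$ are basic.

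For $(\Rightarrow)$, assume $(d_C,\elem)\in D^{\Imc_{C,\Omc^\semiringshort}}$ and let $\Imc$ be any model of $\Omc^\semiringshort$ with $(d,\elemb)\in C^\Imc$; the goal is $(d,\elem\otimes\elemb)\in D^\Imc$. Because $C$ is basic, Lemma~\ref{lem:homomobasic} supplies a homomorphism $\homo:\Imc_{C,\elemb,\Omc^\semiringshort}\rightarrow\Imc$ with $\homo(d_C)=d$. The heart of the proof is a \emph{scaling lemma} comparing $\Imc_{C,\Omc^\semiringshort}$ with $\Imc_{C,\elemb,\Omc^\semiringshort}$: for every stage of the construction and every element $d'$ that is $d_C$ or one of the fresh elements generated below it, if $(d',\gamma)$ lies in a concept (resp.\ a triple $(d',e',\gamma)$ in a role) of $\Imc_{C,\Omc^\semiringshort}$, then the corresponding stage of $\Imc_{C,\elemb,\Omc^\semiringshort}$ contains the same fact with annotation $\elemb\otimes\gamma$; on the individuals and the fresh trees hanging below them the two models are identical, since the $d_C$-part is disconnected from the individual part (no role fact ever links them). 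Instantiating this at $d'=d_C$ with the concept $D$ gives $(d_C,\elem\otimes\elemb)\in D^{\Imc_{C,\elemb,\Omc^\semiringshort}}$; pushing this through $\homo$ — which preserves membership in a basic concept $D$ (directly for a concept name, and via the witnessing role edge together with $\top$ always being annotated $\one$ when $D=\exists P$) — yields $(d,\elem\otimes\elemb)\in D^\Imc$.

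The scaling lemma is the main obstacle, and I would prove it by induction on the canonical construction. The base case and the rules for role inclusions and for GCIs whose left-hand side is atomic or an $\exists P$ only require commutativity of $\otimes$. The delicate cases are applications of a GCI $(C'\sqsubseteq B,\gamma)$ in which $C'$ is a conjunction $C_1\sqcap C_2$ or an existential restriction $\exists P.C'$: there the triggering annotation is a product $\gamma_1\otimes\gamma_2$ of two annotations, each located in the $d_C$-part and hence scaled by $\elemb$ via the induction hypothesis, so the naive scaled triggering annotation is $\elemb\otimes\elemb\otimes\gamma_1\otimes\gamma_2$ instead of $\elemb\otimes(\gamma_1\otimes\gamma_2)$ — and $\otimes$-idempotency is precisely what collapses $\elemb\otimes\elemb$ to $\elemb$, keeping the invariant intact (an analogous structural induction handles arbitrary, non-normalized $\ELHI_\bot$ left-hand sides, with the usual special handling of $\top$). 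Once the scaling lemma is established, the theorem follows as sketched above.
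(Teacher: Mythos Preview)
Your proposal follows essentially the same approach as the paper. The $(\Leftarrow)$ direction is identical; for $(\Rightarrow)$ your scaling lemma is precisely the paper's Claim~\ref{cl:helper} (with the natural inductive invariant spelled out and the role of $\otimes$-idempotency made explicit), after which both arguments invoke Lemma~\ref{lem:homomobasic} and then push membership in $D$ through the homomorphism---the paper via Lemma~\ref{lem:hom}, you by hand for basic~$D$.
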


\new{Note that ${\Omc^\semiringshort}\models (C\sqsubseteq D,\elem)$ implies $(d_C,\elem)\in D^{\Imc_{C,\Omc^\semiringshort}}$ even without the restrictions on $\Omc$ and $\semiringshort$, since $\Imc_{C,\Omc^\semiringshort}$ is a model of $\Omc^\semiringshort$. However, the other direction of Theorem~\ref{thm:can-model-main-gci} does not hold 
if $\Omc$ contains some GCI with $\top$ as its left-hand side; for example, if $\Omc^\semiringshort=\{(\top\sqsubseteq D,\elem)\}$ (since in this case one can show that $\Omc\not\models (C\sqsubseteq D,\elem)$ for $C\in\NC\setminus\{D\}$ as we did in  Example~\ref{ex:fuzzy-GCI-top}). Example~\ref{ex:timesidemforcanmodelgci} shows that the $\otimes$-idempotency condition is also necessary.}

\begin{example}\label{ex:timesidemforcanmodelgci}
\new{Let $\Omc^{\why}=\{(A\sqsubseteq B_1,1), (A\sqsubseteq B_2, 1), (B_1\sqcap B_2\sqsubseteq C, 1)\}$. 
It is easy to see that $(d_A,1)\in C^{\Imc_{A,\Omc^{\why}}}$. 
However, there is no $\elem\in\why$ such that $\Omc^{\why}\models (A\sqsubseteq C, \elem)$. 
Indeed, the \why-annotated interpretation $\Imc$ below is a model of $\Omc^{\why}$ and there is no $\elem$ such that $x+y+xy=\elem\times (x+y)$. 
\begin{align*}
		A^{\Imc}=&\{(e,x+y)\} \quad B_1^{\Imc}=\{(e,x+y)\} \quad B_2^{\Imc}=\{(e,x+y)\} \quad C^{\Imc}=\{(e,x+y+xy)\}
\end{align*}
	}
	Intuitively, this is because when the semiring is not multiplicatively idempotent, $A\sqcap A$ and $A$ are not equivalent: ${\Omc^{\why}}\models (A\sqcap A\sqsubseteq C,1)$ but ${\Omc^{\why}}\not\models (A\sqsubseteq C,1)$ \blue{(recall also Example~\ref{ex:tautologies-with-prov-zero})}. 	
\end{example}

\new{
Finally, we briefly show how role inclusions can be captured using a canonical model. For a role $P$ (with $P\in\NR$ or $P=R^-$ for some $R\in\NR$), we define $\Imc_P$ as the interpretation with domain $\Delta^{\Imc_P}:=\{d_1,d_2\}\cup\NI$ such that $a^{\Imc_P}:=a$ for all $a\in\NI$, $P^{\Imc_P}:=\{(d_1,d_2,\one)\}$, and all other concept and role names mapped to the empty set. The canonical model $\Imc_{P,\Omc^\semiringshort}$ for a role $P$ and a
$\semiringshort$-annotated $\ELHIbot$ ontology $\Omc^\semiringshort$
such that $P$ is satisfiable \wrt $\Omc^\semiringshort$ (\ie there exists a model $\Imc$ of $\Omc^\semiringshort$ such that $P^\Imc\neq\emptyset$) is defined in the same way as the canonical model for $\Omc^\semiringshort$ 
except that $\Imc_0$ starts with $\Imc_P$ (and  assertions if they are present in the ontology). 
Contrary to the GCI case, we do not need the semiring to be \timesidem to obtain the following result. }
\begin{restatable}{theorem}{thmcanmodelmainri}\label{thm:can-model-main-ri}
\new{Let $\semiringshort$ be a commutative semiring.  
	Let $\Omc^\semiringshort$
	be a  satisfiable $\semiringshort$-annotated $\ELHIbot$ ontology
	and let $P,Q$ be two roles such that $P$ is satisfiable 
	w.r.t. $\Omc^\semiringshort$. 
	Then, 
for every $\elem\in K$,   $(d_1,d_2,\elem)\in Q^{\Imc_{P,\Omc^\semiringshort}}$ iff ${\Omc^\semiringshort}\models (P\sqsubseteq Q,\elem)$.}
\end{restatable}

\subsection{\new{Reduction Between Assertion and GCI or RI Entailment}}\label{sec:reduction-tasks}
\new{It is well-known that in $\ELHI_\bot$, GCI or RI entailment can be reduced to assertion entailment in polynomial time, and reciprocally. 
In the case of $\semiringshort$-annotated $\ELHIbot$ ontologies, we obtain similar reductions, with a restriction on the semiring in the case of GCI entailment.}

\begin{restatable}{theorem}{thReductionConcept}\label{th:red-concept}
\new{If $\semiringshort$ is a commutative \timesidem semiring then, for every $\semiringshort$-annotated $\ELHIbot$ ontology $\Omc^\semiringshort=\tup{\Omc,\lambda}$ such that $\Omc$ does not contain any GCI with $\top$ as left-hand side, the following hold.
\begin{itemize}
\item For every GCI between basic concepts $C\sqsubseteq D$ and $\elem_0\in\semiringset$, $${\Omc^\semiringshort}\models (C\sqsubseteq D,\elem_0)\text{ iff }{\Omc^\semiringshort}\cup\Tmc^\semiringshort_D\cup\Amc^\semiringshort_C\models (E(a_0), \elem_0)$$ where $\Tmc^\semiringshort_D=\{(D\sqsubseteq E, \one)\}$,
 $\Amc^\semiringshort_C=\{(C(a_0),\one)\}$ if $C\in\NC$, 
and $\Amc^\semiringshort_C=\{(P(a_0,b_0),\one)\}$ if $C=\exists P$ (where $P(a_0,b_0)$ denotes $R(b_0,a_0)$ if $P=R^-$), 
with $a_0, b_0\in\NI\setminus\individuals{\Omc}$ and $E\in\NC\setminus\signature{\Omc}$.
\item For every concept assertion $B(a_0)$ and $\elem_0\in\semiringset$, $${\Omc^\semiringshort}\models (B(a_0),\elem_0)\text{ iff }\Tmc^\semiringshort\models (C_{a_0}\sqsubseteq B,\elem_0)$$ 
where $\Tmc^\semiringshort$ is defined as follows (assuming that for all $a,b\in\individuals{\Omc}$ and $R\in\NR$, $C_a$ and $R_{ab}$ are fresh concept and role names respectively):
\begin{align*}
\Tmc^\semiringshort=&\Omc^\semiringshort\cup\bigcup_{a\in\individuals{\Omc}}\Tmc^\semiringshort_{C_a}\\
\Tmc^\semiringshort_{C_a}=&\{(C_a\sqsubseteq A, \elem)\mid (A(a),\elem)\in{\Omc^\semiringshort}\}\ \cup\\
&\{(R_{ab}\sqsubseteq R, \elem), (C_a\equiv \exists R_ {ab},\one),    (C_b\equiv \exists R_{ab}^-, \one)\mid (R(a,b),\elem)\in{\Omc^\semiringshort}\}.
\end{align*}
\end{itemize}}
\end{restatable}
\new{The reason for the restriction of the use of $\top$ in the left-hand side of GCIs is the same as in Theorem~\ref{thm:can-model-main-gci}: if $\Omc^\semiringshort=\{(\top\sqsubseteq D,\elem)\}$, it holds that $\Omc\not\models (C\sqsubseteq D,\elem)$ for every $C\in\NC\setminus\{D\}$ but $\Omc^\semiringshort\cup\Tmc^\semiringshort_D\cup\Amc^\semiringshort_C\models (E(a_0), \elem)$ because in every model $\Imc$ of $\Omc^\semiringshort\cup\Tmc^\semiringshort_D\cup\Amc^\semiringshort_C$, $(a_0^\Imc,\one)\in\top^\Imc$ so $(a_0^\Imc,\elem)\in D^\Imc\subseteq E^\Imc$. 
The annotated ontology $\Omc^{\why}$ defined in Example~\ref{ex:timesidemforcanmodelgci} shows that the $\otimes$-idempotency condition is necessary for the first item of the theorem: $\Omc^{\why}\not\models (A\sqsubseteq C,1)$ while $\Omc^{\why}\cup\{(C\sqsubseteq E,1)\}\cup\{(A(a_0),1)\}\models (E(a_0), 1)$. For the second item, we can consider $\Omc^{\why}\cup\{(A(a_0),1)\}$ and check that it entails $(C(a_0),1)$ while $\Tmc^{\why}=\Omc^{\why}\cup\{(C_{a_0}\sqsubseteq A,1)\}$ does not entail $(C_{a_0}\sqsubseteq C,1)$. 
However, the idempotency requirement can be lifted for RIs.}

\begin{restatable}{theorem}{thReductionRole}\label{th:red-role}
\new{If $\semiringshort$ is a commutative semiring then, for every $\semiringshort$-annotated $\ELHIbot$ ontology $\Omc^\semiringshort=\tup{\Omc,\lambda}$, the following hold.
\begin{itemize}
\item For every positive role inclusion $P_1\sqsubseteq P_2$ and $\elem_0\in\semiringset$, $$\Omc^\semiringshort{\models} (P_1\sqsubseteq P_2,\elem_0)\text{ iff }\Omc^\semiringshort\cup\{(P_1(a_0,b_0), \one)\}{\models}(P_2(a_0,b_0),\elem_0)$$ where $a_0,b_0\in\NI\setminus\individuals{\Omc}$ and $P_i(a_0,b_0)$ denotes $R(b_0,a_0)$ if $P_i=R^-$.
\item For every role assertion $R(a_0,b_0)$ and $\elem_0\in\semiringset$, $${\Omc^\semiringshort}\models(R(a_0,b_0),\elem_0)\text{ iff }\Tmc^\semiringshort_{S_{a_0,b_0}}\models (S\sqsubseteq R,\elem_0)$$ where $S\in\NR\setminus\signature{\Omc}$ and 
\begin{align*}
\Tmc^\semiringshort_{S_{a_0,b_0}}={} & \Omc^\semiringshort\cup\{(S\sqsubseteq R',\elem)\mid (R'(a_0,b_0),\elem)\in\Omc^\semiringshort\}\cup{} \\ 
& \{(S\sqsubseteq R'^-,\elem)\mid (R'(b_0,a_0),\elem)\in\Omc^\semiringshort\}.
\end{align*}
\end{itemize}}
\end{restatable}

\section{\new{Properties of the Provenance Semantics}}\label{sec:expected-properties}

\new{Inspired by the generic definition of a provenance semantics for Datalog queries and properties proposed by~\citet[Definition~3 and Properties~1 and~3]{DBLP:conf/kr/BourgauxBPT22}, we show in this section} 
that under some restrictions on the semiring, our semiring 
provenance definition satisfies properties that are expected for a provenance semantics.

\subsection{Preservation of Entailment} Theorem \ref{th:sem-entailment} ensures that the semantics reflects the entailment or non-entailment of axioms and queries \new{from the non-annotated ontology}.

\begin{restatable}{theorem}{thsementailment}\label{th:sem-entailment}
Let $\semiringshort$ be a commutative \new{complete semiring (or $\omega$-complete with $\semiringset$ countable)}. For every $\semiringshort$-annotated $\ELHIbot$ ontology $\Omc^\semiringshort=\tup{\Omc,\lambda}$ and every 
$\alpha$ that is an assertion, a BCQ, \new{an RI}, or a GCI between basic concepts: 
\begin{itemize}
\item $\Omc\not\models\alpha$ implies $\Pmc(\alpha,\Omc^\semiringshort) = \zero$; 
\item if $\semiringshort$ is positive\new{, $\alpha$ is an assertion, a BCQ, or an RI,} and $\Pmc(\alpha,\Omc^\semiringshort) = \zero$, then $\Omc\not\models\alpha$; 
\item if $\semiringshort$ is positive \new{and \timesidem,  $\Omc$ does not contain any GCI with $\top$ as left-hand side,} and $\Pmc(\alpha,\Omc^\semiringshort) = \zero$, then $\Omc\not\models\alpha$. 
\end{itemize}
\end{restatable}
Unfortunately, the third point of Theorem~\ref{th:sem-entailment} does not hold if $\alpha$ is a GCI with a complex concept on its left-hand side, as we can see on Example~\ref{ex:problem-gcis-sem-entailtwo}.
\begin{example}\label{ex:problem-gcis-sem-entailtwo}
\new{Consider $\Omc^{\posbool}=\{(A\sqsubseteq D,x)\}$. We have already shown 
(Example~\ref{ex:problem-gcis-sem-entail}) that there is no $\elem\in\posbool$ such that $\Omc^{\posbool}\models (A\sqcap B\sqsubseteq D, \elem)$. Hence it follows that $\Pmc( A\sqcap B\sqsubseteq D,\Omc^{\posbool}) = 0$, while $\Omc\models A\sqcap B\sqsubseteq D$ and $\posbool$ is positive and idempotent.}
\end{example}	
\new{The reason for the requirement on GCIs without $\top$ as left-hand side in the third point is the same explained
in the previous section: if $\Omc^\semiringshort=\{(\top\sqsubseteq D,\elem)\}$, then $\Omc\models C\sqsubseteq D$ for every $C\in\NC$, but for every $C\in\NC\setminus\{D\}$, it may be the case that $\Pmc(C\sqsubseteq D,\Omc^\semiringshort)=\zero$ (\cf Example~\ref{ex:fuzzy-GCI-top}).} 
The next example illustrates the impact of $\otimes$-idempotency 
\new{in the case where $\alpha$ is a GCI}.

\begin{example}\label{ex:idempotent}
	Let $\semiringshort$ be a \timesidem \new{commutative} semiring and consider the ontology $$\Omc^{\semiringshort}=\{(A\sqsubseteq B_1,\elem_1), (A\sqsubseteq B_2, \elem_2), (B_1\sqcap B_2\sqsubseteq C, \elem_3)\}.$$
	If $\Imc$ is a model of $\Omc^{\semiringshort}$ and 
	$(e,\eleme)\in A^\Imc$, then $(e,\eleme\otimes \elem_1)\in B_1^\Imc$ and $(e,\eleme\otimes \elem_2)\in B_2^\Imc$ so $(e,\eleme\otimes \elem_1\otimes \eleme\otimes \elem_2)\in (B_1\sqcap B_2)^\Imc$, i.e. $(e, \eleme\otimes \elem_1\otimes\elem_2)\in (B_1\sqcap B_2)^\Imc$ by $\otimes$-idempotency, which implies $(e, \eleme\otimes \elem_1\otimes\elem_2\otimes\elem_3)\in C^\Imc$. 
	Thus we get that $\Omc^{\semiringshort}\models (A\sqsubseteq C, \elem_1\otimes\elem_2\otimes\elem_3)$ and $\Pmc(A\sqsubseteq C,\Omc^{\semiringshort})=\elem_1\otimes\elem_2\otimes\elem_3$. 
	
\new{We have seen in Example~\ref{ex:timesidemforcanmodelgci} that this intuitive behavior is lost if we consider the
semiring $\semiringshort=\why$ and $\elem_1=\elem_2=\elem_3=1$, since in this case there is no $\elem\in\why$ such that $\Omc^{\why}\models (A\sqsubseteq C, \elem)$, so that $\Pmc(A\sqsubseteq C,\Omc^{\why})=0$. }
\end{example}
\new{Requiring $\otimes$-idempotency disregards the number of times an axiom is used in a derivation. 
Consider the ontology $\Omc^{\mi{Prov}[\semiringVars]}=\{(A\sqsubseteq B, x_1), (B\sqsubseteq A, x_2)\}$. 
If $\mi{Prov}[\semiringVars]=\series$, which is  not \timesidem, 
$\Pmc(A\sqsubseteq B,\Omc^{\series}) =\sum_{n\geq 0}x_1(x_1x_2)^n$, while if we consider the $\otimes$-idempotent semirings 
\posbool or \lin, we obtain $\Pmc(A\sqsubseteq B,\Omc^{\posbool}) =x_1$ and $\Pmc(A\sqsubseteq B,\Omc^{\lin}) = x_1 x_2$ respectively. }

\subsection{Consistency with Semiring Provenance for UCQs}
Theorems \ref{th:sem-algebra-cons-query} and \ref{th:sem-algebra-cons} ensure that our notion of provenance is 
consistent with the one defined for relational algebra queries over annotated databases. 
Given an annotated set of facts $\Dmc^\semiringshort=\tup{\Dmc,\lambda}$ and a Boolean UCQ 
$q=\exists\vec{x} \bigvee_{i=1}^n \varphi_i(\vec{x})$, we denote the relational provenance of $q$ over 
$\Dmc^\semiringshort$ seen as an annotated database by 
\[
\provdb(q,\Dmc^\semiringshort) = 
	\bigoplus_{i=1}^n\bigoplus_{\pi\in\Pi(\varphi_i,\Dmc) }\bigotimes_{P(\vec{t})\in \varphi_i}\lambda(\pi(P(\vec{t}))), 
\]
where $\Pi(\varphi_i,\Dmc)$ is the set of matches of $\varphi_i$ in $\Dmc$ \cite{Green07-provenance-seminal}. \new{Note that in this section we do not need to assume that the semiring is ($\omega$-)complete since the number of matches in $\Dmc$ is finite.}

\begin{restatable}{theorem}{thsemalgebraconsquery}\label{th:sem-algebra-cons-query}
Let $\semiringshort$ be a commutative \plusidem semiring. For every $\semiringshort$-annotated 
ontology $\Omc^\semiringshort=\tup{\Omc,\lambda}$ containing only assertions \new{(\ie being an annotated set of facts)}, and every 
BCQ $q$, \new{$\Pmc(q,\Omc^{\semiringshort})=\provdb(q,\Omc^\semiringshort)$.}
\end{restatable}
\new{The proof of Theorem~\ref{th:sem-algebra-cons-query} is based on the facts that (i) when $\Omc^{\semiringshort}$ is a set of annotated assertions, the canonical model $\Imc_{\Omc^{\semiringshort}}$  of $\Omc^\semiringshort$ is simply the model with domain $\NI$ that satisfies exactly these assertions, so the matches for $q$ in $\Omc$ correspond with the matches of the extended version $\ext{q}$ of $q$ in $\Imc_{\Omc^{\semiringshort}}$, and (ii) $\Omc^{\semiringshort}\models (q,\elem)$ iff $\Imc_{\Omc^{\semiringshort}}\models (q,\elem)$ (by Theorem~\ref{thm:can-model-main}).} 
Example~\ref{ex:pb-non-plus-idem-query} shows that $\oplus$-idempotency is a necessary condition of Theorem~\ref{th:sem-algebra-cons-query}.  
 
\begin{example}\label{ex:pb-non-plus-idem-query}
\new{	Let $\semiringshort$ be a commutative semiring which is not \plusidem and $\elem\in\semiringset$ such that $\elem\oplus\elem\neq\elem$. 
	Let $\Omc^\semiringshort=\{(A(a),\elem), (A(b),\elem)\}$ and $q=\exists y\, A(y)$. 
	There are exactly two matches for the extended version $\ext{q}=\exists yt\, A(y,t)$ of $q$ in $\Imc_{\Omc^\semiringshort}$: $\pi_1(y)=a$ and  $\pi_1(t)=\elem$, and $\pi_2(y)=b$ and  $\pi_2(t)=\elem$. 
	Hence $\Pmc(q, \Omc^\semiringshort) =\bigoplus_{\Omc^\semiringshort\models(q,\chi)} \chi= \bigoplus_{\Imc_{\Omc^\semiringshort}\models(q,\chi)} \chi=\elem$, which is different from  the relational provenance of $q$ over $\Omc^\semiringshort$, $\provdb(q,\Omc^\semiringshort)=\elem\oplus\elem$.}
	
\new{Theorem~\ref{th:sem-algebra-cons-query} does not hold when $\oplus$ is not idempotent even if all 
assertions have distinct annotations. Consider 
$\Omc^{\new{\series}}=\{(A(a),x_1), (A(b),x_2)\}$ and $q=\exists yz\, A(y)\land A(z)$. 
There are exactly four matches for the extended version $\ext{q}=\exists yzt_1t_2\, A(y,t_1)\wedge A(z,t_2)$ of $q$ in $\Imc_{\Omc^{\series}}$, obtained by mapping $(y, t_1)$ either to $(a,x_1)$ or to $(b,x_2)$, and similarly for $(z, t_2)$. 
We thus get 
$\Pmc(q, \Omc^{\new{\series}} )= x_1^2+x_2^2+x_1x_2$ while $\provdb(q,\Omc^{\new{\series}})= x_1^2+x_2^2+\mathbf{2}x_1x_2$. Our notion of provenance does not distinguish between the match that maps $y$ to $a$ and $z$ to $b$ and the one that maps $y$ to $b$ and $z$ to $a$.}

\new{This is actually not surprising since our provenance is defined by summing over a set of semiring elements (\cf Section~\ref{sec:designchoices} for the discussion of this design choice).}
\end{example}

The following theorem extends the \new{comparison between our notion of provenance and provenance for relational databases} 
to the case of concept assertion queries where the ontology is allowed to have a restricted form of GCIs. The intuition behind the theorem is that one can
rewrite atomic concept queries into UCQs by unfolding the wanted concept name into the prerequisites for deriving it.

\begin{restatable}{theorem}{thsemalgebracons}\label{th:sem-algebra-cons}
Let $\semiringshort$ be a commutative \plusidem semiring, and $A$ a fixed (but arbitrary)
concept name. For every 
$\semiringshort$-annotated ontology $\Omc^\semiringshort=\tup{\Omc,\lambda}$ that contains only (i)~assertions and 
(ii)~GCIs of the form $C\sqsubseteq A$, labelled with $\one$, where 
$C$ is an \ELHI concept not containing $A$, for every 
$a\in\NI$, \new{$\Pmc(A(a),\Omc^{\semiringshort})=\provdb(q,\Dmc^\semiringshort)$ where $\Dmc^\semiringshort$ is the set of the annotated assertions of $\Omc^\semiringshort$ and}
\[
q:= A(a)\vee\bigvee_{C\sqsubseteq A\in\Omc}q_{C}(a),
\]
with $q_{C}(x)$ the rooted tree-shaped query that retrieves all instances of $C$ \new{(\cf Section~\ref{sec:prelimDL})}.
\end{restatable}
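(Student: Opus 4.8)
The plan is to evaluate both sides on the canonical model $\Imc_{\Omc^\semiringshort}$ and to exploit the fact that, since $\Omc$ contains no existential GCIs, no role inclusions, and no concept name other than $A$ on the right-hand side of a GCI, this model is essentially trivial. First I would note that $\Omc^\semiringshort$ is always satisfiable here (no $\bot$ can occur), so Theorem~\ref{thm:can-model-main} applies and gives $\Pmc(A(a),\Omc^\semiringshort)=\bigoplus\{\elem\mid (a,\elem)\in A^{\Imc_{\Omc^\semiringshort}}\}$ (the degenerate case $a\notin\individuals{\Omc^\semiringshort}$ is handled by the same computation, the canonical model having domain $\NI$). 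Then I would pin down the shape of $\Imc_{\Omc^\semiringshort}$: its base $\Imc_0$ carries exactly the assertions of $\Omc^\semiringshort$ with their annotations, so it coincides, as an annotated structure, with the database $\Dmc^\semiringshort$ of the statement; the only applicable construction rule is the one for GCIs $C\sqsubseteq A$; no fresh domain elements are ever created; and the interpretation of every role name and of every concept name distinct from $A$ remains equal to its $\Imc_0$-value throughout the construction. Since $A$ does not occur in any $C$ with $C\sqsubseteq A\in\Omc$, this forces $C^{\Imc_n}=C^{\Imc_0}$ for every $n$, so applying the rule for $(C\sqsubseteq A,\one)$ merely copies each pair $(d,\one\otimes\elemb)=(d,\elemb)$ of $C^{\Imc_0}$ into the interpretation of $A$; by fairness we obtain $A^{\Imc_{\Omc^\semiringshort}}=A^{\Imc_0}\cup\bigcup_{C\sqsubseteq A\in\Omc}C^{\Imc_0}$ as sets of annotated pairs. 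The assumption that the GCIs carry the label $\one$ is essential here: a non-unit label would be multiplied into the copied annotation, which the tree-shaped query $q_C$ cannot reproduce.

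Next I would relate the ``database layer'' $C^{\Imc_0}$ to the relational provenance of $q_C(a)$ over $\Dmc^\semiringshort$. Lemma~\ref{lem:complex-concept-query-inter}, applied to the annotated interpretation $\Imc_0$ and the element $a$, yields $\{\elem\mid (a,\elem)\in C^{\Imc_0}\}=\p{\Imc_0}{\ext{q_C}(a)}$, the set of products $\bigotimes_{P(\vec t,t)\in\ext{q_C}(a)}\pi(t)$ over all matches $\pi$ of the extended query $\ext{q_C}(a)$ in $\Imc_0$. Because $\Imc_0$ is the annotated database $\Dmc^\semiringshort$ and each assertion carries a single, uniquely determined annotation, discarding the provenance variables from a match $\pi$ gives a bijection between the matches of $\ext{q_C}(a)$ in $\Imc_0$ and the matches $\sigma$ of $q_C(a)$ in $\Dmc$, under which $\bigotimes_{P(\vec t,t)\in\ext{q_C}(a)}\pi(t)=\bigotimes_{P(\vec t)\in q_C(a)}\lambda(\sigma(P(\vec t)))$. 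Hence the multiset of products that appears in the definition of the relational provenance of $q_C(a)$ has $\{\elem\mid (a,\elem)\in C^{\Imc_0}\}$ as its underlying set, and $\oplus$-idempotency collapses the multiset sum to the sum over that set. The disjunct $A(a)$ is handled by the same (immediate) observation, its relational provenance being $\bigoplus\{\elem\mid (a,\elem)\in A^{\Imc_0}\}$.

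Finally I would assemble the pieces. By definition the relational provenance of the UCQ $q$ is the $\oplus$-sum of the relational provenances of its disjuncts, that is $\bigoplus\{\elem\mid (a,\elem)\in A^{\Imc_0}\}\oplus\bigoplus_{C\sqsubseteq A\in\Omc}\bigoplus\{\elem\mid (a,\elem)\in C^{\Imc_0}\}$, whereas the first step gave $\Pmc(A(a),\Omc^\semiringshort)=\bigoplus\big(\{\elem\mid (a,\elem)\in A^{\Imc_0}\}\cup\bigcup_{C\sqsubseteq A\in\Omc}\{\elem\mid (a,\elem)\in C^{\Imc_0}\}\big)$. These two expressions coincide because over an $\oplus$-idempotent semiring the sum over a finite union of finite sets equals the $\oplus$-combination of the sums over the individual sets, even when the sets overlap; all sets here are finite since $\Omc$, hence $\Imc_0$, is finite. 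I expect the main obstacle to be the bookkeeping in the middle paragraph --- establishing the bijection between extended-query matches over $\Imc_0$ and relational matches over $\Dmc$, and keeping precise track of where the passage from a multiset of products to a set of values (hence the use of $\oplus$-idempotency) is invoked; this is exactly where the hypothesis bites, in line with the counter-examples of Example~\ref{ex:pb-non-plus-idem-query}. Determining the exact shape of $\Imc_{\Omc^\semiringshort}$ in the first step, in particular the identity $C^{\Imc_n}=C^{\Imc_0}$, is routine but must be spelled out.
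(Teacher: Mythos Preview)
The proposal is correct and follows essentially the same approach as the paper's proof: reduce to the canonical model via Theorem~\ref{thm:can-model-main}, observe that in this restricted setting only $A^{\Imc_{\Omc^\semiringshort}}$ differs from its value in $\Imc_0$ (so $A^{\Imc_{\Omc^\semiringshort}}=A^{\Imc_0}\cup\bigcup_{C\sqsubseteq A\in\Omc}C^{\Imc_0}$), invoke Lemma~\ref{lem:complex-concept-query-inter} to rewrite each $C^{\Imc_0}$ in terms of matches of $\ext{q_C}$, identify these with relational matches over $\Dmc^\semiringshort$, and conclude by $\oplus$-idempotency. Your write-up is in fact a bit more explicit than the paper's on the bijection between extended and relational matches and on why $C^{\Imc_n}=C^{\Imc_0}$, but the skeleton is the same.
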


\begin{remark}[On the possibility of extending Theorem \ref{th:sem-algebra-cons} to UCQ-rewritable queries]
\new{A query $q$ is \emph{UCQ-rewritable} \wrt a set of GCIs and RIs (TBox) $\Tmc$ if there exists a UCQ $q'$ such that for every ontology $\Omc=\Tmc\cup\Amc$ with $\Amc$ a set of assertions (ABox), it holds that $\Omc\models q$ iff $q'$ is satisfied by the interpretation $\Imc_\Amc$ with domain $\NI$ that satisfies exactly the assertions in $\Amc$. For example, it is well-known that every BCQ is UCQ-rewritable \wrt a \DLLITE TBox \cite{DBLP:journals/jar/CalvaneseGLLR07}.} 
\new{One could try to extend Theorem~\ref{th:sem-algebra-cons} to $\semiringshort$-annotated $\ELHIbot$ ontologies whose GCIs and RIs are annotated by $\one$ and BCQs that are UCQ-rewritable \wrt the set of GCIs and RIs in~\Omc, in order to obtain $\Pmc(q,\Omc^{\semiringshort})=\provdb(q',\Dmc^\semiringshort)$ with $q'$ a UCQ rewriting of $q$. 
However, Theorem~\ref{th:sem-algebra-cons} cannot be extended in this way without further restrictions. Consider $\Omc^{\boolseries}=\{(A(a),x_1),(B(a),x_2),(A\sqsubseteq C,1),(B\sqcap C\sqsubseteq A,1)\}$ and $q=A(a)$. Clearly, $q$ is UCQ-rewritable \wrt the GCIs in $\Omc$ (a UCQ-rewriting is $A(a)\vee (B(a)\wedge C(a))$) but $\Pmc(A(a),\Omc^{\boolseries})=\Sigma_{i\geq 0}x_1x_2^i$ while the provenance of a UCQ over a finite set of annotated facts in \boolseries is always a \emph{finite} sum of monomials.}

\new{Even if we add some kind of non-recursivity condition on $\Omc$, extending Theorem~\ref{th:sem-algebra-cons} would not be straightforward. Indeed, existing rewriting algorithms crucially rely on unification and minimization to allow, \eg to rewrite $\exists yz R(x,y)\wedge R(z,y)$ into $A(x)$ \wrt $A\sqsubseteq \exists R$. 
Hence, if $q'$ is obtained with such an algorithm, the result would not hold. 
Consider the following case: $\Omc^{\semiringshort}=\{(A(a),\elem), (A\sqsubseteq \exists R,\one), (\exists R^-\sqsubseteq B,\one), (\exists R.B\sqsubseteq C,\one)\}$. 
A UCQ rewriting of $C(a)$ \wrt $\Omc$ is a disjunction of the following CQs (or of a subset of irredundant queries): $C(a)$,  $\exists y R(a,y)\wedge B(y)$, $\exists yz R(a,y)\wedge R(z,y)$, $\exists y R(a,y)\wedge R(a,y)$, $\exists y R(a,y)$, and $A(a)$. Hence its relational provenance over $\{(A(a),\elem)\}$ is $\elem$. 
However, $\Pmc(C(a),\Omc^{\semiringshort})=\elem\otimes\elem$. Extending Theorem~\ref{th:sem-algebra-cons} to UCQ-rewritable queries would thus require to design rewriting algorithms that rewrite, \eg $\exists yz R(x,y)\wedge R(z,y)$ into $A(x)\wedge A(x)$ in this example.}
\end{remark}

\subsection{Consistency with Semiring Provenance for Datalog Queries}
Theorem \ref{th:sem-cons-datalog} ensures that our notion of provenance is also 
consistent with the one defined for Datalog queries over annotated databases \cite{Green07-provenance-seminal}. 
Following the notation of \citet{DBLP:conf/kr/BourgauxBPT22}, given an annotated database $\Dmc^\semiringshort=\tup{\Dmc,\lambda}$ and Datalog progam 
$\Sigma$, we denote the Datalog provenance of a fact $\alpha$ \wrt $\Sigma$ and 
$\Dmc^\semiringshort$ by 
\[
\provdat(\Sigma,\Dmc^\semiringshort,\alpha) = 
\bigoplus_{ t\in  T^\Sigma_\Dmc(\alpha)}  \  \bigotimes_{v\text{ is a leaf of $t$}} \lambda(v)
\]
where $T^\Sigma_\Dmc(\alpha)$ is the set of all derivation trees for $\alpha$ \wrt $\Sigma$ and $\Dmc$. 
\new{In the Datalog provenance literature, semirings are usually assumed to be $\omega$-continuous, but 
$\omega$-completeness is sufficient for $\provdat(\Sigma,\Dmc^\semiringshort,\alpha)$ to be well-defined. Hence, in this section, we consider $\omega$-complete semirings. We do not need to require that the semirings are complete because the ontologies we consider (corresponding to Datalog programs) are trivially satisfiable, 
which means that the set $\{\elem\mid\Omc^\semiringshort\models(\alpha,\elem)\}$ is guaranteed to be countable even if the semiring domain is not.}

\begin{theorem}\label{th:sem-cons-datalog}
For each  \ELHI concept $C$, let $q_C(x)$ be the rooted tree-shaped CQ that corresponds to $C$, and for every GCI or RI $\alpha$, let $F_\alpha$ be a \new{fresh} nullary predicate. 

Let $\semiringshort$ be a commutative \plusidem \new{$\omega$-complete} semiring. For every $\semiringshort$-annotated 
ontology $\Omc^\semiringshort=\tup{\Omc,\lambda}$ whose GCIs have only concept names as right-hand sides and RIs are positive, for every 
BCQ $q$, \new{$\Pmc(q,\Omc^{\semiringshort})=\provdat(\Sigma,\Dmc^\semiringshort,\mn{goal})$ with}  
\begin{align*}
&\Sigma=\{q_C(x)\wedge F_{C\sqsubseteq A}\rightarrow A(x) \mid C\sqsubseteq A\in\Omc\}
\\&\quad\quad\cup \{P(x,y)\wedge F_{P\sqsubseteq Q}\rightarrow Q(x,y) \mid P\sqsubseteq Q\in\Omc\}
\\&\quad\quad\cup\{q\rightarrow\mn{goal}\}\\
&\Dmc^\semiringshort=\tup{\Dmc,\lambda'}\\
\text{ with }&\Dmc=\{A(a)\mid A(a)\in\Omc\}\cup\{R(a,b)\mid R(a,b)\in\Omc\}\cup\{F_\alpha\mid \alpha\text{ is a GCI or RI of }\Omc\}\\
\text{ and }&\lambda'(\alpha)=\lambda(\alpha)\text{ if }\alpha\text{ is an assertion of }\Omc\\
&\lambda'(F_{\alpha})=\lambda(\alpha)\text{ if }\alpha\text{ is a GCI or RI of }\Omc.
\end{align*}
\end{theorem}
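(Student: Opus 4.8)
The plan is to reduce the statement to the case of atomic consequences and then to the already-available Datalog derivation-tree machinery. First I would record two structural facts that hold under the hypotheses. Since every GCI of $\Omc$ has a concept name on its right-hand side and every RI is positive, $\bot$ never occurs in $\Omc^\semiringshort$ and no existential GCI is present; hence $\Omc^\semiringshort$ is satisfiable, its canonical model $\Imc_{\Omc^\semiringshort}$ (Section~\ref{sec:canonical-model}) is built without ever introducing a fresh domain element, so $\Delta^{\Imc_{\Omc^\semiringshort}}=\NI$ and Theorem~\ref{thm:can-model-main} applies. Second, using the correspondence between $\ELHI$ concepts and rooted tree-shaped queries, the rule of the canonical-model construction for a GCI $C\sqsubseteq A$ mirrors the Datalog rule $q_C(x)\wedge F_{C\sqsubseteq A}\to A(x)$ — indeed, by Lemma~\ref{lem:complex-concept-query-inter} the premise ``$(d,\elemb)\in C^{\Imc_n}$'' of that rule is precisely a match of $\ext{q_C}$ rooted at $d$ with annotation product $\elemb$, and $F_{C\sqsubseteq A}$ contributes $\lambda(C\sqsubseteq A)$ — while the rule for a positive RI $P\sqsubseteq Q$ mirrors $P(x,y)\wedge F_{P\sqsubseteq Q}\to Q(x,y)$; consequently the non-annotated part of $\Imc_{\Omc^\semiringshort}$ is exactly the least fixpoint obtained by evaluating the rules of $\Sigma$ other than $q\to\goal$ over $\Dmc$.

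Next I would prove the atomic case: for every concept name $A$ and $a\in\NI$ (and, symmetrically, for role assertions $R(a,b)$),
\[
\{\elem\mid(a,\elem)\in A^{\Imc_{\Omc^\semiringshort}}\}=\big\{\textstyle\bigotimes_{v\text{ leaf of }t}\lambda'(v)\mid t\in T^\Sigma_\Dmc(A(a))\big\}.
\]
The inclusion ``$\subseteq$'' I would obtain by induction on the stage $n$ at which $(a,\elem)\in A^{\Imc_n}$ first appears: stage $0$ corresponds to an assertion of $\Omc^\semiringshort$, i.e.\ a one-node tree with a leaf of $\Dmc$; a pair created at stage $n{+}1$ by the rule for $C\sqsubseteq A$ from a match of $\ext{q_C}$ is turned, via the induction hypothesis applied to each body atom of that match, into a derivation tree rooted at $q_C(x)\wedge F_{C\sqsubseteq A}\to A(x)$ with the correct leaf product (role rules analogously). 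The inclusion ``$\supseteq$'' I would obtain by induction on the structure of $t$, using that $\Imc_{\Omc^\semiringshort}=\bigcup_n\Imc_n$ is a directed union so that the finitely many body facts needed by a rule application all occur at a common finite stage, at which point Lemma~\ref{lem:complex-concept-query-inter} lets the corresponding canonical-model rule fire. Applying $\bigoplus$ to both sides, using $\oplus$-idempotency of $\semiringshort$ to identify ``sum over the set of distinct leaf products'' with ``sum over all derivation trees'' (this is exactly where $\oplus$-idempotency is needed, as Example~\ref{ex:pb-non-plus-idem-query} shows it must be), and invoking Theorem~\ref{thm:can-model-main} to rewrite the left-hand side as $\Pmc(A(a),\Omc^\semiringshort)$, yields $\Pmc(A(a),\Omc^\semiringshort)=\provdat(\Sigma,\Dmc^\semiringshort,A(a))$ and likewise for $R(a,b)$ (noting that the rule $q\to\goal$ never occurs in a derivation tree of an assertion).

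Finally I would lift to an arbitrary BCQ $q$ by decomposing both quantities atom-by-atom. On the provenance side, Theorem~\ref{thm:can-model-main} gives $\Pmc(q,\Omc^\semiringshort)=\bigoplus_{\pi\in\nu_{\Imc_{\Omc^\semiringshort}}(\ext{q})}\bigotimes_{P(\vec t,t)\in\ext{q}}\pi(t)$; a match $\pi$ of $\ext{q}$ is a match $\theta$ of $q$ in the non-annotated part of $\Imc_{\Omc^\semiringshort}$ together with an independent choice, per atom \emph{occurrence} of $q$, of an annotation attached to the matched fact, so distributivity (legitimate for the possibly infinite sums by $\omega$-continuity of $\semiringshort$) gives $\Pmc(q,\Omc^\semiringshort)=\bigoplus_\theta\bigotimes_{P(\vec u)\in\mn{atoms}(q)}\Pmc(P(\theta(\vec u)),\Omc^\semiringshort)$. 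On the Datalog side, every $t\in T^\Sigma_\Dmc(\goal)$ applies $q\to\goal$ at its root with one subtree per atom occurrence of $q$ under a substitution $\theta$ realizing $q$ over the derived facts, and the same distributive rearrangement gives $\provdat(\Sigma,\Dmc^\semiringshort,\goal)=\bigoplus_\theta\bigotimes_{P(\vec u)\in\mn{atoms}(q)}\provdat(\Sigma,\Dmc^\semiringshort,P(\theta(\vec u)))$, with $\theta$ ranging over the same set of matches. Combining with the atomic case finishes the proof. (Alternatively, if the linking Lemma~\ref{th:sem-cons-datalog-SAM} is available, one can instead observe that $\Pmc(q,\Omc^\semiringshort)$ coincides with the set-annotated model-based semantics of the translated Datalog program and then quote the result of \citeauthor{DBLP:conf/kr/BourgauxBPT22} that, for $\oplus$-idempotent semirings, the latter equals the derivation-tree provenance.)

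I expect the main obstacle to be the atomic case, and within it the complex left-hand sides: translating ``$(d,\elemb)\in C^{\Imc_n}$'' for a nested $\ELHI$ concept into a family of sub-derivations requires care about which facts are available at which stage of the canonical-model construction (hence the directedness argument), and the correspondence between derivation trees and annotated atoms is only a bijection up to equality of annotation values, becoming an equality of provenance values only after multiplicities are collapsed by $\oplus$-idempotency. The BCQ lifting is then routine distributivity once the atomic case is secured.
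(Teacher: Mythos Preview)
Your direct derivation-tree argument is correct, but the paper takes exactly the alternative route you sketch in your parenthetical remark: it proves Lemma~\ref{th:sem-cons-datalog-SAM} (that $\Pmc(q,\Omc^{\semiringshort})$ equals the set-annotated model-based Datalog provenance $\provdat^{\texttt{SAM}}(\Sigma,\Dmc^\semiringshort,\mn{goal})$, for \emph{any} commutative $\omega$-continuous semiring) and then invokes Proposition~3 of \citeauthor{DBLP:conf/kr/BourgauxBPT22}~\citeyear{DBLP:conf/kr/BourgauxBPT22}, which says that $\provdat^{\texttt{SAM}}=\provdat$ whenever $\oplus$ is idempotent. So the paper never touches derivation trees directly; the $\oplus$-idempotency hypothesis is consumed entirely by the cited external result, and the paper's own work (Lemma~\ref{th:sem-cons-datalog-SAM}) is a model-to-model comparison that does not need idempotency.

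Your approach trades that dependency for a self-contained argument: the two inductions establishing the set equality between canonical-model annotations and leaf products of derivation trees, followed by the distributive BCQ lift, do the job without appealing to the SAM semantics. The cost is that your atomic-case induction has to unfold complex left-hand sides through Lemma~\ref{lem:complex-concept-query-inter} and argue about stages of the canonical-model construction, which is exactly the bookkeeping you flag as the main obstacle; the paper sidesteps this by packaging the canonical model as a SAM model (the map $f(\Imc)=(I,\mu^I)$ in the proof of Lemma~\ref{th:sem-cons-datalog-SAM}) and letting the Datalog-side equivalence do the rest. Both routes are sound; the paper's is shorter given the external citation, yours is more elementary and makes the role of $\oplus$-idempotency (collapsing multiplicities of derivation trees with equal leaf products) visible inside the proof rather than delegating it.
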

Since the Datalog provenance of a CQ seen as a Datalog query coincides with its relational provenance \new{(formally, for every BCQ $q=\exists \vec{y}\phi(\vec{y})$, and annotated set of facts $\Dmc^\semiringshort$, $\provdb(q,\Dmc^\semiringshort)=\provdat(\{\exists \vec{y}\phi(\vec{y})\rightarrow\mn{goal}\},\Dmc^\semiringshort,\mn{goal})$)}  \cite{Green07-provenance-seminal,DBLP:conf/kr/BourgauxBPT22}, Theorem~\ref{th:sem-cons-datalog} does not hold when the addition is not idempotent. 

\begin{example}\label{ex:datalog-provenance}
\new{Consider $\Omc^{\new{\series}}=\{(A(a),x_1), (A(b),x_2)\}$ defined in Example~\ref{ex:pb-non-plus-idem-query} and $\Sigma=\{\exists yz\, A(y)\land A(z)\rightarrow\mn{goal}\}$. We have $\provdat(\Sigma,\Omc^{\new{\series}},\mn{goal}) = x_1^2+x_2^2+2x_1x_2$ because $\mn{goal}$ has the following derivation trees \cite[Definition 1]{DBLP:conf/kr/BourgauxBPT22}.}

\new{\begin{center}
\begin{tabular}{llll}
	\begin{tikzpicture}
	[level distance=0.75cm]
	\node {$\mn{goal}$}
	child {node {$A(a)$}} 
	child {node {$A(a)$}};
	\end{tikzpicture}
	&
		\begin{tikzpicture}
	[level distance=0.75cm]
	\node {$\mn{goal}$}
	child {node {$A(b)$}} 
	child {node {$A(b)$}};
	\end{tikzpicture}
	&
		\begin{tikzpicture}
	[level distance=0.75cm]
	\node {$\mn{goal}$}
	child {node {$A(a)$}} 
	child {node {$A(b)$}};
	\end{tikzpicture}
	&
		\begin{tikzpicture}
	[level distance=0.75cm]
	\node {$\mn{goal}$}
	child {node {$A(b)$}} 
	child {node {$A(a)$}};
	\end{tikzpicture}
\end{tabular} 
\end{center}}
\end{example}

We \new{prove} Theorem \ref{th:sem-cons-datalog} \new{with the help of} Proposition 3 of \citet{DBLP:conf/kr/BourgauxBPT22} which states that when $\semiringshort$ is \plusidem, then
$\provdat(\Sigma,\Dmc^\semiringshort,\mn{goal})= \provdat^{\texttt{SAM}}(\Sigma,\Dmc^\semiringshort,\mn{goal})
$ where $\provdat^{\texttt{SAM}}$ is an alternative notion of provenance semantics for Datalog (set-annotated model-based provenance semantics) defined as follows:  
$\provdat^{\texttt{SAM}}(\Sigma,\Dmc^\semiringshort,\mn{goal})= \bigoplus_{\elem\in \bigcap_{(I,\mu^I)\models (\Sigma,\Dmc^\semiringshort)} \mu^I(\mn{goal})} \elem,$
where $(I,\mu^I)$--with $I$ a set of facts and $\mu^I$ a function from $I$ 
to the power-set of $K$--is such that $(I,\mu^I)\models (\Sigma,\Dmc^\semiringshort)$ if 
\begin{enumerate}
	\item$\Dmc\subseteq I$, and for every $\alpha\in\Dmc$, $\lambda'(\alpha)\in\mu^I(\alpha)$; 
	\item   for every $\phi(\vec{x},\vec{y} ) \rightarrow H(\vec{x})$ in $\Sigma$ \new{with $\phi(\vec{x},\vec{y})=\alpha_1\wedge\dots\wedge\alpha_n$}, if there is a homomorphism $h:\{\alpha_i\mid 1\leq i\leq n\}\mapsto I$, 
	then \mbox{$h(H(\vec{x}))\in I$} and if \new{
	$h(\alpha_1)\wedge\dots\wedge h(\alpha_n)=\beta_1\wedge\dots\wedge\beta_n$, then} $\{\bigotimes_{i=1}^n \elem_i \mid (\elem_1,\dots,\elem_n)\in\mu^I(\beta_1)\times\dots\times\mu^I(\beta_n)\}\subseteq \mu^I(h(H(\vec{x})))$. 
\end{enumerate}

\begin{example}
\new{Consider the same $\Sigma$ and $\Omc^{\new{\series}}$ as in Example~\ref{ex:datalog-provenance}. It then holds 
that $\provdat^{\texttt{SAM}}(\Sigma,\Omc^{\series},\mn{goal}) = x_1^2+x_2^2+x_1x_2$. Indeed, 
$(I,\mu^I)\models (\Sigma,\Omc^{\series})$ means that:
\begin{enumerate}
\item $\{A(a), A(b)\}\subseteq I$, $x_1\in\mu^I(A(a))$ and $x_2\in\mu^I(A(a))$;
\item whenever there is a homomorphism $h:\{A(y), A(z)\}\mapsto I$, then $\mn{goal}\in I$ and if $h(A(y))\wedge h(A(z))=\beta_1\wedge\beta_2$, then $\{p_1\times p_2\mid (p_1,p_2)\in \mu^I(\beta_1)\times\mu^I(\beta_2)\}\subseteq \mu^I(\mn{goal})$.
\end{enumerate}
It follows that for every $(I,\mu^I)\models (\Sigma,\Omc^{\series})$, $\mn{goal}\in I$ and $\{x_1^2, x_2^2,x_1x_2\}\subseteq \mu^I(\mn{goal})$. Moreover, one can check that $(I,\mu^I)$ with $I=\{A(a),A(b),\mn{goal}\}$ and $\mu^I(A(a))=\{x_1\}$, $\mu^I(A(b))=\{x_2\}$, $\mu^I(\mn{goal})=\{x_1^2,x_2^2,x_1x_2\}$ is such that $(I,\mu^I)\models (\Sigma,\Omc^{\series})$. Hence, $\bigcap_{(I,\mu^I)\models (\Sigma,\Omc^{\series})}\mu^I(\mn{goal})=\{x_1^2,x_2^2,x_1x_2\}$.}

\new{Notice that in this example we use a semiring which is \emph{not} \plusidem and that $\provdat^{\texttt{SAM}}(\Sigma,\Omc^{\series},\mn{goal})\neq\provdat(\Sigma,\Omc^{\series},\mn{goal})$.
}
\end{example}

Theorem \ref{th:sem-cons-datalog} is thus a direct consequence of the following lemma, which shows the correspondance between our provenance semantics and the set-annotated model-based Datalog provenance semantics (note that the semiring is not required to be \plusidem here).

\begin{restatable}{lemma}{thsemconsdatalogSAM}\label{th:sem-cons-datalog-SAM}
If $\semiringshort$ is a commutative \new{$\omega$-complete} semiring, then for every $\semiringshort$-annotated 
ontology $\Omc^\semiringshort=\tup{\Omc,\lambda}$ whose GCIs have only concept names as right-hand sides and RIs are positive, for every 
BCQ $q$, $\Pmc(q,\Omc^{\semiringshort})=\provdat^{\texttt{SAM}}(\Sigma,\Dmc^\semiringshort,\mn{goal})$ where $\Sigma$ and 
$\Dmc^\semiringshort$  are defined as in Theorem \ref{th:sem-cons-datalog}.
\end{restatable}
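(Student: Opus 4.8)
The plan is to reduce both sides of the identity to a canonical structure and then show that these structures assign $q$, respectively $\mn{goal}$, the same set of annotations. On the DL side, observe first that since the GCIs of $\Omc$ have only concept names as right-hand sides and its RIs are positive, $\Omc^\semiringshort$ is satisfiable and its canonical model $\Imc_{\Omc^\semiringshort}$ has domain $\NI$ --- no axiom of the form $C\sqsubseteq\exists P$ is present, so the chase never creates anonymous elements and never leaves the named individuals in its atoms. By Theorem~\ref{thm:can-model-main}, $\Omc^\semiringshort\models(q,\elem)$ iff $\elem\in\p{\Imc_{\Omc^\semiringshort}}{\ext q}$, hence $\Pmc(q,\Omc^\semiringshort)=\bigoplus_{\elem\in\p{\Imc_{\Omc^\semiringshort}}{\ext q}}\elem$. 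On the Datalog side, I would argue that $(\Sigma,\Dmc^\semiringshort)$ has a \emph{least} set-annotated model $(I_0,\mu_0)$, namely the least fixpoint of the monotone immediate-consequence operator read off from conditions~1 and~2 of the $\provdat^{\texttt{SAM}}$ semantics (the $\mu$-values may be infinite, but this is handled by $\omega$-continuity exactly as in the definitions of $\Pmc$ and $\provdat^{\texttt{SAM}}$); since every set-annotated model contains $(I_0,\mu_0)$ componentwise, $\bigcap_{(I,\mu^I)}\mu^I(\mn{goal})=\mu_0(\mn{goal})$ and $\provdat^{\texttt{SAM}}(\Sigma,\Dmc^\semiringshort,\mn{goal})=\bigoplus_{\elem\in\mu_0(\mn{goal})}\elem$. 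It thus remains to prove $\p{\Imc_{\Omc^\semiringshort}}{\ext q}=\mu_0(\mn{goal})$.

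To link the two, I would set up two translations. From $(I_0,\mu_0)$ I build the annotated interpretation $\Jmc$ with domain $\NI$, $a^\Jmc=a$, $A^\Jmc=\{(a,\elem)\mid\elem\in\mu_0(A(a))\}$, and $R^\Jmc$ defined analogously; from $\Imc_{\Omc^\semiringshort}$ I build the set-annotated structure $(I^\ast,\mu^\ast)$ with $\mu^\ast(A(a))=\{\elem\mid(a,\elem)\in A^{\Imc_{\Omc^\semiringshort}}\}$, $\mu^\ast(R(a,b))$ analogous, $\mu^\ast(F_\alpha)=\{\lambda(\alpha)\}$, $\mn{goal}\in I^\ast$ iff $\Imc_{\Omc^\semiringshort}\models q$, and $\mu^\ast(\mn{goal})=\p{\Imc_{\Omc^\semiringshort}}{\ext q}$. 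The two facts to prove are (a)~$\Jmc\models\Omc^\semiringshort$ and (b)~$(I^\ast,\mu^\ast)$ is a set-annotated model of $(\Sigma,\Dmc^\semiringshort)$. Assertions and condition~1 are immediate. For a GCI $(C\sqsubseteq A,\elem)$ the key tool is Lemma~\ref{lem:complex-concept-query-inter}: $(d,\elemb)\in C^\Imc$ holds precisely when there is a match of $\ext{q_C}$ rooted at $d$ whose matched annotations multiply to $\elemb$, equivalently a homomorphism of $q_C(x)$ into the corresponding fact set with $x\mapsto d$ along which the matched atoms carry $\mu$-values with product $\elemb$; firing the rule $q_C(x)\wedge F_{C\sqsubseteq A}\to A(x)$, and using that $F_{C\sqsubseteq A}$ never occurs in a head so $\mu(F_{C\sqsubseteq A})=\{\lambda(C\sqsubseteq A)\}=\{\elem\}$, then produces exactly $(d,\elem\otimes\elemb)$ in $A$ --- the same annotation bookkeeping in both directions. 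RIs are handled similarly (inverse roles via the standard rewriting $R^-(x,y)\equiv R(y,x)$), and for the rule $q\to\mn{goal}$ in~(b) one appeals directly to the definition of $\mu^\ast(\mn{goal})$; there are no negative RIs and no $C\sqsubseteq\bot$, so no further cases occur.

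Given~(a) and~(b), I would finish as follows. By Lemma~\ref{lem:homomo} there is a homomorphism $\homo:\Imc_{\Omc^\semiringshort}\to\Jmc$; as both interpretations have domain $\NI$ and $\homo$ fixes every individual name, this gives $\{\elem\mid(a,\elem)\in A^{\Imc_{\Omc^\semiringshort}}\}\subseteq\mu_0(A(a))$ for all $A,a$, and likewise for roles. Conversely, by minimality of $(I_0,\mu_0)$ and since $(I^\ast,\mu^\ast)$ is a set-annotated model, $\mu_0\subseteq\mu^\ast$ componentwise, giving the reverse inclusions; hence $\mu_0(A(a))=\{\elem\mid(a,\elem)\in A^{\Imc_{\Omc^\semiringshort}}\}$ (similarly for roles), so $I_0$ and $I^\ast$ carry the same concept and role facts with the same annotations. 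Finally $\mu_0(\mn{goal})\subseteq\mu^\ast(\mn{goal})=\p{\Imc_{\Omc^\semiringshort}}{\ext q}$; and in the other direction a match of $\ext q$ in $\Imc_{\Omc^\semiringshort}$ with annotation product $\elem$ restricts to a homomorphism of $q$ into $I_0$ along which the body atoms carry $\mu_0$-values whose product is $\elem$, so condition~2 for the rule $q\to\mn{goal}$ yields $\elem\in\mu_0(\mn{goal})$. Thus $\p{\Imc_{\Omc^\semiringshort}}{\ext q}=\mu_0(\mn{goal})$, and $\Pmc(q,\Omc^\semiringshort)=\provdat^{\texttt{SAM}}(\Sigma,\Dmc^\semiringshort,\mn{goal})$.

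I expect the main obstacle to be the verification of~(a) and~(b) for GCIs with a complex left-hand side $C$: one must check that the chase-style product $\elem\otimes\elemb$ on the DL side --- where $\elemb$ is the annotation with which $d$ belongs to $C$ --- coincides with the Datalog product over the body atoms of $q_C$ together with $F_{C\sqsubseteq A}$, for every shape of $C$, and in particular for conjunctions, where one element of $C^\Imc$ arises as a product of annotations from the conjuncts, mirrored atom by atom in $\ext{q_C}$. Lemma~\ref{lem:complex-concept-query-inter} is precisely what makes this bookkeeping go through; the remaining ingredient, existence and minimality of $(I_0,\mu_0)$, is a routine least-fixpoint argument for the $\provdat^{\texttt{SAM}}$ consequence operator.
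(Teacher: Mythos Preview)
Your approach is correct, and it differs from the paper's in an interesting way. The paper proves the equality
\[
\bigcap_{\Imc\models\Omc^\semiringshort}\p{\Imc}{\ext q}\;=\;\bigcap_{(I,\mu^I)\models(\Sigma,\Dmc^\semiringshort)}\mu^I(\mn{goal})
\]
directly, by exhibiting for every annotated model $\Imc$ of $\Omc^\semiringshort$ a set-annotated model $f(\Imc)$ of $(\Sigma,\Dmc^\semiringshort)$ with $\mu^{f(\Imc)}(\mn{goal})=\p{\Imc}{\ext q}$, and conversely for every $(I,\mu^I)$ an annotated model $g(I,\mu^I)$ with $\p{g(I,\mu^I)}{\ext q}\subseteq\mu^I(\mn{goal})$. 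Your translations $\Jmc$ and $(I^\ast,\mu^\ast)$ are exactly the paper's $g$ and $f$ specialised to the least set-annotated model and the canonical model respectively, and your verifications of (a) and (b)---including the appeal to Lemma~\ref{lem:complex-concept-query-inter} for the bookkeeping of complex left-hand sides---match the paper's checks for $g$ and $f$.

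The difference is that the paper works with arbitrary models on both sides and never needs a least set-annotated model to exist, whereas you reduce to the canonical model on the DL side (via Theorem~\ref{thm:can-model-main}) and to a least fixpoint $(I_0,\mu_0)$ on the Datalog side, then conclude by a two-sided containment between these two specific objects. Your route is arguably more concrete, but it costs you the extra least-fixpoint lemma; the paper's route avoids that lemma at the price of quantifying over all models. Your preliminary observation that the hypotheses (concept-name right-hand sides, positive RIs) force satisfiability and a canonical model with domain $\NI$ is correct and is what makes Lemma~\ref{lem:homomo} give you the identity homomorphism in the last step.
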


\subsection{Commutation with Homomorphisms}
Theorem \ref{th:sem-commut-hom} \new{states the relationship between the provenance values computed in different semirings: under some conditions, provenance computation ``commutes'' with homomorphisms in the sense that if a $\semiringshort_2$-annotated ontology is obtained from a $\semiringshort_1$-annotated ontology via a semiring homomorphism, then the consequence provenance in $\semiringshort_2$ can be obtained by applying this homomorphism to the consequence provenance in $\semiringshort_1$. This is particularly useful to ensure} that the provenance expression computed in a provenance semiring \new{(such as $\boolseries$ or $\posbool$)} can be used to compute the provenance value in any semiring to which it specializes correctly. \new{Since the theorem applies to satisfiable ontologies, we only need to require the semiring to be $\omega$-complete since $\{\elem\mid\Omc^\semiringshort\models(\alpha,\elem)\}$ is guaranteed to be countable.}

\begin{restatable}{theorem}{thsemcommuthom}\label{th:sem-commut-hom}
Let $\semiringshort_1$ and $\semiringshort_2$ be commutative \new{$\omega$-complete} semirings such that there is a \new{$\omega$-complete} semiring homomorphism $h$ from $\semiringshort_1$ to $\semiringshort_2$. For every satisfiable $\ELHIbot$ ontology $\Omc$ and annotated versions $\Omc^{\semiringshort_1}=\tup{\Omc,\lambda}$ and $\Omc^{\semiringshort_2}=\tup{\Omc,h\circ\lambda}$, \new{$$h(\Pmc(\alpha,\Omc^{\semiringshort_1})) = \Pmc(\alpha, \Omc^{\semiringshort_2})\quad\text{holds if:}$$
\begin{enumerate}
\item $\semiringshort_1$ and $\semiringshort_2$ are \plusidem and $\alpha$ is a BCQ, an assertion, or an RI whose left-hand side is satisfiable \wrt $\Omc$; or
\item $\semiringshort_1$ and $\semiringshort_2$ are \plusidem and \timesidem, $\Omc$ does not contain any GCI with $\top$ as left-hand side, and $\alpha$ is a GCI whose left-hand side is satisfiable \wrt $\Omc$.	
\end{enumerate}}
\end{restatable}

\new{The proof of Theorem~\ref{th:sem-commut-hom} relies on the canonical models of $\Omc^{\semiringshort_1}$ and $\Omc^{\semiringshort_2}$ and Theorems~\ref{thm:can-model-main}, \ref{thm:can-model-main-gci} and \ref{thm:can-model-main-ri}. The following example illustrates how this theorem can be used and shows why the assumption that $\semiringshort_1$ and $\semiringshort_2$ are \timesidem is necessary for the GCI case.}

\begin{example}
\new{Consider $\Omc=\{A(a),\ A\sqsubseteq B_1,\ A\sqsubseteq B_2,\ B_1\sqcap B_2\sqsubseteq C\}.$}
\begin{itemize}
\item \new{It is well-known that there exists a semiring homomorphism from the tropical semiring 
	$\mathbb{T}=(\mathbb{R}^\infty_+, \min , +, \infty, 0)$ to the Viterbi semiring $\mathbb{V}=([0,1], \max, \times, 0,1)$ 
	defined by $h(x)=e^{-x}$ for every $x\in\mathbb{R}_+$ and $h(\infty)=0$. Let 
	$\Omc^{\mathbb{T}}=\tup{\Omc,\lambda_\mathbb{T}}$ with $\lambda_\mathbb{T}(A(a))=0$, 
	$\lambda_\mathbb{T}(A\sqsubseteq B_1)=2$, $\lambda_\mathbb{T}(A\sqsubseteq B_2)=3$, and 
	$\lambda_\mathbb{T}(B_1\sqcap B_2\sqsubseteq C)=0$, and consider the $\mathbb{V}$-annotated ontology
	$\Omc^{\mathbb{V}}=\tup{\Omc,\lambda_\mathbb{V}}$ with $\lambda_\mathbb{V}=h\circ\lambda_\mathbb{T}$. Since $\mathbb{T}$ and $\mathbb{V}$ are additively idempotent, $\Pmc(C(a),\Omc^{\mathbb{V}})=h(\Pmc(C(a),\Omc^{\mathbb{T}}))=h(5)=e^{-5}$ by point 1 of Theorem~\ref{th:sem-commut-hom}.}
\item\new{Consider now fully idempotent semirings $\posbool$ and $\mathbb{F}=([0,1],\max,\min ,0,1)$, and let $\Omc^{\posbool}=\tup{\Omc,\lambda_\semiringVars}$ with $\lambda_\semiringVars(A(a))=y$, $\lambda_\semiringVars(A\sqsubseteq B_1)=x_1$, $\lambda_\semiringVars(A\sqsubseteq B_2)=x_2$, and $\lambda_\semiringVars(B_1\sqcap B_2\sqsubseteq C)=x_3$, and $\Omc^{\mathbb{F}}=\tup{\Omc,\lambda_\mathbb{F}}$ with $\lambda_\mathbb{F}=h\circ\lambda_\semiringVars$ for the unique semiring homomorphism $h:\posbool\rightarrow\mathbb{F}$ such that $h(y)=0.2$, $h(x_1)=0.5$, $h(x_2)=0.6$, $h(x_3)=1$, and $h(z)=1$ for every other $z\in\semiringVars$. In this case we obtain $\Pmc(C(a),\Omc^{\mathbb{F}})=h(\Pmc(C(a),\Omc^{\mathbb{\posbool}}))=h(yx_1x_2x_3)=0.2$ by point 1 and $\Pmc (A\sqsubseteq C,\Omc^{\mathbb{F}})=h(\Pmc(A\sqsubseteq C,\Omc^{\mathbb{\posbool}}))=h(x_1x_2x_3)=0.5$ by point 2 of Theorem~\ref{th:sem-commut-hom}.}
\item \new{To see why $\semiringshort_1$ has to be \timesidem when $\alpha$ is a GCI, take 
	$\Omc^{\why}=\tup{\Omc,\lambda_\semiringVars}$ and $\Omc^{\posbool}=\tup{\Omc,\lambda_\semiringVars}$ with 
	$\lambda_\semiringVars$ as in the previous point. There is a unique semiring homomorphism  $h:\why\rightarrow \new{\posbool}$ such that $h$ is the identity over $\semiringVars$. Since the semirings are \plusidem, it holds that $\Pmc(C(a),\Omc^{\posbool})=h(\Pmc(C(a),\Omc^{\why}))=h(yx_1x_2x_3)=yx_1x_2x_3$. However, since \why is not \timesidem, one can show (in the same way as in Example~\ref{ex:timesidemforcanmodelgci}) that $\Pmc(A\sqsubseteq C,\Omc^{\why})=0$. Hence $h(\Pmc(A\sqsubseteq C,\Omc^{\why}))=h(0)=0$ is different from $\Pmc(A\sqsubseteq C,\Omc^{\posbool})=x_1x_2x_3$.}
\item \new{Finally, to see why $\semiringshort_2$ has to be \timesidem for the GCI case, we define the 
	\plusidem commutative semiring $\semiringshort=(\{0,1,c\},\oplus,\otimes,0,1)$ where $1\oplus c=1$ and 
	$c\otimes c=0$ (all other values of sums or products are implied by the properties of a \plusidem commutative 
	semiring). One can check that this  indeed defines a semiring which is not \timesidem since $c\otimes c\not=c$.
	We then consider the ontology $\Omc^{\posbool}=\tup{\Omc,\lambda_\semiringVars}$ with $\lambda_\semiringVars$ 
	as before and $\Omc^{\semiringshort}=\tup{\Omc,\lambda_{\semiringshort}}$ with 
	$\lambda_{\semiringshort}=h\circ\lambda_\semiringVars$ for the unique semiring homomorphism 
	$h:\posbool\rightarrow \semiringshort$ such that $h(z)=1$ for every $z\in\semiringVars$ (\ie $h(p)=1$ for every 
	$p\in\posbool\setminus\{0\}$). 
	As we have seen already, $\Pmc(A\sqsubseteq C,\Omc^{\posbool})=x_1x_2x_3$ so $h(\Pmc(A\sqsubseteq C,\Omc^{\posbool}))=1$. However, one can show that $\Pmc(A\sqsubseteq C,\Omc^{\semiringshort})=0$ by considering the following $\semiringshort$-annotated model of $\Omc^\semiringshort$.
\begin{align*}
		A^{\Imc}=&\{(a^\Imc,1),(e,c)\} \ B_1^{\Imc}=\{(a^\Imc,1),(e,c)\} \ B_2^{\Imc}=\{(a^\Imc,1),(e,c)\} \ C^{\Imc}=\{(a^\Imc,1),(e,0)\}
\end{align*}
Indeed, since there is no $\elem\in\{0,1,c\}$ such that $1\otimes\elem=1$ and $c\otimes\elem=0$, there is no $\elem$ such that $\Imc\models (A\sqsubseteq C,\elem)$.\qedhere
 }
\end{itemize}
\end{example}

Example~\ref{ex:pb-non-plus-idem} shows that Theorem
~\ref{th:sem-commut-hom} does not hold without the assumption that the addition is idempotent, even for simple ontologies with only concept name inclusions.

\begin{example}\label{ex:pb-non-plus-idem}
	Let $\semiringshort$ be a commutative $\omega$-complete semiring which is not \plusidem and $\elem\in\semiringset$ such that $\elem\oplus\elem\neq\elem$. 
	Let 
	\begin{align*}
		\Omc=&\{A(a), B(a), A\sqsubseteq C, B\sqsubseteq C\}\\
		\Omc^{\new{\series}}=&\{(A(a),x), (B(a),y), (A\sqsubseteq C,u), (B\sqsubseteq C,v)\}\\
		\Omc^\semiringshort=&\{(A(a),\elem), (B(a),\elem), (A\sqsubseteq C,\one), (B\sqsubseteq C,\one)\}. 
	\end{align*}
	The provenance of $C(a)$ \wrt $\Omc^{\new{\series}}$ is $\Pmc(C(a),\Omc^{\new{\series}})=x u+y v$, hence, if we consider the semiring homomorphism $h$ such that $h(x)=h(y)=\elem$ and $h(u)=h(v)=\one$, $h(\Pmc(C(a),\Omc^{\new{\series}}))=\elem\oplus\elem$.  
	However, $\Pmc(C(a),\Omc^\semiringshort)=\elem$ since there exists a model $\Imc$ of  $\Omc^\semiringshort$ such that $C^\Imc=\{(a^\Imc,\elem)\}$. 
	Hence $h(\Pmc(C(a),\Omc^{\new{\series}}))\neq \Pmc(C(a), \Omc^\semiringshort)$. 
\end{example}
  
\new{Theorem~\ref{th:sem-commut-hom} does not hold if the ontology $\Omc$ is unsatisfiable or if $\alpha$ is a GCI or an RI whose left-hand side is unsatisfiable \wrt $\Omc$ since in these cases the provenance of $\alpha$ is the sum of all the elements of the semiring (see Remarks~\ref{rem:unsat-onto} and~\ref{rem:unsat-left}). Indeed, if $\semiringshort_1=\posbool$, $\semiringshort_2=(\{0,1,a\}, \oplus,\otimes,0,1)$ is the commutative semiring such that both operations are idempotent and $1\oplus a=a$ (all other values of sums or products are implied by the properties of a fully idempotent commutative semiring) and 
$h$ is the semiring homomorphism defined by $h(0)=0$, $h(1)=1$ and $h(x)=1$ for every $x\in\semiringVars$, then 
we have that 
$h(\Pmc(\alpha,\Omc^{\semiringshort_1}))=h(0\vee 1\vee\bigvee_{S\subseteq \semiringVars}\bigwedge_{x\in S}x)=1$ while $\Pmc(\alpha, \Omc^{\semiringshort_2})=0\oplus 1\oplus a=a$. However, if we require $\semiringshort_1$ and $\semiringshort_2$ to be absorptive, the theorem still holds for unsatisfiable ontologies: if $\semiringshort_1 = (K_1, \oplus ,\otimes, \zero, \one)$ and $\semiringshort_2 = (K_2, + ,\cdot, 0, 1)$ are commutative complete (or $\omega$-complete and $K_1$, $K_2$ are countable) and absorptive semirings, $h$ is a ($\omega$-)complete homomorphism from $\semiringshort_1$ to $\semiringshort_2$, and $\Omc$ is unsatisfiable, then 
$$h(\Pmc(\alpha,\Omc^{\semiringshort_1})) =h(\bigoplus_{\elem\in K_1}\elem)=h(\one)=1=\bigplus_{\elem'\in K_2}\elem'.$$ 
Indeed, if $\semiringshort$ is absorptive, $\one\oplus \elem=\one\oplus(\one\otimes \elem)=\one$ for every $\elem\in\semiringset$, which means that 
$\bigoplus_{\elem\in K}\elem=\one\oplus\bigoplus_{\elem\in K\setminus\{\one\}}\elem=\one$.}


\section{Computing \new{Query} Provenance in the \why Semiring}\label{sec:why}
\new{We now focus on computing the provenance of assertions and BCQs in \mbox{\why-}annotated ontologies. 
Recall from the previous section that since \why is \plusidem, all desirable properties we considered hold for the provenance of assertions and BCQs. 
While the annotations of such an ontology $\Omc^{\why}$ can normally take any values from $\why\setminus\{0\}$, \emph{we focus here on the special case where $\Omc$ is annotated 
by a function $\lambda_\semiringVars:\Omc\mapsto\semiringVars\cup\{1\}$}. Indeed, we are interested in capturing the well-known \emph{why-provenance} defined in the database context, \blue{which is a standard way of providing \emph{explanations} for a query result by associating it to the sets of database tuples from which it can be derived, representing tuples with identifiers. This notion has also} 
recently received interest in the context of Datalog \shortcite{DBLP:conf/ruleml/ElhalawatiKM22,DBLP:journals/pacmmod/CalauttiLPS24,DBLP:conf/aaai/CalauttiLPS24,DBLP:journals/pacmmod/CalauttiLPS24b}.  
\blue{Why-provenance can be captured by }
annotating each axiom of $\Omc$ by a distinct variable from $\semiringVars$ (assuming that $\semiringVars$ has a greater cardinality than $\Omc$). However, since we will require the ontologies to be in \emph{normal form}, and the normalization process described in Section~\ref{subsection:normalization} introduces axioms annotated with $1$, we allow $\lambda_\semiringVars$ to also take this value.} 
\new{Recall that 
by Theorem~\ref{th:normal-form} and Corollary~\ref{cor:normal-form}, one can normalize an annotated ontology in polynomial time while preserving annotated entailments and provenance of axioms and queries.} 

\new{Thus, for the rest of this section, a $\why$-annotated $\ELHIbot$ ontology is a pair $\tup{\Omc,\lambda_\semiringVars}$ where 
$\lambda_\semiringVars:\Omc\mapsto\semiringVars\cup\{1\}$. For brevity, we omit the superscript and identify $\Omc$ and 
$\tup{\Omc,\lambda_\semiringVars}$.  We denote by $\mn{Card}(\Omc)$ the cardinality of $\Omc$ (that is, the number of axioms in $\Omc$), and given $p\in\why$, we denote by $|p|$ the length of the string that represents it as a sum of monomials, where variables from $\semiringVars$ are considered of length one.}

We start with a general proposition that shows that we can focus on \emph{monomials}. \new{Recall that we denote by $\monomials(\semiringVars)$ the set of all monomials over $\semiringVars$.} \blue{The proof of Proposition~\ref{prop:FocusMonomialForWhy} uses the canonical model of $\Omc$.}

\begin{restatable}{proposition}{propFocusMonomialForWhy}\label{prop:FocusMonomialForWhy}
Let $\Omc$ be a satisfiable $\why$-annotated $\ELHIbot$ ontology \blue{(annotated 
by $\lambda_\semiringVars:\Omc\mapsto\semiringVars\cup\{1\}$)} and $\monomial$ be an element of $\why$. 
If $\Omc\models(\alpha,\monomial)$ where $\alpha$ is \new{an assertion or a BCQ, 
then $\monomial\in\monomials(\semiringVars)$.}
\end{restatable}
We can thus compute $\Pmc(\alpha,\Omc)$ by finding all monomials $\monomial$ over $\semiringVars$ such that $\Omc\models(\alpha,\monomial)$.


\subsection{Annotated \new{Assertion} Entailment from $\ELHIbot$ Ontologies}\label{sec:completion}

In this section, we present a completion algorithm for deriving \new{annotated assertions entailed} from a \why-annotated $\ELHIbot$ ontology. \new{For DLs of the \EL family, completion algorithms are a classical way to derive axioms by saturating an ontology in normal form by applying so-called \emph{completion, or saturation, rules} \cite{BBL-IJCAI05,DBLP:conf/rweb/BienvenuO15}.} 

\subsubsection{Completion Algorithm for $\ELHIbot$}\label{sec:subcompletionalgo}

\begin{table*}[tb]
\caption{Completion rules. $A_{(i)},B_{(i)},C,D\in\NC\cup\{\top,\bot\}$, $R,S \in\NR$, $P_{(i)},Q:=R\mid R^-$, $M, N$ are (possibly empty) conjunctions over $\NC\cup\{\top\}$, and $\monomial_{(i)},\nonomial_{(i)}, \onomial_{(i)}  \in\monomials(\semiringVars)$ (recall that \new{by definition of \why,} 
monomials do not contain repeated variables). 
Empty conjunctions are written $\top$ and non-empty conjunctions are written without $\top$ (in particular in $\CR^T_{3}$, if some but not all $A_i$ are equal to $\top$, they do not explicitly occur in the conjunction $A\sqcap A_1\sqcap\dots\sqcap A_k$).
Conjunctions are treated as \new{multisets with maximal multiplicity $\mn{Card}(\Omc)$: the order does not matter but there may contain repetitions such as $A\sqcap A$ and each concept name occurs at most $\mn{Card}(\Omc)$ times (\ie $\underbrace{A\sqcap\dots\sqcap A}_{\mn{Card}(\Omc)+k \text{ times}}$ is written as 
$\underbrace{A\sqcap\dots\sqcap A}_{\mn{Card}(\Omc) \text{ times}}$)}. 
}
\label{tab:completionRules}
\resizebox{\textwidth}{!}{
\begin{tabular}{@{}lll@{}}
\toprule
& if   &  then $\Smc \leftarrow \Smc\cup\{\Phi\}$ \\ 
\hline
\midrule
$\CR^T_{0}$ &  $(A\sqsubseteq \exists P,\monomial_0),(P\sqsubseteq P_1,\monomial_1),(P\sqsubseteq P_2,\monomial_2),(P_1\sqcap P_2\sqsubseteq \bot,\monomial_3)\in\Smc$ 
 &  $\Phi=(A\sqsubseteq \bot,{\monomial_0\times\monomial_1\times \monomial_2\times\monomial_3})$  \\
 
$\CR^T_{1}$ &  $(P_1\sqsubseteq P_2,\monomial_1),(P_2\sqsubseteq P_3,\monomial_2)\in\Smc$ 
 &  $\Phi=(P_1\sqsubseteq P_3,{\monomial_1\times \monomial_2})$  \\
 
$\CR^T_{2}$ &  $(M\sqsubseteq A,\monomial_1),(A\sqcap N\sqsubseteq C,\monomial_2)\in\Smc$ 
& $\Phi=(M\sqcap N \sqsubseteq C,{\monomial_1\times \monomial_2})$  \\	 
 		
$\CR^T_{3}$ &  
$(A\sqsubseteq \exists Q,\monomial_0)$, $(Q\sqsubseteq P, \monomial)$,  
&\\									
&  $(Q\sqsubseteq P_i,\monomial_i ), (\exists \mn{inv}(P_i).A_i\sqsubseteq B_i,\nonomial_i)$, $1\leq i\leq k$, $k\geq 0$,   
&\\	
&
$(\top\sqsubseteq B'_i,\onomial_i)$, $1\leq i\leq k'$, $k'\geq 0$,
&\\
&  
$(B_1\sqcap\dots\sqcap B_k\sqcap B'_1\sqcap\dots\sqcap B'_{k'}\sqsubseteq C,\nonomial),$ 
& 
$\Phi=(A\sqcap A_1\sqcap\dots\sqcap A_k \sqsubseteq D,$
\\
&
$(\exists P.C\sqsubseteq D, \onomial)\in\Smc$
&
$\monomial\times\nonomial\times\onomial\times\monomial_0\times\Pi_{i=1}^k(\monomial_i\times\nonomial_i)\times\Pi_{i=1}^{k'}\onomial_i)$ \\

\midrule

$\CR^A_{1}$ &  $(A_i(a),\monomial_i),1\leq i\leq k ,(A_1\sqcap \dots\sqcap A_k\sqsubseteq B,\monomial)\in\Smc$  
& $\Phi=(B(a),{\monomial_1\times \dots\times\monomial_k\times \monomial})$ \\	

$\CR^A_{2}$ &  $(R(a,b),\monomial_1),(A(b),\monomial_2),(\exists R.A\sqsubseteq B,\monomial_3)\in\Smc$  
& $\Phi=(B(a),{\monomial_1\times \monomial_2\times \monomial_3})$   \\

$\CR^A_{3}$ &  $(R(b,a),\monomial_1),(A(b),\monomial_2),(\exists R^-.A\sqsubseteq B,\monomial_3)\in\Smc$  
& $\Phi=(B(a),{\monomial_1\times \monomial_2\times \monomial_3})$  \\

$\CR^A_{4}$ &  $(R(a,b),\monomial_1),(R\sqsubseteq S,\monomial_2)\in\Smc$  
& $\Phi=(S(a,b),{\monomial_1\times \monomial_2})$  \\

$\CR^A_{5}$ &  $(R(a,b),\monomial_1),(R\sqsubseteq S^-,\monomial_2)\in\Smc$  
& $\Phi=(S(b,a),{\monomial_1\times \monomial_2})$ \\

$\CR^A_{6}$ &  $(R(a,b),\monomial_1),(S(a,b),\monomial_2),(R\sqcap S\sqsubseteq\bot,\monomial_3)\in\Smc$  
& $\Phi=(\bot(a),{\monomial_1\times \monomial_2\times\monomial_3})$ \\

$\CR^A_{7}$ &  $(R(a,b),\monomial_1),(S(b,a),\monomial_2),(R\sqcap S^-\sqsubseteq\bot,\monomial_3)\in\Smc$  
& $\Phi=(\bot(a),{\monomial_1\times \monomial_2\times\monomial_3})$ \\

\bottomrule
\end{tabular}
}
\end{table*}

We design completion rules for $\why$-annotated $\ELHIbot$ ontologies, inspired by the completion rules given by~\citet[Section 4.3]{DBLP:conf/rweb/BienvenuO15}  for $\ELHI_\bot$ \blue{(without annotations)}.  
\blue{As usual with completion algorithms, the algorithm receives as input an annotated $\ELHIbot$ ontology $\Omc$ in normal form and exhaustively applies  completion rules to construct a set $\mn{saturate}(\Omc)$, called the \emph{saturation} of $\Omc$, which contains all entailed annotated assertions over the signature of $\Omc$ and only those (see Theorem~\ref{prop:completionalgorithmELHI} for the complete statement regarding the completion algorithm).}

Given a \why-annotated $\ELHIbot$ ontology in normal form $\Omc$ \blue{(annotated 
by $\lambda_\semiringVars:\Omc\mapsto\semiringVars\cup\{1\}$)}, the completion algorithm starts with $\Omc$ completed with some trivially entailed axioms expressed in an extension of $\ELHIbot$ (featuring assertions of the form $\top(a)$ and GCIs of the form $\exists P.\bot\sqsubseteq \bot$, \blue{whose semantics is as expected: $(\top(a),\elem)$ is satisfied by an interpretation $\Imc$ if $(a^\Imc,\elem)\in\top^\Imc$, \ie if $\elem=\one$, and $(\exists P.\bot\sqsubseteq \bot,\elem)$ is satisfied if $\{(d,\elem\otimes\elem')\mid \exists e\in\Delta^\Imc \text{ s.t. } (d,e,\elem)\in P^\Imc, (e,\elem')\in\bot^\Imc\}$ is empty, which is always true since $\bot^\Imc=\emptyset$}):
\begin{align*}
\Smc:= \Omc&\cup\{(\top(a),1)\mid a\in\individuals{\Omc}\new{\cup\{a_\top\}} \}
\\&
\cup\{(A\sqsubseteq A,1) \mid A\in(\NC\cap\signature{\Omc})\cup\{\top,\bot\}\}
\\&
\cup\{(R\sqsubseteq R,1),(R^-\sqsubseteq R^-,1),(\exists R.\bot \sqsubseteq \bot,1), (\exists R^-.\bot \sqsubseteq \bot,1) \mid R\in\NR\cap\signature{\Omc}\}
\\&
\cup\{(\mn{inv}(P_1)\sqsubseteq \mn{inv}(P_2),v)\mid (P_1\sqsubseteq P_2,v)\in\Omc\}
\\&
\cup\{(\mn{inv}(P_1)\sqcap\mn{inv}(P_2)\sqsubseteq \bot,v)\mid (P_1\sqcap P_2\sqsubseteq \bot,v)\in\Omc\},
\end{align*}
where $a_\top$ is a fresh individual name from $\NI\setminus\individuals{\Omc}$. Contrary to the saturation procedure described by \citet{DBLP:conf/rweb/BienvenuO15}, we do not add \blue{axioms of the form} $A\sqsubseteq \top$: since $\top^\Imc=\Delta^\Imc\times\{1\}$, $(A\sqsubseteq \top, 1)$ does not hold in every interpretation. 

After this initialization step, the completion algorithm extends $\Smc$ with axioms through an iterative application of the rules from Table~\ref{tab:completionRules} until $\Smc$ becomes \emph{saturated}, i.e., no more rules are applicable. The resulting set is $\mn{saturate}(\Omc)$. To ensure termination, a rule is applicable only if its conclusion is not in~\Smc. 

\blue{Some remarks are in order. }
\begin{itemize}
\item The algorithm adds to $\Smc$ axioms annotated with \emph{monomials} and the definition of \why ensures that all monomials have at most $\mn{Card}(\Omc)$ variables (since they represent sets of variables that annotate $\Omc$). 
It may add annotated axioms such that the axiom is already present with some other annotation, so $\mn{saturate}(\Omc)$ is not an annotated ontology per say, but rather a \emph{set of annotated axioms}. 

\item \blue{The rules from Table~\ref{tab:completionRules} add four kinds of axioms: concept and role assertions ($\CR^A_{1}$-$\CR^A_{5}$), assertions of the form $(\bot(a),\monomial)$ ($\CR^A_{6}$ and $\CR^A_{7}$), which are used to indicate that the ontology is unsatisfiable, positive RIs ($\CR^T_{1}$), and GCIs of the form $C\sqsubseteq D$ where $C$ is either a conjunction of concept names or $\top$ and $D\in\NC\cup\{\bot\}$  ($\CR^T_0$, $\CR^T_2$ and $\CR^T_3$). 
These GCIs respect the syntactic restriction of $\ELHIbot$ but may not be in normal form. In particular, conjunctions of arbitrary size (bounded by \new{$\mn{Card}(\Omc)*|\signature{\Omc}|$}) can be introduced in the left-hand side of GCIs by $\CR^T_{2}$ and $\CR^T_{3}$.}  

\item \blue{More precisely, when applying rule $\CR^T_2$, the left-hand side conjunction is obtained as follows (and similarly for $\CR^T_3$). If the premises of an instantiation of $\CR^T_2$ are such that $M=\bigsqcap_{i=1}^{k^M} B^M_i$ and $N=\bigsqcap_{i=1}^{k^N} B^N_i$ with $B^M_i, B^N_i\in\NC$ and, $k^M, k^N\geq 0$ (with $M=\top$ if $k^M=0$ and similarly for $N$), then $M\sqcap N$ designates the conjunction obtained from the multiset $\{B^M_1,\dots, B^M_{k_M}, B^N_1,\dots, B^N_{k_N}\}$ by (i) writing the concept names in a fixed order (for example, in the lexicographic order) and (ii) limiting the number of repetitions of each concept name to $\mn{Card}(\Omc)$. This way of rewriting conjunctions of concept names is harmless because (i) for every $\ELHIbot$ concepts $C_1,C_2, C_3$, commutative semiring $\semiringshort$, and $\semiringshort$-annotated interpretation $\Imc$, $(C_1\sqcap (C_2\sqcap C_3))^\Imc=((C_1\sqcap C_2)\sqcap C_3)^\Imc$ and $(C_1\sqcap C_2)^\Imc=(C_2\sqcap C_1)^\Imc$, by associativity and commutativity of $\otimes$; and (ii) monomials produced by the algorithm have at most $\mn{Card}(\Omc)$ variables (and such a monomial with $\mn{Card}(\Omc)$ variables indicates that the axiom it annotates has been derived using all axioms of $\Omc$), so the provenance information brought by additional repetitions would be redundant.}  
\end{itemize}

\blue{The following example illustrates why we cannot directly adapt the rules presented by \citet{DBLP:conf/rweb/BienvenuO15}: they use qualified role restrictions on the right side of GCIs that may introduce wrong provenance information.}

\begin{example}\label{ex:pb-with-standard-rules}
Consider the following annotated ontology:
\begin{align*}
\new{\Omc=\{(\top\sqsubseteq A,x),\ (B\sqsubseteq \exists R.\top,y),\ (\exists R.\top\sqsubseteq C,z),\ (R\sqsubseteq R,1),\ (B(a),v)\}.}
\end{align*}
If we apply the completion rules \new{T6, T7 and A1 below, which were originally} given by \citet{DBLP:conf/rweb/BienvenuO15}, 
\new{while propagating the provenance annotations as expected, we obtain (among others) the following axioms:
$(B\sqsubseteq \exists R.A, xy)$ (by rule T6), $(B\sqsubseteq C, xyz)$ (by rule T7) and $(C(a), vxyz)$ (by rule A1). }
$$
\new{\text{T6}\ \ \frac{M\sqsubseteq \exists R.(N\sqcap N')\quad N\sqsubseteq A}{M\sqsubseteq \exists R.(N\sqcap N'\sqcap A)} \quad\quad \text{T7}\ \ \frac{M\sqsubseteq \exists R.(N\sqcap A)\quad \exists S.A\sqsubseteq B\quad R\sqsubseteq S}{M\sqsubseteq B}}$$
$$\new{\text{A1}\ \ \frac{A_1\sqcap \dots\sqcap A_n\sqsubseteq B\quad A_i(a)\ (1\leq i\leq n)}{B(a)}\quad\text{ with }M,N^{(')}\text{ conjunctions over }\NC\cup\{\top,\bot\}}
$$
\new{However, $\Omc\not\models (C(a), vxyz)$ because the following model of $\Omc$ does not satisfy $(C(a), vxyz)$: $\Delta^\Imc=\NI\cup\{\star\}$, $b^\Imc=b$ for all $b\in\NI$, and
$$A^\Imc=\Delta^\Imc\times\{x\},\quad B^\Imc=\{(a,v)\},\quad C^\Imc=\{(a,vyz)\},\quad R^\Imc=\{(a,\star,vy)\}.$$ 
Intuitively, $(\top\sqsubseteq A,x)$ is not needed when considering the
	 provenance of elements of $C$, 
	but T7 can be applied on $B\sqsubseteq \exists R.A$ (which \blue{is derived by T6 using} $\top\sqsubseteq A$) and $\exists R.\top\sqsubseteq C$ to get $B\sqsubseteq C$.} 

\new{In contrast, \blue{rules from Table~\ref{tab:completionRules}} cannot produce a conjunction or qualified role restriction on the right-hand side. When applied to $\Omc$, \blue{they yield}: 
\begin{itemize}
\item $\CR^T_{3}$ (with $k=0$ and $k'=1$) applied to $(B\sqsubseteq \exists R,y)$, $(R\sqsubseteq R, 1)$, $(\top\sqsubseteq \top,1)$ and $(\exists R.\top\sqsubseteq C,z)$ produces $(B\sqsubseteq C, yz)$ \blue{(recall that $(R\sqsubseteq R, 1)$ and $(\top\sqsubseteq \top,1)$ are added before applying the completion rules, see definition of $\Smc$ above)};
\item $\CR^A_{1}$ (with $k=1$) applied to $(B\sqsubseteq C, yz)$ and $(B(a),v)$ produces $(C(a), yzv)$;
\item $\CR^A_{1}$ (with $k=1$) applied to $(\top\sqsubseteq A, x)$ and $(\top(a),1)$ produces $(A(a), x)$, and similarly $(A(a_\top), x)$ is produced from $(\top(a_\top),1)$ (\blue{again, recall that $(\top(a),1)\in\Smc$ and $(\top(a_\top),1)\in\Smc$});
\item no other rule is applicable. \qedhere
\end{itemize}
}
\end{example}

The following example shows why it is important to allow for repetitions in the left-hand side of the GCIs introduced by $\CR^T_{2}$ and $\CR^T_{3}$.

\begin{example}\label{ex:repetitionsMatter}
\blue{Let 
\begin{align*}
\Omc=\{&(A_1(a), x_1), (A_2(a),x_2), (A_1\sqsubseteq A, y_1), (A_2\sqsubseteq A, y_2), (A\sqsubseteq\exists P, u), (\exists P^-.A\sqsubseteq C, v), (\exists P.C\sqsubseteq D, w)\}.
\end{align*}
For every model $\Imc$ of $\Omc$, one can check that for $i\in\{1,2\}$, $(a^\Imc,x_iy_i)\in A^\Imc$, so there exists some $(a^\Imc,e_i,x_iy_iu)\in P^\Imc$. Hence, for $i\in\{1,2\}$, it holds that $(e_i,x_iy_iuv)\in C^\Imc$ and $(e_i,x_1x_2y_1y_2uv)\in C^\Imc$. It follows that $(a^\Imc,x_1y_1uvw)\in D^\Imc$, $(a^\Imc,x_2y_2uvw)\in D^\Imc$ and $(a^\Imc,x_1x_2y_1y_2uvw)\in D^\Imc$. 
Hence 
$\Omc\models (D(a), x_1y_1uvw)$, $\Omc\models (D(a), x_2y_2uvw)$ and $\Omc\models (D(a), x_1x_2y_1y_2uvw)$. 
Our algorithm derives these annotated assertions as follows:
\begin{itemize}
\item by $\CR^T_{3}$ (with $k=1$ and $k'=0$) applied to $(A\sqsubseteq\exists P, u)$, $(P\sqsubseteq P, 1)$, $(\exists P^-.A\sqsubseteq C, v)$, $(C\sqsubseteq C,1)$, and $(\exists P.C\sqsubseteq D, w)$, we obtain $(A\sqcap A\sqsubseteq D, uvw)$ ;
\item  for $i\in\{1,2\}$, by $\CR^A_{1}$ (with $k=1$) applied to $(A_i(a), x_i)$ and $(A_i\sqsubseteq A, y_i)$, we obtain $(A(a), x_iy_i)$;
\item for $i\in\{1,2\}$, by $\CR^A_{1}$ (with $k=2$) applied to $(A(a), x_iy_i)$ and $(A\sqcap A\sqsubseteq D, uvw)$,  we obtain $(D(a), x_iy_iuvw)$;
\item by $\CR^A_{1}$ ($k=2$) applied to $(A(a), x_1y_1)$, $(A(a), x_2y_2)$, and $(A\sqcap A\sqsubseteq D, uvw)$, we obtain $(D(a), x_1x_2y_1y_2uvw)$.
\end{itemize}
We would not obtain the last annotated assertion (whose annotation witnesses the fact that we can use the two ontology assertions \emph{together} to obtain $D(a)$) if we wrote $A\sqsubseteq D$ instead of $A\sqcap A\sqsubseteq D$. Intuitively, the repetition of $A$ means that it plays two roles (requiring a $P$-successor, and enforcing that this successor belongs to $C$), and each of these roles can be fulfilled by a different ``cause'' for $A$.}
\end{example}

\blue{Before showing that the completion algorithm is sound and complete for deciding annotated assertion entailment, we stress out that it is not sound for \emph{annotated GCI entailment}, because \why is not multiplicatively idempotent, as shown in the following example.}

\begin{example}\label{rem:completion-not-sound-GCI}
Consider $$\Omc=\{(A\sqsubseteq \exists P, x), (\exists P^-.B\sqsubseteq C,y), (\exists P.C\sqsubseteq D,z)\}.$$ 
\blue{The completion algorithm derives $(A\sqcap B\sqsubseteq D, xyz)$ by $\CR^T_{3}$ (with $k=1$, $k'=0$) applied to $(A\sqsubseteq \exists P, x)$, $(P\sqsubseteq P,1)$, $(\exists P^-.B\sqsubseteq C,y)$, $(C\sqsubseteq C,1)$ and $(\exists P.C\sqsubseteq D,z)$. This is actually the only annotated axiom derived by the completion algorithm besides the tautologies added in the initialization step.}  However, \blue{$\Omc\not\models(A\sqcap B\sqsubseteq D, xyz)$. Indeed,} the \why-annotated interpretation $\Imc$ such that $A^\Imc=\{(e,u+v)\}$, $B^\Imc=\{(e,w)\}$, $P^\Imc=\{(e,d,ux+vx)\}$, $C^\Imc=\{(d, uwxy+vwxy)\}$, and $D^\Imc=\{(e, uwxy\blue{z}+vwxy\blue{z}+uvwxy\blue{z})\}$ is a model of $\Omc$ such that $(e,uw+vw)\in(A\sqcap B)^\Imc$ while $(e,uwxyz+vwxyz)\notin D^\Imc$ because $(u+v)^2\neq(u+v)$ in the \why semiring.

\blue{We can also use this example to show that the completion algorithm would not work if the ontology was annotated with arbitrary elements of \why. Consider $\Omc'=\Omc\cup\{(A(a),u+v), (B(a),w)\}$. As above, the completion algorithm would derive $(A\sqcap B\sqsubseteq D, xyz)$, then $(D(a), uwxyz+vwxyz)$ by $\CR^A_{1}$. However, $\Omc'\not\models (D(a), uwxyz+vwxyz)$, as witnessed by the above \why-annotated interpretation extended by $a^\Imc=e$.}
\end{example}

\blue{Hence, the annotated GCIs derived by the completion algorithm should only be considered as syntactic steps needed to compute the set of entailed \emph{annotated assertions}. We manage to nevertheless prove soundness of the algorithm by using the weaker property for GCIs that if $(C\sqsubseteq D,\monomial)\in\mn{saturate}(\Omc)$, then for every model $\Imc$ of $\Omc$, for every \emph{monomial} $\nonomial$ and domain element $e\in\Delta^\Imc$, $(e,\nonomial)\in C^\Imc$ implies $(e, \nonomial\times\monomial)\in D^\Imc$.}

\begin{restatable}{theorem}{CorrectnessELHI}
\label{prop:completionalgorithmELHI}
Let \new{$\Omc$ be a \why-annotated $\ELHIbot$ ontology \blue{(annotated 
by $\lambda_\semiringVars:\Omc\mapsto\semiringVars\cup\{1\}$)} in normal form,} $\mn{saturate}(\Omc)$ the result of saturating \Omc, $\monomial\in\monomials(\semiringVars)$, and $\alpha$ an \emph{assertion}.
\begin{enumerate}
\item If $(\alpha,\monomial)\in\mn{saturate}(\Omc)$ then $\Omc\models (\alpha,\monomial)$. \new{Moreover, if $\alpha=A(a_\top)$, then for every $c\in\NI$, $\Omc\models (A(c),\monomial)$.}

\item If $\Omc$ is satisfiable and $\Omc\models (\alpha,\monomial)$, 
\begin{enumerate}
\item \new{if $\alpha$ is of the form $A(a)$ or $R(a,b)$ with $a,b\in\individuals{\Omc}$,  then $(\alpha,\monomial)\in\mn{saturate}(\Omc)$;}
\item \new{if $\alpha$ is of the form $A(c)$ with $c\in\NI\setminus\individuals{\Omc}$,  then $(A(a_\top),\monomial)\in\mn{saturate}(\Omc)$.}
\end{enumerate}
\item If $\Omc$ is unsatisfiable, then $(\bot(a),\nonomial)\in \mn{saturate}(\Omc)$ for some $a\in\individuals{\Omc}\new{\cup\{a_\top\}}$ and $\nonomial\in\monomials(\semiringVars)$.

\item $\mn{saturate}(\Omc)$ is computable in exponential time w.r.t.\ the size of $\Omc$ \blue{(in $O(e^{p(|\Omc|)})$ with $p$ a polynomial function)}.
\end{enumerate}
\end{restatable}
\begin{proof}[\new{Proof Sketch}]
\new{For (1), we show that for every model $\Imc$ of $\Omc$, 
for every $(\alpha, {\monomial})\in\mn{saturate}(\Omc)$: (i) if $\alpha$ is an assertion or an RI (positive or negative), then $\Imc\models(\alpha, {\monomial})$, and 
(ii) if $\alpha$ is a GCI of the form $C\sqsubseteq D$, then for every \emph{monomial} $\nonomial$ and domain element $e\in\Delta^\Imc$, $(e,\nonomial)\in C^\Imc$ implies $(e, \nonomial\times\monomial)\in D^\Imc$. 
The proof is by induction on the number of completion rule applications before $(\alpha,\monomial)$ is added to $\mn{saturate}(\Omc)$.}

\new{We show (2.a) by contrapositive: 
assuming that $(\alpha, {\monomial})\notin \mn{saturate}(\Omc)$, we show that the canonical model $\Imc_\Omc=\bigcup_{i\geq 0}\Imc_i$ of $\Omc$ is such that $\Imc_\Omc\not\models(\alpha,\monomial)$. We show by induction on $i$ that for every assertion $\beta$ on the vocabulary of $\Omc$ and every monomial $\nonomial$, $(\beta, {\nonomial})\notin \mn{saturate}(\Omc)$ implies that $\Imc_i\not\models (\beta,\nonomial)$. In a nutshell, the induction step shows that if $\Imc_{i+1}\models (\beta,\nonomial)$, then $(\beta, {\nonomial})\in \mn{saturate}(\Omc)$ via a case analysis on the chase rule applied to construct $\Imc_{i+1}$ from $\Imc_i$. The most technical part is to handle the case where this chase rule uses an axiom of the form $(\exists P.A\sqsubseteq B,\onomial)$ and an anonymous domain element $x\in\Delta^{\Imc_i}\setminus\NI$. For (2.b), we remark that if $\Omc\models (A(c),\monomial)$ for some $c\in\NI\setminus\individuals{\Omc}$, then $\Omc\models (A(d),\monomial)$ for every $d\in\NI$, and, in particular, $\Omc\models (A(a_\top),\monomial)$. We then consider an ontology $\Omc'$ that extends $\Omc$ with an assertion $(\mi{Top}(a_\top),1)$ for some fresh concept $\mi{Top}$ and conclude using~(2.a).}

\new{For (3), we consider the set of annotated axioms $\Omc'$ obtained from $\mn{saturate}(\Omc)$ by replacing $\bot$ by a fresh concept name $\mi{Bot}$ in assertions and GCIs and removing negative RIs and assertions on $a_\top$. We show that if $\Omc$ is unsatisfiable, there exists $(\mi{Bot}(a), \nonomial)\in\mn{saturate}(\Omc')$ for some $a\in\individuals{\Omc}\new{\cup\{a_\top\}}$, which implies that $(\bot(a), \nonomial)\in\mn{saturate}(\Omc)$. Otherwise, we could get a model of $\Omc$ via the canonical model construction with the annotated axioms in $\Omc'$.}

Finally, for (4), we observe that the completion rules add at most an exponential number of axioms ($O(|\Omc|^{2|\Omc|+1})$) annotated with an exponential number of monomials ($O(2^{|\Omc|})$), hence in total 
\blue{$O(e^{p_1(|\Omc|)})$} 
annotated axioms\blue{, with $p_1$ a polynomial function}. 
\blue{Moreover, the number of rule premises is bounded by $p_2(|\Omc|)$ for some polynomial function $p_2$ (since the size of the conjunctions in the left-hand side of GCIs is bounded by $|\Omc|^2$), so for each rule, the number of rule instantiations \wrt the axioms already derived is bounded by $O(e^{p_1(|\Omc|)*p_2(|\Omc|)})$. Hence, for each annotated axiom in $\mn{saturate}(\Omc)$, there have been at most $O(e^{p_1(|\Omc|)*p_2(|\Omc|)})$ rule instantiation evaluations during the step that added this axiom, so we can bound the total run time of the algorithm by $O(e^{p_1(|\Omc|)+p_1(|\Omc|)*p_2(|\Omc|)})$, and the algorithm runs in exponential time.} 
\end{proof}

Due to Theorems~\ref{prop:completionalgorithmELHI} and~\ref{th:normal-form}, 
we obtain the following corollaries. \new{Note that if an assertion $\alpha$ is of the form $R(a,b)$ with $a$ or $b$ in $\NI\setminus\individuals{\Omc}$, one can directly conclude that $\Omc\not\models (\alpha,\monomial)$ if $\Omc$ is satisfiable, so Theorem~\ref{prop:completionalgorithmELHI} shows that the completion algorithm can be used to decide entailment of any annotated assertion. Moreover, for every assertion $\alpha$, 
if $\monomial$ is a monomial in $\why$ (hence without variable repetition) such that $|\monomial|>|\Omc|$, $\monomial$ contains variables that do not occur in $\Omc$ and one can directly conclude that $\Omc\not\models (\alpha,\monomial)$ if $\Omc$ is satisfiable, so the size of $\monomial$ does not matter.}

\begin{corollary}
\label{cor:complexity:provmonomial-ELHI}
If $\Omc$ is a \why-annotated $\ELHIbot$ ontology \blue{(annotated 
by $\lambda_\semiringVars:\Omc\mapsto\semiringVars\cup\{1\}$)}, for every \new{assertion
$\alpha$} and monomial $\monomial\in\monomials(\semiringVars)$, $\Omc\models (\alpha,\monomial)$ is decidable in exponential time \new{\wrt $|\Omc|$}.
\end{corollary}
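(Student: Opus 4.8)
The plan is to obtain the algorithm by composing the two results cited in the corollary statement: Theorem~\ref{prop:reductions-ELHI} gives a polynomial-time reduction of annotated axiom entailment of an arbitrary axiom to annotated \emph{assertion} entailment, and Theorem~\ref{prop:completionalgorithmELHI} decides annotated assertion entailment in exponential time by saturation. Since an exponential-time procedure applied to a polynomially larger instance still runs in exponential time, this yields an exponential-time decision procedure for $\Omc \models (\alpha,\monomial)$. The bulk of the argument is therefore organising the pipeline so that exactly one exponential blow-up (the saturation) occurs.

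Concretely, I would proceed as follows. Given the \why-annotated $\ELHI_\bot$ ontology $\Omc$ and the axiom $\alpha$, first dispose of the case where $\alpha$ mentions a concept or role name outside $\signature{\Omc}$: such an $\alpha$ is either a syntactic tautology (entailed precisely with annotation $\one$) or, when $\Omc$ is satisfiable, not entailed with any annotation, so we may assume $\alpha$ is over $\signature{\Omc}$. Next, apply Rules $\NF_1$--$\NF_3$ exhaustively to get, in polynomial time, an ontology $\Omc'$ in normal form; by Theorem~\ref{th:normal-form}, $\Omc \models (\alpha,\monomial)$ iff $\Omc' \models (\alpha,\monomial)$. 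Now split on the shape of $\alpha$. If $\alpha$ is a GCI or an RI, invoke the first half of Theorem~\ref{prop:reductions-ELHI} to build in polynomial time a $\whybis{\semiringVars'}$-annotated $\ELHI_\bot$ ontology $\Omc''$, an assertion $\beta$, and a monomial $\nonomial$ over $\semiringVars'$ with $\Omc' \models (\alpha,\monomial)$ iff $\Omc'' \models (\beta,\nonomial)$; if $\alpha$ is already an assertion, take $\Omc'' := \Omc'$, $\beta := \alpha$, $\nonomial := \monomial$. In the first case $\Omc''$ need not be in normal form, so renormalise it once more in polynomial time (again preserving the relevant entailments by Theorem~\ref{th:normal-form}).

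Finally, I would run $\mn{saturate}(\Omc'')$. By Theorem~\ref{prop:completionalgorithmELHI}(4) this halts in time exponential in $|\Omc''|$, hence exponential in $|\Omc|$, producing an at most exponentially large set of annotated axioms. If $(\bot(a),\nonomial')\in \mn{saturate}(\Omc'')$ for some individual $a$ and some monomial $\nonomial'$, then by Theorem~\ref{prop:completionalgorithmELHI}(1) the ontology $\Omc''$ is unsatisfiable, hence entails every annotated axiom, and we answer \emph{yes}. Otherwise $\Omc''$ is satisfiable, and by parts~(1)--(2) of Theorem~\ref{prop:completionalgorithmELHI}, $\Omc'' \models (\beta,\nonomial)$ iff $(\beta,\widehat{\nonomial})\in \mn{saturate}(\Omc'')$, where $\widehat{\nonomial}$ is the repetition-free reduction of $\nonomial$ (which denotes the same element of \why since $\times$ is idempotent there, matching the repetition-free monomials produced by the completion rules); this membership test is a scan over $\mn{saturate}(\Omc'')$. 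Unwinding the reductions, this decides $\Omc \models (\alpha,\monomial)$ in exponential time.

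The main obstacle is not mathematical depth but bookkeeping: one must verify that normalization, the GCI/RI-to-assertion reduction, and the second normalization each enlarge the instance only polynomially, that the promised equivalences are applied to axioms of exactly the type and signature for which they are stated, and that the queried monomial is matched (after repetition-free reduction) against the monomials generated by the completion rules --- so that the lone exponential incurred by saturation remains the only exponential in the whole construction.
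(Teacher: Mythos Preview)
Your proposal is correct and follows essentially the same route the paper intends: normalize via Theorem~\ref{th:normal-form}, reduce arbitrary axiom entailment to assertion entailment via Theorem~\ref{prop:reductions-ELHI}, then decide the latter using the saturation of Theorem~\ref{prop:completionalgorithmELHI}. Two minor remarks: the second normalization after the reduction is unnecessary (the constructions in the proof of Theorem~\ref{prop:reductions-ELHI} already yield an ontology in normal form when applied to a normalized input), and the ``repetition-free reduction'' $\widehat{\nonomial}$ is a non-issue since in \why the product is idempotent and monomials are already subsets of $\semiringVars$.
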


\begin{corollary}\label{cor:complexity:provenanceassertion-ELHI}
If $\Omc$ is a \why-annotated $\ELHIbot$ ontology \blue{(annotated 
by $\lambda_\semiringVars:\Omc\mapsto\semiringVars\cup\{1\}$)}, for every \new{assertion
$\alpha$},  
the provenance $\Pmc(\alpha,\Omc)=\sum_{\Omc\models(\alpha,\monomial)}\monomial$ can be computed in exponential time \new{\wrt $|\Omc|$}.
\end{corollary}

Since subsumption checking in $\mathcal{ELI}$ is \ExpTime-hard \cite{BBL-EL08} \new{and assertion entailment in $\mathcal{ELI}$ can be reduced to subsumption checking}, we obtain a matching lower bound for the \new{annotated assertion} entailment problem and cannot hope for computing the why-provenance of axioms in $\ELHIbot$ more efficiently. 
However, there is an additional source of complexity when considering provenance, as illustrated by the following example that shows \new{that even if the ontology is formulated in a simple DL in which entailment can be checked in polynomial time, there may be exponentially many monomials $\monomial$ such that $\Omc\models(\alpha,\monomial)$.}

\begin{example}\label{ex:exponential}
\new{Consider the ontology 
$$\Omc=\{(A\sqsubseteq A_i,v_i), (A_i\sqsubseteq B,u_i)\mid 0\leq i\leq n\}\cup\{(B(a),x),(B\sqsubseteq A,u)\}.$$ For every $S\subseteq \{1,\dots, n\}$, $\Omc\models (A(a), x\times u \times \Pi_{i\in S} u_i\times v_i)$.}
\end{example}

\begin{remark}\label{rem:restriction-completion-for-one-monom}
\new{If one is only interested in checking the entailment of a given annotated assertion $(\alpha,\monomial)$, it is possible
to apply the completion algorithm using only the axioms annotated with relevant variables. More precisely, one can use 
the ontology 
$$\Omc_\monomial=\{(\beta,v)\mid (\beta,v)\in\Omc, v\text{ occurs in }\monomial\}\cup\{(\beta,1)\mid (\beta,1)\in\Omc\}$$ 
instead of $\Omc$. In the case where each axiom of $\Omc$ is annotated with a distinct variable, $\Omc_\monomial$ has exactly $|\monomial|$ axioms and $\Omc\models (\alpha,\monomial)$ is decidable in exponential time  \wrt $|\monomial|$.}
\end{remark}

\subsubsection{The Case of Lightweight DLs}\label{sec:completionELHIrestr}

We show that in the fragment $\ELHIbotrestr$ (\cf Definition \ref{def:ELHIrestr}), the exponential complexity of deciding annotated \new{assertion} entailment comes only from the size of the monomial. 
\new{Intuitively, this comes from the fact that in $\ELHIbotrestr$, the axioms $A\sqsubseteq \exists Q$, $Q\sqsubseteq P_i$ and $\exists\mn{inv}(P_i).A_i\sqsubseteq B_i$ in the premise of $\CR^{T}_3$ have to be such that $A_i=\top$, which means that the conclusion is simply $A\sqsubseteq D$. This allows us to modify the completion rules in a way that avoids introducing conjunctions of more than two concept names, and hence, the number of (non-annotated) GCIs that can be constructed remains polynomial \wrt $|\Omc|$. Recall that by definition of $\ELHIbotrestr$, an ontology that belongs to $\ELHIbotrestr$ is already in normal form.} 

\begin{figure}[tbh]
\framebox{\begin{minipage}{0.99\linewidth}
\begin{itemize}
\item $\CR^T_{2}$ is restricted by fixing $|M|\leq 2$ and $N$ is empty, \ie \blue{(for $M$ of the form $B_1\sqcap B_2$, $B$ or $\top$)}: 

``If $\Smc$ contains $(\blue{M}\sqsubseteq A,\monomial_1)$ and $(A\sqsubseteq C,\monomial_2)$, then $\Smc\leftarrow\Smc\cup\{\Phi\}$ where 
 $\Phi=(\blue{M}\sqsubseteq C,{\monomial_1\times \monomial_2})$.''
 
\item $\CR^T_{3}$ is restricted by fixing $k+k'\leq 2$ and $A_i=\top$ for every $1\leq i\leq k$, \ie: 

``Given $k, k'\geq 0$ such that $ k+k'\leq 2$, if $\Smc$ contains $(A\sqsubseteq \exists Q,\monomial_0)$, $(Q\sqsubseteq P, \monomial)$,  $(Q\sqsubseteq P_i,\monomial_i )$, $(\exists \mn{inv}(P_i).\top\sqsubseteq B_i,\nonomial_i)$ for $1\leq i\leq k$, 
$(\top\sqsubseteq B'_i,\onomial_i)$ for $1\leq i\leq k'$, 
$(B_1\sqcap\dots\sqcap B_k\sqcap B'_1\sqcap\dots\sqcap B'_{k'}\sqsubseteq C,\nonomial)$, and $(\exists P.C\sqsubseteq D, \onomial)$,
 then $\Smc\leftarrow\Smc\cup\{\Phi\}$ where 
$\Phi=(A\sqsubseteq D,\monomial\times\nonomial\times\onomial\times\monomial_0\times\Pi_{i=1}^k(\monomial_i\times\nonomial_i)\times\Pi_{i=1}^{k'}\onomial_i)$.''

\item $\CR^T_{4}$ is defined by: 

``Given $k, k'\geq 0$ such that $ k+k'\leq 2$, if $\Smc$ contains $(P\sqsubseteq P_i,\monomial_i ), (\exists P_i.\top\sqsubseteq B_i,\nonomial_i)$ for $1\leq i\leq k$,   
$(\top\sqsubseteq B'_i,\onomial_i)$ for $1\leq i\leq k'$, and 
$(B_1\sqcap\dots\sqcap B_k\sqcap B'_1\sqcap\dots\sqcap B'_{k'}\sqsubseteq C,\nonomial)$, 
 then $\Smc\leftarrow\Smc\cup\{\Phi\}$ where 
$\Phi=(\exists P.\top\sqsubseteq C,\nonomial\times\Pi_{i=1}^k(\monomial_i\times\nonomial_i)\times\Pi_{i=1}^{k'}\onomial_i)$.''

\item $\CR^T_{5}$ is defined by: 

``If $\Smc$ contains $(\top\sqsubseteq A_1,\monomial_1)$, $(\top\sqsubseteq A_2,\monomial_2)$ and $(A_1\sqcap A_2\sqsubseteq B,\nonomial)$, then $\Smc\leftarrow\Smc\cup\{\Phi\}$ where 
$\Phi=(\top\sqsubseteq B,\monomial_1\times\monomial_2\times\nonomial)$.''
\end{itemize}
\end{minipage}}
\caption{\new{Modified completion rules for the case where $\Omc$ belongs to $\ELHIbotrestr$. Note that in $\CR^T_{3}$ and $\CR^T_{4}$, the conjunction $B_1\sqcap\dots\sqcap B_k\sqcap B'_1\sqcap\dots\sqcap B'_{k'}$ contains at most two concept names since $k+k'\leq 2$.}}\label{fig:completion-rules-restr}
\end{figure}

\new{We modify the completion algorithm presented in Section~\ref{sec:subcompletionalgo} as follows. 
First, we modify the rules from Table~\ref{tab:completionRules} as explained in Figure~\ref{fig:completion-rules-restr}. Intuitively, we restrict the rules that may introduce exponentially many axioms ($\CR^T_{2}$ and $\CR^T_{3}$) so that they cannot introduce conjunctions in the left anymore. We also need to include two additional rules $\CR^T_{4}$ and $\CR^T_{5}$. 
Example \ref{ex:completion-restr} demonstrates how the modified completion rules work, and in particular why $\CR^T_{4}$ and $\CR^T_{5}$ are needed.}
\begin{example}\label{ex:completion-restr}
\new{Let 
\begin{align*}
\Omc=\{&(A\sqsubseteq \exists R,x), (\exists R^-.\top\sqsubseteq B_1,y_1), (\exists R^-.\top\sqsubseteq B_2,y_2), (\top\sqsubseteq B_3,y_3),(\top\sqsubseteq B_4,y_4), \\&(B_1\sqcap B_2\sqsubseteq C_1, z_1), (B_3\sqcap B_4\sqsubseteq C_2, z_2), (C_1\sqcap C_2\sqsubseteq C,z_3), (\exists R.C\sqsubseteq D,u), (A(a),v) \}.
\end{align*}
The completion algorithm for $\ELHIbot$ that uses the rules from Table~\ref{tab:completionRules} derives, among many
others, the following axioms:
\begin{itemize}
\item $(B_1\sqcap B_2\sqcap C_2\sqsubseteq C,z_1z_3)$ with $\CR^T_{2}$;
\item $(B_1\sqcap B_2\sqcap B_3\sqcap B_4\sqsubseteq C,z_1z_2z_3)$ with $\CR^T_{2}$;
\item $(A\sqsubseteq D, xy_1y_2y_3y_4z_1z_2z_3u)$ with $\CR^T_{3}$;
\item $(D(a), xy_1y_2y_3y_4z_1z_2z_3uv)$ with $\CR^A_{1}$.
\end{itemize}
In contrast, the completion algorithm for $\ELHIbotrestr$ that uses the modified rules derives the following axioms (among others):
\begin{itemize}
\item $(\exists R^-.\top\sqsubseteq C_1,y_1y_2z_1)$ with $\CR^T_{4}$;
\item $(\top\sqsubseteq C_2,y_3y_4z_2)$ with $\CR^T_{5}$;
\item $(A\sqsubseteq D, xy_1y_2y_3y_4z_1z_2z_3u)$ with the modified version of $\CR^T_{3}$;
\item $(D(a), xy_1y_2y_3y_4z_1z_2z_3uv)$ with $\CR^A_{1}$.
\end{itemize}
Intuitively, $\CR^T_{4}$ and $\CR^T_{5}$ allow us to derive the ``range information'' for $R$ that was directly handled in $\CR^T_{3}$ when the size of the conjunction in the left was not bounded \blue{by 2}.
}
\end{example}

\new{Second, for $k\in\mathbb{N}$, we let $\mn{saturate}^k(\Omc)$ be the set of annotated axioms obtained from $\Smc$ (defined as in Section~\ref{sec:subcompletionalgo}) through the modified completion rules restricted to \emph{monomials of size at most $k$} (\ie that have at most $k$ variables). We call $\mn{saturate}^k(\Omc)$ the \emph{$k$-saturation of $\Omc$}.} 
\new{The next theorem states that $\mn{saturate}^k(\Omc)$ can be computed in polynomial time \wrt the size of $\Omc$ and exponential time \wrt $k$, 
and that when $\Omc$ is satisfiable, $\mn{saturate}^k(\Omc)$ suffices for deciding entailment of all annotated assertions $(\alpha,\monomial)$ with $|\monomial|\leq k$.}

\begin{restatable}{theorem}{CorrectnessELHIrestr}
\label{prop:completionalgorithmELHIrestr}
\new{Let $\Omc$ be a \why-annotated ontology \blue{(annotated 
by $\lambda_\semiringVars:\Omc\mapsto\semiringVars\cup\{1\}$)} belonging to $\ELHIbotrestr$, $\monomial\in\monomials(\semiringVars)$, and $\alpha$ an assertion.}
\begin{enumerate}
\item \new{For every $k\in\mathbb{N}$, if $(\alpha,\monomial)\in\mn{saturate}^k(\Omc)$, then $\Omc\models (\alpha,\monomial)$. Moreover, if $\alpha=A(a_\top)$, then $\Omc\models (A(c),\monomial)$ for every $c\in\NI$.}

\item \new{If $\Omc$ is satisfiable and $\Omc\models (\alpha,\monomial)$ with  $|\monomial|\leq k$, 
\begin{enumerate}
\item if $\alpha$ is of the form $A(a)$ or $R(a,b)$ with $a,b\in\individuals{\Omc}$, then $(\alpha,\monomial)\in\mn{saturate}^k(\Omc)$;
\item if $\alpha$ is of the form $A(c)$ with $c\in\NI\setminus\individuals{\Omc}$, then $(A(a_\top),\monomial)\in\mn{saturate}^k(\Omc)$.
\end{enumerate}}

\item \new{$\mn{saturate}^k(\Omc)$ is computable in $p(|\Omc|^{k})$ where $p$ is a polynomial function.}
\end{enumerate}
\end{restatable}

\new{Moreover, the algorithm can be modified by ignoring the monomials in the annotated axioms to check in polynomial time whether $\Omc$ is satisfiable (which shows the complexity of satisfiability in $\ELHIbotrestr$ stated in Theorem~\ref{theo:ELHIrestrcomplexity}). It follows that to decide whether $\Omc\models(\alpha,\monomial)$, we can first check that $\Omc$ is satisfiable in polynomial time, then compute $\mn{saturate}^{|\monomial|}(\Omc)$ in polynomial time \wrt the size of $\Omc$ and exponential time \wrt $|\monomial|$.}

\begin{restatable}{corollary}{complexityprovmonomial}
\label{th:complexity:provmonomial}
If $\Omc$ \new{is a \why-annotated ontology \blue{(annotated 
by $\lambda_\semiringVars:\Omc\mapsto\semiringVars\cup\{1\}$) belonging to}}  $\ELHIbotrestr$ then, for every \new{assertion
$\alpha$} and monomial $\monomial$, $\Omc\models (\alpha,\monomial)$ is decidable \new{in $p(|\Omc|^{|\monomial|})$ where $p$ is a polynomial function.}
\end{restatable}

\begin{remark}
\new{As explained in Remark~\ref{rem:restriction-completion-for-one-monom}, one can apply the completion algorithm using only the axioms annotated with variables from $\monomial$ or $1$, and obtain a bound in $p(|\monomial|^{|\monomial|})$ in the case where each axiom of $\Omc$ is annotated with a distinct variable.}
\end{remark}

\blue{We obtain an improved \PSpace upper bound for the problem of annotated assertion entailment in $\ELHIbotrestr$ by adapting the proof of a similar result for $\ELHr$ annotated with monomials (\cf Section~\ref{sec:relatedwork}) \cite{provenance-ELHr}. The construction is} inspired by \citet{DBLP:conf/lata/HutschenreiterP17},  who showed how to transform a completion algorithm into a tree automaton, and \citet{DBLP:journals/iandc/BaaderHP08}, who gave conditions on tree automata to obtain \PSpace algorithms. 

\begin{theorem}\label{th:pspace}
\new{If $\Omc$ is a \why-annotated ontology \blue{(annotated 
by $\lambda_\semiringVars:\Omc\mapsto\semiringVars\cup\{1\}$)} belonging to $\ELHIbotrestr$, then for every assertion
$\alpha$ 
and monomial $\monomial$, $\Omc\models (\alpha,\monomial)$ is decidable in \PSpace.}
\end{theorem}
\begin{proof}
By Theorem~\ref{theo:ELHIrestrcomplexity}, one can check in polynomial time whether $\Omc$ is satisfiable. If it is not the case, $\Omc\models (\alpha,\monomial)$ trivially. Otherwise, by Theorem~\ref{prop:completionalgorithmELHIrestr}, $\Omc\models (\alpha,\monomial)$ iff $(\alpha,\monomial)\in \mn{saturate}^{|\monomial|}(\Omc)$ (or $(A(a_\top),\monomial)\in\mn{saturate}^{|\monomial|}(\Omc)$ in the case where $\alpha=A(c)$ for some $c\in\NI\setminus\individuals{\Omc}$). This is equivalent to the existence of a derivation tree \wrt the completion algorithm,  \ie a labelled tree $t$ such that (i) the root of $t$ is labelled by $(\alpha,\monomial)$, (ii) every leaf of $t$ is labelled by some annotated axiom in $\Smc$ (which can be constructed in polynomial time), (iii) every non-leaf node is labelled by an annotated axiom $(\beta,\nonomial)$ and its children's labels $(\beta_1,\nonomial_1),\dots,(\beta_\ell,\nonomial_\ell)$ are such that $(\beta_1,\nonomial_1),\dots,(\beta_\ell,\nonomial_\ell)$ and $(\beta,\nonomial)$ correspond respectively to the premises and conclusion of the instantiation of a completion rule (using the modified completion rules for $\ELHIbotrestr$)\blue{, and (iv) for every path from the root to a leaf, no label $(\beta,\nonomial)$ is repeated}. Such a tree has arity bounded by 8 (maximum number of premises in the completion rules for $\ELHIbotrestr$), and its depth is polynomially bounded (indeed, for every leaf, the monomial in the node labels on the path from the leaf to the root can only increases \blue{in length (or stay the same), so there are at most as many different monomials on the path as the number of variables}, and the number of (non-annotated) axioms that can be built by the completion rules for $\ELHIbotrestr$ is polynomial). Moreover, the size of each node label (annotated axiom) is polynomial, and for each node, there are only exponentially many options of \blue{sets of} 
children \blue{(since they are subsets of size at most 8 of $\mn{saturate}^{|\monomial|}(\Omc)$, which has exponential size)} so one can guess a node's children in polynomial time. It is thus possible to define a non-deterministic algorithm that checks whether a derivation tree for $(\alpha,\monomial)$ exists using depth-first search. 
\blue{In more details, such an algorithm starts from the root labelled by $(\alpha,\monomial)$, and if $(\alpha,\monomial)\notin\Smc$, guesses a set of at most 8 children labelled by some annotated axioms  $(\beta_1,\nonomial_1),\dots,(\beta_\ell,\nonomial_\ell)$ such that each $(\beta_i,\nonomial_i)$ is different from $(\alpha,\monomial)$ and each $\beta_i$ is either a concept or role assertion, a positive or negative RI, or a GCI of the form $A\sqsubseteq B$, $A\sqcap A'\sqsubseteq B$, $A\sqsubseteq \exists P$, or $\exists P.A\sqsubseteq B$ with $A,A'\in(\NC\cap\signature{\Omc})\cup\{\top\}$ and $B\in(\NC\cap\signature{\Omc})\cup\{\bot\}$ and each $\nonomial_i$ is a monomial over the variables that occur in $\monomial$, and $(\beta_1,\nonomial_1),\dots,(\beta_\ell,\nonomial_\ell)$ and $(\alpha,\monomial)$ correspond respectively to the premises and conclusion of the instantiation of a completion rule. Then the algorithm iterates, and guesses children of $(\beta_1,\nonomial_1)$ if $(\beta_1,\nonomial_1)\notin\Smc$, etc., using depth-first search, so keeping in memory only the path from $(\alpha,\monomial)$ to the current node and the children that remain to check for each node on the path.}
\end{proof}


\subsection{Conjunctive Queries over $\ELHIbot$ Ontologies}\label{sec:why-cq-elhi}
\new{We now provide algorithms to decide entailment of annotated BCQs and compute the provenance of BCQs \wrt \why-annotated ontologies. For the rest of this section, $\Omc$ is a $\why$-annotated $\ELHIbot$ ontology in normal form and $\mn{saturate}(\Omc)$ is the result of saturating \Omc.} 

We have seen that $\ELHIbot$ is expressive enough to reduce entailment of rooted tree-shaped BCQs to assertion entailment (\cf Theorem \ref{th:instancequeries}). 
However, this does not apply to general BCQs that may, for example, contain cycles. To deal with such queries, we adapt the method for conjunctive query answering in $\ELHIbot$ described by  \citet[Section 5.2]{DBLP:conf/rweb/BienvenuO15}. In a nutshell, the idea is to rewrite some parts of the query \wrt the saturated ontology into a query that can be matched in the assertions of the saturated ontology. 
\new{Before introducing the query rewriting process formally, we illustrate the approach on an example.}

\begin{example}
\new{Consider $q(x)=\exists y D(x)\wedge E(y)\wedge P(x,y)\wedge R(y,x)$ and 
\begin{align*}
\Omc=\{&(A\sqsubseteq \exists P, v_1), (P\sqsubseteq R^-,v_2), (\exists R.B\sqsubseteq C,v_3), (\exists P.C\sqsubseteq D,v_4), (C\sqsubseteq E, v_5),\\& (A(a),u_1), (B(a),u_2)\}.
\end{align*}
In every model $\Imc$ of $\Omc$ there exists $e\in\Delta^\Imc$ such that $(a^\Imc,e,u_1v_1)\in P^\Imc$, $(e,a^\Imc,u_1v_1v_2)\in R^\Imc$, $(e,u_1u_2v_1v_2v_3)\in C^\Imc$, $(a^\Imc,u_1u_2v_1v_2v_3v_4)\in D^\Imc$ and $(e,u_1u_2v_1v_2v_3v_5)\in E^\Imc$, so there is a match for $\ext{q(a)}$ in $\Imc$. It follows that $\Omc\models (q(a),u_1u_2v_1v_2v_3v_4v_5)$.}

\new{Our algorithm starts by computing $\mn{saturate}(\Omc)$. One can verify that $\mn{saturate}(\Omc)$ extends $\Omc$ with the tautologies added upon initialization (in particular, $(P\sqsubseteq P,1)$) and the two following axioms: $(A\sqcap B\sqsubseteq D, v_1v_2v_3v_4)$ and $(D(a), u_1u_2v_1v_2v_3v_4)$.}

\new{Then the algorithm rewrites $q$ into a set of CQs using the GCIs and RIs of $\mn{saturate}(\Omc)$, while keeping track of the annotations of the axioms used in the process. The goal of the rewriting is to \emph{eliminate some existentially quantified variables}. Since our example query has only one existential variable $y$, our goal is to rewrite $q$ into a conjunction of atoms that do not contain $y$. Intuitively, this is done with the following steps.
\begin{itemize}
\item Write $q$ as $\exists y D(x)\wedge E(y)\wedge P(x,y)\wedge R^-(x,y)$, so that $y$ occurs in second position of role atoms.
\item Consider the following axioms of $\mn{saturate}(\Omc)$: $(A\sqsubseteq \exists P, v_1)$, $(P\sqsubseteq P, 1)$, $(P\sqsubseteq R^-, v_2)$, $(C\sqsubseteq E, v_5)$, $(\exists R.B\sqsubseteq C,v_3)$. They allow us to rewrite $\exists y E(y)\wedge P(x,y)\wedge R^-(x,y)$ into $A(x)\wedge B(x)$. 
\item We obtain the CQ $q'(x)=D(x)\wedge A(x)\wedge B(x)$ associated with the monomial $v_1v_2v_3v_5$ that keeps track of the axioms used for this rewriting.
\end{itemize}
Since there is no other option to rewrite $\exists y E(y)\wedge P(x,y)\wedge R^-(x,y)$ and there is no other existentially quantified variable to consider, the full rewriting is the set of pairs of a CQ and a monomial $\{(q(x),1), (q'(x),v_1v_2v_3v_5)\}$. 
Finally, we evaluate all CQs of the rewriting in \blue{the annotated assertions of} $\mn{saturate}(\Omc)$. There is no match for $\ext{q(a)}$ but there is one for $\ext{q'(a)}$ that uses the three annotated assertions of $\mn{saturate}(\Omc)$, so we multiply the monomial of $q'(x)$ with the monomials of the annotated assertions to obtain that $\Omc\models (q(a), u_1u_2v_1v_2v_3v_4v_5)$.}
\end{example}

\blue{Importantly, our rewriting algorithm handles queries that may contain repeated atoms such as $C(x)\wedge C(x)$. The following example shows why allowing for repetitions in the rewriting matters.}

\begin{example}
\blue{Consider $\Omc$ defined in Example~\ref{ex:repetitionsMatter} and $q=\exists y P(a,y)\wedge C(y)$. One can easily verify from Example~\ref{ex:repetitionsMatter} that $\Omc\models (q, x_1y_1uv)$, $\Omc\models (q, x_2y_2uv)$ and $\Omc\models (q, x_1x_2y_1y_2uv)$, since for every model $\Imc$ of $\Omc$, for $i\in\{1,2\}$, there exists some $(a^\Imc,e_i,x_iy_iu)\in P^\Imc$ such that $(e_i,x_iy_iuv)\in C^\Imc$ and $(e_i,x_1x_2y_1y_2uv)\in C^\Imc$. Our algorithm produces the rewriting $\{ (q,1), (A(a)\wedge A(a), uv)\}$ and evaluating $A(a)\wedge A(a)$ on the annotated assertions of $\mn{saturate}(\Omc)$ indeed yield the three desired monomials when multiplied by $uv$, since $\mn{saturate}(\Omc)$ contains $(A(a), x_1y_1)$ and $(A(a), x_2y_2)$, so that the extended version of $q$, $\ext{q}=\exists t_1t_2 \ A(a,t_1)\wedge A(a,t_2)$, has four matches that map $t_1$ and $t_2$ to $x_1y_1$ or $x_2y_2$. In contrast, if we rewrite $q$ into $A(a)$, we would miss $\Omc\models (q, x_1x_2y_1y_2uv)$.}
\end{example}

\blue{However, in the same way as we did in the completion algorithm, we limit the number of repetitions of each atom in a rewriting by $\mn{Card}(\Omc)$. This is possible because every monomial $\monomial$ such that $\Omc\models (q,\monomial)$ contains at most $\mn{Card}(\Omc)$ variables so we never need to consider more than $\mn{Card}(\Omc)$ matches for an atom $C(x)$ in the annotated assertions of $\mn{saturate}(\Omc)$ to obtain all variables of $\monomial$ in the product.}

\new{In the following definition, the intuition behind $(q,\monomial)\rightarrow_\Omc (q',\monomial')$ is that the CQ $q'$ can be obtained from $q$ by a rewriting step that removes an existential variable from $q$ and $\monomial'$ is the product of $\monomial$ and the annotations of the axioms used to rewrite $q$ into $q'$.} 
\begin{definition}[Adapted from Definition 8 of \citet{DBLP:conf/rweb/BienvenuO15}]\label{def:rewriting}
For a CQ $q$, a monomial $\monomial$ and a $\why$-annotated $\ELHIbot$ ontology $\Omc$ \blue{(annotated 
by $\lambda_\semiringVars:\Omc\mapsto\semiringVars\cup\{1\}$)}  in normal form, we write $(q,\monomial)\rightarrow_\Omc (q',\monomial')$ if $(q',\monomial')$ can be obtained from $(q,\monomial)$ by applying the following steps.
\begin{itemize}
\item[(S1)]  Select in $q$ an arbitrary existentially quantified variable $x_0$ such that there are no atoms of the form $R(x_0,x_0)$ in $q$.
\item[(S2)] Replace each role atom of the form $R(x_0,y)$ in $q$, where $y$ and $R$ are arbitrary, by the atom $\mn{inv}(R)(y,x_0)$.
\item[(S3)] Let $V_p=\{y\mid Q(y,x_0)\in q\text{ for some }Q\}$ and select some $(A\sqsubseteq \exists P,v)\in\Omc$ (recall that $A\in\NC$ since $\Omc$ is in normal form) such that:
\begin{itemize}
\item[(a)] for every $Q(y,x_0)\in q$, there is some $(P\sqsubseteq Q,\monomial_{P\sqsubseteq Q})\in \mn{saturate}(\Omc)$; 

\item[(b)] for every $C(x_0)\in q$, there exist $p,p'\geq 0$ such that $(B_1{\sqcap}\dots{\sqcap} B_p\sqcap B'_1{\sqcap}\dots{\sqcap} B'_{p'}\sqsubseteq C, \nonomial_C)\in\mn{saturate}(\Omc)$ and 
\begin{itemize}
\item for every $1\leq i\leq p$, there exist $(\exists \mn{inv}(P_i).A_i\sqsubseteq B_i,v_i)\in\Omc$ and $(P\sqsubseteq P_i,\monomial_i)\in\mn{saturate}(\Omc)$ and 
\item for every $1\leq i\leq p'$, there exists $(\top\sqsubseteq B'_i,\onomial_i)\in\mn{saturate}(\Omc)$.
\end{itemize}
\end{itemize}
\new{For the selected $(A\sqsubseteq \exists P,v)\in\Omc$,} \blue{initialize a list of concept names $At(q,x_0,A\sqsubseteq \exists P)=[A]$ and a monomial $mon(q,x_0,A\sqsubseteq \exists P)=v$. Then:}

\begin{itemize}
\item[(a)] For every \blue{occurrence of $Q(y,x_0)$ in $q$}, arbitrarily select one $(P\sqsubseteq Q,\monomial_{P\sqsubseteq Q})$ in $\mn{saturate}(\Omc)$ and update $mon(q,x_0,A\sqsubseteq \exists P)\leftarrow mon(q,x_0,A\sqsubseteq \exists P)\times\monomial_{P\sqsubseteq Q}$.

\item[(b)] For every \blue{occurrence of $C(x_0)$ in $q$}, arbitrarily select some $(B_1\sqcap\dots\sqcap B_p\sqcap B'_1\sqcap\dots\sqcap B'_{p'}\sqsubseteq C, \nonomial_C)\in\mn{saturate}(\Omc)$, for every $1\leq i\leq p$ select arbitrarily a pair $(\exists \mn{inv}(P_i).A_i\sqsubseteq B_i,v_i)\in\Omc$ and $(P\sqsubseteq P_i,\monomial_i)\in\mn{saturate}(\Omc)$, and for every $1\leq i\leq p'$ select arbitrarily some $(\top\sqsubseteq B'_i,\onomial_i)\in\mn{saturate}(\Omc)$.\\ 
Update $At(q,x_0,A\sqsubseteq \exists P)\leftarrow At(q,x_0,A\sqsubseteq \exists P)\blue{\cdot [}A_i\mid 1\leq i\leq p\blue{]}$ and $mon(q,x_0,A\sqsubseteq \exists P)\leftarrow mon(q,x_0,A\sqsubseteq \exists P)\times\nonomial_C\times \monomial_C$ where $\monomial_C=\prod_{i=1}^p (v_i\times\monomial_i)\times\prod_{i=1}^{p'}\onomial_i$.
\end{itemize}

\item[(S4)]  Drop from $q$ every atom that contains $x_0$.
\item[(S5)] Select a variable $y_0\in V_p$ and replace every occurrence of \new{every} $y'\in V_p$ in $q$ by $y_0$.
\item[(S6)] \blue{Add atom $D(y_0)$ to $q$ for each occurrence of concept name $D$ in the list $At(q,x_0,A\sqsubseteq \exists P)$ (i.e., $D(y_0)$ is repeated as many times as $D$ occurs in the list), then limit the total number of occurrences of $D(y_0)$ in the query to $\mn{Card}(\Omc)$}  
and multiply $\monomial$ by $mon(q,x_0,A\sqsubseteq \exists P)$.
\end{itemize}
We write $(q,1)\rightarrow_\Omc^* (q^*,\monomial^*)$ if $(q,1)=(q_0,\monomial_0)$ and $(q^*,\monomial^*)=(q_k,\monomial_k)$ for some finite rewriting sequence $(q_0,\monomial_0)\rightarrow_\Omc (q_1,\monomial_1)\rightarrow_\Omc\dots\rightarrow_\Omc(q_k,\monomial_k)$ for $k\geq 0$, and we say that  
the set $\Rew(q,\Omc)=\{(q^*,\monomial^*)\mid (q,1)\rightarrow_\Omc^* (q^*,\monomial^*)\}$ is the 
\emph{annotated rewriting of $q$ \wrt $\mn{saturate}(\Omc)$}.
\end{definition}

Note that there may be exponentially many pairs of the form $(q^*,\monomial_1), (q^*,\monomial_2),\dots$ in $\Rew(q,\Omc)$ since the same query $q^*$ can be obtained by choosing different annotated inclusions in the rewriting steps and an axiom can be annotated with exponentially many monomials in $\mn{saturate}(\Omc)$ \new{(\eg in Example \ref{ex:exponential}, $(B\sqsubseteq A, u \times \Pi_{i\in S} u_i\times v_i)\in\mn{saturate}(\Omc)$ for every $S\subseteq\{1,\dots,n\}$)}. 

\begin{restatable}{theorem}{ThCQansweringAlgo}\label{th:CQ-answering-algorithm}
Let $\Omc$ be a satisfiable $\why$-annotated $\ELHIbot$ ontology \blue{(annotated 
by $\lambda_\semiringVars:\Omc\mapsto\semiringVars\cup\{1\}$)} in normal form, $q(\vec{x})$ be a CQ \new{that does not contain any individual name from $\NI\setminus\individuals{\Omc}$} and $\vec{a}$ be a tuple of individuals from $\Omc$ of the same \new{length} as $\vec{x}$. We denote by $\Omc'$ the non-annotated version of $\Omc$, \new{by $\Dmc$ the set of annotated assertions in $\mn{saturate}(\Omc)$, by $\Imc_\Dmc$ the annotated interpretation with domain $\NI$ that satisfies exactly the annotated assertions in $\Dmc$,} and by $\Dmc'$ the corresponding set of (non-annotated) assertions. Then 
\begin{enumerate}
\item $\Omc'\models q(\vec{a})$ iff there is $(q^*,\monomial^*)\in\Rew(q,\Omc)$ such that there is a match for $q^*(\vec{a})$ in~$\Dmc'$.
\item For every $\monomial\in\monomials(\semiringVars)$, $\Omc\models (q(\vec{a}),\monomial)$ iff \new{there exist $(q^*,\monomial^*)\in\Rew(q,\Omc)$ and $\onomial\in\p{\Imc_\Dmc}{\ext{q^*(\vec{a})}}$ such that $\monomial=\monomial^*\times\onomial$.} 
\item  \new{$\Pmc(q(\vec{a}),\Omc)=\sum_{(q^*,\monomial^*)\in \Rew(q,\Omc)}(\monomial^*\times \Sigma_{\onomial\in \p{\Imc_\Dmc}{\ext{q^*(\vec{a})}}}\onomial$).} 
\item $\Rew(q,\Omc)$ is computable in exponential time in $|\Omc|+|q|$. 
\end{enumerate}
\end{restatable}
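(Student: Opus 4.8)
The plan is to transfer the correctness argument for the query rewriting procedure of \citeauthor{DBLP:conf/rweb/BienvenuO15} to the annotated setting, proving Claims~1 and~2 simultaneously by an induction that additionally carries provenance monomials; Claims~3 and~4 then follow by elementary algebra and a complexity count. Throughout I use that, by Theorems~\ref{lem:homomo} and~\ref{thm:can-model-main}, the annotated canonical model $\Imc_\Omc$ maps homomorphically into every model of $\Omc$ and captures annotated BCQ (hence annotated assertion) entailment, so it suffices to analyse matches of the extended query $\ext{q}(\vec a)$ in $\Imc_\Omc$; and that, by Theorem~\ref{prop:completionalgorithmELHI} together with Proposition~\ref{prop:FocusMonomialForWhy}, the set $\Dmc$ of annotated assertions of $\mn{saturate}(\Omc)$ consists exactly of the annotated assertions entailed by $\Omc$. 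The first step is to recall the forest structure of $\Imc_\Omc$ from Section~\ref{sec:canonical-model}: every domain element is either a named individual or an anonymous element $d_f$ created by one rule application from some $(A\sqsubseteq\exists P,v)\in\Omc$ to a pair $(d,\kappa)\in A^{\Imc_n}$, contributing $(d,d_f,v\times\kappa)$ to $P^{\Imc_\Omc}$; and, crucially, every annotated membership $(d_f,\kappa')\in C^{\Imc_\Omc}$ and every annotated edge between $d_f$ and its parent $d$ is determined by the annotated concept memberships of $d$ together with axioms of $\Omc$ and of $\mn{saturate}(\Omc)$, in exactly the combinatorial shape read off by conditions (a) and (b) of step~(S3) --- this is precisely what the completion rules $\CR^T_1$, $\CR^T_2$, $\CR^T_3$ encode.

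Given this, the core of the induction is that one application of $\rightarrow_\Omc$ corresponds to deleting one anonymous element from the image of a match. Suppose $\pi$ is a match of $\ext{q}(\vec a)$ in $\Imc_\Omc$ with product of annotations $\monomial$, and $x_0$ is an existential variable with $\pi(x_0)=d_f$ an anonymous element having no anonymous descendant in the image of $\pi$; then $q$ has no atom $R(x_0,x_0)$, since $\Imc_\Omc$ has no loops on anonymous elements, so step~(S1) applies. Every role atom incident to $x_0$ must use the single edge from $d_f$ to its parent $d$, which after step~(S2) becomes an atom $Q(y,x_0)$ with $\pi(y)=d$, forcing every variable in $V_p$ to be mapped to $d$ (this is why the merge in step~(S5) is sound), and every atom $C(x_0)$ must be witnessed through condition~(b). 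Dropping the atoms on $x_0$ and placing the atoms $\{A_i(y_0)\}$ on the parent variable $y_0$ (steps (S4)--(S6)), while multiplying the running monomial by $mon(q,x_0,A\sqsubseteq\exists P)$ --- which is precisely the product of the annotations of $(A\sqsubseteq\exists P,v)$ and of all the selected axioms $(P\sqsubseteq Q,\cdot)$, $(P\sqsubseteq P_i,\cdot)$, $(\exists\mn{inv}(P_i).A_i\sqsubseteq B_i,\cdot)$, $(\top\sqsubseteq B'_i,\cdot)$, $(B_1\sqcap\dots\sqsubseteq C,\cdot)$ witnessing the memberships of $d_f$ --- produces a query $q'$ and a match $\pi'$ of $\ext{q'}(\vec a)$ in $\Imc_\Omc$ with product $\monomial'$ such that $\monomial=\monomial'\times mon(q,x_0,A\sqsubseteq\exists P)$ and whose image contains strictly fewer anonymous elements. ($\times$-idempotency of $\why$ makes this bookkeeping robust to an axiom being reused.) Conversely, every choice permitted in step~(S3) reflects a genuine annotated membership in $\Imc_\Omc$, so any match of a pair $(q^*,\monomial^*)\in\Rew(q,\Omc)$ in $\Dmc$ lifts, step by step, back to a match of $\ext{q}(\vec a)$ in $\Imc_\Omc$ of product $\monomial^*\times\onomial$. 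Iterating until no anonymous element remains in the image yields Claims~1 (after forgetting annotations this is exactly the statement of \citeauthor{DBLP:conf/rweb/BienvenuO15}) and~2. The delicate point, and the main obstacle, is verifying that $mon(q,x_0,A\sqsubseteq\exists P)$ as assembled by the rewriting rules coincides exactly with the product of axiom annotations that the canonical-model construction uses to justify $d_f$ and its incident edges, and that a single rewriting step correctly absorbs all query atoms and variables mapped to $d_f$ --- in particular that the merge of $V_p$ in step~(S5) is both necessary and sound.

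Claim~3 is then immediate from Claim~2 (via Theorem~\ref{thm:can-model-main}), distributivity of $\times$ over $\oplus$, and $\oplus$-idempotency:
\[
\Pmc(q(\vec a),\Omc)=\bigoplus_{\Omc\models(q(\vec a),\monomial)}\monomial=\bigoplus_{(q^*,\monomial^*)\in\Rew(q,\Omc)}\ \bigoplus_{\Dmc\models(q^*(\vec a),\onomial)}\monomial^*\times\onomial=\bigoplus_{(q^*,\monomial^*)\in\Rew(q,\Omc)}\monomial^*\times\Pmc(q^*(\vec a),\Dmc),
\]
where all the sums are finite since $\Omc$ uses finitely many variables and, by $\times$-idempotency, every entailed monomial has degree at most $|\Omc|$, and where $\oplus$-idempotency guarantees that summing over these sets causes no double counting.

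For Claim~4, $\mn{saturate}(\Omc)$ is computable in exponential time by Theorem~\ref{prop:completionalgorithmELHI}, and contains exponentially many annotated axioms, each a monomial of degree at most $|\Omc|$. Every $\rightarrow_\Omc$-sequence has length smaller than $|\mn{terms}(q)|$: step~(S1) picks an existential variable $x_0$, step~(S4) deletes every atom on it, step~(S5) merges variables, and step~(S6) adds only unary atoms over an already present variable $y_0$, so the set of variables strictly shrinks at each step while the query size stays polynomial (at most $|\mn{terms}(q)|\cdot|\signature{\Omc}|$ unary atoms are ever added). At each of these $O(|q|)$ steps there are at most exponentially many choices --- of $x_0$, of $(A\sqsubseteq\exists P,v)\in\Omc$, and, for each atom on $x_0$, of the relevant annotated axioms of $\mn{saturate}(\Omc)$. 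Hence $\Rew(q,\Omc)$ has at most exponentially many elements in $|q|+|\Omc|$ and can be enumerated by a straightforward fixpoint computation in exponential time.
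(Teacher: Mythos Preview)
Your proposal is correct and follows essentially the same approach as the paper: both arguments reduce to the canonical model $\Imc_\Omc$ via Theorem~\ref{thm:can-model-main}, establish the set equality between $\p{\Imc_{\Omc}}{\ext{q}(\vec a)}$ and $\{\monomial^*\times\onomial\mid (q^*,\monomial^*)\in\Rew(q,\Omc),\ \onomial\in\p{\Imc_\Dmc}{\ext{q^*}(\vec a)}\}$ by the two-direction induction you describe (peeling off a leaf anonymous element for ``$\subseteq$'', unfolding one rewriting step for ``$\supseteq$''), and handle Claim~4 by the same termination and counting argument. The only cosmetic difference is that the paper proves Claim~3 first and reads off Claims~1 and~2 from it, whereas you state Claim~2 as the primary induction and derive Claim~3 by summation; and the paper packages the ``delicate point'' you flag---that $mon(q,x_0,A\sqsubseteq\exists P)$ matches exactly the annotations accumulated in $\Imc_\Omc$ at $d_f$---into two standalone lemmas (Lemmas~\ref{lem:canonical-saturation-RIs} and~\ref{lem:canonical-saturation-GCIs}) rather than appealing directly to the shape of $\CR^T_1$--$\CR^T_3$.
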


\new{We show that for every $(q^*,\monomial^*)\in\Rew(q,\Omc)$, $(q^*,\monomial^*)$ is of polynomial size. Since by Theorem~\ref{prop:completionalgorithmELHI}, $\Dmc$ can be computed in exponential time, hence is of exponential size, deciding whether $\onomial\in\p{\Imc_\Dmc}{\ext{q^*(\vec{a})}}$ or computing $\Sigma_{\onomial\in \p{\Imc_\Dmc}{\ext{q^*(\vec{a})}}}\onomial$ can be done in exponential time in $|\Omc|+|q|$ (by finding all matches for $\ext{q^*(\vec{a})}$ in $\Imc_\Dmc$). 
This yields the following corollaries of Theorem~\ref{th:CQ-answering-algorithm}.  Again, note that for every BCQ $q$, if $\monomial$ is a monomial in $\why$ such that $|\monomial|>|\Omc|$, $\monomial$ contains variables that do not occur in $\Omc$ and one can directly conclude that $\Omc\not\models (q,\monomial)$ if $\Omc$ is satisfiable, so the size of $\monomial$ does not matter.} 

\begin{restatable}{corollary}{ThCQAnsweringComplexityEntailemnt}\label{th:complexity-cq-elhi-entailment}
If $\Omc$ is a \why-annotated $\ELHIbot$ ontology \blue{(annotated 
by $\lambda_\semiringVars:\Omc\mapsto\semiringVars\cup\{1\}$)}, then for every BCQ $q$ \new{that does not contain any individual name from $\NI\setminus\individuals{\Omc}$} and $\monomial\in\monomials(\semiringVars)$,
$\Omc\models (q,\monomial)$ is decidable in exponential time \new{\wrt $|\Omc|+|q|$}.
\end{restatable}

\begin{restatable}{corollary}{ThCQAnsweringComplexity}\label{th:complexity-cq-elhi}
If $\Omc$ is a \why-annotated $\ELHIbot$ ontology \blue{(annotated 
by $\lambda_\semiringVars:\Omc\mapsto\semiringVars\cup\{1\}$)}, then for every BCQ $q$ \new{that does not contain any individual name from $\NI\setminus\individuals{\Omc}$}, 
the provenance $\Pmc(q,\Omc)$ can be computed in exponential time \new{\wrt $|\Omc|+|q|$}.
\end{restatable}

Since BCQ entailment is already \ExpTime-complete for $\ELHIbot$ without annotations \cite{DBLP:conf/rweb/BienvenuO15}, we cannot hope for a better complexity upper bound. 
However, if we restrict ourselves to ontologies expressed in $\ELHIbotrestr$, we can obtain a result similar to Corollary~\ref{th:complexity:provmonomial} for annotated BCQ entailment. 

\begin{restatable}{theorem}{complexityBCQprovmonomial}
\label{th:complexity:provBCQmonomial}
If $\Omc$ is a \new{\why-annotated} $\ELHIbotrestr$ ontology \blue{(annotated 
by $\lambda_\semiringVars:\Omc\mapsto\semiringVars\cup\{1\}$)}, then for every BCQ $q$ \new{that does not contain any individual name from $\NI\setminus\individuals{\Omc}$} and monomial $\monomial$, 
$\Omc\models (q,\monomial)$ is decidable in \NP \wrt $|\Omc|+|q|$ (with $|\monomial|$ fixed), and in exponential time in~$|\monomial|$.
\end{restatable}

A matching \NP lower bound comes from the combined complexity of standard BCQ entailment over databases. 
Note that for the specific cases of \DLLiteR and \ELHr, \citet{provenance-DLLite}  and \citet{provenance-ELHr}  proposed algorithms for annotated BCQ entailment with some 
annotations \new{based on monomials}, respectively by adapting classical query rewriting algorithm for \DLLiteR \cite{DBLP:journals/jar/CalvaneseGLLR07} and the combined approach based on the computation of a compact canonical model for \ELHr \mbox{\cite{LTW:elcqrewriting09}}.

\section{\new{Provenance in the \posbool and \lin Semirings}}\label{sec:posbool-lin}

\new{In this section, we investigate provenance in the \posbool and \lin semirings, which correspond to well-known notions of provenance in the database setting. 
Similarly to the preceding section, we focus on ontologies annotated with variables (or $1$) rather than any element from $\posbool$ or $\lin$. A $\posbool$ (resp.\ $\lin$)-annotated ontology thus means $\tup{\Omc,\lambda_\semiringVars}$ with $\lambda_\semiringVars:\Omc\mapsto\semiringVars\cup\{1\}$ and we often omit the superscripts and identify $\Omc$ and $\tup{\Omc,\lambda_\semiringVars}$.} 
\new{Recall from Section~\ref{sec:database-prov} that $\posbool=(\posbool, \vee, \wedge, 0, 1)$ is the semiring of positive Boolean functions over $\semiringVars$ and that $\lin=(\lin,\cup,\cup^*, \emptyset^*,\emptyset)$ where $\lin$ is the set of all subsets of $\semiringVars$ extended with $\emptyset^*$ and for every $S_1,S_2\subseteq \semiringVars$, $S_1\cup^*S_2=S_1\cup S_2$. 
Also recall that we often represent elements of \posbool or \lin as polynomials: elements of \posbool are identified with their irredundant disjunctive normal form and represented by a sum of monomials, and elements of \lin are represented by monomials.}

\new{Given a satisfiable $\ELHIbot$ ontology $\Omc$ annotated by elements from $\semiringVars\cup\{1\}$, if we denote by $\Omc^{\why}$, $\Omc^{\posbool}$ and $\Omc^{\lin}$ this ontology interpreted as a \why-, \posbool- and \lin-annotated ontology respectively, it is easy to verify that the three annotated ontologies share the same canonical model (if the annotations are represented by monomials). The following proposition thus follows from Theorems~\ref{thm:can-model-main} and \ref{thm:can-model-main-ri} (since the statement holds trivially in the cases where $\Omc$ is unsatisfiable or $\alpha$ is an RI with an unsatisfiable left-hand side by Remarks~\ref{rem:unsat-onto} and~\ref{rem:unsat-left}).} 
\begin{proposition} \label{prop:why-lin-pos}
\new{Let $\Omc$ be an $\ELHIbot$ ontology, $\lambda_\semiringVars:\Omc\mapsto\semiringVars\cup\{1\}$ and $\Omc^{\why}$, $\Omc^{\posbool}$ and $\Omc^{\lin}$ defined by $\tup{\Omc,\lambda_\semiringVars}$. For every BCQ, assertion, or RI $\alpha$ and monomial $\monomial$, $\Omc^{\why}\models (\alpha,\monomial)$ iff $\Omc^{\posbool}\models (\alpha,\monomial)$ iff $\Omc^{\lin}\models (\alpha,\monomial)$.}
\end{proposition}

\begin{remark}\label{rem:why-lin-pos-gci}
\new{Example~\ref{ex:idempotent} shows that the proposition does not hold if $\alpha$ is a GCI because \why is not \timesidem: if $\Omc=\{(A\sqsubseteq B_1,x_1), (A\sqsubseteq B_2, x_2), (B_1\sqcap B_2\sqsubseteq C, x_3)\}$, since \posbool and \lin are \timesidem, ${\Omc^{\posbool}}\models (A\sqsubseteq C,x_1x_2x_3)$ and ${\Omc^{\lin}}\models (A\sqsubseteq C,x_1x_2x_3)$ but ${\Omc^{\why}}\not\models (A\sqsubseteq C,x_1x_2x_3)$.}
\end{remark}

\new{However, the \emph{provenance} differs depending on the semiring considered, since the additions have different properties. Since \posbool is absorptive (intuitively, $x+xy=x$ since the Boolean function $x\vee(x\wedge y)$ is equivalent to $x$), the \posbool-provenance of a BCQ, assertion or RI $\alpha$ corresponds to the sum of \emph{minimal} monomials that occur in its \why-provenance. Regarding \lin, for all monomials $\monomial$ and $\nonomial$, $\monomial+\nonomial=\monomial\times\nonomial$ so the \lin-provenance of $\alpha$  corresponds to the product of all variables that occur in its \why-provenance. These relationships are well-known in the database setting.} 

\begin{proposition}\label{prop:why-lin}
\new{Let $\Omc^{\why}$, $\Omc^{\posbool}$, $\Omc^{\lin}$ be as in Proposition~\ref{prop:why-lin-pos} and $\alpha$ be a BCQ, assertion, or RI.
\begin{itemize}
\item $\Pmc(\alpha,\Omc^{\posbool})=\Sigma_{\monomial\in M}\monomial$ where $M$ is the set of monomials $\monomial$ such that $\monomial$ occurs in $\Pmc(\alpha,\Omc^{\why})$ and for every $\nonomial\neq\monomial$ that occurs in $\Pmc(\alpha,\Omc^{\why})$, there exists a variable that occurs in $\nonomial$ and not in $\monomial$. 
\item $\Pmc(\alpha,\Omc^{\lin})=\Pi_{v\in V}v$ where $V=\{v\mid v\in\semiringVars\text{ occurs in }\Pmc(\alpha,\Omc^{\why})\}$.
\end{itemize}}
\end{proposition}

\subsection{\posbool-Provenance}
\label{sec:pin:prov}

\subsubsection{Relationship \new{to} Axiom Pinpointing}
Provenance over the \posbool semiring \new{is closely related to what} has been thoroughly studied in the DL literature under the name of 
\emph{axiom pinpointing}  
(originally coined by \citet{DBLP:conf/ijcai/SchlobachC03}). In this context, the goal is
to find one or all the \emph{minimal} (w.r.t.\ set inclusion) subsets of axioms that entail a given consequence. These
sets are called the \emph{justifications} for the consequence \cite{DBLP:conf/semweb/KalyanpurPHS07}.  
This definition of justifications can be  straightforwardly extended to justifications for BCQs. 
\new{Proposition~\ref{prop:ap:pos} establishes the connection between \posbool-provenance and justifications. The Boolean formula $\bigvee_{\Jmc\in\Just(\alpha)}\ \bigwedge_{\beta\in\Jmc}\lambda_\semiringVars(\beta)$ in the proposition statement is a \emph{pinpointing formula}, defined in the DL literature as a monotone Boolean formula whose satisfying valuations correspond exactly to subsets of the ontology that entail the consequence \cite{DBLP:journals/jar/BaaderP10}. Indeed, since description logics are monotone, every superset of a justification also entails the consequence so the formula is equivalent to $\bigvee_{\Mmc\subseteq\Omc',\Mmc\models\alpha}\ \bigwedge_{\beta\in\Mmc}\lambda_\semiringVars(\beta)$.} 

\begin{restatable}{proposition}{propposbooljustif}
\label{prop:ap:pos}
\new{Let $\Omc=\tup{\Omc',\lambda_\semiringVars}$ be a satisfiable \posbool-annotated $\ELHIbot$ ontology \blue{(annotated 
by $\lambda_\semiringVars:\Omc'\mapsto\semiringVars\cup\{1\}$)}. If (i) $\alpha$ is a BCQ, an assertion, or an RI whose left-hand side is satisfiable  \wrt $\Omc$, or (ii) $\alpha$ is a GCI between basic concepts whose left-hand side is satisfiable  \wrt $\Omc$ and $\Omc$ does not contain any GCI with $\top$ as left-hand side, then 
$$\Pmc(\alpha,\Omc)=
	\bigvee_{\Jmc\in\Just(\alpha)}\ \bigwedge_{\beta\in\Jmc}\lambda_\semiringVars(\beta)$$
where $\Just(\alpha)$ denotes the set of all justifications for $\alpha$ \wrt $\Omc'$.}
\end{restatable}
\begin{remark}
\new{One can check that the conditions imposed on $\alpha$ and $\Omc$ are necessary with the usual examples. Example~\ref{ex:problem-gcis-sem-entailtwo} shows that Proposition~\ref{prop:ap:pos} does not apply to GCIs with conjunctions in the left.  
Regarding the satisfiability of the left-hand side of $\alpha$, if $\Omc=\{(A\sqsubseteq B,v_1), (A\sqsubseteq C, v_2), (B\sqcap C\sqsubseteq \bot,v_3)\}$, then $\Omc\models A\sqsubseteq D$ for every $D\in\NC\setminus\{A\}$ and the only justification for $\Omc\models A\sqsubseteq D$ is $\Omc'$ itself. However, $\Pmc(A\sqsubseteq D,\Omc)$ is the sum over all elements of \posbool so is equal to 1. Finally, if $\Omc=\{(\top\sqsubseteq D,x)\}$, for every $C\in\NC\setminus\{D\}$, $\Omc'$ is a justification for $C\sqsubseteq D$ but $\Pmc(C\sqsubseteq D,\Omc)=0$ (\cf Example~\ref{ex:fuzzy-GCI-top}).}
\end{remark}

In the context of axiom pinpointing, it is common to  allow for static axioms to capture cases where, for instance, the assertions or the GCIs and RIs are 
considered to be immutable, among many other situations that appear in different formalisms 
\shortcite{DBLP:conf/ki/BaaderPS07,PrMa-13,LiMa13}. 
Formally, we consider that a DL ontology $\Omc$ is partitioned into two subsets $\Omc:=\Omc_s\cup \Omc_r$ where
$\Omc_s$ is the class of \emph{static} axioms, which are assumed to always hold, and $\Omc_r$ is the class of 
\emph{refutable} axioms which take part in the justifications. 
In particular, to the best of our knowledge, the only  systematic study on
the computation of BCQ justifications considers justifications to be sets of assertions, \ie GCIs and RIs to be static \cite{DBLP:conf/ecai/CeylanLMV20}. 
To handle this case, we may simply \new{annotate the axioms of $\Omc_s$ with $1$ and those of $\Omc_r$ with distinct variables}. 

It is worth noting that the problem of axiom pinpointing has been studied, under different names, in many other 
communities like---among many others---propositional satisfiability \shortcite{LPMM16}, process modelling \shortcite{RCFG22}, and 
answer set programming \shortcite{ADFPR22}.

\subsubsection{Computing \posbool-Provenance}
There exist two main approaches for finding and enumerating the justifications for a given consequence of an
ontology. The \emph{black-box} approach simply calls an existing (classical) reasoner repeatedly to prune out 
superfluous axioms \cite{DBLP:conf/semweb/KalyanpurPHS07}; the \emph{glass-box} approach, on the other hand,
modifies the reasoner to compute the justifications directly. The specific glass-box technique needed obviously depends
on the characteristics of the underlying reasoner, but general frameworks have been developed for tableaux 
\cite{DBLP:journals/logcom/BaaderP10}, automata \cite{DBLP:journals/jar/BaaderP10}, and consequence-based
\cite{DBLP:conf/sum/OzakiP18} methods. Within the context of $\ELHI_\bot$, it has also been proposed to reduce the
problem to an enumeration in SAT \shortcite{AMIMPM16,KaSK17,SeVe09} or ASP \shortcite{PeRi22,HMPR-23}.

Since \posbool-provenance can be obtained from \why-provenance by removing the non-minimal monomials, the algorithms we gave in Section \ref{sec:why} provide a glass-box approach to compute the set of all justifications for an assertion or a BCQ in $\ELHIbot$. This is, in particular, interesting for BCQs for which the problem of finding all justifications that take into account axioms beyond assertions has not been considered yet, \new{while GCIs and RIs are usually crucial to explain why a BCQ is entailed from a DL ontology to a user that may not have in mind all the semantics relationships between concepts and roles expressed in the ontology}. 
\new{Glass-box approaches have the advantage of being streamlined: contrary to black-box methods, a glass-box algorithm needs
to be executed only once to find one or all justifications, making them much more efficient. For instance, the only method
capable of enumerating all justifications for all consequences of the very large \ELH ontology \textsc{Snomed} is the
glass-box based PuLi~\cite{KaSK17}. The cost of this efficiency is that implementations and optimizations need to be
developed anew.}

A direct consequence of the black-box methods  is that one justification can be computed (or verified) 
with a polynomial number of calls to a classical reasoner. This means that a justification in $\ELHIbot$ can be found
in exponential time, and this bound reduces to polynomial time in any sublogic of $\ELHIbot$ which allows for
polynomial-time reasoning. On the other hand, it is known that a single consequence may have exponentially many
justifications \cite{DBLP:conf/ki/BaaderPS07}. This means that enumerating all justifications (that is, finding the full
\posbool-provenance polynomial in extended form) necessarily requires exponential time. Through a more fine-grained complexity
analysis, \citet{DBLP:journals/ai/PenalozaS17} showed that, \new{unless $\PTime=\NP$, 
there exists no algorithm that can compute all justifications \wrt an \EL ontology 
in polynomial time \emph{even if only polynomially many justifications exist}}. 

\subsubsection{\new{Applications of \posbool-Provenance}}\label{sec:applications-of-posbool-provenance}

\new{Recall that $\posbool$ specializes correctly to every commutative, \timesidem and absorptive semiring. Since it is idempotent for both operations, by Theorem~\ref{th:sem-commut-hom}, it follows that for every commutative semiring $\semiringshort$ that is \timesidem and absorptive, every satisfiable $\ELHIbot$ 
$\semiringshort$-annotated ontology $\Omc^\semiringshort=\tup{\Omc,\lambda}$, and every $\alpha$, if (i) $\alpha$ is a BCQ, assertion or RI whose left-hand side is satisfiable \wrt $\Omc$ or (ii) $\alpha$ is a GCI between basic concepts whose left-hand side is satisfiable \wrt $\Omc$ and $\Omc$ does not contain GCI with $\top$ as left-hand side, then the provenance $\Pmc(\alpha,\Omc^{\semiringshort})$ can be computed as follows. First, compute $\Pmc(\alpha,\Omc^{\posbool})$ where $\Omc^{\posbool}=\tup{\Omc,\lambda_\semiringVars}$ with $\lambda_\semiringVars$ an injective function from the axioms in $\Omc$ to the set of variables $\semiringVars$. Then evaluate the obtained polynomial through the unique semiring homomorphism $h$ from $\posbool$ to $\semiringshort$ such that $h(\lambda_\semiringVars(\beta))=\lambda(\beta)$ for every $\beta\in\Omc$ and $h(x)=\zero$ for every $x\in\semiringVars\setminus\{\lambda_X(\beta)\mid\beta\in\Omc\}$. Many useful semirings are \timesidem and absorptive, so \posbool-provenance is particularly interesting. 
For example, we have shown in Section~\ref{sec:sem:ann} that provenance in the fuzzy semiring captures the semantics of fuzzy or possibilistic DLs, and that provenance in the semiring that corresponds to a bounded distributive lattice allows to compute access rights in the setting defined by \citet{BaKP-JWS12}. }

\new{The set of justifications of a query has also proven to be useful to characterize other semantics for annotated databases or DL ontologies. This is in particular the case for \emph{probabilistic} databases \cite{DBLP:journals/sigmod/Senellart17} and probabilistic DLs \cite{DBLP:journals/semweb/RiguzziBLZ15,CePe-17,Ceyl-18}. In this context, axioms are annotated either directly with probability values or with some Boolean events associated with a probability of being true, and the probability of a query is the sum of the probabilities of the worlds in which the query is true. It has been shown that the query probability is equal to the probability of the Boolean formula that corresponds to the \posbool-provenance.}


\subsection{\lin-Provenance}\label{sec:linprov}

\subsubsection{Computing \lin-Provenance of Axioms}
\new{Since the \lin-provenance of an assertion can be obtained from its \why-provenance  by taking the product of all variables that occur in the \why-provenance (Proposition~\ref{prop:why-lin}), it can be computed in exponential time using the techniques from Section~\ref{sec:completion}. However, we show that if we modify the completion algorithm from Section~\ref{sec:completion} to combine all monomials that annotate an axiom instead of storing them separately, one can compute the \mbox{\lin-}provenance of all assertions in \emph{polynomial time} if the ontology belongs to $\ELHIbotrestr$.} 

As in Section \ref{sec:completion}, 
the algorithm assumes normal form and keeps as data structure 
a set \Smc of annotated axioms $(\alpha,\monomial)$, 
where $\alpha$ uses the vocabulary of \Omc \new{(extended with $a_\top$ and possibly $\top$, $\bot$)}, and $\monomial\in \monomials(\semiringVars)$.
\Smc is initialised as in Section \ref{sec:completion} 
and extended by exhaustively applying the rules in Table~\ref{tab:completionRules}, where rule applications are modified \new{by replacing $\Smc\leftarrow\Smc\cup\{\Phi\}$ by $\Smc\leftarrow\Smc\Cup\{\Phi\}$ with}
\[
\Smc\Cup\{(\alpha,\monomial)\} :=
	\begin{cases}
		\Smc\cup\{(\alpha,\monomial)\}  \text{ if there is no $(\alpha,\nonomial)\in\Smc$} \\
		\Smc\setminus\{(\alpha,\nonomial)\}\cup\{(\alpha,\monomial\times\nonomial)\}  \text{ if $(\alpha,\nonomial)\in\Smc$;}
	\end{cases}
\]
i.e., add the
axiom $\alpha$ with an associated monomial if it does not yet appear in \Smc, and modify the monomial
associated to $\alpha$ to include new variables otherwise. \new{Note that since no axiom occurs several times with different annotations when $\Smc$ is initialized, there is always at most one $(\alpha,\nonomial)\in\Smc$ for a given $\alpha$. To ensure termination, a rule is applicable only if its conclusion is not in~\Smc (note that each rule application either adds an annotated axiom or adds some variables to some monomial, and the monomial size is bounded by the number of axioms in $\Omc$).} 
The rules are applied until no new rule is applicable; i.e., \Smc is~\emph{saturated}.

\begin{example}
\new{For} the ontology of Example \ref{ex:exponential}, we obtain the saturated set
\begin{align*}
\Smc=&\{\new{(A(a),x\times\monomial), (B(a),x\times\monomial),} (A\sqsubseteq A,\monomial), (B\sqsubseteq B,\monomial),(A\sqsubseteq B,\monomial), (B\sqsubseteq A,\monomial)\}\ \cup\\&\{(A_i\sqsubseteq B,\monomial), (B\sqsubseteq A_i,\monomial),(A_i\sqsubseteq A,\monomial), (A\sqsubseteq A_i,\monomial)\mid 1\leq i\leq n\}\ \cup\\&\{(A_i\sqsubseteq A_j,\monomial)\mid 1\leq i,j\leq n\}\cup\{\new{(\top(a),1), (\top(a_\top),1)}\}
\end{align*} with $\monomial=u\times \Pi_{i=1}^n u_i\times \Pi_{i=1}^n v_i$.
\end{example}

\begin{restatable}{theorem}{completionAlgoRelevance}\label{completionAlgoRelevance}
If \Omc is a satisfiable \lin-annotated $\ELHIbot$ ontology \blue{(annotated 
by $\lambda_\semiringVars:\Omc\mapsto\semiringVars\cup\{1\}$)}, $\alpha$ is an assertion s.t.\ $\Omc\models \alpha$, and
$\mn{linsat}(\Omc)$ is the result of saturating \Omc: 
\begin{enumerate}
\item \new{if $\alpha$ is of the form $A(a)$ or $R(a,b)$ with $a,b\in\individuals{\Omc}$, then  
$(\alpha,\Pmc(\alpha,\Omc))\in\mn{linsat}(\Omc)$  and there is no other $\monomial$ such that $(\alpha,\monomial)\in\mn{linsat}(\Omc)$;}
\item \new{if $\alpha=A(c)$ for $c\in\NI\setminus\individuals{\Omc}$, then $\Pmc(A(c),\Omc)\!=\!\Pmc(A(a_\top),\Omc)$, 
$(A(a_\top),\Pmc(A(a_\top),\Omc))\!\in\mn{linsat}(\Omc)$  and there is no other $\monomial$ such that $(A(a_\top),\monomial)\in\mn{linsat}(\Omc)$.}
\end{enumerate}
\end{restatable}

We show that if $\Omc$ belongs to $\ELHIbotrestr$, \new{$\mn{linsat}(\Omc)$} can be computed in polynomial time using the completion rules modified for this case as in Theorem~\ref{prop:completionalgorithmELHIrestr}. \new{Indeed, we have seen in Section~\ref{sec:completionELHIrestr} that these rules can build only polynomially many different (non-annotated) axioms and since a rule application that modifies a monomial adds at least one variable, each annotated axiom can be modified only a linear number of times.}
\new{Moreover, since \lin is multiplicatively idempotent, by Theorems~\ref{th:red-concept} and~\ref{th:red-role}, the \lin-provenance of GCIs between basic concepts and positive RIs can computed via a reduction to the \lin-provenance of assertions (with the usual condition for the GCI case).} 

\begin{restatable}{theorem}{complexityRelevance}\label{th:complexityRelevance}
Let $\Omc$ be a satisfiable \lin-annotated $\ELHIbot$ ontology \blue{(annotated 
by $\lambda_\semiringVars:\Omc\mapsto\semiringVars\cup\{1\}$)}. \new{If (i) $\alpha$ is an assertion  
or a positive RI or (ii) $\alpha$ is a GCI between basic concepts and $\Omc$ does not contain any GCI with $\top$ as left-hand side, then the following hold (note that $|\alpha|\leq 3$).}
\begin{itemize}
\item $\Pmc(\alpha,\Omc)$ can be computed in exponential time \new{\wrt $|\Omc|$}.
\item If $\Omc$ belongs to $\ELHIbotrestr$, $\Pmc(\alpha,\Omc)$ can be computed in polynomial time \new{\wrt $|\Omc|$}.
\end{itemize}
\end{restatable}

\subsubsection{\new{Relevant Axioms}}

\new{Given a satisfiable \lin-annotated $\ELHIbot$ ontology $\Omc$,  
we say that a \emph{variable $v\in\semiringVars$ is relevant} to entail an axiom $\alpha$ \wrt $\Omc$ if it occurs in $\Pmc(\alpha,\Omc)$. By definition of the addition of the \lin semiring, this is equivalent to 
the existence of a monomial $\monomial$ such that $v$ occurs in $\monomial$ and $\Omc\models (\alpha,\monomial)$.  
If the annotation function $\lambda_\semiringVars$ is injective and does not map any axiom to $1$, we say that an \emph{axiom $\beta$ is relevant} to entail $\alpha$ \wrt $\Omc$ if $\lambda_\semiringVars(\beta)$ is relevant. 
We relate this notion to \emph{usable facts} that have been defined in the context of Datalog provenance as the database facts that occur in some derivation tree for the query \cite{DBLP:conf/kr/BourgauxBPT22}. 
An equivalent definition of usable facts independent of the notion of derivation tree has also been proposed (see ``adornment-usable facts'' in \cite[Section B.3]{DBLP:journals/corr/abs-2202-10766}). Adapting this definition to $\ELHIbot$ characterizes relevant axioms as follows.
}

\begin{definition}
\new{
Given an $\ELHIbot$ axiom $\gamma$, the \emph{adornment by} $\gamma$ of a concept (resp.\ role) name $A$ (resp.\ $R$) 
is (the fresh name) $A^\gamma$ (resp.\ $R^\gamma$).}

\new{For an assertion $\beta$, let $\beta^\gamma$ be the assertion obtained by replacing the predicate in $\beta$ by its adornment by $\gamma$, and for an $\ELHIbot$ GCI or RI $\beta$, let $\beta^\gamma$ be the axiom obtained by replacing the (unique) predicate in the right-hand side of $\beta$ by its adornment by $\gamma$.}

\new{For an assertion $\beta$, let $f^\gamma(\beta)=\{\beta\}$, and for an $\ELHIbot$ GCI or RI $\beta$, let $f^\gamma(\beta)$ be the set of all GCIs or RIs obtained from $\beta$ by applying the two following steps: (i) replace the predicate in the right-hand side of $\beta$ by its adornment by $\gamma$, and (ii) replace a single predicate in the left-hand side of $\beta$ by its adornment by $\gamma$.} 

\new{Let $\Omc$ be a satisfiable $\ELHIbot$ ontology. Given an $\ELHIbot$ axiom $\alpha$ and an axiom $\gamma\in\Omc$, we say that $\gamma$ is \emph{usable to derive} $\alpha$ \wrt $\Omc$ if $\Omc\cup\Omc^\gamma\cup\{\gamma^\gamma\}\models \alpha^\gamma$ where $\Omc^\gamma=\bigcup_{\beta\in\Omc}f^\gamma(\beta)$.}
\end{definition}

\begin{restatable}{proposition}{Propusableaxioms}\label{prop:usableaxioms}
\new{Let $\Omc=\tup{\Omc',\lambda_\semiringVars}$ be a satisfiable \lin-annotated $\ELHIbot$ ontology such that $\lambda_\semiringVars$ maps all axioms to distinct variables. If (i) $\alpha$ is an assertion or a positive RI whose left-hand side is satisfiable \wrt $\Omc$, or (ii) $\alpha$ is a GCI between basic concepts whose left-hand side is satisfiable \wrt $\Omc$ and $\Omc$ does not contain any GCI with $\top$ as left-hand side, then $\gamma\in\Omc'$ is relevant to entail $\alpha$ \wrt $\Omc$ iff it is usable to derive $\alpha$ \wrt $\Omc'$.}
\end{restatable}

\new{It follows from Theorem~\ref{th:complexityRelevance} (or alternatively from Proposition~\ref{prop:usableaxioms}, since $\Omc^\gamma$, $\gamma^\gamma$ and $\alpha^\gamma$ can be built in polynomial time) that if we only need to know which axioms of an ontology $\Omc$ are relevant, or usable, to entail an axiom $\alpha$ (where $\Omc$ is a satisfiable  $\ELHIbot$ or $\ELHIbotrestr$ ontology and $\alpha$ is as in Proposition~\ref{prop:usableaxioms}), the complexity is the same as classical reasoning in $\ELHIbot$ or $\ELHIbotrestr$. 
This contrasts with the axiom pinpointing setting in which 
deciding whether an axiom belongs to a justification is 
\NP-hard for Horn-\EL\cite{PeSe10-KR}. 
This is because axiom pinpointing requires that justifications are \emph{minimal}: if 
$\Omc=\{(A\sqsubseteq B, v_1), (B\sqsubseteq C, v_2), (C\sqsubseteq B, v_3)\}$, the only justification for $A\sqsubseteq B$ is $\{A\sqsubseteq B\}$ but $\Pmc(A\sqsubseteq B,\Omc)=v_1v_2v_3$ (since $\Omc\models (A\sqsubseteq B, v_1)$ and $\Omc\models(A\sqsubseteq B, v_1v_2v_3)$) and, in particular, $v_2$ and $v_3$ 
are relevant to entail $A\sqsubseteq B$ (\ie $B\sqsubseteq C$ and $C\sqsubseteq B$ are relevant).}

\subsubsection{Relationship with Lean Kernels}

Provenance in the \lin semiring is related to \emph{lean kernels} \shortcite{PMIM17}, 
which approximate the union of justifications. 
\new{\citet{PMIM17} define lean kernels \wrt a \emph{consequence-based method}, defined as an algorithm that works on a set of axioms and uses rules to extend this set, until the set becomes saturated and consequences can be read from the saturated set.} 
The lean kernel of a consequence $\alpha$ (being an assertion or subsumption between two concept names) is the set of axioms appearing in at least one proof of $\alpha$
in a given \new{consequence-based method}. 
\new{This generalizes} the notion from propositional logic, 
where a lean kernel is the set of 
clauses appearing in a resolution proof for unsatisfiability. 
\new{The monomials, or sets of variables, computed by the completion algorithm for \lin-annotated ontologies described in this section correspond to 
the sets of axioms used in the derivations by the completion algorithm that ignores the annotations, 
which is a consequence-based method for $\ELHIbot$, thus correspond to lean kernels \wrt this algorithm}.


\section{\new{Related Work on Semiring Provenance for Description Logics}}\label{sec:relatedwork}

\new{In this section, we review and discuss   other frameworks that use some form of semiring provenance for description logics or very close settings within the semantic web. We refer to Section~\ref{sec:sem:ann} for a discussion about the relationship between our framework and DLs annotated with specific kinds of annotations, to Section~\ref{sec:expected-properties} for a comparison with the semiring provenance framework for relational databases and Datalog, and to Section~\ref{sec:conclusion} for 
   a discussion of our results and possible future work in light of the literature.}

\paragraph{\new{Provenance for lightweight DLs}} 
\new{The closest works to ours are those by~\citet{provenance-DLLite} and~\citet{provenance-ELHr}, who considered \DLLiteR and \ELHr respectively, which are fragments of $\ELHIbot$ (in particular,~\citet{provenance-ELHr} imposed the same syntactic restriction on \ELHr as the one we use in this paper). 
In these papers, ontology axioms are annotated by \emph{monomials} (actually \emph{variables} or $1$ in the latter paper). 
The semantics is defined using annotated interpretations of the form $\Imc=(\Delta^\Imc,\Delta_\text{m}^\Imc,\cdot^\Imc)$ which interpret monomials by elements of the \emph{domain of monomials} $\Delta_\text{m}^\Imc$ with the constraint that two monomials that are mathematically equal are mapped to the same element. 
Such an interpretation satisfies, e.g., an assertion $(A(a), x)$ if $(a^\Imc, x^\Imc)\in A^\Imc$ and a GCI $(C\sqsubseteq D, y)$ if $(e,\monomial^\Imc)\in C^\Imc$ implies that $(e,(\monomial\times y)^\Imc)\in D^\Imc$ (thus ignoring any $(e,\elem)\in C^\Imc$ such that $\elem\in\Delta_\text{m}^\Imc$ is not equal to $\monomial^\Imc$ for some monomial $\monomial$). 
To avoid counter-intuitive behaviors with conjunction in \ELHr,~\citet{provenance-ELHr} additionally assumed that $\times$ is idempotent (\ie that $(x\times x)^\Imc=x^\Imc$). 
If we interpret a  \DLLiteR or \ELHr ontology $\Omc$ annotated with 
monomials as a \posbool- or \lin-annotated ontology $\Omc^{\semiringVars}$ in our framework, then for every assertion, GCI, or RI $\alpha$ and monomial $\monomial$, it holds that $\Omc\models (\alpha,\monomial)$ under this semantics iff $\Omc^{\semiringVars}\models (\alpha,\monomial)$ under our semantics. 
In the case where $\Omc^{\semiringVars}$ is interpreted as a \why-annotated ontology, this result holds if $\alpha$ is an assertion (\cf Proposition~\ref{prop:why-lin-pos} and Remark~\ref{rem:why-lin-pos-gci}). 
However,~\citet{provenance-DLLite} and~\citet{provenance-ELHr} did not define the semantics of an ontology annotated with elements of an arbitrary commutative semiring. 
In contrast, we defined a general semantics, more in line with the database semiring provenance framework where the semantics of annotated databases is defined independently from the specific semiring, and showed how it captures (or does not capture) the semantics of several annotated DLs such as fuzzy or possibilistic DLs (Section~\ref{sec:sem:ann}). 
We also studied the properties of the semantics, and, in particular, showed the relationship between our semantics and the classical semiring provenance of relational or Datalog queries (Section~\ref{sec:expected-properties}), a question that was not considered by~\citet{provenance-DLLite} and~\citet{provenance-ELHr}.}

\new{From an algorithmic perspective,~\citet{provenance-ELHr} showed how to normalize the ontology, reduced annotated GCIs and RIs entailment to annotated assertions entailment, and provided a completion algorithm for computing annotated assertions entailed by an \ELHr ontology. We straightforwardly adapted the normalization rules to the case where the annotations can be elements of an arbitrary commutative semiring. To adapt the reductions between different entailment tasks to our semantics, we had to modify them slightly and 
	restrict our attention to GCIs between basic concepts because the reduction provided by~\citet{provenance-ELHr} relies on the fact that axioms are annotated with variables rather than by the elements of an arbitrary commutative semiring. 
	Our completion algorithm follows the same idea as the one for \ELHr but its rules are different to handle $\ELHIbot$. Finally, \citet{provenance-ELHr} proposed the adaptation of the completion algorithm to compute ``relevant variables'', which we reused to compute the \lin-provenance.}

\new{Regarding complexity results, \citet{provenance-ELHr} showed that in \ELHr, deciding entailment of an axiom annotated by a monomial is in polynomial time \wrt the ontology size (if the monomial size is fixed), 
and in polynomial space \wrt the whole input size. 
Since $\Omc\models (\alpha,\monomial)$ iff $\Omc^{\semiringVars}\models (\alpha,\monomial)$ where $\Omc^{\semiringVars}$ is $\Omc$ interpreted as a \posbool- or \lin-ontology and these semirings are fully idempotent, 
one can use the reductions from annotated GCI or RI entailment to annotated assertion entailment from Theorems~\ref{th:red-concept} and \ref{th:red-role} and 
the complexity results of Corollary~\ref{th:complexity:provmonomial} and Theorem~\ref{th:pspace} to generalize these results to $\ELHIbotrestr$ (with the restriction that our reduction is for GCIs between basic concepts and ontologies that do not contain any GCI with $\top$ as left-hand side, but the reduction given by \citet{provenance-ELHr} without these restrictions could be used for \posbool-annotated ontologies). 
They also showed that deciding whether a variable is relevant to an entailment (\ie whether it occurs in a monomial $\monomial$ such that $\Omc\models (\alpha,\monomial)$) can be decided in polynomial time. Theorem~\ref{th:complexityRelevance} generalizes this result to $\ELHIbotrestr$ (with the same restriction as before). }

\new{Annotated BCQ entailment was considered by \citet{provenance-DLLite} and 
\citet{provenance-ELHr}. Intuitively, given a BCQ $q$ and a \emph{sum of monomials} $p$, they define $\Omc\models (q,p)$ iff $p$ is included in the provenance of $q$ in each model of $\Omc$. If we restrict the comparison to $p$ being a monomial, we obtain the same annotated query entailments under our semantics. However, under the semantics of \citet{provenance-DLLite} and 
\citet{provenance-ELHr}, one can also get $\{(R(a,b), v_1), (R(b,a),v_2)\}\models (\exists xy R(x,y)\wedge R(y,x), v_1\times v_2+v_1\times v_2)$ \cite[Section 2.4]{provenance-ELHr}. 
\citet{provenance-DLLite} proposed a rewriting algorithm for \DLLiteR to compute all monomials such that $\Omc\models(\alpha,\monomial)$, \ie the provenance of the query as we defined it, and implemented it. 
For \ELHr, \citet{provenance-ELHr} described an algorithm based on the computation of a model and query rewriting. \citet{provenance-DLLite} showed that in \DLLiteR, deciding $\Omc\models (q,p)$ is \NP-complete (even if $\Omc$ is specified by an ontology-based data access instance consisting of an ontology, a set of mappings and a relational database) and \citet{provenance-ELHr} showed that $\Omc\models (q,p)$ is decidable in exponential time in \ELHr. Since $p$ is a sum of monomials, this problem differs from the problem of entailment of BCQ annotated with a monomial that we considered for $\ELHIbotrestr$ in Theorem~\ref{th:complexity:provBCQmonomial}, for which we obtained an \NP upper bound if the monomial size is fixed, and an exponential time one \wrt the monomial size.}

\paragraph{\new{Provenance semantics for attributed DL-Lite}} \new{A provenance semantics was considered in the context of \emph{attributed DL-Lite} \cite{attributedDL}. 
Attributed DLs allow for annotating assertions with an arbitrary number of \emph{attribute-value pairs of individual names} \cite{DBLP:conf/semweb/KrotzschMOT17}. For example some assertions may be annotated with source, clearance level and multiplicity, such as $A(a)@[\mn{src}:x_1, \mn{classif}:P, \mn{mult}:3]$, while some are not annotated at all. GCIs and RIs are used to express constraints on annotations (for example by requiring that a premise of a GCI has a given source to use this GCI)}. 
\new{If we consider the case where (i) each assertion is annotated by a single attribute-value pair of the form $[\mn{attr}:x]$ with the same attribute $\mn{attr}$, and (ii) RIs and GCIs only propagate annotations (in attributed DL syntax: $C@X\sqsubseteq D@X$), then we can see an attributed DL-Lite ontology $\Omc^a$ in our setting as an \polynomials-annotated DL-Lite ontology $\Omc^{\polynomials}=\{(\beta,x)\mid \beta@[\mn{attr}:x] \in\Omc^a, \beta\text{ assertion}\}\cup\{(C\sqsubseteq D,1)\mid C@X\sqsubseteq D@X\in\Omc^a\}\cup\{(P\sqsubseteq Q,1)\mid P@X\sqsubseteq Q@X\in\Omc^a\}$ 
with $\semiringVars=\{x\mid \beta@[\mn{attr}:x]\in \Omc^a\}$, 
keeping the unique common attribute implicit. 
The \emph{provenance-interpretations} for attributed DL-Lite defined by \citet{attributedDL} are required to satisfy a property of closure under sum, which amounts in our simplified context to requiring that, e.g., if $(e,s_1)$ and $(e,s_2)$ are in $A^\Imc$ and neither $s_1$ nor $s_2$ can be obtained as the sum of $s$ and $s'$ such that $(e,s)$ and $(e,s')$ are in $A^\Imc$, then $(e,s_1+s_2)$ is in $A^\Imc$. Since GCIs in DL-Lite cannot have conjunction or qualified existential restriction in the left-hand side, if $\Omc$ is satisfiable and $\alpha$ is an assertion such that $\Omc\models\alpha$, the attributed ontology $\Omc^a$ 
entails $\alpha@[\mn{attr}:\Pmc(\alpha, \Omc^{\polynomials})]$.
}

\paragraph{\new{Provenance for expressive DLs}} 
\new{Another notion of provenance was defined by \citet{provenance-DL-dannert-gradel} for the expressive DL $\ALC$ and arbitrary commutative semirings. 
They considered GCIs without annotation and (potentially complex) assertions such that each assertion is associated with a single expression of the form $=\elem$, $>\elem$ or $\geq\elem$ for some $\elem\in\semiringset$. 
In this setting, the semantics is defined by interpretations that map every (negated) simple assertion $(\neg)\alpha$ built from a \emph{finite} interpretation domain $\Delta\subseteq\NI$, $\NC$, and $\NR$, to an element of the semiring. Such an interpretation $\pi$ is required to be such that  $\pi(\alpha)\otimes \pi(\neg\alpha)=\zero$ for every $\alpha$ built from $\Delta$, $\NC$ and $\NR$. It is extended to interpret complex assertions with, e.g., $\pi((C\sqcap D)(e))=\pi(C(e))\otimes \pi(D(e))$ and $\pi((\exists R.C)(e))= \bigoplus_{d\in\Delta} \pi(R(e,d))\otimes \pi(C(d))$.}
 
\new{A (strong) model of an ontology is an interpretation that satisfies all its assertions and such that for each of its GCIs  $C\sqsubseteq D$, for every $e\in\Delta$, $\pi(C(e))\leq \pi(D(e))$ (where $a\leq b$ iff there exists $c\in \semiringset$ with $a\oplus c =b$).  
If we translate our annotated assertions as ``$=\elem$'' assertions in this framework, their semantics is not comparable with ours when we consider the intersection of the two settings (i.e., $\EL_\bot$ ontology that complies with our syntactic restriction, with GCIs annotated with $\one$). Indeed, the \why-annotated ontology $\Omc^{\why}=\{(A(a),x_1),(B(a), x_2), (A\sqcap B\sqsubseteq A, 1)\}$ translates into the ontology $\{\pi(A(a))\!=\!x_1,\ \pi(B(a))\!=\!x_2,\  A\sqcap B\sqsubseteq A\}$ which is not satisfiable according to this semantics (since there is no $\pi$ such that $\pi(A(a))=x_1$ and $\pi(A(a))\geq x_1x_2$). 
If we instead translate our annotated assertions as ``$\geq\elem$'' assertions, then given a satisfiable ontology $\Omc$ and an assertion $\alpha$ such that $\Omc\models\alpha$, the possible provenance values of $\alpha$ in the framework of \citet{provenance-DL-dannert-gradel} are all semiring elements $\eta$ such that $\eta\geq \elem$ for every $\elem$ such that $\Omc^\semiringshort\models (\alpha,\elem)$ under our semantics.} 

\new{
A notion of weak model was also introduced for ontologies put in some specific form. Intuitively, in this case, $C\sqsubseteq D$ requires that $\pi(C(e))\otimes\pi(\neg D(e))=\zero$ and $C\equiv D$ that $\pi(C(e))=\pi(D(e))$. This alternative semantics is still not comparable with ours:  the \why-annotated ontology $\Omc^{\why}=\{(A(a),x), (C(a),y), (A\sqsubseteq B, 1)\}$ translates into $\{\pi(A(a))\!=\!x,\ \pi(C(a))\!=\!y,\  A\sqsubseteq B\}$ or $\{\pi(A(a))\!\geq\!x,\ \pi(C(a))\!\geq\!y,\  A\sqsubseteq B\}$ so the possible provenance values for $B(a)$ in weak models are all elements of $\why$ but $0$ (in particular, $y$ is a possible provenance value for $B(a)$). 
}

\new{From an algorithmic point of view, \citet{provenance-DL-dannert-gradel} defined tableaux rules that can be applied if (i) the semiring is \emph{absorptive and such that $\geq$ is a total order}, and (ii) the ontology does not contain any equality statement (`` $=\elem$'').
}
\new{Finally, \citet{provenance-DL-dannert-gradel} proposed \emph{provenance tracking interpretations}. In such interpretation, for each assertion $C(a)$, $\pi(C(a))$ is a polynomial whose variables represent literals of the form $\alpha$ or $\neg\alpha$ where $\alpha$ is an assertion \emph{which does not occur in the ontology}.}

A recent work by \citet{Penaloza2023} considered GCI entailment from expressive ontologies (with a focus on $\ALC$) annotated with elements of a $\oplus$- and $\otimes$-idempotent commutative semiring. To handle $\ALC$ constructors, and, in particular, negation, the author defined a provenance semantics based on interpretations that do not satisfy the consequence. For each such interpretation, it takes the sum of the labels of the ontology axioms violated by the interpretation, then the provenance is defined as the product of these sums. 
When the semiring is not absorptive, the semantics differs from \new{the one defined in this paper}. For example, if we consider the ontology $\Omc^{\semiringshort}=\{(A\sqsubseteq B, \elem), (C\sqsubseteq D,\elemb)\}$, the provenance of $A\sqsubseteq B$ defined by \citet{Penaloza2023} is $\elem\otimes(\elem\oplus\elemb)=\elem\oplus\elem\otimes\elemb$, \new{while $\Pmc(A\sqsubseteq B, \Omc^{\semiringshort})=\elem$}. 
\new{When the semiring is absorptive, however, the provenance of a consequence $\alpha$ defined by \citet{Penaloza2023} is equal to $\bigoplus_{\Jmc\in\Just(\alpha)}\ \bigotimes_{\beta\in\Jmc}\lambda(\beta)$ \cite[Theorem 5]{Penaloza2023}, hence corresponds to our semantics if $\Omc$ and $\alpha$ are as required by Proposition~\ref{prop:ap:pos}. Indeed, since $\posbool$ specializes correctly to every commutative semiring that is \timesidem and absorptive, there exists a unique semiring homomorphism $h$ from $\posbool$ to $\semiringshort$ such that $h(\lambda_\semiringVars(\alpha))=\lambda(\alpha)$ for every $\alpha\in\Omc$ and $h(x)=\zero$ for every $x\in\semiringVars\setminus\{\lambda_\semiringVars(\alpha)\mid\alpha\in \Omc\}$, \ie $\lambda=h\circ\lambda_\semiringVars$. 
By Theorem~\ref{th:sem-commut-hom}, $h(\Pmc(\alpha,\Omc^{\posbool})) = \Pmc(\alpha, \Omc^{\semiringshort})$ and by Proposition~\ref{prop:ap:pos}, 
$\Pmc(\alpha,\Omc^{\posbool})=\bigvee_{\Jmc\in\Just(\alpha)}\bigwedge_{\beta\in\Jmc}\lambda_\semiringVars(\beta)$, so $\Pmc(\alpha, \Omc^{\semiringshort})=\bigoplus_{\Jmc\in\Just(\alpha)}\ \bigotimes_{\beta\in\Jmc}\lambda(\beta)$.}

 \paragraph{Provenance for the semantic web} 
 Several works focussed on querying annotated RDF data with SPARQL \cite{Dividino2009,DBLP:journals/internet/TheoharisFKC11}. 
In particular, \citet{Geerts16-provenance}  considered SPARQL queries on RDF data annotated with values from some arbitrary annotation domain $K$ equipped with three binary operations (with $\ominus$ to cover SPARQL difference operator) and showed that $(K,\oplus,\otimes,\ominus, 0,1)$ must be an extension of semiring they called spm-semiring and further studied. 
As far as reasoning is concerned, 
 \citet{DBLP:conf/semweb/BunemanK10} defined an algebraic deductive system for RDFS annotated with elements of a \plusidem semiring (not necessarily commutative), and  
\citet{DBLP:journals/ws/ZimmermannLPS12} considered RDFS annotated with elements of a \plusidem commutative semiring and defined a semantics based on annotated interpretations and a deductive system. 
These two deductive systems are similar in spirit to our completion algorithm for the \why semiring, the main difference being the existence of a generalization rule that deduces $(\alpha,\elema\oplus\elemb)$ from $(\alpha,\elema)$ and $(\alpha,\elemb)$, while our completion algorithm only computes the monomials that then need to be added to obtain the full \why-provenance, hence avoiding another blowup of the saturated ontology.


\section{Conclusions and Future Work}\label{sec:conclusion}
In this paper, we defined a semiring provenance semantics for $\ELHIbot$ ontologies, more in line with the classical semiring provenance for relational databases than previous proposals \new{(\cf discussion of the work by \citet{provenance-DLLite} and 
\citet{provenance-ELHr} in Section~\ref{sec:relatedwork})}. After studying its properties in details, we provided algorithms and complexity results for computing \new{the provenance of assertions and conjunctive queries in the case of \why-annotated ontologies}. We also investigated in more details \posbool-provenance and \lin-provenance and discussed connections with notions related to explanations in description logics. Besides $\ELHIbot$, we also considered $\ELHIbotrestr$, a fragment of $\ELHIbot$ \new{that we introduced and which has} the same good computational complexity as DL-Lite and~\EL. 

\subsection{Discussion}
\new{If we restrict further the language to ontologies whose GCIs have only a concept name on the right-hand side and whose RIs are positive, then, by Theorem~\ref{th:sem-cons-datalog}, for commutative \plusidem $\omega$\mbox{-}continuous semirings, the provenance of any BCQ in our framework can be computed using the tools developed for computing the provenance of Datalog queries over databases. This is via the translation of the ontology and query into a Datalog program presented in Theorem~\ref{th:sem-cons-datalog}, and this is applicable, in particular, for the \why-provenance and \posbool-provenance which have attracted a lot of interest in the Datalog literature.   
Datalog provenance has been considered since the seminal work on provenance by \citet{Green07-provenance-seminal}  who gave an algorithm for computing the coefficient of a particular monomial in the provenance series of the query. 
\citet{DBLP:journals/pacmmod/CalauttiLPS24} studied the data complexity of deciding whether a monomial is part of the \why-provenance for Datalog queries and showed that while the problem is in general intractable \new{(\NP-complete \wrt data complexity)}, it is tractable for non-recursive Datalog. 
 \citet{DBLP:conf/icdt/DeutchMRT14}  proposed circuit-based provenance representation as an efficient way to compute provenance with absorptive (hence \plusidem) semirings (\sorp-provenance) as well as \why-provenance. 
Several approaches have also been investigated to approximate Datalog provenance. For example, \citet{DBLP:journals/vldb/DeutchGM18} proposed to compute a compact representation of the top-k derivation trees, ranking the trees using tree patterns and facts and rules they use. 
The implemented approaches either use these kinds of approximations, restrict the language and/or constrain the semiring: 
\citet{DBLP:journals/toplas/ZhaoSS20} considered only minimal depth proof trees computed through semi-naïve evaluation; 
 \citet{DBLP:journals/vldb/LeeLG19} used SQL to compute provenance of non-recursive Datalog queries; 
\citet{DBLP:conf/ruleml/ElhalawatiKM22} used a hypergraph that represents all derivation steps of the Datalog program to compute the why-provenance either via a system of equations or via a translation to an extension of Datalog with sets; 
\citet{DBLP:conf/grades/RamusatMS22} translated a Datalog program into a weighted hypergraph and characterized the semirings where the best-weight derivation in the hypergraph corresponds to the provenance for the initial Datalog program, and used this translation to develop a practical approach to compute Datalog provenance in absorptive semirings that are totally ordered; 
finally, \citet{DBLP:conf/aaai/CalauttiLPS24} proposed a practical SAT-based approach for computing \blue{a variant of the why-provenance} based on a restricted class of proof-trees. }

\new{By Theorem~\ref{th:CQ-answering-algorithm} (point 3), the provenance of a BCQ \wrt a \why-annotated $\ELHIbot$ ontology can be obtained by computing the saturation of the ontology $\mn{saturate}(\Omc)$, rewriting the query \wrt the saturation, and evaluating the rewritings over the set of annotated assertions in $\mn{saturate}(\Omc)$, seen as a database.  
This last step could be performed by using a system for relational provenance management,  
such as 
GProM \shortcite{DBLP:journals/debu/ArabFGLNZ17}, which supports \polynomials-provenance as well as some other types of annotations, or 
ProvSQL \shortcite{2018-vldb-provsql} which supports all provenance semirings  
as well as arbitrary user-defined semirings, and has been used by \citet{provenance-DLLite} to implement a provenance-aware ontology-based data access system \new{(\cf Section~\ref{sec:relatedwork})}.} 

\new{\citet{DBLP:journals/jiis/MailisSSSK12} proposed a completion algorithm for a fuzzy version of $\EL^{++}$. If we adapted the completion algorithms for \why-annotated $\ELHIbot$ and $\ELHIbotrestr$ ontologies presented in Section~\ref{sec:completion} to $\mathbb{F}$-annotated ontologies (by using degrees from $[0,1]$ instead of monomials and the $\min$ operator instead of $\times$), to compute the provenance of all assertions \wrt $\Omc^{\mathbb{F}}$, we would not need to store in the saturation all elements $n\in[0,1]$ generated by the completion rules for a given axiom, but only the \emph{maximal} one (as we did in the case of \lin-annotated ontologies in Section~\ref{sec:linprov}). We would thus avoid the exponential blow-up due to the monomials, hence retain polynomial complexity in the case of $\ELHIbotrestr$.}

\subsection{Future Work}\label{sec:future-work}
For semirings that are not \timesidem, Theorem~\ref{th:sem-entailment} does not hold \new{for GCIs} and some concept subsumptions entailed by the ontology may have an unexpected provenance~\new{$\zero$}. It will nevertheless be interesting to investigate \blue{further} the case where the TBox is not annotated, in particular in the case of \plusidem, absorptive semirings (\sorp provenance) since it covers useful semirings such as the Tropical semiring (costs), the Viterbi semiring (confidence) or the \L ukasiewicz semiring (truth values). 
Possible directions for future work also include considering other query answering methods for \why-annotated $\ELHIbot$ ontologies, for example based on Datalog rewritings \cite{DBLP:conf/rweb/BienvenuO15}. 

\new{We could also try to extend the framework in several directions. First, we focussed on $\ELHIbot$ but the semantics can be easily defined for other constructors that have been considered in the \EL family, such as complex role inclusions ($R_1\circ\dots\circ R_n\sqsubseteq R$), nominals ($\{a\}$), concrete domains, or concept products ($C\times D\sqsubseteq R$) \cite{BBL-EL08,DBLP:conf/dlog/RudolphKH08}, or for Horn versions of expressive DLs, such as Horn $\mathcal{SHIQ}$ for which a completion algorithm has been proposed \cite{DBLP:conf/ijcai/Kazakov09}. However, adapting our provenance-aware completion algorithm for such an extension of $\ELHIbot$ (with suitable restrictions to ensure decidability, cf. \cite{BBL-EL08}) will not be trivial. Indeed, we have seen that existing algorithms cannot be straightforwardly adapted to handle provenance annotations (\cf Example~\ref{ex:pb-with-standard-rules}), and algorithms that allow to deal properly with such constructs are often already complex without annotation (see, e.g., \cite{DBLP:conf/kr/KazakovKS12} for nominals). Note that even if there exist consequence-based procedures for non Horn DLs \shortcite{DBLP:conf/ijcai/SimancikKH11,DBLP:journals/jair/BateMGCSH18}, extending our semantics to such DLs remains a challenge, since it is not clear how to handle full negation in our framework (\cf Section~\ref{sec:syntacticRestrictions}).  
A recent line of research developed semiring provenance for very general logical languages, such as first-order logic and fixed-point logic \cite{DBLP:journals/siglog/Tannen17,DBLP:conf/csl/DannertGNT21,DBLP:journals/corr/abs-2412-07986}. In this context, interpretations associate a semiring element to literals (ground facts and their negation) and are extended to more complex formula as expected \new{(in the spirit of the work on $\mathcal{ALC}$ by \citet{provenance-DL-dannert-gradel} discussed in Section~\ref{sec:relatedwork})}. This could be a track to explore, even if this notion--defined for model checking over finite interpretations rather than entailment from possibly infinite models--seems difficult to adapt to our purposes.}

\new{Another way to extend the framework would be to consider more expressive query languages.   
There have been several attempts at extending semiring provenance for relational databases to more expressive query languages, for example featuring aggregation \cite{DBLP:conf/pods/AmsterdamerDT11}. In particular, non-monotone queries with difference  attracted interest: \citet{DBLP:journals/japll/GeertsP10} proposed to use a monus operator on the underlying semiring, and call m\mbox{-}semirings the class of semirings with such an operator, but \citet{DBLP:conf/tapp/AmsterdamerDT11}  showed limitations of the approach. The case of Boolean provenance (\posbool) of non-monotone queries is however well-understood \cite{DBLP:conf/icalp/AmarilliBS15}. 
Similar questions will certainly arise if we try to extend our framework to DLs or queries featuring negation. The recent work by \citet{DBLP:conf/birthday/Suciu24} that explores three ways of
adding a difference operator to a semiring may provide some leads on these questions.}

\begin{acks}
\new{We are grateful to the anonymous reviewers who pointed out   issues in a previous version of this work and whose detailed feedback greatly helped us to improve this paper.} This work was supported by the ANR projects CQFD (ANR-18-CE23-0003) and EXPAND (ANR-25-CE23-1215), and by the MUR for the Department of Excellence DISCo at the University of Milano-Bicocca and under the PRIN project PINPOINT Prot.\ 2020FNEB27, CUP H45E21000210001. Ana Ozaki is supported by   NFR   through the project
``Learning Description Logic Ontologies,'' project number 316022 and by NFR through its Centre of Excellence Integreat - The Norwegian Centre for Knowledge-driven Machine Learning, project number 332645.
\end{acks}

\printbibliography

@book{dlhandbook,
	editor    = {Franz Baader and Diego Calvanese and Deborah McGuinness and Daniele Nardi and Peter Patel-Schneider},
	title     = {The Description Logic Handbook: Theory, Implementation, and Applications},
	publisher = {Cambridge University Press},
	edition   = {Second},
	year      = {2007}
}

@book{owl2-profiles,
	editor       = {Boris Motik and Bernardo {Cuenca Grau} and Ian Horrocks and Zhe Wu and Achille Fokoue and Carsten Lutz},
	title        = {{OWL~2 Web Ontology Language: Profiles}},
	publisher    = {W3C Recommendation},
	note         = {Available at \url{http://www.w3.org/TR/owl2-profiles/}},
	year         = {27 October 2009}
}

@inproceedings{BBL-EL08,
author       = {Franz Baader and
                  Carsten Lutz and
                  Sebastian Brandt},
  title        = {Pushing the {EL} Envelope Further},
  booktitle    = {Proceedings of {OWLED}},
  year         = {2008},
  url          = {https://ceur-ws.org/Vol-496/owled2008dc\_paper\_3.pdf}
}

@article{DBLP:conf/kr/CalvaneseGLLR06,
	author       = {Diego Calvanese and
	Giuseppe De Giacomo and
	Domenico Lembo and
	Maurizio Lenzerini and
	Riccardo Rosati},
	title        = {Data complexity of query answering in description logics},
	journal      = {Artif. Intell.},
	volume       = {195},
	pages        = {335--360},
	year         = {2013},
	url          = {https://doi.org/10.1016/j.artint.2012.10.003},
	doi          = {10.1016/J.ARTINT.2012.10.003}
}

@inproceedings{DBLP:conf/rweb/BienvenuO15,
  author    = {Meghyn Bienvenu and
               Magdalena Ortiz},
  title     = {Ontology-Mediated Query Answering with Data-Tractable Description
               Logics},
  booktitle = {Proceedings of Reasoning Web},
  year      = {2015},
  doi          = {10.1007/978-3-319-21768-0\_9}
}

@inproceedings{BBL-IJCAI05,
  author    = {Franz Baader and
               Sebastian Brandt and
               Carsten Lutz},
  title     = {Pushing the $\mathcal{EL}$ Envelope},
  booktitle = {Proceedings of {IJCAI}},
  year      = {2005},
  url       = {http://ijcai.org/Proceedings/05/Papers/0372.pdf},
}

@article{DBLP:journals/jar/CalvaneseGLLR07,
  author    = {Diego Calvanese and
               Giuseppe De Giacomo and
               Domenico Lembo and
               Maurizio Lenzerini and
               Riccardo Rosati},
  title     = {Tractable Reasoning and Efficient Query Answering in Description Logics:
               The {DL-Lite} Family},
  journal   = {J. Autom. Reason.},
  volume    = {39},
  number    = {3},
  pages     = {385--429},
  year      = {2007},
  url       = {https://doi.org/10.1007/s10817-007-9078-x},
  doi       = {10.1007/s10817-007-9078-x}
}

@article{ACKZ-JAIR09,
  author    = {Alessandro Artale and
               Diego Calvanese and
               Roman Kontchakov and
               Michael Zakharyaschev},
  title     = {The {DL-Lite} Family and Relations},
  journal   = {J. Artif. Intell. Res.},
  volume    = {36},
  pages     = {1--69},
  year      = {2009},
  url       = {https://doi.org/10.1613/jair.2820},
  doi       = {10.1613/jair.2820},
}

@inproceedings{DBLP:conf/dlog/Rosati07,
  author    = {Riccardo Rosati},
  title     = {On Conjunctive Query Answering in $\mathcal{EL}$},
  booktitle = {Proceedings of {DL}},
  year      = {2007},
  url          = {https://ceur-ws.org/Vol-250/paper\_83.pdf}
}

@inproceedings{BorgidaCR08,
  author    = {Alexander Borgida and
               Diego Calvanese and
               Mariano Rodriguez{-}Muro},
  title     = {Explanation in the {DL-Lite} Family of Description Logics},
  booktitle = {Proceedings of {OTM}},
  year      = {2008},
  doi          = {10.1007/978-3-540-88873-4\_35}
}

@inproceedings{CroceL18,
  author    = {Federico Croce and
               Maurizio Lenzerini},
  title     = {A Framework for Explaining Query Answers in {DL-Lite}},
  booktitle = {Proceedings of {EKAW}},
  year      = {2018},
  doi          = {10.1007/978-3-030-03667-6\_6}
}

@inproceedings{DBLP:conf/ijcai/CeylanLMV19,
  author       = {{\.I}smail {\.I}lkan Ceylan and
                  Thomas Lukasiewicz and
                  Enrico Malizia and
                  Andrius Vaicenavicius},
  title        = {Explanations for Query Answers under Existential Rules},
  booktitle    = {Proceedings of {IJCAI}},
  year         = {2019},
  doi          = {10.24963/ijcai.2019/227}
}

@inproceedings{DBLP:conf/ecai/CeylanLMV20,
  author    = {{\.I}smail {\.I}lkan Ceylan and
               Thomas Lukasiewicz and
               Enrico Malizia and
               Andrius Vaicenavicius},
  title     = {Explanations for Ontology-Mediated Query Answering in Description
               Logics},
  booktitle = {Proceedings of {ECAI}},
  year      = {2020},
  doi       = {10.3233/FAIA200153}
}

@inproceedings{DBLP:conf/lpar/AlrabbaaBBKK20,
  author       = {Christian Alrabbaa and
                  Franz Baader and
                  Stefan Borgwardt and
                  Patrick Koopmann and
                  Alisa Kovtunova},
  title        = {Finding Small Proofs for Description Logic Entailments: Theory and
                  Practice},
  booktitle    = {Proceedings of {LPAR}},
  year         = {2020},
  url          = {https://doi.org/10.29007/nhpp},
  doi          = {10.29007/nhpp}
}

@inproceedings{DBLP:conf/cade/AlrabbaaBBKK21,
  author       = {Christian Alrabbaa and
                  Franz Baader and
                  Stefan Borgwardt and
                  Patrick Koopmann and
                  Alisa Kovtunova},
  title        = {Finding Good Proofs for Description Logic Entailments using Recursive
                  Quality Measures},
  booktitle    = {Proceedings of {CADE}},
  year         = {2021},
  url          = {https://doi.org/10.1007/978-3-030-79876-5\_17},
  doi          = {10.1007/978-3-030-79876-5\_17}
}

@inproceedings{DBLP:conf/ruleml/AlrabbaaBHKKRW22,
  author       = {Christian Alrabbaa and
                  Stefan Borgwardt and
                  Anke Hirsch and
                  Nina Knieriemen and
                  Alisa Kovtunova and
                  Anna Milena Rothermel and
                  Frederik Wiehr},
  title        = {In the Head of the Beholder: Comparing Different Proof Representations},
  booktitle    = {Proceedings of {RuleML+RR}},
  year         = {2022},
  url          = {https://doi.org/10.1007/978-3-031-21541-4\_14},
  doi          = {10.1007/978-3-031-21541-4\_14}
}

@inproceedings{DBLP:conf/dlog/AlrabbaaBKK22,
  author       = {Christian Alrabbaa and
                  Stefan Borgwardt and
                  Patrick Koopmann and
                  Alisa Kovtunova},
  title        = {Finding Good Proofs for Answers to Conjunctive Queries Mediated by
                  Lightweight Ontologies},
  booktitle    = {Proceedings of {DL}},
  year         = {2022},
  url          = {https://ceur-ws.org/Vol-3263/paper-3.pdf}
}

@article{DBLP:journals/japll/GeertsP10,
  author    = {Floris Geerts and
               Antonella Poggi},
  title     = {On database query languages for K-relations},
  journal   = {J. Appl. Log.},
  volume    = {8},
  number    = {2},
  pages     = {173--185},
  year      = {2010},
  doi       = {10.1016/j.jal.2009.09.001}
}

@article{DBLP:journals/ftdb/Glavic21,
  author       = {Boris Glavic},
  title        = {Data Provenance},
  journal      = {Found. Trends Databases},
  volume       = {9},
  number       = {3-4},
  pages        = {209--441},
  year         = {2021},
  url          = {https://doi.org/10.1561/1900000068},
  doi          = {10.1561/1900000068}
}

@inproceedings{Green07-provenance-seminal,
  author    = {Todd J. Green and
               Gregory Karvounarakis and
               Val Tannen},
  title     = {Provenance semirings},
  booktitle = {Proceedings of {PODS}},
  year      = {2007},
  doi       = {10.1145/1265530.1265535}
}

@inproceedings{GreenT17,
  author    = {Todd J. Green and
               Val Tannen},
  title     = {The Semiring Framework for Database Provenance},
  booktitle = {Proceedings of {PODS}},
  year      = {2017},
  doi       = {10.1145/3034786.3056125}
}

@inproceedings{BKT01:dbprovenance,
  author    = {Peter Buneman and
               Sanjeev Khanna and
               Wang Chiew Tan},
  title     = {Why and Where: {A} Characterization of Data Provenance},
  booktitle = {Proceedings of {ICDT}},
  year      = {2001},
  doi       = {10.1007/3-540-44503-X\_20}
}

@inproceedings{Bun2013,
  author    = {Peter Buneman},
  title     = {The Providence of Provenance},
  booktitle = {Proceedings of the 29th British National Conference on Databases {(BNCOD-2013)}},
  year      = {2013},
  doi       = {10.1007/978-3-642-39467-6\_3}
}

@article{DBLP:journals/ftdb/CheneyCT09,
  author    = {James Cheney and
               Laura Chiticariu and
               Wang Chiew Tan},
  title     = {Provenance in Databases: Why, How, and Where},
  journal   = {Found. Trends Databases},
  volume    = {1},
  number    = {4},
  pages     = {379--474},
  year      = {2009},
  doi       = {10.1561/1900000006}
}

@article{DBLP:journals/sigmod/Senellart17,
  author    = {Pierre Senellart},
  title     = {Provenance and Probabilities in Relational Databases},
  journal   = {{SIGMOD} Rec.},
  volume    = {46},
  number    = {4},
  pages     = {5--15},
  year      = {2017},
  doi       = {10.1145/3186549.3186551}
}

@article{2018-vldb-provsql,
  author    = {Pierre Senellart and
               Louis Jachiet and
               Silviu Maniu and
               Yann Ramusat},
  title     = {{ProvSQL}: Provenance and Probability Management in {PostgreSQL}},
  journal   = {{VLDB} Endow.},
  volume    = {11},
  number    = {12},
  pages     = {2034--2037},
  year      = {2018},
  doi       = {10.14778/3229863.3236253}
}

@inproceedings{DBLP:conf/grades/RamusatMS22,
  author       = {Yann Ramusat and
                  Silviu Maniu and
                  Pierre Senellart},
  title        = {Efficient provenance-aware querying of graph databases with datalog},
  booktitle    = {Proceedings of {GRADES-NDA}},
  year         = {2022},
  doi          = {10.1145/3534540.3534689}
}

@phdthesis{DBLP:phd/hal/Ramusat22,
  author       = {Yann Ramusat},
  title        = {The Semiring-Based Provenance Framework for Graph Databases.},
  school       = {{PSL} University, Paris, France},
  year         = {2022},
  url          = {https://tel.archives-ouvertes.fr/tel-03896482}
}

@article{DBLP:journals/mst/Green11,
  author    = {Todd J. Green},
  title     = {Containment of Conjunctive Queries on Annotated Relations},
  journal   = {Theory Comput. Syst.},
  volume    = {49},
  number    = {2},
  pages     = {429--459},
  year      = {2011},
  doi       = {10.1007/s00224-011-9327-6}
}

@inproceedings{DBLP:conf/icdt/DeutchMRT14,
  author    = {Daniel Deutch and
               Tova Milo and
               Sudeepa Roy and
               Val Tannen},
  title     = {Circuits for Datalog Provenance},
  booktitle = {Proceedings of {ICDT}},
  year      = {2014},
  doi       = {10.5441/002/icdt.2014.22}
}

@inproceedings{RamusatMS18,
  author    = {Yann Ramusat and
               Silviu Maniu and
               Pierre Senellart},
  title     = {Semiring Provenance over Graph Databases},
  booktitle = {Proceedings of {TaPP}},
  year      = {2018},
  url          = {https://www.usenix.org/conference/tapp2018/presentation/ramusat}
}

@inproceedings{DBLP:conf/pods/AmsterdamerDT11,
  author    = {Yael Amsterdamer and
               Daniel Deutch and
               Val Tannen},
  title     = {Provenance for aggregate queries},
  booktitle = {Proceedings of {PODS}},
  year      = {2011},
  doi       = {10.1145/1989284.1989302}
}

@inproceedings{DBLP:conf/tapp/AmsterdamerDT11,
  author    = {Yael Amsterdamer and
               Daniel Deutch and
               Val Tannen},
  title     = {On the Limitations of Provenance for Queries with Difference},
  booktitle = {Proceedings of {TaPP}},
  year      = {2011},
  url          = {https://www.usenix.org/conference/tapp11/limitations-provenance-queries-difference}
}

@article{DBLP:journals/vldb/DeutchGM18,
  author    = {Daniel Deutch and
               Amir Gilad and
               Yuval Moskovitch},
  title     = {Efficient provenance tracking for datalog using top-k queries},
  journal   = {{VLDB} J.},
  volume    = {27},
  number    = {2},
  pages     = {245--269},
  year      = {2018},
  doi       = {10.1007/s00778-018-0496-7}
}

@inproceedings{DBLP:conf/pods/FosterGT08,
  author    = {J. Nathan Foster and
               Todd J. Green and
               Val Tannen},
  title     = {Annotated {XML:} queries and provenance},
  booktitle = {Proceedings of {PODS}},
  year      = {2008},
  doi       = {10.1145/1376916.1376954}
}

@inproceedings{DBLP:conf/kr/BourgauxBPT22,
  author    = {Camille Bourgaux and
               Pierre Bourhis and
               Liat Peterfreund and
               Micha{\"{e}}l Thomazo},
  title     = {Revisiting Semiring Provenance for Datalog},
	booktitle = {Proceedings of {KR}},
  year      = {2022},
  url          = {https://proceedings.kr.org/2022/10/}
}

@article{DBLP:journals/corr/abs-2202-10766,
  author       = {Camille Bourgaux and
                  Pierre Bourhis and
                  Liat Peterfreund and
                  Micha{\"{e}}l Thomazo},
  title        = {Revisiting Semiring Provenance for Datalog},
  journal      = {CoRR},
  volume       = {abs/2202.10766},
  year         = {2022},
  url          = {https://arxiv.org/abs/2202.10766},
  eprinttype    = {arXiv},
  eprint       = {2202.10766},
}

@article{DBLP:journals/pacmmod/CalauttiLPS24,
  author       = {Marco Calautti and
                  Ester Livshits and
                  Andreas Pieris and
                  Markus Schneider},
  title        = {The Complexity of Why-Provenance for Datalog Queries},
  journal      = {Proc. {ACM} Manag. Data},
  volume       = {2},
  number       = {2},
  pages        = {83},
  year         = {2024},
  doi          = {10.1145/3651146}
}

@inproceedings{DBLP:conf/aaai/CalauttiLPS24,
  author       = {Marco Calautti and
                  Ester Livshits and
                  Andreas Pieris and
                  Markus Schneider},
  title        = {Computing the Why-Provenance for Datalog Queries via {SAT} Solvers},
  booktitle    = {Proceedings of  {AAAI}},
  year         = {2024},
  doi          = {10.1609/AAAI.V38I9.28914}
}

@article{DBLP:journals/pacmmod/CalauttiLPS24b,
  author       = {Marco Calautti and
                  Ester Livshits and
                  Andreas Pieris and
                  Markus Schneider},
  title        = {Below and Above Why-Provenance for Datalog Queries},
  journal      = {Proc. {ACM} Manag. Data},
  volume       = {2},
  number       = {5},
  pages        = {211:1--211:21},
  year         = {2024},
  doi          = {10.1145/3695829}
}

@inproceedings{DBLP:conf/birthday/Suciu24,
  author       = {Dan Suciu},
  title        = {Different Differences in Semirings},
  booktitle    = {The Provenance of Elegance in Computation - Essays Dedicated to Val
                  Tannen, Tannen's Festschrift, May 24-25, 2024, University of Pennsylvania,
                  Philadelphia, PA, {USA}},
  year         = {2024},
  doi          = {10.4230/OASICS.TANNEN.10}
}

@article{DBLP:journals/ws/ZimmermannLPS12,
  author    = {Antoine Zimmermann and
               Nuno Lopes and
               Axel Polleres and
               Umberto Straccia},
  title     = {A general framework for representing, reasoning and querying with
               annotated Semantic Web data},
  journal   = {J. of Web Semantics},
  volume    = {11},
  pages     = {72--95},
  year      = {2012},
  doi       = {10.1016/j.websem.2011.08.006}
}

@article{Geerts16-provenance,
  author    = {Floris Geerts and
               Thomas Unger and
               Grigoris Karvounarakis and
               Irini Fundulaki and
               Vassilis Christophides},
  title     = {Algebraic Structures for Capturing the Provenance of {SPARQL} Queries},
  journal   = {J. {ACM}},
  volume    = {63},
  number    = {1},
  pages     = {7:1--7:63},
  year      = {2016},
  doi       = {10.1145/2810037}
  }

@article{Dividino2009,
  author    = {Renata Queiroz Dividino and
               Sergej Sizov and
               Steffen Staab and
               Bernhard Schueler},
  title     = {Querying for provenance, trust, uncertainty and other meta knowledge in {RDF}},
  journal   = {J. of Web Semantics},
  volume    = {7},
  number    = {3},
  pages     = {204--219},
  year      = {2009},
  doi       = {10.1016/j.websem.2009.07.004}
}

@article{DBLP:journals/internet/TheoharisFKC11,
  author    = {Yannis Theoharis and
               Irini Fundulaki and
               Grigoris Karvounarakis and
               Vassilis Christophides},
  title     = {On Provenance of Queries on Semantic Web Data},
  journal   = {{IEEE} Internet Comput.},
  volume    = {15},
  number    = {1},
  pages     = {31--39},
  year      = {2011},
  doi       = {10.1109/MIC.2010.127}
  }

@article{DBLP:journals/debu/ArabFGLNZ17,
  author       = {Bahareh Sadat Arab and
                  Su Feng and
                  Boris Glavic and
                  Seokki Lee and
                  Xing Niu and
                  Qitian Zeng},
  title        = {{GProM} - {A} Swiss Army Knife for Your Provenance Needs},
  journal      = {{IEEE} Data Eng. Bull.},
  volume       = {41},
  number       = {1},
  pages        = {51--62},
  year         = {2018},
  url          = {http://sites.computer.org/debull/A18mar/p51.pdf}
}

@inproceedings{DBLP:conf/icalp/AmarilliBS15,
  author       = {Antoine Amarilli and
                  Pierre Bourhis and
                  Pierre Senellart},
  title        = {Provenance Circuits for Trees and Treelike Instances},
  booktitle    = {Proceedings of {ICALP}},
  year         = {2015},
  doi          = {10.1007/978-3-662-47666-6\_5}
}

@article{DBLP:journals/toplas/ZhaoSS20,
  author       = {David Zhao and
                  Pavle Subotic and
                  Bernhard Scholz},
  title        = {Debugging Large-scale Datalog: {A} Scalable Provenance Evaluation
                  Strategy},
  journal      = {{ACM} Trans. Program. Lang. Syst.},
  volume       = {42},
  number       = {2},
  pages        = {7:1--7:35},
  year         = {2020},
  doi          = {10.1145/3379446}
}

@article{DBLP:journals/vldb/LeeLG19,
  author       = {Seokki Lee and
                  Bertram Lud{\"{a}}scher and
                  Boris Glavic},
  title        = {{PUG:} a framework and practical implementation for why and why-not
                  provenance},
  journal      = {{VLDB} J.},
  volume       = {28},
  number       = {1},
  pages        = {47--71},
  year         = {2019},
  doi          = {10.1007/s00778-018-0518-5}
}

@inproceedings{DBLP:conf/ruleml/ElhalawatiKM22,
  author       = {Ali Elhalawati and
                  Markus Kr{\"{o}}tzsch and
                  Stephan Mennicke},
  title        = {An Existential Rule Framework for Computing Why-Provenance On-Demand
                  for Datalog},
  booktitle    = {Proceedings of {RuleML+RR }},
  year         = {2022},
  doi          = {10.1007/978-3-031-21541-4\_10}
}

@inproceedings{DBLP:conf/semweb/BunemanK10,
  author    = {Peter Buneman and
               Egor V. Kostylev},
  title     = {Annotation algebras for {RDFS} Data},
  booktitle = {Proceedings of the 2nd Int. Workshop on the role of Semantic
               Web in Provenance Management},
  year      = {2010},
  url          = {https://ceur-ws.org/Vol-670/paper\_4.pdf}
}

@inproceedings{provenance-DLLite,
  author    = {Diego Calvanese and
               Davide Lanti and
               Ana Ozaki and
               Rafael Pe{\~{n}}aloza and
               Guohui Xiao},
  title     = {Enriching Ontology-based Data Access with Provenance},
  booktitle = {Proceedings of {IJCAI}},
  year      = {2019},
  doi          = {10.24963/IJCAI.2019/224}
}

@inproceedings{provenance-ELHr,
  author    = {Camille Bourgaux and
               Ana Ozaki and
               Rafael Pe{\~{n}}aloza and
               Livia Predoiu},
  title     = {Provenance for the Description Logic $\mathcal{ELH}^r$},
  booktitle = {Proceedings of {IJCAI}},
  year      = {2020},
  doi          = {10.24963/IJCAI.2020/258}
}

@inproceedings{attributedDL,
  author    = {Camille Bourgaux and
               Ana Ozaki},
  title     = {Querying Attributed {DL-Lite} Ontologies Using Provenance Semirings},
  booktitle = {Proceedings of {AAAI}},
  year      = {2019},
  doi          = {10.1609/AAAI.V33I01.33012719}
}

@inproceedings{provenance-DL-dannert-gradel,
  author    = {Katrin M. Dannert and
               Erich Gr{\"{a}}del},
  title     = {Provenance Analysis: {A} Perspective for Description Logics?},
  booktitle = {Description Logic, Theory Combination, and All That - Essays Dedicated
               to Franz Baader on the Occasion of His 60th Birthday},
  year      = {2019},
  doi          = {10.1007/978-3-030-22102-7\_12}
}

@inproceedings{DBLP:conf/ijcai/NikolaouKKKGH17,
  author    = {Charalampos Nikolaou and
               Egor V. Kostylev and
               George Konstantinidis and
               Mark Kaminski and
               Bernardo {Cuenca Grau} and
               Ian Horrocks},
  title     = {The Bag Semantics of Ontology-Based Data Access},
  booktitle = {Proceedings of {IJCAI}},
  year      = {2017},
  doi          = {10.24963/IJCAI.2017/170}
}

@article{DBLP:journals/ijar/Hollunder95,
  author       = {Bernhard Hollunder},
  title        = {An alternative proof method for possibilistic logic and its application
                  to terminological logics},
  journal      = {Int. J. Approx. Reason.},
  volume       = {12},
  number       = {2},
  pages        = {85--109},
  year         = {1995},
  doi          = {10.1016/0888-613X(94)00015-U}
}

@article{DBLP:journals/ijis/QiJPD11,
  author       = {Guilin Qi and
                  Qiu Ji and
                  Jeff Z. Pan and
                  Jianfeng Du},
  title        = {Extending description logics with uncertainty reasoning in possibilistic
                  logic},
  journal      = {Int. J. Intell. Syst.},
  volume       = {26},
  number       = {4},
  pages        = {353--381},
  year         = {2011},
  doi          = {10.1002/INT.20470}
}

@article{DBLP:journals/logcom/BenferhatB17,
  author       = {Salem Benferhat and
                  Zied Bouraoui},
  title        = {Min-based possibilistic DL-Lite},
  journal      = {J. Log. Comput.},
  volume       = {27},
  number       = {1},
  pages        = {261--297},
  year         = {2017},
  doi          = {10.1093/LOGCOM/EXV014}
}

@book{DBLP:books/aw/AbiteboulHV95,
  author    = {Serge Abiteboul and
               Richard Hull and
               Victor Vianu},
  title     = {Foundations of Databases},
  publisher = {Addison-Wesley},
  year      = {1995}
}

@article{DBLP:journals/ai/NikolaouKKKGH19,
  author    = {Charalampos Nikolaou and
               Egor V. Kostylev and
               George Konstantinidis and
               Mark Kaminski and
               Bernardo {Cuenca Grau} and
               Ian Horrocks},
  title     = {Foundations of ontology-based data access under bag semantics},
  journal   = {Artif. Intell.},
  volume    = {274},
  pages     = {91--132},
  year      = {2019},
  doi          = {10.1016/J.ARTINT.2019.02.003}
}

@article{DBLP:journals/siglog/Tannen17,
  author    = {Val Tannen},
  title     = {Provenance analysis for {FOL} model checking},
  journal   = {{ACM} {SIGLOG} News},
  volume    = {4},
  number    = {1},
  pages     = {24--36},
  year      = {2017},
  doi          = {10.1145/3051528.3051533}
}

@inproceedings{DBLP:conf/csl/DannertGNT21,
  author    = {Katrin M. Dannert and
               Erich Gr{\"{a}}del and
               Matthias Naaf and
               Val Tannen},
  title     = {Semiring Provenance for Fixed-Point Logic},
  booktitle = {Proceedings of {CSL}},
  year      = {2021},
  doi          = {10.4230/LIPICS.CSL.2021.17}		
}

@article{DBLP:journals/corr/abs-2412-07986,
  author       = {Erich Gr{\"{a}}del and
                  Val Tannen},
  title        = {Provenance Analysis and Semiring Semantics for First-Order Logic},
  journal      = {CoRR},
  volume       = {abs/2412.07986},
  year         = {2024},
  url          = {https://doi.org/10.48550/arXiv.2412.07986},
  doi          = {10.48550/ARXIV.2412.07986},
  eprinttype    = {arXiv},
  eprint       = {2412.07986}
}

@inproceedings{DBLP:conf/ijcai/SchlobachC03,
  author    = {Stefan Schlobach and
               Ronald Cornet},
  title     = {Non-Standard Reasoning Services for the Debugging of Description Logic
               Terminologies},
  booktitle = {Proceedings of {IJCAI}},
  year      = {2003},
  url          = {http://ijcai.org/Proceedings/03/Papers/053.pdf}
}

@inproceedings{DBLP:conf/semweb/KalyanpurPHS07,
  author    = {Aditya Kalyanpur and
               Bijan Parsia and
               Matthew Horridge and
               Evren Sirin},
  title     = {Finding All Justifications of {OWL} {DL} Entailments},
  booktitle = {Proceedings of {ISWC}},
  year      = {2007},
  doi          = {10.1007/978-3-540-76298-0\_20}
}

@article{DBLP:journals/jar/BaaderP10,
  author    = {Franz Baader and
               Rafael Pe{\~{n}}aloza},
  title     = {Automata-Based Axiom Pinpointing},
  journal   = {J. Automated Reasoning},
  volume    = {45},
  number    = {2},
  pages     = {91--129},
  year      = {2010},
  doi          = {10.1007/S10817-010-9181-2}
}

@article{DBLP:journals/logcom/BaaderP10,
  author    = {Franz Baader and
               Rafael Pe{\~{n}}aloza},
  title     = {Axiom Pinpointing in General Tableaux},
  journal   = {J. of Logic and Computation},
  volume    = {20},
  number    = {1},
  pages     = {5--34},
  year      = {2010},
  doi          = {10.1093/LOGCOM/EXN058}
}

@article{DBLP:journals/ai/PenalozaS17,
  author    = {Rafael Pe{\~{n}}aloza and
               Baris Sertkaya},
  title     = {Understanding the complexity of axiom pinpointing in lightweight description
               logics},
  journal   = {Artificial Intelligence},
  volume    = {250},
  pages     = {80--104},
  year      = {2017},
  doi          = {10.1016/J.ARTINT.2017.06.002}
}

@inproceedings{PeSe10-KR,
  author    = {Rafael Pe{\~n}aloza and Bar\i{}\c{s} Sertkaya},
  title     = {On the Complexity of Axiom Pinpointing in the $\mathcal{EL}$ Family of Description Logics},
  booktitle = {Proceedings of {KR}}, 
  year      = {2010},
  url          = {https://aaai.org/papers/34-1345-on-the-complexity-of-axiom-pinpointing-in-the-el-family-of-description-logics/}
}

@inproceedings{PMIM17,
  author    = {Rafael Pe{\~{n}}aloza and
               Carlos Menc{\'{\i}}a and
               Alexey Ignatiev and
               Joao Marques{-}Silva},
  title     = {Lean Kernels in Description Logics},
  booktitle = {Proceedings of {ESWC}},
  year      = {2017},
  url       = {https://doi.org/10.1007/978-3-319-58068-5_32},
  doi       = {10.1007/978-3-319-58068-5_32}
}

@inproceedings{DBLP:conf/ki/BaaderPS07,
  author    = {Franz Baader and
               Rafael Pe{\~{n}}aloza and
               Boontawee Suntisrivaraporn},
  title     = {Pinpointing in the Description Logic $\mathcal{EL^+}$},
  booktitle = {Proceedings of {KI}},
  year      = {2007},
  doi          = {10.1007/978-3-540-74565-5\_7}
}

@inproceedings{DBLP:conf/sum/OzakiP18,
  author    = {Ana Ozaki and
               Rafael Pe{\~{n}}aloza},
  title     = {Consequence-Based Axiom Pinpointing},
  booktitle = {Proceedings of {SUM}},
  year      = {2018},
  doi          = {10.1007/978-3-030-00461-3\_13}
}

@inproceedings{DBLP:conf/aaai/Straccia98,
  author    = {Umberto Straccia},
  title     = {A Fuzzy Description Logic},
  booktitle = {Proceedings of {AAAI}-{IAAI}},
  year      = {1998},
  url       = {http://www.aaai.org/Library/AAAI/1998/aaai98-084.php}
}

@article{PaPe-TPLP22,
  title={Answering Fuzzy Queries over Fuzzy {DL-Lite} Ontologies}, 
  DOI={10.1017/S1471068421000569}, 
  journal={Theory and Practice of Logic Programming}, 
  author={Gabriella Pasi and Rafael Pe{\~{n}}aloza}, 
  year={2023}, 
  volume= 23,
  issue = 3,
  pages={594--623},
}

@article{BoDP-AIJ15,
  author    = {Stefan Borgwardt and Felix Distel and Rafael Pe{\~n}aloza},
  title     = {The Limits of Decidability in Fuzzy Description Logics with 
	       General Concept Inclusions},
  journal   = {Artificial Intelligence},
  year      = 2015,
  doi       = {10.1016/j.artint.2014.09.001},
  pages     = {23--55},
  volume    = {218},
}

@inproceedings{DBLP:conf/sum/BorgwardtP17,
  author       = {Stefan Borgwardt and
                  Rafael Pe{\~{n}}aloza},
  title        = {Fuzzy Description Logics - {A} Survey},
  booktitle    = {Proceedings of {SUM}},
  year         = {2017},
  doi          = {10.1007/978-3-319-67582-4\_3}
}

@inproceedings{LTW:elcqrewriting09,
  author    = {Carsten Lutz and
               David Toman and
               Frank Wolter},
  title     = {Conjunctive Query Answering in the Description Logic $\mathcal{EL}$ Using a
               Relational Database System},
  booktitle = {Proceedings of {IJCAI}},
  year      = {2009},
  url       = {http://ijcai.org/Proceedings/09/Papers/341.pdf}
}

@article{BaKP-JWS12,
  author = {Franz Baader and Martin Knechtel and Rafael Pe{\~n}aloza},
  title = {Context-Dependent Views to Axioms and Consequences of Semantic Web Ontologies},
  journal = {Journal of Web Semantics},
  volume = {12--13},
  pages = {22--40},
  month = {April},
  year = {2012},
  doi = {10.1016/j.websem.2011.11.006},
}

@article{Zadeh-IC65,
title = {Fuzzy sets},
journal = {Information and Control},
volume = {8},
number = {3},
pages = {338-353},
year = {1965},
issn = {0019-9958},
doi = {https://doi.org/10.1016/S0019-9958(65)90241-X},
url = {https://www.sciencedirect.com/science/article/pii/S001999586590241X},
author = {Lotfi A. Zadeh},
}

@book{Hajek-98,
  title={Metamathematics of fuzzy logic},
  author={H{\'a}jek, Petr},
  volume={4},
  year={1998},
  series = {Trends in Logic},
  doi={https://doi.org/10.1007/978-94-011-5300-3},
  publisher={Springer}
}

@inproceedings{MaTu-JIST14,
  author    = {Theofilos Mailis and
               Anni{-}Yasmin Turhan},
  title     = {Employing {DL-Lite}$_R$-Reasoners for Fuzzy Query
               Answering},
  booktitle = {Proceedings of {JIST}},
  year      = {2014},
  doi       = {10.1007/978-3-319-15615-6\_5},
}

@inproceedings{MaTZ-DL15,
  author    = {Theofilos Mailis and
               Anni{-}Yasmin Turhan and
               Erik Zenker},
  title     = {A pragmatic approach to answering {CQs} over fuzzy {DL-Lite}-ontologies--introducing {FLite}},
  booktitle = {Proceedings of {DL}},
  year      = {2015},
  url       = {http://ceur-ws.org/Vol-1350/paper-56.pdf},
}

@incollection{Pena-AP20,
  author    = {Rafael Pe{\~{n}}aloza},
  title     = {Axiom Pinpointing},
  booktitle = {Applications and Practices in Ontology Design, Extraction, and Reasoning},
  series    = {Studies on the Semantic Web},
  volume    = {49},
  pages     = {162--177},
  publisher = {{IOS} Press},
  year      = {2020},
  doi       = {10.3233/SSW200042},
}

@inproceedings{RCFG22,
  author    = {Marco Roveri and
               Claudio {Di Ciccio} and
               Chiara {Di Francescomarino} and
               Chiara Ghidini},
  title     = {Computing unsatisfiable cores for {LTLf} specifications},
  booktitle = {Proceedings of the Workshop on Process Management in the {AI} Era
               {(PMAI})},
  year      = {2022},
  url       = {http://ceur-ws.org/Vol-3310/paper12.pdf},
}

@article{LPMM16,
  author    = {Mark H. Liffiton and
               Alessandro Previti and
               Ammar Malik and
               Jo{\~{a}}o Marques{-}Silva},
  title     = {Fast, flexible {MUS} enumeration},
  journal   = {Constraints},
  volume    = {21},
  number    = {2},
  pages     = {223--250},
  year      = {2016},
  url       = {https://doi.org/10.1007/s10601-015-9183-0},
  doi       = {10.1007/s10601-015-9183-0},
}

@inproceedings{ADFPR22,
  author    = {Mario Alviano and
               Carmine Dodaro and
               Salvatore Fiorentino and
               Alessandro Previti and
               Francesco Ricca},
  title     = {Enumeration of Minimal Models and {MUSes} in {WASP}},
  booktitle = {Proceedings of {LPNMR}},
  year      = {2022},
  url       = {https://doi.org/10.1007/978-3-031-15707-3\_3},
  doi       = {10.1007/978-3-031-15707-3\_3},
}

@inproceedings{AMIMPM16,
  author    = {M. Fareed Arif and
               Carlos Menc{\'{\i}}a and
               Alexey Ignatiev and
               Norbert Manthey and
               Rafael Pe{\~{n}}aloza and
               Jo{\~{a}}o Marques{-}Silva},
  title     = {{BEACON:} An Efficient SAT-Based Tool for Debugging $\mathcal{EL}^+$
               Ontologies},
  booktitle = {Proceedings of {SAT}},
  year      = {2016},
  url       = {https://doi.org/10.1007/978-3-319-40970-2\_32},
  doi       = {10.1007/978-3-319-40970-2\_32},
}

@inproceedings{KaSK17,
  author    = {Yevgeny Kazakov and
               Peter Skocovsky},
  title     = {Enumerating Justifications using Resolution},
  booktitle = {Proceedings of {DL}},
  year      = {2017},
  url       = {http://ceur-ws.org/Vol-1879/paper38.pdf},
}

@inproceedings{SeVe09,
  author    = {Roberto Sebastiani and
               Michele Vescovi},
  title     = {Axiom Pinpointing in Lightweight Description Logics via Horn-SAT Encoding
               and Conflict Analysis},
  booktitle = {Proceedings of {CADE}},
  year      = {2009},
  url       = {https://doi.org/10.1007/978-3-642-02959-2\_6},
  doi       = {10.1007/978-3-642-02959-2\_6},
}

@inproceedings{PeRi22,
  author    = {Rafael Pe{\~{n}}aloza and
               Francesco Ricca},
  title     = {Pinpointing Axioms in Ontologies via {ASP}},
  booktitle = {Proceedings of {LPNMR}},
  year      = {2022},
  url       = {https://doi.org/10.1007/978-3-031-15707-3\_24},
  doi       = {10.1007/978-3-031-15707-3\_24},
}

@article{CePe-17,
  author    = {{\.I}smail {\.I}lkan Ceylan and
               Rafael Pe{\~{n}}aloza},
  title     = {The Bayesian Ontology Language {BEL}},
  journal   = {J. Autom. Reason.},
  volume    = {58},
  number    = {1},
  pages     = {67--95},
  year      = {2017},
  url       = {https://doi.org/10.1007/s10817-016-9386-0},
  doi       = {10.1007/s10817-016-9386-0},
}

@phdthesis{Ceyl-18,
  author    = {{\.I}smail {\.I}lkan Ceylan},
  title     = {Query Answering in Probabilistic Data and Knowledge Bases},
  school    = {Dresden University of Technology, Germany},
  year      = {2018},
  url       = {https://nbn-resolving.org/urn:nbn:de:bsz:14-qucosa-235238},
}

@inproceedings{PrMa-13,
  author    = {Alessandro Previti and
               Jo{\~{a}}o Marques{-}Silva},
  title     = {Partial {MUS} Enumeration},
  booktitle = {Proceedings of {AAAI}},
  year      = {2013},
  url       = {http://www.aaai.org/ocs/index.php/AAAI/AAAI13/paper/view/6413},
}

@inproceedings{LiMa13,
  author    = {Mark H. Liffiton and
               Ammar Malik},
  title     = {Enumerating Infeasibility: Finding Multiple MUSes Quickly},
  booktitle = {Proceedings of Integration of {AI} and {OR} Techniques in Constraint Programming
               for Combinatorial Optimization Problems {CPAIOR}},
  year      = {2013},
  url       = {https://doi.org/10.1007/978-3-642-38171-3\_11},
  doi       = {10.1007/978-3-642-38171-3\_11},
}

@article{BBKN17,
title = {Uncertain lightweight ontologies in a product-based possibility theory framework},
journal = {International Journal of Approximate Reasoning},
volume = {88},
pages = {237-258},
year = {2017},
issn = {0888-613X},
doi = {https://doi.org/10.1016/j.ijar.2017.05.012},
url = {https://www.sciencedirect.com/science/article/pii/S0888613X17303559},
author = {Khaoula Boutouhami and Salem Benferhat and Faiza Khellaf and Farid Nouioua},
}

@inproceedings{HMPR-23,
  author    = {Ignacio Huitzil and 
  		Giuseppe Mazzotta and
		Rafael Pe{\~{n}}aloza and
               Francesco Ricca},
  title     = {{ASP}-based Axiom Pinpointing for Description Logics},
  booktitle = {Proceedings of {DL}},
  year      = {2023},
  url          = {https://ceur-ws.org/Vol-3515/paper-14.pdf}
}

@inproceedings{Penaloza2023,
  author    = {Rafael Pe{\~{n}}aloza},
  title     = {Semiring Provenance in Expressive Description Logics},
  booktitle = {Proceedings of {DL}},
  year      = {2023},
  url          = {https://ceur-ws.org/Vol-3515/paper-19.pdf}
}

@article{BDGS-IJUF12,
author = {Bobillo, Fernando and Delgado, Miguel and G\'{o}mez-Romero, Juan and Straccia, Umberto},
title = {Joining Gödel And Zadeh Fuzzy Logics In Fuzzy Description Logics},
journal = {International Journal of Uncertainty, Fuzziness and Knowledge-Based Systems},
volume = {20},
number = {04},
pages = {475-508},
year = {2012},
doi          = {10.1142/S0218488512500249}
}

@article{BMPT-JoDS16,
	author = {Borgwardt, Stefan and Mailis, Theofilos and Pe{\~n}aloza, Rafael and Turhan, Anni-Yasmin},
	journal = {Journal on Data Semantics},
	number = {2},
	pages = {55--75},
	title = {Answering Fuzzy Conjunctive Queries Over Finitely Valued Fuzzy Ontologies},
	volume = {5},
	year = {2016},
	doi          = {10.1007/S13740-015-0055-Y}
}

@InProceedings{BDG-URSW08,
author="Bobillo, Fernando
and Delgado, Miguel
and G{\'o}mez-Romero, Juan",
title="A Crisp Representation for Fuzzy SHOIN with Fuzzy Nominals and General Concept Inclusions",
booktitle="Proceedings of Uncertainty Reasoning for the Semantic Web I",
year="2008",
url = "https://ceur-ws.org/Vol-218/paper5.pdf"
}

@InProceedings{PossibilisticLogicHandbook, 
title = {Possibilistic Logic}, 
author = {Didier Dubois and J{\'{e}}r{\^{o}}me Lang and Henri Prade}, 
isbn = {0198537476}, 
publisher = {Oxford University Press, Inc.},    
booktitle = {Handbook of Logic in Artificial Intelligence and Logic Programming (Vol. 3): Nonmonotonic Reasoning and Uncertain Reasoning}, 
pages = {439–513}, 
year = {1994},
doi          = {10.1093/oso/9780198537472.003.0009}
}

@inproceedings{DBLP:conf/birthday/BrinkeGMN24,
  author       = {Sophie Brinke and
                  Erich Gr{\"{a}}del and
                  Lovro Mrkonjic and
                  Matthias Naaf},
  title        = {Semiring Provenance in the Infinite},
  booktitle    = {The Provenance of Elegance in Computation - Essays Dedicated to Val
                  Tannen, Tannen's Festschrift, May 24-25, 2024, University of Pennsylvania,
                  Philadelphia, PA, {USA}},
  year         = {2024},
  doi          = {10.4230/OASICS.TANNEN.3}
}

@article{DBLP:journals/ws/CaliGL12,
  author       = {Andrea Cal{\'{\i}} and
                  Georg Gottlob and
                  Thomas Lukasiewicz},
  title        = {A general Datalog-based framework for tractable query answering over
                  ontologies},
  journal      = {J. Web Semant.},
  volume       = {14},
  pages        = {57--83},
  year         = {2012},
  doi          = {10.1016/J.WEBSEM.2012.03.001}
}

@inproceedings{DBLP:conf/rweb/KontchakovZ14,
  author       = {Roman Kontchakov and
                  Michael Zakharyaschev},
  title        = {An Introduction to Description Logics and Query Rewriting},
  booktitle    = {Proceedings of Reasoning Web},
  year         = {2014},
  doi          = {10.1007/978-3-319-10587-1\_5}
}

@inproceedings{DBLP:conf/ijcai/BednarczykFO20,
  author       = {Bartosz Bednarczyk and
                  Robert Ferens and
                  Piotr Ostropolski{-}Nalewaja},
  title        = {All-Instances Oblivious Chase Termination is Undecidable for Single-Head
                  Binary TGDs},
  booktitle    = {Proceedings of {IJCAI}},
  year         = {2020},
  doi          = {10.24963/IJCAI.2020/238}
}

@article{DBLP:journals/semweb/RiguzziBLZ15,
  author       = {Fabrizio Riguzzi and
                  Elena Bellodi and
                  Evelina Lamma and
                  Riccardo Zese},
  title        = {Probabilistic Description Logics under the distribution semantics},
  journal      = {Semantic Web},
  volume       = {6},
  number       = {5},
  pages        = {477--501},
  year         = {2015},
  doi          = {10.3233/SW-140154}
}

@inproceedings{DBLP:conf/lics/HernichK17,
  author       = {Andr{\'{e}} Hernich and
                  Phokion G. Kolaitis},
  title        = {Foundations of information integration under bag semantics},
  booktitle    = {Proceedings of {LICS}},
  year         = {2017},
  doi          = {10.1109/LICS.2017.8005104}
}

@incollection{DBLP:reference/hfl/Kuich97,
	author       = {Werner Kuich},
	editor       = {Grzegorz Rozenberg and
	Arto Salomaa},
	title        = {Semirings and Formal Power Series: Their Relevance to Formal Languages
	and Automata},
	booktitle    = {Handbook of Formal Languages, Volume 1: Word, Language, Grammar},
	pages        = {609--677},
	publisher    = {Springer},
	year         = {1997},
	url          = {https://doi.org/10.1007/978-3-642-59136-5\_9},
	doi          = {10.1007/978-3-642-59136-5\_9}
}

@article{Karner1992,
	author = {Karner, G.},
	journal = {Semigroup forum},
	number = {2},
	pages = {148-165},
	title = {On limits in complete semirings.},
	url = {http://eudml.org/doc/135179},
	volume = {45},
	year = {1992},
}

@article{DBLP:journals/jsc/LutzW10,
	author       = {Carsten Lutz and
	Frank Wolter},
	title        = {Deciding inseparability and conservative extensions in the description
	logic {EL}},
	journal      = {J. Symb. Comput.},
	volume       = {45},
	number       = {2},
	pages        = {194--228},
	year         = {2010},
	url          = {https://doi.org/10.1016/j.jsc.2008.10.007},
	doi          = {10.1016/J.JSC.2008.10.007}
}

@article{DBLP:journals/jair/GlimmLHS08,
	author       = {Birte Glimm and
	Carsten Lutz and
	Ian Horrocks and
	Ulrike Sattler},
	title        = {Conjunctive Query Answering for the Description Logic {SHIQ}},
	journal      = {J. Artif. Intell. Res.},
	volume       = {31},
	pages        = {157--204},
	year         = {2008},
	url          = {https://doi.org/10.1613/jair.2372},
	doi          = {10.1613/JAIR.2372}
}

@inproceedings{DBLP:conf/semweb/KrotzschMOT17,
  author       = {Markus Kr{\"{o}}tzsch and
                  Maximilian Marx and
                  Ana Ozaki and
                  Veronika Thost},
  title        = {Attributed Description Logics: Ontologies for Knowledge Graphs},
  booktitle    = {Proceedings of {ISWC}},
  year         = {2017},
  doi          = {10.1007/978-3-319-68288-4\_25}
}

@inproceedings{DBLP:conf/kr/KazakovKS12,
  author       = {Yevgeny Kazakov and
                  Markus Kr{\"{o}}tzsch and
                  Frantisek Simancik},
  title        = {Practical Reasoning with Nominals in the {EL} Family of Description
                  Logics},
  booktitle    = {Proceedings of {KR}},
  year         = {2012},
  url          = {https://aaai.org/papers/32-4540-practical-reasoning-with-nominals-in-the-el-family-of-description-logics/}
}

@inproceedings{DBLP:conf/dlog/RudolphKH08,
  author       = {Sebastian Rudolph and
                  Markus Kr{\"{o}}tzsch and
                  Pascal Hitzler},
  title        = {All Elephants are Bigger than All Mice},
  booktitle    = {Proceedings of {DL}},
  year         = {2008},
  url          = {https://ceur-ws.org/Vol-353/RudolphKraetzschHitzler.pdf}
}

@inproceedings{DBLP:conf/ijcai/Kazakov09,
  author       = {Yevgeny Kazakov},
  title        = {Consequence-Driven Reasoning for Horn {SHIQ} Ontologies},
  booktitle    = {Proceedings of {IJCAI}},
  year         = {2009},
  url          = {http://ijcai.org/Proceedings/09/Papers/336.pdf}
}

@inproceedings{DBLP:conf/ijcai/SimancikKH11,
  author       = {Frantisek Simancik and
                  Yevgeny Kazakov and
                  Ian Horrocks},
  title        = {Consequence-Based Reasoning beyond Horn Ontologies},
  booktitle    = {Proceedings of {IJCAI}},
  year         = {2011},
  doi          = {10.5591/978-1-57735-516-8/IJCAI11-187}
}

@inproceedings{BoPe-DL13,
  author       = {Stefan Borgwardt and
                  Rafael Pe{\~{n}}aloza},
  title        = {About Subsumption in Fuzzy {EL}},
  booktitle    = {Proceedings of {DL}},
  year         = {2013},
  url          = {https://ceur-ws.org/Vol-1014/paper\_48.pdf},
}

@article{DBLP:journals/jair/BateMGCSH18,
  author       = {Andrew Bate and
                  Boris Motik and
                  Bernardo {Cuenca Grau} and
                  David Tena Cucala and
                  Frantisek Simancik and
                  Ian Horrocks},
  title        = {Consequence-Based Reasoning for Description Logics with Disjunctions
                  and Number Restrictions},
  journal      = {J. Artif. Intell. Res.},
  volume       = {63},
  pages        = {625--690},
  year         = {2018},
  doi          = {10.1613/JAIR.1.11257}
}

@article{Krob1987,
	author = {Krob, D.},
	journal = {Semigroup forum},
	keywords = {countable complete semirings; infinite sums defined by topologies; commutative monoid; complete monoid; tc-complete monoid},
	pages = {323-340},
	title = {Monoides et semi-anneaux complets.},
	url = {http://eudml.org/doc/134905},
	volume = {36},
	year = {1987},
}

@article{Cook2011-COOLFO-2,
	author = {Stephen Cook and Phuong Nguyen},
	doi = {10.2178/bsl/1309952322},
	journal = {Bulletin of Symbolic Logic},
	number = {3},
	pages = {213-214},
	publisher = {Jstor},
	title = {Logical Foundations of Proof Complexity},
	volume = {17},
	year = {2011}
}

@article{DBLP:journals/jiis/MailisSSSK12,
  author       = {Theofilos Mailis and
                  Giorgos Stoilos and
                  Nikos Simou and
                  Giorgos B. Stamou and
                  Stefanos D. Kollias},
  title        = {Tractable reasoning with vague knowledge using fuzzy \emph{EL\({}^{\mbox{++}}\)}},
  journal      = {J. Intell. Inf. Syst.},
  volume       = {39},
  number       = {2},
  pages        = {399--440},
  year         = {2012},
  doi          = {10.1007/S10844-012-0195-6}
}

@article{DBLP:journals/iandc/BaaderHP08,
  author       = {Franz Baader and
                  Jan Hladik and
                  Rafael Pe{\~{n}}aloza},
  title        = {Automata can show PSpace results for description logics},
  journal      = {Inf. Comput.},
  volume       = {206},
  number       = {9-10},
  pages        = {1045--1056},
  year         = {2008},
  doi          = {10.1016/J.IC.2008.03.006}
}

@inproceedings{DBLP:conf/lata/HutschenreiterP17,
  author       = {Lisa Hutschenreiter and
                  Rafael Pe{\~{n}}aloza},
  title        = {An Automata View to Goal-Directed Methods},
  booktitle    = {Proceedings of {LATA}},
  year         = {2017},
  doi          = {10.1007/978-3-319-53733-7\_7}
}

\appendix

\section{Proofs for Section \ref{sec:prov-definition}}\label{app:proofs-prov-def}

\LemComplexConceptQueryInter*
\begin{proof}
The proof is made by structural induction. In the base case, $C\in\NC\cup\{\top\}$ and $\ext{q_C}(x)=\exists t_0\, C(x,t_0)$: 
\begin{align*}
\{\bigotimes_{P(\vec{t},t)\in \ext{q_C}(x)} \pi(t) \mid \pi\in\nu_\Imc(\ext{q_C}(x)), \pi(x)=d\}
=\{ \pi(t_0) \mid \pi\in\nu_\Imc(C(d,t_0))\}
=\{\elem \mid (d,\elem)\in C^\Imc\}. 
\end{align*}
Induction step. 
If $C=C_1\sqcap C_2$, then $\ext{q_C}(x)= \ext{q_1}(x)\wedge \ext{q_2}(x)$ where $\ext{q_1}(x)$ and $\ext{q_2}(x)$ are the extended versions of the queries that retrieve the instances of $C_1$ and $C_2$ respectively. 
\begin{align*}
\{\bigotimes_{P(\vec{t},t)\in \ext{q_C}(x)} \pi(t) \mid \pi\in\nu_\Imc(\ext{q_C}(x)), \pi(x)=d\}=&\{\bigotimes_{P(\vec{t},t)\in \ext{q_1}(x)} \pi_1(t)\otimes\bigotimes_{P(\vec{t},t)\in \ext{q_2}(x)} \pi_2(t)  \mid \\ 
\pi_1\in\nu_\Imc(\ext{q_1}&(x)),\pi_2\in\nu_\Imc(\ext{q_2}(x)),\pi_1(x)=\pi_2(x)=d\} \\
=&\{\elem_1\otimes\elem_2 \mid (d,\elem_1)\in C_1^\Imc, (d,\elem_2)\in C_2^\Imc\}\\
=&\{\elem \mid (d,\elem)\in C^\Imc\}
\end{align*}
%
If $C=\exists R. C_1$, then $\ext{q_C}(x)=\exists yt_0\,  R(x,y,t_0) \wedge \ext{q_1}(y)$ where $\ext{q_1}(y)$ is the extended version of the query that retrieves the instances of $C_1$.  
\begin{align*}
\{\bigotimes_{P(\vec{t},t)\in \ext{q_C}(x)} \pi(t) \mid \pi\in\nu_\Imc(\ext{q_C}(x)), \pi(x)=d\}
=&\{ \pi_{R}(t_0)\otimes\bigotimes_{P(\vec{t},t)\in \ext{q_1}(y)} \pi_1(t)  \mid\\ 
\pi_R\in\nu_\Imc(R(x,y,t_0)),&\pi_1\in\nu_\Imc(\ext{q_1}(y)), \pi_R(x)=d, \pi_1(y)=\pi_R(y)\}\\
=&\{\elem_R\otimes\elem_1 \mid (d,e,\elem_R)\in R^\Imc, (e,\elem_1)\in C_1^\Imc\}\\
=&\{\elem \mid (d,\elem)\in C^\Imc\}
\end{align*}
%
The case $C=\exists R^-. C_1$, where $\ext{q_C}(x)=\exists yt_0\, R(y,x,t_0) \wedge \ext{q_1}(y)$, is similar.\qedhere
\end{proof}

\Theorembasic*
\begin{proof}
By Lemma \ref{lem:complex-concept-query-inter}, for every model $\Imc$ of $\Omc^\semiringshort$, it holds that
$$\p{\Imc}{\ext{q_C}(a)}=\{\bigotimes_{P(\vec{t},t)\in \ext{q_C}(a)} \pi(t) \mid \pi\in\nu_\Imc(\ext{q_C}(a))\}=\{\elem\mid (a^\Imc, \elem)\in C^\Imc\}.$$
Moreover,  
(i)~for every model $\Imc$ of $\Omc^\semiringshort\cup\{(C\sqsubseteq A_C,\one)\}$, $C^\Imc\subseteq A_C^\Imc$ and the annotated interpretation $\Jmc$ obtained from $\Imc$ by setting $A_C^\Jmc=C^\Imc$ is also a model of $\Omc^\semiringshort\cup\{(C\sqsubseteq A_C,\one)\}$, so $\bigcap_{\Imc\models \Omc^\semiringshort\cup\{(C\sqsubseteq A_C,\one)\}}\{\elemc\mid (a^\Imc,\elemc)\in A_C^\Imc\}=\bigcap_{\Imc\models \Omc^\semiringshort\cup\{(C\sqsubseteq A_C,\one)\}}\{\elemc\mid (a^\Imc,\elemc)\in C^\Imc\}$, and
(ii)~models of $\Omc^\semiringshort\cup\{(C\sqsubseteq A_C,\one)\}$ are models of $\Omc^\semiringshort$ and for every model $\Imc$ of $\Omc^\semiringshort$, the interpretation $\Jmc$ that extends $\Imc$ by setting $A_C^\Jmc=C^\Imc$ is a model of $\Omc^\semiringshort\cup\{(C\sqsubseteq A_C,\one)\}$ so $\bigcap_{\Imc\models \Omc^\semiringshort}\{\elemc\mid (a^\Imc,\elemc)\in C^\Imc\}=\bigcap_{\Imc\models \Omc^\semiringshort\cup\{(C\sqsubseteq A_C,\one)\}}\{\elemc\mid (a^\Imc,\elemc)\in C^\Imc\}$. 
%
It follows that
\begin{align*}\Pmc(A_C(a), \Omc^\semiringshort\cup\{(C\sqsubseteq A_C,\one)\})=&\bigoplus_{\Omc^\semiringshort\cup\{(C\sqsubseteq A_C,\one)\}\models (A_C(a),\elem)} \elem\\
= & \bigoplus_{\elem\in\bigcap_{\Imc\models \Omc^\semiringshort\cup\{(C\sqsubseteq A_C,\one)\}}\{\elemc\mid (a^\Imc,\elemc)\in A_C^\Imc\}} \elem\\
= & \bigoplus_{\elem\in\bigcap_{\Imc\models \Omc^\semiringshort\cup\{(C\sqsubseteq A_C,\one)\}}\{\elemc\mid (a^\Imc,\elemc)\in C^\Imc\}} \elem\\
= & \bigoplus_{\elem\in\bigcap_{\Imc\models \Omc^\semiringshort}\{\elemc\mid (a^\Imc,\elemc)\in C^\Imc\}} \elem\\
=& \bigoplus_{\elem\in\bigcap_{\Imc\models \Omc^{\semiringshort}}\p{\Imc}{\ext{q_C}(a)}} \elem \\
=&\ \Pmc(q_C(a), \Omc^\semiringshort).\qedhere
\end{align*} 
\end{proof}

\subsection{Relationship with Semantics for Specific Annotations}
 
\new{The proofs of Propositions~\ref{prop:poss:prov} and~\ref{prop:access:prov} are based on Proposition~\ref{prop:ap:pos} and Theorem~\ref{th:sem-commut-hom}. 
This is not an issue since Propositions \ref{prop:poss:prov} and \ref{prop:access:prov} are not used to prove any other results in this paper.}

\proppossprov*
\begin{proof}
\new{Let $\Omc^{\mathbb{F}}=\tup{\Omc,\lambda}$ be a satisfiable $\mathbb{F}$-annotated $\ELHIbot$ ontology and let $\Omc^{\posbool}=\tup{\Omc,\lambda_\semiringVars}$ be such that $\lambda_\semiringVars$ is an injective function from $\Omc$ to the set of variables $\semiringVars$ (assuming that $|\Omc|\leq|\semiringVars|$). Recall that $\posbool$ specializes correctly to every commutative semiring that is \timesidem and absorptive, hence in particular to $\mathbb{F}$. It follows that there exists a unique semiring homomorphism $h$ from $\posbool$ to $\mathbb{F}$ such that $h(\lambda_\semiringVars(\alpha))=\lambda(\alpha)$ for every $\alpha\in\Omc$ and $h(x)=0$ for every $x\in\semiringVars\setminus\{\lambda_\semiringVars(\alpha)\mid\alpha\in \Omc\}$, \ie $\lambda=h\circ\lambda_\semiringVars$. 
By Theorem~\ref{th:sem-commut-hom}, since $\posbool$ and $\mathbb{F}$ are commutative $\omega$-complete semirings (recall that $\semiringVars$ is finite) that are \plusidem and \timesidem, and $h$ is an $\omega$-complete semiring homomorphism from $\posbool$ to $\mathbb{F}$: $h(\Pmc(\alpha,\Omc^{\posbool})) = \Pmc(\alpha, \Omc^{\mathbb{F}})$ holds~if: 
\begin{enumerate}
\item $\alpha$ is a BCQ, an assertion, or an RI whose left-hand side is satisfiable \wrt $\Omc$; or
\item $\Omc$ does not contain any GCI with $\top$ as left-hand side, and $\alpha$ is a GCI whose left-hand side is satisfiable \wrt~$\Omc$.	
\end{enumerate}
In both cases, by Proposition~\ref{prop:ap:pos} (and absorptivity of \posbool which implies that considering also non-minimal $\Mmc$ in the disjunction does not change the provenance value), 
$\Pmc(\alpha,\Omc^{\posbool})=\bigvee_{\Mmc\subseteq\Omc, \Mmc\models\alpha}\bigwedge_{\beta\in\Mmc}\lambda_\semiringVars(\beta)$. Therefore, 
$$\Pmc(\alpha, \Omc^{\mathbb{F}})=h(\Pmc(\alpha,\Omc^{\posbool})) =\max\{\min_{\beta\in\Mmc}(\lambda(\beta))\mid \Mmc\subseteq\Omc, \Mmc\models\alpha\}.$$ 
It is then easy to check that $\Pmc(\alpha, \Omc^{\mathbb{F}})\geq n$ iff there exists $\Mmc\subseteq\Omc$, such that $\Mmc\models\alpha$ and all axioms in $\Mmc$ are annotated with degrees greater or equal to $n$, \ie iff $\Omc_{\geq n}\models\alpha$.}
\end{proof}

\propaccessprov*
\begin{proof}
\new{The proof is analogous to that of Proposition~\ref{prop:poss:prov}, using $\mathbb{L}_{\mathbb{A}}$ instead of $\mathbb{F}$. Indeed, $\mathbb{L}_{\mathbb{A}}$ is also \timesidem and absorptive. Hence, we get 
$\Pmc(\alpha, \Omc^{\mathbb{L}_{\mathbb{A}}})=h(\Pmc(\alpha,\Omc^{\posbool})) =\sup\{\inf_{\beta\in\Mmc}(\lambda(\beta))\mid \Mmc\subseteq\Omc, \Mmc\models\alpha\}$.  }
\end{proof}

\section{Proofs for Section \ref{sec:classical-res-counterparts}}
\subsection{\new{Normal Form}}
\NormalisationELHI*
\begin{proof}
For the first point, we show by induction that for every annotated $\ELHIbot$ ontology $\Omc^\semiringshort$, 
if $\NF^i(\Omc^\semiringshort)$ is obtained from $\Omc^\semiringshort$ by applying $i$ normalization steps, then every model $\Imc$ of $\NF^i(\Omc^\semiringshort)$ is also a model of $ \Omc^\semiringshort$. 

\noindent\emph{Base case: $i=1$.} Let $\Omc^\semiringshort$ be an annotated $\ELHIbot$ ontology and assume that $\NF^1(\Omc^\semiringshort)$ is obtained from $\Omc^\semiringshort$ by applying a single normalization rule $\NF$. We have four cases.
\begin{description}
\item[$\NF=\NF_1$:] Let $\Imc$ be a model of $\NF^1(\Omc^\semiringshort)=\Omc^\semiringshort\setminus\{(C\sqcap\widehat{D}\sqsubseteq E,\, \elem_0)\}
\cup\{(\widehat{D}\sqsubseteq A,\, \one), \, (C\sqcap A\sqsubseteq E, \, \elem_0)\}$ and let $(d,\elem)\in(C\sqcap\widehat{D})^\Imc$. 
There exist $\elem_1,\elem_2\in\semiringset$ such that $(d,\elem_1)\in C^\Imc$, $(d,\elem_2)\in \widehat{D}^\Imc$, and $\elem_1\otimes \elem_2=\elem$. 
Since $\Imc \models(\widehat D\sqsubseteq A,\one)$, then $(d,\elem_2)\in A^\Imc$.
Hence $(d,\elem_1\otimes \elem_2)\in (C\sqcap A)^\Imc$. 
Since $\Imc\models (C\sqcap A\sqsubseteq E, \, \elem_0)$, it follows that 
$(d,\elem_1\otimes \elem_2\otimes \elem_0)\in E^\Imc$, i.e., $(d,\elem\otimes \elem_0)\in E^\Imc$. 
Thus $\Imc\models (C\sqcap\widehat{D}\sqsubseteq E,\, \elem_0)$, and $\Imc$ is a model of $\Omc^\semiringshort$. 

\item[$\NF=\NF_2$:] This case is analogous to the case $\NF=\NF_1$.

\item[$\NF=\NF_3$:] Let $\Imc$ be a model of 
$\NF^1(\Omc^\semiringshort)=\Omc^\semiringshort\setminus\{(\exists P.\widehat{C}\sqsubseteq D,\, \elem_0)\}\cup\{(\widehat{C}\sqsubseteq A,\,
 \one), \, (\exists P. A\sqsubseteq D, \, \elem_0)\}$ and let $(d,\elem)\in(\exists P.\widehat{C})^\Imc$. 
  There exist $e \in \Delta^\Imc$ and $\elem_1,\elem_2\in\semiringset$ such that $(d, e,\elem_1)\in P^\Imc$, $(e,\elem_2)\in \widehat{C}^\Imc$, and $\elem_1\otimes \elem_2=\elem$. Since $\Imc\models (\widehat{C}\sqsubseteq A,\,
 \one)$, then $(e,\elem_2)\in A^\Imc$. Hence $(d,\elem_1\otimes \elem_2)\in (\exists P. A)^\Imc$. 
 Since $\Imc\models (\exists P. A\sqsubseteq D, \, \elem_0)$, it follows that $(d,\elem_1\otimes \elem_2\otimes \elem_0)\in D^\Imc$, i.e. $(d,\elem\otimes \elem_0)\in D^\Imc$. Thus $\Imc\models (\exists P.\widehat{C}\sqsubseteq D,\, \elem_0)$, and $\Imc$ is a model of~$\Omc^\semiringshort$. 

\item[$\NF=\NF_4$:] Let $\Imc$ be a model of 
$\NF^1(\Omc^\semiringshort)=\Omc^\semiringshort\setminus\{(\widehat{C}\sqsubseteq\exists P,\, \elem_0)\}\cup\{(\widehat{C}\sqsubseteq A,\, \one), 
\, (A\sqsubseteq \exists P, \, \elem_0)\}$ and let $(d,\elem)\in\widehat{C}^\Imc$. Since $\Imc\models (\widehat{C}\sqsubseteq A,\, \one)$, then $(d,\elem)\in A^\Imc$. Since $\Imc\models (A\sqsubseteq \exists P, \, \elem_0)$, it follows that $(d,\elem\otimes\elem_0)\in (\exists P)^\Imc$. Thus 
$\Imc\models(\widehat{C}\sqsubseteq\exists P,\, \elem_0)$, and $\Imc$ is a model of $\Omc^\semiringshort$. 
\end{description}

\noindent\emph{Induction step.} Assume that the property is true for some $i\geq 1$ and let $\Omc^\semiringshort$ be an annotated $\ELHIbot$ ontology, $\NF^{i+1}(\Omc^\semiringshort)$ be the result of applying $i+1$ normalization steps to $\Omc^\semiringshort$ and $\NF^i(\Omc^\semiringshort)$ that of applying the first $i$ steps to $\Omc^\semiringshort$.  
By applying the induction hypothesis on $\NF^i(\Omc^\semiringshort)$, we obtain that every model of $\NF^i(\Omc^\semiringshort)$ is a model of $\Omc^\semiringshort$. 
Then, since $\NF^{i+1}(\Omc^\semiringshort)$ results from applying one normalization step to $\NF^i(\Omc^\semiringshort)$, we obtain that every model of $\NF^{i+1}(\Omc^\semiringshort)$ is a model of $\NF^i(\Omc^\semiringshort)$, and thus also a model of $\Omc^\semiringshort$. 
\smallskip

Conversely, we show by induction that for every $i$ and for every annotated $\ELHIbot$  ontology $\Omc^\semiringshort$, 
 if $\NF^i(\Omc^\semiringshort)$ can be obtained from $\Omc^\semiringshort$ by applying $i$ normalization rules, then if $\NF^i(\Omc^\semiringshort)\models (\alpha,\elem)$ and every concept name occurring in $\alpha$ occurs in $\Omc^\semiringshort$, it holds that $\Omc^\semiringshort\models (\alpha,\elem)$. 

\noindent\emph{Base case: $i=1$.} Let $\Omc^\semiringshort$ be an annotated $\ELHIbot$ ontology, $\alpha$ an axiom such that every concept name occurring in $\alpha$ occurs in $\Omc^\semiringshort$ and $\chi\in\semiringset$. Assume that $\NF^1(\Omc^\semiringshort)$ is obtained from $\Omc^\semiringshort$ by applying a normalization rule $\NF$ and that $\NF^1(\Omc^\semiringshort)\models (\alpha,\chi)$. We have four cases. 
\begin{description}
\item[$\NF=\NF_1$:] $\NF^1(\Omc^\semiringshort)=\Omc^\semiringshort\setminus\{(C\sqcap\widehat{D}\sqsubseteq E,\, \elem_0)\}\cup\{(\widehat{D}\sqsubseteq A,\, \one), \, (C\sqcap A\sqsubseteq E, \, \elem_0)\}$. Let $\Imc$ be a model of~$\Omc^\semiringshort$ and $\Jmc$ be the interpretation that extends $\Imc$ with $A^\Jmc=\widehat{D}^\Imc$. 
Clearly, $\Jmc\models (\widehat{D}\sqsubseteq A,\, \one)$. Let $(d,\elem)\in (C\sqcap A)^\Jmc$. There exist $\elem_1,\elem_2  \in \semiringset$ such that  
$(d,\elem_1)\in C^\Jmc=C^\Imc$, $(d,\elem_2)\in A^\Jmc=\widehat{D}^\Imc$, and $\elem_1\otimes \elem_2=\elem$. 
It follows that $(d,\elem_1\otimes \elem_2)\in (C\sqcap\widehat{D})^\Imc$. 
Since $\Imc\models (C\sqcap\widehat{D}\sqsubseteq E,\, \elem_0)$, it follows that $(d,\elem_1\otimes \elem_2\otimes \elem_0)\in E^\Imc=E^\Jmc$, i.e. $(d,\elem\otimes \elem_0)\in E^\Jmc$. 
Hence $\Jmc\models (C\sqcap A\sqsubseteq E, \, \elem_0)$ and $\Jmc$ is a model of $\NF^1(\Omc^\semiringshort)$. 
It follows that $\Jmc\models (\alpha,\chi)$. Since $\alpha$ does not contain $A$, then $\Imc\models (\alpha,\chi)$. Hence $\Omc^\semiringshort\models (\alpha,\chi)$.

\item[$\NF=\NF_2$:] This case is analogous to the case $\NF=\NF_1$.

\item[$\NF=\NF_3$:] $\NF^1(\Omc^\semiringshort)=\Omc^\semiringshort\setminus\{(\exists P.\widehat{C}\sqsubseteq D,\, \elem_0)\}\cup\{(\widehat{C}\sqsubseteq A,\, \one), \, (\exists P. A\sqsubseteq D, \, \elem_0)\}$. Let $\Imc$ be a model of $\Omc^\semiringshort$ and $\Jmc$ be the interpretation that extends $\Imc$ with $A^\Jmc=\widehat{C}^\Imc$. 
Clearly, $\Jmc\models (\widehat{C}\sqsubseteq A,\, \one)$. Let $(d,\elem)\in (\exists P. A)^\Jmc$. There exist $e  \in \Delta^\Jmc$ and $\elem_1,\elem_2 \in \semiringset$ such that  $(d,e,\elem_1)\in P^\Jmc=P^\Imc$, $(e,\elem_2)\in A^\Jmc=\widehat{C}^\Imc$, and $\elem_1\otimes \elem_2=\elem$. It follows that $(d,\elem_1\otimes \elem_2)\in (\exists P. \widehat{C})^\Imc$. Since $\Imc\models (\exists P.\widehat{C}\sqsubseteq D,\, \elem_0)$, it follows that $(d,\elem_1\otimes \elem_2\otimes \elem_0)\in D^\Imc=D^\Jmc$, i.e. $(d,\elem\otimes \elem_0)\in D^\Jmc$. 
Hence $\Jmc\models (\exists P. A\sqsubseteq D, \, \elem_0)$ and $\Jmc$ is a model of~$\NF^1(\Omc^\semiringshort)$. 
It follows that $\Jmc\models (\alpha,\chi)$. Since $\alpha$ does not contain $A$, then $\Imc\models (\alpha,\chi)$. Hence $\Omc^\semiringshort\models (\alpha,\chi)$.

\item[$\NF=\NF_4$:] $\NF^1(\Omc^\semiringshort)=\Omc^\semiringshort\setminus\{(\widehat{C}\sqsubseteq\exists P,\, \elem_0)\}\cup\{(\widehat{C}\sqsubseteq A,\, \one), \, (A\sqsubseteq \exists P, \, \elem_0)\}$. Let $\Imc$ be a model of $\Omc^\semiringshort$ and $\Jmc$ be the interpretation that extends $\Imc$ with $A^\Jmc=\widehat{C}^\Imc$. 
Clearly, $\Jmc\models (\widehat{C}\sqsubseteq A,\, \one)$. Let $(d,\elem)\in A^\Jmc=\widehat{C}^\Imc$. Since $\Imc\models (\widehat{C}\sqsubseteq\exists P,\, \elem_0)$, it follows that $(d,\elem\otimes \elem_0)\in (\exists P)^\Imc=(\exists P)^\Jmc$. 
Hence $\Jmc\models (A\sqsubseteq \exists P, \, \elem_0)$ and $\Jmc$ is a model of $\NF^1(\Omc^\semiringshort)$. 
It follows that $\Jmc\models (\alpha,\chi)$. Since $\alpha$ does not contain $A$, then $\Imc\models (\alpha,\chi)$. Hence $\Omc^\semiringshort\models (\alpha,\chi)$.
\end{description}

\noindent\emph{Induction step.} Assume that the property is true for some $i$ and let $\Omc^\semiringshort$ be an annotated $\ELHIbot$ ontology, $\alpha$ an axiom such that every concept name occurring in $\alpha$ occurs in $\Omc^\semiringshort$ and $\elem\in\semiringset$. 
Let $\NF^{i+1}(\Omc^\semiringshort)$ 
be obtained by applying $i+1$ normalization rules to $\Omc^\semiringshort$ and $\NF^{i}(\Omc^\semiringshort)$ 
be obtained by applying the first $i$ normalization rules to $\Omc^\semiringshort$  (so that $\NF^{i+1}(\Omc^\semiringshort)$ is obtained by applying a normalization rule $\NF$ to $\NF^{i}(\Omc^\semiringshort)$). 
Assume that $\NF^{i+1}(\Omc^\semiringshort)\models (\alpha,\elem)$.  
Since the normalization rules can only introduce new concept names, the concept names occurring in $\Omc^\semiringshort$ are a subset of those occurring in $\NF^{i}(\Omc^\semiringshort)$, so every concept name occurring in $\alpha$ occurs in $\NF^{i}(\Omc^\semiringshort)$. 
Since we have that $\NF^{i+1}(\Omc^\semiringshort)\models (\alpha,\elem)$, 
that $\NF^{i+1}(\Omc^\semiringshort)$ results from the application of a single normalization rule to $\NF^{i}(\Omc^\semiringshort)$, and that all concept names in $\alpha$ occur in $\NF^{i}(\Omc^\semiringshort)$,  
the base case applies and we obtain that $\NF^{i}(\Omc^\semiringshort)\models (\alpha,\elem)$. Hence by the induction hypothesis, $\Omc^\semiringshort\models (\alpha,\elem)$. 
\end{proof}

\subsection{Canonical Model}
We start by proving two lemmas.

\begin{restatable}{lemma}{lemhom}\label{lem:hom}
	Let $\Imc=(\Delta^\Imc,K,\cdot^\Imc)$ and $\Jmc=(\Delta^\Jmc,K,\cdot^\Jmc)$ be $\semiringshort$-annotated interpretations such that there is a homomorphism
	$\homo: \Imc \rightarrow \Jmc$.
For every $\elem\in K$ and every $\ELHIbot$ concept expression $C$,
if $(d,\elem)\in C^\Imc$ then $(\homo(d),\elem)\in C^\Jmc$. 	
\end{restatable}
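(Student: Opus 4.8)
The plan is to prove Lemma~\ref{lem:hom} by structural induction on the $\ELHI_\bot$ concept expression $C$, mirroring the inductive definition of $\cdot^\Imc$ on complex expressions. The base case handles $C = \top$ and $C = A$ for a concept name $A$; the inductive step handles $C = \exists P.D$ (where $P$ is a role name or an inverse role) and $C = D \sqcap D'$.

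For the base cases: if $C = \top$, then $(d,\elem)\in \top^\Imc = \Delta^\Imc \times \{\one\}$ forces $\elem = \one$, and since $\homo(d)\in\Delta^\Jmc$ we get $(\homo(d),\one)\in\top^\Jmc$ directly. If $C = A\in\NC$, then $(d,\elem)\in A^\Imc$ gives $(\homo(d),\elem)\in A^\Jmc$ immediately by the definition of a homomorphism. For the inductive step with $C = D\sqcap D'$: if $(d,\elem)\in(D\sqcap D')^\Imc$, then by definition $\elem = \elema\otimes\elemb$ with $(d,\elema)\in D^\Imc$ and $(d,\elemb)\in D'^\Imc$; the induction hypothesis yields $(\homo(d),\elema)\in D^\Jmc$ and $(\homo(d),\elemb)\in D'^\Jmc$, so $(\homo(d),\elema\otimes\elemb)\in(D\sqcap D')^\Jmc$ as required. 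For $C = \exists P.D$ with $P$ a role name $R$: if $(d,\elem)\in(\exists R.D)^\Imc$, then $\elem = \elema\otimes\elemb$ with $(d,e,\elema)\in R^\Imc$ and $(e,\elemb)\in D^\Imc$ for some $e\in\Delta^\Imc$; by the homomorphism condition $(\homo(d),\homo(e),\elema)\in R^\Jmc$, and by induction hypothesis $(\homo(e),\elemb)\in D^\Jmc$, hence $(\homo(d),\elema\otimes\elemb)\in(\exists R.D)^\Jmc$. The case $P = R^-$ is the same, using that $(d,e,\elem)\in (R^-)^\Imc$ iff $(e,d,\elem)\in R^\Imc$ and that the homomorphism condition on $R$ transfers accordingly.

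There is no real obstacle here: the statement is a routine ``structure transfer along a homomorphism'' lemma, and the only mild point of care is keeping track of the semiring product when a complex expression's annotation is assembled from sub-expression annotations (so that one checks the \emph{same} element $\elem = \elema\otimes\elemb$ appears on both sides, rather than merely some element). The note that $\ELHI_\bot$ concepts on the left of GCIs never involve $\bot$ (which has empty extension in every annotated interpretation) means $\bot$ need not be treated as a sub-expression case. One should also remark that the syntactic restriction on the grammar means $\exists P.C$ only occurs with the restricted $C::= A\mid \exists P.C\mid C\sqcap C\mid\top$, so the induction stays within this fragment and all cases are covered. This lemma is then exactly what is needed to push Lemma~\ref{lem:homomo} (the canonical model homomorphism) forward to entailment of annotated assertions and BCQs in Theorem~\ref{thm:can-model-main}.
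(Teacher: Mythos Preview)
Your proof is correct and follows essentially the same structural-induction argument as the paper: base cases $\top$ and $A\in\NC$, inductive cases $D\sqcap D'$ and $\exists P.D$, with the homomorphism condition handling roles and the induction hypothesis handling sub-concepts. The only cosmetic difference is that you split $P=R$ and $P=R^-$ explicitly while the paper treats them uniformly, and you add the (correct) remark that $\bot$ need not be considered.
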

\begin{proof}
	The proof is by structural induction. In the base case,
	$C$ is a concept name $A$ and the lemma holds by definition of $\homo$. For $C=\top$ the lemma holds trivially. Now, suppose that the lemma holds
	for $D,D_1,D_2$. We make a case distinction:
	\begin{itemize}
		\item $C=D_1\sqcap D_2$: if $(d,\elem)\in (D_1\sqcap D_2)^\Imc$ then, by the semantics of $\semiringshort$-annotated $\ELHIbot$, 
		there exist $(d,\elem_1)\in D^\Imc_1$
		and $(d,\elem_2)\in D^\Imc_2$ such that $\elem=\elem_1\otimes\elem_2$. 
		By the inductive hypothesis, $(\homo(d),\elem_1)\in D^\Jmc_1$
		and $(\homo(d),\elem_2)\in D^\Jmc_2$.
		Then, by the semantics of $\semiringshort$-annotated
		$\ELHIbot$, we have that $(\homo(d),\elem)\in (D_1\sqcap D_2)^\Jmc$. 
		\item $C=\exists P.D$ (where $P$ can be a role name or an inverse role): if $(d,\elem)\in (\exists P.D)^\Imc$ then, by the semantics of $\semiringshort$-annotated $\ELHIbot$, 
		there exist $e\in\Delta^\Imc$ with $(d,e,\elem_1)\in P^\Imc$
		and $(e,\elem_2)\in D^\Imc$ 
		such that $\elem=\elem_1\otimes\elem_2$. 
		By definition of $\homo$, 
		we have that $(d,e,\elem_1)\in P^\Imc$
		implies $(\homo(d),\homo(e),\elem_1)\in P^\Jmc$
		and, by the inductive hypothesis, $(\homo(e),\elem_2)\in D^\Jmc$. 
		Then, by the semantics of $\semiringshort$-annotated
		$\ELHIbot$, we have that $(\homo(d),\elem)\in (\exists P.D)^\Jmc$. 
		\qedhere
	\end{itemize} 
\end{proof}

\begin{restatable}{lemma}{lemhomcq}\label{lem:homcq}
	Let $\Imc=(\Delta^\Imc,K,\cdot^\Imc)$ and $\Jmc=(\Delta^\Jmc,K,\cdot^\Jmc)$ be $\semiringshort$-annotated interpretations such that there is a homomorphism
	$\homo: \Imc \rightarrow \Jmc$.
	For every $\elem\in K$ and every BCQ $q$,
	if $\Imc\models (q,\elem)$ then $\Jmc\models (q,\elem)$. 	
\end{restatable}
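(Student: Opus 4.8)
The plan is to reduce the statement to transporting a query match along the homomorphism $\homo$. Recall that $\Imc\models(q,\elem)$ means there is a match $\pi\in\nu_\Imc(\ext{q})$ for the extended version $\ext{q}$ of $q$ with $\bigotimes_{P(\vec{t},t)\in \ext{q}}\pi(t)=\elem$, where $\pi$ sends the provenance terms $\vec{t}$ of $\ext{q}$ into $K$ and all other terms into $\Delta^\Imc$. Since interpretations map semiring values to themselves, I would regard $\homo$ as extended by the identity on $K$ and set $\pi':=\homo\circ\pi$; then $\pi'$ coincides with $\pi$ on every provenance term and equals $\homo\circ\pi$ elsewhere.

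Next I would verify that $\pi'$ is a match for $\ext{q}$ in $\Jmc$. For $t\in\NI$ we get $\pi'(t)=\homo(t^\Imc)=t^\Jmc$ from the first clause of the definition of homomorphism. For a concept atom $A(x,t)\in\ext{q}$, from $(\pi(x),\pi(t))\in A^\Imc$ and the concept clause of the definition of homomorphism we obtain $(\homo(\pi(x)),\pi(t))\in A^\Jmc$, that is, $(\pi'(x),\pi'(t))\in A^\Jmc$; similarly, for a role atom $R(x_1,x_2,t)\in\ext{q}$, from $(\pi(x_1),\pi(x_2),\pi(t))\in R^\Imc$ and the role clause we obtain $(\pi'(x_1),\pi'(x_2),\pi'(t))\in R^\Jmc$. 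Hence $\pi'\in\nu_\Jmc(\ext{q})$. Finally, since $\pi'$ and $\pi$ agree on the provenance terms of $\ext{q}$, we get $\bigotimes_{P(\vec{t},t)\in\ext{q}}\pi'(t)=\bigotimes_{P(\vec{t},t)\in\ext{q}}\pi(t)=\elem$, so $\elem\in\p{\Jmc}{\ext{q}}$, i.e.\ $\Jmc\models(q,\elem)$.

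The argument is routine and I do not foresee a real obstacle; the only point needing care is the bookkeeping for provenance terms, namely that composing with $\homo$ leaves the last coordinate of each atom untouched --- which is exactly what the homomorphism conditions (phrased with a fixed semiring element) guarantee. Note that, unlike Lemma~\ref{lem:hom}, this statement cannot be obtained by a structural induction on a concept expression, since $q$ may be cyclic, so the match-composition argument is the natural route.
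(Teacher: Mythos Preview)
Your proposal is correct and follows essentially the same approach as the paper: define $\pi'$ by composing $\pi$ with $\homo$ on domain elements while leaving the provenance components fixed, then verify that $\pi'$ is a match in $\Jmc$ using the three clauses of the homomorphism definition, and conclude that the product of provenance values is unchanged. The paper phrases the definition of $\pi'$ via a case distinction on whether $\pi(t)\in K$ rather than speaking of extending $\homo$ by the identity on $K$, but this is only a notational difference.
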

\begin{proof}
	Assume $\Imc\models (q,\elem)$ and let $\ext{q}$ be the extended version of $q$. Denote by $\nu_\Imc(\ext{q})$ the set of all matches  of $\ext{q}$ in \Imc. By semantics of $(q,\elem)$, we have that  $\Imc\models (q,\elem)$ iff 
	there is a match $\pi: \mn{terms}(\ext{q}) \rightarrow \Delta^\Imc\cup \semiringset$ in $\nu_\Imc(\ext{q})$ such that  
	$\elem=\bigotimes_{P(\vec{t},t)\in \ext{q}} \pi(t)$.
	To prove this lemma, we need to construct a match $\pi': \mn{terms}(\ext{q}) \rightarrow \Delta^\Jmc\cup \semiringset$ such that  
	$\elem=\bigotimes_{P(\vec{t},t)\in \ext{q}} \pi'(t)$.
	
	For every $t\in \mn{terms}(\ext{q})$, we define $\pi'(t):=g(\pi(t))$ if $\pi(t)\notin K$ and $\pi'(t):=\pi(t)$ otherwise. 
	It is clear that $\elem=\bigotimes_{P(\vec{t},t)\in \ext{q}} \pi'(t)$ from the definition of $\pi'$. We argue that $\pi'$ is a match of $\ext{q}$ in \Jmc. 
	For every $a \in \NI$, $\pi(a)=a^\Imc$ because $\pi$ is a match, and $g(a^\Imc)=a^\Jmc$ because $g$ is a homomorphism.  	Hence $\pi'(a)=g(\pi(a))=a^\Jmc$ for every $a \in \NI$. 
	It remains to show that
	$\pi'(\vec{t},t)\in P^\Jmc$ for every $P(\vec{t},t)\in \mn{atoms}(\ext{q})$.
	That is,
	(i) $(\pi'(t_1),\pi'(t))\in A^\Jmc$ for every $A(t_1,t)\in \mn{atoms}(\ext{q})$
	and (ii) $(\pi'(t_1),\pi'(t_2),\pi'(t))\in R^\Jmc$ for every $R(t_1,t_2,t)\in \mn{atoms}(\ext{q})$. 
	(i) If $A(t_1,t)\in \mn{atoms}(\ext{q})$ then, since $\pi$ is a match, 
	$(\pi(t_1),\pi(t))\in A^\Imc$, where $\pi(t)$ is necessarily equal to some $\elem\in K$ by definition of \Imc. Since 
	$(\pi(t_1),\elem)\in A^\Imc$ then, by definition of $g$, we have that $(g(\pi(t_1)),\elem)\in A^\Jmc$,  
	\ie $(\pi'(t_1),\pi'(t))\in A^\Jmc$, by definition of $\pi'$. 
	The proof of (ii) is similar.
\end{proof}

\lemhomomo*
\begin{proof}
 Let $\Imc$ be a model of $\Omc^\semiringshort$. 
	The proof is by induction on the sequence of interpretations of the canonical model 
	$\Imc_{\Omc^\semiringshort}$. 
	We define $\homo := \bigcup_{n \geq 0} \homo_n$ and show that for every $n$, $\homo_n : \Imc_n \rightarrow \Imc$ is a homomorphism.  	
	Set $\homo_0: \Delta^{\Imc_{0}} \rightarrow \Delta^{\Imc}$ 
	with  $\homo_0(a) = a^\Imc$ for all $a \in \Delta^{\Imc_{0}}$ 
	(recall that $\Delta^{\Imc_{0}}=\NI$ and $a^{\Imc_{0}}=a$ for all $a \in\NI$). 
	By definition of $\Imc_0$, $(a,\elem) \in A^{\Imc_{0}}$ iff $(A(a),\elem) \in \Omc^\semiringshort$.  
	Since $\Imc$ is a model of $\Omc^\semiringshort$, if $(A(a),\elem) \in \Omc^\semiringshort$ then 
	$(a^{\Imc},\elem) \in A^{\Imc}$. 
	So $(a,\elem) \in A^{\Imc_0}$ 
	implies $(\homo_0(a),\elem) \in A^{\Imc}$. 
	Similarly, if $(a,b,\elem) \in R^{\Imc_0}$ then 
	$(R(a,b),\elem)\in\Omc^\semiringshort$, so 
	$(a^{\Imc},b^{\Imc},\elem) \in R^{\Imc}$.   
	Then, $(a,b,\elem) \in R^{\Imc_0}$ implies $(\homo_0(a),\homo_0(b),\elem) \in R^{\Imc}$. 
	Thus, $\homo_0: \Imc_0 \rightarrow \Imc$ is a homomorphism. 
	
	Suppose it was proven that $\homo_n : \Imc_n \rightarrow \Imc$ is a 
	homomorphism.  
	We want to show that there is a homomorphism
	$\homo_{n+1} : \Imc_{n+1} \rightarrow \Imc$ that extends $g_n$.
	By definition of the canonical model,
	given $\Imc_n$,
	the interpretation $\Imc_{n+1}$ is \new{obtained from $\Imc_n$ by applying the chase rule to some $(\alpha,\elem)\in\Omc^\semiringshort$ and $(\vec{d},\elem')\in E^{\Imc_n}$. 
		We are in one of the following cases:
\begin{enumerate}[(1)]
\item \blue{$\alpha=P\sqsubseteq Q$, $(\vec{d})=(d,d')$, $E=P$, $Q^{\Imc_{n+1}}=Q^{\Imc_n}\cup\{(d,d',\elem\otimes \elemb)\}$;}
\item 
\blue{$\alpha=C\sqsubseteq A$, $(\vec{d})=(d)$, $E=C$, $A^{\Imc_{n+1}}=A^{\Imc_{n}}\cup\{(d,\elem\otimes \elemb)\}$;}
\item \blue{$\alpha=C\sqsubseteq \exists P$,  $(\vec{d})=(d)$, $E=C$, $\Delta^{\Imc_{n+1}}=\Delta^{\Imc_{n}}\cup\{d_f\}$ with $d_f\notin\Delta^{\Imc_{n}}$,  
$P^{\Imc_{n+1}}=P^{\Imc_{n}}\cup\{(d,d_f,\elem\otimes \elemb)\}$.}
 \end{enumerate}}
	In the first two cases, we define $\homo_{n+1}:=\homo_{n}$. In the third case, we need to map $d_f$ to an element of \blue{$\Delta^\Imc$}. 
	Since $\homo_{n}: \Imc_n \rightarrow \Imc$ is a homomorphism, by Lemma~\ref{lem:hom}, 
	$(d,\elemb)\in C^{\Imc_n}$ implies that $(\homo_{n}(d),\elemb)\in C^{\Imc}$.
	As  $\Imc$ is a model of $\Omc^\semiringshort$, 
	if $\pair{C\sqsubseteq \exists P}{\elem}\in \Omc^\semiringshort$ and $(\homo_{n}(d),\elemb)\in C^{\Imc}$ then there is
	$e\in\Delta^\Imc$ such that
	$(\homo_{n}(d),e,\elem\otimes\elemb)\in P^\Imc$.
	We take a fixed but arbitrary such $e$
	and define $\homo_{n+1}$ in the same way 
	as $\homo_{n}$ except that $\homo_{n+1}(d_f):=e$.
	
	We now show that in each case $\homo_{n+1} : \Imc_{n+1} \rightarrow \Imc$ is a homomorphism.
	\begin{itemize}
		\item[(1)] The difference between $\Imc_n$
		and $\Imc_{n+1}$ is that
		now we have $(d,d',\elem\otimes \elemb)\in Q^{\Imc_{n+1}}$.
		By assumption $\homo_{n}$ is a homomorphism,
		so $(d,d',\elemb)\in P^{\Imc_n}$ implies that
		$(\homo_{n}(d),\homo_{n}(d'),\elemb)\in P^{\Imc}$.
		Since in this case $\homo_{n+1}=\homo_{n} $,
		we also have that
		$(\homo_{n+1}(d),\homo_{n+1}(d'),\elemb)\in P^{\Imc}$. As \Imc is a model of  $\Omc^{\semiringshort}$,
		if
		$\pair{P\sqsubseteq Q}{\elem}\in 	\Omc^{\semiringshort}$ then
		$(\homo_{n+1}(d),\homo_{n+1}(d'),\elem\otimes \elemb)\in Q^{\Imc}$,
		which means that $\homo_{n+1} : \Imc_{n+1} \rightarrow \Imc$ is a homomorphism, as required. 
		\item[(2)] In this case,	the difference between $\Imc_n$
		and $\Imc_{n+1}$ is that
		now we have $(d,\elem\otimes \elemb)\in A^{\Imc_{n+1}}$.
		By assumption $\homo_{n}$ is a homomorphism,
		so (by Lemma~\ref{lem:hom}) $(d,\elemb)\in C^{\Imc_n}$ implies that
		$(\homo_{n}(d),\elemb)\in C^{\Imc}$.
		Since in this case $\homo_{n+1}=\homo_{n} $,
		we also have that
		$(\homo_{n+1}(d),\elemb)\in C^{\Imc}$. As \Imc is a model of  $\Omc^{\semiringshort}$,
		if
		$\pair{C\sqsubseteq A}{\elem}\in 	\Omc^{\semiringshort}$ then
		$(\homo_{n+1}(d),\elem\otimes\elemb)\in A^{\Imc}$,
		which means that $\homo_{n+1} : \Imc_{n+1} \rightarrow \Imc$ is a homomorphism, as required.
		\item[(3)] Finally, here the difference between $\Imc_n$
		and $\Imc_{n+1}$ 
		is that
		$\Delta^{\Imc_{n+1}}=\Delta^{\Imc_{n}}\cup\{d_f\}$
		and
		$	(d,d_f,\elem\otimes \elemb)\in P^{\Imc_{n+1}}$,
		where $d_f$ is a fresh element. 
		In this case $\homo_{n+1}$ is the same as $\homo_{n}$ except for $\homo_{n+1}(d_f)$,
		which is mapped to an element in $e\in\Delta^\Imc$
		such that $(\homo_{n+1}(d),e,\elem\otimes\elemb)\in P^\Imc$ (recall that $\homo_{n+1}(d)=\homo_{n}(d)$).
		So $(d,d_f,\elem\otimes\elemb)\in P^{\Imc_{n+1}}$
		implies $(\homo_{n+1}(d),\homo_{n+1}(d_f),\elem\otimes\elemb)\in P^\Imc$, as required.
	\end{itemize}
	
	Since $\Imc_{\Omc^{\semiringshort}} = \bigcup_{n \geq 0} \Imc_n$, 
	there exists a homomorphism $\homo: \Imc_{\Omc^{\semiringshort}} \rightarrow \Imc$.
\end{proof}

\thmcanmodelmain*
\begin{proof}
Since annotated BCQs subsume annotated assertions, the two first points are consequences of the last one. 
Assume $\Omc^\semiringshort\models (q,\elem)$. By Proposition~\ref{lem:can},
	$\Imc_{\Omc^\semiringshort}\models \Omc^\semiringshort$. So
	$\Omc^\semiringshort\models (q,\elem)$ implies
	$\Imc_{\Omc^\semiringshort}\models (q,\elem)$. 
Conversely,  assume that $\Imc_{\Omc^\semiringshort}\models (q,\elem)$ and let $\Imc$ be a model of $\Omc^\semiringshort$. 
By Lemma~\ref{lem:homomo}, 
	there exists a homomorphism 
	$\homo: \Imc_{\Omc^\semiringshort} \rightarrow \Imc$.  
	By Lemma~\ref{lem:homcq}, we thus have that  $\Imc\models (q,\elem)$. 
	It follows that $\Omc^\semiringshort\models (q,\elem)$.	
\end{proof}

\new{Given an annotation $\elem\in K$, we define $\Imc_{C,\elem,\Omc^\semiringshort}$ in the same way as $\Imc_{C,\Omc^\semiringshort}$ except that we use $\elem$ instead of $\one$ in the definition of $\Imc_C$ (\cf paragraph before Theorem~\ref{thm:can-model-main-gci}).}

\begin{restatable}{lemma}{lemhomobasic}\label{lem:homomobasic} \new{Let $\Imc$ be a model of $\Omc^\semiringshort$ with $(d,\elem)\in C^\Imc$, where $C$ is a \emph{basic} concept. 
	Then there exists a homomorphism 
	$\homo: \Imc_{C,\elem,\Omc^\semiringshort} \rightarrow \Imc$  such that $\homo(d_C)=d$.}
\end{restatable}
\begin{proof}
\new{	Similar to the case of Lemma~\ref{lem:homomo}, the proof is by induction on the sequence of interpretations of the canonical model, which now is 
	$\Imc_{C,\elem,\Omc^\semiringshort}$ with $C$ a basic concept. 
	We define $\homo := \bigcup_{n \geq 0} \homo_n$ and set  $\homo_0: \Delta^{\Imc_{0}} \rightarrow \Delta^{\Imc}$ 
	with  $g_0(d_C)=d$ and $\homo_0(a) = a^\Imc$ for all $a \in \NI$. 
	By definition of $\Imc_0$, we have that $(d_C,\elem)\in C^{\Imc_0}$ and
	by assumption $(d,\elem)\in C^\Imc$.
	Since $C$ is a basic concept, it is either a concept name $A$ or of the form $\exists P$, with $P$ a (possibly inverse) role. For $C=A$ setting $g_0(d_A)=d$ clearly
	satisfies the homomorphism property required by this lemma.
	For $C=\exists P$, we know that $(d,\elem)\in (\exists P)^\Imc$ 
	holds iff there is $d'\in\Delta^\Imc$ such that $(d,d',\elem)\in P^\Imc$.
	We map $d_f\in \Delta^{\Imc_0}$ (see $d_f$ in the definition of the canonical model $\Imc_C$ for a basic concept $C$)
	to such $d'$, that is, $g_0(d_f)=d'$.   Then we have that
	$(d_C,d_f,\elem)\in P^{\Imc_0}$ implies $(g_0(d_C),g_0(d_f),\elem)\in P^\Imc$, which satisfies the homomorphism property required by this lemma. The argument for the assertions is as in Lemma~\ref{lem:homomo}: 
	if $(a,\elem) \in A^{\Imc_{0}}$, then $(A(a),\elem) \in \Omc^\semiringshort$ so $(a^{\Imc},\elem) \in A^{\Imc}$, \ie $(a,\elem) \in A^{\Imc_0}$ 
	implies $(\homo_0(a),\elem) \in A^{\Imc}$, and similarly for role assertions.  
	We have  shown that  $\homo_0:\Imc_{0} \rightarrow \Imc$ is a homomorphism. The argument for the inductive step is as in Lemma~\ref{lem:homomo}.}
\end{proof}

	\begin{lemma}\label{cl:helper}
	Assume $\semiringshort$ is a commutative \timesidem\ semiring. 
		If $(d_C,\elem)\in D^{\Imc_{C,\Omc^{\semiringshort}}}$, where $D$ is a basic concept, 
		then $(d_C,\elem'\otimes \elem)\in D^{\Imc_{C,\elem',\Omc^{\semiringshort}}}$.
	\end{lemma}	 
	\begin{proof}
	\new{
By construction of $\Imc_{C,\Omc^\semiringshort}=\bigcup_{n\geq 0}\Imc_n$, one can show by induction on $n$ such that $(d_C,\elem)\in D^{\Imc_{n}}$ that there exists a sequence of GCIs and role inclusions from $\Omc^\semiringshort$ with annotation $\elem_1,\dots,\elem_k$ which forms a subset of the axioms used in the $n$ chase rule applications that go from $\Imc_0$ to $\Imc_n$ and is such that $\elem=\one\otimes\elem_1\otimes\dots\otimes\elem_k$. This can be shown thanks to $\otimes$-idempotency (for example, if we apply the chase rule with $(A_1\sqcap A_2\sqsubseteq B,\elem_4)$ and some $(e,\one\otimes\elem_1\otimes\elem_2\otimes\elem_1\otimes\elem_3)\in (A_1\sqcap A_2)^{\Imc_i}$ that comes from $(e,\one\otimes\elem_1\otimes\elem_2)\in A_1^{\Imc_i}$ and $(e,\one\otimes\elem_1\otimes\elem_3)\in A_2^{\Imc_i}$, we obtain $(e,\one\otimes\elem_1\otimes\elem_2\otimes\elem_3\otimes\elem_4)\in B^{\Imc_{i+1}}$ because $\elem_1\otimes\elem_1=\elem_1$). }

\new{Recall that $\Imc_{C,\elem',\Omc^{\semiringshort}}=\bigcup_{n\geq 0}\Imc'_n$ is defined in the same way as $\Imc_{C,\Omc^\semiringshort}$ except that we use $\elem'$ instead of $\one$ in the definition of $\Imc_C$. 
	Since $\Omc$ does not contain any GCI with $\top$ as left-hand side, and $d_C$ only occurs in $\Imc'_0$ in $(d_C,\elem')$ in $C^{\Imc'_0}$ and in $(d_C,\one)$ in $\top^{\Imc'_0}$,  when applying the same rules starting with $(d_C,\elem')$ in $C^{\Imc'_0}$ instead of $(d_C,\one)$, one can show that we obtain $\elem'\otimes\elem_1\otimes\dots\otimes\elem_k=\elem'\otimes\elem$ instead of $\elem$ (again, thanks to $\otimes$-idempotency).} 
	\end{proof}

\thmcanmodelmaingci*
\begin{proof}
	Assume ${\Omc^\semiringshort}\models (C\sqsubseteq D,\elem)$, where $C,D$ are basic concepts (that is, either a concept name $A$ or of the form $\exists P$) \new{and $C$ is satisfiable \wrt $\Omc^\semiringshort$}. This means that, for every interpretation 
	\Imc \new{such that $\Imc\models\Omc^\semiringshort$, }
	if $(d,\elem')\in C^\Imc$ then $(d,\elem\otimes \elem')\in D^\Imc$.
	In particular, this holds for the canonical model $\Imc_{C,\Omc^{\semiringshort}}$ 
	of  $\Omc^\semiringshort$ and $C$.
	Since $(d_C,\one)\in C^{\Imc_{C,\Omc^{\semiringshort}}}$ and $\elem\otimes \one=\elem$ we have that
	$(d_C,\elem)\in D^{\Imc_{C,\Omc^{\semiringshort}}}$. 
	
	Conversely, assume $(d_C,\elem)\in D^{\Imc_{C,\Omc^{\semiringshort}}}$. 
	Let $\Imc$ be a model of $\Omc^\semiringshort$. If $(d,\elem')\in C^\Imc$, 
	then by Lemma~\ref{lem:homomobasic} 
	there exists a homomorphism 
	$\homo: \Imc_{C,\elem',\Omc^\semiringshort} \rightarrow \Imc$  such that 
	$g(d_C)=d$. 
	By Lemma~\ref{cl:helper}, $(d_C,\elem'\otimes \elem)\in D^{\Imc_{C,\elem',\Omc^{\semiringshort}}}$, 
	so by Lemma~\ref{lem:hom}, $(d,\elem'\otimes\elem)\in D^\Imc$. 	
	Hence $\Imc\models (C\sqsubseteq D,\elem)$.
	Since \Imc was an arbitrary model of $\Omc^\semiringshort$, 
	we obtain ${\Omc^\semiringshort}\models (C\sqsubseteq D,\elem)$. 	
\end{proof}

\thmcanmodelmainri*
\begin{proof}
\new{
Assume ${\Omc^\semiringshort}\models (P\sqsubseteq Q,\elem)$. This means that, for every interpretation 
	\Imc such that $\Imc\models\Omc^\semiringshort$, 
	if $(d,e,\elem')\in P^\Imc$ then $(d,e,\elem\otimes \elem')\in Q^\Imc$.
	In particular, this holds for the canonical model $\Imc_{P,\Omc^{\semiringshort}}$ 
	of  $\Omc^\semiringshort$ and $P$.
	Since $(d_1,d_2,\one)\in P^{\Imc_{P,\Omc^{\semiringshort}}}$ and $\elem\otimes \one=\elem$ we have that
	$(d_1,d_2,\elem)\in Q^{\Imc_{P,\Omc^{\semiringshort}}}$. 
	}
	
	\new{
	Conversely, assume $(d_1,d_2,\elem)\in Q^{\Imc_{P,\Omc^{\semiringshort}}}$. By construction of $\Imc_{P,\Omc^\semiringshort}=\bigcup_{n\geq 0}\Imc_n$, since $(d_1,d_2,\one)\in P^{\Imc_0}$ and $d_1,d_2$ do not occur anywhere else in $\Imc_0$, one can show by induction on $n$ such that $(d_1,d_2,\elem)\in Q^{\Imc_{n}}$ that there exists a sequence of role inclusions $(P_1\sqsubseteq P_2,\elem_1),\dots, (P_{k-1}\sqsubseteq P_k,\elem_k)$ from $\Omc^\semiringshort$ which forms a subset of the axioms used in the $n$ chase rule applications that go from $\Imc_0$ to $\Imc_n$ and is such that $\elem=\one\otimes\elem_1\otimes\dots\otimes\elem_k$. 
	Let $\Imc$ be a model of $\Omc^\semiringshort$. Since $\Imc$ is a model of $(P_1\sqsubseteq P_2,\elem_1),\dots, (P_{k-1}\sqsubseteq P_k,\elem_k)$, one can show that $(d,e,\elem')\in P^\Imc$ implies that $(d,e,\elem'\otimes\elem)\in Q^\Imc$. Hence $\Omc^\semiringshort\models (P\sqsubseteq Q,\elem)$.
}
\end{proof}

\subsection{\new{Reduction Between Assertion and GCI or RI Entailment}}

\new{We start with two lemmas that will be used to prove the second point of Theorem~\ref{th:red-concept}.}

\begin{lemma}\label{lem:basic-reduction-lemma-base-case}
\new{Let $\semiringshort$ be a commutative semiring, $\Omc^\semiringshort$ a $\semiringshort$-annotated $\ELHIbot$ ontology such that $\Omc$ does not contain any GCI with $\top$ as left-hand side,  $B(a_0)$ a concept assertion and $\elem_0\in\semiringset$. Let $\Tmc^\semiringshort$ be defined as in the second point of Theorem~\ref{th:red-concept}. Assume that $C_{a_0}$ is satisfiable \wrt $\Tmc^\semiringshort$ and let $\Jmc:=\Imc_{C_{a_0},\Tmc^\semiringshort}$ be the canonical model of $C_{a_0}$ and $\Tmc^\semiringshort$ and $\Jmc_0$ be first interpretation built in the construction of $\Jmc$, \ie 
\begin{itemize}
	\item $\Delta^{\Jmc_0}:=\NI\cup\{d_{C_{a_0}}\}$;
	\item $a^{\Jmc_0}:=a$,  
	for all $a\in\NI$;
	\item $C_{a_0}^{\Jmc_0}=\{(d_{C_{a_0}},\one)\}$;
	\item $(a,\elem)\in A^{\Jmc_0}$ iff 
$(A(a),\elem)\in\Omc^{\semiringshort}$; 
	\item $(a,b,\elem)\in R^{\Jmc_0}$ iff 
$(R(a,b),\elem)\in\Omc^{\semiringshort}$. 
\end{itemize}
The following properties hold.
\begin{enumerate}
\item For every $R\in\NR$, for every $b,c\in\NI$ and $\elem\in\semiringset$, $(b,c,\elem)\in R^{\Jmc_0}$ implies that for every $(e,\one)\in C_b^\Jmc$, there exists $(d,\one)\in C_c^\Jmc$ such that $(e,d,\elem)\in R^\Jmc$.
\item For every $\ELHIbot$ concept $D$, for every $b\in\NI$ and $\elem\in\semiringset$, $(b,\elem)\in D^{\Jmc_0}$ implies that for every $(e,\one)\in C_b^\Jmc$, $(e,\elem)\in D^\Jmc$.
\end{enumerate}}
\end{lemma}

\begin{proof}
\new{
For point (1), assume that $(b,c,\elem)\in R^{\Jmc_0}$. By definition of $\Jmc_0$, $(R(b,c),\elem)\in\Omc^{\semiringshort}$. Hence the following axioms belong to $\Tmc^\semiringshort$: $(R_{bc}\sqsubseteq R,\elem)$, $(C_b\equiv \exists R_{bc},\one)$, $(C_c\equiv \exists R_{bc}^-,\one)$. Let $(e,\one)\in C_b^\Jmc$. Since $\Jmc$ is a model of $\Tmc^\semiringshort$, there exists $d\in\Delta^\Jmc$ such that $(e,d,\one)\in R_{bc}^\Jmc$,  $(e,d,\elem)\in R^\Jmc$ and $(d,\one)\in C_c^\Jmc$. 
}

\new{
We now show point (2) by structural induction. 
\begin{itemize}
\item Base case: $D\in\NC\cup\{\top\}$. Let $b\in\NI$, and $\elem\in\semiringset$ be such that $(b,\elem)\in D^{\Jmc_0}$.
\begin{itemize}
\item If $D=\top$, $(b,\elem)\in \top^{\Jmc_0}$ implies that $\elem=\one$ and for every $(e,\one)\in C_b^\Jmc$, $(e,\one)\in \top^\Jmc$. 
\item If $D\in\NC$, $(b,\elem)\in D^{\Jmc_0}$ implies that $(D(b),\elem)\in\Omc^{\semiringshort}$, so that $(C_b\sqsubseteq D,\elem)\in\Tmc^\semiringshort$. Hence, since $\Jmc\models\Tmc^\semiringshort$, for every $(e,\one)\in C_b^\Jmc$, it holds that $(e,\elem)\in D^\Jmc$.
\end{itemize}
\item Induction step: 
\begin{itemize}
\item Let $D=C_1\sqcap C_2$ with $C_1$ and $C_2$ $\ELHIbot$ concepts such that the property holds. Let $b\in\NI$ and $\elem\in\semiringset$ be such that $(b,\elem)\in D^{\Jmc_0}$. There exist $\elem_1$ and $\elem_2$ such that $(b,\elem_1)\in C_1^{\Jmc_0}$, $(b,\elem_2)\in C_2^{\Jmc_0}$ and $\elem_1\otimes\elem_2=\elem$. By induction hypothesis, for $i\in\{1,2\}$, $(b,\elem_i)\in C_i^{\Jmc_0}$ implies that for every $(e,\one)\in C_b^\Jmc$, $(e,\elem_i)\in C_i^\Jmc$, so  $(e,\elem_1\otimes\elem_2)\in (C_1\sqcap C_2)^\Jmc$, \ie $(e,\elem)\in D^\Jmc$.
\item Let $D=\exists P.C$ with $P$ a role name or an inverse role and $C$ an $\ELHIbot$ concept such that the property holds. Let $b\in\NI$ and $\elem\in\semiringset$ be such that $(b,\elem)\in D^{\Jmc_0}$. There exist $c\in\Delta^{\Jmc_0}$, $\elem_1$ and $\elem_2$ such that $(b,c,\elem_1)\in P^{\Jmc_0}$, $(c,\elem_2)\in C^{\Jmc_0}$ and $\elem_1\otimes\elem_2=\elem$. 
Let $(e,\one)\in C_b^\Jmc$. 
First note that $c\in\NI$. Indeed, it cannot be the case that $c=d_{C_{a_0}}$ and $(b,c,\elem_1)\in P^{\Jmc_0}$ by definition of $\Jmc_0$. 
Hence, by point~(1), $(b,c,\elem_1)\in P^{\Jmc_0}$ and $(e,\one)\in C_b^\Jmc$ implies that there exists $(d,\one)\in C_c^\Jmc$ such that $(e,d,\elem_1)\in P^\Jmc$. Moreover, by induction hypothesis, $(c,\elem_2)\in C^{\Jmc_0}$ and $(d,\one)\in C_c^\Jmc$ implies $(d,\elem_2)\in C^\Jmc$. It follows that $(e,\elem_1\otimes\elem_2)\in (\exists P.C)^\Jmc$, \ie $(e,\elem)\in D^\Jmc$. \qedhere
\end{itemize}
\end{itemize}
}
\end{proof}

\begin{lemma}\label{lem:basic-reduction-lemma}
\new{Let $\semiringshort$ be a commutative semiring, $\Omc^\semiringshort$ a $\semiringshort$-annotated $\ELHIbot$ ontology such that $\Omc$ does not contain any GCI with $\top$ as left-hand side,  $B(a_0)$ a concept assertion and $\elem_0\in\semiringset$. Let $\Tmc^\semiringshort$ be defined as in the second point of Theorem~\ref{th:red-concept}. Assume that $C_{a_0}$ is satisfiable \wrt $\Tmc^\semiringshort$ and let $\Jmc:=\Imc_{C_{a_0},\Tmc^\semiringshort}$ be the canonical model of $C_{a_0}$ and $\Tmc^\semiringshort$. 
The following properties hold.
\begin{enumerate}
\item For every $R\in\NR$, for every $b,c\in\NI$ and $\elem\in\semiringset$, $(b,c,\elem)\in R^{\Jmc}$ implies that for every $(e,\one)\in C_b^\Jmc$, there exists $(d,\one)\in C_c^\Jmc$ such that $(e,d,\elem)\in R^\Jmc$.
\item For every $\ELHIbot$ concept $D$, for every $b\in\NI$ and $\elem\in\semiringset$, $(b,\elem)\in D^{\Jmc}$ implies that for every $(e,\one)\in C_b^\Jmc$, $(e,\elem)\in D^\Jmc$.
\end{enumerate}}
\end{lemma}
\begin{proof}
\new{We denote by $\Jmc_0,\Jmc_1,\dots$ the annotated interpretations built in the construction of $\Jmc=\bigcup_{n\geq 0}\Jmc_n$ (\cf Section~\ref{sec:canonical-model}). }

\new{
We show point (1) by proving by induction on $n$ that for every $R\in\NR$, for every $b,c\in\NI$ and $\elem\in\semiringset$, $(b,c,\elem)\in R^{\Jmc_n}$ implies that for every $(e,\one)\in C_b^\Jmc$, there exists $(d,\one)\in C_c^\Jmc$ such that $(e,d,\elem)\in R^\Jmc$. 
\begin{itemize}
\item The base case ($n=0$) follows from point (1) of Lemma~\ref{lem:basic-reduction-lemma-base-case}. 
\item Assume now that the property is true for some $n\geq 0$.  
Let $(b,c,\elem)\in R^{\Jmc_{n+1}}$. 
If $(b,c,\elem)\in R^{\Jmc_{n}}$, we obtain the result by induction hypothesis. Otherwise, the rule applied to obtain $\Jmc_{n+1}$ from $\Jmc_n$ added  $(b,c,\elem)$ to $R^{\Jmc_{n}}$ using some $(P\sqsubseteq R,\elem_1)\in\Tmc^\semiringshort$ (with $P$ a role name or an inverse role) such that $(b,c,\elem_2)\in P^{\Jmc_n}$ with $\elem_1\otimes\elem_2=\elem$. 
Let $(e,\one)\in C_b^\Jmc$. By induction hypothesis, there exists $(d,\one)\in C_c^\Jmc$ s.t.\ $(e,d,\elem_2)\in P^\Jmc$. 
Since $\Jmc\models \Tmc^\semiringshort$, it follows that $(e,d,\elem_1\otimes\elem_2)\in R^\Jmc$, \ie $(e,d,\elem)\in R^\Jmc$. 
\end{itemize}
}

\new{
We now show point (2) by proving by induction on $n$ that for every $\ELHIbot$ concept $D$, for every $b\in\NI$ and $\elem\in\semiringset$, $(b,\elem)\in D^{\Jmc_n}$ implies that for every $(e,\one)\in C_b^\Jmc$, $(e,\elem)\in D^\Jmc$.  
\begin{itemize}
\item The base case ($n=0$) follows from point (2) of Lemma~\ref{lem:basic-reduction-lemma-base-case}. 
\item Assume that for every $\ELHIbot$ concept $D$, for every $b\in\NI$ and $\elem\in\semiringset$, $(b,\elem)\in D^{\Jmc_n}$ implies that for every $(e,\one)\in C_b^\Jmc$, $(e,\elem)\in D^\Jmc$. 
We prove by structural induction that for every $\ELHIbot$ concept $D$, for every $b\in\NI$ and $\elem\in\semiringset$, $(b,\elem)\in D^{\Jmc_{n+1}}$ implies that for every $(e,\one)\in C_b^\Jmc$, $(e,\elem)\in D^\Jmc$. 
\begin{itemize}
\item Base case: $D\in\NC\cup\{\top\}$. If $D=\top$, $\elem=\one$ and for every $(e,\one)\in C_b^\Jmc$, $(e,\one)\in \top^\Jmc$. If $D\in\NC$, let $(e,\one)\in C_b^\Jmc$.  If $(b,\elem)\in D^{\Jmc_{n}}$, we obtain $(e,\elem)\in D^\Jmc$ by induction. Otherwise, the rule applied to obtain $\Jmc_{n+1}$ from $\Jmc_n$ added $(b,\elem)$ to $D^{\Jmc_{n}}$ using some $(E\sqsubseteq D,\elem_1)\in\Tmc^\semiringshort$ (with $E$ an $\ELHIbot$ concept) such that $(b,\elem_2)\in E^{\Jmc_n}$ with $\elem_1\otimes\elem_2=\elem$. Since $(b,\elem_2)\in E^{\Jmc_n}$, by induction hypothesis, $(e,\elem_2)\in E^{\Jmc}$. Hence, since $\Jmc\models \Tmc^\semiringshort$, $(e,\elem_1\otimes\elem_2)\in D^{\Jmc}$, \ie $(e,\elem)\in D^{\Jmc}$. 
\item Induction step:
\begin{itemize}
\item Assume that $D=C_1\sqcap C_2$ with $C_1$ and $C_2$ such that for every $c\in\NI$ and $\chi\in\semiringset$, $(c,\chi)\in C_i^{\Jmc_{n+1}}$ implies that for every $(d,\one)\in C_c^\Jmc$, $(d,\chi)\in C_i^{\Jmc}$. 
Let $(e,\one)\in C_b^\Jmc$. 
Since $(b,\elem)\in D^{\Jmc_{n+1}}$, there exist $(b,\elem_1)\in C_1^{\Jmc_{n+1}}$ and $(b,\elem_2)\in C_2^{\Jmc_{n+1}}$ such that $\elem_1\otimes\elem_2=\elem$. 
Hence, $(e,\elem_1)\in C_1^{\Jmc}$ and $(e,\elem_2)\in C_2^{\Jmc}$, so $(e,\elem)\in D^{\Jmc}$. 
\item Assume that $D=\exists P.C$ with $C$ such that for every $c\in\NI$ and $\chi\in\semiringset$, $(c,\chi)\in C^{\Jmc_{n+1}}$ implies that for every $(d,\one)\in C_c^\Jmc$, $(d,\chi)\in C^{\Jmc}$, and 
let $(e,\one)\in C_b^\Jmc$. 
Since $(b,\elem)\in D^{\Jmc_{n+1}}$, there must exist $(b,c,\elem_1)\in P^{\Jmc_{n+1}}$ and $(c,\elem_2)\in C^{\Jmc_{n+1}}$ such that $\elem_1\otimes\elem_2=\elem$. 
By point (1), there exists $(d,\one)\in C_c^\Jmc$ such that $(e,d,\elem_1)\in P^\Jmc$. 
Since $(d,\one)\in C_c^\Jmc$ and  $(c,\elem_2)\in C^{\Jmc_{n+1}}$, we get that $(d,\elem_2)\in C^{\Jmc}$. 
Hence $(e,\elem)\in D^{\Jmc}$. \qedhere
\end{itemize}
\end{itemize}
\end{itemize}
}
\end{proof}

\thReductionConcept*
\begin{proof}
\new{For the first point we show that ${\Omc^\semiringshort}\models (C\sqsubseteq D,\elem_0)$ iff $\Omc^\semiringshort\cup\Tmc^\semiringshort_D\cup\Amc^\semiringshort_C\models (E(a_0), \elem_0)$. Recall that $C$ and $D$ are \emph{basic concepts} (concept names or of the form $\exists P$). 
First note that if ${\Omc^\semiringshort}$ is unsatisfiable, so is $\Omc^\semiringshort\cup\Tmc^\semiringshort_D\cup\Amc^\semiringshort_C$, and both annotated ontologies entail every annotated axiom. 
In the same way, if $C$ is unsatisfiable \wrt $\Omc$, $\Omc^\semiringshort\models(C\sqsubseteq D,\elem)$ for every $\elem\in\semiringset$ and $\Omc^\semiringshort\cup\Tmc^\semiringshort_D\cup\Amc^\semiringshort_C$ is unsatisfiable so entails $(E(a_0),\elem)$ for every $\elem\in\semiringset$. 
Hence, we next focus on the case where ${\Omc^\semiringshort}$ is satisfiable and $C$ is satisfiable \wrt $\Omc$, so that $\Omc^\semiringshort\cup\Tmc^\semiringshort_D\cup\Amc^\semiringshort_C$ is satisfiable. }

\noindent\new{($\Rightarrow$) 
Let $\Imc$ be a model of $\Omc^\semiringshort\cup\Tmc^\semiringshort_D\cup\Amc^\semiringshort_C$. By construction of $\Amc^\semiringshort_C$, $(a_0^\Imc,\one)\in C^\Imc$. Since ${\Omc^\semiringshort}\models (C\sqsubseteq D,\elem_0)$ and $\Imc\models\Omc^\semiringshort$, it follows that $(a_0^\Imc,\one\otimes\elem_0)\in D^\Imc$, \ie $(a_0^\Imc,\elem_0)\in D^\Imc$. Hence, since $\Imc\models\Tmc^\semiringshort_D$, $(a_0^\Imc,\one\otimes\elem_0)\in E^\Imc$, \ie $(a_0^\Imc,\elem_0)\in E^\Imc$. }

\noindent\new{($\Leftarrow$) We show the other direction by contrapositive: We assume that $\Omc^\semiringshort\not\models (C\sqsubseteq D,\elem_0)$ and show that $\Omc^\semiringshort\cup\Tmc^\semiringshort_D\cup\Amc^\semiringshort_C\not\models (E(a_0),\elem_0)$. 
Let $\Imc_{C,\Omc^\semiringshort}$ be the canonical model of $C$ and $\Omc^\semiringshort$. Recall that $\Imc_{C,\Omc^\semiringshort}$ is a model of $\Omc^\semiringshort$ such that its domain element $d_C$ is such that $(d_C,\one)\in C^{\Imc_{C,\Omc^\semiringshort}}$ (\cf Section~\ref{sec:canonical-model}). 
Since $\semiringshort$ is \timesidem and $\Omc$ does not contain any GCI with $\top$ as left-hand side, by Theorem~\ref{thm:can-model-main-gci}, $\Omc^\semiringshort\not\models (C\sqsubseteq D,\elem_0)$ implies that $(d_C,\elem_0)\notin D^{\Imc_{C,\Omc^\semiringshort}}$. We obtain a model $\Imc$ of $\Omc^\semiringshort\cup\Tmc^\semiringshort_D\cup\Amc^\semiringshort_C$ such that $\Imc\not\models (E(a_0),\elem_0)$ as follows: $a_0^\Imc=d_C$ (recall that $a_0$ does not occur in $\Omc$), $E^\Imc=D^{\Imc_{C,\Omc^\semiringshort}}$ and for all other individual names, concept names and role names, $\Imc$ coincides with $\Imc_{C,\Omc^\semiringshort}$.}
\smallskip

\new{
We now show the second point: ${\Omc^\semiringshort}\models (B(a_0),\elem_0)$ iff $\Tmc^\semiringshort\models (C_{a_0}\sqsubseteq B,\elem_0)$. 
Again, note that if ${\Omc^\semiringshort}$ is unsatisfiable, so is $\Tmc^\semiringshort=\Omc^\semiringshort\cup\bigcup_{a\in\individuals{\Omc}}\Tmc^\semiringshort_{C_a}$, and both annotated ontologies entail every annotated axiom. Hence, we next focus on the case where ${\Omc^\semiringshort}$ is satisfiable. }

\noindent\new{($\Rightarrow$)  Assume that $\Tmc^\semiringshort\not\models(C_{a_0}\sqsubseteq B,\elem_0)$. Let $\Imc_{C_{a_0},\Tmc^\semiringshort}$ be the canonical model of $C_{a_0}$ and $\Tmc^\semiringshort$ (note that $C_{a_0}$ is satisfiable \wrt $\Tmc^\semiringshort$, otherwise we would have $\Tmc^\semiringshort\models (C_{a_0}\sqsubseteq B,\elem)$ for any $\elem\in\semiringset$). Since $\semiringshort$ is \timesidem  and $\Omc$ (hence also $\Tmc$) does not contain any GCI with $\top$ as left-hand side, by Theorem~\ref{thm:can-model-main-gci}, $\Tmc^\semiringshort\not\models (C_{a_0}\sqsubseteq B,\elem_0)$ implies that $(d_{C_{a_0}},\elem_0)\notin B^{\Imc_{C_{a_0},\Tmc^\semiringshort}}$. 
By Lemma~\ref{lem:basic-reduction-lemma}, and since $(d_{C_{a_0}},\one)\in C_{a_0}^{\Imc_{C_{a_0},\Tmc^\semiringshort}}$ (by definition of $\Imc_{C_{a_0},\Tmc^\semiringshort}$), $(d_{C_{a_0}},\elem_0)\notin B^{\Imc_{C_{a_0},\Tmc^\semiringshort}}$ implies that  $(a_0,\elem_0)\notin B^{\Imc_{C_{a_0},\Tmc^\semiringshort}}$. Since $\Omc^\semiringshort\subseteq\Tmc^\semiringshort$, $\Imc_{C_{a_0},\Tmc^\semiringshort}$ is a model of $\Omc^\semiringshort$.  Hence, $\Omc^\semiringshort\not\models(B(a_0),\elem_0)$.}

\noindent\new{($\Leftarrow$) 
For the converse, assume that $\Omc^\semiringshort\not\models(B(a_0),\elem_0)$. 
We show that $\Tmc^\semiringshort\not\models (C_{a_0}\sqsubseteq B,\elem_0)$. 
Let $\Imc$ be a model of $\Omc^\semiringshort$ such that $\Imc\not\models (B(a_0),\elem_0)$, i.e. $(a_0^\Imc, \elem_0)\notin B^\Imc$. 
Let $\Jmc$ be the interpretation that extends $\Imc$ with $C_{a}^\Jmc=\{(a^\Imc, \one)\}$ for every $a\in\individuals{\Omc}$, and $R_{ab}^\Jmc=\{(a^\Imc, b^\Imc, \one)\}$ for all $a,b\in\individuals{\Omc}$ and every $R\in\NR$. 
Since $(a_0^\Imc, \one)\in C_{a_0}^\Jmc$ and $(a_0^\Imc, \elem_0)\notin B^\Jmc$, then $\Jmc\not\models (C_{a_0}\sqsubseteq B,\elem_0)$. 
We show that $\Jmc$ is a model of $\Tmc^\semiringshort$, so that $\Tmc^\semiringshort\not\models (C_{a_0}\sqsubseteq B,\elem_0)$. 
It is clear that $\Jmc$ is a model of $\Omc^\semiringshort$ since interpretations of individuals, concepts and roles that occur in $\Omc^\semiringshort$ are not modified.  
We now consider the different kinds of RIs and GCIs in $\Tmc^\semiringshort\setminus\Omc^\semiringshort$. }
\new{
\begin{itemize}
\item Let $(R_{ab}\sqsubseteq R, \elem)\in\Tmc^\semiringshort$. By construction of $\Jmc$, $R_{ab}^\Jmc=\{(a^\Imc, b^\Imc, \one)\}$  and since $(R(a,b),\elem)\in\Omc^\semiringshort$ and $\Imc$ is a model of $\Omc^\semiringshort$, then $(a^\Imc, b^\Imc, \elem)\in R^\Jmc$. Thus we have $\Jmc\models (R_{ab}\sqsubseteq R, \elem)$. 
\item Let $(C_a\sqsubseteq A,\elem)\in\Tmc^\semiringshort\setminus\Omc^\semiringshort$ with $A\in\NC$. 
Since $(A(a),\elem)\in\Omc^\semiringshort$ and $\Imc\models \Omc^\semiringshort$, then $(a^\Imc, \elem)\in A^\Jmc$. 
Thus, since $C_{a}^\Jmc=\{(a^\Imc, \one)\}$, it follows that $\Jmc\models (C_a\sqsubseteq A,\elem)$.
\item Let $(C_a\equiv \exists R_{ab},\one)\in\Tmc^\semiringshort\setminus\Omc^\semiringshort$. 
By construction $C_a^\Jmc=\{(a^\Imc,\one)\}$ and $R_{ab}^\Jmc=\{(a^\Imc, b^\Imc, \one)\}$ so $(\exists R_{ab})^\Jmc=\{(a^\Imc, \one)\}$. Hence $\Jmc\models (C_a\equiv \exists R_{ab},\one)$. 
\item Let $(C_b\equiv \exists R_{ab}^-,\one)\in\Tmc^\semiringshort\setminus\Omc^\semiringshort$. 
By construction $C_b^\Jmc=\{(b^\Imc,\one)\}$ and $R_{ab}^\Jmc=\{(a^\Imc, b^\Imc, \one)\}$ so $(\exists R_{ab}^-)^\Jmc=\{(b^\Imc, \one)\}$. Hence $\Jmc\models (C_b\sqsubseteq \exists R_{ab}^-,\one)$. 
\end{itemize}}
\new{We conclude that $\Jmc\models\Tmc^\semiringshort$, so $\Tmc^\semiringshort\not\models (C_{a_0}\sqsubseteq B, \elem_0)$.
} 
\end{proof}
\thReductionRole*
\begin{proof}
\new{We start with the first point and show that $\Omc^\semiringshort{\models} (P_1\sqsubseteq P_2,\elem_0)$ iff $\Omc^\semiringshort\cup\{(P_1(a_0,b_0), \one)\}{\models}(P_2(a_0,b_0),\elem_0)$. 
First note that if the ontology ${\Omc^\semiringshort}$ is unsatisfiable, so is $\Omc^\semiringshort\cup\{(P_1(a_0,b_0), \one)\}$, and both annotated ontologies entail every annotated axiom. Moreover, if $P_1$ is unsatisfiable \wrt ${\Omc^\semiringshort}$, $\Omc^\semiringshort\models (P_1\sqsubseteq P_2,\elem)$ for every $\elem\in\semiringset$ and $\Omc^\semiringshort\cup\{(P_1(a_0,b_0), \one)\}$ is unsatisfiable so $\Omc^\semiringshort\cup\{(P_1(a_0,b_0), \one)\}\models (P_2(a_0,b_0), \elem)$ for every $\elem\in\semiringset$. 
We next focus on the case where ${\Omc^\semiringshort}$ is satisfiable and $P_1$ is satisfiable \wrt ${\Omc^\semiringshort}$.}

\noindent\new{($\Rightarrow$) Let $\Imc$ be a model of $\Omc^\semiringshort\cup\{(P_1(a_0,b_0), \one)\}$. By construction, $(a_0^\Imc,b_0^\Imc,\one)\in P_1^\Imc$. Since ${\Omc^\semiringshort}\models (P_1\sqsubseteq P_2,\elem_0)$ and $\Imc\models\Omc^\semiringshort$, it follows that $(a_0^\Imc,b_0^\Imc,\one\otimes\elem_0)\in P_2^\Imc$, \ie $(a_0^\Imc,b_0^\Imc,\elem_0)\in P_2^\Imc$.}

\noindent\new{($\Leftarrow$) 
We show the other direction by contrapositive: We assume that $\Omc^\semiringshort\not\models (P_1\sqsubseteq P_2,\elem_0)$ and show that $\Omc^\semiringshort\cup\{(P_1(a_0,b_0), \one)\}\not\models (P_2(a_0,b_0),\elem_0)$. 
Let $\Imc_{P_1,\Omc^\semiringshort}$ be the canonical model of $P_1$ and $\Omc^\semiringshort$. Recall that $\Imc_{P_1,\Omc^\semiringshort}$ is a model of $\Omc^\semiringshort$ such that its domain elements $d_1$ and $d_2$ are such that $(d_1,d_2,\one)\in P_1^{\Imc_{P_1,\Omc^\semiringshort}}$ (\cf Section~\ref{sec:canonical-model}). 
By Theorem~\ref{thm:can-model-main-ri}, $\Omc^\semiringshort\not\models (P_1\sqsubseteq P_2,\elem_0)$ implies that $(d_1,d_2,\elem_0)\notin P_2^{\Imc_{P_1,\Omc^\semiringshort}}$. We obtain a model $\Imc$ of $\Omc^\semiringshort\cup\{(P_1(a_0,b_0), \one)\}$ such that $\Imc\not\models (P_2(a_0,b_0),\elem_0)$ as follows: $a_0^\Imc=d_1$, $b_0^\Imc=d_2$ (recall that $a_0$ and $b_0$ do not occur in $\Omc$), and for all other individual names, concept names and role names, $\Imc$ coincides with $\Imc_{P_1,\Omc^\semiringshort}$.
}
\smallskip

\new{We now show the second point: ${\Omc^\semiringshort}\models(R(a_0,b_0),\elem_0)$ iff $\Tmc^\semiringshort_{S_{a_0,b_0}}\models (S\sqsubseteq R,\elem_0)$. 
Again, note that if ${\Omc^\semiringshort}$ is unsatisfiable, so is $\Tmc^\semiringshort_{S_{a_0,b_0}}$, and both annotated ontologies entail every annotated axiom. Hence, we next focus on the case where ${\Omc^\semiringshort}$ is satisfiable.}

\noindent\new{($\Rightarrow$)  Assume that $\Tmc^\semiringshort_{S_{a_0,b_0}}\not\models(S\sqsubseteq R,\elem_0)$. Let $\Jmc:=\Imc_{S,\Tmc^\semiringshort_{S_{a_0,b_0}}}$ be the canonical model of $S$ and $\Tmc^\semiringshort_{S_{a_0,b_0}}$ (note that $S$ is satisfiable \wrt $\Tmc^\semiringshort_{S_{a_0,b_0}}$, otherwise we would have $\Tmc^\semiringshort_{S_{a_0,b_0}}\models (S\sqsubseteq R,\elem)$ for any $\elem\in\semiringset$). By Theorem~\ref{thm:can-model-main-ri}, $\Tmc^\semiringshort_{S_{a_0,b_0}}\not\models (S\sqsubseteq R,\elem_0)$ implies that $(d_1,d_2,\elem_0)\notin R^\Jmc$. Since $\Omc^\semiringshort\subseteq\Tmc^\semiringshort_{S_{a_0,b_0}}$, $\Jmc$ is a model of $\Omc^\semiringshort$. To obtain that $\Omc^\semiringshort\not\models (R(a_0,b_0),\elem_0)$, we show that for every role $P$ and $\elem\in\semiringset$, $(a_0^\Jmc, b_0^\Jmc,\elem)\in P^\Jmc$ implies that $(d_1,d_2,\elem)\in P^\Jmc$. 
Indeed, by construction of $\Jmc$, $(a_0^\Jmc, b_0^\Jmc,\elem)\in P^\Jmc$ means that there exist a role assertion $(R'(a_0,b_0),\elem_1)$ or $(R'(b_0,a_0),\elem_1)$ in $\Omc^\semiringshort$ (since $\Tmc^\semiringshort_{S_{a_0,b_0}}\setminus\Omc^\semiringshort$ does not contain any assertion) and a sequence of role inclusions $(P_1\sqsubseteq P_2,\elem_2),\dots, (P_{n-1}\sqsubseteq P_n,\elem_n)$ in $\Tmc^\semiringshort_{S_{a_0,b_0}}$ such that applying the chase rules corresponding to $(P_1\sqsubseteq P_2,\elem_2),\dots, (P_{n-1}\sqsubseteq P_n,\elem_n)$ to $(a_0,b_0,\elem_1)\in R'^{\Jmc_0}$ or $(b_0,a_0,\elem_1)\in R'^{\Jmc_0}$ respectively leads to $(a_0^\Jmc, b_0^\Jmc,\elem)\in P^\Jmc$. 
Since $(d_1,d_2,\one)\in S^{\Jmc_0}$, and $(R'(a_0,b_0),\elem_1)\in\Omc^\semiringshort$ implies that $(S\sqsubseteq R',\elem_1)\in\Tmc^\semiringshort_{S_{a_0,b_0}}$ (resp.\ $(R'(b_0,a_0),\elem_1)\in\Omc^\semiringshort$ implies that $(S\sqsubseteq R'^-,\elem_1)\in\Tmc^\semiringshort_{S_{a_0,b_0}}$), it holds that $(d_1,d_2,\elem_1)\in R'^\Jmc$ (resp.\ $(d_2,d_1,\elem_1)\in R'^\Jmc$) and applying the chase rules corresponding to $(P_1\sqsubseteq P_2,\elem_2),\dots, (P_{n-1}\sqsubseteq P_n,\elem_n)$ yields $(d_1,d_2,\elem)\in P^\Jmc$. 
We conclude that $(a_0^\Jmc, b_0^\Jmc,\elem_0)\notin R^\Jmc$, so that $\Omc^\semiringshort\not\models(R(a_0,b_0),\elem_0)$.}

\noindent\new{($\Leftarrow$) Assume that $\Omc^\semiringshort\not\models(R(a_0,b_0),\elem_0)$ and let $\Imc$ be a model of $\Omc^\semiringshort$ s.t.\ $\Imc\not\models(R(a_0,b_0),\elem_0)$. Let $\Jmc$ the interpretation that extends $\Imc$ with $S^\Jmc=\{(a_0^\Imc,b_0^\Imc,\one)\}$. Since $\Imc$ is a model of $\Omc^\semiringshort$, it is easy to see that $\Jmc$ is a model of $\Tmc^\semiringshort_{S_{a_0,b_0}}$ (in particular, for every $(S\sqsubseteq R',\elem)\in\Tmc^\semiringshort_{S_{a_0,b_0}}$, $(R'(a_0,b_0),\elem)\in\Omc^\semiringshort$ so $(a_0^\Imc,b_0^\Imc,\elem)\in R'^\Jmc$, and similarly for $(S\sqsubseteq R'^-,\elem)\in\Tmc^\semiringshort_{S_{a_0,b_0}}$). 
Moreover, $\Jmc\not\models (S\sqsubseteq R,\elem_0)$ so $\Tmc^\semiringshort_{S_{a_0,b_0}}\not\models (S\sqsubseteq R,\elem_0)$. 
}
\end{proof}

\section{Proofs for Section \ref{sec:expected-properties}}
\new{The following lemma will be useful to prove Theorem~\ref{th:sem-entailment}.}
\begin{lemma}\label{cl:auxcancisubset}
\new{Let $\semiringshort$ be a commutative semiring and $\Omc^\semiringshort$ be a $\semiringshort$-annotated $\ELHIbot$ ontology. If either 
\begin{enumerate}[(i)]
\item $\alpha$ is an assertion, a BCQ, or an RI, or
\item $\alpha$ is a GCI between basic concepts, $\semiringshort$ is \timesidem and $\Omc$ does not contain any GCI with $\top$ as left-hand side, 
\end{enumerate}then
	if $S^\semiringshort=\{(\alpha_1, \elem_1), \ldots, (\alpha_n, \elem_n)\}$ is a minimal subset of $\Omc^\semiringshort$ such that
	$S\models \alpha$, it holds that  
	$\Omc^\semiringshort\models (\alpha,\bigotimes^n_{i=1}\elem_i^{\new{p_i}})$ where $p_i\geq 1$ for $1\leq i\leq n$}.
\end{lemma}
\begin{proof} 
\new{If $\Omc^\semiringshort$ is unsatisfiable, $\Omc^\semiringshort\models (\alpha,\elemc)$ for every $\elemc\in K$ so the result holds trivially. 
In what follows, we assume that $\Omc^\semiringshort$ is satisfiable.}
	
\new{We start with case (ii), assuming that $\alpha$ is of the form $C\sqsubseteq D$, with $C,D$ basic concepts, $\semiringshort$ is \timesidem and $\Omc$ does not contain any GCI with $\top$ as left-hand side.}  
Assume that $C$ is satisfiable \wrt $\Omc$ (otherwise  
$\Omc^\semiringshort\models (C\sqsubseteq D,\elemc)$ for every $\elemc\in K$). 
	Let $\Imc_{C,S}$ be the canonical model of $C$ and $S$  (we construct it in the same way as $\Imc_{C,S^\semiringshort}$ except that we do not have the annotations). 
	Since $S\models \alpha$ and $S$ is a minimal set of axioms that entails $\alpha$, 
	by construction of $\Imc_{C,S}$, 
	there is a sequence $(\alpha_1,\Imc^1_{C,S}),\ldots,(\alpha_m,\Imc^m_{C,S})$
	of axioms and interpretations such that $d_C\in D^{\Imc^m_{C,S}}$, \new{and $\Imc^j_{C,S}$ is obtained from $\Imc^{j-1}_{C,S}$ by applying the chase rule using $\alpha_j$.} 
	By minimality of $S$ all axioms in $S$  occur in this sequence.  
	By construction of $\Imc_{C,S^{\semiringshort}}$, we have 
	an analogous sequence	
	$((\alpha_1,\elem_1),\Imc^1_{C,S^{\semiringshort}}),\ldots,((\alpha_m,\elem_m),\Imc^m_{C,S^{\semiringshort}})$,
	with the same  axioms in $S$  except that now they are annotated, 
	and we have $(d_C,\elem)\in D^{\Imc^m_{C,S^{\semiringshort}}}$ with
	$\elem=\bigotimes^m_{i=1}\elem_i$. 
	Since $\semiringshort$ is $\otimes$-idempotent, 
	we have that
	$\elem=\bigotimes^n_{i=1}\elem_i$.
	By Theorem~\ref{thm:can-model-main-gci}, \new{since $\semiringshort$ is $\otimes$-idempotent, and $\Omc$ (hence $S$) does not contain any GCI with $\top$ as left-hand side} $S^{\semiringshort}\models (C\sqsubseteq D,\elem)$.
	Since $S^{\semiringshort}$ is a subset of $\Omc^{\semiringshort}$,
	by the semantics of $\semiringshort$-annotated $\ELHIbot$,
	we have that $\Omc^{\semiringshort}\models S^{\semiringshort}\models (C\sqsubseteq D,\elem)$, which means that $\Omc^{\semiringshort}\models   (C\sqsubseteq D,\elem)$. 
	
\new{We now consider case (i): $\alpha$ is an assertion, a BCQ, or an RI and $\semiringshort$ may not be \mbox{\timesidem}.} 
\new{If $\alpha$ is a BCQ (or an assertion),} let $\Imc_{S^\semiringshort}$ be the canonical model of $S^\semiringshort$. As in the GCI case, we can obtain a  sequence	
$((\alpha_1,\elem_1),\Imc^1_{S^{\semiringshort}}),\ldots,((\alpha_m,\elem_m),\Imc^m_{S^{\semiringshort}})$ where \new{$\Imc^j_{S^{\semiringshort}}$ is obtained from $\Imc^{j-1}_{S^{\semiringshort}}$ by applying the chase rule using $(\alpha_j,\elem_j)$} and $\Imc^m_{S^{\semiringshort}}\models (\alpha,\elem)$ with $\elem=\bigotimes^m_{i=1}\elem_i=\bigotimes^n_{i=1}\elem_i^{\new{p_i}}$ \new{where each $p_i\geq 1$ is the number of times $(\alpha_i,\elem_i)$ occurs in the sequence $((\alpha_1,\elem_1),\Imc^1_{S^{\semiringshort}}),\ldots,((\alpha_m,\elem_m),\Imc^m_{S^{\semiringshort}})$}. 
Then by Theorem~\ref{thm:can-model-main}, we obtain $S^{\semiringshort}\models (\alpha,\elem)$ and $\Omc^{\semiringshort}\models (\alpha,\elem)$. 
\new{Finally, if $\alpha$ is an RI of the form $P\sqsubseteq Q$, we proceed in the same way, using the canonical model $\Imc_{P,S}$ of $P$ and $S$ and Theorem~\ref{thm:can-model-main-ri}.}	
\end{proof}

\thsementailment*

\begin{proof}
Assume that $\Omc\not\models\alpha$, \ie there is a model $\Imc$ of \Omc such that $\Imc\not\models\alpha$. By Claim (\ref{claim-add-annot}) of Lemma \ref{lem:relationship-annotated-standard-models}, there exists a model  $\Imc^\semiringshort$ of $\Omc^\semiringshort$ that coincides with $\Imc$ on its non-annotated part. There is no $\elem\in\semiringset$ such that $\Imc^\semiringshort\models (\alpha,\elem)$ (otherwise we would contradict the fact that $\Imc\not\models\alpha$). It follows that there is no $\elem\in\semiringset$ such that $\Omc^\semiringshort\models (\alpha,\elem)$, thus $\Pmc(\alpha,\Omc^\semiringshort) = \zero$. 

Assume that $\semiringshort$ is positive,  $\Pmc(\alpha,\Omc^\semiringshort) = \zero$ \new{and either (i) $\alpha$ is an assertion, a BCQ, or an RI or (ii) $\semiringshort$ is \timesidem and $\Omc$ does not contain any GCI with $\top$ as left-hand side}. 
 Since $\semiringshort$ is positive \new{and complete (or $\omega$-complete with $\semiringset$ countable)}, $\Pmc(\alpha,\Omc^\semiringshort) = \zero$ means that ($\dagger$) there is no $\elem\in\semiringset$ such that $\Omc^\semiringshort\models (\alpha,\elem)$ and $\elem\neq \zero$. \new{Indeed, by ($\omega$-)completeness, for every $\elem_0$ such that $\Omc^\semiringshort\models (\alpha,\elem_0)$, it holds that $\Pmc(\alpha,\Omc^\semiringshort) = \bigoplus_{\Omc^\semiringshort\models (\alpha,\elem)}\elem=\elem_0\oplus \bigoplus_{\Omc^\semiringshort\models (\alpha,\elem), \elem\neq\elem_0}\elem$, so by positivity, $\elem_0=\zero$.}  
Assume for a contradiction that $\Omc\models\alpha$.
Then, there is a minimal subset $S^\semiringshort=\{(\alpha_1, \elem_1), \ldots, (\alpha_n, \elem_n)\}$  of $\Omc^\semiringshort$ such that
$S\models \alpha$.
By Lemma~\ref{cl:auxcancisubset},  
$\Omc^\semiringshort\models (\alpha,\bigotimes^n_{i=1}\elem_i^{\new{p_i}})$ \new{with all $p_i$'s greater or equal to $1$}. 
Since $\semiringshort$ is positive and every $\elem_i$ is different from $\zero$ by definition of annotated ontologies, then $\bigotimes^n_{i=1}\elem_i^{\new{p_i}}\neq\zero$, 
which contradicts ($\dagger$). 
\end{proof}

\thsemalgebraconsquery* 
\begin{proof}
Let $\Imc_{\Omc^\semiringshort}$ be the canonical model of $\Omc^\semiringshort$ (note that $\Omc^\semiringshort$ is satisfiable since it contains only assertions). 
By Theorem \ref{thm:can-model-main}, for every $\elem\in K$, we have
$\Omc^\semiringshort\models (q,\elem)$ iff $\Imc_{\Omc^\semiringshort}\models (q,\elem)$. 
Hence $\Pmc(q, \Omc^\semiringshort)= \bigoplus_{ \Omc^{\semiringshort} \models(q,\elem)} \elem= \bigoplus_{\Imc_{\Omc^\semiringshort} \models(q,\elem)}\elem$. 
By definition, $\provdb(q,\Omc^\semiringshort)=\bigoplus_{\pi\in\Pi(q,\Omc) }\bigotimes_{P(\vec{t})\in q}\lambda(\pi(P(\vec{t})))$, where $\Pi(q,\Omc)$ is the set of all matches of $q$ in $\Omc$. 
We prove that
\[
\{\elem\mid  \Imc_{\Omc^\semiringshort} \models(q,\elem)\}= 
	\{\bigotimes_{P(\vec{t})\in q} \lambda(\pi(P(\vec t))) \mid \pi\in\Pi(q,\Omc)\}. 
\]
The result follows immediately from this and the idempotency of $\oplus$. 

Since $\Omc^\semiringshort$ contains only assertions, by construction of $\Imc_{\Omc^\semiringshort}$, it holds that $A^{\Imc_{\Omc^\semiringshort}}=\{(a,\lambda(A(a)))\mid A(a)\in\Omc\}$ for every $A\in\NC$ and $R^{\Imc_{\Omc^\semiringshort}}=\{(a,b,\lambda(R(a,b)))\mid R(a,b)\in\Omc\}$ for every $R\in\NR$. Let $\ext{q}$ be the extended version of $q$.

\noindent\textbf{[``$\subseteq$'']} Let $\elem$ be s.t.\ $\Imc_{\Omc^\semiringshort} \models(q,\elem)$. There is a match 
$\pi$ of $\ext{q}$ in $\Imc_{\Omc^\semiringshort}$ s.t.\ $\elem=\bigotimes_{P(\vec{t},t)\in \ext{q}} \pi(t)$. It is easy to see that $\pi\in \Pi(q,\Omc)$, and that for every $P(\vec{t})\in q$ with $P(\vec{t},t)\in\ext{q}$, $\lambda(\pi(P(\vec{t})))=\pi(t)$. 
Hence  $\elem\in\{\bigotimes_{P(\vec{t})\in q} \lambda(\pi(P(\vec t))) \mid \pi\in\Pi(q,\Omc)\}$. 

\noindent\textbf{[``$\supseteq$'']} Given a match $\pi\in\Pi(q,\Omc)$, the function $\pi'$ which for every $P(\vec{t},t)\in\ext{q}$ 
maps $\vec{t}$ to $\pi(\vec{t})$ and $t$ to $\lambda(\pi(P(\vec{t})))$ is a match of $\ext{q}$ in $\Imc_{\Omc^\semiringshort}$. Thus, for every match $\pi\in\Pi(q,\Omc)$, it holds that $\bigotimes_{P(\vec{t})\in q} \lambda(\pi(P(\vec t)))\in \{\elem\mid  \Imc_{\Omc^\semiringshort} \models(q,\elem)\}$. 
\end{proof}

\thsemalgebracons*
\begin{proof}
Let $\Dmc^\semiringshort=\tup{\Dmc,\lambda}$ be the annotated database corresponding to the assertions in $\Omc^\semiringshort$ and let $\{C_i\sqsubseteq A\mid 1\le i \le n\}$ be the set of all GCIs in $\Omc$ and $\Pi(q_{C_i},\Dmc)$ be the set of all matches of the query $q_{C_i}(a)$ in $\Dmc$; $1\le i\le n$. 
Let $\Imc_{\Omc^\semiringshort}$ be the canonical model of $\Omc^\semiringshort$  (note that $\Omc^\semiringshort$ is satisfiable since it does not feature $\bot$). 
Since $\Omc^\semiringshort$ contains only assertions and GCIs of the form $C_i\sqsubseteq A$ labelled with $\one$ where $C_i$ does not contain $A$, by construction of $\Imc_{\Omc^\semiringshort}$, it holds that
\begin{itemize}
\item $B^{\Imc_{\Omc^\semiringshort}}=\{(b,\lambda(B(b)))\mid B(b)\in\Dmc\}$, for every $B\in\NC\setminus\{A\}$;
\item  $R^{\Imc_{\Omc^\semiringshort}}=\{(b,c,\lambda(R(b,c)))\mid R(b,c)\in\Dmc\}$, for every $R\in\NR$; and
\item $A^{\Imc_{\Omc^\semiringshort}}=\{(b,\lambda(
A(b)))\mid A(b)\in\Dmc\}\cup\bigcup_{i=1}^n C_i^{\Imc_{\Omc^\semiringshort}}$. 
\end{itemize}
By Lemma~\ref{lem:complex-concept-query-inter},  
$\{\elem\mid (d, \elem)\in {C_i}^\Imc\}=\{\bigotimes_{P(\vec{t},t)\in \ext{q_{C_i}}(x)} \pi(t) \mid \pi\in\nu_\Imc(\ext{q_{C_i}}(x)), \pi(x)=d\}$, where $\ext{q_{C_i}}(x)$ is the extended version of ${q_{C_i}}(x)$ ($1\leq i\leq n$). 
Hence, 
 $$A^{\Imc_{\Omc^\semiringshort}}=\{(b,\lambda(
A(b)))\mid A(b)\in\Dmc\}\cup\bigcup_{i=1}^n \{(b,\bigotimes_{P(\vec{t},t)\in \ext{q_{C_i}}(b)} \pi(t)) \mid \pi\in\nu_{\Imc_{\Omc^\semiringshort}}(\ext{q_{C_i}}(b)) \}.$$

It follows that  
\begin{align*}
\{\elem\mid  \Imc_{\Omc^\semiringshort} \models(A(a),\elem)\}= &
	\{\lambda(
A(a))\mid A(a)\in\Dmc\}\cup\bigcup_{i=1}^n \left\{\bigotimes_{P(\vec{t},t)\in \ext{q_{C_i}}(a)} \pi(t)) \mid \pi\in\nu_{\Imc_{\Omc^\semiringshort}}(\ext{q_{C_i}}(a))\right\}\\
=&  \left\{\lambda(\pi(A(a))) \mid \pi\in\Pi(A(a),\Dmc)\right\} \cup
	\bigcup_{i=1}^n\left\{\bigotimes_{P(\vec{t})\in q_{C_i}(a)} \lambda(\pi(P(\vec t))) \mid \pi\in\Pi(q_{C_i}(a),\Dmc)\right\}.
\end{align*}
As in the proof of Theorem \ref{th:sem-algebra-cons-query}, $\Pmc(A(a), \Omc^\semiringshort)=  \bigoplus_{\Imc_{\Omc^\semiringshort} \models(A(a),\elem)}\elem$ (by Theorem \ref{thm:can-model-main}) and the result follows immediately from the equality above and idempotency of $\oplus$. 
\end{proof}

\thsemconsdatalogSAM*

\begin{proof}
We denote by $\ext{q}$ the extended version of $q$. We show that $$\bigcap_{\Imc\models \Omc^{\semiringshort}} \p{\Imc}{\ext{q}}=\bigcap_{(I,\mu^I)\models (\Sigma,\Dmc^\semiringshort)} \mu^I(\mn{goal}),$$ so that  
$$\Pmc(q, \Omc^\semiringshort) = \bigoplus_{\elem\in\bigcap_{\Imc\models \Omc^{\semiringshort}}\p{\Imc}{\ext{q}}} \elem= \bigoplus_{\elem\in \bigcap_{(I,\mu^I)\models (\Sigma,\Dmc^\semiringshort)} \mu^I(\mn{goal})} \elem=\provdat^{\texttt{SAM}}(\Sigma,\Dmc^\semiringshort,\mn{goal}).$$

\noindent\textbf{[``$\supseteq$'']}  For every $\Imc$ such that $\Imc\models \Omc^{\semiringshort}$, we define 
$f(\Imc)=(I,\mu^I)$ where 
\begin{align*}
I= {} & \{A(a)\mid \exists \elem,\Imc\models (A(a),\elem)\}\cup\{R(a,b)\mid \exists \elem,\Imc\models (R(a,b),\elem)\}\cup {} \\
	& \{F_\alpha\mid \alpha\text{ GCI or RI of }\Omc\}\cup\{\mn{goal}\mid \Imc\models \ext{q}\},
\end{align*}
and 
$\mu^I(\alpha)=\p{\Imc}{\alpha}$ if $\alpha$ is an assertion, 
$\mu^I(F_\alpha)=\{\lambda(\alpha)\}$ if $\alpha$ is a GCI or an RI, and 
$\mu^I(\mn{goal})=\p{\Imc}{\ext{q}}=\{\bigotimes_{P(\vec{t},t)\in \ext{q}} \pi(t) \mid \pi\in\nu_\Imc(\ext{q})\}$. 
We show that $f(\Imc)\models (\Sigma,\Dmc^\semiringshort)$. 
\begin{enumerate}
\item Since $\Imc$ satisfies every annotated assertion of $\Omc^\semiringshort$, it is easy to check that $\Dmc\subseteq I$, that for every $\alpha\in\Dmc\cap\Omc$, $\lambda'(\alpha)=\lambda(\alpha)$ belongs to $\mu^I(\alpha)=\p{\Imc}{\alpha}$ and that for every $F_\alpha\in\Dmc\setminus\Omc$, $\lambda'(F_\alpha)=\lambda(\alpha)$ belongs to $\mu^I(F_\alpha)=\{\lambda(\alpha)\}$.
\item Let $\phi(\vec{x},\vec{y} ) \rightarrow H(\vec{x})\in\Sigma$ be such that there is a homomorphism $h:\phi(\vec{x},\vec{y})\mapsto I$. 
We consider only the case where $\phi(\vec{x},\vec{y} ) \rightarrow H(\vec{x})$ corresponds to an annotated GCI $(C\sqsubseteq A,\elemb)\in\Omc^\semiringshort$; the case of role inclusions is analogous: $\phi(\vec{x},\vec{y} ) \rightarrow H(\vec{x})$ is of the form $q_C(x)\wedge F_{C\sqsubseteq A} \rightarrow A(x)$. 
Assume that $h(q_C(x))=\beta_1\wedge\dots\wedge\beta_n$. 
By construction of $I$ and $\mu^I$, for every $1\leq i\leq n$, since $\beta_i\in I$, $\mu^I(\beta_i)\neq\emptyset$ and for every $\elem_i\in\mu^I(\beta_i)$, $\Imc\models (\beta_i,\elem_i)$. 
By Lemma \ref {lem:complex-concept-query-inter}, we have 
$$\{\bigotimes_{P(\vec{t},t)\in \ext{q_C}(x)} \pi(t) \mid \pi\in\nu_\Imc(\ext{q_C}(x)), \pi(x)=h(x)^\Imc\}=\{\elem\mid (h(x)^\Imc, \elem)\in C^\Imc\},$$ 
where $\ext{q_C}$ is the extended version of $q_C$. 
For every $(\elem_1,\dots,\elem_n)\in \mu^I(\beta_1)\times\dots\times\mu^I(\beta_n)$, there exists 
$\pi\in\nu_\Imc(\ext{q_C}(x))$ such that $\pi$ maps $\ext{q_C}(x)$ to $\{(\beta_1,\elem_1),\dots,(\beta_n,\elem_n)\}$, 
so 
 $(h(x)^\Imc,\elem_1\otimes\dots\otimes\elem_n)\in C^\Imc$.  
As $\Imc\models (C\sqsubseteq A,\elemb)$, we get $(h(x)^\Imc,\elem_1\otimes\dots\otimes\elem_n\otimes\elem')\in A^\Imc$. It then follows, since $\mu^I(F_{C\sqsubseteq A})=\{\elemb\}$ by definition of $\mu^I$, that 
$h(A(x))\in I$ and 
$\{\bigotimes_{i=1}^{n+1} \elem_i \mid (\elem_1,\dots,\elem_n,\elem_{n+1})\in\mu^I(\beta_1)\times\dots\times\mu^I(\beta_n)\times\mu^I(F_{C\sqsubseteq A})\}\subseteq \mu^I(h(A(x)))$. 
\end{enumerate}
Hence,
$$ 
\bigcap_{\Imc\models \Omc^{\semiringshort}} \p{\Imc}{\ext{q}}\ =\bigcap_{\Imc\models \Omc^{\semiringshort}, f(\Imc)=(I,\mu^I)} \mu^I(\mn{goal})\ \supseteq\  \bigcap_{(I,\mu^I)\models (\Sigma,\Dmc^\semiringshort)} \mu^I(\mn{goal}).$$

\noindent\textbf{[``$\subseteq$'']} For every $(I,\mu^I)$ s.t.\ $(I,\mu^I)\models (\Sigma,\Dmc^\semiringshort)$, we define 
$g(I,\mu^I)=\Imc=(\Delta^\Imc,K, \cdot^\Imc)$ by setting $\Delta^\Imc=\NI$, $a^\Imc=a$ for every $a\in\NI$, 
$A^\Imc=\{(a,\elem)\mid A(a)\in I, \elem\in\mu^I(A(a))\}$ for $A\in\NC$, and 
$R^\Imc=\{(a,b,\elem)\mid R(a,b)\in I, \elem\in\mu^I(R(a,b))\}$ for $R\in\NR$. 
We show that $g(I,\mu^I)\models \Omc^{\semiringshort}$ and that $\p{\Imc}{\ext{q}}\subseteq \mu^I(\mn{goal})$. 

\begin{itemize}
\item Since $\Dmc\subseteq I$ and for every $\alpha\in\Dmc$, $\lambda'(\alpha)\in\mu^I(\alpha)$, since $\lambda'(\alpha)=\lambda(\alpha)$ for every assertion of $\Omc$, it is easy to check that $\Imc$ satisfies every annotated assertion of $\Omc^\semiringshort$. 
\item Let $(C\sqsubseteq A,\elemb)$ be an annotated GCI of $\Omc^\semiringshort$ (the case of RIs is similar). 
Let $(e,\elem)\in C^\Imc$ and $\ext{q_C}$ be the extended version of $q_C$. 
By Lemma \ref{lem:complex-concept-query-inter}, we get that 
$\{\bigotimes_{P(\vec{t},t)\in \ext{q_C}(x)} \pi(t) \mid \pi\in\nu_\Imc(\ext{q_C}(x)), \pi(x)=e\}=\{\elem\mid (e, \elem)\in C^\Imc\}$, 
so there exists some $\pi\in\nu_\Imc(\ext{q_C}(x))$ such that $\pi(x)=e$ and $\bigotimes_{P(\vec{t},t)\in \ext{q_C}(x)} \pi(t) =\elem$. 
Let $(\beta_1,\elem_1),\dots,(\beta_n,\elem_n)$ be the image of $\ext{q_C}(x)$ by $\pi$ and $h$ be the restriction of $\pi$ to the non-annotated part of $\Imc$. 
By construction, $\elem=\elem_1\otimes\dots\otimes\elem_n$ and 
for $1\leq i\leq n$, $\beta_i\in I$, $\elem_i\in\mu^I(\beta_i)$,  and $h$ is a homomorphism from $q_C$ to $I$ such that $h(q_C(x))=\beta_1\wedge\dots\wedge\beta_n$ and $h(x)=e$.

Since $q_C(x)\wedge F_{C\sqsubseteq A}\rightarrow A(x)$ is in $\Sigma$ and $h$ is a homomorphism from $q_C(x)\wedge F_{C\sqsubseteq A}$ to $I$, since $(I,\mu^I)\models (\Sigma,\Dmc^\semiringshort)$, it follows that $A(e)\in I$ and $\elem_1\otimes\dots\otimes\elem_n\otimes\elemb$ is in $\mu^I(A(e))$, \ie $(e,\elem\otimes\elemb)\in A^\Imc$. Hence $\Imc\models (C\sqsubseteq A,\elemb)$. 
\item Let $\elem\in\p{\Imc}{\ext{q}}=\{\bigotimes_{P(\vec{t},t)\in \ext{q}} \pi(t) \mid \pi\in\nu_\Imc(\ext{q})\}$. There is a mapping $\pi$ from $\ext{q}$ to $\Imc$ such that $\elem=\bigotimes_{P(\vec{t},t)\in \ext{q}} \pi(t)$. By construction of $\Imc$, this means that for each $P(\vec{t},t)\in \ext{q}$, $\pi(P(\vec{t}))\in I$ and $\pi(t)\in\mu^I(\pi(P(\vec{t})))$. 
Since $q \rightarrow \mn{goal}$ is in $\Sigma$ and $\pi$ is a homomorphism from $q$ to $I$, since $(I,\mu^I)\models (\Sigma,\Dmc^\semiringshort)$, it follows that $\mn{goal}\in I$ and $\bigotimes_{P(\vec{t},t)\in \ext{q}} \pi(t)\in\mu^I(\mn{goal})$. 
Hence 
$\elem\in\mu^I(\mn{goal})$ and $\p{\Imc}{\ext{q}}\subseteq\mu^I(\mn{goal})$.
\end{itemize}
Hence
$$ \bigcap_{\Imc\models \Omc^{\semiringshort}} \p{\Imc}{\ext{q}}\ \subseteq \bigcap_{(I,\mu^I)\models (\Sigma,\Dmc^\semiringshort), g(I,\mu^I)=\Imc} \p{\Imc}{\ext{q}}\ \subseteq \bigcap_{(I,\mu^I)\models (\Sigma,\Dmc^\semiringshort)} \mu^I(\mn{goal}).$$
\end{proof}

\thsemcommuthom*
\begin{proof}
	Let $\Omc$ be a satisfiable
	$\ELHIbot$ ontology with annotated versions $\Omc^{\semiringshort_1}=\tup{\Omc,\lambda}$ and $\Omc^{\semiringshort_2}=\tup{\Omc,h\circ\lambda}$, 
	where 
	$\semiringshort_1 = (K_1, \oplus ,\otimes, \zero, \one)$ and $\semiringshort_2 = (K_2, + ,\cdot, 0, 1)$
	are commutative additively idempotent \new{$\omega$-complete} semirings.
		\[\Omc^{\semiringshort_1}=\{(\alpha,\lambda(\alpha))\mid \alpha\in\Omc\}\text{ and }
	\Omc^{\semiringshort_2}=\{(\alpha,h(\lambda(\alpha)))\mid \alpha\in\Omc\}.\]
	 Recall that since $\Omc$ is satisfiable, by Lemma \ref{lem:relationship-annotated-standard-models}, $\Omc^{\semiringshort_1}$ and $\Omc^{\semiringshort_2}$ are satisfiable as well. 
	Moreover, since $h$ is 
	a  semiring homomorphism, 
	$h(\zero) = 0$, $h(\one) = 1$, and for all $a,b\in K_1$, $h(a \oplus b) = h(a) + h(b)$ and $h(a \otimes b) = h(a) \cdot h(b)$,
	and \new{since this homomorphism is $\omega$-complete, $h(\bigoplus_{i\in I}a_i)=\bigplus_{i\in I}h(a_i)$ for every countable index set $I$. We prove the theorem using two claims.}
	\begin{claim}\label{cl:hom-aux}
		Let	$\Omc^{\semiringshort_1}=\tup{\Omc,\lambda}$ and $\Omc^{\semiringshort_2}=\tup{\Omc,h\circ\lambda}$
		be as in this proof.
\begin{itemize}
\item For every assertion or BCQ $\alpha$, and every $\elem\in K_1$, if 
		$\Omc^{\semiringshort_1}\models (\alpha,\elem)$ then
		$\Omc^{\semiringshort_2}\models (\alpha,h(\elem))$.
\item \new{For every RI $\alpha$ whose left-hand side is satisfiable \wrt $\Omc$, and every $\elem\in K_1$, if 
		$\Omc^{\semiringshort_1}\models (\alpha,\elem)$ then
		$\Omc^{\semiringshort_2}\models (\alpha,h(\elem))$.}		
\item If $\semiringshort_1$ and $\semiringshort_2$ are also multiplicatively idempotent \new{and $\Omc$ does not contain GCI with $\top$ as left-hand side,} then, for every GCI $\alpha$ between basic concepts \new{whose left-hand side is satisfiable \wrt $\Omc$}, for every $\elem\in K_1$, if 
		$\Omc^{\semiringshort_1}\models (\alpha,\elem)$ then
		$\Omc^{\semiringshort_2}\models (\alpha,h(\elem))$.
\end{itemize}		
	\end{claim}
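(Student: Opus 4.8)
The plan is to route everything through the canonical models and to observe that applying the homomorphism $h$ to all annotations of a canonical model of $\Omc^{\semiringshort_1}$ yields a structure that still embeds into every model of $\Omc^{\semiringshort_2}$; the two parts of the claim (assertions/BCQs, and GCIs between basic concepts under the extra \timesidem hypothesis) then follow from the same template. Since $\Omc$ is satisfiable, so are $\Omc^{\semiringshort_1}$ and $\Omc^{\semiringshort_2}$ (Lemma~\ref{lem:relationship-annotated-standard-models}), hence their canonical models are genuine models (Proposition~\ref{lem:can}) and Theorems~\ref{thm:can-model-main} and~\ref{thm:can-model-main-gci} apply. For an $\semiringshort_1$-annotated interpretation $\Imc$, write $h(\Imc)$ for the $\semiringshort_2$-annotated interpretation with the same domain and the same interpretation of individual names in which every annotation $\kappa$ occurring in a pair or triple of $\Imc$ is replaced by $h(\kappa)$. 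The first thing I would establish is a \emph{transfer lemma}, by structural induction on $\ELHI_\bot$ concepts: for every $C$ and every domain element $d$, $(d,\mu)\in C^{h(\Imc)}$ iff $(d,\kappa)\in C^{\Imc}$ for some $\kappa$ with $\mu=h(\kappa)$. The base cases are immediate, and the cases $C_1\sqcap C_2$ and $\exists P.C'$ use precisely that $h$ preserves the multiplicative identity and that $h(\kappa_1\otimes\kappa_2)=h(\kappa_1)\cdot h(\kappa_2)$.

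For assertions and BCQs, assume $\Omc^{\semiringshort_1}\models(\alpha,\elem)$. By Theorem~\ref{thm:can-model-main}, $\Imc_{\Omc^{\semiringshort_1}}\models(\alpha,\elem)$. Set $\Jmc:=h(\Imc_{\Omc^{\semiringshort_1}})$. Using the transfer lemma I would check that $\Jmc\models(\alpha,h(\elem))$: for an assertion this is immediate from the definition of $h(\cdot)$; for a BCQ one composes a match of the extended query $\ext{\alpha}$ in $\Imc_{\Omc^{\semiringshort_1}}$ realizing $\elem$ with $h$ on its value components, obtaining a match in $\Jmc$ whose product of value components equals $h(\bigotimes_i\kappa_i)=\bigotimes_i h(\kappa_i)=h(\elem)$. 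It then remains to build, for every model $\Imc$ of $\Omc^{\semiringshort_2}$, a homomorphism $g:\Jmc\to\Imc$; given this, Lemmas~\ref{lem:hom} and~\ref{lem:homcq} applied to $g$ and to $\Jmc\models(\alpha,h(\elem))$ give $\Imc\models(\alpha,h(\elem))$, so $\Omc^{\semiringshort_2}\models(\alpha,h(\elem))$. The homomorphism $g$ is built exactly as in the proof of Lemma~\ref{lem:homomo}, by induction on the stages $\Imc_n$ of the construction of $\Imc_{\Omc^{\semiringshort_1}}$: the base case uses that $\Imc$ satisfies every annotated assertion of $\Omc^{\semiringshort_2}$, which by definition of $\Omc^{\semiringshort_2}$ carries the annotation $h(\lambda(\cdot))$; in the inductive step a chase step triggered by an axiom $(\beta,\lambda(\beta))\in\Omc^{\semiringshort_1}$ corresponds to the axiom $(\beta,h(\lambda(\beta)))\in\Omc^{\semiringshort_2}$, and the transfer lemma together with multiplicativity of $h$ shows that the tuple newly added on the $\Jmc$-side is sent by the current homomorphism to a tuple of $\Imc$ (for the existential rule one picks a witness in $\Imc$, as in Lemma~\ref{lem:homomo}).

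For a GCI $C\sqsubseteq D$ between basic concepts satisfiable w.r.t.\ $\Omc$, hence w.r.t.\ $\Omc^{\semiringshort_1}$ and $\Omc^{\semiringshort_2}$, I would instead use the canonical models $\Imc_{C,\Omc^{\semiringshort_i}}$ of $C$ and $\Omc^{\semiringshort_i}$ together with Theorem~\ref{thm:can-model-main-gci}, which is available exactly because $\semiringshort_1$ and $\semiringshort_2$ are \timesidem. From $\Omc^{\semiringshort_1}\models(C\sqsubseteq D,\elem)$ and Theorem~\ref{thm:can-model-main-gci} we get $(d_C,\elem)\in D^{\Imc_{C,\Omc^{\semiringshort_1}}}$, hence $(d_C,h(\elem))\in D^{h(\Imc_{C,\Omc^{\semiringshort_1}})}$ by the transfer lemma. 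Mimicking the proof of Lemma~\ref{lem:homomobasic} (and using that $h$ preserves the multiplicative identity, so that the seed $\Imc_C$ used for $\Omc^{\semiringshort_1}$ and the one used for $\Omc^{\semiringshort_2}$ correspond under $h$), I would show that $h(\Imc_{C,\Omc^{\semiringshort_1}})$ maps homomorphically into every model $\Imc$ of $\Omc^{\semiringshort_2}$ with $(d,\one)\in C^{\Imc}$ by a homomorphism sending $d_C$ to $d$. Applying this with $\Imc=\Imc_{C,\Omc^{\semiringshort_2}}$ and $d=d_C$ and using Lemma~\ref{lem:hom} yields $(d_C,h(\elem))\in D^{\Imc_{C,\Omc^{\semiringshort_2}}}$, which by Theorem~\ref{thm:can-model-main-gci} applied to $\semiringshort_2$ means exactly $\Omc^{\semiringshort_2}\models(C\sqsubseteq D,h(\elem))$.

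The main obstacle is the recurring ingredient of the last two paragraphs: showing that the $h$-image of a canonical model retains the universal-mapping property of canonical models, i.e.\ re-running the inductive homomorphism constructions of Lemmas~\ref{lem:homomo} and~\ref{lem:homomobasic} while keeping track of how the $\otimes$-products accumulated along the chase are carried over by $h$. Conceptually this is light---the only algebraic facts used are that $h$ preserves the multiplicative identity and is multiplicative---but it is where all the bookkeeping sits, and it is also what forces the \timesidem hypothesis in the GCI case (so that Theorem~\ref{thm:can-model-main-gci} is available on both sides). Finally, as the remark following the statement of Theorem~\ref{th:sem-commut-hom} already shows, the satisfiability of $\Omc$ cannot be dropped: it is precisely what turns the canonical models into models and makes Theorems~\ref{thm:can-model-main} and~\ref{thm:can-model-main-gci} applicable.
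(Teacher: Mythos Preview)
Your argument is correct but takes a longer route than the paper. The paper's proof is essentially three lines: from $\Omc^{\semiringshort_1}\models(\alpha,\elem)$ it gets $\Imc_{\Omc^{\semiringshort_1}}\models(\alpha,\elem)$ via Theorem~\ref{thm:can-model-main}, then observes directly that the canonical model $\Imc_{\Omc^{\semiringshort_2}}$ ``has the same construction as $\Imc_{\Omc^{\semiringshort_1}}$, except for the annotations $\elem$ which are $h(\elem)$''---in your notation, $\Imc_{\Omc^{\semiringshort_2}}$ simply \emph{is} $h(\Imc_{\Omc^{\semiringshort_1}})$ (for a synchronized choice of fresh elements in the chase). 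This gives $\Imc_{\Omc^{\semiringshort_2}}\models(\alpha,h(\elem))$ immediately, and a second application of Theorem~\ref{thm:can-model-main} closes the argument. The GCI case is identical with Theorem~\ref{thm:can-model-main-gci} in place of Theorem~\ref{thm:can-model-main}.

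You instead treat $\Jmc:=h(\Imc_{\Omc^{\semiringshort_1}})$ as a new object and re-derive its universal mapping property by replaying the inductive construction of Lemma~\ref{lem:homomo} (resp.\ Lemma~\ref{lem:homomobasic}) from scratch, then conclude via Lemmas~\ref{lem:hom} and~\ref{lem:homcq}. That works, and your transfer lemma is exactly the right bookkeeping device, but it duplicates work already packaged inside Theorems~\ref{thm:can-model-main} and~\ref{thm:can-model-main-gci}: once you recognize $\Jmc$ as $\Imc_{\Omc^{\semiringshort_2}}$, those theorems hand you the conclusion without rebuilding any homomorphisms. The payoff of your route is that it makes fully explicit the (otherwise tacit) step-by-step correspondence between the two chases and isolates precisely where multiplicativity of $h$ is used; the payoff of the paper's route is brevity.
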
	
	\begin{proof}[Proof of the claim]\renewcommand{\qedsymbol}{}
	First assume  $\alpha$ is an assertion or BCQ. 
	Let $\Imc_{\Omc^{\semiringshort_1}}$
	and $\Imc_{\Omc^{\semiringshort_2}}$ be the canonical models of 
	$\Omc^{\semiringshort_1}$ and $\Omc^{\semiringshort_2}$, respectively (see Section~\ref{sec:canonical-model} for the definition). 	
	By Theorem~\ref{thm:can-model-main},  if 
	$\Omc^{\semiringshort_1}\models (\alpha,\elem)$ then
	$\Imc_{\Omc^{\semiringshort_1}}\models (\alpha,\elem)$.
	By definition of $\Omc^{\semiringshort_2}$ \new{and the fact that $h$ is a semiring homomorphism, so that $h(a)\cdot h(b)=h(a\otimes b)$}, the canonical model $\Imc_{\Omc^{\semiringshort_2}}$
	has the same construction as $\Imc_{\Omc^{\semiringshort_1}}$, except
	for the annotations: \new{one can show by induction that if $(\vec{e},\chi)\in E^{\Imc_{\Omc^{\semiringshort_1}}}$ for some concept or role name $E$, then $(\vec{e},h(\chi))\in E^{\Imc_{\Omc^{\semiringshort_2}}}$. 
	It follows that} $\Imc_{\Omc^{\semiringshort_2}}\models (\alpha,h(\elem))$
	and, by Theorem~\ref{thm:can-model-main},   
	$\Omc^{\semiringshort_2}\models (\alpha,h(\elem))$.
	
	\new{The proofs for the cases where $\alpha$ is an RI of the form $P\sqsubseteq Q$ or a GCI of the form $C\sqsubseteq D$ and fulfills the conditions stated in the claim are similar to the case where $\alpha$ is a BCQ, except that we use the canonical models $\Imc_{P,\Omc^{\semiringshort_i}}$ of $P$ and $\Omc^{\semiringshort_i}$ or $\Imc_{C,\Omc^{\semiringshort_i}}$ of $C$ and $\Omc^{\semiringshort_i}$ respectively (see Section~\ref{sec:canonical-model} for the definition) and Theorems~\ref{thm:can-model-main-ri} and \ref{thm:can-model-main-gci} respectively. Note that the conditions of the claim are the same as the conditions to apply these theorems.}	
	\end{proof}
	\begin{claim}\label{cl:return}
		Let	$\Omc^{\semiringshort_1}=\tup{\Omc,\lambda}$ and $\Omc^{\semiringshort_2}=\tup{\Omc,h\circ\lambda}$
		be as in this proof.		
\begin{itemize}
\item For every assertion or BCQ $\alpha$, for every $\elem'\in K_2$, if $\Omc^{\semiringshort_2}\models (\alpha, \elem')$ then there exists $\elem\in K_1$ such that $h(\elem)=\elem'$ and $\Omc^{\semiringshort_1}\models (\alpha,\elem)$.
\item \new{For every RI $\alpha$ whose left-hand side is satisfiable \wrt $\Omc$, for every $\elem'\in K_2$, if $\Omc^{\semiringshort_2}\models (\alpha, \elem')$ then there exists $\elem\in K_1$ such that $h(\elem)=\elem'$ and $\Omc^{\semiringshort_1}\models (\alpha,\elem)$.}	
\item If $\semiringshort_1$ and $\semiringshort_2$ are also multiplicatively idempotent \new{and $\Omc$ does not contain GCI with $\top$ as left-hand side,} then, for every GCI $\alpha$ between basic concepts \new{whose left-hand side is satisfiable \wrt $\Omc$}, for every $\elem'\in K_2$, if $\Omc^{\semiringshort_2}\models (\alpha, \elem')$ then there exists $\elem\in K_1$ such that $h(\elem)=\elem'$ and $\Omc^{\semiringshort_1}\models (\alpha,\elem)$.
\end{itemize}		
	\end{claim}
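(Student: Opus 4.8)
The plan is to prove both items of Claim~\ref{cl:return} by passing to canonical models, exploiting the fact that the canonical model of $\Omc^{\semiringshort_2}$ is obtained from that of $\Omc^{\semiringshort_1}$ by applying the homomorphism $h$ to all annotations. Since $\Omc^{\semiringshort_1}$ and $\Omc^{\semiringshort_2}$ share the same underlying $\ELHI_\bot$ ontology $\Omc$, I would fix a single fair sequence of rule applications and build $\Imc_{\Omc^{\semiringshort_1}}=\bigcup_{n\ge 0}\Imc_n^{1}$ and $\Imc_{\Omc^{\semiringshort_2}}=\bigcup_{n\ge 0}\Imc_n^{2}$ using that same sequence, identifying the fresh elements introduced by the existential rule. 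A routine induction on $n$, using $h(\zero)=0$, $h(\one)=1$ and $h(\elema\otimes\elemb)=h(\elema)\cdot h(\elemb)$, then yields the \emph{correspondence}: for every concept name $A$ and every $d\in\Delta^{\Imc_n^{1}}$, $\{\,h(\elem)\mid (d,\elem)\in A^{\Imc_n^{1}}\,\}=\{\,\kappa'\mid (d,\kappa')\in A^{\Imc_n^{2}}\,\}$, and analogously for role names. Since every tuple of a canonical model appears at some finite stage, taking unions over $n$ gives the same identity for $\Imc_{\Omc^{\semiringshort_1}}$ and $\Imc_{\Omc^{\semiringshort_2}}$; running the same argument with $\Imc_0$ started from $\Imc_C$ gives it for $\Imc_{C,\Omc^{\semiringshort_1}}$ and $\Imc_{C,\Omc^{\semiringshort_2}}$ whenever $C$ is a basic concept satisfiable w.r.t.\ $\Omc$ (these canonical models exist, since satisfiability is independent of annotations by Lemma~\ref{lem:relationship-annotated-standard-models}).

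For the first item, suppose $\Omc^{\semiringshort_2}\models(\alpha,\elem')$ with $\alpha$ an assertion or BCQ. By Theorem~\ref{thm:can-model-main}, $\Imc_{\Omc^{\semiringshort_2}}\models(\alpha,\elem')$, so there is a match $\pi$ of $\ext{\alpha}$ in $\Imc_{\Omc^{\semiringshort_2}}$ with $\elem'=\bigotimes_{P(\vec t,t)\in\ext{\alpha}}\pi(t)$ (product in $\semiringshort_2$). By the correspondence, for each atom $P(\vec t,t)$ of $\ext{\alpha}$ I can pick a witnessing tuple in $\Imc_{\Omc^{\semiringshort_1}}$ with the same domain entries and an annotation $\elem_{P(\vec t,t)}\in K_1$ with $h(\elem_{P(\vec t,t)})=\pi(t)$; since the two canonical models have the same domain, these witnesses assemble into a match $\pi'$ of $\ext{\alpha}$ in $\Imc_{\Omc^{\semiringshort_1}}$. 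Putting $\elem:=\bigotimes_{P(\vec t,t)\in\ext{\alpha}}\elem_{P(\vec t,t)}$ (product in $\semiringshort_1$) we get $\Imc_{\Omc^{\semiringshort_1}}\models(\alpha,\elem)$, hence $\Omc^{\semiringshort_1}\models(\alpha,\elem)$ by Theorem~\ref{thm:can-model-main}, and $h(\elem)=\prod_{P(\vec t,t)}h(\elem_{P(\vec t,t)})=\prod_{P(\vec t,t)}\pi(t)=\elem'$ because $h$ is a semiring homomorphism.

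For the second item, assume in addition that $\semiringshort_1,\semiringshort_2$ are $\otimes$-idempotent and let $\alpha$ be a GCI $C\sqsubseteq D$ between basic concepts satisfiable w.r.t.\ $\Omc$. If $\Omc^{\semiringshort_2}\models(C\sqsubseteq D,\elem')$ then, by Theorem~\ref{thm:can-model-main-gci}, $(d_C,\elem')\in D^{\Imc_{C,\Omc^{\semiringshort_2}}}$; since $D$ is basic, the correspondence for the canonical models of $C$ gives $\elem\in K_1$ with $h(\elem)=\elem'$ and $(d_C,\elem)\in D^{\Imc_{C,\Omc^{\semiringshort_1}}}$, and Theorem~\ref{thm:can-model-main-gci} again yields $\Omc^{\semiringshort_1}\models(C\sqsubseteq D,\elem)$.

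The only step demanding care is the correspondence lemma: one must genuinely run the two canonical-model constructions in lockstep --- same choice of axiom at each step, same fresh individuals --- so that the inductive step reduces to transporting a single application of $\otimes$ (resp.\ $\cdot$) through $h$; the rest is bookkeeping, and I do not expect a real obstacle beyond it. Together with Claim~\ref{cl:hom-aux}, this shows that $h$ maps $\{\elem\mid\Omc^{\semiringshort_1}\models(\alpha,\elem)\}$ onto $\{\elem'\mid\Omc^{\semiringshort_2}\models(\alpha,\elem')\}$, which, combined with $\oplus$-idempotency and the $\omega$-continuity of $h$, finishes the proof of Theorem~\ref{th:sem-commut-hom}.
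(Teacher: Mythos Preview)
Your proposal is correct and follows essentially the same approach as the paper: both reduce to the canonical models via Theorems~\ref{thm:can-model-main} and~\ref{thm:can-model-main-gci}, and both rely on the observation that $\Imc_{\Omc^{\semiringshort_2}}$ (resp.\ $\Imc_{C,\Omc^{\semiringshort_2}}$) is obtained from $\Imc_{\Omc^{\semiringshort_1}}$ (resp.\ $\Imc_{C,\Omc^{\semiringshort_1}}$) by applying $h$ to all annotations. The paper states this correspondence informally (``the canonical model $\Imc_{\Omc^{\semiringshort_2}}$ has the same construction as $\Imc_{\Omc^{\semiringshort_1}}$, except for the annotations $\elem'$ for which there exists $\elem\in K_1$ such that $h(\elem)=\elem'$''), whereas you make it explicit via the lockstep construction and an induction on stages, and you spell out the match-lifting argument for BCQs; these are welcome clarifications but not a different route.
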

	\begin{proof}[Proof of the claim]	\renewcommand{\qedsymbol}{}
	Let $\Imc_{\Omc^{\semiringshort_1}}$
	and $\Imc_{\Omc^{\semiringshort_2}}$ be the canonical models of 
	$\Omc^{\semiringshort_1}$ and $\Omc^{\semiringshort_2}$, respectively.
	Assume  $\alpha$ is an assertion or a BCQ.
	By Theorem~\ref{thm:can-model-main}, if 
	$\Omc^{\semiringshort_2}\models (\alpha,\elem')$ then
	$\Imc_{\Omc^{\semiringshort_2}}\models (\alpha,\elem')$.
	By definition of   $\Omc^{\semiringshort_2}$ \new{and the fact that $h$ is a semiring homomorphism}, the canonical model $\Imc_{\Omc^{\semiringshort_2}}$
	has the same construction as $\Imc_{\Omc^{\semiringshort_1}}$, except
	for the annotations: \new{one can show by induction that if $(\vec{e},\chi')\in E^{\Imc_{\Omc^{\semiringshort_2}}}$ for some concept or role name $E$, then there exists some $(\vec{e},\chi)\in E^{\Imc_{\Omc^{\semiringshort_1}}}$ such that $\chi'=h(\chi)$. 
	It follows that there exists $\elem\in\semiringset_1$ such that $\elem'=h(\elem)$ and} $\Imc_{\Omc^{\semiringshort_1}}\models (\alpha,\elem)$
	and, by Theorem~\ref{thm:can-model-main},   
	$\Omc^{\semiringshort_1}\models (\alpha,\elem)$.
	
		\new{The proofs for the cases where $\alpha$ is an RI of the form $P\sqsubseteq Q$ or a GCI of the form $C\sqsubseteq D$ and fulfills the conditions stated in the claim are similar to the case where $\alpha$ is a BCQ, but use Theorems~\ref{thm:can-model-main-ri} and \ref{thm:can-model-main-gci} respectively (as we did in the proof of Claim~\ref{cl:hom-aux}).} 
\end{proof}

	By Claims~\ref{cl:hom-aux} and~\ref{cl:return}, the definitions of $\Pmc(\alpha, \Omc^{\semiringshort_1})$, $\Pmc(\alpha, \Omc^{\semiringshort_2})$, \new{$\omega$-completeness of $h$} and  additive idempotency,
	\begin{align*}
		\Pmc(\alpha, \Omc^{\semiringshort_2}) = & \bigplus_{\Omc^{\semiringshort_2}\models (\alpha,\elem')} \elem' = \bigplus_{\Omc^{\semiringshort_1}\models (\alpha,\elem)} h(\elem)= h(\bigoplus_{\Omc^{\semiringshort_1}\models (\alpha,\elem)} \elem) = 	h(\Pmc(\alpha, \Omc^{\semiringshort_1})).\qedhere
	\end{align*} 
\end{proof}


\section{Proofs for Section \ref{sec:why}}

\new{Recall that in this section, we consider the case where a \why-annotated ontology $\Omc$ is annotated by a function $\lambda_\semiringVars:\Omc\mapsto\semiringVars\cup\{1\}$.} 
We will often use the canonical model of the ontology. For convenience, we recall in Figure~\ref{fig:rules-canonical-why-normal} its construction in more details for the case where $\Omc$ is a satisfiable \why-annotated ontology in \emph{normal form}.

\begin{figure}[tbh]
\framebox{\begin{minipage}{\linewidth}
$\Imc_\Omc=\bigcup_{i\geq 0}\Imc_i$ where  
$\Delta^{\Imc_0}=\NI$, $a^{\Imc_0}=a$ for every $a\in\NI$, $A^{\Imc_0}=\{(a,\nonomial)\mid (A(a),\nonomial)\in\Omc\}$ for every $A\in\NC$ and $R^{\Imc_0}=\{(a,b,\nonomial)\mid (R(a,b),\nonomial)\in\Omc\}$ for every $R\in\NR$, and $\Imc_{i+1}$ results from applying \new{the chase rule to $\Imc_i$ so that we are in one of the following cases.
\begin{enumerate}
\item[$\R_1$] $(R\sqsubseteq S, \nonomial) \in \Omc$ or $(R^-\sqsubseteq S^-, \nonomial) \in \Omc$, $(d,e, \onomial)\in R^{\Imc_i}$, and $S^{\Imc_{i+1}}=S^{\Imc_i}\cup\{(d,e, {\onomial\times \nonomial})\}$.
\item[$\R_2$] $(R\sqsubseteq S^-, \nonomial) \in \Omc$ or $(R^-\sqsubseteq S, \nonomial) \in \Omc$, $(d,e, \onomial)\in R^{\Imc_i}$, and $S^{\Imc_{i+1}}=S^{\Imc_i}\cup\{(e,d, {\onomial\times \nonomial})\}$.
\item[$\R_3$] $(A\sqsubseteq B, \nonomial) \in\Omc$, $(d, \onomial)\in A^{\Imc_i}$, and $B^{\Imc_{i+1}}=B^{\Imc_i}\cup\{(d,{\onomial\times\nonomial})\}$.
\item[$\R_4$] $(A_1\sqcap A_2\sqsubseteq B, \nonomial) \in\Omc$, $(d, \onomial_1)\in A_1^{\Imc_i}$,  $(d, \onomial_2)\in A_2^{\Imc_i}$, and $B^{\Imc_{i+1}}=B^{\Imc_i}\cup\{(d,{\onomial_1\times \onomial_2\times\nonomial})\}$.
\item[$\R_5$] $(\exists R.A\sqsubseteq B, \nonomial)\in\Omc$, $(d,e, \onomial)\in R^{\Imc_i}$, $(e, \onomial')\in A^{\Imc_i}$, and $B^{\Imc_{i+1}}=B^{\Imc_i}\cup\{(d, {\onomial\times \onomial'\times \nonomial})\}$.
\item[$\R_6$] $(\exists R^-.A\sqsubseteq B, \nonomial)\in\Omc$, $(d,e, \onomial)\in R^{\Imc_i}$, $(d, \onomial')\in A^{\Imc_i}$, and $B^{\Imc_{i+1}}=B^{\Imc_i}\cup\{(e, {\onomial\times \onomial'\times \nonomial})\}$.
\item[$\R_7$] $(A\sqsubseteq \exists R, \nonomial)\in\Omc$, $(d, \onomial)\in A^{\Imc_i}$,  $\Delta^{\Imc_{i+1}}=\Delta^{\Imc_i}\cup\{x\}$ and $R^{\Imc_{i+1}}=R^{\Imc_i}\cup\{(d, x, {\onomial\times \nonomial})\}$ where $x\notin\Delta^{\Imc_i}$ is a fresh domain element.
\item[$\R_8$] $(A\sqsubseteq \exists R^-, \nonomial)\in\Omc$,  $(d, \onomial)\in A^{\Imc_i}$, $\Delta^{\Imc_{i+1}}=\Delta^{\Imc_i}\cup\{x\}$ and $R^{\Imc_{i+1}}=R^{\Imc_i}\cup\{(x,d, {\onomial\times \nonomial})\}$ where $x\notin\Delta^{\Imc_i}$ is a fresh domain element.
\end{enumerate}}
\end{minipage}}
\caption{Construction of the canonical model of a satisfiable \why-annotated $\ELHIbot$ ontology in normal form. 
 $A,A_1,A_2\in\NC\cup\{\top\}$, $B\in\NC$, and $R,S\in \NR$ (recall that RIs and GCIs of $\Omc$ are of these forms, except negative RIs and GCIs with $\bot$ as right-hand side, which are not needed since $\Omc$ is satisfiable). \new{Since we assume that axioms of $\Omc$ are annotated with variables or $1$, all annotations are~monomials.}
}
\label{fig:rules-canonical-why-normal}
\end{figure}

\propFocusMonomialForWhy*
\begin{proof}
Let $\Imc_\Omc$ be the canonical model of $\Omc$. 
Since every axiom of $\Omc$ is annotated with some variable from $\semiringVars$ \new{or $1$}, it follows from the construction of $\Imc_\Omc$ that for every $A\in\NC$, $(a,\monomial)\in A^{\Imc_\Omc}$ implies that $\monomial$ is a product of variables from $\semiringVars$, \ie a monomial over $\semiringVars$, and similarly for roles. 
If $\alpha$ is an assertion $P(\vec{a})$, then by Theorem \ref{thm:can-model-main}, $\Omc\models(\alpha,\monomial)$ implies $\Imc_\Omc\models (P(\vec{a}),\monomial)$, \ie $(\vec{a},\monomial)\in P^{\Imc_\Omc}$, so $\monomial$ is a monomial. 
If $\alpha$ is a BCQ $q$ whose extended version is $\ext{q}$, then by Theorem \ref{thm:can-model-main}, $\Omc\models(\alpha,\monomial)$ implies $\Imc_\Omc\models(q,\monomial)$, \ie $\monomial\in\p{\Imc_\Omc}{\ext{q}}=\{\bigotimes_{P(\vec{t},t)\in \ext{q}} \pi(t) \mid \pi\in\nu_{\Imc_\Omc}(\ext{q})\}$ where $\pi(t)$ is the last element of the tuple $\pi(\vec{t},t)\in P^{\Imc_\Omc}$. Hence, since a product of monomials is a monomial, $\monomial$ is a monomial.
\end{proof}

\subsection{Annotated \new{Assertion} Entailment from $\ELHIbot$ Ontologies}\label{app:annotated}

\paragraph{Proof of Theorem~\ref{prop:completionalgorithmELHI}} 
Before proving the theorem, we first establish the following technical lemmas.
Recall that $\mn{saturate}(\Omc)$ is obtained by starting with 
\begin{align*}
\Smc:= \Omc&\cup\{(\top(a),1)\mid a\in\individuals{\Omc}\new{\cup\{a_\top\}} \}
\\&
\cup\{(A\sqsubseteq A,1) \mid A\in(\NC\cap\signature{\Omc})\cup\{\top,\bot\}\}
\\&
\cup\{(R\sqsubseteq R,1),(R^-\sqsubseteq R^-,1),(\exists R.\bot \sqsubseteq \bot,1), (\exists R^-.\bot \sqsubseteq \bot,1) \mid R\in\NR\cap\signature{\Omc}\}
\\&
\cup\{(\mn{inv}(P_1)\sqsubseteq \mn{inv}(P_2),v)\mid (P_1\sqsubseteq P_2,v)\in\Omc\}
\\&
\cup\{(\mn{inv}(P_1)\sqcap\mn{inv}(P_2)\sqsubseteq \bot,v)\mid (P_1\sqcap P_2\sqsubseteq \bot,v)\in\Omc\},
\end{align*}
and extending $\Smc$ through an iterative application of the rules from Table~\ref{tab:completionRules} until no more rules are applicable. 
For point (1) of Theorem \ref{prop:completionalgorithmELHI}, \ie soundness of the completion algorithm, we will use the following lemma.

\begin{lemma}
\label{lem:saturation-correct-for-axioms}
\new{Let $\mn{saturate}(\Omc)$ be the result of saturating \Omc and $\Imc$ be a model of $\Omc$. For every assertion, GCI, or RI $\alpha$ and every monomial $\monomial$, if $(\alpha, {\monomial})\in\mn{saturate}(\Omc)$ it holds that: 
 \begin{enumerate}
\item if $\alpha$ is an assertion or an RI (positive or negative), then $\Imc\models(\alpha, {\monomial})$,
\item if $\alpha$ is a GCI of the form $C\sqsubseteq D$, then for every \emph{monomial} $\nonomial$ and domain element $e\in\Delta^\Imc$, $(e,\nonomial)\in C^\Imc$ implies $(e, \nonomial\times\monomial)\in D^\Imc$.
\end{enumerate}
}
\end{lemma}
\begin{proof}
\new{Let $\Smc_0=\Smc$ and $\Smc_{i+1}$ be obtained by applying a completion rule to $\Smc_i$, so that $\mn{saturate}(\Omc)=\Smc_n$ for some $n\geq 0$. 
We show by induction on $i$ that for every assertion, GCI, or RI $\alpha$ and every monomial $\monomial$, if $(\alpha,\monomial)\in\Smc_i$ it holds that: 
 \begin{enumerate}
\item if $\alpha$ is an assertion or an RI (positive or negative), then $\Imc\models(\alpha, \monomial)$,
\item if $\alpha$ is a GCI of the form $C\sqsubseteq D$, then for every \emph{monomial} $\nonomial$ and domain element $e\in\Delta^\Imc$, $(e,\nonomial)\in C^\Imc$ implies $(e, \nonomial\times\monomial)\in D^\Imc$.
\end{enumerate}}
 
For the base case $i=0$, we \new{can actually show that $(\alpha,\monomial)\in\Smc_0=\Smc$ implies $\Imc\models (\alpha,\monomial)$, which directly implies points (1-2). Indeed, we} have the following cases. 
\begin{itemize}
\item If $(\alpha,\monomial)\in\Omc$, since $\Imc$ is a model of $\Omc$, $\Imc\models (\alpha,\monomial)$.

\item If $(\alpha, {\monomial})$ is of one of the forms: $(\top(a),1)$ for some $\new{a\in\individuals{\Omc}\cup\{a_\top\}}$, $(A\sqsubseteq A,1)$ for some $ A\in(\NC\cap\signature{\Omc})\cup\{\top,\bot\}$, $(R\sqsubseteq R,1)$, $(R^-\sqsubseteq R^-,1)$, $(\exists R.\bot \sqsubseteq \bot,1)$ or $(\exists R^-.\bot \sqsubseteq \bot,1)$ for some $R\in\NR\cap\signature{\Omc}$, $\Imc\models (\alpha, \monomial)$ trivially (recall that $\bot^\Imc=\emptyset$ and $(\exists R^{(-)}.\bot)^\Imc=\emptyset$).

\item If $(\alpha, {\monomial})$ is of the form  $(\mn{inv}(P_1)\sqsubseteq \mn{inv}(P_2),\monomial)$ for some $(P_1\sqsubseteq P_2,\monomial)\in\Omc$, 
since $\Imc$ is a model of $\Omc$, $\Imc\models (P_1\sqsubseteq P_2,\monomial)$. It follows directly from the definition of the annotated interpretation of $\mn{inv}(P_1)$ and $\mn{inv}(P_2)$ that $\Imc\models (\mn{inv}(P_1)\sqsubseteq \mn{inv}(P_2),\monomial)$.

\item If $(\alpha, {\monomial})$ is of the form  $(\mn{inv}(P_1)\sqcap\mn{inv}(P_2)\sqsubseteq \bot,\monomial)$ for some $(P_1\sqcap P_2\sqsubseteq \bot,\monomial)\in\Omc$, since $\Imc$ is a model of $\Omc$, $(P_1\sqcap P_2)^\Imc=\emptyset$ and so $(\mn{inv}(P_1)\sqcap\mn{inv}(P_2))^\Imc=\emptyset$. Hence $\Imc\models (\mn{inv}(P_1)\sqcap\mn{inv}(P_2)\sqsubseteq \bot,\monomial)$
\end{itemize}

\new{Assume that the property is true for some $i$ and consider $(\alpha, {\monomial})\in\Smc_{i+1}$. If $(\alpha, {\monomial})\in\Smc_{i}$, we obtain the result by induction hypothesis. Otherwise, $(\alpha, {\monomial})$ has been added by the rule applied to obtain $\Smc_{i+1}$ from $\Smc_i$ and we have 11 possible cases depending on which completion rule has been applied.}
\begin{description}
\item[$\CR^T_{0}$] $(\alpha, {\monomial})=(A\sqsubseteq \bot,{\monomial_0\times\monomial_1\times \monomial_2\times\monomial_3})$ and the induction hypothesis applies to the axioms $(A\sqsubseteq \exists P,\monomial_0)$, $(P\sqsubseteq P_1,\monomial_1)$, $(P\sqsubseteq P_2,\monomial_2)$ and $(P_1\sqcap P_2\sqsubseteq \bot,\monomial_3)$. 
\new{Moreover, by the form of the axioms that can be added by the saturation rules, $(A\sqsubseteq \exists P,\monomial_0)$ is actually in $\Omc$ so satisfied by $\Imc$, and either $(P_1\sqcap P_2\sqsubseteq \bot,\monomial_3)$ is in $\Omc$, or $(\mn{inv}(P_1)\sqcap \mn{inv}(P_2)\sqsubseteq \bot,\monomial_3)$ is in $\Omc$, and in both cases, $\Imc\models (P_1\sqcap P_2\sqsubseteq \bot,\monomial_3)$. Thus} 
$\Imc\models (A\sqsubseteq \exists P,\monomial_0)$, $\Imc\models (P\sqsubseteq P_1,\monomial_1)$, $\Imc\models (P\sqsubseteq P_2,\monomial_2)$ and $\Imc\models (P_1\sqcap P_2\sqsubseteq \bot,\monomial_3)$. 
It follows that $A^\Imc=\emptyset$: otherwise, if \new{there was $c\in\Delta^\Imc$ and $\nonomial\in\why$ such that} $(c,\nonomial)\in A^\Imc$, there would be $(c,d,\nonomial\times\monomial_0)\in P^\Imc$, so $(c,d,\nonomial\times\monomial_0\times\monomial_1)\in P_1^\Imc$ and $(c,d,\nonomial\times\monomial_0\times\monomial_2)\in P_2^\Imc$, hence $(c,d,\new{\nonomial\times}\nonomial\times\new{\monomial_0\times}\monomial_0\times\monomial_1\times\monomial_2)\in (P_1\sqcap P_2)^\Imc$, contradicting $(P_1\sqcap P_2)^\Imc=\emptyset$. 
Hence $\Imc\models (A\sqsubseteq \bot,\monomial)$ for every $\monomial$, in particular for $\monomial={\monomial_0\times\monomial_1\times \monomial_2\times\monomial_3}$, \new{and the condition required for the GCI $\alpha$ follows.}

\item[$\CR^T_{1}$] $(\alpha, {\monomial})=(P_1\sqsubseteq P_3,{\monomial_1\times \monomial_2})$ and the induction hypothesis applies to $(P_1\sqsubseteq P_2,\monomial_1)$ and $(P_2\sqsubseteq P_3,\monomial_2)$, so that $\Imc\models(P_1\sqsubseteq P_2,\monomial_1)$ and $\Imc\models(P_2\sqsubseteq P_3,\monomial_2)$. 
\new{For all $c,d\in\Delta^\Imc$ and $\nonomial\in\why$,} if $(c,d,\nonomial)\in P_1^\Imc$, then $(c,d,\nonomial\times \monomial_1)\in P_2^\Imc$ which means that $(c,d,\nonomial\times \monomial_1\times\monomial_2)\in P_3^\Imc$. 
Hence $\Imc\models (P_1\sqsubseteq P_3,{\monomial_1\times \monomial_2})$, \new{as required for the RI~$\alpha$}.

\item[$\CR^T_{2}$] $(\alpha, {\monomial})=(M\sqcap N\sqsubseteq C,{\monomial_1\times \monomial_2})$ and the induction hypothesis applies to $(M\sqsubseteq A,\monomial_1)$ and $(A\sqcap N\sqsubseteq C,\monomial_2)$. 
\new{Let $c\in\Delta^\Imc$ and $\nonomial$ be a monomial.} If $(c,\nonomial)\in (M\sqcap N)^\Imc$, there exists $(c,\nonomial_1)\in M^\Imc$ and $(c,\nonomial_2)\in N^\Imc$ such that $\nonomial_1\times\nonomial_2=\nonomial$. \new{Moreover, $\nonomial_1$ and $\nonomial_2$ must be monomials.} 
By induction hypothesis, it follows that $(c,\nonomial_1\times\monomial_1)\in A^\Imc$, so $(c,\nonomial_1\times\monomial_1\times\nonomial_2)\in (A\sqcap N)^\Imc$. By induction hypothesis, \new{since $\nonomial_1\times\monomial_1\times\nonomial_2$ is a monomial}, we obtain $(c,\nonomial_1\times\monomial_1\times\nonomial_2\times\monomial_2)\in C^\Imc$, \ie $(c,\nonomial\times\monomial_1\times\monomial_2)\in C^\Imc$, \new{as required for the GCI $\alpha$}.

\item[$\CR^T_{3}$] $(\alpha, {\monomial})=(A\sqcap A_1\sqcap\dots\sqcap A_k \sqsubseteq D,\monomial\times\nonomial\times\onomial\times\monomial_0\times\Pi_{i=1}^k(\monomial_i\times\nonomial_i)\times\Pi_{i=1}^{k'}\onomial_i)$ and the induction hypothesis applies to 
\begin{itemize}
\item $(A\sqsubseteq \exists Q,\monomial_0)$, \new{which actually belongs to $\Omc$ by the form of the axioms that can be added by the saturation rules, so is satisfied by $\Imc$}, 
\item $(Q\sqsubseteq P, \monomial)$,  
\item $(Q\sqsubseteq P_i,\monomial_i )$, $1\leq i\leq k$, $k\geq 0$,
\item $(\exists \mn{inv}(P_i).A_i\sqsubseteq B_i,\nonomial_i)$, $1\leq i\leq k$, $k\geq 0$, \new{which actually belong to $\Omc$ by the form of the axioms that can be added by the saturation rules, so are satisfied by $\Imc$},
\item $(\top\sqsubseteq B'_i,\onomial_i)$, $1\leq i\leq k'$, $k'\geq 0$,
\item $(B_1\sqcap\dots\sqcap B_k\sqcap B'_1\sqcap\dots\sqcap B'_{k'}\sqsubseteq C,\nonomial),$ and
\item $(\exists P.C\sqsubseteq D, \onomial)$, \new{which actually belongs to $\Omc$ by the form of the axioms that can be added by the saturation rules, so is satisfied by $\Imc$}. 
\end{itemize} 
\new{Let $c\in\Delta^\Imc$ and $r$ be a monomial. 
Assume that} $(c,r)\in (A\sqcap A_1\sqcap\dots\sqcap A_k)^\Imc$. 
There exists $(c,r_0)\in A^\Imc$, and $(c,r_i)\in A_i^\Imc$ for $1\leq i\leq k$ such that $r=r_0\times\Pi_{i=1}^k r_i$. \new{Note that all $r_j$ in this product must be monomials since $r$ is a monomial.} 

Since $\Imc\models (A\sqsubseteq \exists Q,\monomial_0)$, there exists $(c,d,r_0\times\monomial_0)\in Q^\Imc$. 

For every $1\leq i\leq k$, since $\Imc\models (Q\sqsubseteq P_i,\monomial_i )$, $(c,d,r_0\times\monomial_0\times\monomial_i)\in P_i^\Imc$.  
Hence, since  $(c,r_i)\in A_i^\Imc$, 
it follows that $(d,r_0\times\monomial_0\times\monomial_i\times r_i)\in (\exists\mn{inv}(P_i).A_i)^\Imc$. 
Since \new{$\Imc\models (\exists\mn{inv}(P_i).A_i\sqsubseteq B_i,\nonomial_i)$, } 
then $(d, r_0\times\monomial_0\times\monomial_i\times r_i\times\nonomial_i)\in B_i^\Imc$. 

For every $1\leq i\leq k'$, since \new{$\top^\Imc=\Delta^\Imc\times\{1\}$ and $1$ is a monomial, by induction hypothesis on $(\top\sqsubseteq B'_i,\onomial_i)$,} 
then $(d, \onomial_i)\in {B'_i}^\Imc$. 

\new{Since the product of a monomial by itself is the same monomial in the \why semiring,} it follows that $(d, r_0\times\monomial_0\times\Pi_{i=1}^k(\monomial_i\times r_i\times\nonomial_i)\times\Pi_{i=1}^{k'}\onomial_i)\in (B_1\sqcap\dots\sqcap B_k\sqcap B'_1\sqcap\dots\sqcap B'_{k'})^\Imc$, \ie $(d, r\times\monomial_0\times\Pi_{i=1}^k(\monomial_i\times
\nonomial_i)\times\Pi_{i=1}^{k'}\onomial_i)\in (B_1\sqcap\dots\sqcap B_k\sqcap B'_1\sqcap\dots\sqcap B'_{k'})^\Imc$. 
Hence, \new{by induction hypothesis on $(B_1\sqcap\dots\sqcap B_k\sqcap B'_1\sqcap\dots\sqcap B'_{k'}\sqsubseteq C,\nonomial)$, it follows that} 
$(d, r\times\monomial_0\times\nonomial\times\Pi_{i=1}^k(\monomial_i\times
\nonomial_i)\times\Pi_{i=1}^{k'}\onomial_i)\in C^\Imc$. 

Moreover, since $\Imc\models (Q\sqsubseteq P, \monomial)$, $(c,d,r_0\times\monomial_0\times\monomial)\in P^\Imc$, so that 
$(c, r\times\monomial_0\times\monomial\times\nonomial\times\Pi_{i=1}^k(\monomial_i\times
\nonomial_i)\times\Pi_{i=1}^{k'}\onomial_i)\in (\exists P.C)^\Imc$ \new{(again, this holds because we are multiplying monomials)}. 

Finally, \new{since $\Imc\models(\exists P.C\sqsubseteq D, \onomial)$,} 
$(c, r\times\monomial_0\times\monomial\times\nonomial\times\onomial\times\Pi_{i=1}^k(\monomial_i\times
\nonomial_i)\times\Pi_{i=1}^{k'}\onomial_i)\in D^\Imc$, \new{as required for the GCI $\alpha$}. 

\item[$\CR^A_{1}$] $(\alpha, {\monomial})=(B(a),{\monomial_1\times \dots\times\monomial_k\times \monomial})$ and the induction hypothesis applies to the axioms $(A_i(a),\monomial_i)$, $1\leq i\leq k$, and $(A_1\sqcap \dots\sqcap A_k\sqsubseteq B,\monomial)$. 
\new{Since} $\Imc\models(A_i(a),\monomial_i)$, $1\leq i\leq k$, 
it follows that $(a^\Imc, \monomial_1\times\dots\times\monomial_k)\in (A_1\sqcap \dots\sqcap A_n)^\Imc$.  
\new{By induction hypothesis on $(A_1\sqcap \dots\sqcap A_k\sqsubseteq B,\monomial)$, since $\monomial_1\times\dots\times\monomial_k$ is a monomial, it follows that $(a^\Imc, \monomial_1\times\dots\times\monomial_k\times\monomial)\in B^\Imc$. }
Hence $\Imc\models (B(a),{\monomial_1\times \dots\times\monomial_k\times \monomial})$, \new{as required for the assertion $\alpha$}. 

\item[$\CR^A_{2}$] to $\CR^A_{5}$ are easy to show in a similar way.

\item[$\CR^A_{6}$] and $\CR^A_{7}$ are never applied because $\Omc$ is satisfiable. We show this for $\CR^A_{6}$ case. 
Assume for a contradiction that $(\alpha, {\monomial})=(\bot(a),{\monomial_1\times \monomial_2\times\monomial_3})$ has been added \new{to $\Smc_i$ to obtain $\Smc_{i+1}$} by $\CR^A_{6}$. In this case, 
the induction hypothesis applies to $(R(a,b),\monomial_1)$ and $(S(a,b),\monomial_2)$, \new{while by the form of axioms that can be added by the saturation rules, either $(R\sqcap S\sqsubseteq\bot,\monomial_3)$ is in $\Omc$, or $(\mn{inv}(R)\sqcap \mn{inv}(S)\sqsubseteq\bot,\monomial_3)$ is in $\Omc$}. Thus, $\Imc\models (R(a,b),\monomial_1)$, $\Imc\models(S(a,b),\monomial_2)$ and $\Imc\models(R\sqcap S\sqsubseteq\bot,\monomial_3)$. 
It must be the case that $(R\sqcap S)^\Imc=\emptyset$ while $(a^\Imc,b^\Imc,\monomial_1\times \monomial_2)\in (R\sqcap S)^\Imc$, which is a contradiction. Hence $\CR^A_{6}$ is never applied when $\Omc$ is satisfiable.
\end{description}
This concludes the proof of the lemma.
\end{proof}

The following lemma will be useful to handle RIs in the proofs of this section.

\begin{lemma}\label{lem:inverse-roles-completion}
If $(P_1\sqsubseteq P_2,\monomial)\in\mn{saturate}(\Omc)$, then 
\begin{enumerate}
\item $(\mn{inv}(P_1)\sqsubseteq \mn{inv}(P_2),\monomial)\in\mn{saturate}(\Omc)$;
\item \new{$P_1\sqsubseteq_\Omc P_2$, where $\sqsubseteq_\Omc$ is the transitive closure of the relation  $\{(S,P), (\mn{inv}(S),\mn{inv}(P))\mid S\sqsubseteq P\in\Omc\}\cup\{(R,R), (R^-,R^-)\mid R\in\signature{\Omc}\cap\NR\}$.} 
\end{enumerate}
\end{lemma}
\begin{proof}
We show by induction on the number of completion rules applied before adding $(P_1\sqsubseteq P_2,\monomial)$ to $\mn{saturate}(\Omc)$ that $(\mn{inv}(P_1)\sqsubseteq \mn{inv}(P_2),\monomial)\in\mn{saturate}(\Omc)$ \new{and $P_1\sqsubseteq_\Omc P_2$}.

\begin{itemize}
\item Base case: $(P_1\sqsubseteq P_2,\monomial)\in\Smc$. If $(P_1\sqsubseteq P_2,\monomial)\in\Omc$ then $(\mn{inv}(P_1)\sqsubseteq \mn{inv}(P_2),\monomial)\in\Smc$. Otherwise, $(P_1\sqsubseteq P_2,\monomial)\in\Smc\setminus\Omc$ and there is two possibilities by definition of $\Smc$: either $(P_1\sqsubseteq P_2,\monomial)=(P\sqsubseteq P, 1)$, or $(P_1\sqsubseteq P_2,\monomial)=(\mn{inv}(P_3)\sqsubseteq \mn{inv}(P_4),\monomial)$ for some $(P_3\sqsubseteq P_4,\monomial)\in\Omc$. Since for every $R$, both $(R\sqsubseteq R,1)$ and $(R^-\sqsubseteq R^-,1)$ are in $\Smc$ and $\mn{inv}(\mn{inv}(P))=P$, it follows that in every case, $(\mn{inv}(P_1)\sqsubseteq \mn{inv}(P_2),\monomial)\in\mn{saturate}(\Omc)$. \new{Moreover, in every case, $P_1\sqsubseteq_\Omc P_2$.}

\item Assume that for every $(P\sqsubseteq Q,\nonomial)\in\mn{saturate}(\Omc)$ obtained after at most $k$ applications of the completion rules, $(\mn{inv}(P)\sqsubseteq \mn{inv}(Q),\nonomial)\in\mn{saturate}(\Omc)$ \new{and $P\sqsubseteq_\Omc Q$}, and let $(P_1\sqsubseteq P_2,\monomial)\in\mn{saturate}(\Omc)$ be obtained after applying $k+1$ completion rules. 
Since the only completion rule that adds role inclusions is $\CR^T_{1}$, $(P_1\sqsubseteq P_2,\monomial)$ has been added by applying $\CR^T_{1}$ to some $(P_1\sqsubseteq P',\monomial_1)$ and $(P'\sqsubseteq P_2,\monomial_2)$ in $\mn{saturate}(\Omc)$ such that $\monomial_1\times\monomial_2=\monomial$. 
By induction hypothesis, $(\mn{inv}(P_1)\sqsubseteq \mn{inv}(P'),\monomial_1)$ and $(\mn{inv}(P')\sqsubseteq \mn{inv}(P_2),\monomial_2)$ are in $\mn{saturate}(\Omc)$, \new{and $P_1\sqsubseteq_\Omc P'$ and $P'\sqsubseteq_\Omc P_2$}. 
Hence, by $\CR^T_{1}$, $(\mn{inv}(P_1)\sqsubseteq \mn{inv}(P_2),\monomial)\in\mn{saturate}(\Omc)$, \new{and by definition of $\sqsubseteq_\Omc$, $P_1\sqsubseteq_\Omc P_2$}. \qedhere
\end{itemize}
\end{proof}

For point (2) of Theorem \ref{prop:completionalgorithmELHI}, \ie completeness of the completion algorithm, we will use the canonical model $\Imc_\Omc=\bigcup_{i\geq 0}\Imc_i$ of $\Omc$ (\cf Figure~\ref{fig:rules-canonical-why-normal}). 
The next lemmas will be crucial to handle the anonymous part of the canonical model in several proofs.

\begin{lemma}\label{lem:canonical-saturation-RIs}
For all $i\geq 0$, $x,y\in\Delta^{\Imc_\Omc}$ and $R$ role name or inverse role, if $(x,y,\onomial)\in R^{\Imc_i}$ and $y\notin\NI$ has been introduced between $\Imc_{j-1}$ and $\Imc_j$ ($j\leq i$) to satisfy an inclusion of the form $(C\sqsubseteq \exists S, \monomial_0)\in\Omc$ by applying \new{the chase rule in case} $\R_7$ or $\R_8$ to some $(x,s)\in C^{\Imc_{j-1}}$, 
then there exists $(S\sqsubseteq R, \monomial_{S\sqsubseteq R}  )\in\mn{saturate}(\Omc)$ such that $s\times \monomial_0\times\monomial_{S\sqsubseteq R}=\onomial$.
\end{lemma}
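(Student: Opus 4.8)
The plan is to prove Lemma~\ref{lem:canonical-saturation-RIs} by induction on the number $i$ of rounds in the construction of the canonical model $\Imc_\Omc=\bigcup_{i\geq 0}\Imc_i$, tracking how a triple $(x,y,\onomial)\in R^{\Imc_i}$ involving an anonymous element $y\notin\NI$ can arise. Fix the round $j$ at which $y$ was introduced, say to satisfy $(C\sqsubseteq\exists S,\monomial_0)\in\Omc$ via $\R_6$ or $\R_7$ applied to some $(x,s)\in C^{\Imc_{j-1}}$, so that the very first triple connecting $x$ and $y$ is $(x,y,s\times\monomial_0)\in S^{\Imc_j}$ (or $(y,x,s\times\monomial_0)\in S^{\Imc_j}$ in the $\R_7$ case; by Lemma~\ref{lem:inverse-roles-completion} and closing $\Smc$ under $\mn{inv}$ at initialization, it suffices to treat one orientation, taking $R,S$ to range over roles and inverse roles and using $\mn{inv}(P)$ when needed). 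For the base case, $i=j$: the only triple of the form $(x,y,\onomial)$ in $\Imc_j$ is the one just created, with $R=S$ and $\onomial=s\times\monomial_0$, and $(S\sqsubseteq S,1)\in\mn{saturate}(\Omc)$ by the initialization of $\Smc$, so we may take $\monomial_{S\sqsubseteq R}=1$ and indeed $s\times\monomial_0\times 1=\onomial$.

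For the inductive step, suppose the claim holds for $\Imc_i$ and consider $(x,y,\onomial)\in R^{\Imc_{i+1}}$ that is new, i.e., not already in $R^{\Imc_i}$. Since $y$ is anonymous and was introduced at round $j\le i$ purely to witness one existential, the key structural observation (which I would state as a short sub-claim and prove by an easy inspection of the rules $\R_1$--$\R_7$) is that no rule can ever create a \emph{new} role edge with $y$ as an endpoint via $\R_4,\R_5$ (those only add concept facts), via $\R_6,\R_7$ (those would require $y$ to already be the first coordinate of a concept fact and would create an edge to a \emph{fresh} successor, not to $x$), or via $\R_3$ (concept facts only). Hence the only rule that can produce the new edge $(x,y,\onomial)$ is $\R_1$ or $\R_2$: there is some $(P_1\sqsubseteq P_2,\nonomial)\in\Omc$ (or its inverse) and some $(x,y,\onomial')\in P_1^{\Imc_i}$ with $\onomial=\onomial'\times\nonomial$ and $P_2=R$ (up to orientation). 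Apply the induction hypothesis to $(x,y,\onomial')\in P_1^{\Imc_i}$: there is $(S\sqsubseteq P_1,\monomial_{S\sqsubseteq P_1})\in\mn{saturate}(\Omc)$ with $s\times\monomial_0\times\monomial_{S\sqsubseteq P_1}=\onomial'$. Since $(P_1\sqsubseteq P_2,\nonomial)\in\Omc\subseteq\mn{saturate}(\Omc)$ (and, in the inverse-orientation cases, the corresponding inverse inclusion is in $\mn{saturate}(\Omc)$ by Lemma~\ref{lem:inverse-roles-completion}), rule $\CR^T_1$ of Table~\ref{tab:completionRules} yields $(S\sqsubseteq P_2,\monomial_{S\sqsubseteq P_1}\times\nonomial)\in\mn{saturate}(\Omc)$, i.e., $(S\sqsubseteq R,\monomial_{S\sqsubseteq R})\in\mn{saturate}(\Omc)$ with $\monomial_{S\sqsubseteq R}=\monomial_{S\sqsubseteq P_1}\times\nonomial$, and then $s\times\monomial_0\times\monomial_{S\sqsubseteq R}=s\times\monomial_0\times\monomial_{S\sqsubseteq P_1}\times\nonomial=\onomial'\times\nonomial=\onomial$, using $\times$-idempotence to absorb any repeated variables. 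This closes the induction.

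The main obstacle I anticipate is the orientation bookkeeping between a role and its inverse: the canonical-model rules $\R_1$/$\R_2$ and $\R_6$/$\R_7$ come in forward/backward pairs, and the anonymous element $y$ may be created as either the successor or the predecessor of $x$, while the completion rules speak of $P_i$ and $\mn{inv}(P_i)$. I would handle this uniformly by adopting the convention (as in the paper) that $R$, $S$, $P_1$, $P_2$ range over roles \emph{and} inverse roles, recording every edge $(d,e,\onomial)\in R^{\Imc_i}$ equivalently as $(e,d,\onomial)\in\mn{inv}(R)^{\Imc_i}$, and invoking Lemma~\ref{lem:inverse-roles-completion} to transport any needed inclusion $(S\sqsubseteq R,\monomial)$ to $(\mn{inv}(S)\sqsubseteq\mn{inv}(R),\monomial)$ inside $\mn{saturate}(\Omc)$; with this convention the case analysis collapses to the single argument above. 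The only other minor point is verifying the sub-claim that anonymous endpoints never acquire new edges except through $\CR^T_1$-style role-subsumption steps, which is a direct check against the rule list in Figure~\ref{fig:rules-canonical-why-normal}.
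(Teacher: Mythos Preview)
Your proposal is correct and follows essentially the same approach as the paper: induction on the distance $i-j$ (you phrase it as induction on $i$ with $j$ fixed, which is equivalent), base case using $(S\sqsubseteq S,1)\in\mn{saturate}(\Omc)$, and inductive step observing that the only rules producing a new edge between $x$ and $y$ are $\R_1/\R_2$, then combining the induction hypothesis with $\CR^T_1$. The only cosmetic difference is that the paper performs an explicit two-case split on whether the applied RI is $(P_{p-1}\sqsubseteq R,r_p)$ or $(P_{p-1}\sqsubseteq\mn{inv}(R),r_p)$, invoking Lemma~\ref{lem:inverse-roles-completion} in the second case, whereas you fold both cases into one by letting roles range over inverses and appealing to Lemma~\ref{lem:inverse-roles-completion} once; both treatments are equivalent.
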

\begin{proof}
The proof is by induction on $l=i-j$. 
\smallskip

\noindent\emph{Base case: $l=0$.} The only possibility to obtain  $(x,y,\onomial)\in R^{\Imc_i}$ by \new{a single application of the chase rule} that introduces $y$ to satisfy an inclusion of the form $(C\sqsubseteq \exists S, \monomial_0)$ with some $(x,s)\in C^{\Imc_{i-1}}$ is that $S=R$ and $s\times \monomial_0=\onomial$. Since $(R\sqsubseteq R,1)\in\mn{saturate}(\Omc)$, we obtain the property. 
\smallskip

\noindent\emph{Inductive step.} 
Assume that the property is true for every integer up to $l$ and consider the case where $i-j=l+1$. 
Let $x,y\in\Delta^{\Imc_\Omc}$, and assume that $(x,y,\onomial)\in R^{\Imc_i}$, and $y\notin\NI$ has been introduced between $\Imc_{j-1}$ and $\Imc_j$, to satisfy $(C\sqsubseteq \exists S, \monomial_0)\in\Omc$ by applying \new{the chase rule in case} $\R_7$ or $\R_8$ to $(x,s)\in C^{\Imc_{j-1}}$ (thus adding $(x,y,s\times\monomial_0)\in S^{\Imc_{j}}$). 
We make a case analysis on whether $(x,y,\onomial)\in R^{\Imc_i}$ has been added by applying \new{the chase rule in case} $\R_1$ or $\R_2$ with a RI of the form: 
\begin{itemize}
\item  $(P_{p-1}\sqsubseteq R, r_p)\in\Omc$: 
It holds that $(x,y,\onomial'')\in P_{p-1}^{\Imc_{i-1}}$ (with $\onomial=\onomial''\times r_p$) so by induction hypothesis (since $i-1-j=l$ and the property is assumed to be true for $l$), there exists $(S\sqsubseteq P_{p-1}, \monomial_{S\sqsubseteq P_{p-1}}  )\in\mn{saturate}(\Omc)$ such that $s\times \monomial_0\times\monomial_{S\sqsubseteq P_{p-1}}=\onomial''$, which implies by $\CR^T_{1}$ that $(S\sqsubseteq R, \monomial_{S\sqsubseteq R}  )\in\mn{saturate}(\Omc)$ with $\monomial_{S\sqsubseteq R} =\monomial_{S\sqsubseteq P_{p-1}}\times r_p$ so that $s\times \monomial_0\times\monomial_{S\sqsubseteq R}=\onomial''\times r_p=\onomial$.

\item $(P_{p-1}\sqsubseteq \mn{inv}(R), r_p)\in\Omc$: It holds that $(x,y,\onomial'')\in \mn{inv}(P_{p-1})^{\Imc_{i-1}}$ (with $\onomial=\onomial''\times r_p$) so by induction hypothesis there exists $(S\sqsubseteq \mn{inv}(P_{p-1}), \monomial_{S\sqsubseteq \mn{inv}(P_{p-1})}  )\in\mn{saturate}(\Omc)$ with $s\times \monomial_0\times\monomial_{S\sqsubseteq \mn{inv}(P_{p-1})}=\onomial''$. 
Since $(P_{p-1}\sqsubseteq \mn{inv}(R), r_p)\in\mn{saturate}(\Omc)$, by Lemma \ref{lem:inverse-roles-completion},  
$(\mn{inv}(P_{p-1})\sqsubseteq R, r_p)\in\mn{saturate}(\Omc)$. 
Hence, by $\CR^T_{1}$, $(S\sqsubseteq R, \monomial_{S\sqsubseteq R}  )\in\mn{saturate}(\Omc)$ with $\monomial_{S\sqsubseteq R} =\monomial_{S\sqsubseteq \mn{inv}(P_{p-1})}\times r_p$ so that $s\times \monomial_0\times\monomial_{S\sqsubseteq R}=\onomial''\times r_p=\onomial$.
\qedhere
\end{itemize}

\end{proof}

\new{For a multiset $\{A_1,\dots, A_k\}$ of concept names, we use $\mn{Conj}(A_1,\dots,A_k)$ and $\mn{Conj}(A_1\sqcap\dots\sqcap A_k)$ to denote the conjunction obtained from $A_1\sqcap\dots\sqcap A_k$ by limiting the number of times each concept name can occur to $\mn{Card}(\Omc)$, \ie replacing $\underbrace{A\sqcap\dots\sqcap A}_{\mn{Card}(\Omc)+n\text{ times}}$ by $\underbrace{A\sqcap\dots\sqcap A}_{\mn{Card}(\Omc)\text{ times}}$.}

\begin{lemma}\label{lem:choice}
\new{Let $k\in\mathbb{N}$, and for every $j$ such that $1\leq j\leq k$, assume that we are given $k_j\geq 0$ and $k_j'\geq 0$, monomials $s^j_i$ for $1\leq i\leq k_j$, and a set of GCIs of the form:
\begin{itemize}		
\item $(S\sqsubseteq P^j_i, \monomial_{S\sqsubseteq P^j_i})$,  $(\exists \mn{inv}(P_i^j).A^j_i\sqsubseteq B^j_i, \monomial_{\exists \mn{inv}(P_i^j).A^j_i\sqsubseteq B^j_i})$, \mbox{$1\leq i\leq k_j$,}  
\item $(\top\sqsubseteq B'^j_i, \monomial_{\top\sqsubseteq B'^j_i})$, $1\leq i\leq k_j'$.
\end{itemize} 
Assume that all monomials are built from variables that annotate $\Omc$.}

\new{If $B\in\NC$ occurs $\mn{Card}(\Omc)+n$ times in the multiset $\bigcup_{j=1}^k\{B^j_i\mid 1\leq i\leq k_j\}\cup\{B'^j_i\mid 1\leq i\leq k_j'\}$, then there are $n$ occurrences of $B$ that we can assume to correspond to $\bigcup_{j=1}^k\{B^j_{h_j+1},\dots, B^j_{k_j}, B'^j_{h'_j+1},\dots, B'^j_{k'_j}\}$ for some $h_j\leq k_j$ and $h'_j\leq k'_j$ such that 
\begin{align*}
\Pi_{j=1}^k\Pi_{i=1}^{k_j}(\monomial_{S\sqsubseteq P^j_i}\times \monomial_{\exists \mn{inv}(P_i^j).A^j_i\sqsubseteq B^j_i} \times s^j_i)\times \Pi_{i=1}^{k'_j}\monomial_{\top\sqsubseteq B'^j_i}\\
=\ \Pi_{j=1}^k\Pi_{i=1}^{h_j}(\monomial_{S\sqsubseteq P^j_i}\times \monomial_{\exists \mn{inv}(P_i^j).A^j_i\sqsubseteq B^j_i} \times s^j_i)\times \Pi_{i=1}^{h'_j}\monomial_{\top\sqsubseteq B'^j_i}.
\end{align*}
}
\end{lemma}
\begin{proof}
\new{Assume w.l.o.g.\ that for every $1\leq j\leq k$, the occurences of the concept name $B$ come last in each multiset $\{B^j_i\mid 1\leq i\leq k_j\}$ and $\{B'^j_i\mid 1\leq i\leq k_j'\}$, that is, there exist $r_j\leq k_j$ and $r'_j\leq k'_j$ such that $\{B^j_i\mid r_j+1\leq i\leq k_j\}$ and $\{B'^j_i\mid r'_j+1\leq i\leq k_j'\}$ contains only $B$ and there is no occurence of $B$ in $\{B^j_i\mid 1\leq i\leq r_j\}$ and $\{B'^j_i\mid 1\leq i\leq r_j'\}$. }

\new{Since there are $\mn{Card}(\Omc)+n$ occurences of the concept name $B$ in total, the multiset of monomials $\bigcup_{j=1}^k\{\monomial_{S\sqsubseteq P^j_i}\times \monomial_{\exists \mn{inv}(P_i^j).A^j_i\sqsubseteq B^j_i} \times s^j_i\mid r_j+1\leq i\leq k_j\}\cup\{\monomial_{\top\sqsubseteq B'^j_i}\mid r'_j+1\leq i\leq k'_j\}$ contains $\mn{Card}(\Omc)+n$ monomials. The total number of variables that may occur in some monomial is $\mn{Card}(\Omc)$, so among $\mn{Card}(\Omc)+n$ monomials over $\mn{Card}(\Omc)$ variables, there must be $n$ monomials such that all their variables are present in the product of the remaining $\mn{Card}(\Omc)$ monomials. We can assume w.l.o.g.\ (by re-ordering the GCIs and monomials) that they come last in each multiset $\{\monomial_{S\sqsubseteq P^j_i}\times \monomial_{\exists \mn{inv}(P_i^j).A^j_i\sqsubseteq B^j_i} \times s^j_i\mid r_j+1\leq i\leq k_j\}$ and $\{\monomial_{\top\sqsubseteq B'^j_i}\mid r'_j+1\leq i\leq k'_j\}$, that is, there exist $r_j+1\leq h_j\leq k_j$ and $r'_j+1\leq h'_j\leq k'_j$ such that 
\begin{align*}
\Pi_{j=1}^k\Pi_{i=r_j+1}^{k_j}(\monomial_{S\sqsubseteq P^j_i}\times \monomial_{\exists \mn{inv}(P_i^j).A^j_i\sqsubseteq B^j_i} \times s^j_i)\times \Pi_{i=r'_j+1}^{k'_j}\monomial_{\top\sqsubseteq B'^j_i}\\
=\ \Pi_{j=1}^k\Pi_{i=r_j+1}^{h_j}(\monomial_{S\sqsubseteq P^j_i}\times \monomial_{\exists \mn{inv}(P_i^j).A^j_i\sqsubseteq B^j_i} \times s^j_i)\times \Pi_{i=r'_j+1}^{h'_j}\monomial_{\top\sqsubseteq B'^j_i}.
\end{align*}
The lemma follows.}
\end{proof}

\begin{lemma}\label{lem:choice2}
\new{Let $k\in\mathbb{N}$, and for every $j$ such that $1\leq j\leq k$, assume that we are given $k_j\geq 0$, monomials $s^j_i$ for $1\leq i\leq k_j$, and a set of GCIs of the form:
\begin{itemize}		
\item $(S\sqsubseteq P^j_i, \monomial_{S\sqsubseteq P^j_i})$,  $(\exists \mn{inv}(P_i^j).A^j_i\sqsubseteq B^j_i, \monomial_{\exists \mn{inv}(P_i^j).A^j_i\sqsubseteq B^j_i})$, \mbox{$1\leq i\leq k_j$}.  
\end{itemize} 
Assume that all monomials are built from variables that annotate $\Omc$.
}

\new{If $A\in\NC$ occurs $\mn{Card}(\Omc)+n$ times in the multiset $\bigcup_{j=1}^k\{A^j_i\mid 1\leq i\leq k_j\}$, then there are $n$ occurrences of $A$ that we can assume to correspond to $\bigcup_{j=1}^k\{A^j_{h_j+1},\dots, A^j_{k_j}\}$ for some $h_j\leq k_j$ such that 
\begin{align*}
\Pi_{j=1}^k\Pi_{i=1}^{k_j}(\monomial_{S\sqsubseteq P^j_i}\times \monomial_{\exists \mn{inv}(P_i^j).A^j_i\sqsubseteq B^j_i} \times s^j_i)
=\ \Pi_{j=1}^k\Pi_{i=1}^{h_j}(\monomial_{S\sqsubseteq P^j_i}\times \monomial_{\exists \mn{inv}(P_i^j).A^j_i\sqsubseteq B^j_i} \times s^j_i).
\end{align*}
}
\end{lemma}
\begin{proof}
\new{The proof is analogous to that of Lemma~\ref{lem:choice}. If $A\in\NC$ occurs $\mn{Card}(\Omc)+n$ times in the multiset $\bigcup_{j=1}^k\{A^j_i\mid 1\leq i\leq k_j\}$, then there are $n$ occurrences of $A$ that we can assume (by re-ordering the GCIs)  to correspond to $\bigcup_{j=1}^k\{A^j_{h_j+1},\dots, A^j_{k_j}\}$ for some $h_j\leq k_j$ and are such that (by the same argument as in the proof of Lemma~\ref{lem:choice}) every variable in the corresponding monomials $\monomial_{S\sqsubseteq P^j_i}\times \monomial_{\exists \mn{inv}(P_i^j).A^j_i\sqsubseteq B^j_i} \times s^j_i$ occurs in the product of the remaining $\mn{Card}(\Omc)$ monomials of this form with $A^j_i=A$.}
\end{proof}

\allowdisplaybreaks

\begin{lemma}\label{lem:canonical-saturation-GCIs}
For all $i\geq 0$, $x,y\in\Delta^{\Imc_\Omc}$ and $A\in\NC\cup\{\top\}$  if $(y,\onomial')\in A^{\Imc_i}$ and $y\notin\NI$ has been introduced between $\Imc_{j-1}$ and $\Imc_j$ ($j\leq i$) to satisfy an inclusion of the form $(C\sqsubseteq \exists S, \monomial_0)\in\Omc$ by applying \new{the chase rule in case} $\R_7$ or $\R_8$ to some $(x,s)\in C^{\Imc_{j-1}}$, 
then the following holds.
\begin{enumerate}
\item There exist $k\geq 0$ and $k'\geq 0$ such that:
\begin{itemize} 								
\item $(S\sqsubseteq P_i, \monomial_{S\sqsubseteq P_i})$ and  $(\exists \mn{inv}(P_i).A_i\sqsubseteq B_i, \monomial_{\exists \mn{inv}(P_i).A_i\sqsubseteq B_i})\in\mn{saturate}(\Omc)$, \mbox{$1\leq i\leq k$,}  

\item $(\top\sqsubseteq B'_i, \monomial_{\top\sqsubseteq B'_i})\in\mn{saturate}(\Omc)$, $1\leq i\leq k'$, 

\item $(B_1\sqcap\dots\sqcap B_k\sqcap B'_1\sqcap\dots\sqcap B'_{k'}\sqsubseteq A, p)\in\mn{saturate}(\Omc)$,
\end{itemize} 
\new{and $\mn{Conj}(C,A_1,\dots,A_k)=C\sqcap A_1\sqcap\dots\sqcap A_k$ (\ie there is no concept name that occurs more than $\mn{Card}(\Omc)$ times in $\{C, A_1,\dots,A_k\}$).}

Moreover, if $A=\top$, then $k=k'=0$ and $p=1$ (recall that the empty conjunction is $\top$ and $(\top\sqsubseteq\top,1)\in\mn{saturate}(\Omc)$).
 
\item For every $1\leq \ell\leq k$, there exists $(x,s_\ell)\in A_\ell^{\Imc_{i-1}}$.  
 
\item The monomials are related as follows:
\begin{itemize}
\item If $k\neq 0$,  $s\times\monomial_0\times\Pi_{\ell=1}^k s_\ell\times \Pi_{i=1}^k (\monomial_{S\sqsubseteq P_i}\times \monomial_{\exists \mn{inv}(P_i).A_i\sqsubseteq B_i})\times\Pi_{i=1}^{k'}\monomial_{\top\sqsubseteq B'_i}\times p =\onomial'$.
\item If $k=0$, 
$\Pi_{i=1}^{k'}\monomial_{\top\sqsubseteq B'_i}\times p =\onomial'$.
\end{itemize}
\end{enumerate} 
\end{lemma}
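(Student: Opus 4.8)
\textbf{Proof plan for Lemma~\ref{lem:canonical-saturation-GCIs}.}
The plan is to prove the statement by induction on $l = i-j$, the number of construction steps between the point $\Imc_j$ where the fresh element $y$ is created and the point $\Imc_i$ at which $(y,\onomial')\in A^{\Imc_i}$ holds. The base case is $l=0$: here $A^{\Imc_j}$ contains $(y,\onomial')$ only if $A=\top$ (since a fresh element is born with no concept memberships other than the implicit $\top$-membership with annotation $\one$), so $k=k'=0$, $p=\one$, and $\onomial'=\one=\Pi_{i=1}^{0}\monomial_{\top\sqsubseteq B'_i}\times p$, which gives all three clauses trivially. Strictly one should be careful about what ``$(y,\one)\in\top^{\Imc_\Omc}$'' means in our conventions; since $(\top)^\Imc=\Delta^\Imc\times\{\one\}$, the only element of $\top^{\Imc_j}$ attached to $y$ is $(y,\one)$, so $\onomial'=\one$ and the empty conjunction $\top$ with $(\top\sqsubseteq\top,\one)\in\mn{saturate}(\Omc)$ does the job.

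For the inductive step, I would look at which rule among $\R_1,\dots,\R_7$ of Figure~\ref{fig:rules-canonical-why-normal} was applied to pass from $\Imc_{i-1}$ to $\Imc_i$ and produced the new fact $(y,\onomial')\in A^{\Imc_i}$ (if the fact was already present in $\Imc_{i-1}$, we are done by the induction hypothesis, since $i-1-j = l-1 < l$). The only rules that add a pair $(y,\onomial')$ to the interpretation of a concept name $A\in\NC$ are $\R_3$ (from $(A_1\sqcap A_2\sqsubseteq B,\nonomial)$), $\R_4$ (from $(\exists R.A'\sqsubseteq B,\nonomial)$ applied to an $R$-edge ending in $y$), and $\R_5$ (symmetrically, from $(\exists R^-.A'\sqsubseteq B,\nonomial)$ applied to an $R$-edge leaving $y$). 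The genuinely delicate case is $\R_4$/$\R_5$, because the role edge $(z,y,\onomial_R)\in R^{\Imc_{i-1}}$ that triggers it must, since $y$ is an anonymous successor of $x$, actually be an edge between $x$ and $y$: this is where Lemma~\ref{lem:canonical-saturation-RIs} enters, giving us $(S\sqsubseteq \mn{inv}(R),\monomial_{S\sqsubseteq\mn{inv}(R)})\in\mn{saturate}(\Omc)$ (up to inversion, using Lemma~\ref{lem:inverse-roles-completion}) with a controlled monomial, i.e.\ $s\times\monomial_0\times\monomial_{S\sqsubseteq\mn{inv}(R)}=\onomial_R$. The other premise of $\R_4$, namely $(x,s')\in A'^{\Imc_{i-1}}$, is exactly an item of the form required in clause~(2) of the lemma; so this rule application is precisely what introduces a new conjunct $B_i = B$ with $A_i = A'$, $P_i$ obtained from the role inclusion, and the corresponding saturated-ontology axioms.

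I would then invoke the induction hypothesis on each ``earlier'' $A$-membership feeding the rule: for $\R_3$, on $(y,\onomial'_1)\in A_1^{\Imc_{i-1}}$ and $(y,\onomial'_2)\in A_2^{\Imc_{i-1}}$, each of which is witnessed by a decomposition into conjuncts $B^{(1)}_\bullet$, $B'^{(1)}_\bullet$ and $B^{(2)}_\bullet$, $B'^{(2)}_\bullet$ with saturated-ontology axioms $(B^{(1)}_1\sqcap\cdots\sqsubseteq A_1, p_1)$ and $(B^{(2)}_1\sqcap\cdots\sqsubseteq A_2, p_2)$; combining these via rule $\CR^T_2$ (iterated, using $\times$-idempotency to merge repeated conjuncts) with $(A_1\sqcap A_2\sqsubseteq B,\nonomial)$ yields $(B^{(1)}_1\sqcap\cdots\sqcap B^{(2)}_1\sqcap\cdots\sqsubseteq B, p_1\times p_2\times\nonomial)\in\mn{saturate}(\Omc)$, which is the new witnessing axiom for $A=B$. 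The lists in clauses~(1) and~(2) are just the (idempotent) union of the lists for $A_1$ and $A_2$, and the monomial bookkeeping in clause~(3) follows by multiplying the two inductive monomial equations and the factor $\nonomial$, again collapsing duplicates by $\times$-idempotency. For $\R_4$/$\R_5$, the reasoning is analogous but simpler: we take the inductive decomposition for $(x,s')\in A'^{\Imc_{i-1}}$ — wait, more precisely we build the new conjunct list by taking the decomposition for the membership $(y,\onomial')\in A^{\Imc_{i-1}}$ already present if any, and adjoining the single new conjunct $B$ arising from $\exists\mn{inv}(P_i).A_i\sqsubseteq B_i$; clause~(2)'s new item is $(x,s')\in A'^{\Imc_{i-1}}$ and the $P_i$ comes from Lemma~\ref{lem:canonical-saturation-RIs}. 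Finally the special case $A=\top$ is handled separately and trivially, since no rule ever modifies $\top^{\Imc}$ beyond the $(d,\one)$ entries. The main obstacle I anticipate is the $\R_4$/$\R_5$ step — in particular, ensuring that the role edge into $y$ is genuinely the edge from its parent $x$ (so that $A_i$'s witness $(x,s')\in A'^{\Imc_{i-1}}$ lives at the right point) and that the role inclusion chain is correctly inverted when applying Lemma~\ref{lem:canonical-saturation-RIs}; the rest is careful but routine monomial accounting under $\times$-idempotency.
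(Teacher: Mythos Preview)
Your overall induction on $l=i-j$, the base case, and the treatment of $\R_3$ are fine and match the paper. The gap is in the $\R_4$/$\R_5$ case. You assert that the triggering role edge ``must, since $y$ is an anonymous successor of $x$, actually be an edge between $x$ and $y$'', but this is false: by step $i-1$ the anonymous element $y$ may already have children of its own (introduced by later applications of $\R_6$/$\R_7$), and the edge $(y,z,\onomial_{yz})\in P^{\Imc_{i-1}}$ used by the axiom $(\exists P.D\sqsubseteq A,\nonomial)$ can point to such a child $z$ rather than back to the parent $x$. In that situation the premise $(z,\onomial_z)\in D^{\Imc_{i-1}}$ lives at $z$, not at $x$, so it cannot directly serve as a clause-(2) witness $(x,s_\ell)\in A_\ell^{\Imc_{i-1}}$, and Lemma~\ref{lem:canonical-saturation-RIs} tells you nothing about the edge (it is not a role-inclusion consequence of the $S$-edge from $x$ to $y$).

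The paper handles this missing subcase by a \emph{double} use of the induction hypothesis. First apply the IH to $(z,\onomial_z)\in D^{\Imc_{i-1}}$ with $y$ playing the role of the parent of $z$ (legitimate since $z$ was introduced strictly after $y$, so its distance parameter is smaller than $l$); this yields a decomposition of $D$ at $z$ whose clause-(2) witnesses live at $y$. Then use completion rule $\CR^T_3$ --- with $(E_0\sqsubseteq\exists S_z,\monomial_z)$ the axiom that created $z$, the role inclusions $(S_z\sqsubseteq P_i,\dots)$ from that decomposition, and $(\exists P.D\sqsubseteq A,\nonomial)$ --- to collapse the whole $y\to z$ level into a single inclusion $(E_0\sqcap E_1\sqcap\dots\sqcap E_k\sqsubseteq A, r_z)\in\mn{saturate}(\Omc)$ whose left-hand conjuncts $E_0,\dots,E_k$ are all satisfied at $y$. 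Finally apply the IH a second time to each $(y,s_j)\in E_j^{\Imc_{i-1}}$ with $x$ as the parent of $y$, and glue the resulting decompositions together with iterated $\CR^T_2$. Your sketch covers only the easier subcase $z=x$; the ``wait, more precisely\dots'' sentence does not rescue the argument, since nothing in it produces the clause-(2) witnesses at $x$ when the edge goes downward to a child.
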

\begin{proof}
The proof is by induction on $l=i-j$. In the case where $A=\top$, the property trivially holds with $k=k'=0$ for every $l\geq 0$ (since in this case $\onomial'=1=p$). 
\smallskip

\noindent\emph{Base case.} 
In the case where $A\neq \top$, the base case is $l=1$: $y\notin\NI$ has been introduced between $\Imc_{i-2}$ and $\Imc_{i-1}$ to satisfy $(C\sqsubseteq \exists S, \monomial_0)$ by applying \new{the chase rule in case} $\R_7$ or $\R_8$ to $(x,s)\in C^{\Imc_{i-2}}$, adding $(x,y,s\times\monomial_0)\in S^{\Imc_{i-1}}$. By construction of $\Imc_\Omc$, it must be the case that either (i) the \new{chase rule has been applied between $\Imc_{i-1}$ and $\Imc_{i}$ with} $(\exists \mn{inv}(S).\top\sqsubseteq A, \monomial_1)$ (\new{case} $\R_6$ or $\R_5$), so that $(\exists \mn{inv}(S).\top\sqsubseteq A, \monomial_1 )\in\Omc$, and $s\times \monomial_0\times\monomial_1=\onomial'$, 
or (ii) the \new{chase rule has been applied between $\Imc_{i-1}$ and $\Imc_{i}$ with} $(\top\sqsubseteq A, \monomial_1)$ (\new{case} $\R_3$), so that $(\top\sqsubseteq A, \monomial_1 )\in\Omc$, and $\monomial_1=\onomial'$. 
This shows the property in this case:
\begin{enumerate}
\item In case (i), for $k=1$ and $k'=0$, it holds that
\begin{itemize} 
								
\item $(S\sqsubseteq S, 1)$ and $(\exists \mn{inv}(
S).\top\sqsubseteq A, \monomial_1)\in\mn{saturate}(\Omc)$ (take $P_1=S$, $A_1=\top$, $B_1=A$, $\monomial_{S\sqsubseteq P_1}=1$, and $\monomial_{\mn{inv}(P_1).A_1\sqsubseteq B_1}=\monomial_1$),

\item $(A\sqsubseteq A, 1)\in\mn{saturate}(\Omc)$ (take $B_1=A$ and $p=1$), 
\end{itemize}
\new{and $\mn{Conj}(C,\top)=C$}. 

In case (ii), for $k=0$ and $k'=1$, it holds that 
 \begin{itemize} 								
\item $(\top\sqsubseteq A, \monomial_1)\in\mn{saturate}(\Omc)$ (take $B'_1=A$ and $\monomial_{\top\sqsubseteq B'_i}=\monomial_1$)

\item $(A\sqsubseteq A, 1)\in\mn{saturate}(\Omc)$ (take $B'_1=A$ and $p=1$),
\end{itemize} 
\new{and $\mn{Conj}(C)=C$}. 
 
\item In the case $k=1$, $(x,1)\in \top^{\Imc_{i-1}}$ holds trivially (recall that $A_1=\top$ and take $s_1=1$).  
 
\item The monomials are related as follows:
\begin{itemize}
\item If $k\neq 0$, we are in case (i) and $s\times \monomial_0\times 1\times 1\times\monomial_1\times 1=\onomial'$.
\item If $k=0$, we are in case (ii) and 
$\monomial_1\times 1 =\onomial'$.
\end{itemize}
\end{enumerate} 

\noindent\emph{Inductive step.} 
Assume that the property is true for every integer up to $l$ and consider the case where $i-j=l+1$. 
Let $x,y\in\Delta^{\Imc_\Omc}$, and assume that $(y,\onomial')\in A^{\Imc_i}$, and $y\notin\NI$ has been introduced between $\Imc_{j-1}$ and $\Imc_j$, to satisfy $(C\sqsubseteq \exists S, \monomial_0)\in\Omc$ by applying \new{the chase rule in case} $\R_7$ or $\R_8$ to $(x,s)\in C^{\Imc_{j-1}}$ (thus adding $(x,y,s\times\monomial_0)\in S^{\Imc_{j}}$). 
We make a case analysis on the last \new{chase} rule applied to add $(y,\onomial')\in A^{\Imc_i}$ (among \new{cases} $\R_3$, $\R_4$, $\R_5$ and $\R_6$).

\smallskip
\new{\textbf{$\boldsymbol{\R_3}$: $\boldsymbol{(y,\onomial')\in A^{\Imc_i}}$ added by applying the chase rule with $\boldsymbol{(D\sqsubseteq A, \monomial_{D\sqsubseteq A})\in\Omc}.$} 
\\
There are $(y,\onomial)\in D^{\Imc_{i-1}}$ and $\onomial'=\monomial_{D\sqsubseteq A}\times \onomial$. 
Since $(y,\onomial)\in D^{\Imc_{i-1}}$, by induction hypothesis: 
\begin{enumerate}
\item There exist $k\geq 0$ and $k'\geq 0$ such that: 
\begin{itemize} 								
\item $(S\sqsubseteq P_i, \monomial_{S\sqsubseteq P_i})$ and $(\exists \mn{inv}(P_i).A_i\sqsubseteq B_i, \monomial_{\exists \mn{inv}(P_i).A_i\sqsubseteq B_i})\in\mn{saturate}(\Omc)$, \mbox{$1\leq i\leq k$,}  
\item $(\top\sqsubseteq B'_i, \monomial_{\top\sqsubseteq B'_i})\in\mn{saturate}(\Omc)$, $1\leq i\leq k'$, 
\item $(B_1\sqcap\dots\sqcap B_{k}\sqcap B'_1\sqcap\dots\sqcap B'_{k'}\sqsubseteq D, p')\in\mn{saturate}(\Omc)$,
\end{itemize} 
and $\mn{Conj}(C,A_1,\dots,A_k)=C\sqcap A_1\sqcap\dots\sqcap A_k$.
\item For every $1\leq \ell\leq k$, there exists $(x,s_\ell)\in {A_\ell}^{\Imc_{i-2}}$. 
\item The monomials are related as follows:
\begin{itemize}
\item If $k\neq 0$,  $s\times\monomial_0\times\Pi_{\ell=1}^{k} s_\ell\times \Pi_{i=1}^{k} (\monomial_{S\sqsubseteq P_i}\times \monomial_{\exists \mn{inv}(P_i).A_i\sqsubseteq B_i})\times\Pi_{i=1}^{k'}\monomial_{\top\sqsubseteq B'_i}\times p' =\onomial$.
\item If $k=0$, 
$\Pi_{i=1}^{k'}\monomial_{\top\sqsubseteq B'_i}\times p' =\onomial$.
\end{itemize}
\end{enumerate} 
Since $(D\sqsubseteq A, \monomial_{D\sqsubseteq A})\in\mn{saturate}(\Omc)$ and $(B_1\sqcap\dots\sqcap B_{k}\sqcap B'_1\sqcap\dots\sqcap B'_{k'}\sqsubseteq D, p')\in\mn{saturate}(\Omc)$, by $\CR^T_{2}$, 
$(B_1\sqcap\dots\sqcap B_{k}\sqcap B'_1\sqcap\dots\sqcap B'_{k'}\sqsubseteq A, \monomial_{D\sqsubseteq A}\times p')\in\mn{saturate}(\Omc)$.}

\new{We thus obtain items 1 and 2 of the property by taking $p=\monomial_{D\sqsubseteq A}\times p'$. 
To obtain item~3 of the property, we observe that the monomials are related as follows:}
\new{\begin{align*}
\text{If $k\neq 0$: }\onomial'={} &\monomial_{D\sqsubseteq A}\times \onomial
\\
{} = {}&\monomial_{D\sqsubseteq A}\times s\times\monomial_0\times(\Pi_{\ell=1}^{k} s_\ell\times \Pi_{i=1}^{k} (\monomial_{S\sqsubseteq P_i}\times \monomial_{\exists \mn{inv}(P_i).A_i\sqsubseteq B_i})\times\Pi_{i=1}^{k'}\monomial_{\top\sqsubseteq B'_i}\times p') \\
{}= {}&
s\times\monomial_0 \times\Pi_{\ell=1}^k s_\ell\times \Pi_{i=1}^k (\monomial_{S\sqsubseteq P_i}\times \monomial_{\exists \mn{inv}(P_i).A_i\sqsubseteq B_i})\times\Pi_{i=1}^{k'}\monomial_{\top\sqsubseteq B'_i}\times p.
\\
\text{If $k=0$: }
\onomial'={}&\monomial_{D\sqsubseteq A}\times \onomial
= \monomial_{D\sqsubseteq A}\times (\Pi_{i=1}^{k'}\monomial_{\top\sqsubseteq B'_i}\times p') 
= \Pi_{i=1}^{k'}\monomial_{\top\sqsubseteq B'_i}\times p.
\end{align*} }
This shows the property in this case.

\smallskip
\new{\textbf{$\boldsymbol{\R_4}$: $\boldsymbol{(y,\onomial')\in A^{\Imc_i}}$ added by applying the chase rule with}}   
		\new{\textbf{$\boldsymbol{(D_1\sqcap D_2\sqsubseteq A, \monomial_{D_1\sqcap D_2\sqsubseteq A})\in\Omc}.$}} 
\\		
There are $(y,\onomial_1)\in D_1^{\Imc_{i-1}}$ and $(y,\onomial_2)\in D_2^{\Imc_{i-1}}$ and $\onomial'=\monomial_{D_1\sqcap D_2\sqsubseteq A}\times \onomial_1\times\onomial_2$. 
For $j\in\{1,2\}$, since $(y,\onomial_j)\in D_j^{\Imc_{i-1}}$, by induction hypothesis: 
\begin{enumerate}
\item There exist $k_j\geq 0$ and $k_j'\geq 0$ such that: 
\begin{itemize} 								
\item $(S\sqsubseteq P^j_i, \monomial_{S\sqsubseteq P^j_i})$ and $(\exists \mn{inv}(P_i^j).A^j_i\sqsubseteq B^j_i, \monomial_{\exists \mn{inv}(P_i^j).A^j_i\sqsubseteq B^j_i})\in\mn{saturate}(\Omc)$, \mbox{$1\leq i\leq k_j$,}  

\item $(\top\sqsubseteq B'^j_i, \monomial_{\top\sqsubseteq B'^j_i})\in\mn{saturate}(\Omc)$, $1\leq i\leq k_j'$, 

\item $(B^j_1\sqcap\dots\sqcap B^j_{k_j}\sqcap B'^j_1\sqcap\dots\sqcap B'^j_{k_j'}\sqsubseteq D_j, p_j)\in\mn{saturate}(\Omc)$,
\end{itemize} 
 \new{and $\mn{Conj}(C,A^j_1,\dots,A^j_{k_j})=C\sqcap A^j_1\sqcap\dots\sqcap A^j_{k_j}$}.
\item For every $1\leq \ell\leq k_j$, there exists $(x,s^j_\ell)\in {A^j_\ell}^{\Imc_{i-2}}$. 
 
\item The monomials are related as follows:
\begin{itemize}
\item If $k_j\neq 0$,  $s\times\monomial_0\times\Pi_{\ell=1}^{k_j} s^j_\ell\times \Pi_{i=1}^{k_j} (\monomial_{S\sqsubseteq P^j_i}\times \monomial_{\exists \mn{inv}(P_i^j).A^j_i\sqsubseteq B^j_i})\times\Pi_{i=1}^{k_j'}\monomial_{\top\sqsubseteq B'^j_i}\times p_j =\onomial_j$.
\item If $k_j=0$, 
$\Pi_{i=1}^{k_j'}\monomial_{\top\sqsubseteq B'^j_i}\times p_j =\onomial_j$.
\end{itemize}
\end{enumerate} 
\noindent\new{We thus have the following relationships between monomials ($\dagger$). 
\begin{align*}
\text{If $k_1\neq 0$ or $k_2\neq 0$: }\onomial'={}&\monomial_{D_1\sqcap D_2\sqsubseteq A}\times \onomial_1\times\onomial_2
\\
{} = {} &\monomial_{D_1\sqcap D_2\sqsubseteq A}\times s\times\monomial_0
\\
&\times\Pi_{j=1}^2 (\Pi_{\ell=1}^{k_j} s^j_\ell\times \Pi_{i=1}^{k_j} (\monomial_{S\sqsubseteq P^j_i}\times \monomial_{\exists \mn{inv}(P_i^j).A^j_i\sqsubseteq B^j_i})\times\Pi_{i=1}^{k_j'}\monomial_{\top\sqsubseteq B'^j_i}\times p_j) \\
{}= {}&
s\times\monomial_0\times\monomial_{D_1\sqcap D_2\sqsubseteq A}\times p_1\times p_2
\\&\times \Pi_{j=1}^2 \Pi_{\ell=1}^{k_j} s^j_\ell
\times \Pi_{j=1}^2 \Pi_{i=1}^{k_j} (\monomial_{S\sqsubseteq P^j_i}\times \monomial_{\exists \mn{inv}(P_i^j).A^j_i\sqsubseteq B^j_i})\times \Pi_{j=1}^2\Pi_{i=1}^{k_j'}\monomial_{\top\sqsubseteq B'^j_i}.
\\
\text{If $k_1=k_2=0$: }\onomial'={}&\monomial_{D_1\sqcap D_2\sqsubseteq A}\times \onomial_1\times\onomial_2
= \monomial_{D_1\sqcap D_2\sqsubseteq A}\times \Pi_{j=1}^2 p_j\times\Pi_{j=1}^2\Pi_{i=1}^{k_j'}\monomial_{\top\sqsubseteq B'^j_i}.
\end{align*} 
}

\blue{By Lemmas \ref{lem:choice} and \ref{lem:choice2}, we can re-order the RIs and GCIs from point~(1) above in a way that there exist $h_1\leq k_1$, $h'_1\leq k'_1$, $h_2\leq k_2$, and $h'_2\leq k'_2$ and such that $$\mn{Conj}(\bigsqcap_{j=1}^2 (B^j_1\sqcap\dots\sqcap B^j_{k_j}\sqcap B'^j_1\sqcap\dots\sqcap B'^j_{k_j'}))=\mn{Conj}(\bigsqcap_{j=1}^2 (B^j_1\sqcap\dots\sqcap B^j_{h_j}\sqcap B'^j_1\sqcap\dots\sqcap B'^j_{h_j'}))=\bigsqcap_{j=1}^2 (B^j_1\sqcap\dots\sqcap B^j_{h_j}\sqcap B'^j_1\sqcap\dots\sqcap B'^j_{h_j'}),$$  $$\mn{Conj}(C,A^1_1,\dots,A^1_{k_1},A^2_1,\dots,A^2_{k_2})=\mn{Conj}(C,A^1_1,\dots,A^1_{h_1},A^2_1,\dots,A^2_{h_2})=C\sqcap \bigsqcap_{j=1}^2(A^j_1\sqcap\dots\sqcap A^j_{h_j})$$ and the relationships ($\dagger$) are still true if $k_j$ and $k'_j$ are replaced by $h_j$ and $h'_j$, respectively.}

Since $(D_1\sqcap D_2\sqsubseteq A, \monomial_{D_1\sqcap D_2\sqsubseteq A})\in\mn{saturate}(\Omc)$ and $(B^1_1\sqcap\dots\sqcap B^1_{k_1}\sqcap B'^1_1\sqcap\dots\sqcap B'^1_{k_1'}\sqsubseteq D_1, p_1)\in\mn{saturate}(\Omc)$, by $\CR^T_{2}$, 
\new{$(\mn{Conj}(B^1_1\sqcap\dots\sqcap B^1_{k_1}\sqcap B'^1_1\sqcap\dots\sqcap B'^1_{k_1'}\sqcap D_2)\sqsubseteq A, \monomial_{D_1\sqcap D_2\sqsubseteq A}\times p_1)\in\mn{saturate}(\Omc)$}. 
By applying again $\CR^T_{2}$, we obtain $(\mn{Conj}(\bigsqcap_{j=1}^2 (B^j_1\sqcap\dots\sqcap B^j_{k_j}\sqcap B'^j_1\sqcap\dots\sqcap B'^j_{k_j'}))\sqsubseteq A, p)\in\mn{saturate}(\Omc)$ with $p=\monomial_{D_1\sqcap D_2\sqsubseteq A}\times p_1 \times p_2$, 
\blue{\ie $(\bigsqcap_{j=1}^2 (B^j_1\sqcap\dots\sqcap B^j_{h_j}\sqcap B'^j_1\sqcap\dots\sqcap B'^j_{h_j'})\sqsubseteq A, p)\in\mn{saturate}(\Omc)$ by the equality above.}

If we let \blue{$k= h_1+h_2$, $k'=h'_1+h'_2$,  $f(i,j)=\Sigma_{\ell=1}^{j-1} h_\ell+i$ and $f'(i,j)=\Sigma_{\ell=1}^{j-1} h'_\ell+i$}, we can rename the $P_i^j$, $A_i^j$, $B_i^j$, ${B'_i}^j$ and $s_\ell^j$ by $P_{f(i,j)}$, $A_{f(i,j)}$, $B_{f(i,j)}$, $B'_{f'(i,j)}$ and $s_{f(\ell,j)}$ respectively to obtain items 1 and 2 of the property. 
To obtain item 3 of the property, we observe that the monomials are related as follows, \blue{using the fact that ($\dagger$) is still true if $k_j$ and $k'_j$ are replaced by $h_j$ and $h'_j$, respectively}:
\begin{align*}
\text{If $k\neq 0$, \ie $\blue{h}_1\neq 0$ or $\blue{h}_2\neq 0$: }\onomial'=&
s\times\monomial_0\times\monomial_{D_1\sqcap D_2\sqsubseteq A}\times  p_1\times p_2
\\&\times \Pi_{j=1}^2 \Pi_{\ell=1}^{\blue{h}_j} s^j_\ell
\times \Pi_{j=1}^2 \Pi_{i=1}^{\blue{h}_j} (\monomial_{S\sqsubseteq P^j_i}\times \monomial_{\exists \mn{inv}(P_i^j).A^j_i\sqsubseteq B^j_i})\times \Pi_{j=1}^2\Pi_{i=1}^{\blue{h}_j'}\monomial_{\top\sqsubseteq B'^j_i}\\
=&
s\times\monomial_0\times p \times\Pi_{\ell=1}^k s_\ell\times \Pi_{i=1}^k (\monomial_{S\sqsubseteq P_i}\times \monomial_{\exists \mn{inv}(P_i).A_i\sqsubseteq B_i})\times\Pi_{i=1}^{k'}\monomial_{\top\sqsubseteq B'_i}.
\\
\text{If $k=0$, \ie $\blue{h}_1=\blue{h}_2=0$:  }
\onomial'
=& \monomial_{D_1\sqcap D_2\sqsubseteq A}\times p_1\times p_2\times\Pi_{j=1}^2\Pi_{i=1}^{\blue{h}_j'}\monomial_{\top\sqsubseteq B'^j_i}
=p\times \Pi_{i=1}^{k'}\monomial_{\top\sqsubseteq B'_i}.
\end{align*} 
This shows the property in this case.

\smallskip
\new{\textbf{$\boldsymbol{\R_5}$ or $\boldsymbol{\R_6}$: $\boldsymbol{(y,\onomial')\in A^{\Imc_i}}$ added by applying the chase rule with}}  \new{\textbf{$\boldsymbol{(\exists P.D\sqsubseteq A, \monomial_{\exists P.D\sqsubseteq A})\in\Omc}.$}} 
\\
There exist $(y,z,\onomial_{yz})\in P^{\Imc_{i-1}}$ and $(z,\onomial_z)\in D^{\Imc_{i-1}}$ such that $\onomial'=\onomial_{yz}\times \onomial_z\times \monomial_{\exists P.D\sqsubseteq A}$. We distinguish two subcases:
\begin{enumerate}[(i)]
\item either $z=x$,
\item or $z\neq x$, which implies that $z$ has been introduced between $j'-1$ and $j'$ ($j\leq j'\leq i$) to satisfy an inclusion of the form $(E_0\sqsubseteq \exists S_z, \monomial_z)\in\Omc$ by applying \new{the chase rule in case} $\R_7$ or $\R_8$.
\end{enumerate}
In case (i), it holds that 
$(y,x,\onomial_{yx})\in P^{\Imc_{i-1}}$, $(x,\onomial_x)\in D^{\Imc_{i-1}}$ and $\onomial'=\onomial_{yx}\times \onomial_x\times \monomial_{\exists P.D\sqsubseteq A}$. 
 Since $(y,x,\onomial_{yx})\in P^{\Imc_{i-1}}$, \ie  $(x,y,\onomial_{yx})\in \mn{inv}(P)^{\Imc_{i-1}}$, and $y$ has been introduced to satisfy $(C\sqsubseteq \exists S,\monomial_0)\in\Omc$ by applying \new{the chase rule in case} $\R_7$ to some $(x,s)\in C^{\Imc_{j-1}}$, 
then by Lemma \ref{lem:canonical-saturation-RIs}, there is $(S\sqsubseteq \mn{inv}(P), \monomial_{S\sqsubseteq \mn{inv}(P)})\in\mn{saturate}(\Omc)$ s.t.\ $s\times\monomial_0\times \monomial_{S\sqsubseteq \mn{inv}(P)}=\onomial_{yx}$. 
This shows the property in case (i), indeed:
\begin{enumerate}
\item If we take $k=1$ and $k'=0$, it holds that:
\begin{itemize} 								
\item $(S\sqsubseteq \mn{inv}(P), \monomial_{S\sqsubseteq \mn{inv}(P)})$ and $(\exists P.D\sqsubseteq A, \monomial_{\exists P.D\sqsubseteq A})\in\mn{saturate}(\Omc)$ (take $P_1=\mn{inv}(P)$, $A_1=D$, $B_1=A$),

\item $(A\sqsubseteq A, 1)\in\mn{saturate}(\Omc)$ (take $B_1=A$ and $p=1$), 
\end{itemize} 
 \new{and $\mn{Conj}(C,D)=C\sqcap D$}.
\item $(x,\onomial_x)\in D^{\Imc_{i-1}}$ (take $A_1=D$ and $s_1=\onomial_x$).  
 
\item The monomials are related as follows (and $k\neq 0$):
\begin{align*}
\onomial'= \onomial_{yx}\times \onomial_x\times \monomial_{\exists P.D\sqsubseteq A}
=s\times\monomial_0\times \onomial_x\times \monomial_{S\sqsubseteq \mn{inv}(P)}\times \monomial_{\exists P.D\sqsubseteq A}\times 1.
\end{align*}
\end{enumerate} 
We now consider case (ii): $(y,z,\onomial_{yz})\in P^{\Imc_{i-1}}$, $(z,\onomial_z)\in D^{\Imc_{i-1}}$, $\onomial'=\onomial_{yz}\times \onomial_z\times \monomial_{\exists P.D\sqsubseteq A}$ and $z$ has been introduced between $j'-1$ and $j'$ ($j\leq j'\leq i$) to satisfy an inclusion of the form $(E_0\sqsubseteq \exists S_z, \monomial_z)\in\Omc$ by applying \new{the chase rule in case} $\R_7$ or $\R_8$. 
There must exist $(y,s_0)\in E_0^{\Imc_{j'-1}}$ to which \new{the chase rule in case} $\R_7$ or $\R_8$ has been applied to satisfy $(E_0\sqsubseteq \exists S_z, \monomial_z)$. 
Hence
 by Lemma \ref{lem:canonical-saturation-RIs} there exists $(S_z\sqsubseteq P, \monomial_{S_z\sqsubseteq P}  )\in\mn{saturate}(\Omc)$ such that $s_0\times  \monomial_z\times \monomial_{S_z\sqsubseteq P}=\onomial_{yz}$, 
and by induction hypothesis, the following statements hold.
\begin{itemize}
\item There exist $k\geq 0$ and $k'\geq 0$ such that:
\begin{itemize} 								
\item $(S_z\sqsubseteq P_i, \monomial_{S_z\sqsubseteq P_i})$ and  $(\exists \mn{inv}(P_i).E_i\sqsubseteq B_i, \monomial_{\exists \mn{inv}(P_i).E_i\sqsubseteq B_i})\in\mn{saturate}(\Omc)$, \mbox{$1\leq i\leq k$,}  

\item $(\top\sqsubseteq B'_i, \monomial_{\top\sqsubseteq B'_i})\in\mn{saturate}(\Omc)$, $1\leq i\leq k'$, 

\item $(B_1\sqcap\dots\sqcap B_k\sqcap B'_1\sqcap\dots\sqcap B'_{k'}\sqsubseteq D, p_z)\in\mn{saturate}(\Omc)$,
\end{itemize} 
\new{and $\mn{Conj}(E_0,E_1,\dots,E_k)=E_0\sqcap E_1\sqcap\dots\sqcap E_k$}. 

Since $(\exists P.D\sqsubseteq A, \monomial_{\exists P.D\sqsubseteq A})\in\Omc$ and $(E_0\sqsubseteq \exists S_z, \monomial_z)\in\Omc$, it follows by $\CR^T_{3}$ that $(E_0\sqcap E_1\sqcap\dots\sqcap E_k\sqsubseteq A,r_z )\in\mn{saturate}(\Omc)$ \new{(as $\mn{Conj}(E_0,E_1,\dots,E_k)=E_0\sqcap E_1\sqcap\dots\sqcap E_k$)}
with $r_z= \monomial_z\times \monomial_{\exists P.D\sqsubseteq A}\times \monomial_{S_z\sqsubseteq P}\times\Pi_{i=1}^k (\monomial_{S_z\sqsubseteq P_i}\times \monomial_{\exists \mn{inv}(P_i).E_i\sqsubseteq B_i})\times\Pi_{i=1}^{k'}\monomial_{\top\sqsubseteq B'_i}\times p_z$.
\item For every $1\leq \ell\leq k$, there exists $(y,s_\ell)\in E_\ell^{\Imc_{i-1}}$. 
\item The monomials are related as follows:
\begin{itemize}
\item If $k\neq 0$,  $s_0\times \monomial_z\times\Pi_{\ell=1}^k s_\ell\times \Pi_{i=1}^k (\monomial_{S_z\sqsubseteq P_i}\times \monomial_{\exists \mn{inv}(P_i).E_i\sqsubseteq B_i})\times\Pi_{i=1}^{k'}\monomial_{\top\sqsubseteq B'_i}\times p_z =\onomial_{z}$.
\item If $k=0$, 
$\Pi_{i=1}^{k'}\monomial_{\top\sqsubseteq B'_i}\times p_z =\onomial_{z}$.
\end{itemize}

\end{itemize} 
For every $0\leq j\leq k$, since $(y,s_j)\in E_j^{\Imc_{i-1}}$, by induction hypothesis: 
\begin{itemize} 
					
\item There exist $k_j\geq 0$ and $k_j'\geq 0$ such that:
\begin{itemize}		
\item $(S\sqsubseteq P^j_i, \monomial_{S\sqsubseteq P^j_i})$,  $(\exists \mn{inv}(P_i^j).A^j_i\sqsubseteq B^j_i, \monomial_{\exists \mn{inv}(P_i^j).A^j_i\sqsubseteq B^j_i})\in\mn{saturate}(\Omc)$, \mbox{$1\leq i\leq k_j$,}  

\item $(\top\sqsubseteq B'^j_i, \monomial_{\top\sqsubseteq B'^j_i})\in\mn{saturate}(\Omc)$, $1\leq i\leq k_j'$, 

\item $(B^j_1\sqcap\dots\sqcap B^j_{k_j}\sqcap B'^j_1\sqcap\dots\sqcap B'^j_{k_j'}\sqsubseteq E_j, p_j)\in\mn{saturate}(\Omc)$,
\end{itemize}
\new{and $\mn{Conj}(C,A^j_1,\dots,A^j_{k_j})=C\sqcap A^j_1\sqcap\dots\sqcap A^j_{k_j}$}.
\item For every $1\leq \ell\leq k_j$, there exists $(x,s^j_\ell)\in {A^j_\ell}^{\Imc_{i-1}}$. 

\item The monomials are related as follows:
\begin{itemize}
\item If $k_j\neq 0$, $s\times\monomial_0\times\Pi_{\ell=1}^{k_j} s^j_\ell\times \Pi_{i=1}^{k_j} (\monomial_{S\sqsubseteq P^j_i}\times \monomial_{\exists \mn{inv}(P_i^j).A^j_i\sqsubseteq B^j_i})\times\Pi_{i=1}^{k_j'}\monomial_{\top\sqsubseteq {B'}^j_i}\times p_j =s_j$.
\item If $k_j=0$, 
$\Pi_{i=1}^{k_j'}\monomial_{\top\sqsubseteq {B'}^j_i}\times p_j =s_j$.
\end{itemize}
\end{itemize} 

\noindent\new{Combining the equalities we have, we obtain the following: 
\begin{align*}
\text{If $k\neq 0$: }\onomial'={} & \onomial_{yz}\times \onomial_z\times \monomial_{\exists P.D\sqsubseteq A}\\
={}& s_0\times  \monomial_z\times \monomial_{S_z\sqsubseteq P}
\times
\Pi_{j=1}^k s_j\times \Pi_{i=1}^k (\monomial_{S_z\sqsubseteq P_i}\times \monomial_{\exists \mn{inv}(P_i).E_i\sqsubseteq B_i})\times\Pi_{i=1}^{k'}\monomial_{\top\sqsubseteq B'_i}\times p_z
\\&\times \monomial_{\exists P.D\sqsubseteq A}\\
{} ={}& 
\Pi_{j=0}^k s_j\! \times\!  \Pi_{i=1}^k (\monomial_{S_z\sqsubseteq P_i}\! \times\!  \monomial_{\exists \mn{inv}(P_i).E_i\sqsubseteq B_i})\! \times\! \Pi_{i=1}^{k'}\monomial_{\top\sqsubseteq B'_i}\! \times\!  p_z
\! \times\! \monomial_z\! \times\!  \monomial_{S_z\sqsubseteq P}\! \times\! \monomial_{\exists P.D\sqsubseteq A}\\
{}={}&  \Pi_{j=0}^k s_j \times r_z.
\\
\text{If $k=0$: }
\onomial'= &\onomial_{yz}\times \onomial_z\times \monomial_{\exists P.D\sqsubseteq A}
= s_0\times  \monomial_z\times \monomial_{S_z\sqsubseteq P}\times\Pi_{i=1}^{k'}\monomial_{\top\sqsubseteq B'_i}\times p_z
\times \monomial_{\exists P.D\sqsubseteq A}
= s_0\times r_z.
\end{align*}
So in both cases, $\onomial'=\Pi_{j=0}^k s_j\times r_z$.}

\noindent\new{We thus have the following relationships between monomials ($\dagger$). 
 \begin{align*}
\text{When $\Sigma_{j=0}^k k_j\neq 0$: }\onomial'={}& \Pi_{j=0}^k s_j
\times r_z\\
{}={}&  s\times\monomial_0\times\Pi_{j=0}^k (\Pi_{\ell=1}^{k_j} s^j_\ell\times \Pi_{i=1}^{k_j} (\monomial_{S\sqsubseteq P^j_i}\times \monomial_{\exists \mn{inv}(P_i^j).A^j_i\sqsubseteq B^j_i})\times\Pi_{i=1}^{k_j'}\monomial_{\top\sqsubseteq {B'}^j_i}\times p_j)
\times r_z\\
{}={}& s\times\monomial_0 \\&
\times\Pi_{j=0}^k\Pi_{\ell=1}^{k_j} s^j_\ell \times \Pi_{j=0}^k \Pi_{i=1}^{k_j} (\monomial_{S\sqsubseteq P^j_i}\times \monomial_{\exists \mn{inv}(P_i^j).A^j_i\sqsubseteq B^j_i})\times \Pi_{j=0}^k\Pi_{i=1}^{k_j'}\monomial_{\top\sqsubseteq {B'}^j_i}\\&
\times\Pi_{j=0}^k p_j \times r_z.
\\
\text{When $\Sigma_{j=0}^k k_j= 0$: }\onomial'={}& \Pi_{j=0}^k s_j
\times r_z
= \Pi_{j=0}^k \Pi_{i=1}^{k_j'}\monomial_{\top\sqsubseteq {B'}^j_i}\times p_j 
\times r_z.
\end{align*} 
}

\blue{Again, by Lemmas \ref{lem:choice} and \ref{lem:choice2}, we can re-order the RIs and GCIs above in a way that for $0\leq j\leq k$, there exist $h_j\leq k_j$, $h'_j\leq k'_j$ such that 
$$\mn{Conj}(\bigsqcap_{j=0}^k (B^j_1\sqcap\dots\sqcap B^j_{k_j}\sqcap B'^j_1\sqcap\dots\sqcap B'^j_{k_j'}))=\mn{Conj}(\bigsqcap_{j=0}^k (B^j_1\sqcap\dots\sqcap B^j_{h_j}\sqcap B'^j_1\sqcap\dots\sqcap B'^j_{h_j'}))=\bigsqcap_{j=0}^k (B^j_1\sqcap\dots\sqcap B^j_{h_j}\sqcap B'^j_1\sqcap\dots\sqcap B'^j_{h_j'}),$$  
$$\mn{Conj}(C,A^0_1,\dots,A^0_{k_0},\dots,A^k_1,\dots,A^k_{k_k})=\mn{Conj}(C,A^0_1,\dots,A^0_{h_0},\dots,A^k_1,\dots,A^k_{h_k})=C\sqcap \bigsqcap_{j=0}^k(A^j_1\sqcap\dots\sqcap A^j_{h_j})$$
and the relationships ($\dagger$) are still true if $k_j$ and $k'_j$ are replaced by $h_j$ and $h'_j$, respectively.}

Since $(E_0\sqcap E_1\sqcap\dots\sqcap E_{k}\sqsubseteq A, r_z)\in\mn{saturate}(\Omc)$ and $(B^0_1\sqcap\dots\sqcap B^0_{k_0}\sqcap B'^0_1\sqcap\dots\sqcap B'^0_{k_0'}\sqsubseteq E_0, p_0)\in\mn{saturate}(\Omc)$, by $\CR^T_{2}$, 
\new{$(\mn{Conj}(B^0_1\sqcap\dots\sqcap B^0_{k_0}\sqcap B'^0_1\sqcap\dots\sqcap B'^0_{k_0'}\sqcap E_1\sqcap\dots\sqcap E_{k})\sqsubseteq A, r_z\times p_0)\in\mn{saturate}(\Omc)$}. 
By applying successively $\CR^T_{2}$, we obtain \new{$(\mn{Conj}(\bigsqcap_{j=0}^k (B^j_1\sqcap\dots\sqcap B^j_{k_j}\sqcap B'^j_1\sqcap\dots\sqcap B'^j_{k_j'}))\sqsubseteq A, p)\in\mn{saturate}(\Omc)$}, \ie $(\bigsqcap_{j=0}^k (B^j_1\sqcap\dots\sqcap B^j_{\blue{h}_j}\sqcap B'^j_1\sqcap\dots\sqcap B'^j_{\blue{h}_j'})\sqsubseteq A, p)\in\mn{saturate}(\Omc)$, with $p=r_z\times \Pi_{j=0}^k p_j$. 

If we let $K=\Sigma_{j=0}^k \blue{h}_j$,  $K'=\Sigma_{j=1}^k \blue{h}'_j$, and rename the $P_i^j$, $A_i^j$, $B_i^j$, ${B'_i}^j$ and $s^j_\ell$, we obtain items 1 and 2 of the property. We now show item 3 of the property from ($\dagger$). 

\noindent\new{ 
 \begin{align*}
\text{In the case where $K\neq 0$: }\onomial'=& s\times\monomial_0 \\&
\times\Pi_{j=0}^k\Pi_{\ell=1}^{\blue{h}_j} s^j_\ell \times \Pi_{j=0}^k \Pi_{i=1}^{\blue{h}_j} (\monomial_{S\sqsubseteq P^j_i}\times \monomial_{\exists \mn{inv}(P_i^j).A^j_i\sqsubseteq B^j_i})\times \Pi_{j=0}^k\Pi_{i=1}^{\blue{h}_j'}\monomial_{\top\sqsubseteq {B'}^j_i}\\&
\times\Pi_{j=0}^k p_j \times r_z\\
=&  s\times\monomial_0 
\times\Pi_{j=0}^k\Pi_{\ell=1}^{\blue{h}_j} s^j_\ell \times \Pi_{j=0}^k \Pi_{i=1}^{\blue{h}_j} (\monomial_{S\sqsubseteq P^j_i}\times \monomial_{\exists \mn{inv}(P_i^j).A^j_i\sqsubseteq B^j_i})\times \Pi_{j=0}^k\Pi_{i=1}^{\blue{h}_j'}\monomial_{\top\sqsubseteq {B'}^j_i}\times p
\\
=&s\times\monomial_0\times\Pi_{\ell=1}^K s_\ell\times \Pi_{i=1}^K (\monomial_{S\sqsubseteq P_i}\times \monomial_{\exists \mn{inv}(P_i).A_i\sqsubseteq B_i})\times\Pi_{i=1}^{K'}\monomial_{\top\sqsubseteq B'_i}\times p.
\\
\text{ In the case $K=0$: }
\onomial'={}&\Pi_{j=0}^k \Pi_{i=1}^{\blue{h}_j'}\monomial_{\top\sqsubseteq {B'}^j_i}\times p_j 
\times r_z\\
{}={}& \Pi_{j=0}^k \Pi_{i=1}^{\blue{h}_j'}\monomial_{\top\sqsubseteq {B'}^j_i}\times p\\
{}={}&\Pi_{i=1}^{K'}\monomial_{\top\sqsubseteq B'_i}\times p.
\end{align*}}
This shows the property in case (ii) and finishes the proof of the lemma.
\end{proof}

\begin{lemma}\label{claimcomplELHI} 
For all $i\geq 0$, $x,y\in\Delta^{\Imc_\Omc}$, $R$ role name or inverse role, and $A\in\NC\cup\{\top\}$, if $(x,y,\onomial)\in R^{\Imc_i}$, $(y,\onomial')\in A^{\Imc_i}$, $(\exists R.A\sqsubseteq B,\nonomial)\in\Omc$, and $y\notin\NI$ has been introduced between $\Imc_{j-1}$ and $\Imc_j$ ($j\leq i$), to satisfy an inclusion of the form $(C\sqsubseteq \exists S, \monomial_0)\in\Omc$ by applying \new{the chase rule in case} $\R_7$ or $\R_8$ to some $(x,s)\in C^{\Imc_{j-1}}$, 
then  
\begin{itemize} 
\item there exists $(C\sqcap A_1\sqcap\dots\sqcap A_k\sqsubseteq B,r )\in\mn{saturate}(\Omc)$ for some $k\geq 0$, such that for every $1\leq \ell\leq k$, $(x,s_\ell)\in A_\ell^{\Imc_i}$, 
\item and $s\times\Pi_{\ell=1}^k s_\ell\times r= \onomial\times\onomial'\times\nonomial$. 
\end{itemize} 
\end{lemma}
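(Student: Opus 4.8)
The plan is to read this lemma as the statement that the completion rule $\CR^T_{3}$ faithfully mirrors, at the level of annotations, what the canonical model does when a derivation routes through the anonymous successor $y$. Since $y$ was created between $\Imc_{j-1}$ and $\Imc_j$ to satisfy $(C\sqsubseteq\exists S,\monomial_0)\in\Omc$ from $(x,s)\in C^{\Imc_{j-1}}$, everything that holds at $y$ in $\Imc_i$ (the fact $(y,\onomial')\in A^{\Imc_i}$) and on edges incident to $y$ (the fact $(x,y,\onomial)\in R^{\Imc_i}$) was produced by the rules $\R_1$--$\R_7$, and Lemmas~\ref{lem:canonical-saturation-RIs} and~\ref{lem:canonical-saturation-GCIs} already extract from such facts the corresponding saturated axioms about $S$. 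So the proof will essentially be an assembly step: produce the premises of one instance of $\CR^T_{3}$, fire the rule, and reconcile the resulting monomial with $\onomial\times\onomial'\times\nonomial$ using idempotency of $\times$.

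Concretely, first I would apply Lemma~\ref{lem:canonical-saturation-RIs} to $(x,y,\onomial)\in R^{\Imc_i}$ to get $(S\sqsubseteq R,\monomial_{S\sqsubseteq R})\in\mn{saturate}(\Omc)$ with $s\times\monomial_0\times\monomial_{S\sqsubseteq R}=\onomial$. Next I would apply Lemma~\ref{lem:canonical-saturation-GCIs} to $(y,\onomial')\in A^{\Imc_i}$ to obtain integers $k,k'\ge 0$, roles $P_i$ and concept names $A_i,B_i$ ($1\le i\le k$), concept names $B'_i$ ($1\le i\le k'$), a monomial $p$, the saturated axioms $(S\sqsubseteq P_i,\monomial_{S\sqsubseteq P_i})$, $(\exists\mn{inv}(P_i).A_i\sqsubseteq B_i,\monomial_{\exists\mn{inv}(P_i).A_i\sqsubseteq B_i})$, $(\top\sqsubseteq B'_i,\monomial_{\top\sqsubseteq B'_i})$ and $(B_1\sqcap\cdots\sqcap B_k\sqcap B'_1\sqcap\cdots\sqcap B'_{k'}\sqsubseteq A,p)$, elements $s_\ell$ with $(x,s_\ell)\in A_\ell^{\Imc_{i-1}}$, and the identity expressing $\onomial'$ in terms of $s$, $\monomial_0$, $p$ and the annotations $\monomial_{S\sqsubseteq P_i}$, $\monomial_{\exists\mn{inv}(P_i).A_i\sqsubseteq B_i}$, $\monomial_{\top\sqsubseteq B'_i}$. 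Adding to this list the two ontology axioms $(C\sqsubseteq\exists S,\monomial_0)$ and $(\exists R.A\sqsubseteq B,\nonomial)$, which lie in $\Omc\subseteq\mn{saturate}(\Omc)$, yields exactly the premises of $\CR^T_{3}$ under the correspondence where the ``$A$'' of the rule is $C$, its ``$Q$'' is $S$, its ``$P$'' is $R$, its ``$C$'' is $A$, its ``$D$'' is $B$, and its $P_i,A_i,B_i,B'_i$ are as above. At this point I would note that the side conditions of the rule are met because $\Omc$ is in normal form: $C\in\NC\cup\{\top\}$ as the left-hand side of a $\sqsubseteq\exists$ axiom, $B\in\NC\cup\{\bot\}$, and the $A_i,B_i,B'_i$ are concept names.

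Because $\mn{saturate}(\Omc)$ is saturated, this instance of $\CR^T_{3}$ cannot modify it, so its conclusion $(C\sqcap A_1\sqcap\cdots\sqcap A_k\sqsubseteq B,r)$ already belongs to $\mn{saturate}(\Omc)$, where, reading off the rule, $r=\monomial_{S\sqsubseteq R}\times p\times\nonomial\times\monomial_0\times\prod_{i=1}^{k}(\monomial_{S\sqsubseteq P_i}\times\monomial_{\exists\mn{inv}(P_i).A_i\sqsubseteq B_i})\times\prod_{i=1}^{k'}\monomial_{\top\sqsubseteq B'_i}$. The requirement $(x,s_\ell)\in A_\ell^{\Imc_i}$ then follows from $(x,s_\ell)\in A_\ell^{\Imc_{i-1}}$ because the sequence $(\Imc_n)_{n\ge 0}$ is non-decreasing. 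Finally, I would check $s\times\prod_{\ell=1}^{k}s_\ell\times r=\onomial\times\onomial'\times\nonomial$ by substituting the expressions for $\onomial$ and $\onomial'$ supplied by the two lemmas and simplifying with commutativity and idempotency of $\times$, the latter collapsing the duplicated occurrences of $s$ and $\monomial_0$; the cases $k\ne 0$ and $k=0$ (where the products over $1\le\ell\le k$ and $1\le i\le k$ are empty and $\onomial'$ takes the simpler form from Lemma~\ref{lem:canonical-saturation-GCIs}, a case that also subsumes $A=\top$) are treated separately but identically. There is no genuine mathematical obstacle once the two auxiliary lemmas are in hand; the only place a slip is likely is in keeping the notational dictionary between $\CR^T_{3}$ and the objects produced by Lemma~\ref{lem:canonical-saturation-GCIs} straight and in respecting the empty-conjunction convention, so that when $k=k'=0$ the produced axiom is $(C\sqsubseteq B,r)$.
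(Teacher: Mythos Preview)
Your proposal is correct and follows essentially the same approach as the paper: apply Lemma~\ref{lem:canonical-saturation-RIs} to the edge $(x,y,\onomial)\in R^{\Imc_i}$, apply Lemma~\ref{lem:canonical-saturation-GCIs} to $(y,\onomial')\in A^{\Imc_i}$, assemble the resulting saturated axioms together with $(C\sqsubseteq\exists S,\monomial_0)\in\Omc$ and $(\exists R.A\sqsubseteq B,\nonomial)\in\Omc$ as premises of one instance of $\CR^T_{3}$, and then verify the monomial identity using $\times$-idempotency. The paper's proof is slightly terser (it writes the final monomial computation once, covering both $k\neq 0$ and $k=0$ implicitly via idempotency), but your explicit bookkeeping of the notational correspondence and the $k=0$ case is harmless and arguably clearer.
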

\begin{proof}
By Lemma \ref{lem:canonical-saturation-RIs}, there exists $(S\sqsubseteq R, \monomial_{S\sqsubseteq R}  )\in\mn{saturate}(\Omc)$ such that $$s\times \monomial_0\times\monomial_{S\sqsubseteq R}=\onomial,$$ and 
by Lemma \ref{lem:canonical-saturation-GCIs}, 
the following statements hold.
\begin{enumerate}
\item There exist $k\geq 0$ and $k'\geq 0$ such that:
\begin{itemize} 								
\item $(S\sqsubseteq P_i, \monomial_{S\sqsubseteq P_i})$ and  $(\exists \mn{inv}(P_i).A_i\sqsubseteq B_i, \monomial_{\exists \mn{inv}(P_i).A_i\sqsubseteq B_i})\in\mn{saturate}(\Omc)$, $1\leq i\leq k$,  

\item $(\top\sqsubseteq B'_i, \monomial_{\top\sqsubseteq B'_i})\in\mn{saturate}(\Omc)$, $1\leq i\leq k'$, 

\item $(B_1\sqcap\dots\sqcap B_k\sqcap B'_1\sqcap\dots\sqcap B'_{k'}\sqsubseteq A, p)\in\mn{saturate}(\Omc)$.
\end{itemize} 
\new{and $\mn{Conj}(C,A_1,\dots,A_k)=C\sqcap A_1\sqcap\dots\sqcap A_k$ (\ie there is no concept name that occurs more than $\mn{Card}(\Omc)$ times in $\{C, A_1,\dots,A_k\}$)}.

Moreover, if $A=\top$, then $k=k'=0$ and $p=1$.
 
\item For every $1\leq \ell\leq k$, there exists $(x,s_\ell)\in A_\ell^{\Imc_{i-1}}$.  
 
\item The monomials are related as follows:
\begin{itemize}
\item If $k\neq 0$,  $s\times\monomial_0\times\Pi_{\ell=1}^k s_\ell\times \Pi_{i=1}^k (\monomial_{S\sqsubseteq P_i}\times \monomial_{\exists \mn{inv}(P_i).A_i\sqsubseteq B_i})\times\Pi_{i=1}^{k'}\monomial_{\top\sqsubseteq B'_i}\times p =\onomial'$.
\item If $k=0$, 
$\Pi_{i=1}^{k'}\monomial_{\top\sqsubseteq B'_i}\times p =\onomial'$.
\end{itemize}
\end{enumerate} 

\noindent Hence, 
if $(\exists R.A\sqsubseteq B,\nonomial)\in\Omc$, since $(C\sqsubseteq \exists S, \monomial_0)\in\Omc$, it follows by $\CR^T_{3}$ that \new{$(\mn{Conj}(C\sqcap A_1\sqcap\dots\sqcap A_k)\sqsubseteq B,r )\in\mn{saturate}(\Omc)$, \ie $(C\sqcap A_1\sqcap\dots\sqcap A_k\sqsubseteq B,r )\in\mn{saturate}(\Omc)$} with $$r= \monomial_0\times \nonomial\times \monomial_{S\sqsubseteq R}\times\Pi_{i=1}^k (\monomial_{S\sqsubseteq P_i}\times \monomial_{\exists \mn{inv}(P_i).A_i\sqsubseteq B_i})\times\Pi_{i=1}^{k'}\monomial_{\top\sqsubseteq B'_i}\times p,$$ so that  
\begin{align*}
s\times\Pi_{\ell=1}^k s_\ell\times r 
{}={}&
s\times\Pi_{\ell=1}^k s_\ell\times \monomial_0\times
\nonomial\times\monomial_{S\sqsubseteq R}\times \Pi_{i=1}^k (\monomial_{S\sqsubseteq P_i}\times \monomial_{\exists \mn{inv}(P_i).A_i\sqsubseteq B_i})\times\Pi_{i=1}^{k'}\monomial_{\top\sqsubseteq B'_i}\times p
\\
{}={}&
\monomial_{S\sqsubseteq R}\times s\times \monomial_0\times\Pi_{\ell=1}^k s_\ell\times \Pi_{i=1}^k (\monomial_{S\sqsubseteq P_i}\times \monomial_{\exists \mn{inv}(P_i).A_i\sqsubseteq B_i})\times\Pi_{i=1}^{k'}\monomial_{\top\sqsubseteq B'_i}\times p \times
\nonomial
\\
{}={}&\onomial\times\onomial'\times\nonomial.\qedhere
\end{align*}
\end{proof}

We are now ready to prove our theorem.

\CorrectnessELHI*
\begin{proof}
(1) \new{If $\Omc$ is unsatisfiable, $\Omc\models(\alpha,\monomial)$ for every $(\alpha,\monomial)$. If $\Omc$ is satisfiable, by point (1) of Lemma~\ref{lem:saturation-correct-for-axioms}, $\Imc\models (\alpha,\monomial)$ for every model $\Imc$ of $\Omc$, so $\Omc\models(\alpha,\monomial)$. Moreover, if $\alpha=A(a_\top)$, since $a_\top\notin\individuals{\Omc}$, it is easy to check that $\Omc\models (A(a_\top),\monomial)$ implies $\Omc\models (A(c),\monomial)$ for every $c\in\NI$ (e.g., by considering the canonical model $\Imc_\Omc$ of $\Omc$).} 

\medskip
\noindent(2.a) Assume that $\Omc$ is satisfiable and let $\alpha$ be an assertion \new{of the form $A(a)$ or $R(a,b)$ with $a,b\in\individuals{\Omc}$} and $\monomial$ a monomial. We show that if $\Omc\models(\alpha,\monomial)$ then $(\alpha,\monomial)\in\mn{saturate}(\Omc)$ 
by contrapositive: 
assuming that $(\alpha, {\monomial})\notin \mn{saturate}(\Omc)$, we show that the canonical model $\Imc_\Omc=\bigcup_{i\geq 0}\Imc_i$ of $\Omc$ (\cf Figure~\ref{fig:rules-canonical-why-normal}) is such that $\Imc_\Omc\not\models(\alpha,\monomial)$. 
We show by induction that for every $i$, for every assertion $\beta$ \new{of the form $S(a,b)$ or $B(a)$ with $a,b\in\individuals{\Omc}$} and every monomial $\monomial$, if $(\beta, {\monomial})\notin\mn{saturate}(\Omc)$, then $\Imc_i\not\models (\beta,\monomial)$. 

For $i=0$, for every assertion $\beta$, $(\beta, {\monomial})\notin \mn{saturate}(\Omc)$ implies $(\beta, {\monomial})\notin \Omc$, so $\Imc_0\not\models (\beta,\monomial)$ by construction of $\Imc_0$. 

Assume that the property holds for $i\geq 0$ and let $\beta$ \new{be an assertion of the form $S(a,b)$ or $B(a)$ with $a,b\in\individuals{\Omc}$} such that $(\beta, {\monomial})\notin \mn{saturate}(\Omc)$. Assume for a contradiction that $\Imc_{i+1}\models (\beta,\monomial)$. 
Since $\Imc_{i}\not\models (\beta,\monomial)$ by the induction hypothesis, it follows  that $\Imc_{i+1}$ has been obtained from $\Imc_i$ by applying \new{the chase rule in one of the cases} $\R_1$ to $\R_6$ (since the tuples added by \new{applying the chase rule in cases} $\R_7$ and $\R_8$ involve at least one domain element $x\in\Delta^\Imc\setminus\NI$). 
We next show that in every case, $(\beta,{\monomial})\in\mn{saturate}(\Omc)$. 
\begin{itemize}
\item $\R_1$: $(\beta,\monomial)=(S(a,b),\onomial\times\nonomial)$ and it holds that $\Imc_i\models(R(a,b),\onomial)$ and $(R\sqsubseteq S,\nonomial)\in\Omc$. 
By induction hypothesis, since $\Imc_i\models(R(a,b),\onomial)$, then $(R(a,b),\onomial)\in\mn{saturate}(\Omc)$. 
Hence, it follows from the construction of $\mn{saturate}(\Omc)$ (by $\CR^A_{4}$) that $(\beta,\monomial)\in\mn{saturate}(\Omc)$.

\item $\R_2$: $(\beta,\monomial)=(S(a,b),\onomial\times\nonomial)$ and it holds that $\Imc_i\models(R(b,a),\onomial)$ and $(R\sqsubseteq S^-,\nonomial)\in\Omc$. 
By induction hypothesis, since $\Imc_i\models(R(b,a),\onomial)$, then $(R(b,a),\onomial)\in\mn{saturate}(\Omc)$. 
Thus, it follows from the construction of $\mn{saturate}(\Omc)$ (by $\CR^A_{5}$) that $(\beta,\monomial)\in\mn{saturate}(\Omc)$.

\item \new{$\R_3$: $(\beta,\monomial)=(B(a),\onomial\times\nonomial)$ and it holds that $\Imc_i\models(A(a),\onomial)$ and $(A\sqsubseteq B, \nonomial)\in\Omc$. 
By induction hypothesis, since $\Imc_i\models(A(a),\onomial)$, then $(A(a),\onomial)\in\mn{saturate}(\Omc)$. 
Hence, it follows from the construction of $\mn{saturate}(\Omc)$ (by $\CR^A_{1}$) that $(\beta,\monomial)\in\mn{saturate}(\Omc)$.}

\item $\R_4$: $(\beta,\monomial)=(B(a),\onomial_1\times\onomial_2\times\nonomial)$ and it holds that $\Imc_i\models(A_1(a),\onomial_1)$, $\Imc_i\models(A_2(a),\onomial_2)$ and $(A_1\sqcap A_2\sqsubseteq B, \nonomial)\in\Omc$. 
By induction hypothesis, since $\Imc_i\models(A_j(a),\onomial_j)$ for $1\leq j\leq 2$, then $(A_j(a),\onomial_j)\in\mn{saturate}(\Omc)$. 
Hence, it follows from the construction of $\mn{saturate}(\Omc)$ (by $\CR^A_{1}$) that $(\beta,\monomial)\in\mn{saturate}(\Omc)$.

\item $\R_5$: $(\beta,\monomial)=(B(a),\onomial\times\onomial'\times\nonomial)$ and it holds that $\Imc_i\models(R(a,x),\onomial)$, $\Imc_i\models(A(x),\onomial')$, and $(\exists R.A\sqsubseteq B, \nonomial)\in\Omc$. 

If \new{$x\in\NI$, since $a\in\individuals{\Omc}$, it must be the case that $x\in\individuals{\Omc}$ by construction of $\Imc_i$ (since the chase rule never adds two individuals in a role interpretation if they are not already related together in some role interpretation). Hence}, by induction hypothesis, since $\Imc_i\models(R(a,x),\onomial)$ and $\Imc_i\models(A(x),\onomial')$, then $(R(a,x),\onomial)\in\mn{saturate}(\Omc)$ and $(A(x),\onomial')\in\mn{saturate}(\Omc)$. 
Hence, it follows from the construction of $\mn{saturate}(\Omc)$ (by $\CR^A_{2}$) that $(\beta,\monomial)\in\mn{saturate}(\Omc)$.

Otherwise, if $x\notin\NI$, $x$ has been introduced during the construction of $\Imc_i$, let us say between $\Imc_{j-1}$ and $\Imc_j$ (with $j\leq i$), to satisfy an inclusion of the form $(C\sqsubseteq \exists S, \monomial_0)$ that belongs to $\Omc$, when applying \new{the chase rule in case} $\R_7$ or $\R_8$ ($S$ can be a role name or an inverse role). 
In this case, there exists a monomial $s$ such that $(a,s)\in C^{\Imc_{j-1}}$ to which \new{the chase rule} has been applied to add $(a,x,s\times\monomial_0)$ to $S^{\Imc_{j-1}}$, so $(C(a),s)\in\mn{saturate}(\Omc)$ by induction. Moreover, by Lemma~\ref{claimcomplELHI}:
\begin{itemize} 
\item there exists $(C\sqcap A_1\sqcap\dots\sqcap A_k\sqsubseteq B,r )\in\mn{saturate}(\Omc)$ for some $k\geq 0$, such that for every $1\leq \ell\leq k$, $(a,s_\ell)\in A_\ell^{\Imc_i}$, so by induction $(A_\ell(a),s_\ell)\in\mn{saturate}(\Omc)$, 
\item and $s\times\Pi_{\ell=1}^k s_\ell\times r= \onomial\times\onomial'\times\nonomial =\monomial$. 
\end{itemize}
 
Hence, $(\beta, {\monomial})\in\mn{saturate}(\Omc)$ by $\CR_{1}^A$.

\item The case $\R_6$ is similar to the case $\R_5$, using $\CR^A_{3}$ instead of $\CR^A_{2}$ in the case $x\in\NI$.

\end{itemize}
We have thus shown that $(\beta, {\monomial})\in\mn{saturate}(\Omc)$ regardless the form of the rule applied between $\Imc_i$ and $\Imc_{i+1}$, which contradicts our original assumption.  
Hence $\Imc_{i+1}\not\models (\beta,\monomial)$. 
We conclude by induction that for every $(\beta,\monomial)$ \new{with $\beta$ of the form $S(a,b)$ or $B(a)$ with $a,b\in\individuals{\Omc}$} such that $(\beta, {\monomial})\notin\mn{saturate}(\Omc)$, $\Imc_i\not\models (\beta,\monomial)$ for every $i\geq 0$, so that $\Imc_\Omc\not\models (\beta,\monomial)$. 
In particular, $\Imc_\Omc\not\models (\alpha,\monomial)$ so $\Omc\not\models (\alpha,\monomial)$.

\medskip
\noindent\new{(2.b) Assume that $\Omc$ is satisfiable and that $\Omc\models (A(c),\monomial)$ with $c\in\NI\setminus\individuals{\Omc}$. It is easy to check that this implies that $\Omc\models (A(d),\monomial)$ for every $d\in\NI$, in particular, $\Omc\models (A(a_\top),\monomial)$. Consider $\Omc'=\Omc\cup\{(\mi{Top}(a_\top),1)\}$ where $\mi{Top}$ is a fresh concept name. Clearly, $\Omc$ and $\Omc'$ entail the same annotated assertions, except for $(\mi{Top}(a_\top),1)$ which is entailed by $\Omc'$ but not by $\Omc$. In particular, $\Omc'\models (A(a_\top),\monomial)$. Hence, by point (2.a), $(A(a_\top),\monomial)\in\mn{saturate}(\Omc')$. Moreover, it is easy to see that $\mn{saturate}(\Omc')=\mn{saturate}(\Omc)\blue{\cup\{(\mi{Top}(a_\top),1)\}}\cup\Bmc_\top$ where $\Bmc_\top$ contains only assertions on the fresh element $b_\top\notin\individuals{\Omc'}$ introduced by the initialization of the completion algorithm for $\Omc'$ (which is different from $a_\top\in\individuals{\Omc'}\setminus\individuals{\Omc}$). It follows that $(A(a_\top),\monomial)\in\mn{saturate}(\Omc)$.}

\medskip
\noindent(3) Assume that $\Omc$ is unsatisfiable. We show that there exists $(\bot(a),\nonomial)\in\mn{saturate}(\Omc)$ where $a\in\individuals{\Omc}\new{\cup\{a_\top\}}$. 

\new{Let $\Omc'$ be the set of annotated axioms obtained from $\mn{saturate}(\Omc)$ as follows: remove all assertions on $a_\top$; replace $\bot$ by a fresh concept name $\mi{Bot}$ in the assertions and GCIs; remove all negative RIs. 
Since $\bot$ does not occur in $\Omc'$, $\Omc'$ is satisfiable, and we have $\individuals{\Omc'}=\individuals{\Omc}$. 
We can show that $\mn{saturate}(\Omc')$ is as follows, assuming w.l.o.g.~that the same fresh individual $a_\top$ is used when initializing the saturation algorithm for $\Omc$ and $\Omc'$ (note that $\signature{\Omc'}=\signature{\Omc'}\cup\{\mi{Bot}\}$ and that $(\mi{Bot}\sqsubseteq\mi{Bot},1)\in\Omc'$ because $(\bot\sqsubseteq\bot,1)\in\mn{saturate}(\Omc)$).
\begin{align*}
(\dagger)\ \mn{saturate}(\Omc')=&\Omc'\cup\{(\exists R.\bot\sqsubseteq \bot,1),(\exists R^-.\bot\sqsubseteq\bot,1)\mid R\in\NR\cap\signature{\Omc}\}\cup \{(\top(a_\top),1)\}
\\&\cup\{(A(a_\top),\monomial)\mid (A(a_\top),\monomial)\in\mn{saturate}(\Omc),  A\in\NC\}
\\&\cup\{(\mi{Bot}(a_\top),\monomial)\mid (\bot(a_\top),\monomial)\in\mn{saturate}(\Omc)\}.
\end{align*}
Indeed, the saturation rules from Table \ref{tab:completionRules} treat $\bot$ in the same way as a concept name when it occurs in a GCI that appears as a premise of the rule, and $(\bot(a_\top),\monomial)$ can be added to $\mn{saturate}(\Omc)$ only by $\CR^A_1$-$\CR^A_3$ (and not by $\CR^A_6$ and $\CR^A_7$) since the completion rules will never add $a_\top$ in a role assertion.}
 
\new{Let $\Imc'$ be the canonical model of $\Omc'$ (we extend straightforwardly the definition of the canonical model of an annotated ontology to the canonical model of a set of annotated axioms since the unicity of axiom annotation does not play any role for the canonical model).  
Since $\Omc$ is unsatisfiable, $\Omc$ does not have any model so $\Imc'$ is not a model of $\Omc$. Since $\Omc'$ contains all axioms of $\Omc$ except those with $\bot$ as right-hand side}, it follows that $\Imc'$ does not satisfy (i) a GCI of $\Omc$ with $\bot$ as right-hand side or (ii) a negative RI of $\Omc$. 

\begin{itemize}
\item Case (i): $\Imc'$ does not satisfy a GCI $(D\sqsubseteq \bot,v)\in\Omc$, \ie there exists $(d,\monomial)\in D^{\Imc'}$. By construction of $\Omc'$, $(D\sqsubseteq \mi{Bot},v)\in\Omc'$ so by construction of $\Imc'$, $(d,\monomial\times v)\in \mi{Bot}^{\Imc'}$. 

 First note that 
if $d\notin\NI$, then 
$d$ has been introduced during the construction of $\Imc'$, let us say between $\Imc'_{j-1}$ and $\Imc'_j$, to satisfy an inclusion of the form $(C\sqsubseteq \exists P, \monomial_0)$ that belongs to $\Omc'$, hence to $\Omc$ \new{(since completion rules do not introduce axioms of this form)}, when applying \new{the chase rule in case} $\R_7$ or $\R_8$  with some $(c,s)\in C^{\Imc'}$.

It is easy to check by induction that there exist $c_0=a\in\new{\NI}$ and $c_1,\dots, c_k\notin\NI$ such that $k\geq 0$, $d=c_k$ and each $c_i$ has been introduced during the construction of $\Imc'$ to satisfy an inclusion of the form $(C_i\sqsubseteq \exists P_i, \monomial_i)$ that belongs to $\Omc$, when applying \new{the chase rule in case} $\R_7$ or $\R_8$ with $(c_{i-1},s_{i-1})\in C_{i-1}^{\Imc'}$, so that $(c_{i-1},c_i, s_{i-1}\times\monomial_i)\in P_i^{\Imc'}$. 
\new{We show by induction on $k$ that $(d,\monomial\times v)\in \mi{Bot}^{\Imc'}$ implies that there exists $(\mi{Bot}(a), \nonomial)\in\mn{saturate}(\Omc')$ if $a\in\individuals{\Omc}$ and $(\mi{Bot}(a_\top), \nonomial)\in\mn{saturate}(\Omc')$ if $a\in\NI\setminus\individuals{\Omc}$. By $(\dagger)$ and by definition of $\Omc'$, it will follow that $(\bot(a),\nonomial)\in\mn{saturate}(\Omc)$ if $a\in\individuals{\Omc}$ and $(\bot(a_\top), \nonomial)\in\mn{saturate}(\Omc)$ otherwise.}

\begin{itemize}
\item Base case: $k=0$, \ie $d=a\in\new{\NI}$. \new{Since $a\in\NI$ and $\Imc'\models (\mi{Bot}(a),\monomial\times v)$, it follows by Theorem \ref{thm:can-model-main} (straightforwardly extended to use a set of annotated axioms instead of an annotated ontology) that $\Omc'\models(\mi{Bot}(a),\monomial\times v)$.}
\begin{itemize}
\item \new{If $a\in\individuals{\Omc}=\individuals{\Omc'}$, since $\Omc'$ is satisfiable, by point (2.a), we obtain that $(\mi{Bot}(a),\monomial\times v)\in\mn{saturate}(\Omc')$.}
\item \new{Otherwise, if $a\in\NI\setminus\individuals{\Omc}=\NI\setminus\individuals{\Omc'}$, since $\Omc'$ is satisfiable, by point (2.b), we obtain that $(\mi{Bot}(a_\top),\monomial\times v)\in\mn{saturate}(\Omc')$.}
\end{itemize}
\item Induction step: Assume that the property is true for $k$ and consider $(d,\monomial\times v)\in \mi{Bot}^{\Imc'}$ such that there exists $c_0=a\in\new{\NI}$ and $c_1,\dots, c_{k+1}\notin\NI$ with $d=c_{k+1}$ and each $c_i$ has been introduced during the construction of $\Imc'$ to satisfy some $(C_i\sqsubseteq \exists P_i, \monomial_i)\in\Omc$, when applying \new{the chase rule in case} $\R_7$ or $\R_8$ with $(c_{i-1},s_{i-1})\in C_{i-1}^{\Imc'}$, so that $(c_{i-1},c_i, s_{i-1}\times\monomial_i)\in P_i^{\Imc'}$. 
Since $(c_{k},d, s_{k}\times\monomial_{k+1})\in P_{k+1}^{\Imc'}$, $(d,\monomial\times v)\in \mi{Bot}^{\Imc'}$, and $(\exists P_{k+1}.\bot\sqsubseteq \bot,1)\in\mn{saturate}(\Omc)$ (by initialization of $\Smc$), so that $(\exists P_{k+1}.\mi{Bot}\sqsubseteq \mi{Bot},1)\in\Omc'$, it follows that $(c_k, s_{k}\times\monomial_{k+1}\times\monomial\times v)\in \mi{Bot}^{\Imc'}$. 
Hence, by induction hypothesis, there exists $(\mi{Bot}(a),\nonomial)\in\mn{saturate}(\Omc')$ \new{if $a\in\individuals{\Omc}$ and $(\mi{Bot}(a_\top), \nonomial)\in\mn{saturate}(\Omc')$ otherwise}.
\end{itemize}

\item Case (ii):  $\Imc'$ does not satisfy a RI $(P\sqcap Q\sqsubseteq \bot,v)\in\Omc$, \ie there exists $(c,d,\monomial_1)\in P^{\Imc'}$ and $(c,d,\monomial_2)\in Q^{\Imc'}$. 
\begin{itemize}
\item If \new{$c,d\in\NI$, then by the construction of $\Imc'$, it must be the case that  $c,d\in\individuals{\Omc}$ (since the chase rule never adds two individuals in a role interpretation if they are not already related together in some role interpretation).} Since $\Imc'\models (P(c,d),\monomial_1)$ and $\Imc'\models (Q(c,d),\monomial_2)$, then \new{by Theorem~\ref{thm:can-model-main} $\Omc'$ satisfies $(P(c,d),\monomial_1)$ and $(Q(c,d),\monomial_2)$, and by point (2.a) of the theorem, since $\Omc'$ is satisfiable, $(P(c,d),\monomial_1)$ and $(Q(c,d),\monomial_2)$ are in $\mn{saturate}(\Omc')$ hence in $\mn{saturate}(\Omc)$ by $(\dagger)$ and construction of $\Omc'$.} 
Hence, since $(P\sqcap Q\sqsubseteq\bot,\variable)\in\Omc$, by $\CR^A_{6}$ or $\CR^A_{7}$, $(\bot(c),\variable\times\monomial_1\times\monomial_2)\in\mn{saturate}(\Omc)$.

\item If $c$ or $d$ is not in $\NI$,  assume that $d\notin\NI$ has been introduced after $c$ in the construction of $\Imc'$ (the case where $c$ has been introduced after $d$ is similar). 
Hence $d$ has been introduced during the construction of $\Imc'$, let us say between $\Imc'_{j-1}$ and $\Imc'_j$, to satisfy an inclusion of the form $(C\sqsubseteq \exists S, \monomial_0)$ or $(C\sqsubseteq \exists S^-, \monomial_0)$ that belongs to $\Omc$, when applying \new{the chase rule in case} $\R_7$ or $\R_8$ respectively. 
Let us consider the case $(C\sqsubseteq \exists S, \monomial_0)$ (the case $(C\sqsubseteq \exists S^-, \monomial_0)$ is similar). 
We can show that there exist
\begin{itemize}
\item $(c,s)\in C^{\Imc'}$,
\item $(S\sqsubseteq P, \onomial_1)\in\Omc'$ hence in $\mn{saturate}(\Omc)$ such that $s\times\monomial_0\times\onomial_1=\monomial_1$, and 
\item $(S\sqsubseteq Q, \onomial_2)\in\Omc'$ hence in $\mn{saturate}(\Omc)$ such that $s\times\monomial_0\times\onomial_2=\monomial_2$.
\end{itemize}
By $\CR^T_{0}$, since $(P\sqcap Q\sqsubseteq\bot,\variable)\in\mn{saturate}(\Omc)$, it follows that $(C\sqsubseteq \bot, \monomial_0\times\onomial_1\times\onomial_2\times\variable)\in \mn{saturate}(\Omc)$. 
Hence $(C\sqsubseteq \mi{Bot}, \monomial_0\times\onomial_1\times\onomial_2\times\variable)\in \Omc'$ and by the construction of $\Imc'$, $(c, s\times\monomial_0\times\onomial_1\times\onomial_2\times\variable) \in \mi{Bot}^{\Imc'}$, \ie $(c,\monomial_1\times\monomial_2\times\variable)\in\mi{Bot}^{\Imc'}$. We can now use the argument of case (i) to conclude that there exists $a\in\individuals{\Omc}\new{\cup\{a_\top\}}$ such that $(\bot(a),\nonomial)\in\mn{saturate}(\Omc)$ for some $\nonomial$.
 \end{itemize}
 \end{itemize}

It follows that in all cases, there exists $(\bot(a),\nonomial)\in\mn{saturate}(\Omc)$ where $\new{a\in\individuals{\Omc}\cup\{a_\top\}}$.

\medskip
\noindent(4) $\mn{saturate}(\Omc)$ can be computed in exponential time \wrt the size of $\Omc$. Indeed, every annotated axiom $(\alpha,\monomial)\in\mn{saturate}(\Omc)\setminus\Omc$ is such that $\monomial$ has at most $\mn{Card}(\Omc)$ variables 
\new{(since they represent sets of variables by definition of $\why$)},
so the number of possible $\monomial$ is exponential in the size of $\Omc$ \new{(more precisely, there are $2 ^{\mn{Card}(\Omc)}$ monomials since each monomial is obtained by choosing for each variable whether it is in the product or not)}, and $\alpha$ is of one of the following form:
\begin{itemize}
\item a concept or role assertion built from individual names and predicates that occurs in $\Omc$ \new{(plus $a_\top$, $\top$, $\bot$)}, and there are polynomially many such assertions,
\item a RI $P_1\sqsubseteq P_2$ where $P_1, P_2$ or their inverses occur in $\Omc$, and there are polynomially many such RIs,
\item a negative RI $P_1\sqcap P_2\sqsubseteq \bot$ or a GCI of the form $\exists P.\bot\sqsubseteq \bot$ where $P_1, P_2$ or their inverses occur in $\Omc$, and there are polynomially many such axioms,
\item a GCI $A_1\sqcap\dots\sqcap A_k\sqsubseteq B$, where $A_1,\dots,A_k$ are concept names that occur in $\Omc$ (or $k=1$ and $A_1=\top$) and $B$ is a concept name that occurs in $\Omc$ or $\bot$. Moreover, \new{each concept name can occur at most $\mn{Card}(\Omc)$ times in $A_1\sqcap\dots\sqcap A_k$ (since conjunctions are treated as multisets with maximal multiplicity $\mn{Card}(\Omc)$). Hence the number of possible left-hand sides $A_1\sqcap\dots\sqcap A_k$ is bounded by $K^N$ where $K$ is the biggest possible $k$ and $N$ is the number of concept names that occur in $\Omc$ plus 1, since each such left-hand side can be obtained by choosing for each $1\leq i\leq K$ one concept name or $\top$. 
The number of such GCIs is thus bounded by $N*K^N$ (since there are $N$ possibilities for the right-hand side $B$). 
Since $N\leq |\signature{\Omc}|+1\leq |\Omc|$ and $K\leq |\signature{\Omc}|*\mn{Card}(\Omc)\leq |\Omc|^2$, then $N*K^N\leq |\Omc|^{2|\Omc|+1}$.}
\end{itemize}
\new{Hence, in the worst case, $\mn{saturate}(\Omc)$ adds to $\Omc$ $O(|\Omc|^{2|\Omc|+1})$ axioms where each axiom may be annotated with $2^{\mn{Card}(\Omc)}\leq 2^{|\Omc|}$ monomials. The total number of annotated axioms added is thus at most exponential in the size of $\Omc$ (there are \blue{$O(e^{p_1(|\Omc|)})$ possible annotated axioms where $p_1$ is a polynomial function}).} 

\blue{Moreover, the number of premises of the rules is bounded by $p_2(|\Omc|)$ for some polynomial function $p_2$ (since we have seen that the size of the conjunctions in the left-hand side of GCIs is bounded by $|\Omc|^2$), so for each rule, the number of rule instantiations \wrt the axioms already derived is bounded by $(e^{p_1(|\Omc|)})^{p_2(|\Omc|)}=e^{p_3(|\Omc|)}$ where $p_3(x)=p_1(x)*p_2(x)$ is still a polynomial function.}

\blue{Hence, for each annotated axiom in $\mn{saturate}(\Omc)$, there have been at most $O(e^{p_3(|\Omc|)})$ rule instantiation evaluations during the step that added this axiom, so we can bound the total run time of the algorithm by $O(e^{p_1(|\Omc|)+p_3(|\Omc|)})=O(e^{p(|\Omc|)})$, where $p(x)=p_1(x)+p_3(x)$ is still a polynomial function.} 
It follows that the algorithm runs in exponential time.
\end{proof}

\paragraph{Proof of Theorem~\ref{prop:completionalgorithmELHIrestr}} 
Recall that $\Omc$ belongs to $\ELHIbotrestr$ if $\Omc$ is an $\ELHIbot$ ontology in normal form such that  \new{if $C\sqsubseteq \exists P_1\in\Omc$, $P_1\sqsubseteq_\Omc P_2$, and $\exists\mn{inv}(P_2).A\sqsubseteq B \in \Omc$, then $A=\top$.}  
In this case, we modify the completion algorithm by restricting the rules from Table~\ref{tab:completionRules} that may introduce exponentially many axioms, and add two additional rules $\CR^T_{4}$ and $\CR^T_{5}$ (see Figure~\ref{fig:completion-rules-restr}). 
The following lemmas are the counterparts  of Lemmas~\ref{lem:canonical-saturation-GCIs} and \ref{claimcomplELHI} for the $\ELHIbotrestr$ case.  
We consider a \why-annotated $\ELHIbotrestr$ ontology $\Omc$,  its canonical model $\Imc_\Omc=\bigcup_{i\geq 0}\Imc_i$ (\cf Figure~\ref{fig:rules-canonical-why-normal}) and the set $\mn{saturate}(\Omc)$ obtained from $\Omc$ using the modified completion rules of Figure~\ref{fig:completion-rules-restr}.

\begin{lemma}\label{lem:canonical-saturation-GCIs-bis}
For all $i\geq 0$, $x,y\in\Delta^{\Imc_\Omc}$ and $A\in\NC\cup\{\top\}$  if $(y,\onomial')\in A^{\Imc_i}$ and $y\notin\NI$ has been introduced between $\Imc_{j-1}$ and $\Imc_j$ ($j\leq i$) to satisfy an inclusion of the form $(C\sqsubseteq \exists S, \monomial_0)\in\Omc$ by applying \new{the chase rule in case} $\R_7$ or $\R_8$ to some $(x,s)\in C^{\Imc_{j-1}}$, 
then the following holds.
\begin{enumerate}
\item There exist $k\geq 0$ and $k'\geq 0$ such that $k+k'\leq 2$ and:
\begin{itemize}								
\item $(S\sqsubseteq P_i, \monomial_{S\sqsubseteq P_i})$ and  $(\exists \mn{inv}(P_i).\top\sqsubseteq B_i, \monomial_{\exists \mn{inv}(P_i).\top\sqsubseteq B_i})\in\mn{saturate}(\Omc)$, $1\leq i\leq k$,  

\item $(\top\sqsubseteq B'_i, \monomial_{\top\sqsubseteq B'_i})\in\mn{saturate}(\Omc)$, $1\leq i\leq k'$, 

\item $(B_1\sqcap\dots\sqcap B_k\sqcap B'_1\sqcap\dots\sqcap B'_{k'}\sqsubseteq A, p)\in\mn{saturate}(\Omc)$ (note that the conjunction contains at most two concept names since $k+k'\leq 2$).
\end{itemize}

Moreover, if $A=\top$, then $k=k'=0$ and $p=1$ (recall that the empty conjunction is $\top$ and $(\top\sqsubseteq\top,1)\in\mn{saturate}(\Omc)$).
 
\item The monomials are related as follows:
\begin{itemize}
\item If $k\neq 0$,  $s\times\monomial_0\times \Pi_{i=1}^k (\monomial_{S\sqsubseteq P_i}\times \monomial_{\exists \mn{inv}(P_i).\top\sqsubseteq B_i})\times\Pi_{i=1}^{k'}\monomial_{\top\sqsubseteq B'_i}\times p =\onomial'$.
\item If $k=0$, 
$\Pi_{i=1}^{k'}\monomial_{\top\sqsubseteq B'_i}\times p =\onomial'$.
\end{itemize}
\end{enumerate} 
\end{lemma}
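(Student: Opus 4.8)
The plan is to prove Lemma~\ref{lem:canonical-saturation-GCIs-bis} by induction on the distance $l = i - j$, mirroring exactly the structure of the proof of Lemma~\ref{lem:canonical-saturation-GCIs}, but exploiting the defining restriction of $\mathcal{ELHI_\bot^{\mi{restr}}}$ to bound the conjunction sizes. The crucial observation is that because $y$ is an anonymous element introduced to witness some $(C \sqsubseteq \exists S, \monomial_0) \in \Omc$, any concept name $A_\ell$ on $y$ coming from the pattern $(\exists \mn{inv}(P_i).A_i \sqsubseteq B_i, \nonomial_i)$ in the old rule $\CR^T_3$ can only be non-trivial if there exists $C' \in \NC$ with $\Omc \models C' \sqsubseteq \exists P_i$; but $P_i$ is a super-role of $S$ reached along $y$, so $\Omc \models C \sqsubseteq \exists S \sqsubseteq \exists P_i$ gives exactly the precondition of Definition~\ref{def:ELHIrestr}(2), forcing $A_i = \top$. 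This collapses the unbounded family $\{A_\ell\}$ from Lemma~\ref{lem:canonical-saturation-GCIs} to nothing, which is why the statement here has no ``$(x, s_\ell) \in A_\ell^{\Imc_{i-1}}$'' clause and why the modified rule $\CR^T_3$ (with $A_i = \top$) together with $\CR^T_4$ suffices. The bound $k + k' \le 2$ will come from the fact that the restricted rules $\CR^T_2$, $\CR^T_3$, $\CR^T_4$, $\CR^T_5$ all produce conjunctions with at most two concept names on the left-hand side, so whenever the induction composes subsumptions via $\CR^T_2$ one never needs more than two conjuncts.

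First I would set up the base case $l = 1$ (and the trivial case $A = \top$, where $k = k' = 0$, $p = 1$ works for all $l$): here $(y, \onomial') \in A^{\Imc_i}$ was obtained by a single rule application to the freshly created edge $(x, y, s \times \monomial_0) \in S^{\Imc_j}$, so either $\R_3$ fired with $(\top \sqsubseteq A, \monomial_1) \in \Omc$ (giving $k = 0$, $k' = 1$, $B_1' = A$, $p = 1$, $\monomial_{\top\sqsubseteq B_1'} = \monomial_1$), or $\R_4/\R_5$ fired with $(\exists \mn{inv}(S).A' \sqsubseteq A, \monomial_1) \in \Omc$; by the $\mathcal{ELHI_\bot^{\mi{restr}}}$ restriction $A' = \top$, so this is $k = 1$, $k' = 0$, $P_1 = S$, $B_1 = A$, $\monomial_{S\sqsubseteq P_1} = 1$, $\monomial_{\exists\mn{inv}(P_1).\top\sqsubseteq B_1} = \monomial_1$, $p = 1$. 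In the inductive step I would do the same case split on the last rule applied to produce $(y, \onomial') \in A^{\Imc_i}$: for $\R_3$ with $(D_1 \sqcap D_2 \sqsubseteq A, \cdot)$, apply the induction hypothesis to $(y, \onomial_1) \in D_1^{\Imc_{i-1}}$ and $(y, \onomial_2) \in D_2^{\Imc_{i-1}}$, getting two restricted derivations, then compose them with the restricted $\CR^T_2$ and check that the resulting conjunction has at most two conjuncts — here I would need to argue that the two halves, being themselves bounded-by-2 conjunctions of $B_i$'s, can be merged because $D_1$ and $D_2$ are the $A_1, A_2$ of a size-$\le 2$ rule, so in fact each half reduces to a single concept name or $\top$ when plugged into the inequality and the restricted $\CR^T_2$ premise $|M| \le 2$ is satisfied. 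For $\R_4/\R_5$ with $(\exists P.D \sqsubseteq A, \cdot)$, split on whether the $P$-predecessor $z$ of $y$ equals $x$ or is itself anonymous: if $z = x$ use Lemma~\ref{lem:canonical-saturation-RIs} to find $(S \sqsubseteq \mn{inv}(P), \cdot)$ and note the $\mathcal{ELHI_\bot^{\mi{restr}}}$ restriction kills the $A'$ argument; if $z$ is anonymous, it witnesses some $(E_0 \sqsubseteq \exists S_z, \cdot)$, so recursively apply the lemma to $(z, \onomial_z) \in D^{\Imc_{i-1}}$ and combine via the restricted $\CR^T_3$ or $\CR^T_4$, again using the restriction to ensure all $A_i$'s are $\top$.

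I expect the main obstacle to be the bookkeeping in the $\R_3$ inductive case: one must verify that composing two ``$\le 2$-conjunction'' subsumptions through the restricted $\CR^T_2$ (which only allows $|M| \le 2$, $N$ empty) never overflows, which requires noticing that $D_1$ and $D_2$ each arose as one of the at-most-two conjuncts of a GCI applicable by $\R_3$, hence are single concept names, so the composition is $\CR^T_2$ applied with $M = \{D_1, D_2\}$ or $M$ a singleton and the result stays within the size bound. The monomial identities in point~(2) are then a routine rewrite using $\otimes$-idempotency, identical in spirit to Lemma~\ref{lem:canonical-saturation-GCIs}(3), and I would not spell them out in full. Once Lemma~\ref{lem:canonical-saturation-GCIs-bis} is in hand, its companion (the analogue of Lemma~\ref{claimcomplELHI}) follows by combining it with Lemma~\ref{lem:canonical-saturation-RIs} and an application of the restricted $\CR^T_3$ or $\CR^T_4$, exactly as in the unrestricted case but with all intermediate conjunctions of bounded width, which is what ultimately yields the polynomial-in-$|\Omc|$ bound of Theorem~\ref{th:complexity:provmonomial}.
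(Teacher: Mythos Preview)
Your overall strategy is correct and matches the paper's: induction on $l=i-j$, with a case split on the last rule used to derive $(y,\onomial')\in A^{\Imc_i}$, and you correctly identify that the $\mathcal{ELHI_\bot^{\mi{restr}}}$ restriction forces every filler $A_i$ arising along anonymous edges to be $\top$. Your base case and the $\R_4/\R_5$ subcase $z=x$ are fine.

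However, your treatment of the $\R_3$ case has a genuine gap. You propose to compose the two IH witnesses (each giving some $(B^j_1\sqcap\dots\sqcap B^j_{k_j}\sqcap B'^j_1\sqcap\dots\sqsubseteq D_j,p_j)$ with $k_j+k'_j\le 2$) through the restricted $\CR^T_2$, and you claim the result ``stays within the size bound''. But the restricted $\CR^T_2$ requires $N$ empty: it composes $(B_1\sqcap B_2\sqsubseteq A',\monomial_1)$ with $(A'\sqsubseteq C,\monomial_2)$, so the second premise must have a \emph{single} concept on the left. You cannot use it with $(D_1\sqcap D_2\sqsubseteq A,\cdot)$ as second premise, nor can it merge two $2$-conjunctions into one. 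Your observation that ``$D_1,D_2$ are single concept names'' is true but beside the point; the problem is the IH conjunctions on the other side.

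The paper resolves this differently, and this is precisely why $\CR^T_4$ and $\CR^T_5$ were introduced. From the IH for each $D_j$ one first \emph{compresses} the entire witness into a single axiom: if $k_j\ge 1$, apply $\CR^T_4$ (using Lemma~\ref{lem:inverse-roles-completion} to get $\mn{inv}(S)\sqsubseteq\mn{inv}(P^j_i)$) to obtain $(\exists\mn{inv}(S).\top\sqsubseteq D_j,\cdot)\in\mn{saturate}(\Omc)$; if $k_j=0$, apply $\CR^T_5$ to obtain $(\top\sqsubseteq D_j,\cdot)\in\mn{saturate}(\Omc)$. Now $D_1,D_2$ themselves play the role of the new $B_i$'s or $B'_i$'s, the original axiom $(D_1\sqcap D_2\sqsubseteq A,\cdot)\in\Omc$ serves as the final GCI, and one sets $P_i=S$ with $\monomial_{S\sqsubseteq P_i}=1$. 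This yields $K+K'\le 2$ automatically, with no use of $\CR^T_2$ at all in this case. Similarly, in the $\R_4/\R_5$ subcase $z\neq x$ you should apply the IH twice (once for $(z,\onomial_z)\in D$ relative to $y,S_z$, then for $(y,s_0)\in E_0$ relative to $x,S$), combine the first via the restricted $\CR^T_3$ to get $(E_0\sqsubseteq A,r_z)$, and only then use the restricted $\CR^T_2$ to compose with the second IH's $\le 2$-conjunction---which now works because $(E_0\sqsubseteq A,r_z)$ has a single concept on the left.
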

\begin{proof}
The proof is by induction on $l=i-j$. In the case where $A=\top$, the property trivially holds with $k=k'=0$ for every $l\geq 0$. 
\smallskip

\noindent\emph{Base case.} 
In the case where $A\neq \top$, the base case is $l=1$ and is exactly as in the proof of Lemma \ref{lem:canonical-saturation-GCIs}. 

\noindent\emph{Induction step.} 
Assume that the property is true for every integer up to $l$ and consider the case where $i-j=l+1$. 
Let $x,y\in\Delta^{\Imc_\Omc}$, and assume that $(y,\onomial')\in A^{\Imc_i}$, and $y\notin\NI$ has been introduced between $\Imc_{j-1}$ and $\Imc_j$, to satisfy $(C\sqsubseteq \exists S, \monomial_0)\in\Omc$ by applying \new{the chase rule in case} $\R_7$ or $\R_8$ to $(x,s)\in C^{\Imc_{j-1}}$ (thus adding $(x,y,s\times\monomial_0)\in S^{\Imc_{j}}$). 
We make a case analysis on the last \new{chase} rule applied to add $(y,\onomial')\in A^{\Imc_i}$. 

\smallskip
\new{\textbf{$\boldsymbol{\R_3}$: $\boldsymbol{(y,\onomial')\in A^{\Imc_i}}$ added by applying the chase rule with $\boldsymbol{(D\sqsubseteq A, \monomial_{D\sqsubseteq A})\in\Omc}.$} 
\\
There are $(y,\onomial)\in D^{\Imc_{i-1}}$ and $\onomial'=\monomial_{D\sqsubseteq A}\times \onomial$. 
Since $(y,\onomial)\in D^{\Imc_{i-1}}$, by induction hypothesis: 
\begin{itemize}
\item There exist $k\geq 0$ and $k'\geq 0$ such that $k+k'\leq 2$ and: 
\begin{itemize}  								
\item $(S\sqsubseteq P_i, \monomial_{S\sqsubseteq P_i})$,  $(\exists \mn{inv}(P_i).\top\sqsubseteq B_i, \monomial_{\exists \mn{inv}(P_i).\top\sqsubseteq B_i})\in\mn{saturate}(\Omc)$, \mbox{$1\leq i\leq k$,}  
\item $(\top\sqsubseteq B'_i, \monomial_{\top\sqsubseteq B'_i})\in\mn{saturate}(\Omc)$, $1\leq i\leq k'$, 
\item $(B_1\sqcap\dots\sqcap B_{k_j}\sqcap B'_1\sqcap\dots\sqcap B'_{k'}\sqsubseteq D, p')\in\mn{saturate}(\Omc)$.
\end{itemize} 
\item The monomials are related as follows
\begin{itemize}
\item If $k\neq 0$, $s\times\monomial_0\times \Pi_{i=1}^{k} (\monomial_{S\sqsubseteq P_i}\times \monomial_{\exists \mn{inv}(P_i).\top\sqsubseteq B_i})\times\Pi_{i=1}^{k'}\monomial_{\top\sqsubseteq B'_i}\times p' =\onomial$.
\item If $k=0$, $\Pi_{i=1}^{k'}\monomial_{\top\sqsubseteq B'_i}\times p' =\onomial$.
\end{itemize}
\end{itemize} 
It follows that: 
\begin{itemize}
\item If $k\geq 1$:
since $(S\sqsubseteq P_i, \monomial_{S\sqsubseteq P_i})\in\mn{saturate}(\Omc)$, then by Lemma~\ref{lem:inverse-roles-completion} (whose proof only uses the completion rule $\CR^{T}_1$ which is not modified for $\ELHIbotrestr$, \cf Figure~\ref{fig:completion-rules-restr}), we have that $(\mn{inv}(S)\sqsubseteq \mn{inv}(P_i), \monomial_{S\sqsubseteq P_i})\in\mn{saturate}(\Omc)$. 
Thus, since $(\exists \mn{inv}(P_i).\top\sqsubseteq B_i, \monomial_{\exists \mn{inv}(P_i).\top\sqsubseteq B_i})$,  $(\top\sqsubseteq B'_i, \monomial_{\top\sqsubseteq B'_i})$, and $(B_1\sqcap\dots\sqcap B_{k}\sqcap B'_1\sqcap\dots\sqcap B'_{k'}\sqsubseteq D, p')$ are in $\mn{saturate}(\Omc)$ (recall that $k+k'\leq 2$), then by $\CR^T_{4}$, $(\exists \mn{inv}(S).\top\sqsubseteq D, \monomial_{\exists \mn{inv}(S).\top\sqsubseteq D} )\in\mn{saturate}(\Omc)$ where $\monomial_{\exists \mn{inv}(S).\top\sqsubseteq D}=\Pi_{i=1}^{k}(\monomial_{S\sqsubseteq P_i}\times\monomial_{\exists \mn{inv}(P_i).\top\sqsubseteq B_i})\times\Pi_{i=1}^{k'}\monomial_{\top\sqsubseteq B'_i}\times p'$.
\item If $k=0$: since $(\top\sqsubseteq B'_i, \monomial_{\top\sqsubseteq B'_i})$, $1\leq i\leq k'$, and $(B'_1\sqcap\dots\sqcap B'_{k_j'}\sqsubseteq D, p')$ are in $\mn{saturate}(\Omc)$ (recall that $k'\leq 2$), 
then by $\CR^T_{5}$, $(\top\sqsubseteq D, \monomial_{\top\sqsubseteq D})\in\mn{saturate}(\Omc)$ where $ \monomial_{\top\sqsubseteq D} = \Pi_{i=1}^{k'}\monomial_{\top\sqsubseteq B'_i}\times p$. 
\end{itemize}
}
 
\new{This shows item 1 of the property: by taking $K=1$ and $K'=0$ if $k\geq 1$, and $K=0$, $K'=1$ if $k=0$, it holds that: 
\begin{itemize}
\item $(S\sqsubseteq S, 1)$ and  $(\exists \mn{inv}(S).\top\sqsubseteq D, \monomial_{\exists \mn{inv}(S).\top\sqsubseteq D} )\in\mn{saturate}(\Omc)$, $1\leq i\leq K$, (take $P_i=S$, $\monomial_{S\sqsubseteq P_i}=1$, and $B_i=D$)
\item $(\top\sqsubseteq D, \monomial_{\top\sqsubseteq D})\in\mn{saturate}(\Omc)$, $1\leq i\leq K'$, (take $B'_i=D$)
\item $(D\sqsubseteq A, p)\in\mn{saturate}(\Omc)$ with $p= \monomial_{D\sqsubseteq A}$ (note that $K+K'=1$).
\end{itemize}}

\new{To show item 2 of the property, we observe that if $K\neq 0$ (\ie $k\neq 0$), then 
\begin{align*}
\onomial'=&\monomial_{D\sqsubseteq A}\times \onomial =p\times \onomial
\\
=&p\times s\times\monomial_0\times( \Pi_{i=1}^{k} (\monomial_{S\sqsubseteq P_i}\times \monomial_{\exists \mn{inv}(P_i).\top\sqsubseteq B_i})\times\Pi_{i=1}^{k'}\monomial_{\top\sqsubseteq B'_i}\times p' )
\\
=&s\times\monomial_0\times p\times\Pi_{i=1}^K (1\times \monomial_{\exists \mn{inv}(S).\top\sqsubseteq D})\times\Pi_{i=1}^{K'}\monomial_{\top\sqsubseteq D},
\end{align*} 
and if $K=0$ (\ie $k= 0$ and $K'=1$)
\begin{align*}
\onomial'=\monomial_{D\sqsubseteq A}\times \onomial
=p\times \onomial
=p\times (\Pi_{i=1}^{k'}\monomial_{\top\sqsubseteq B'_i}\times p' )
= p\times\Pi_{i=1}^{K'}\monomial_{\top\sqsubseteq D}.
\end{align*} 
This shows the property in this case.}

\smallskip
\new{\textbf{$\boldsymbol{\R_4}$: $\boldsymbol{(y,\onomial')\in A^{\Imc_i}}$ added by applying the chase rule with}}   
		 \new{\textbf{$\boldsymbol{(D_1\sqcap D_2\sqsubseteq A, \monomial_{D_1\sqcap D_2\sqsubseteq A})\in\Omc}.$}}
		 \\ 
There are $(y,\onomial_1)\in D_1^{\Imc_{i-1}}$ and $(y,\onomial_2)\in D_2^{\Imc_{i-1}}$ and $\onomial'=\monomial_{D_1\sqcap D_2\sqsubseteq A}\times \onomial_1\times\onomial_2$. 
For $j\in\{1,2\}$, since $(y,\onomial_j)\in D_j^{\Imc_{i-1}}$, by induction hypothesis: 

\begin{itemize}
\item There exist $k_j\geq 0$ and $k_j'\geq 0$ such that $k_j+k'_j\leq 2$ and: 
\begin{itemize}  
								
\item $(S\sqsubseteq P^j_i, \monomial_{S\sqsubseteq P^j_i})$,  $(\exists \mn{inv}(P_i^j).\top\sqsubseteq B^j_i, \monomial_{\exists \mn{inv}(P_i^j).\top\sqsubseteq B^j_i})\in\mn{saturate}(\Omc)$, \mbox{$1\leq i\leq k_j$,}  

\item $(\top\sqsubseteq B'^j_i, \monomial_{\top\sqsubseteq B'^j_i})\in\mn{saturate}(\Omc)$, $1\leq i\leq k_j'$, 

\item $(B^j_1\sqcap\dots\sqcap B^j_{k_j}\sqcap B'^j_1\sqcap\dots\sqcap B'^j_{k_j'}\sqsubseteq D_j, p_j)\in\mn{saturate}(\Omc)$.
\end{itemize} 
\item The monomials are related as follows
\begin{itemize}
\item If $k_j\neq 0$, $s\times\monomial_0\times \Pi_{i=1}^{k_j} (\monomial_{S\sqsubseteq P^j_i}\times \monomial_{\exists \mn{inv}(P_i^j).\top\sqsubseteq B^j_i})\times\Pi_{i=1}^{k_j'}\monomial_{\top\sqsubseteq B'^j_i}\times p_j =\onomial_j$.
\item If $k_j=0$, $\Pi_{i=1}^{k_j'}\monomial_{\top\sqsubseteq B'^j_i}\times p_j =\onomial_j$.
\end{itemize}
\end{itemize} 

It follows that for $j\in\{1,2\}$: 
\begin{itemize}
\item If $k_j\geq 1$:
since $(S\sqsubseteq P^j_i, \monomial_{S\sqsubseteq P^j_i})\in\mn{saturate}(\Omc)$, then by Lemma~\ref{lem:inverse-roles-completion}, we have that $(\mn{inv}(S)\sqsubseteq \mn{inv}(P^j_i), \monomial_{S\sqsubseteq P^j_i})\in\mn{saturate}(\Omc)$. 
Since $(\exists \mn{inv}(P_i^j).\top\sqsubseteq B^j_i, \monomial_{\exists \mn{inv}(P_i^j).\top\sqsubseteq B^j_i})$,  $(\top\sqsubseteq B'^j_i, \monomial_{\top\sqsubseteq B'^j_i})$, and $(B^j_1\sqcap\dots\sqcap B^j_{k_j}\sqcap B'^j_1\sqcap\dots\sqcap B'^j_{k_j'}\sqsubseteq D_j, p_j)$ are in $\mn{saturate}(\Omc)$ (recall that $k_j+k'_j\leq 2$), then by $\CR^T_{4}$, $(\exists \mn{inv}(S).\top\sqsubseteq D_j, \monomial_{\exists \mn{inv}(S).\top\sqsubseteq D_j} )\in\mn{saturate}(\Omc)$ where $\monomial_{\exists \mn{inv}(S).\top\sqsubseteq D_j}=\Pi_{i=1}^{k_j}(\monomial_{S\sqsubseteq P^j_i}\times\monomial_{\exists \mn{inv}(P_i^j).\top\sqsubseteq B^j_i})\times\Pi_{i=1}^{k'_j}\monomial_{\top\sqsubseteq B'^j_i}\times p_j$.

\item If $k_j=0$: since $(\top\sqsubseteq B'^j_i, \monomial_{\top\sqsubseteq B'^j_i})$, $1\leq i\leq k_j'$, and $(B'^j_1\sqcap\dots\sqcap B'^j_{k_j'}\sqsubseteq D_j, p_j)$ are in $\mn{saturate}(\Omc)$ (recall that $k'_j\leq 2$), 
then by $\CR^T_{5}$, $(\top\sqsubseteq D_j, \monomial_{\top\sqsubseteq D_j})\in\mn{saturate}(\Omc)$ where $ \monomial_{\top\sqsubseteq D_j} = \Pi_{i=1}^{k'_j}\monomial_{\top\sqsubseteq B'^j_i}\times p_j$. 
\end{itemize}
Let $K= 0$ if $k_1=k_2=0$, $K=2$ if $k_1\neq 0$ and $k_2\neq 0$, and $K=1$ otherwise, 
and let $K'=2-K$. Note that $K,K'\geq 0$ and $K+K'\leq 2$. 
Rename $D_1$ by $E_1$ if $k_1\neq 0$ and $E'_1$ if $k_1=0$, and rename $D_2$ by $E_2$ if $k_1\neq 0$ and $k_2\neq 0$, $E_1$ if $k_1=0$ and $k_2\neq 0$, $E'_1$ if $k_1\neq 0$ and $k_2=0$, and $E'_2$ if $k_1=0$ and $k_2=0$. This shows item 1 of the property: 
\begin{itemize}
\item $(S\sqsubseteq S, 1)$ and  $(\exists \mn{inv}(S).\top\sqsubseteq E_i, \monomial_{\exists \mn{inv}(S).\top\sqsubseteq E_i} )\in\mn{saturate}(\Omc)$, $1\leq i\leq K$, (take $P_i=S$, $\monomial_{S\sqsubseteq P_i}=1$, and $B_i=E_i$)
\item $(\top\sqsubseteq E'_i, \monomial_{\top\sqsubseteq E'_i})\in\mn{saturate}(\Omc)$, $1\leq i\leq K'$, (take $B'_i=E'_i$)
\item $(E_1\sqcap\dots\sqcap E_K\sqcap E'_1\sqcap\dots\sqcap E'_{K'}\sqsubseteq A, p)\in\mn{saturate}(\Omc)$ where $E_1\sqcap\dots\sqcap E_K\sqcap E'_1\sqcap\dots\sqcap E'_{K'}=D_1\sqcap D_2$ and $p= \monomial_{D_1\sqcap D_2\sqsubseteq A}$.
\end{itemize}

To show item 2 of the property, we observe that if $K\neq 0$ (\ie $k_1\neq 0$ or $k_2\neq\emptyset$), then 
\begin{align*}
\onomial'={}&\monomial_{D_1\sqcap D_2\sqsubseteq A}\times \onomial_1\times\onomial_2
\\
{}={}&p\times \onomial_1\times\onomial_2
\\
{}={}&p\times s\times\monomial_0\times\Pi_{j=1}^2 ( \Pi_{i=1}^{k_j} (\monomial_{S\sqsubseteq P^j_i}\times \monomial_{\exists \mn{inv}(P_i^j).\top\sqsubseteq B^j_i})\times\Pi_{i=1}^{k_j'}\monomial_{\top\sqsubseteq B'^j_i}\times p_j )
\\
{}={}&s\times\monomial_0\times p\times\Pi_{i=1}^K (1\times \monomial_{\exists \mn{inv}(S).\top\sqsubseteq E_i})\times\Pi_{i=1}^{K'}\monomial_{\top\sqsubseteq E'_i},
\end{align*} 
and if $K=0$ (\ie $k_1=k_2= 0$ and $K'=2$)
\begin{align*}
\onomial'=\monomial_{D_1\sqcap D_2\sqsubseteq A}\times \onomial_1\times\onomial_2
=p\times \onomial_1\times\onomial_2
=p\times \Pi_{j=1}^2 (\Pi_{i=1}^{k_j'}\monomial_{\top\sqsubseteq B'^j_i}\times p_j )
= p\times\Pi_{i=1}^{K'}\monomial_{\top\sqsubseteq E'_i}.
\end{align*} 
This shows the property in this case.
\smallskip

\new{\textbf{$\boldsymbol{\R_5}$ or $\boldsymbol{\R_6}$:  $\boldsymbol{(y,\onomial')\in A^{\Imc_i}}$ added by applying the chase rule  with}} \new{\textbf{$\boldsymbol{(\exists P.D\sqsubseteq A, \monomial_{\exists P.D\sqsubseteq A})\in\Omc}.$}} 
\\
There exist $(y,z,\onomial_{yz})\in P^{\Imc_{i-1}}$ and $(z,\onomial_z)\in D^{\Imc_{i-1}}$ such that $\onomial'=\onomial_{yz}\times \onomial_z\times \monomial_{\exists P.D\sqsubseteq A}$. We distinguish two subcases:
\begin{enumerate}[(i)]
\item either $z=x$, 
\item or $z\neq x$, which implies that $z$ has been introduced between $j'-1$ and $j'$ ($j\leq j'\leq i$) to satisfy an inclusion of the form $(E_0\sqsubseteq \exists S_z, \monomial_z)\in\Omc$ by applying \new{the chase rule in case} $\R_7$ or $\R_8$.
\end{enumerate}
We start with case (i). It holds that 
$(y,x,\onomial_{yx})\in P^{\Imc_{i-1}}$, $(x,\onomial_x)\in D^{\Imc_{i-1}}$ and $\onomial'=\onomial_{yx}\times \onomial_x\times \monomial_{\exists P.D\sqsubseteq A}$. 
Since $(y,x,\onomial_{yx})\in P^{\Imc_{i-1}}$, \ie  $(x,y,\onomial_{yx})\in \mn{inv}(P)^{\Imc_{i-1}}$, $(y,1)\in\top^{\Imc_{i-1}}$ and $y$ has been introduced to satisfy an inclusion of the form $(C\sqsubseteq \exists S,\monomial_0)\in\Omc$ by applying \new{the chase rule in case} $\R_7$ or $\R_8$ to $(x,s)\in C^{\Imc_{j-1}}$, by Lemma \ref{lem:canonical-saturation-RIs} (whose proof only uses the completion rule $\CR^{T}_1$ which is not modified for $\ELHIbotrestr$, \cf Figure~\ref{fig:completion-rules-restr}), there exists $(S\sqsubseteq \mn{inv}(P), \monomial_{S\sqsubseteq \mn{inv}(P)})\in\mn{saturate}(\Omc)$ such that $s\times\monomial_0\times \monomial_{S\sqsubseteq \mn{inv}(P)}=\onomial_{yx}$. 

\new{By Lemma~\ref{lem:inverse-roles-completion}, $(S\sqsubseteq \mn{inv}(P), \monomial_{S\sqsubseteq \mn{inv}(P)})\in\mn{saturate}(\Omc)$ implies $S\sqsubseteq_\Omc \mn{inv}(P)$}. 
\new{Since $(C\sqsubseteq \exists S,\monomial_0)\in\Omc$, $S\sqsubseteq_\Omc \mn{inv}(P)$ and} $(\exists P.D\sqsubseteq A, \monomial_{\exists P.D\sqsubseteq A})\in\mn{saturate}(\Omc)$, it follows that $D=\top$: Indeed, if $(\exists P.D\sqsubseteq A, \monomial_{\exists P.D\sqsubseteq A})$ has been added by the completion algorithm, \ie by $\CR^T_{4}$, $D=\top$ by the definition of $\CR^T_{4}$, and otherwise, $(\exists P.D\sqsubseteq A, \monomial_{\exists P.D\sqsubseteq A})\in\Omc$ so since $\Omc$ belongs to $\ELHIbotrestr$, it must be the case that $D=\top$. 
Since $D=\top$ and $(x,\onomial_x)\in D^{\Imc_{i-1}}$, it must be the case that $\onomial_x=1$. 
This shows the property in case (i), indeed:
\begin{enumerate}
\item If we take $k=1$ and $k'=0$, it holds that:
\begin{itemize} 
								
\item $(S\sqsubseteq \mn{inv}(P), \monomial_{S\sqsubseteq \mn{inv}(P)})$ and $(\exists P.\top\sqsubseteq A, \monomial_{\exists P.\top\sqsubseteq A})\in\mn{saturate}(\Omc)$ (take $P_1=\mn{inv}(P)$, $B_1=A$),

\item $(A\sqsubseteq A, 1)\in\mn{saturate}(\Omc)$ (take $B_1=A$ and $p=1$). 
\end{itemize} 
\item $
\onomial'= \onomial_{yx}\times \onomial_x\times \monomial_{\exists P.\top\sqsubseteq A}
=\onomial_{yx}\times 1\times \monomial_{\exists P.\top\sqsubseteq A}
=s\times\monomial_0\times \monomial_{S\sqsubseteq \mn{inv}(P)}\times \monomial_{\exists P.\top\sqsubseteq A}\times p.$
\end{enumerate} 
We now consider case (ii): $(y,z,\onomial_{yz})\in P^{\Imc_{i-1}}$, $(z,\onomial_z)\in D^{\Imc_{i-1}}$, $\onomial'=\onomial_{yz}\times \onomial_z\times \monomial_{\exists P.D\sqsubseteq A}$ and $z$ has been introduced between $j'-1$ and $j'$ ($j\leq j'\leq i$) by applying \new{the chase rule in case} $\R_7$ or $\R_8$ to satisfy an inclusion of the form $(E_0\sqsubseteq \exists S_z, \monomial_z)\in\Omc$.
There must exist $(y,s_0)\in E_0^{\Imc_{j'-1}}$ to which \new{the chase rule in case} $\R_7$ or $\R_8$ has been applied. Hence by Lemma~\ref{lem:canonical-saturation-RIs} there exists $(S_z\sqsubseteq P, \monomial_{S_z\sqsubseteq P}  )\in\mn{saturate}(\Omc)$ such that   $s_0\times  \monomial_z\times \monomial_{S_z\sqsubseteq P}=\onomial_{yz}$, and 
by induction hypothesis, the following statements hold.
\begin{itemize}
\item There exist $k\geq 0$ and $k'\geq 0$ such that $k+k'\leq 2$ and:
\begin{itemize} 
\item $(S_z\sqsubseteq P_i, \monomial_{S_z\sqsubseteq P_i})$ and  $(\exists \mn{inv}(P_i).\top\sqsubseteq B_i, \monomial_{\exists \mn{inv}(P_i).\top\sqsubseteq B_i})\in\mn{saturate}(\Omc)$, \mbox{$1\leq i\leq k$,}  

\item $(\top\sqsubseteq B'_i, \monomial_{\top\sqsubseteq B'_i})\in\mn{saturate}(\Omc)$, $1\leq i\leq k'$, 

\item $(B_1\sqcap\dots\sqcap B_k\sqcap B'_1\sqcap\dots\sqcap B'_{k'}\sqsubseteq D, p)\in\mn{saturate}(\Omc)$.
\end{itemize} 
Since $(\exists P.D\sqsubseteq A, \monomial_{\exists P.D\sqsubseteq A})\in\Omc$ and $(E_0\sqsubseteq \exists S_z, \monomial_z)\in\Omc$, it follows by the modified version of $\CR^T_{3}$ (\cf Figure~\ref{fig:completion-rules-restr}) that $(E_0\sqsubseteq A,r_z )\in\mn{saturate}(\Omc)$ 
with $$r_z= \monomial_z\times \monomial_{\exists P.D\sqsubseteq A}\times \monomial_{S_z\sqsubseteq P}\times\Pi_{i=1}^k (\monomial_{S_z\sqsubseteq P_i}\times \monomial_{\exists \mn{inv}(P_i).\top\sqsubseteq B_i})\times\Pi_{i=1}^{k'}\monomial_{\top\sqsubseteq B'_i}\times p.$$ 
\item  The monomials are related as follows
\begin{itemize}
\item If $k\neq 0$, $s_0\times \monomial_z\times \Pi_{i=1}^k (\monomial_{S_z\sqsubseteq P_i}\times \monomial_{\exists \mn{inv}(P_i).\top\sqsubseteq B_i})\times\Pi_{i=1}^{k'}\monomial_{\top\sqsubseteq B'_i}\times p =\onomial_{z}$.
\item If $k=0$, $\Pi_{i=1}^{k'}\monomial_{\top\sqsubseteq B'_i}\times p =\onomial_{z}$.
\end{itemize}
\end{itemize} 

Moreover since $(y,s_0)\in E_0^{\Imc_{i-1}}$, by induction hypothesis:
\begin{itemize} 
\item There exist $k_0\geq 0$ and $k_0'\geq 0$ such that $k_0+k_0'\leq 2$ and: 
\begin{itemize}								
\item $(S\sqsubseteq P^0_i, \monomial_{S\sqsubseteq P^0_i})$,  $(\exists \mn{inv}(P_i^0).\top\sqsubseteq B^0_i, \monomial_{\exists \mn{inv}(P_i^0).\top\sqsubseteq B^0_i})\in\mn{saturate}(\Omc)$, \mbox{$1\leq i\leq k_0$,}  

\item $(\top\sqsubseteq B'^0_i, \monomial_{\top\sqsubseteq B'^0_i})\in\mn{saturate}(\Omc)$, $1\leq i\leq k_0'$, 

\item $(B^0_1\sqcap\dots\sqcap B^0_{k_0}\sqcap B'^0_1\sqcap\dots\sqcap B'^0_{k_0'}\sqsubseteq E_0, p_0)\in\mn{saturate}(\Omc)$.
\end{itemize}
\item The monomials are related as follows
\begin{itemize}
\item If $k_0\neq 0$, $s\times\monomial_0\times \Pi_{i=1}^{k_0} (\monomial_{S\sqsubseteq P^0_i}\times \monomial_{\exists \mn{inv}(P_i^0).\top\sqsubseteq B^0_i})\times\Pi_{i=1}^{k_0'}\monomial_{\top\sqsubseteq {B'}^0_i}\times p_0 =s_0$.
\item If $k_0=0$,  $
\Pi_{i=1}^{k_0'}\monomial_{\top\sqsubseteq {B'}^0_i}\times p_0 =s_0$.
\end{itemize}
\end{itemize}

Since $(B^0_1\sqcap\dots\sqcap B^0_{k_0}\sqcap B'^0_1\sqcap\dots\sqcap B'^0_{k_0'}\sqsubseteq E_0, p_0)$ (with $k_0+k_0'\leq 2$) and $(E_0\sqsubseteq A,r_z )$ are in $\mn{saturate}(\Omc)$, by $\CR^T_{2}$ (\cf Figure~\ref{fig:completion-rules-restr}), 
$(B^0_1\sqcap\dots\sqcap B^0_{k_0}\sqcap B'^0_1\sqcap\dots\sqcap B'^0_{k_0'}\sqsubseteq A, p_0\times r_z)\in \mn{saturate}(\Omc)$ and this shows item 1 of the property. 
To get item 2 of the property, observe that 
\begin{align*}
\onomial'={}& \onomial_{yz}\times \onomial_z\times \monomial_{\exists P.D\sqsubseteq A}\\
{}={} & s_0\times  \monomial_z\times \monomial_{S_z\sqsubseteq P}\times  \Pi_{i=1}^k (\monomial_{S_z\sqsubseteq P_i}\times \monomial_{\exists \mn{inv}(P_i).\top\sqsubseteq B_i})\times\Pi_{i=1}^{k'}\monomial_{\top\sqsubseteq B'_i}\times p\times\monomial_{\exists P.D\sqsubseteq A}
\end{align*}
so that if $k_0\neq 0$, then  
\begin{align*}
\onomial'={}& s\times\monomial_0\times \Pi_{i=1}^{k_0} (\monomial_{S\sqsubseteq P^0_i}\times \monomial_{\exists \mn{inv}(P_i^0).\top\sqsubseteq B^0_i})\times\Pi_{i=1}^{k_0'}\monomial_{\top\sqsubseteq {B'}^0_i}\times p_0\\
&\times  \monomial_z\times \monomial_{S_z\sqsubseteq P}\times  \Pi_{i=1}^k (\monomial_{S_z\sqsubseteq P_i}\times \monomial_{\exists \mn{inv}(P_i).\top\sqsubseteq B_i})\times\Pi_{i=1}^{k'}\monomial_{\top\sqsubseteq B'_i}\times p\times\monomial_{\exists P.D\sqsubseteq A}\\
{}={}& s\times\monomial_0\times\Pi_{i=1}^{k_0} (\monomial_{S\sqsubseteq P^0_i}\times \monomial_{\exists \mn{inv}(P_i^0).\top\sqsubseteq B^0_i})\times\Pi_{i=1}^{k_0'}\monomial_{\top\sqsubseteq {B'}^0_i}\times p_0\times r_z
\end{align*}
and if $k_0=0$, then 
\begin{align*}
\onomial'={}& \Pi_{i=1}^{k_0'}\monomial_{\top\sqsubseteq {B'}^0_i}\times p_0\\&
\times  \monomial_z\times \monomial_{S_z\sqsubseteq P}\times  \Pi_{i=1}^k (\monomial_{S_z\sqsubseteq P_i}\times \monomial_{\exists \mn{inv}(P_i).\top\sqsubseteq B_i})\times\Pi_{i=1}^{k'}\monomial_{\top\sqsubseteq B'_i}\times p\times\monomial_{\exists P.D\sqsubseteq A}\\
{}={}& \Pi_{i=1}^{k_0'}\monomial_{\top\sqsubseteq {B'}^0_i}\times p_0\times r_z
\end{align*}
This shows the property in case (ii) and finishes the proof of the lemma.
\end{proof}

\begin{lemma}\label{claimcomplELHI-bis}
For all $i\geq 0$, $x,y\in\Delta^{\Imc_\Omc}$, $R$ role name or inverse role, and $A\in\NC\cup\{\top\}$, if $(x,y,\onomial)\in R^{\Imc_i}$, $(y,\onomial')\in A^{\Imc_i}$, $(\exists R.A\sqsubseteq B,\nonomial)\in\Omc$, and $y\notin\NI$ has been introduced between $\Imc_{j-1}$ and $\Imc_j$ ($j\leq i$), to satisfy an inclusion of the form $(C\sqsubseteq \exists S, \monomial_0)\in\Omc$ by applying \new{the chase rule in case} $\R_7$ or $\R_8$ to some $(x,s)\in C^{\Imc_{j-1}}$, 
then  
\begin{itemize} 
\item there exists $(C\sqsubseteq B,r )\in\mn{saturate}(\Omc)$, 
\item and $s\times r= \onomial\times\onomial'\times\nonomial$. 
\end{itemize} 
\end{lemma}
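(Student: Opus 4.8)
The plan is to mirror the proof of Lemma~\ref{claimcomplELHI}, but to invoke the restricted counterparts of the auxiliary lemmas and the modified completion rule $\CR^T_3$ from Section~\ref{sec:completionELHIrestr} in place of the general one. The crucial simplification is that, in the restricted setting, the derivation of $(y,\onomial')\in A^{\Imc_i}$ supplied by Lemma~\ref{lem:canonical-saturation-GCIs-bis} already has all its ``left-role'' concepts equal to $\top$ and a left-hand side with at most two concept names, which is exactly the shape of premise consumed by the modified $\CR^T_3$; as a consequence the GCI it produces has plain left-hand side $C$ rather than a conjunction $C\sqcap A_1\sqcap\dots\sqcap A_k$.

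First I would apply Lemma~\ref{lem:canonical-saturation-RIs} (which carries over unchanged, since $\CR^T_1$ is not modified) to the role atom $(x,y,\onomial)\in R^{\Imc_i}$, obtaining $(S\sqsubseteq R,\monomial_{S\sqsubseteq R})\in\mn{saturate}(\Omc)$ with $s\times\monomial_0\times\monomial_{S\sqsubseteq R}=\onomial$. Then I would apply Lemma~\ref{lem:canonical-saturation-GCIs-bis} to $(y,\onomial')\in A^{\Imc_i}$, which yields $k,k'\geq 0$ with $k+k'\leq 2$ together with annotated axioms $(S\sqsubseteq P_i,\monomial_{S\sqsubseteq P_i})$ and $(\exists\mn{inv}(P_i).\top\sqsubseteq B_i,\monomial_{\exists\mn{inv}(P_i).\top\sqsubseteq B_i})$ ($1\leq i\leq k$), $(\top\sqsubseteq B'_i,\monomial_{\top\sqsubseteq B'_i})$ ($1\leq i\leq k'$), and $(B_1\sqcap\dots\sqcap B_k\sqcap B'_1\sqcap\dots\sqcap B'_{k'}\sqsubseteq A,p)$, all in $\mn{saturate}(\Omc)$, plus the monomial identity relating these factors to $\onomial'$ in the two cases $k\neq 0$ and $k=0$. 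Since $\Omc$ is in normal form, $C\in\NC\cup\{\top\}$, so together with $(C\sqsubseteq\exists S,\monomial_0)\in\Omc$ and $(\exists R.A\sqsubseteq B,\nonomial)\in\Omc$ all premises of the modified $\CR^T_3$ are present, and applying it gives $(C\sqsubseteq B,r)\in\mn{saturate}(\Omc)$ with
\[
r=\monomial_{S\sqsubseteq R}\times p\times\nonomial\times\monomial_0\times\Pi_{i=1}^k\bigl(\monomial_{S\sqsubseteq P_i}\times\monomial_{\exists\mn{inv}(P_i).\top\sqsubseteq B_i}\bigr)\times\Pi_{i=1}^{k'}\monomial_{\top\sqsubseteq B'_i}.
\]

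It then remains to check $s\times r=\onomial\times\onomial'\times\nonomial$, which I would do by a direct computation in the two cases, exploiting that $\times$ is idempotent so that monomials may be identified with sets of variables: in both cases the factor $s\times\monomial_0\times\monomial_{S\sqsubseteq R}$ recovers $\onomial$, the remaining product (together with $p$, and with $s\times\monomial_0$ when $k\neq 0$) recovers $\onomial'$ by Lemma~\ref{lem:canonical-saturation-GCIs-bis}, and the leftover factor $\nonomial$ matches. I do not expect any genuine obstacle: the argument is a routine adaptation of Lemma~\ref{claimcomplELHI}, and the only point that needs checking is that the output of Lemma~\ref{lem:canonical-saturation-GCIs-bis} exactly fits the premises of the restricted $\CR^T_3$ — in particular that every $A_i$ is $\top$, so that no auxiliary conjuncts enter the left-hand side of the derived GCI — which holds by the very statement of that lemma.
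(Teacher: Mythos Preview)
Your proposal is correct and follows essentially the same approach as the paper: invoke Lemma~\ref{lem:canonical-saturation-RIs} (unchanged since $\CR^T_1$ is unmodified) and Lemma~\ref{lem:canonical-saturation-GCIs-bis}, then apply the restricted $\CR^T_3$ to obtain $(C\sqsubseteq B,r)$ with the stated $r$, and verify $s\times r=\onomial\times\onomial'\times\nonomial$ by direct computation. The paper's proof is terser but identical in structure and content.
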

\begin{proof}
By Lemma \ref{lem:canonical-saturation-RIs} (whose proof only uses the completion rule $\CR^{T}_1$ which is not modified for $\ELHIbotrestr$, \cf Figure~\ref{fig:completion-rules-restr}), there exists $(S\sqsubseteq R, \monomial_{S\sqsubseteq R}  )\in\mn{saturate}(\Omc)$ such that $$s\times \monomial_0\times\monomial_{S\sqsubseteq R}=\onomial,$$ and 
by Lemma \ref{lem:canonical-saturation-GCIs-bis}, 
the following statements hold.
\begin{enumerate}
\item There exist $k\geq 0$ and $k'\geq 0$ such that $k+k'\leq 2$ and:
\begin{itemize}								
\item $(S\sqsubseteq P_i, \monomial_{S\sqsubseteq P_i})$ and  $(\exists \mn{inv}(P_i).\top\sqsubseteq B_i, \monomial_{\exists \mn{inv}(P_i).\top\sqsubseteq B_i})\in\mn{saturate}(\Omc)$, $1\leq i\leq k$,  

\item $(\top\sqsubseteq B'_i, \monomial_{\top\sqsubseteq B'_i})\in\mn{saturate}(\Omc)$, $1\leq i\leq k'$, 

\item $(B_1\sqcap\dots\sqcap B_k\sqcap B'_1\sqcap\dots\sqcap B'_{k'}\sqsubseteq A, p)\in\mn{saturate}(\Omc)$ (note that the conjunction contains at most two concept names since $k+k'\leq 2$).
\end{itemize}

Moreover, if $A=\top$, then $k=k'=0$ and $p=1$.
 
\item The monomials are related as follows:
\begin{itemize}
\item If $k\neq 0$,  $s\times\monomial_0\times \Pi_{i=1}^k (\monomial_{S\sqsubseteq P_i}\times \monomial_{\exists \mn{inv}(P_i).\top\sqsubseteq B_i})\times\Pi_{i=1}^{k'}\monomial_{\top\sqsubseteq B'_i}\times p =\onomial'$.
\item If $k=0$, 
$\Pi_{i=1}^{k'}\monomial_{\top\sqsubseteq B'_i}\times p =\onomial'$.
\end{itemize}
\end{enumerate} 

Hence, if $(\exists R.A\sqsubseteq B,\nonomial)\in\Omc$, since $(C\sqsubseteq \exists S, \monomial_0)\in\Omc$, it follows by the modified version of $\CR^T_{3}$ for $\ELHIbotrestr$ (\cf Figure~\ref{fig:completion-rules-restr}) that $(C\sqsubseteq B,r )\in\mn{saturate}(\Omc)$ with 
$$r= \monomial_0\times \nonomial\times \monomial_{S\sqsubseteq R}\times\Pi_{i=1}^k (\monomial_{S\sqsubseteq P_i}\times \monomial_{\exists \mn{inv}(P_i).\top\sqsubseteq B_i})\times\Pi_{i=1}^{k'}\monomial_{\top\sqsubseteq B'_i}\times p,$$ 
so that $s\times r= \onomial\times\onomial'\times\nonomial$. 
\end{proof}

Recall that $\mn{saturate}^k(\Omc)$ is the $k$-saturation of $\Omc$, \ie the set of annotated axioms obtained from $\Omc$ through the completion algorithm restricted to monomials of size at most~$k$. 

\CorrectnessELHIrestr*
\begin{proof}
\new{The proof is similar to the proof of Theorem~\ref{prop:completionalgorithmELHI} so we focus here on what differs.}
\smallskip

\noindent(1) To show that every \new{annotated assertion} $(\alpha,\monomial)\in\mn{saturate}^k(\Omc)$ is entailed by $\Omc$, since the modified version of $\CR^T_{2}$ and $\CR^T_{3}$ are special cases of the original $\CR^T_{2}$ and $\CR^T_{3}$, we only need to show that adding $\CR^T_{4}$ and $\CR^T_{5}$ preserves the soundness of the completion algorithm. This is easy to check (similarly as we do in the proof of Lemma~\ref{lem:saturation-correct-for-axioms}).
\smallskip

\noindent (2.a) To show that if $\Omc$ is satisfiable and $\Omc\models (\alpha,\monomial)$ with \new{$\alpha$ an assertion of the form $A(a)$ or $R(a,b)$ with $a,b\in\individuals{\Omc}$} and $|\monomial|\leq k$, then $(\alpha,\monomial)\in \mn{saturate}^k(\Omc)$, we show (i) that if $\Omc$ is satisfiable and $\Omc\models (\alpha,\monomial)$, then $(\alpha,\monomial)\in \mn{saturate}(\Omc)$ obtained with the modified rules and (ii) that if $|\monomial|\leq k$ and $(\alpha,\monomial)\in \mn{saturate}(\Omc)$, then $(\alpha,\monomial)\in \mn{saturate}^k(\Omc)$.

For point (i), we can use exactly the same proof as for point~(2.a) of Theorem~\ref{prop:completionalgorithmELHI} except that we use Lemma~\ref{claimcomplELHI-bis} instead of Lemma~\ref{claimcomplELHI} (note that $\CR^T_2$ and $\CR^T_3$ are not used in the proof for point~(2) of Theorem~\ref{prop:completionalgorithmELHI}, except for proving Lemma~\ref{claimcomplELHI}, so that our modifications do not affect the rest of the proof).

Point (ii) follows from the fact that every annotated axiom added by a rule application has at least as many variables as the premises of the rule.
\smallskip

\noindent \new{(2.b) The proof of point (2.b) is exactly the same as in the proof of Theorem~\ref{prop:completionalgorithmELHI}.}

\smallskip

\noindent (3) Every annotated axiom $(\alpha,\monomial)\in\mn{saturate}^k(\Omc)$ is such that $\monomial$ has at most $k$ variables, so the number of possible $\monomial$ is exponential in $k$ and polynomial \wrt $|\Omc|$  \new{(more precisely, bounded by $|\Omc|^k$)}, and $\alpha$ is of one of the following form:
\begin{itemize}
\item a concept or role assertion built from individual names and predicates that occurs in $\Omc$ \new{(plus $a_\top$, $\top$, $\bot$)}, and there are polynomially many such assertions,
\item a RI $P_1\sqsubseteq P_2$ where $P_1, P_2$ or their inverses occur in $\Omc$, and there are polynomially many such RIs,
\item a negative RI $P_1\sqcap P_2\sqsubseteq \bot$ or a GCI of the form $\exists P.\bot\sqsubseteq \bot$ where $P_1, P_2$ or their inverses occur in $\Omc$, and there are polynomially many such axioms,
\item a GCI of the form $B\sqsubseteq D$ or $B_1\sqcap B_2\sqsubseteq D$ with $B_{(i)}\in\NC\cup\{\top\}$ and $D\in\NC\cup\{\bot\}$ occurs in $\Omc$ (thanks to the modifications done to $\CR^T_{2}$ and $\CR^T_{3}$), and there are polynomially many such GCIs, or
\item a GCI of the form $\exists P.\top\sqsubseteq B$ (introduced by $\CR^T_{4}$) where $P$ is a role name or inverse role and $B\in\NC\cup\{\bot\}$, and there are polynomially many such GCIs.
\end{itemize}
Hence, in the worst case, $\mn{saturate}^k(\Omc)$ adds to $\Omc$ a polynomial number of axioms where each axiom may be annotated with $|\Omc|^k$ monomials: the total number of annotated axiom is thus \blue{bounded by $p_1(|\Omc|^k)$ for some polynomial function $p_1$}. 
\blue{Moreover, the number of premises of the rules is bounded by 8, so for each rule, the number of rule instantiations \wrt the axioms already derived is bounded by $p_1(|\Omc|^k)^8$.} 
\blue{Hence, for each annotated axiom in $\mn{saturate}^k(\Omc)$, there have been at most $p_1(|\Omc|^k)^8$ rule instantiation evaluations during the step that added this axiom, so we can bound the total run time of the algorithm by $p_1(|\Omc|^k)^9=p(|\Omc|^k)$ with $p$ a polynomial function.}
\end{proof}

\complexityprovmonomial*
\begin{proof}
We show that the algorithm can be modified by ignoring the monomials in the annotated axioms to check in polynomial time whether $\Omc$ is satisfiable. It follows that we can decide whether $\Omc\models(\alpha,\monomial)$ by first checking that $\Omc$ is satisfiable in polynomial time, then, by Theorem~\ref{prop:completionalgorithmELHIrestr},  computing $\mn{saturate}^{|\monomial|}(\Omc)$ in polynomial time \wrt the size of $\Omc$ and exponential time \wrt $|\monomial|$.

Let $\Omc'$ be the result of applying the completion algorithm using the rules modified by Figure~\ref{fig:completion-rules-restr} while ignoring the monomial part of the annotated axioms. 
It is easy to check that $\Omc'=\{\alpha\mid (\alpha,\monomial)\in \mn{saturate}(\Omc)\}$. 
Moreover, it follows from the proof of point (3) of Theorem~\ref{prop:completionalgorithmELHIrestr} that the algorithm adds a polynomial number of axioms, hence it terminates in polynomial time. 
Finally,  we can show as in the proof of point (3) of Theorem~\ref{prop:completionalgorithmELHI} that $\Omc'$ contains some $\bot(a)$ iff $\Omc$ is unsatisfiable.
\end{proof}


\subsection{Conjunctive Queries over $\ELHIbot$ Ontologies}

\paragraph{Proof of Theorem~\ref{th:CQ-answering-algorithm}} 
We start with point (4) of Theorem~\ref{th:CQ-answering-algorithm} by showing that $\Rew(q,\Omc)$ can be computed in exponential time  in $|\Omc|$ and $|q|$ using Algorithm~\ref{alg:rewrite}\footnote{In Line 9 of Algorithm~\ref{alg:rewrite}, we have $R^-(y,x_0)$ which does not belong to the syntax of CQs. This is just used in the intermediate steps of the algorithm, since they are   replaced by concept atoms of the form $D(y_0)$ in the final set of annotated CQs computed by the algorithm.}, \blue{where \mn{RemoveSuperfluousRepeatedAtoms} is a function that given a BCQ $q$ returns the BCQ obtained by limiting the number of repetitions of each concept atom to $\mn{Card}(\Omc)$ (\ie replace $\underbrace{A(x)\wedge\dots\wedge A(x)}_{\mn{Card}(\Omc)+k \text{ times}}$ by 
$\underbrace{A(x)\wedge\dots\wedge A(x)}_{\mn{Card}(\Omc) \text{ times}}$). In this section, we often rely on the extended version $\ext{q}$ of $q$ to distinguish the different occurrences of an atom in $q$, indeed, if $q=A(x)\wedge A(x)$, $\ext{q}=\exists t_1t_2\ A(x,t_1)\wedge A(x,t_2)$.}

\IncMargin{1em}
\begin{algorithm}[t]
\caption{Compute $\Rew(q,\Omc)$}
\label{alg:rewrite}
\BlankLine
\KwInput{a CQ $q$, an $\ELHIbot$ ontology $\Omc$ in normal form and its saturation  \mn{saturate}(\Omc)} 
\KwOutput{a set $\Rew(q,\Omc)$ of pairs of the form $(q^*,\monomial)$ where $q^*$ is a CQ and $\monomial$ a monomial}
\BlankLine
$Result\leftarrow \{(q,1)\}$\; 
$Rew\leftarrow \emptyset$\; 
\While{$Rew\neq Result$}{
	$Rew\leftarrow Result$\; 
	\ForEach(\tcc*[f]{rewrite $(q_i,\monomial_i)$}){$(q_i,\monomial_i)\in Rew$}{\label{foreachquery}
		\ForEach(\tcc*[f]{(S1)}){$x_0$ existentially quantified in $q_i$ such that there is no $R(x_0,x_0)\in\mn{atoms}(q_i)$}{\label{stepS1}
			$q^*_i\leftarrow q_i$\;
			\ForEach{$R(x_0,y)\in\mn{atoms}(q^*_i)$}{\label{stepS2-1}
				\blue{$q^*_i\leftarrow q^*_i[R(x_0,y)\leftarrow R^-(y,x_0)]$};\tcc*[f]{(S2)}\\
			}\label{stepS2-2}
			$V_p\leftarrow\{y\mid Q(y,x_0)\in q^*_i\text{ for some }Q\}$;\tcc*[f]{(S3)}\\
			$RoleAt(x_0)\leftarrow \{Q(y,x_0)\mid Q(y,x_0)\in q^*_i\}$\;
			$ConceptAt(x_0)\leftarrow \{C(x_0)\mid C(x_0)\in q^*_i\}$\;
			\ForEach(\tcc*[f]{$A\in\NC$ since $\Omc$ in normal form}){$(A\sqsubseteq\exists P,v)\in\Omc$}{\label{stepS3}
				$RoleRew\leftarrow\mn{RoleAtRew}(q^*_i,x_0, A\sqsubseteq \exists P)$;\tcc*[f]{Algorithm~\ref{alg:rolerew}}\\
				$ConceptRew\leftarrow\mn{ConceptAtRew}(q^*_i,x_0,A\sqsubseteq \exists P)$;\tcc*[f]{Algorithm~\ref{alg:conceptrew}}\\			
	\ForEach{$mon_r\in RoleRew$}{\label{foreachmonr}
	\ForEach{$(At,mon_c)\in ConceptRew$}{\label{foreachAtmonc}
				$q'_i\leftarrow q^*_i\setminus (RoleAt(x_0)\cup ConceptAt(x_0))$;\label{stepS4}\tcc*[f]{(S4)}\\
				choose $y_0\in V_p$;\label{stepS5-1}\tcc*[f]{(S5)}\\
				$q'_i\leftarrow q'_i[y'\leftarrow y_0\mid y'\in V_p]$\;\label{stepS5-2}
				$\blue{q'_i\leftarrow q'_i\wedge A(y_0) \bigwedge_{j=1}^{|At|}At[j](y_0)}$;\label{stepS6}\tcc*[f]{(S6)}\\
				$\blue{q'_i\leftarrow \mn{RemoveSuperfluousRepeatedAtoms}(q'_i)}$;\\
				$Result\leftarrow Result\cup\{(q'_i,\monomial_i\times v\times mon_r\times mon_c)\}$\;
				}
				}
				}
			}
		}
	}
Output $Result$\;
\end{algorithm}

\begin{algorithm}[t]
\caption{Compute $\mn{RoleAtRew}(q^*_i,x_0, A\sqsubseteq \exists P)$}\label{alg:rolerew}
\BlankLine
\KwInput{a CQ $q^*_i$, a variable $x_0$, a GCI $A\sqsubseteq \exists P$, $\mn{saturate}(\Omc)$} 
\KwOutput{a set $\mn{RoleAtRew}(q^*_i,x_0, A\sqsubseteq \exists P)$ of monomials}
\BlankLine
$ListMonSets\leftarrow\emptyset$\;
\ForEach{\blue{$Q(y,x_0,t)\in \ext{q^*_i}$}}{\label{algo2-foreachatom}
$MonSetQ\leftarrow\emptyset$;\tcc*[f]{store monomials of RIs that allow to rewrite $Q$ in $P$}\\
\ForEach{$(P\sqsubseteq Q,\monomial_{P\sqsubseteq Q})\in\mn{saturate}(\Omc)$}{\label{algo2-foreachaxiom}
$MonSetQ\leftarrow MonSetQ\cup \{\monomial_{P\sqsubseteq Q}\}$\;
}
\If(\tcc*[f]{$Q$ cannot be rewritten into $P$}){$MonSetQ=\emptyset$}{Return $\emptyset$\;}
$ListMonSets\leftarrow ListMonSets\cup \{MonSetQ\}$
}
$Result\leftarrow\{1\}$;\tcc*[f]{compute all possible products of one monomial per \blue{$Q(y,x_0,t)$}}\\
\ForEach{$MonSetQ\in ListMonSets$}{\label{algo2-foreachmonsetq}
$Res\leftarrow \emptyset$\;
\ForEach{$\nonomial_1\in Result$}{\label{algo2-foreachmoninres}
\ForEach{$\nonomial_2\in MonSetQ$}{\label{algo2-foreachmoninmonsetq}
$Res\leftarrow Res\cup\{\nonomial_1\times\nonomial_2\}$\;
}
}
$Result\leftarrow Res$\;
}
Return $Result$\;
\end{algorithm}

\begin{algorithm}[t]
\caption{Compute $\mn{ConceptAtRew}(q^*_i,x_0, A\sqsubseteq \exists P)$}\label{alg:conceptrew}
\BlankLine
\KwInput{a CQ $q^*_i$, a variable $x_0$, a GCI $A\sqsubseteq \exists P$, $\mn{saturate}(\Omc)$} 
\KwOutput{a set $\mn{ConceptAtRew}(q^*_i,x_0, A\sqsubseteq \exists P)$ of pairs of the form $(At,\monomial)$ where \blue{$At$ is a list of concept names} and $\monomial$ is a monomial}
\BlankLine
$ListPairSets\leftarrow\emptyset$\;
\ForEach{\blue{$C(x_0,t)\in \ext{q^*_i}$}}{\label{alg3-foreachatom}
$PairSetC\leftarrow\emptyset$;\tcc*[f]{store pairs $(At,\monomial)$ that correspond to axioms that allow to rewrite $C$ using $P$}\\
\ForEach{$(B_1\sqcap\dots\sqcap B_p\sqsubseteq C, \nonomial_C)\in\mn{saturate}(\Omc)$}{\label{alg3-foreachaxiom1}
$ListPairSBs\leftarrow \emptyset$\;
\ForEach{$1\leq i\leq p$}{\label{alg3-foreachp}
$PairSB_i\leftarrow\emptyset$;\tcc*[f]{store pairs $(At,\monomial)$ that correspond to axioms that allow to rewrite $B_i$ using $P$}\\
\ForEach{$(P\sqsubseteq P_i,\monomial_i)\in\mn{saturate}(\Omc)$}{\label{alg3-foreachaxiom2}
\ForEach{$(\exists \mn{inv}(P_i).A_i\sqsubseteq B_i,v_i)\in\Omc$}{\label{alg3-foreachaxiom3}
$PairSB_i\leftarrow PairSB_i\cup \{(\blue{[A_i]},v_i\times\monomial_i)\}$\;
}}
\ForEach{$(\top\sqsubseteq B_i,\onomial_i)\in\mn{saturate}(\Omc)$}{\label{alg3-foreachaxiom4}
$PairSB_i\leftarrow PairSB_i\cup\{ (\blue{[]}, \onomial_i) \}$\;
}
$ListPairSBs\leftarrow ListPairSBs\cup\{PairSB_i\}$\;
}
$PairSet\leftarrow\{(\blue{[]}, \nonomial_C)\}$;\tcc*[f]{compute the part of $PairSetC$ that uses $(B_1\sqcap\dots\sqcap B_p\sqsubseteq C, \nonomial_C)$ with all possible choices of one pair per $B_i$}\\
\ForEach{$PairSB_i\in ListPairSBs$}{\label{alg3-foreachlistpairb}
$Res\leftarrow\emptyset$\;
\ForEach{$(At_1,\nonomial_1)\in PairSet$}{\label{alg3-foreachpairset}
\ForEach{$(At_2,\nonomial_2)\in PairSB_i$}{\label{alg3-foreachpairsbi}
$Res\leftarrow Res\cup\{(At_1\blue{\cdot} At_2,\nonomial_1\times\nonomial_2))\}$\;
}
}
$PairSet\leftarrow Res$\;\label{alg3pairsetres}
}
}
$PairSetC\leftarrow PairSetC\cup PairSet$\;
\If(\tcc*[f]{$C$ cannot be rewritten using $P$}){$PairSetC=\emptyset$}{Return $\emptyset$\;}
$ListPairSets\leftarrow ListPairSets\cup\{PairSetC\}$
}
$Result\leftarrow\{(\emptyset, 1)\}$;\tcc*[f]{compute all possible products of one pair per \blue{$C(x_0,t)$}}\\
\ForEach{$PairSetC\in ListPairSets$}{\label{alg3-foreachpairsetc}
$Res\leftarrow\emptyset$\;
\ForEach{$(At_1,\nonomial_1)\in Result$}{\label{alg3-foreachmoninres}
\ForEach{$(At_2,\nonomial_2)\in PairSetC$}{\label{alg3-foreachmoninpairsetC}
$Res\leftarrow Res\cup\{(At_1\blue{\cdot}At_2, \nonomial_1\times\nonomial_2)\}$\;
}
}
$Result\leftarrow Res$\;
}
Return $Result$\;
\end{algorithm}
\DecMargin{1em}

\begin{lemma}\label{lem:algorewfcorrect}
Algorithm~\ref{alg:rewrite} computes $\Rew(q,\Omc)$.
\end{lemma}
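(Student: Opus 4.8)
The plan is to prove that Algorithm~\ref{alg:rewrite} is a correct implementation of the rewriting relation $\rightarrow_\Omc$ by showing that each iteration of its main loop faithfully carries out one rewriting step of Definition~\ref{def:rewriting}, and that the loop terminates exactly when the fixpoint $\Rew(q,\Omc)$ has been reached. Concretely, I would establish two directions: (soundness) every pair $(q',\monomial')$ added to $Result$ satisfies $(q_i,\monomial_i)\rightarrow_\Omc (q',\monomial')$ for the pair $(q_i,\monomial_i)$ being processed, hence by induction $(q,1)\rightarrow_\Omc^*(q',\monomial')$ so $(q',\monomial')\in\Rew(q,\Omc)$; and (completeness) for every sequence $(q,1)=(q_0,\monomial_0)\rightarrow_\Omc\cdots\rightarrow_\Omc(q_k,\monomial_k)$, the pair $(q_k,\monomial_k)$ is eventually produced by the algorithm.

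For soundness, I would walk through the body of the main \texttt{while} loop and match it line by line against steps (S1)--(S6). The choice of $x_0$ in Line~\ref{stepS1} with the side condition that no atom $R(x_0,x_0)$ occurs is precisely (S1); Lines~\ref{stepS2-1}--\ref{stepS2-2} perform (S2); the construction of $V_p$, $RoleAt(x_0)$, $ConceptAt(x_0)$ and the iteration over $(A\sqsubseteq \exists P,v)\in\Omc$ in Line~\ref{stepS3} are the setup of (S3). The crucial point is that the auxiliary routines $\mn{RoleAtRew}$ (Algorithm~\ref{alg:rolerew}) and $\mn{ConceptAtRew}$ (Algorithm~\ref{alg:conceptrew}) enumerate exactly the admissible choices in conditions (S3)(a) and (S3)(b): for $\mn{RoleAtRew}$, a run returns a nonempty set iff for every $Q(y,x_0)\in q^*_i$ there is some $(P\sqsubseteq Q,\monomial_{P\sqsubseteq Q})\in\mn{saturate}(\Omc)$ (matching (S3)(a)), and the returned monomials are exactly the products $\prod_{Q(y,x_0)}\monomial_{P\sqsubseteq Q}$ over all choices — this is what the nested loops at Lines~\ref{algo2-foreachmonsetq}--\ref{algo2-foreachmoninmonsetq} compute, and I would verify it by an easy induction on $ListMonSets$. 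Symmetrically, for $\mn{ConceptAtRew}$, a run returns a nonempty set iff condition (S3)(b) is satisfiable, and the returned pairs $(At,mon_c)$ are exactly the pairs $\bigl(\{A_i\mid 1\le i\le p\},\, \nonomial_C\times\prod_{i=1}^p(v_i\times\monomial_i)\times\prod_{i=1}^{p'}\onomial_i\bigr)$ arising from all admissible selections (with the $B'_i$ case contributing $(\emptyset,\onomial_i)$); again this follows by induction on $ListPairSBs$ and then on $ListPairSets$, keeping in mind the $\top\sqsubseteq B'_i$ branch of (S3)(b). Then Lines~\ref{stepS4}--\ref{stepS6} implement (S4), (S5), (S6) directly, and the accumulated monomial $\monomial_i\times v\times mon_r\times mon_c$ is exactly $\monomial\times mon(q,x_0,A\sqsubseteq\exists P)$ as prescribed by (S6), since $mon(q,x_0,A\sqsubseteq\exists P)$ is initialised to $v$ and multiplied in turn by the role-atom monomials and the concept-atom monomials. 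Hence each addition to $Result$ corresponds to exactly one $\rightarrow_\Omc$-step.

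For completeness, the argument is the dual: given any single step $(q_i,\monomial_i)\rightarrow_\Omc(q',\monomial')$ realised by a choice of $x_0$, of $(A\sqsubseteq\exists P,v)\in\Omc$, and of the admissible witnesses in (S3)(a)--(b), all of these choices are ranged over by the corresponding \texttt{foreach} loops of the algorithm (Lines~\ref{stepS1}, \ref{stepS3}, \ref{foreachmonr}, \ref{foreachAtmonc}), and by the characterisation of the outputs of $\mn{RoleAtRew}$ and $\mn{ConceptAtRew}$ established above, the exact monomials $mon_r$ and $mon_c$ used in that step appear in $RoleRew$ and $ConceptRew$; so $(q',\monomial')$ is added to $Result$ in the iteration that processes $(q_i,\monomial_i)$. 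A straightforward induction on the length $k$ of the rewriting sequence, using that the \texttt{while} loop keeps iterating as long as $Result$ grows, then shows that every $(q_k,\monomial_k)$ with $(q,1)\rightarrow_\Omc^*(q_k,\monomial_k)$ lies in the final $Result$. Conversely, termination is guaranteed because (as argued in the proof of Theorem~\ref{th:CQ-answering-algorithm}, items on the size of rewritten queries) each rewriting step strictly reduces the number of existentially quantified variables of the query, so only finitely many distinct queries $q^*$ can arise, and each such $q^*$ carries only the monomials obtainable as products of at most $|\Omc|$ variables from $\mn{saturate}(\Omc)$ (finitely many by $\times$-idempotency); hence $Result$ stabilises and the loop exits with $Result=\Rew(q,\Omc)$.

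The main obstacle I anticipate is the bookkeeping in the completeness direction for $\mn{ConceptAtRew}$: one must carefully match the nested product structure of (S3)(b) — an outer choice of $(B_1\sqcap\dots\sqcap B_p\sqcap B'_1\sqcap\dots\sqcap B'_{p'}\sqsubseteq C,\nonomial_C)$, then for each $i\le p$ a pair $(\exists\mn{inv}(P_i).A_i\sqsubseteq B_i,v_i)$ together with $(P\sqsubseteq P_i,\monomial_i)$, and for each $i\le p'$ some $(\top\sqsubseteq B'_i,\onomial_i)$ — against the three levels of loops in Algorithm~\ref{alg:conceptrew} (over $PairSB_i$, over $PairSetC$-components, and over the different $C(x_0)$), and to check that the accumulated set $At$ of fresh concept names to be attached to $y_0$ in (S6) is assembled correctly as the union of the $\{A_i\}$'s (the $\top\sqsubseteq B'_i$ contributions adding nothing). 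This is purely combinatorial but error-prone, so I would isolate it as a separate sublemma: "for every $q^*_i$, $x_0$, and $(A\sqsubseteq\exists P,v)\in\Omc$, the set $\mn{ConceptAtRew}(q^*_i,x_0,A\sqsubseteq\exists P)$ equals the set of all pairs $(At(q^*_i,x_0,A\sqsubseteq\exists P)\setminus\{A\},\, mon_c)$ obtained from admissible selections in (S3)(b)", prove it by induction over the loop structure, and similarly for $\mn{RoleAtRew}$; the rest of Lemma~\ref{lem:algorewfcorrect} then follows mechanically.
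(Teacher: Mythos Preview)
Your proposal is correct and follows essentially the same approach as the paper: both prove the two inclusions between $Result$ and $\Rew(q,\Omc)$ by induction (on the length of the rewriting sequence for one direction, on the number of \texttt{while}-loop iterations for the other), and both rest on characterising the outputs of $\mn{RoleAtRew}$ and $\mn{ConceptAtRew}$ as exactly the monomials and pairs arising from the admissible choices in (S3)(a)--(b). The paper's proof is slightly terser about the internal structure of Algorithms~\ref{alg:rolerew} and~\ref{alg:conceptrew} and defers termination to the separate Lemmas~\ref{lem:size-rew} and~\ref{lem:algorewfull}, whereas you fold termination into the argument and propose isolating the auxiliary characterisations as explicit sublemmas; these are organisational rather than substantive differences.
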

\begin{proof}
Let $(q^*,\monomial)\in\Rew(q,\Omc)$. There exists a rewriting sequence $(q_0,\monomial_0)\rightarrow_\Omc (q_1,\monomial_1)\rightarrow_\Omc\dots\rightarrow_\Omc(q_k,\monomial_k)$ where $(q,1)=(q_0,\monomial_0)$ and $(q^*,\monomial)=(q_k,\monomial_k)$. We show by induction that $(q_i,\monomial_i)$ is produced by Algorithm~\ref{alg:rewrite} for every $0\leq i\leq k$. 

\noindent\emph{Base case, $i=0$.} Since Algorithm~\ref{alg:rewrite} initializes $Result$ with $\{(q,1)\}$ and only adds elements to $Result$, it holds that Algorithm~\ref{alg:rewrite} produces $(q_0,\monomial_0)$. 

\noindent\emph{Induction step.} Assume that Algorithm~\ref{alg:rewrite} produces $(q_i,\monomial_i)$ and that $(q_i,\monomial_i)\rightarrow_\Omc(q_{i+1},\monomial_{i+1})$, \ie $(q_{i+1},\monomial_{i+1})$ is obtained from $(q_i,\monomial_i)$ by applying steps (S1) to (S6) of Definition~\ref{def:rewriting}. 
After Algorithm~\ref{alg:rewrite} adds $(q_i,\monomial_i)$ to $Result$, $Rew\neq Result$ so the while-loop is entered. Then, since $(q_i,\monomial_i)\in Result$, it is considered in the foreach-loop in line~\ref{foreachquery} of Algorithm~\ref{alg:rewrite}.  
\begin{itemize}
\item Let $x_0$ be the existentially quantified variable of $q_i$ selected in step (S1): $x_0$ fulfills the conditions to be 
 considered in the foreach-loop in line~\ref{stepS1}.
\item Step (S2) corresponds to lines~\ref{stepS2-1} and~\ref{stepS2-2}.
\item Let $(A\sqsubseteq \exists P,v)\in\Omc$ be the GCI selected by step (S3): $(A\sqsubseteq \exists P,v)$ is considered in the foreach-loop in line~\ref{stepS3}. We show that $At(q_i,x_0,A\sqsubseteq \exists P)$ and $mon(q_i,x_0,A\sqsubseteq \exists P)$ constructed in step (S3) are such that there are $mon_r\in \mn{RoleAtRew}(q^*_i,x_0, A\sqsubseteq \exists P)$ and $(At,mon_c)\in \mn{ConceptAtRew}(q^*_i,x_0, A\sqsubseteq \exists P)$ s.t.\ \blue{$At(q_i,x_0,A\sqsubseteq \exists P)=[A]\cdot At$} and $mon(q_i,x_0,A\sqsubseteq \exists P)=v\times mon_r\times mon_c$. 
This follows from the fact that :
\begin{enumerate}
\item Algorithm~\ref{alg:rolerew} returns the set of all monomials that can be obtained as a product of the form $\prod_{\blue{Q(y,x_0,t)\in \ext{q^*_i}}}\monomial_Q$ where for each $\blue{Q(y,x_0,t)\in \ext{q^*_i}}$, $\monomial_Q\in MonSetQ$ and $MonSetQ=\{\monomial_{P\sqsubseteq Q}\mid (P\sqsubseteq Q,\monomial_{P\sqsubseteq Q})\in\mn{saturate}(\Omc)\}$. In particular, it contains the product $mon_r$ of the $\monomial_{P\sqsubseteq Q}$ selected by step (S3)(a).
\item  Algorithm~\ref{alg:conceptrew} returns the set of all pairs $(At,mon_c)$ that can be obtained as the \blue{concatenation} and product of pairs of \blue{lists} of concept names $At_C$ and monomials $\nonomial_C\times\monomial_C$ for all \blue{$C(x_0,t)\in \ext{q^*_i}$}, such that there is 
$(B_1\sqcap\dots\sqcap B_p\sqsubseteq C, \nonomial_C)\in\mn{saturate}(\Omc)$ and for every $1\leq i\leq p$ either there is a pair $(\exists \mn{inv}(P_i).A_i\sqsubseteq B_i,v_i)\in\Omc$ and $(P\sqsubseteq P_i,\monomial_i)\in\mn{saturate}(\Omc)$ or there is $(\top\sqsubseteq B_i,\onomial_i)\in\mn{saturate}(\Omc)$ such that $At_C$ is the \blue{list} of the $A_i$ and $\monomial_C$ is the product of the $v_i\times\monomial_i$ or $\onomial_i$. 
In particular, some $(At,mon_c)$ corresponds to the GCIs and RIs selected by step (S3)(b).
\end{enumerate}

\item Finally, step (S4) corresponds to line~\ref{stepS4}, step (S5) corresponds to lines~\ref{stepS5-1} and~\ref{stepS5-2} and step (S6) corresponds to line~\ref{stepS6}.
\end{itemize}
Hence, $(q_{i+1},\monomial_{i+1})$  is in the set $Result$ returned by Algorithm~\ref{alg:rewrite}.
\smallskip

In the other direction, assume that $(q^*,\monomial)$ is in the set $Result$ returned by Algorithm~\ref{alg:rewrite}. Since Algorithm~\ref{alg:rewrite} only adds elements to $Result$, we can prove that there exists a rewriting sequence $(q_0,\monomial_0)\rightarrow_\Omc (q_1,\monomial_1)\rightarrow_\Omc\dots\rightarrow_\Omc(q_k,\monomial_k)$ where $(q,1)=(q_0,\monomial_0)$ and $(q^*,\monomial)=(q_k,\monomial_k)$ by induction on the number of iterations of the while-loop of Algorithm~\ref{alg:rewrite} before it produces $(q^*,\monomial)$. 

\noindent\emph{Base case.} If $(q^*,\monomial)$ is in the set returned by Algorithm~\ref{alg:rewrite} when the while-loop has not been entered, $(q^*,\monomial)=(q,1)$ and there exists a rewriting sequence as required (with $k=0$). 

\noindent\emph{Induction step.} Assume that every $(q_i,\monomial_i)$ produced by  Algorithm~\ref{alg:rewrite} after $n$ iterations of the while-loop is such that there exists a rewriting sequence as required. 
Let $(q_{i+1},\monomial_{i+1})$ be added to $Result$ by Algorithm~\ref{alg:rewrite} in the $n+1$ iteration of the while-loop and $(q_i,\monomial_i)\in Rew$ be the query that corresponds to the for-loop (line~\ref{foreachquery}) in which $(q_{i+1},\monomial_{i+1})$ is added. It is sufficient to show that $(q_i,\monomial_i)\rightarrow_\Omc(q_{i+1},\monomial_{i+1})$ to conclude by induction. 

\begin{itemize}
\item Let $x_0$, $(A\sqsubseteq \exists P,v)$, $mon_r$, $(At,mon_c)$ correspond to the foreach-loops (lines~\ref{stepS1},~\ref{stepS3},~\ref{foreachmonr}, and~\ref{foreachAtmonc} respectively) in which $(q_{i+1},\monomial_{i+1})$ is added: it is clear that $x_0$ fulfills the conditions to be selected in step (S1) and we show that $(A\sqsubseteq \exists P,v)$ fulfills those to be selected in step (S3) and that step (S3) can choose $At(q_i,x_0,A\sqsubseteq \exists P)$ and $mon(q_i,x_0,A\sqsubseteq \exists P)$ such that $At(q_i,x_0,A\sqsubseteq \exists P)=\blue{[A]\cdot At}$ and $mon(q_i,x_0,A\sqsubseteq \exists P)=v\times mon_r\times mon_c$.
\begin{enumerate}
\item Since $mon_r\in \mn{RoleAtRew}(q^*_i,x_0, A\sqsubseteq \exists P)$, Algorithm~\ref{alg:rolerew} outputs a non-empty set. Hence for every $\blue{Q(y,x_0,t)\in \ext{q^*_i}}$, $MonSetQ\neq\emptyset$, \ie there exists $(P\sqsubseteq Q,\monomial_{P\sqsubseteq Q})\in\mn{saturate}(\Omc)$. Moreover, $mon_r$ is one of the products built by selecting one such $\monomial_{P\sqsubseteq Q}$ per $\blue{Q(y,x_0,t)\in \ext{q^*_i}}$.

\item Since $(At,mon_c)\in \mn{ConceptAtRew}(q^*_i,x_0, A\sqsubseteq \exists P)$, Algorithm~\ref{alg:conceptrew} outputs a non-empty set. Hence for every $\blue{C(x_0,t)\in \ext{q^*_i}}$, $PairSetC\neq\emptyset$, \ie there exists $(B_1\sqcap\dots\sqcap B_p\sqsubseteq C,\nonomial_C)\in\mn{saturate}(\Omc)$ such that the corresponding $PairSet$ is non-empty when added to $PairSetC$. It follows that for every $1\leq i\leq p$, $PairSB_i\neq\emptyset$ (since otherwise $PairSet$ is replaced by $\emptyset$ in line~\ref{alg3pairsetres}). Finally, $PairSB_i\neq\emptyset$ implies the existence of either $(P\sqsubseteq P_i,\monomial_i)\in\mn{saturate}(\Omc)$ and $(\exists \mn{inv}(P_i).A_i\sqsubseteq B_i, v_i)\in \Omc$ or of $(\top\sqsubseteq B_i,\onomial_i)\in\mn{saturate}(\Omc)$. Moreover, each $PairSetC$ contains all possible pairs $(At_C,\nonomial_C\times\monomial_C)$ that correspond to the choice of some $(B_1\sqcap\dots\sqcap B_p\sqsubseteq C,\nonomial_C)$ and for $1\leq i\leq p$ of either $(P\sqsubseteq P_i,\monomial_i)$ and $(\exists \mn{inv}(P_i).A_i\sqsubseteq B_i, v_i)$ or of $(\top\sqsubseteq B_i,\onomial_i)$, so $(At,mon_c)$ is one of the pairs built by selecting one such pair per $\blue{C(x_0,t)\in \ext{q^*_i}}$ and aggregating them.
\end{enumerate}

\item It is then easy to check that step (S2) is performed by lines~\ref{stepS2-1} and~\ref{stepS2-2}, step (S4) by line~\ref{stepS4}, step (S5) by lines~\ref{stepS5-1} and~\ref{stepS5-2} and step (S6) by line~\ref{stepS6}.
\end{itemize}
It follows that we can obtain $(q_{i+1},\monomial_{i+1})$ from $(q_i,\monomial_i)$ by applying steps (S1) to (S6), \ie \mbox{$(q_i,\monomial_i)\rightarrow_\Omc (q_{i+1},\monomial_{i+1})$.}
\end{proof}

\begin{lemma}\label{lem:size-rew}
For every $(q^*,\monomial)\in \Rew(q,\Omc)$,  the size of $(q^*,\monomial)$ is polynomial \wrt $|\Omc|$ and $|q|$.  
Hence the cardinality of $\Rew(q,\Omc)$ is exponential in $|\Omc|$ and $|q|$. 
\end{lemma}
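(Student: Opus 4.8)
The plan is to bound separately the two components $q^*$ and $\monomial^*$ of an arbitrary pair $(q^*,\monomial^*)\in\Rew(q,\Omc)$, show each is of polynomial size, and then deduce the cardinality bound by a counting argument over objects of bounded size. The key structural fact I would establish first is that each application of $\rightarrow_\Omc$ strictly decreases the number of distinct variables: step (S1) picks an existentially quantified variable $x_0$, step (S4) deletes every atom containing $x_0$, and no later step reintroduces $x_0$ or introduces any fresh variable — in particular (S6) only adds concept atoms $D(y_0)$ on an already-present term $y_0\in V_p$, and (S2), (S5) never create variables. Hence every rewriting sequence $(q,1)=(q_0,\monomial_0)\rightarrow_\Omc\cdots\rightarrow_\Omc(q_k,\monomial_k)$ has length $k\le|\mn{terms}(q)|$; in particular all such sequences are finite, so $\Rew(q,\Omc)$ is well defined and the bounds below apply to every element.

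To bound $|q^*|$, I would treat CQs as sets of atoms (as Algorithm~\ref{alg:rewrite} does). The number of role atoms never increases along a rewriting: (S2) preserves their count, (S4) and (S5) can only delete or merge atoms, and (S6) adds only concept atoms; so $q^*$ has at most $|\mn{atoms}(q)|$ role atoms, all over terms and role names already present in $q$. For concept atoms, the only concept names introduced by a rewriting step lie in $\{A\}\cup\{A_i\mid(\exists\mn{inv}(P_i).A_i\sqsubseteq B_i,v_i)\in\Omc\}$, hence in $\NC\cap\signature{\Omc}$, together with the names $\NC\cap\signature{q}$ from the original query; so on any single term there are at most $|\NC\cap\signature{q}|+|\NC\cap\signature{\Omc}|$ distinct concept atoms. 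Since $q^*$ has at most $|\mn{terms}(q)|$ terms, its total number of atoms is polynomial in $|q|$ and $|\Omc|$, and since every atom uses names, variables and constants drawn from the input, $|q^*|$ is polynomial in $|q|+|\Omc|$.

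To bound $|\monomial^*|$, I would note that the saturation rules of Table~\ref{tab:completionRules} and the initialization of $\Smc$ only form products of monomials already present, so no provenance variable outside $\semiringVars\cap\signature{\Omc}$ is ever created; inspecting (S3) and (S6), the monomial accumulated in a rewriting is a product of monomials occurring in $\Omc$ or in $\mn{saturate}(\Omc)$, hence a product of the at most $|\Omc|$ variables annotating the axioms of $\Omc$. As $\times$ is idempotent in $\why$, such a monomial contains each variable at most once, so $\monomial^*$ has at most $|\Omc|$ factors and $|\monomial^*|$ is polynomial in $|\Omc|$. Combining the three estimates, every $(q^*,\monomial^*)\in\Rew(q,\Omc)$ has size polynomial in $|q|+|\Omc|$. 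For the cardinality claim it then suffices to count: the number of distinct CQs with polynomially many atoms over polynomially many concept names, role names, variables and constants is at most $2^{\mathrm{poly}(|q|+|\Omc|)}$, and the number of monomials over $\le|\Omc|$ variables is at most $2^{|\Omc|}$, so $|\Rew(q,\Omc)|$ is exponential in $|q|+|\Omc|$.

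The step I expect to be the main obstacle is the concept-atom bound, because the left-hand-side conjunctions produced by the completion rules (e.g.\ $\CR^T_2$ and $\CR^T_3$) can be exponentially long; the point that makes the argument go through is that step (S6) does not copy a whole saturated conjunction into the query, but only the ``witness'' names $A_i$ coming from axioms $\exists\mn{inv}(P_i).A_i\sqsubseteq B_i$ of $\Omc$ itself, so only names from $\signature{\Omc}$ ever enter $q^*$, and under set semantics for atoms these cannot accumulate beyond one copy per term.
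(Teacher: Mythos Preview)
Your proposal is correct and follows essentially the same approach as the paper: bound the terms of $q^*$ (they form a subset of $\mn{terms}(q)$), bound the atoms over those terms using names from $\signature{\Omc}\cup\signature{q}$, bound $\monomial^*$ via $\otimes$-idempotency, and count. The paper's proof is considerably terser (it gives the single bound $|\signature{\Omc}|\cdot|\mn{terms}(q)|^2$ without separating role from concept atoms and without the rewriting-length argument), but your additional structural observations---that each $\rightarrow_\Omc$ step strictly removes a variable and that (S6) only imports names from $\Omc$, not whole saturated conjunctions---are correct and make the argument more self-contained.
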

\begin{proof}
For every $(q^*,\monomial)\in \Rew(q,\Omc)$, (i) $\mn{terms}(q^*)\subseteq \mn{terms}(q)$, (ii) the atoms of $q^*$ use concept and role names that occur in $\Omc$, \blue{and (iii) the atoms of $q^*\setminus q$ can be repeated at most $\mn{Card}(\Omc)$ times. It follows that the number of atoms in $q^*$ is bounded by the number of atoms in $q$ plus $|\signature{\Omc}|\times |\mn{terms}(q)|^2\times \mn{Card}(\Omc)$ and the size of $q^*$ is bounded by $|q|+3(|\signature{\Omc}|\times |\mn{terms}(q)|^2\times \mn{Card}(\Omc))$}.  
Hence the size of $q^*$ is polynomial \wrt $|\Omc|$ and $|q|$. Moreover, since $\monomial$ is a product of variables that occur in $\Omc$, the size of $\monomial$ is linear \wrt $|\Omc|$. 

 It follows that the cardinality of $\{q^*\mid (q^*,\monomial)\in\Rew(q,\Omc)\}$ is exponential in $|\Omc|$ and $|q|$ and the cardinality of $\{\monomial \mid (q^*,\monomial)\in\Rew(q,\Omc)\}$ is exponential in $|\Omc|$.
Hence the cardinality of $\Rew(q,\Omc)$ is exponential in $|\Omc|$ and~$|q|$. 
\end{proof}

\begin{lemma}\label{lem:algorewrole}
Algorithm~\ref{alg:rolerew} runs in exponential time \wrt $|q|$ and $|\Omc|$.
\end{lemma}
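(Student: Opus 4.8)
The plan is to bound the cost of each loop of Algorithm~\ref{alg:rolerew} separately and then multiply the bounds together. First I would record a size bound on the input query $q^*_i$: since (within Algorithm~\ref{alg:rewrite}) it is obtained from a query of $\Rew(q,\Omc)$ by merely replacing atoms $R(x_0,y)$ by $R^-(y,x_0)$, its atoms only involve terms of $q$ and concept/role names of $\Omc$, so by Lemma~\ref{lem:size-rew} (or rather its proof) the number of atoms of $q^*_i$---and in particular the number of role atoms of the form $Q(y,x_0)$---is polynomial in $|q|$ and $|\Omc|$. This controls how often the loops of lines~\ref{algo2-foreachatom} and~\ref{algo2-foreachmonsetq} are executed.

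Next I would bound the monomial sets. Because $\times$ is idempotent, every monomial occurring in $\mn{saturate}(\Omc)$ is a product of pairwise distinct variables of $\semiringVars$ appearing in $\Omc$, so there are at most $2^{|\Omc|}$ such monomials; consequently each set $MonSetQ$ built in lines~\ref{algo2-foreachatom}--\ref{algo2-foreachaxiom} has at most $2^{|\Omc|}$ elements, and the accumulator $Result$, whose elements are again monomials of this kind, never exceeds $2^{|\Omc|}$ elements either. Building one $MonSetQ$ is a single scan over $\mn{saturate}(\Omc)$, which by point~(4) of Theorem~\ref{prop:completionalgorithmELHI} is of size exponential in $|\Omc|$ and hence is traversable in exponential time.

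Then I would analyse the accumulation phase (lines~\ref{algo2-foreachmonsetq}--\ref{algo2-foreachmoninmonsetq}): it runs once per role atom $Q(y,x_0)$, hence polynomially often in $|q|$ and $|\Omc|$, and each run ranges over the product $Result\times MonSetQ$, i.e.\ over at most $2^{|\Omc|}\cdot 2^{|\Omc|}$ pairs, performing for each pair one monomial multiplication and an insertion-with-deduplication into $Result$, both of which take time polynomial in the (at most exponential in $|\Omc|$) size of the data structures. Multiplying the three bounds gives a running time of the form $2^{p(|q|,|\Omc|)}$ for a polynomial $p$, i.e.\ exponential in $|q|$ and $|\Omc|$, which is the claim.

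I do not expect a genuine obstacle here; the one spot that deserves care is the cardinality of the product of the $MonSetQ$'s, which without $\times$-idempotency would look like $(2^{|\Omc|})^{|q^*_i|}$---still $2^{\mathrm{poly}(|q|,|\Omc|)}$, hence single-exponential, so even the crude bound suffices---while idempotency keeps it down to $2^{|\Omc|}$. One must also remember to charge the size of the precomputed input $\mn{saturate}(\Omc)$, which is exponential in $|\Omc|$, to the $|\Omc|$ parameter.
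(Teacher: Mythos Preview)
Your proposal is correct and follows essentially the same approach as the paper: both arguments walk through each foreach-loop of Algorithm~\ref{alg:rolerew}, bounding the number of role atoms of $q^*_i$ via Lemma~\ref{lem:size-rew}, the size of $\mn{saturate}(\Omc)$ via Theorem~\ref{prop:completionalgorithmELHI}, and the size of $Result$ via the product of the $|MonSetQ|$'s. Your use of $\otimes$-idempotency to cap $|Result|$ at $2^{|\Omc|}$ is a slightly sharper observation than the paper's product bound $\prod_{Q(y,x_0)\in q^*_i}|MonSetQ|$, but as you note yourself, either bound yields the single-exponential conclusion.
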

\begin{proof}
We show that every foreach-loop of Algorithm~\ref{alg:rolerew} iterates over a set whose size is exponentially bounded \wrt $|\Omc|$ and $|q|$.  
%
In line~\ref{algo2-foreachatom}, by Lemma~\ref{lem:size-rew}, the number of atoms in $q^*_i$ is polynomial \wrt $|\Omc|$ and $|q|$.
In line~\ref{algo2-foreachaxiom}, by Theorem~\ref{prop:completionalgorithmELHI}, $\mn{saturate}(\Omc)$ can be computed in exponential time hence is of exponential size \wrt $|\Omc|$. 
In line~\ref{algo2-foreachmonsetq}, $ListMonSets$ contains a polynomial number (bounded by the size of $q^*_i$) of sets $MonSetQ$. 
In line~\ref{algo2-foreachmoninmonsetq}, each set $MonSetQ$ constructed by the algorithm is a set of monomials from $\mn{saturate}(\Omc)$ hence is of exponential size \wrt $|\Omc|$.
Finally, in line~\ref{algo2-foreachmoninres}, $Result$ is bounded by $\prod_{\blue{Q(y,x_0,t)\in \ext{q^*_i}}} |MonSetQ|$ where $|q^*_i|$ is polynomial \wrt $|\Omc|$ and $|q|$ and $|MonSetQ|$ is exponential \wrt $|\Omc|$, hence $|Result|$ is exponential \wrt $|q|$ and $|\Omc|$. 
\end{proof}

\begin{lemma}\label{lem:algorewconcept}
Algorithm~\ref{alg:conceptrew} runs in exponential time \wrt $|q|$ and $|\Omc|$.
\end{lemma}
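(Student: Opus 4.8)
The plan is to mimic the proof of Lemma~\ref{lem:algorewrole}: show that every \texttt{foreach}-loop in Algorithm~\ref{alg:conceptrew} ranges over a set of cardinality at most exponential in $|q|$ and $|\Omc|$, that the nesting depth of these loops is constant, and that each atomic operation inside the loops (a set union of concept names, a product of monomials, a product of pairs) can be performed in time at most exponential in these parameters. Combining these observations gives the claimed exponential bound.

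First I would bound the ``input'' quantities. By Theorem~\ref{prop:completionalgorithmELHI}, $\mn{saturate}(\Omc)$ is computable in exponential time and hence has exponential size in $|\Omc|$; in particular it contains at most exponentially many annotated GCIs of the forms $(B_1\sqcap\dots\sqcap B_p\sqsubseteq C,\nonomial_C)$, $(P\sqsubseteq P_i,\monomial_i)$, and $(\top\sqsubseteq B_i,\onomial_i)$. By Lemma~\ref{lem:size-rew} the query $q^*_i$ passed to the algorithm has polynomial size in $|\Omc|$ and $|q|$, so the loops on lines~\ref{alg3-foreachatom} and \ref{alg3-foreachpairsetc} iterate polynomially often. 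The loop on line~\ref{alg3-foreachp} runs $p$ times, and $p$ is bounded by the number of concept names occurring in $\Omc$ (conjunctions being treated as sets without repetition), hence polynomially often; the loops on lines~\ref{alg3-foreachaxiom3} and \ref{alg3-foreachlistpairb} are likewise polynomial since they range over $\Omc$, resp.\ over $ListPairSBs$ which has at most $p$ elements. Only the loops on lines~\ref{alg3-foreachaxiom1}, \ref{alg3-foreachaxiom2}, \ref{alg3-foreachaxiom4}, \ref{alg3-foreachpairset}, \ref{alg3-foreachpairsbi}, \ref{alg3-foreachmoninres}, \ref{alg3-foreachmoninpairsetC} range over sets that can be exponential.

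The exponential blow-up therefore comes only from the intermediate sets of pairs, and I would bound their sizes step by step. Each $PairSB_i$ is a union of pairs extracted from $\mn{saturate}(\Omc)$ and $\Omc$, so $|PairSB_i|$ is exponential in $|\Omc|$. The set $PairSet$ is built by iterating, over the at most $p$ sets $PairSB_i\in ListPairSBs$, the assignment $PairSet\leftarrow Res$ (line~\ref{alg3pairsetres}) where $Res$ consists of all pairs $(At_1\cup At_2,\nonomial_1\times\nonomial_2)$ with $(At_1,\nonomial_1)$ in the current $PairSet$ and $(At_2,\nonomial_2)\in PairSB_i$; hence $|PairSet|\le\prod_{i=1}^p |PairSB_i|$, and a product of polynomially many factors each of size at most $2^{\mathrm{poly}}$ is again $2^{\mathrm{poly}}$. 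Consequently each $PairSetC$, a union over the exponentially many inclusions $(B_1\sqcap\dots\sqcap B_p\sqsubseteq C,\nonomial_C)\in\mn{saturate}(\Omc)$ of such $PairSet$'s, has exponential size, and the final $Result$, obtained by the analogous iterated product over the polynomially many sets $PairSetC\in ListPairSets$, has size $\prod_C |PairSetC|\le 2^{\mathrm{poly}}$. Since every element handled (a set of concept names from $\Omc$ together with a monomial over the variables of $\Omc$) has polynomial size, each union and monomial product costs polynomial time, so the overall running time is bounded by the number of loop iterations times a polynomial factor, which is exponential in $|q|$ and $|\Omc|$.

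The step I expect to require the most care is exactly this bounding of the \emph{nested} products: the $PairSB_i$'s inside $PairSet$, and then the $PairSetC$'s inside $Result$. One must check that iterating a product over polynomially many single-exponential sets does not compound into a doubly-exponential bound; this holds precisely because at each level the \emph{number} of factors is polynomial while each factor has only single-exponential size, so $(2^{\mathrm{poly}})^{\mathrm{poly}}=2^{\mathrm{poly}}$ remains single-exponential. Everything else is a routine accounting of loop depths and per-operation costs.
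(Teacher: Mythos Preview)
Your proposal is correct and follows essentially the same line-by-line accounting as the paper's proof: bound the range of each \texttt{foreach}-loop (using Lemma~\ref{lem:size-rew} for the atoms of $q^*_i$, Theorem~\ref{prop:completionalgorithmELHI} for $\mn{saturate}(\Omc)$, and the observation that $p$ is at most the number of concept names in $\Omc$), then argue that the nested products $\prod_i|PairSB_i|$ and $\prod_C|PairSetC|$ stay single-exponential because the number of factors is polynomial. Your handling of $|PairSetC|$ (as a union of exponentially many $PairSet$'s each of exponential size) is in fact slightly more careful than the paper's, which simply bounds it by the number of inclusions in $\mn{saturate}(\Omc)$; both give an exponential bound.
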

\begin{proof}
We show that every foreach-loop of Algorithm~\ref{alg:conceptrew} iterates over a set whose size is exponentially bounded \wrt $|\Omc|$ and $|q|$. 
In line~\ref{alg3-foreachatom}, by Lemma~\ref{lem:size-rew}, the number of atoms in $q^*_i$ is polynomial \wrt $|\Omc|$ and $|q|$. 
In lines~\ref{alg3-foreachaxiom1},~\ref{alg3-foreachaxiom2},~\ref{alg3-foreachaxiom3},~\ref{alg3-foreachaxiom4}, by Theorem~\ref{prop:completionalgorithmELHI}, $\mn{saturate}(\Omc)$ can be computed in exponential time hence is of exponential size \wrt $|\Omc|$. 
 In line~\ref{alg3-foreachp}, for every $(B_1\sqcap\dots\sqcap B_p\sqsubseteq C,\nonomial_C)\in\mn{saturate}(\Omc)$, \new{$p$ is bounded by $|\signature{\Omc}|*\mn{Card}(\Omc)$, hence is polynomial in $|\Omc|$}. 
 In line~\ref{alg3-foreachlistpairb}, each $ListPairSBs$ constructed by the algorithm is a set of $p$ sets for some $(B_1\sqcap\dots\sqcap B_p\sqsubseteq C,\nonomial_C)\in\mn{saturate}(\Omc)$, so contains a polynomial number of sets. 
In line~\ref{alg3-foreachpairsbi}, for each $PairSB_i$ constructed by the algorithm, $PairSB_i$ is a set of pairs that consists of a \blue{list} of 0 or 1 concept name that occurs in $\Omc$ and a product of one or two monomials that occur in $\mn{saturate}(\Omc)$, so $|PairSB_i|$ is exponential \wrt $|\Omc|$. 
 In line~\ref{alg3-foreachpairset}, for each $PairSet$ constructed by the algorithm, $|PairSet|$ is bounded by $\prod_{i=1}^p |PairSetB_i|$ where $p$ is polynomial \wrt $|\Omc|$ and $|PairSB_i|$ is exponential \wrt $|\Omc|$, hence $|PairSet|$ is exponential \wrt $|\Omc|$. 
 In line~\ref{alg3-foreachpairsetc}, $ListPairSets$ is bounded by the number of atoms \blue{$C(x_0,t)$ in $\ext{q^*_i}$} so is polynomial \wrt $|\Omc|$ and $|q|$. 
 In line~\ref{alg3-foreachmoninpairsetC}, for each $PairSetC$ constructed by the algorithm, $|PairSetC|$ is bounded by the number of $(B_1\sqcap\dots\sqcap B_p\sqsubseteq C,\nonomial_C)\in\mn{saturate}(\Omc)$ so is exponentially bounded \wrt $|\Omc|$. 
In line~\ref{alg3-foreachmoninres}, $|Result|$ is bounded by $\prod_{\blue{C(x_0,t)\in \ext{q^*_i}}} |PairSetC|$ where $|q^*_i|$ is polynomial \wrt $|\Omc|$ and $|q|$ and $|PairSetC|$ is exponential \wrt $|\Omc|$, hence $|Result|$ is exponential \wrt $|q|$ and~$|\Omc|$.  \qedhere
\end{proof}

\begin{lemma}\label{lem:algorewfull}
Algorithm~\ref{alg:rewrite} runs in exponential time \wrt $|q|$ and $|\Omc|$.
\end{lemma}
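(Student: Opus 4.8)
The plan is to bound the running time of Algorithm~\ref{alg:rewrite} by a product of quantities that have already been shown to be at most exponential, exploiting the monotonicity of the variable $Result$. First I would observe that by Lemma~\ref{lem:algorewfcorrect} the value of $Result$ is always contained in $\Rew(q,\Omc)$, and that it never shrinks: each pass through the body of the while-loop only adds pairs to $Result$. Combining this with Lemma~\ref{lem:size-rew}, which states that $|\Rew(q,\Omc)|$ is exponential in $|\Omc|$ and $|q|$, I would conclude that the while-loop is executed at most an exponential number of times. Indeed, because of the test $Rew\neq Result$, between two consecutive iterations at least one new pair is added to $Result$, so the number of iterations is bounded by $|\Rew(q,\Omc)|+1$.

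Next I would bound the cost of a single iteration of the while-loop. The foreach-loop in line~\ref{foreachquery} ranges over $Rew\subseteq\Rew(q,\Omc)$, hence over an exponentially bounded set; by Lemma~\ref{lem:size-rew} each query $q_i$ handled there has polynomial size, so the selection of $x_0$ in line~\ref{stepS1} ranges over polynomially many variables, steps (S1)--(S2) in lines~\ref{stepS2-1}--\ref{stepS2-2} together with the construction of $V_p$, $RoleAt(x_0)$ and $ConceptAt(x_0)$ take polynomial time, and the foreach-loop in line~\ref{stepS3} ranges over the at most polynomially many GCIs of $\Omc$ of the form $(A\sqsubseteq\exists P,v)$. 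Inside that loop, the calls to Algorithms~\ref{alg:rolerew} and~\ref{alg:conceptrew} take exponential time by Lemmas~\ref{lem:algorewrole} and~\ref{lem:algorewconcept}, and their outputs $RoleRew$ and $ConceptRew$ have exponential size; hence the nested foreach-loops in lines~\ref{foreachmonr} and~\ref{foreachAtmonc} iterate an exponential number of times, and each such iteration performs only polynomial work — steps (S4)--(S6), which manipulate only polynomial-size objects. Since a product of finitely many exponentially bounded and polynomially bounded factors is exponentially bounded, one iteration of the while-loop costs exponential time.

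Finally I would multiply the exponential bound on the number of while-iterations by the exponential bound on the cost of one iteration and add the exponential-time preprocessing of $\mn{saturate}(\Omc)$ guaranteed by Theorem~\ref{prop:completionalgorithmELHI}, obtaining the claimed exponential bound overall. The main obstacle I anticipate is making the termination argument fully rigorous: one must argue carefully that, although the loop test compares the freshly recomputed $Result$ with the previous $Rew$, the strict growth of $Result$ at every non-final iteration caps the iteration count at $|\Rew(q,\Omc)|+1$; everything else is a routine aggregation of bounds established in Lemmas~\ref{lem:size-rew}, \ref{lem:algorewrole} and~\ref{lem:algorewconcept}. A secondary point worth checking is that the elementary set operations (membership tests, unions) used throughout stay within the stated bound, which holds because every set manipulated has exponential cardinality and polynomial-size elements.
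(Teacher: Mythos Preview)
Your proposal is correct and follows essentially the same approach as the paper: bound the number of while-loop iterations by $|\Rew(q,\Omc)|$ via Lemma~\ref{lem:size-rew}, then bound the cost of each iteration using Lemma~\ref{lem:size-rew} for the query sizes and Lemmas~\ref{lem:algorewrole} and~\ref{lem:algorewconcept} for the subroutines. You are a bit more explicit than the paper about the termination argument (invoking Lemma~\ref{lem:algorewfcorrect} to get $Result\subseteq\Rew(q,\Omc)$ and arguing strict growth) and about the preprocessing of $\mn{saturate}(\Omc)$, but these are refinements of the same argument rather than a different route.
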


\begin{proof}
By Lemma~\ref{lem:size-rew}, $|\Rew(q,\Omc)|$ is exponential in $|\Omc|$ and $|q|$ so Algorithm~\ref{alg:rewrite} runs the while-loop an exponential number of times \wrt $|\Omc|$ and $|q|$. Moreover, at each iteration of the while-loop, every foreach-loop iterates over a set whose size is exponentially bounded \wrt $|\Omc|$ and $|q|$ and each iteration takes at most exponential time \wrt $|\Omc|$ and $|q|$:
\begin{itemize}
\item $|Rew|$ is bounded by $|\Rew(q,\Omc)|$ so exponential \wrt $|\Omc|$ and $|q|$. 

\item For each $(q_i,\monomial_i)$, by Lemma~\ref{lem:size-rew} $(q_i,\monomial_i)$ has polynomial size \wrt $|\Omc|$ and $|q|$ so both the number of variables in $q_i$ and $|\mn{atoms}(q^*_i)|$ are polynomial \wrt $|\Omc|$ and $|q|$.

\item By Lemmas~\ref{lem:algorewrole} and~\ref{lem:algorewconcept}, $\mn{RoleAtRew}(q^*_i,x_0, A\sqsubseteq \exists P)$ and $\mn{ConceptAtRew}(q^*_i,x_0, A\sqsubseteq \exists P)$ are executed in exponential time \wrt $|\Omc|$ and $|q|$, hence their results $RoleRew$ and $ConceptRew$ are of exponential size \wrt $|\Omc|$ and $|q|$.\qedhere
\end{itemize}
\end{proof}

\ThCQansweringAlgo*
\begin{proof}
Claim (4) of the theorem follows from Lemmas~\ref{lem:algorewfcorrect} and~\ref{lem:algorewfull}: $\Rew(q,\Omc)$ can be computed in exponential time in $|\Omc|$ and $|q|$ using Algorithm~\ref{alg:rewrite}. 
Moreover, claim (3) implies claims (1) and (2):
\begin{itemize}
\item Since \why is positive, by Theorem~\ref{th:sem-entailment}, $\Omc'\models q(\vec{a})$ iff $\Pmc(q(\vec{a}),\Omc)\neq 0$, and claim~(3) implies that $\Pmc(q(\vec{a}),\Omc)\neq 0$ iff there exists $(q^*,\monomial^*)\in \Rew(q,\Omc)$ such that there is a match for the extended version of $q^*(\vec{a})$ in~\new{$\Imc_\Dmc$} (note that all monomials that occur in $\Rew(q,\Omc)$ are different from 0 since by construction they are products of variables that annotate $\Omc$), so iff there exists $(q^*,\monomial^*)\in \Rew(q,\Omc)$ such that there is a match for $q^*(\vec{a})$ in~$\Dmc'$.
\item $\Pmc(q(\vec{a}), \Omc)= \sum_{ \Omc\models(q(\vec{a}),\monomial)} \monomial$ so claim (3) implies that $\Omc\models(q(\vec{a}),\monomial)$ iff $\monomial$ occurs in $\new{\sum_{(q^*,\monomial^*)\in \Rew(q,\Omc)}(\monomial^*\times \Sigma_{\onomial\in \p{\Imc_\Dmc}{\ext{q^*(\vec{a})}}}\onomial)}=\sum_{(q^*,\monomial^*)\in\Rew(q,\Omc)} \sum_{ \new{\onomial\in \p{\Imc_\Dmc}{\ext{q^*(\vec{a})}}}} \monomial^*\times\onomial$. 
\end{itemize}
We thus prove claim (3), \ie that $\Pmc(q(\vec{a}),\Omc)=\new{\sum_{(q^*,\monomial^*)\in \Rew(q,\Omc)}(\monomial^*\times \Sigma_{\onomial\in \p{\Imc_\Dmc}{\ext{q^*(\vec{a})}}}\onomial)}$. 
By definition of the provenance of a query and by Theorem~\ref{thm:can-model-main}, it holds that: 
$\Pmc(q(\vec{a}),\Omc)=\sum_{\nonomial\in \p{\Imc_{\Omc}}{\ext{q}(\vec{a})}}\nonomial$  where $\Imc_\Omc$ is the canonical model of $\Omc$ and $\ext{q}(\vec{a})$ is the extended version of $q(\vec{a})$. 
We thus need to show that $$\p{\Imc_{\Omc}}{\ext{q}(\vec{a})}=\{\monomial^*\times\onomial\mid (q^*,\monomial^*)\in \Rew(q,\Omc), \onomial\in \p{\Imc_{\Dmc}}{\ext{q^*}(\vec{a})}\}.$$ 

\noindent\textbf{[``$\supseteq$'']} Let $(q^*,\monomial^*)\in \Rew(q,\Omc)$ and let $\onomial\in \p{\Imc_{\Dmc}}{\ext{q^*}(\vec{a})}$. 
\\
\noindent$\bullet$ There is a match $\pi$ of  $\ext{q^*}$ to $\Imc_{\Dmc}$ such that $\pi(\vec{x})=\vec{a}$ and $\onomial=\prod_{P(\vec{t},t)\in \ext{q^*}} \pi(t)$. 
\new{By definition of $\Imc_\Dmc$}, it holds that for every $P(\vec{t},t)\in \ext{q^*}$:
\begin{itemize}
\item $(P(\pi(\vec{t})),\pi(t))\in \Dmc$ so $(P(\pi(\vec{t})),\pi(t))\in \mn{saturate}(\Omc)$;
\item so $\Omc\models (P(\pi(\vec{t})),\pi(t))$ by Theorem~\ref{prop:completionalgorithmELHI};
\item so $\Imc_\Omc\models (P(\pi(\vec{t})),\pi(t))$ by Theorem~\ref{thm:can-model-main}. 
\end{itemize}
It follows that $\pi$ is a match of $\ext{q^*}$ to $\Imc_{\Omc}$ such that $\pi(\vec{x})=\vec{a}$ and $\onomial=\prod_{P(\vec{t},t)\in \ext{q^*}} \pi(t)$.
\\
\noindent$\bullet$  Since $(q^*,\monomial^*)\in \Rew(q,\Omc)$, there is a rewriting sequence $(q_0,\monomial_0)\rightarrow_\Omc (q_1,\monomial_1)\rightarrow_\Omc\dots\rightarrow_\Omc(q_k,\monomial_k)$ such that $(q_0,\monomial_0)=(q,1)$ and $(q_k,\monomial_k)=(q^*,\monomial^*)$. 
Moreover, for every $0\leq i\leq k-1$, there is $\monomial_{i,i+1}$ such that $\monomial_{i+1}=\monomial_i\times \monomial_{i,i+1}$. 

\noindent$\bullet$ We show that for every $0\leq j\leq k$, there is a match $\pi_j$ of the extended version $\ext{q_j}$ of $q_j$ to $\Imc_{\Omc}$ such that $\pi_j(\vec{x})=\vec{a}$ and $\prod_{P(\vec{t},t)\in \ext{q_j}} \pi_j(t)=\prod_{i=j}^{k-1}\monomial_{i,i+1}\times\onomial$. The proof is by descending induction on $j$.

\noindent\emph{Base case:} $j=k$, $(q_k,\monomial_k)=(q^*,\monomial^*)$ and $\prod_{i=k}^{k-1}\monomial_{i,i+1}\times\onomial=\onomial$. We have shown that there is a match $\pi$ of $\ext{q^*}$ to $\Imc_{\Omc}$ such that $\pi(\vec{x})=\vec{a}$ and $\onomial=\prod_{P(\vec{t},t)\in \ext{q^*}} \pi(t)$ so we just need to take $\pi_k=\pi$. 

\noindent\emph{Induction step.}  Assume that the property is true for some $1\leq j\leq k$: There is a match $\pi_j$ of $\ext{q_j}$ to $\Imc_{\Omc}$ such that $\pi_j(\vec{x})=\vec{a}$ and $\prod_{P(\vec{t},t)\in \ext{q_j}} \pi_j(t)=\prod_{i=j}^{k-1}\monomial_{i,i+1}\times\onomial$. 

\noindent We show that the property still holds for $j-1$. Since $(q_{j-1},\monomial_{j-1})\rightarrow_\Omc (q_j,\monomial_j)$, there is a variable $x_0$ existentially quantified in $q_{j-1}$ such that no atoms of the form $R(x_0,x_0)$ occur in $q_{j-1}$ and there is $(A\sqsubseteq\exists P,v)\in\Omc$ such that $q_j$ has been obtained from $q_{j-1}$ by dropping every atom that contains $x_0$, selecting a variable $y_0\in V_p$ and replacing every occurrence $y'\in V_p$ in $q_{j-1}$ by $y_0$ and \blue{adding atom $D(y_0)$ for each occurrence of concept name $D$ in the list $At(q_{j-1},x_0,A\sqsubseteq \exists P)$ (with a limit of $\mn{Card}(\Omc)$ repetitions) and the following conditions are respected:} 
\begin{itemize}
\item[(a)] for every \blue{$Q(y,x_0,t)\in \ext{q_{j-1}}$}, there exists $(P\sqsubseteq Q,\monomial_{P\sqsubseteq Q}^{\blue{t}})\in \mn{saturate}(\Omc)$, 
\item[(b)] for every \blue{$C(x_0,t)\in \ext{q_{j-1}}$}, there are $p^{C,\blue{t}},p'^{C,\blue{t}}\geq 0$ such that $(B^{C,\blue{t}}_1\sqcap\dots\sqcap B^{C,\blue{t}}_{p^{C,\blue{t}}}\sqcap B'^{C,\blue{t}}_1\sqcap\dots\sqcap B'^{C,\blue{t}}_{p'^{C,\blue{t}}}\sqsubseteq C, \nonomial_C^{\blue{t}})\in\mn{saturate}(\Omc)$, for every $1\leq i\leq p^{C,\blue{t}}$, there exist $(\exists \mn{inv}(P^{C,\blue{t}}_i).A^{C,\blue{t}}_i\sqsubseteq B^{C,\blue{t}}_i,v^{C,\blue{t}}_i)\in\Omc$ and $(P\sqsubseteq P^{C,\blue{t}}_i,\monomial^{C,\blue{t}}_i)\in\mn{saturate}(\Omc)$, and for every $1\leq i\leq p'^{C,\blue{t}}$ there exists $(\top\sqsubseteq B^{C,\blue{t}}_i,\onomial^{C,\blue{t}}_i)\in\mn{saturate}(\Omc)$; 
\item[(c)] \blue{the concept names in $At(q_{j-1},x_0,A\sqsubseteq \exists P)$ (i.e., those that occur in $\ext{q_j}\setminus\ext{q_{j-1}}$) are exactly $A$ and the concept names corresponding to the $A_i^{C,\blue{t}}$'s 
and} if we let $\monomial_C^{\blue{t}}=\prod_{i=1}^{p^{C,\blue{t}}}(v^{C,\blue{t}}_i\times\monomial^{C,\blue{t}}_i)\times\prod_{i=1}^{p'^{C,\blue{t}}}\onomial^{C,\blue{t}}_i$, it holds that:
$$\monomial_{j-1,j}=mon(q_{j-1},x_0,A\sqsubseteq \exists P)=v\times\prod_{\blue{Q(y,x_0,t)\in \ext{q_{j-1}}}}\monomial_{P\sqsubseteq Q}^{\blue{t}}\times\prod_{\blue{C(x_0,t)\in \ext{q_{j-1}}}}\monomial_C^{\blue{t}}\times \nonomial_C^{\blue{t}}.$$
\end{itemize}
\blue{Since $A\in At(q_{j-1},x_0,A\sqsubseteq \exists P)$, there is an atom $A(y_0,t_A)$ in $\ext{\ext{q_j}}$. Since $\pi_j$ is a match of $\ext{q_j}$ to $\Imc_\Omc$  and $(A\sqsubseteq \exists P,v)\in\Omc$,} it follows that there exists $e\in\Delta^{\Imc_\Omc}$ such that $(\pi_j(y_0), e,\pi_j(t_A)\times v)\in P^{\Imc_\Omc}$. 

\noindent Let $P(\vec{t},t)\in \ext{q_{j-1}}$. 
\begin{itemize}
\item If $P(\vec{t},t)\in \ext{q_{j}}$, $\pi_j(\vec{t},t)\in P^{\Imc_\Omc}$ by definition of $\pi_j$. 

\item Otherwise, if $P(\vec{t},t)\notin \ext{q_{j}}$, either $P(\vec{t},t)=Q(y,x_0,t)$ for some $Q\in\NR$ and $y\in V_p$, or $P(\vec{t},t)=C(x_0,t)$ for some $C\in \NC$.
\begin{itemize}
\item In the first case, \blue{$Q(y,x_0,t)\in \ext{q_{j-1}}$ and (by (a)) there is $(P\sqsubseteq Q,\monomial_{P\sqsubseteq Q}^{\blue{t}})\in \mn{saturate}(\Omc)$}. 

By Lemma~\ref{lem:saturation-correct-for-axioms}, \new{since $\Imc_\Omc$ is a model of $\Omc$ and $(P\sqsubseteq Q,\monomial_{P\sqsubseteq Q}^{\blue{t}})$ is an RI, } $\Imc_\Omc\models (P\sqsubseteq Q,\monomial_{P\sqsubseteq Q}^{\blue{t}})$. 

Hence, since $(\pi_j(y_0), e,\pi_j(t_A)\times v)\in P^{\Imc_\Omc}$, it holds that $(\pi_j(y_0), e,\pi_j(t_A)\times v\times \monomial_{P\sqsubseteq Q}^{\blue{t}})\in Q^{\Imc_\Omc}$. 

\item In the second case, \blue{$C(x_0,t)\in \ext{q_{j-1}}$ and (by (b)) there are} $p^{C,\blue{t}},p'^{C,\blue{t}}\geq 0$ such that $(B^{C,\blue{t}}_1\sqcap\dots\sqcap B^{C,\blue{t}}_{p^{C,\blue{t}}}\sqcap B'^{C,\blue{t}}_1\sqcap\dots\sqcap B'^{C,\blue{t}}_{p'^{C,\blue{t}}}\sqsubseteq C, \nonomial_C^{\blue{t}})\in\mn{saturate}(\Omc)$, for every $1\leq i\leq p^{C,\blue{t}}$, there are $(\exists \mn{inv}(P^{C,\blue{t}}_i).A^{C,\blue{t}}_i\sqsubseteq B^{C,\blue{t}}_i,v^{C,\blue{t}}_i)\in\Omc$ and $(P\sqsubseteq P^{C,\blue{t}}_i,\monomial^{C,\blue{t}}_i)\in\mn{saturate}(\Omc)$, and for every $1\leq i\leq p'^{C,\blue{t}}$ there are $(\top\sqsubseteq B^{C,\blue{t}}_i,\onomial^{C,\blue{t}}_i)\in\mn{saturate}(\Omc)$.
 
By Lemma~\ref{lem:saturation-correct-for-axioms}, \new{since $\Imc_\Omc$ is a model of $\Omc$ and \emph{all annotations of $\Imc_\Omc$ are monomials}, it holds that,} 
$\Imc_\Omc\models (B^{C,\blue{t}}_1\sqcap\dots\sqcap B^{C,\blue{t}}_{p^{C,\blue{t}}}\sqcap B'^{C,\blue{t}}_1\sqcap\dots\sqcap B'^{C,\blue{t}}_{p'^{C,\blue{t}}}\sqsubseteq C, \nonomial_C^{\blue{t}})$, for $1\leq i\leq p^{C,\blue{t}}$, $\Imc_\Omc\models (\exists \mn{inv}(P^{C,\blue{t}}_i).A^{C,\blue{t}}_i\sqsubseteq B^{C,\blue{t}}_i,v^{C,\blue{t}}_i)$ and $\Imc_\Omc\models (P\sqsubseteq P^{C,\blue{t}}_i,\monomial^{C,\blue{t}}_i)$, and for every $1\leq i\leq p'^{C,\blue{t}}$, $\Imc_\Omc\models (\top\sqsubseteq B^{C,\blue{t}}_i,\onomial^{C,\blue{t}}_i)$. Hence:

 \begin{itemize}
 \item  For $1\leq i\leq p^{C,\blue{t}}$, $(\pi_j(y_0), e,\pi_j(t_A)\times v\times \monomial^{C,\blue{t}}_i)\in (P^{C,\blue{t}}_i)^{\Imc_\Omc}$ and since $A^{C,\blue{t}}_i\in At(q_{j-1},x_0,A\sqsubseteq \exists P)$, \blue{there is some atom} $A^{C,\blue{t}}_i(y_0,t_{A^{C,\blue{t}}_i})\in\ext{q_j}$ so $(\pi_j(y_0),\pi_j(t_{A^{C,\blue{t}}_i}))\in (A^{C,\blue{t}}_i)^{\Imc_\Omc}$, thus $(e,\pi_j(t_A)\times v\times \monomial^{C,\blue{t}}_i\times v^{C,\blue{t}}_i\times\pi_j(t_{A^{C,\blue{t}}_i}))\in (B^{C,\blue{t}}_i)^{\Imc_\Omc}$.
 \item For $1\leq i\leq p'^{C,\blue{t}}$, $(e,\onomial^{C,\blue{t}}_i)\in (B^{C,\blue{t}}_i)^{\Imc_\Omc}$. 
 \end{itemize}
It follows that $(e,\pi_j(t_A)\times v\times \prod_{i=1}^{p^{C,\blue{t}}}(\monomial^{C,\blue{t}}_i\times v^{C,\blue{t}}_i\times \pi_j(t_{A^{C,\blue{t}}_i}))\times\prod_{i=1}^{p'^{C,\blue{t}}}\onomial^{C,\blue{t}}_i\times\nonomial_C^{\blue{t}})\in C^{\Imc_\Omc}$, \ie $(e,\pi_j(t_A)\times v\times \monomial_C^{\blue{t}}\times\nonomial_C^{\blue{t}}\times\prod_{i=1}^{p^{C,\blue{t}}} \pi_j(t_{A^{C,\blue{t}}_i}))\in C^{\Imc_\Omc}$ (with $\monomial_C^{\blue{t}}=\prod_{i=1}^{p^{C,\blue{t}}}(\monomial^{C,\blue{t}}_i\times v^{C,\blue{t}}_i)\times\prod_{i=1}^{p'^{C,\blue{t}}}\onomial^{C,\blue{t}}_i$).   
\end{itemize}
\end{itemize}
Define $\pi_{j-1}$ as follows:
\begin{itemize}
\item $\pi_{j-1}(x_0)=e$, 
\item $\pi_{j-1}(y)=\pi_{j}(y_0)$ for every $y\in V_p$, 
\item $\pi_{j-1}(x)=\pi_{j}(x)$ for every other variable $x$ of $q_{j-1}$, 
\item $\pi_{j-1}(t)=\pi_{j}(t)$ for every $P(\vec{t},t)\in \ext{q_j}\cap\ext{q_{j-1}}$, 
\item $\pi_{j-1}(t)=\pi_{j}(t_A)\times v\times \monomial_{P\sqsubseteq Q}^{\blue{t}}$ for $t$ such that $Q(y,x_0,t)\in\ext{q_{j-1}}$ and 
\item $\pi_{j-1}(t)=\pi_j(t_A)\times v\times \monomial_C^{\blue{t}}\times\nonomial_C^{\blue{t}}\times\prod_{i=1}^{p^{C,\blue{t}}} \pi_j(t_{A^{C,\blue{t}}_i})$ for $t$ such that $C(x_0,t)\in\ext{q_{j-1}}$.
\end{itemize}
$\pi_{j-1}$ is a match of $\ext{q_{j-1}}$ to $\Imc_{\Omc}$ such that $\pi_{j-1}(\vec{x})=\pi_{j}(\vec{x})=\vec{a}$ (since $x_0\notin\vec{x}$ as it is an existentially quantified variable, and for every $y\in \vec{x}\cap V_p$, $y$ is replaced by $y_0$ in $q_j$). Moreover: 

\begin{align*}
\prod_{P(\vec{t},t)\in \ext{q_{j-1}}} \pi_{j-1}(t)=&\prod_{P(\vec{t},t)\in \ext{q_{j-1}}\cap \ext{q_{j}}} \pi_{j}(t)\times \prod_{\blue{Q(y,x_0,t)\in \ext{q_{j-1}}}}\pi_{j}(t_A)\times v\times \monomial_{P\sqsubseteq Q}^{\blue{t}}\\&\times\prod_{\blue{C(x_0,t)\in \ext{q_{j-1}}}}\pi_j(t_A)\times v\times \monomial_C^{\blue{t}}\times\nonomial_C^{\blue{t}}\times\prod_{i=1}^{p^{C,\blue{t}}} \pi_j(t_{A^{C,\blue{t}}_i})
\\
=&\prod_{P(\vec{t},t)\in \ext{q_{j-1}}\cap \ext{q_{j}}} \pi_{j}(t)\times \pi_{j}(t_A)\times \prod_{\blue{C(x_0,t)\in \ext{q_{j-1}}}}\prod_{i=1}^{p^{C,\blue{t}}} \pi_j(t_{A^{C,\blue{t}}_i}) \\&
\times v\times\prod_{\blue{Q(y,x_0,t)\in \ext{q_{j-1}}}}\monomial_{P\sqsubseteq Q}^{\blue{t}}\times\prod_{\blue{C(x_0,t)\in \ext{q_{j-1}}}}\monomial_C^{\blue{t}}\times\nonomial_C^{\blue{t}}
\\
=&\prod_{P(\vec{t},t)\in \ext{q_{j}}} \pi_{j}(t)\times v\times\prod_{\blue{Q(y,x_0,t)\in \ext{q_{j-1}}}}\monomial_{P\sqsubseteq Q}^{\blue{t}}\times\prod_{\blue{C(x_0,t)\in \ext{q_{j-1}}}}\monomial_C^{\blue{t}}\times\nonomial_C^{\blue{t}}\quad\text{ \blue{(cf. point (c) above)}}
\\
=&\prod_{P(\vec{t},t)\in \ext{q_{j}}} \pi_{j}(t)\times \monomial_{j-1,j} \quad\text{ \blue{(cf. point (c) above)}}
\\
=&\prod_{i=j}^{k-1}\monomial_{i,i+1}\times\onomial\times \monomial_{j-1,j}
\\
=&\prod_{i=j-1}^{k-1}\monomial_{i,i+1}\times\onomial. 
\end{align*}

\noindent We conclude that there exists a match $\pi_0$ of $\ext{q_0}=\ext{q}$ to $\Imc_{\Omc}$ such that $\pi_0(\vec{x})=\vec{a}$ and $\prod_{P(\vec{t},t)\in \ext{q}} \pi_0(t)=\prod_{i=0}^{k-1}\monomial_{i,i+1}\times\onomial$. Since $\monomial_0=1$, $\prod_{i=0}^{k-1}\monomial_{i,i+1}=\monomial_0\times \prod_{i=0}^{k-1}\monomial_{i,i+1}$ so since $\monomial_{i+1}=\monomial_i\times \monomial_{i,i+1}$ for every $0\leq i\leq k-1$, $\prod_{i=0}^{k-1}\monomial_{i,i+1}=\monomial_k=\monomial^*$. 
Hence $\prod_{P(\vec{t},t)\in \ext{q}} \pi_0(t)=\monomial^*\times\onomial$.
\smallskip

We have thus shown that for every $(q^*,\monomial^*)\in \Rew(q,\Omc)$ and $\onomial\in \p{\Imc_{\Dmc}}{\ext{q^*}(\vec{a})}$, $\monomial^*\times\onomial\in\p{\Imc_{\Omc}}{\ext{q}(\vec{a})}$, \ie $\p{\Imc_{\Omc}}{\ext{q}(\vec{a})}\supseteq\{\monomial^*\times\onomial\mid (q^*,\monomial^*)\in \Rew(q,\Omc), \onomial\in \p{\Imc_{\Dmc}}{\ext{q^*}(\vec{a})}\}.$
\medskip

\noindent\textbf{[``$\subseteq$'']} In the other direction, let $\nonomial\in \p{\Imc_{\Omc}}{\ext{q}(\vec{a})}$: There is a match $\pi$ of $\ext{q}$ to $\Imc_{\Omc}$ such that $\pi(\vec{x})=\vec{a}$ and $\nonomial=\prod_{P(\vec{t},t)\in \ext{q}} \pi(t)$. 
\\
\noindent$\bullet$ Let $V_{an}$ be the set of variables of $\ext{q}$ that are mapped by $\pi$ to anonymous individuals from $\Delta^{\Imc_\Omc}\setminus\NI$. Since $\pi(\vec{x})=\vec{a}$ and $c^{\Imc_\Omc}=c$ for every $c\in\NI$, all variables in $V_{an}$ are existentially quantified in $q$. Moreover, the anonymous part of $\Imc_\Omc$ is tree-shaped, so for every $x\in V_{an}$, there are no atoms of the form $R(x,x)$ in $q$.
\\
\noindent$\bullet$ We define a total order $\succ$ over variables from $V_{an}$ as follows. Considering a fixed sequence of applications of \new{the chase rule} that constructs $\Imc_\Omc$, we define $y\succ y'$ iff $\pi(y)$ has been introduced before $\pi(y')$.  
\\
\noindent$\bullet$ We build a rewriting sequence $(q_0,\monomial_0)\rightarrow_\Omc (q_1,\monomial_1)\rightarrow_\Omc\dots\rightarrow_\Omc(q_k,\monomial_k)$  together with a sequence $\pi_0=\pi, \pi_1\dots, \pi_k=\pi^*$ such that:
\begin{enumerate}
\item $(q_0,\monomial_0)=(q,1)$, $(q_k,\monomial_k)=(q^*,\monomial^*)$ and $\mn{terms}(q^*)\cap V_{an}=\emptyset$; 

\item each $\pi_i$ is a match for $\ext{q_i}$ to $\Imc_\Omc$ such that $\pi_i$ coincides with $\pi$ on \blue{$\mn{terms}(q_i)$, \ie on the shared variables of (the non-extended versions of) $q$ and $q_i$ (recall that each rewriting step removes one existentially quantified variable from $q$)};

\item $\prod_{P(\vec{t},t)\in \ext{q}} \pi(t)=\prod_{P(\vec{t},t)\in \ext{q^*}} \pi^*(t)\times \monomial^*$. 
\end{enumerate}

\noindent The construction by induction ensures that (1) $\mn{terms}(q_{i+1})\cap V_{an}\subsetneq \mn{terms}(q_i)\cap V_{an}$, (2) $\pi_{i+1}$ is a match for $\ext{q_{i+1}}$ in $\Imc_\Omc$ such that $\pi_{i+1}$ coincides with $\pi$ on \blue{$\mn{terms}(q_i)$}, and (3) $\prod_{P(\vec{t},t)\in \ext{q_{i+1}}} \pi_{i+1}(t)\times \monomial_{i,i+1}=\prod_{P(\vec{t},t)\in \ext{q_i}} \pi_i(t)$ for some $\monomial_{i,i+1}$ such that $\monomial_{i+1}=\monomial_i\times \monomial_{i,i+1}$. 

\noindent For $i\geq 0$, assuming that $(q_i,\monomial_i)$ and $\pi_i$ are constructed, we obtain $(q_{i+1},\monomial_{i+1})$ and $\pi_{i+1}$ as follows:
\begin{itemize}
\item[(S1)] Choose $x_0$ such that $x_0$ is the least element in $V_{an}\cap \mn{terms}(q_i)$ \wrt $\succ$ (\ie $\pi_i(x_0)=\pi(x_0)$ has been introduced last in the construction of $\Imc_\Omc$). 

\item[(S2)] Replace each role atom of the form $R(x_0,y)$ in $q_i$, where $y$ and $R$ are arbitrary, by the atom $\mn{inv}(R)(y,x_0)$.

\item[(S3)] Let $V_p=\{y\mid Q(y,x_0)\in q\text{ for some }Q\}$. 
By construction of $\Imc_\Omc$, $\pi_i(x_0)$ has been introduced by applying the chase rule with some $(A\sqsubseteq \exists P,v)\in\Omc$ (recall that $\Omc$ is in normal form, so $A\in\NC$) and some $(e_0,\monomial_A)\in A^{\Imc_\Omc}$. 

Let $At(q_i,x_0,A\sqsubseteq \exists P)=\blue{[A]}$ and $mon(q_i,x_0,A\sqsubseteq \exists P)=v$.  
\begin{itemize}
\item For every 
\blue{$Q(y,x_0, t)\in \ext{q_i}$}, since $\pi_i$ is a match for $\ext{q_i}$ in $\Imc_\Omc$, it holds that $(\pi_i(y),\pi_i(x_0),\blue{\pi_i(t)})\in Q^{\Imc_\Omc}$. 

Since $\pi_i(x_0)$ is least \wrt $\succ$ in $V_{an}\cap \mn{terms}(q_i)$, it has been introduced after $\pi_i(y)$. By construction of $\Imc_\Omc$ (since there is no loop in the anonymous part of $\Imc_\Omc$), this implies that $\pi_i(y)=e_0$. 

By Lemma~\ref{lem:canonical-saturation-RIs}, there exists $(P\sqsubseteq Q, \monomial_{P\sqsubseteq Q}^{\blue{t}})\in\mn{saturate}(\Omc)$ such that 
$$\pi_i(\blue{t})=\monomial_A\times v\times\monomial_{P\sqsubseteq Q}^{\blue{t}}.$$

Update $mon(q_i,x_0,A\sqsubseteq \exists P)\leftarrow mon(q_i,x_0,A\sqsubseteq \exists P)\times\monomial_{P\sqsubseteq Q}^{\blue{t}}$. 

\item For every 
\blue{$C(x_0, t)\in \ext{q_i}$}, since $\pi_i$ is a match for $\ext{q_i}$ in $\Imc_\Omc$, $(\pi_i(x_0),\blue{\pi_i(t)})\in C^{\Imc_\Omc}$. 

By Lemma~\ref{lem:canonical-saturation-GCIs}, 
there are $p^{C,\blue{t}}, p'^{C,\blue{t}}\geq 0$ such that
\begin{itemize}
\item for every $1\leq i\leq p^{C,\blue{t}}$, there exist $(\exists \mn{inv}(P^{C,\blue{t}}_i).A^{C,\blue{t}}_i\sqsubseteq B^{C,\blue{t}}_i,v^{C,\blue{t}}_i)\in\Omc$ and $(P\sqsubseteq P^{C,\blue{t}}_i,\monomial^{C,\blue{t}}_i)\in\mn{saturate}(\Omc)$,
\item for every $1\leq i\leq p'^{C,\blue{t}}$, there exists $(\top\sqsubseteq B'^{C,\blue{t}}_i,\onomial^{C,\blue{t}}_i)\in\mn{saturate}(\Omc)$,
\item $(B^{C,\blue{t}}_1\sqcap\dots\sqcap B^{C,\blue{t}}_{p^{C,\blue{t}}}\sqcap B'^{C,\blue{t}}_1\sqcap\dots\sqcap B'^{C,\blue{t}}_{p'^{C,\blue{t}}}\sqsubseteq C, \nonomial_C^{\blue{t}})\in\mn{saturate}(\Omc)$,
\item for every $1\leq i\leq p^{C,\blue{t}}$ there is some $(e_0,\monomial_{A^{C,\blue{t}}_i})\in (A^{C,\blue{t}}_i)^{\Imc_\Omc}$,
\end{itemize}
and if we let $\monomial_C^{\blue{t}}=\prod_{i=1}^{p^{C,\blue{t}}}\monomial^{C,\blue{t}}_i\times v^{C,\blue{t}}_i\times\prod_{i=1}^{p'^{C,\blue{t}}} \onomial^{C,\blue{t}}_i$, then 
\begin{align*}
\text{if $p^{C,\blue{t}}\neq 0$: }\quad\quad
\pi_i(\blue{t})=&\monomial_A\times v\times\prod_{i=1}^{p^{C,\blue{t}}}(\monomial^{C,\blue{t}}_i\times v^{C,\blue{t}}_i\times\monomial_{A^{C,\blue{t}}_i})\times\prod_{i=1}^{p'^{C,\blue{t}}} \onomial^{C,\blue{t}}_i\times \nonomial_C^{\blue{t}}\\
=&\monomial_A\times v\times\monomial_C^{\blue{t}}\times\nonomial_C^{\blue{t}}\times\prod_{i=1}^{p^{C,\blue{t}}}\monomial_{A^{C,\blue{t}}_i}
\\
\text{and if $p^{C,\blue{t}}=0$: }\quad\quad
\pi_i(\blue{t})=&\prod_{i=1}^{p'^{C,\blue{t}}} \onomial^{C,\blue{t}}_i\times \nonomial_C^{\blue{t}}
\\=&\monomial_C^{\blue{t}}\times\nonomial_C^{\blue{t}}.
\end{align*}
Update $At(q_i,x_0,A\sqsubseteq \exists P)\leftarrow At(q_i,x_0,A\sqsubseteq \exists P)\blue{\cdot [}A^{C,\blue{t}}_i\mid 1\leq i\leq p^{C,\blue{t}}\blue{]}$ and $mon(q_i,x_0,A\sqsubseteq \exists P)\leftarrow mon(q_i,x_0,A\sqsubseteq \exists P)\times\monomial_C^{\blue{t}}\times\nonomial_C^{\blue{t}}$. 
\end{itemize}

\item[(S4)] Drop from $q_i$ every atom that contains $x_0$.

\item[(S5)] Select a variable $y_0\in V_p$ and replace every occurrence of $y'\in V_p$ in $q_i$ by $y_0$ (recall that $\pi_i$ maps all these variables to $e_0$).

\item[(S6)] 
\blue{Add atom $D(y_0)$ to $q_i$ for each occurrence of concept name $D$ in the list $At(q_i,x_0,A\sqsubseteq \exists P)$, then limit the total number of occurrences of $D(y_0)$ in the query to $\mn{Card}(\Omc)$. Finally,} multiply $\monomial_i$ by $$\monomial_{i,i+1}=mon(q_i,x_0,A\sqsubseteq \exists P)=v\times\prod_{\blue{Q(y,x_0,t)\in \ext{q_{i}}}}\monomial_{P\sqsubseteq Q}^{\blue{t}}\times\prod_{\blue{C(x_0,t)\in \ext{q_{i}}}}\monomial_C^{\blue{t}}\times\nonomial_C^{\blue{t}}.$$ 
\end{itemize}

\noindent \blue{We define $\pi_{i+1}$ so that it coincides with $\pi_i$ (hence with $\pi$) on $\mn{terms}(q_{i+1})\subseteq \mn{terms}(q_i)$ and on all $t$ such that $P(\vec{t},t)\in \ext{q_{i+1}}$ with $\vec{t}\neq y_0$. Note that 
\begin{enumerate}[(i)]
\item $P(\vec{t},t)\in \ext{q_{i+1}}$ with $\vec{t}\neq y_0$ implies that $P(\vec{t},t)\in \ext{q_i}$ and $x_0\notin \vec{t}$ by the form of atoms added to $q_i$ to obtain $q_{i+1}$ in (S6) and the fact that $x_0$ does not occur in $q_{i+1}$, and 
\item $P(\vec{t},t)\in \ext{q_i}$ with $x_0\notin \vec{t}$ implies that either $P(\vec{t},t)\in \ext{q_{i+1}}$ (or $P(\vec{t},t)[y'\leftarrow y_0 \mid y'\in V_p]\in \ext{q_{i+1}}$ with $P$ a role name) with $\vec{t}\neq y_0$ or $P(\vec{t},t)$ is of the form $D(y',t)$ with $y'\in V_p$. 
\end{enumerate}It remains to define $\pi_{i+1}(z)$ for each $D(y_0,z)\in\ext{q_{i+1}}$, which we do as follows.
\begin{itemize}
\item If no occurrence of $D(y_0)$ has been removed in step (S6):
\begin{itemize}
\item For each $D(y_0,z)\in\ext{q_i}$, let $\pi_{i+1}(z)=\pi_i(z)$.
\item There is a one-to-one correspondence between the atoms $D(y_0,z)\in\ext{q_{i+1}}\setminus\ext{q_i}$ and the $A^{C,t}_i$ that are equal to $D$ in $At(q_i,x_0,A\sqsubseteq \exists P)$. We use this correspondence to define $\pi_{i+1}(z)=\monomial_{A^{C,t}_i}$, for $A^{C,t}_i$ corresponding to $D(y_0,z)$, where $\monomial_{A^{C,t}_i}$ is such that $(e_0,\monomial_{A^{C,\blue{t}}_i})\in (A^{C,\blue{t}}_i)^{\Imc_\Omc}$ (\cf (S3)). 
\end{itemize}
\item If some occurrence of $D(y_0)$ has been removed in step (S6), there are less atoms of the form $D(y_0,z)$ in $\ext{q_{i+1}}\setminus\ext{q_i}[y'\leftarrow y_0\mid y'\in V_p]$ than $A^{C,t}_i$ equal to $D$ in $At(q_i,x_0,A\sqsubseteq \exists P)$. 
In this case, we need to choose a subset $M'$ of at most $\mn{Card}(\Omc)$ monomials from $M=\{\monomial_{A^{C,t}_i}\mid A^{C,t}_i\in At(q_i,x_0,A\sqsubseteq \exists P), A^{C,t}_i=D\}\cup\{\pi_i(z)=\nonomial \mid D(y',z)\in \ext{q_i}, y'\in V_p\}$ whose product is equal to the product of all monomials in $M$ (this is possible since the product cannot contain more than $\mn{Card}(\Omc)$ variables). Since there are $\mn{Card}(\Omc)$ atoms of the form $D(y_0,z)$ in $\ext{q_{i+1}}$, we can define a surjective function $f$ from $\{z\mid D(y_0,z)\in \ext{q_{i+1}}\}$ to $M'$ and define $\pi_{i+1}(z)=f(z)$. 
\end{itemize}
Recall that for every $y\in V_p$, $e_0=\pi_i(y)=\pi_{i+1}(y_0)$ so $\pi_{i+1}$ is indeed a match for $\ext{q_{i+1}}$ in $\Imc_\Omc$. Moreover, the definition of $\pi_{i+1}$ ensures that $$\prod_{D(y_0,t)\in \ext{q_{i+1}}} \pi_{i+1}(t)=\prod_{D(y',t)\in \ext{q_{i}}, y'\in V_p}\pi_{i}(t)\times \prod_{A_i^{C,t}\in At(q_i,x_0,A\sqsubseteq \exists P)} \monomial_{A_i^{C,t}}$$}

\noindent We have that (1) $\mn{terms}(q_{i+1})\cap V_{an}\subsetneq \mn{terms}(q_i)\cap V_{an}$ since $x_0$ has been removed, (2) $\pi_{i+1}$ is a match for $\ext{q_{i+1}}$ in $\Imc_\Omc$ that coincides with $\pi$ on \blue{$\mn{terms}(q_{i+1})$}, and \blue{we obtain (using the fact that for monomials, $\monomial\times\monomial=\monomial$):
\begin{align*}
(3)\ \prod_{P(\vec{t},t)\in \ext{q_{i+1}}}& \pi_{i+1}(t)\times \monomial_{i,i+1}
=\prod_{P(\vec{t},t)\in \ext{q_{i+1}}, \vec{t}\neq y_0} \pi_{i+1}(t)
\times \prod_{D(y_0,t)\in \ext{q_{i+1}}} \pi_{i+1}(t)\times \monomial_{i,i+1}
\\
=&\prod_{P(\vec{t},t)\in \ext{q_{i}}, x_0\notin\vec{t}} \pi_{i}(t)
\times \prod_{D(y',t)\in \ext{q_{i}}, y'\in V_p}\pi_{i}(t)\times \prod_{A_i^{C,t}\in At(q_i,x_0,A\sqsubseteq \exists P)} \monomial_{A_i^{C,t}}\times \monomial_{i,i+1}
\\
=&\prod_{P(\vec{t},t)\in \ext{q_{i}}, x_0\notin\vec{t}} \pi_{i}(t)
\times \prod_{A_i^{C,t}\in At(q_i,x_0,A\sqsubseteq \exists P)} \monomial_{A_i^{C,t}}\times \monomial_{i,i+1}\quad \text{since $y'\in V_p$ implies $x_0\neq y'$}
\\
=&\prod_{P(\vec{t},t)\in \ext{q_{i}}, x_0\notin\vec{t}} \pi_{i}(t)
\times  \monomial_A\times\prod_{A_i^{C,t}\in At(q_i,x_0,A\sqsubseteq \exists P)} \monomial_{A_i^{C,t}}\times \monomial_{i,i+1}
\\
=&\prod_{P(\vec{t},t)\in \ext{q_{i}}, x_0\notin\vec{t}} \pi_{i}(t)
\times  \monomial_A\times\prod_{A_i^{C,t}\in At(q_i,x_0,A\sqsubseteq \exists P)} \monomial_{A_i^{C,t}}
\times v\times\prod_{\blue{Q(y,x_0,t)\in \ext{q_{i}}}}\monomial_{P\sqsubseteq Q}^{\blue{t}}\times\prod_{\blue{C(x_0,t)\in \ext{q_{i}}}}\monomial_C^{\blue{t}}\times\nonomial_C^{\blue{t}}
\\
=&\prod_{P(\vec{t},t)\in \ext{q_{i}}, x_0\notin\vec{t}} \pi_{i}(t)
\times \prod_{\blue{Q(y,x_0,t)\in \ext{q_{i}}}}\monomial_A\times v\times\monomial_{P\sqsubseteq Q}^{\blue{t}}\times \prod_{\blue{C(x_0,t)\in \ext{q_{i}}}} \monomial_A\times v\times\monomial_C^{\blue{t}}\times\nonomial_C ^{\blue{t}}\times\prod_{i=1}^{p^{C,\blue{t}}}\monomial_{A^{C,\blue{t}}_i} 
\\
=&\prod_{P(\vec{t},t)\in \ext{q_{i}}, x_0\notin\vec{t}} \pi_{i}(t)\times\prod_{Q(y,x_0,t)\in \ext{q_{i}}}\ \pi_i(t)\times\prod_{C(x_0,t)\in \ext{q_{i}}}\ \pi_i(t)
\\
=&\prod_{P(\vec{t},t)\in \ext{q_i}} \pi_i(t).
\end{align*} 
}

\noindent The rewriting sequence ends when $q^*$ does not contain any variable from $V_{an}$ and we obtain a match $\pi^*$ for $\ext{q^*}$ in $\Imc_\Omc$ that coincides with $\pi$ on \blue{$\mn{terms}(q^*)$} and such that $\prod_{P(\vec{t},t)\in \ext{q^*}} \pi^*(t)\times \monomial^*=\prod_{P(\vec{t},t)\in \ext{q^*}} \pi^*(t)\times \monomial_0\times\prod_{i=0}^{k-1}\monomial_{i,i+1}=\prod_{P(\vec{t},t)\in \ext{q_k}} \pi_k(t)\times \prod_{i=0}^{k-1}\monomial_{i,i+1}= \prod_{P(\vec{t},t)\in \ext{q_0}} \pi(t)= \prod_{P(\vec{t},t)\in \ext{q}} \pi(t)$. 
\\
\noindent$\bullet$ By definition of $V_{an}$, $\pi^*$ maps every variable in $q^*$ to an individual from $\NI$. Hence, for every $P(\vec{t},t)\in \ext{q^*}$:
\begin{itemize}
\item $\Imc_\Omc\models (P(\pi^*(\vec{t})),\pi^*(t))$; 
\item so $\Omc\models (P(\pi^*(\vec{t})),\pi^*(t))$ by Theorem~\ref{thm:can-model-main};
\item \new{we have two cases to consider:
\begin{itemize}
\item if $\pi^*(\vec{t})\subseteq \individuals{\Omc}$, then $(P(\pi^*(\vec{t})),\pi^*(t))\in\mn{saturate}(\Omc)$ by Theorem~\ref{prop:completionalgorithmELHI}, 
\item otherwise, it must be the case that (i) $(P(\pi^*(\vec{t})),\pi^*(t))$ is of the form $(A(c),\nonomial)$ with $c\in\NI\setminus\individuals{\Omc}$ (since by construction of $\Imc_\Omc$, role interpretations in $\Imc_\Omc$ cannot contain any pair of individual names such that one of them is in $\NI\setminus\individuals{\Omc}$), and (ii) $\vec{t}$ is existentially quantified (since $q(\vec{a})$ does not contain any individual name from $\NI\setminus\individuals{\Omc}$ by assumption on $q(\vec{x})$ and $\vec{a}$), so by Theorem~\ref{prop:completionalgorithmELHI}, $(A(a_\top),\nonomial)\in\mn{saturate}(\Omc)$ and we can redefine $\pi^*$ by $\pi^*(\vec{t})=(a_\top)$, so that $(P(\pi^*(\vec{t})),\pi^*(t))=(A(a_\top),\nonomial)$ is in $\mn{saturate}(\Omc)$;
\end{itemize}}
\item it follows that $(P(\pi^*(\vec{t})),\pi^*(t))\in \Dmc$ by definition of $\Dmc$.
\end{itemize} 
It follows that $\pi^*$ is a match of $\ext{q^*}$ to $\Imc_\Dmc$ such that $\pi^*(\vec{x})=\vec{a}$. 
\\
\noindent$\bullet$ Let $\onomial=\prod_{P(\vec{t},t)\in \ext{q^*}}\pi^*(t)$. 
We obtain $\onomial\in \p{\Imc_\Dmc}{\ext{q^*}(\vec{a})}$ and $\nonomial=\prod_{P(\vec{t},t)\in \ext{q}} \pi(t)=\prod_{P(\vec{t},t)\in \ext{q^*}} \pi^*(t)\times \monomial^*=\onomial\times \monomial^*$. 

We have thus shown that for every $\nonomial\in \p{\Imc_{\Omc}}{\ext{q}(\vec{a})}$, there exists $(q^*,\monomial^*)\in \Rew(q,\Omc)$ and $\onomial\in \p{\Imc_{\Dmc}}{\ext{q^*}(\vec{a})}$ such that $\nonomial=\onomial\times \monomial^*$, \ie $\p{\Imc_{\Omc}}{\ext{q}(\vec{a})}\subseteq\{\monomial^*\times\onomial\mid (q^*,\monomial^*)\in \Rew(q,\Omc), \onomial\in \p{\Imc_{\Dmc}}{\ext{q^*}(\vec{a})}\}.$
\end{proof}

\paragraph{Proof of Corollaries~\ref{th:complexity-cq-elhi-entailment} and~\ref{th:complexity-cq-elhi} } Corollaries~\ref{th:complexity-cq-elhi-entailment} and~\ref{th:complexity-cq-elhi} are direct consequences of Theorem~\ref{th:CQ-answering-algorithm} and Lemma~\ref{lem:complexityevaluationqueryinsat} below. 

\begin{lemma}\label{lem:complexityevaluationqueryinsat}
For every $(q^*,\monomial^*)\in\Rew(q,\Omc)$, if $\Dmc$ is the set of annotated assertions in $\mn{saturate}(\Omc)$:
\begin{itemize}
\item for every $\onomial\in\monomials(\semiringVars)$, \new{one can decide whether $\onomial\in\p{\Imc_\Dmc}{\ext{q^*}}$ } 
in exponential time \wrt $|\Omc|$ and $|q|$, 
\item computing \new{$\Sigma_{\onomial\in \p{\Imc_\Dmc}{\ext{q^*}}}\onomial$} 
can be done in exponential time \wrt $|\Omc|$ and $|q|$.
\end{itemize}
\end{lemma}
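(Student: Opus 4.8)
The plan is to bound the two quantities in terms of the sizes of the objects produced by the rewriting machinery and the saturation, all of which have been controlled by earlier results. Recall that by Lemma~\ref{lem:size-rew}, every $(q^*,\monomial^*)\in\Rew(q,\Omc)$ has size polynomial in $|\Omc|$ and $|q|$; in particular $q^*$ has polynomially many atoms and terms, all using predicates from $\signature{\Omc}$ and terms from $\mn{terms}(q)$. By Theorem~\ref{prop:completionalgorithmELHI}, $\mn{saturate}(\Omc)$ is computable in exponential time, so $\Dmc$ (its set of annotated assertions) has at most exponentially many elements, each annotated by a monomial of size at most $|\Omc|$ since $\times$ is idempotent. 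The key observation is that $\Dmc$ consists only of assertions, so it \emph{is} an annotated database, and $q^*$ evaluated over $\Dmc$ behaves exactly as a relational query over an annotated database in the $\why$ semiring.

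First I would argue the entailment problem. Since $\Dmc$ is a set of annotated assertions, its canonical model $\Imc_\Dmc$ is the ``flat'' interpretation in which each concept/role name is interpreted by the corresponding annotated assertions (as used in the proof of Theorem~\ref{th:sem-algebra-cons-query}), and by Theorem~\ref{thm:can-model-main}, $\Dmc\models(q^*,\onomial)$ iff $\Imc_\Dmc\models(q^*,\onomial)$, i.e.\ iff there is a match $\pi$ of the extended version $\ext{q^*}$ in $\Imc_\Dmc$ with $\onomial=\bigotimes_{P(\vec t,t)\in\ext{q^*}}\pi(t)$. A match is determined by mapping the (polynomially many) terms of $q^*$ to elements of $\individuals{\mn{saturate}(\Omc)}$, of which there are polynomially many, so there are at most exponentially many candidate maps; for each one, checking that every atom is satisfied amounts to a membership test in $\Dmc$ (whose size is exponential in $|\Omc|$), and then computing the product $\onomial$ and comparing it to the given monomial takes polynomial time in $|\Dmc|$. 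Hence $\Dmc\models(q^*,\onomial)$ is decidable in exponential time in $|\Omc|$ and $|q|$. For the provenance $\Pmc(q^*(\vec a),\Dmc)$, by the same reasoning (and as in Theorem~\ref{th:sem-algebra-cons-query}, using that $\why$ is $\oplus$-idempotent so the sum collapses to a set) it equals $\bigoplus$ over the set of all monomials $\bigotimes_{P(\vec t,t)\in\ext{q^*}}\pi(t)$ ranging over the exponentially many matches $\pi$ of $\ext{q^*(\vec a)}$ in $\Imc_\Dmc$; enumerating these matches, computing each product (of polynomially many monomials, each of size $\le|\Omc|$, yielding a monomial of size $\le|\Omc|$ by idempotency), and collecting the distinct results is an exponential-time computation in $|\Omc|$ and $|q|$.

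The step I expect to need the most care is simply making the counting tight: one must note that although $\Dmc$ itself is exponentially large, each individual monomial appearing in it (and hence each product formed during query evaluation) has size linear in $|\Omc|$ thanks to $\times$-idempotency, so the polynomials $\Pmc(q^*(\vec a),\Dmc)$ do not blow up beyond exponential size, and all arithmetic on monomials stays within exponential time. Everything else is bookkeeping: the reduction of the annotated-query semantics to matches in the canonical model is Theorem~\ref{thm:can-model-main}, the collapse of the provenance sum over an $\oplus$-idempotent semiring to a set is the argument already used for Theorem~\ref{th:sem-algebra-cons-query}, and the size bounds on $q^*$ and on $\mn{saturate}(\Omc)$ are Lemma~\ref{lem:size-rew} and Theorem~\ref{prop:completionalgorithmELHI} respectively. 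Assembling these gives both bullet points of the lemma, and Corollaries~\ref{th:complexity-cq-elhi-entailment} and~\ref{th:complexity-cq-elhi} then follow by combining with Theorem~\ref{th:CQ-answering-algorithm}: $\Rew(q,\Omc)$ has exponentially many elements, each handled in exponential time.
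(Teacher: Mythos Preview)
Your proposal is correct and follows essentially the same approach as the paper: bound the size of $q^*$ via Lemma~\ref{lem:size-rew}, bound $|\Dmc|$ via Theorem~\ref{prop:completionalgorithmELHI}, and then enumerate all candidate matches of $\ext{q^*}$ in $\Dmc$, checking each in time polynomial in $|\Dmc|$. You add some extra justification (invoking the canonical model and the $\times$-idempotency bound on monomial sizes) that the paper's proof leaves implicit, but the argument is the same.
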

\begin{proof}
Let $(q^*,\monomial^*)\in \Rew(q,\Omc)$. By Lemma~\ref{lem:size-rew} the size of $q^*$ (hence of $\ext{q^*}$) is polynomial \wrt $|\Omc|$ and $|q|$, and 
by Theorem~\ref{prop:completionalgorithmELHI}, $\mn{saturate}(\Omc)$ can be computed in exponential time, so $|\Dmc|$ is exponential \wrt $|\Omc|$. 
Deciding whether \new{$\onomial\in\p{\Imc_\Dmc}{\ext{q^*}}$} amounts to deciding whether there is a match $\pi$ for $\ext{q^*}$ in \new{$\Imc_\Dmc$} such that $\onomial=\prod_{P(\vec{t},t)\in \ext{q^*}}\pi(t)$ and 
computing \new{$\Sigma_{\onomial\in \p{\Imc_\Dmc}{\ext{q^*}}}\onomial$} amounts to find the matches for $\ext{q^*}$ in \new{$\Imc_\Dmc$}. 
\begin{itemize}
\item The number of potential matches is exponential \wrt $|\ext{q^*}|$ and polynomial \wrt $|\Dmc|$, hence exponential \wrt $|\Omc|$ and $|q|$. 
\item Each match can be checked in polynomial time \wrt $|\ext{q^*}|$ and $\Dmc$, hence in exponential time \wrt $|\Omc|$ and polynomial time \wrt $|q|$. 
\end{itemize}
Thus deciding whether \new{$\onomial\in\p{\Imc_\Dmc}{\ext{q^*}}$} and computing \new{$\Sigma_{\onomial\in \p{\Imc_\Dmc}{\ext{q^*}}}\onomial$} can both be done in exponential time \wrt $|\Omc|$ and $|q|$.
\end{proof}

\complexityBCQprovmonomial*
\begin{proof}
To decide whether $\Omc\models (q,\monomial)$, we use the following algorithm.
\begin{itemize}
\item As in the proof of Corollary~\ref{th:complexity:provmonomial}, check that $\Omc$ is satisfiable in polynomial time (if $\Omc$ is not satisfiable, then $\Omc\models (q,\monomial)$), then compute $\mn{saturate}^{|\monomial|}(\Omc)$ (with the completion rules modified for $\ELHIbotrestr$, \cf Figure~\ref{fig:completion-rules-restr}) in polynomial time  \wrt  $|\Omc|$ and exponential time \wrt $|\monomial|$. 

\item Guess:
\begin{enumerate}
\item a rewriting sequence $(q_0,\monomial_0)\rightarrow_\Omc (q_1,\monomial_1)\rightarrow_\Omc\dots\rightarrow_\Omc(q_k,\monomial_k)$ such that  $(q,1)=(q_0,\monomial_0)$ and $(q^*,\monomial^*)=(q_k,\monomial_k)$, 
\item for every $0\leq i\leq k-1$, a certificate that $(q_{i+1},\monomial_{i+1})$ is obtained from $(q_i,\monomial_i)$ by applying steps (S1) to (S6) of Definition~\ref{def:rewriting} but using $\mn{saturate}^{|\monomial|}(\Omc)$  instead of $\mn{saturate}(\Omc)$ and 
\item a match $\pi$ of $\ext{q^*}$ in \new{$\Imc_{\Dmc^{|\monomial|}}$ such that $\monomial=\monomial^*\times\prod_{P(\vec{t},t)\in\ext{q^*}}\pi(t)$, where $\Dmc^{|\monomial|}$ is the set of annotated assertions in $\mn{saturate}^{|\monomial|}(\Omc)$}. 
\end{enumerate}
\item Verify (1) to (3) in polynomial time \wrt $|\Omc|$ and $|q|$. 
\begin{enumerate}
\item Since each rewriting step removes a variable from the query, $k\leq |\mn{terms}(q)|$, and by Lemma~\ref{lem:size-rew}, for every $0\leq i\leq k$, the size of $(q_i,\monomial_i)$ is polynomial \wrt $|\Omc|$ and $|q|$.  
\item Since the sizes of $q_i$ and $q_{i+1}$ are polynomial \wrt $|\Omc|$ and $|q|$, a certificate that $(q_i,\monomial_i)\rightarrow_\Omc (q_{i+1},\monomial_{i+1})$ is also of polynomial size and can be checked in polynomial time \wrt $|\Omc|$ and $|q|$. Indeed, it consists of a variable $x_0$ of $q_i$, a GCI $(A\sqsubseteq \exists P,v)$ of $\Omc$, and at most one RI $(P\sqsubseteq Q,\monomial_{P\sqsubseteq Q})$ of $\mn{saturate}^{|\monomial|}(\Omc)$ per role atom of $q_i$ and \new{$5$} GCIs and RIs of $\mn{saturate}^{|\monomial|}(\Omc)$ per concept atom of $q_i$ \new{(since $p+p'\leq 2$ in the GCI of the form $(B_1\sqcap\dots\sqcap B_p\sqcap B'_1\sqcap\dots\sqcap B'_{p'}\sqsubseteq C,n_C)\in\mn{saturate}(\Omc)$)}. 
\item Since the size of $\ext{q^*}$ is polynomial \wrt $|\Omc|$ and $|q|$, so is $\pi$, and since $\mn{saturate}^{|\monomial|}(\Omc)$ is of polynomial size  \wrt  $|\Omc|$ and exponential time \wrt $|\monomial|$, we can check that $\pi$ is indeed a match in polynomial time \wrt $|\Omc|$. 
\end{enumerate}
\end{itemize}

Since $\mn{saturate}^{|\monomial|}(\Omc)\subseteq\mn{saturate}(\Omc)$, it is clear that this algorithm is sound. It remains to show that it is complete. 
Recall that every annotated axiom that belongs to $\mn{saturate}(\Omc)$ and has at most $|\monomial|$ variables belongs to  $\mn{saturate}^{|\monomial|}(\Omc)$ because every annotated axiom added by a rule application has at least as many variables as the premises of the rule.

Assume that $\Omc\models (q,\monomial)$. \new{Since $q$ does not contain any individual name from $\NI\setminus\individuals{\Omc}$,} by Theorem~\ref{th:CQ-answering-algorithm}, \new{there exist $(q^*,\monomial^*)\in\Rew(q,\Omc)$ and $\onomial\in\p{\Imc_\Dmc}{\ext{q^*}}$ such that} $\monomial=\monomial^*\times\onomial$, where $\Dmc$ is the set of annotated assertions in $\mn{saturate}(\Omc)$. Since $\monomial=\monomial^*\times\onomial$, it follows that $|\onomial|\leq |\monomial|$ and $|\monomial^*|\leq |\monomial|$.

Since \new{$\onomial\in\p{\Imc_\Dmc}{\ext{q^*}}$}, there is a match $\pi$ of $\ext{q^*}$ in \new{$\Imc_\Dmc$} such that $\onomial=\prod_{P(\vec{t},t)\in\ext{q^*}}\pi(t)$. Since $|\onomial|\leq |\monomial|$, $\pi$ is actually a match of $\ext{q^*}$ to $\new{\Imc_{\Dmc^{|\monomial|}}}$ such that $\monomial=\monomial^*\times\prod_{P(\vec{t},t)\in\ext{q^*}}\pi(t)$.

Since $(q^*,\monomial^*)\in\Rew(q,\Omc)$, there is a rewriting sequence  $(q_0,\monomial_0)\rightarrow_\Omc (q_1,\monomial_1)\rightarrow_\Omc\dots\rightarrow_\Omc(q_k,\monomial_k)$ such that $(q,1)=(q_0,\monomial_0)$, $(q^*,\monomial^*)=(q_k,\monomial_k)$ and for every $0\leq i\leq k-1$, a certificate that $(q_{i+1},\monomial_{i+1})$ is obtained from $(q_i,\monomial_i)$ by applying steps (S1) to (S6) of Definition~\ref{def:rewriting}. 
Since $|\monomial^*|\leq |\monomial|$, every annotated axiom used in the rewriting sequence from $(q,1)$ to $(q^*,\monomial^*)$ is in $\mn{saturate}^{|\monomial|}(\Omc)$. 
\end{proof}

\section{Proofs for Section~\ref{sec:posbool-lin}}

\propposbooljustif*
\begin{proof}
\new{Since $\posbool$, $\alpha$ and $\Omc$ satisfy the conditions of Lemma~\ref{cl:auxcancisubset}, it follows that for every $\Jmc\in\Just(\alpha)$, it holds that $\Omc\models (\alpha,\bigwedge_{\beta\in\Jmc}\lambda_\semiringVars(\beta))$. Moreover, since $\posbool$ is absorptive, $\bigvee_{\Jmc\in\Just(\alpha)}\ \bigwedge_{\beta\in\Jmc}\lambda_\semiringVars(\beta)=\bigvee_{\Mmc\subseteq\Omc', \Mmc\models\alpha}\bigwedge_{\beta\in\Mmc}\lambda_\semiringVars(\beta)$. Indeed, every $\Mmc\subseteq\Omc'$ such that $\Mmc\models\alpha$ contains a subset $\Jmc\subseteq\Mmc$ which is a justification. To obtain that $\Pmc(\alpha,\Omc)=
	\bigvee_{\Jmc\in\Just(\alpha)}\ \bigwedge_{\beta\in\Jmc}\lambda_\semiringVars(\beta)$, it remains to show that for every subset $V\subseteq\semiringVars$, if $\Omc\models (\alpha,\bigwedge_{v\in V} v)$, then there exists $\Mmc\subseteq\Omc'$ such that $\{\lambda_X(\beta)\mid \beta\in\Mmc\}=V$ and $\Mmc\cup\{\beta\mid (\beta,1)\in\Omc\}\models\alpha$. Let $V\subseteq\semiringVars$ such that $\Omc\models (\alpha,\bigwedge_{v\in V} v)$. }
	
\new{If $\alpha$ is an assertion, then by Theorem~\ref{thm:can-model-main}, $\Imc_\Omc\models (\alpha,\bigwedge_{v\in V} v)$ where $\Imc_\Omc$ is the canonical model of $\Omc$. It follows from the construction of the canonical model of $\Omc$ that if $\Nmc=\{(\beta,x)\mid (\beta,x)\in\Omc, x\in V\cup\{1\}\}$ is the subset of $\Omc$ annotated with variables from $V$ or $1$, then $\Imc_\Nmc\models (\alpha,\bigwedge_{v\in V} v)$. Hence $\Nmc\models(\alpha,\bigwedge_{v\in V} v)$ and by Theorem~\ref{th:sem-entailment}, $\Nmc'\models\alpha$, where $\Nmc'=\{\beta\mid (\beta,x)\in\Nmc\}$. We obtain $\Mmc$ as required by setting $\Mmc=\{\beta\mid \beta\in\Nmc',\lambda_\semiringVars(\beta)\neq 1\}$.}

\new{Consider now the case where $\alpha$ is a GCI of the form $C\sqsubseteq D$ with $C$ and $D$ basic concepts and $C$ satisfiable \wrt $\Omc$, and $\Omc$ does not contain any GCI with $\top$ as left-hand side. Since \posbool is \timesidem, by Theorem~\ref{th:red-concept}, $\Omc\models (C\sqsubseteq D,\bigwedge_{v\in V} v)$ iff $\Omc\cup\Tmc_D\cup\Amc_C\models (E(a_0),\bigwedge_{v\in V} v)$ where $\Tmc_D=\{(D\sqsubseteq E, 1)\}$,
 $\Amc_C=\{(C(a_0),1)\}$ if $C\in\NC$, 
and $\Amc_C=\{(P(a_0,b_0),1)\}$ if $C=\exists P$, 
with $a_0, b_0\in\NI\setminus\individuals{\Omc}$ and $E\in\NC\setminus\signature{\Omc}$. We can then obtain as above $\Mmc\subseteq\Omc'$ such that $\{\lambda_X(\beta)\mid \beta\in\Mmc\}=V$ and $\Mmc\cup\{\beta\mid (\beta,1)\in\Omc\cup\Tmc_D\cup\Amc_C\}\models E(a_0)$ and one can check (using the fact that $\Omc\cup\Tmc_D\cup\Amc_C$ is satisfiable because $C$ is satisfiable \wrt $\Omc$) that $\Mmc\cup\{\beta\mid (\beta,1)\in\Omc\}\models C\sqsubseteq D$.}

\new{The case where $\alpha$ is a positive RI whose left-hand side is satisfiable \wrt $\Omc$ can be handled in the same way, using Theorem~\ref{th:red-role}.}
\end{proof}

Recall that the completion algorithm to compute provenance in the \lin semiring initializes 
\Smc as in Section~\ref{sec:completion} 
and extends it by exhaustively applying the rules in Table~\ref{tab:completionRules}, where rule applications are modified to 
change \Smc into
\[
\Smc\Cup\{(\alpha,\monomial)\} :=
	\begin{cases}
		\Smc\cup\{(\alpha,\monomial)\}  \text{ if there is no $(\alpha,\nonomial)\in\Smc$} \\
		\Smc\setminus\{(\alpha,\nonomial)\}\cup\{(\alpha,\monomial\times\nonomial)\}  \text{ if $(\alpha,\nonomial)\in\Smc$.}
	\end{cases}
\] 

\completionAlgoRelevance*
\begin{proof}
First note that by construction of $\mn{linsat}(\Omc)$, there are no two annotated axioms $(\alpha,\monomial)$ and $(\alpha,\monomial')$ ($\monomial\not=\monomial'$) in $\mn{linsat}(\Omc)$ since $\mn{linsat}(\Omc)$ is initialized in a way that respects this condition and if $(\alpha,\monomial)\in\mn{linsat}(\Omc)$, a rule application modifies the annotation of $\alpha$ in $\mn{linsat}(\Omc)$ instead of adding $(\alpha,\monomial')$. Hence we need to show that 
\begin{enumerate}
\item if $\alpha$ is an assertion of the form $A(a)$ or $R(a,b)$ with $a,b\in\individuals{\Omc}$ such that $\Omc\models \alpha$, then 
$(\alpha,\Pmc(\alpha,\Omc))\in\mn{linsat}(\Omc)$; and 
\item \new{if $\alpha$ is an assertion of the form $A(c)$ with $c\in\NI\setminus\individuals{\Omc}$ such that $\Omc\models \alpha$, then 
$(A(a_\top),\Pmc(A(a_\top),\Omc))\in\mn{linsat}(\Omc)$ (since it is easy to see that $\Pmc(A(c),\Omc)=\Pmc(A(a_\top),\Omc)$, e.g. using the canonical model of $\Omc$).}
\end{enumerate}
By definition of the addition in \lin, for every assertion $\beta$, $\Pmc(\beta,\Omc)=\{v \mid \exists \monomial, \Omc\models  (\beta, \monomial\times v)\}$. 
Hence it is sufficient to check whether for every variable $v$, there exists $\monomial$ such that $\Omc\models  (\alpha, \monomial\times v)$ \new{(resp.\ $\Omc\models  (A(a_\top), \monomial\times v)$)} iff there exists $\nonomial$ such that $(\alpha, \nonomial\times v)\in\mn{linsat}(\Omc)$  \new{(resp.\ $(A(a_\top), \nonomial\times v)\in\mn{linsat}(\Omc)$)}. 

Let $\mn{saturate}(\Omc)$ be the set obtained by the completion algorithm of Section~\ref{sec:subcompletionalgo}. 
By Proposition~\ref{prop:why-lin-pos} (which states that $\Omc^{\lin}$ and $\Omc^{\why}$ entail the same annotated assertions) and Theorem~\ref{prop:completionalgorithmELHI}, $\Omc\models  (\alpha, \monomial\times v)$ iff $(\alpha, \monomial\times v)\in \mn{saturate}(\Omc)$ \new{(resp.\ $\Omc\models  (A(a_\top), \monomial\times v)$ iff $(A(a_\top, \monomial\times v)\in \mn{saturate}(\Omc)$)}. 
We thus only need to show that $\mn{linsat}(\Omc)=\{(\beta,\Pi_{(\beta,\onomial)\in\mn{saturate}(\Omc)} \onomial) \mid (\beta,\onomial)\in \mn{saturate}(\Omc)\}$ to get the theorem's result. 
\begin{itemize}
\item The initial set $\Smc_0$ is the same for the completion algorithm that computes $\mn{saturate}(\Omc)$ and the one that computes $\mn{linsat}(\Omc)$, and $\Smc_0=\{(\beta,\Pi_{(\beta,\onomial)\in\Smc_0} \onomial) \mid (\beta,\onomial)\in \Smc_0\}$ since there is no two annotated axioms $(\beta,\onomial)$ and $(\beta,\onomial')$ in $\Smc_0$. 

\item Moreover, we can apply the completion rules in parallel to obtain $\mn{saturate}(\Omc)$ and $\mn{linsat}(\Omc)$ while preserving $\Smc=\{(\beta,\Pi_{(\beta,\onomial)\in\Smc'} \onomial) \mid (\beta,\onomial)\in \Smc'\}$ where $\Smc$ is an intermediate step in the computation of $\mn{linsat}(\Omc)$ and $\Smc'$ is an intermediate step in the computation of $\mn{saturate}(\Omc)$. Indeed, let $\CR$ be a rule applicable to $\Smc'$ and assume that $\Smc=\{(\beta,\Pi_{(\beta,\onomial)\in\Smc'} \onomial) \mid (\beta,\onomial)\in \Smc'\}$ before the application of $\CR$. 

First apply $\CR$ to $\Smc'$ as follows: 
\begin{itemize}
\item Let $\alpha_1,\dots,\alpha_n$ be the axioms in the premises of $\CR$ and $\beta$ the axiom in its consequence. 

\item For every $1\leq i\leq n$, let $mon(\alpha_i)=\{\onomial \mid (\alpha_i,\onomial)\in \Smc'\}$ and apply $\CR$ to every $\{(\alpha_1,\onomial_1),\dots,(\alpha_n,\onomial_n)\}$ where $\onomial_i\in mon(\alpha_i)$ for each $i$. After these applications, $\{(\beta,\Pi_{i=1}^n \onomial_i) \mid \onomial_i\in mon(\alpha_i), 1\leq i\leq n\}\subseteq \Smc'$. 
\end{itemize}
Then try to apply $\CR$ to $\Smc$. 

\begin{itemize}
\item If $\CR$ is not applicable to $\Smc$, since $\{(\alpha_1,\Pi_{\onomial_1\in mon(\alpha_1)}\onomial_1)\dots,(\alpha_n,\Pi_{\onomial_n\in mon(\alpha_n)}\onomial_n)\}\subseteq\Smc$, this means that $\Smc$ already contains $\beta$ annotated with $\Pi_{i=1}^n \Pi_{\onomial_i\in mon(\alpha_i)}\onomial_i$. 

\item Otherwise, $\CR$ is applicable to $\Smc$ and either adds to $\Smc$ a new annotated axiom $(\beta,\Pi_{i=1}^n \Pi_{\onomial_i\in mon(\alpha_i)}\onomial_i)$, or updates $(\beta,\nonomial)\in\Smc$ by adding to $\nonomial$ the variables from $\Pi_{i=1}^n \Pi_{\onomial_i\in mon(\alpha_i)}\onomial_i$ that do not already belong to it.  
\end{itemize}
In both cases, after applying $\CR$ in this way, $\Smc=\{(\beta,\Pi_{(\beta,\onomial)\in\Smc'} \onomial) \mid (\beta,\onomial)\in \Smc'\}$.
\end{itemize}
Hence, 
$\mn{linsat}(\Omc)=\{(\beta,\Pi_{(\beta,\onomial)\in\mn{saturate}(\Omc)} \onomial) \mid (\beta,\onomial)\in \mn{saturate}(\Omc)\}.$
\end{proof}

\complexityRelevance*
\begin{proof}
We show that the saturated set $\mn{linsat}(\Omc)$ is computed in exponential time (resp.\  polynomial time if the ontology belongs to $\ELHIbotrestr$). \new{The theorem then follows from Theorem~\ref{completionAlgoRelevance} when $\alpha$ is an assertion. When $\alpha$ is a GCI or an RI, one first need to check whether its left-hand side is satisfiable \wrt $\Omc$ in exponential time (resp.\  polynomial time if the ontology belongs to $\ELHIbotrestr$). If it is unsatisfiable, $\Pmc(\alpha,\Omc)=1$. Otherwise, we use the polynomial reduction from annotated GCI or RI entailment to annotated assertion entailment in \timesidem semirings (Theorems~\ref{th:red-concept} and~\ref{th:red-role}) to obtain $\Omc'$ and an assertion $\beta$ such that $\Omc\models (\alpha,\monomial)$ iff $\Omc'\models (\beta,\monomial)$, so that $\Pmc(\alpha,\Omc)=\Pmc(\beta,\Omc')$.} 

Each rule application either adds a new axiom or adds a variable to the axiom annotation in $\Smc$. As the number of variables is linear in $\Omc$, the total number of rule applications is linearly bounded by the number of (non-annotated) axioms that may be added.  
It follows from the proofs of Theorems~\ref{prop:completionalgorithmELHI} and~\ref{prop:completionalgorithmELHIrestr} that this number of axioms is exponential in the size of $\Omc$ for the general $\ELHIbot$ case, and polynomial in the size of $\Omc$ in the case of $\ELHIbotrestr$. \blue{Moreover, it also follows from the proofs of Theorems~\ref{prop:completionalgorithmELHI} and~\ref{prop:completionalgorithmELHIrestr} that for each rule application, there are at most an exponential (polynomial in the case of $\ELHIbotrestr$) number of rule instantiation evaluations}.
\end{proof}

\Propusableaxioms*
\begin{proof}
\new{
We first consider the case where $\alpha$ is an assertion. By Theorem~\ref{thm:can-model-main}, for every $\monomial\in\lin$, $\Omc\models (\alpha,\monomial)$ iff $\Imc_{\Omc}\models (\alpha,\monomial)$, where $\Imc_{\Omc}$ is the canonical model of $\Omc$.  
It follows that $\gamma\in\Omc'$ is relevant to entail $\alpha$ \wrt $\Omc$ iff there exists a monomial $\monomial$ such that $\lambda_\semiringVars(\gamma)$ occurs in $\monomial$ and $\Imc_{\Omc}\models (\alpha,\monomial)$. 
Let $\Jmc_\gamma$ be the (classical) interpretation defined by $\Delta^{\Jmc_\gamma}=\Delta^{\Imc_\Omc}$, $a^{\Jmc_\gamma}=a^{\Imc_\Omc}$ for every $a\in\NI$, and for every $A\in\NC$ and $R\in\NR$:
\begin{itemize}
\item $A^{\Jmc_\gamma}=\{e\mid (e,\nonomial)\in A^{\Imc_\Omc}\}$,
\item ${A^\gamma}^{\Jmc_\gamma}=\{e\mid (e,\nonomial)\in A^{\Imc_\Omc}, \lambda_\semiringVars(\gamma) \text{ occurs in }\nonomial\}$, 
\item $R^{\Jmc_\gamma}=\{(d,e)\mid (d,e,\nonomial)\in R^{\Imc_\Omc}\}$, and 
\item ${R^\gamma}^{\Jmc_\gamma}=\{(d,e)\mid (d,e,\nonomial)\in R^{\Imc_\Omc}, \lambda_\semiringVars(\gamma) \text{ occurs in }\nonomial\}$. 
\end{itemize}
By construction of $\Jmc_\gamma$, $\gamma$ is relevant to entail $\alpha$ \wrt $\Omc$ iff $\Jmc_\gamma\models \alpha^\gamma$. 
We show that $\Jmc_\gamma$ is the canonical model of $\Omc'\cup{\Omc'}^\gamma\cup\{\gamma^\gamma\}$, so that $\gamma$ is relevant to entail $\alpha$ \wrt $\Omc$ 
iff $\Omc'\cup{\Omc'}^\gamma\cup\{\gamma^\gamma\}\models \alpha^\gamma$, \ie $\gamma$ is usable to derive $\alpha$ \wrt~$\Omc'$. 
}

\new{Recall that $\Imc_\Omc=\bigcup_{i\geq 0}\Imc_i$ and let $\Jmc_i$ be defined from $\Imc_i$ as $\Jmc_\gamma$ is defined from $\Imc_\Omc$. 
First, one can easily show (using structural induction in the case of GCI and the fact that GCIs and RIs in ${\Omc'}^\gamma=\bigcup_{\beta\in\Omc'}f^\gamma(\beta)$ are such that their left-hand sides all contain a concept or role name of the form $A^\gamma$ or $R^\gamma$) that:
\begin{itemize}
\item for every GCI $C\sqsubseteq D\in \Omc'$, ${C}^{\Jmc_i}=\{d\mid (d,\nonomial)\in C^{\Imc_i}\}$, 
\item for every GCI $C'\sqsubseteq D'\in {\Omc'}^\gamma$ \blue{such that $C'\sqsubseteq D'\in f^\gamma(C\sqsubseteq D)$}, ${C'}^{\Jmc_i}=\{d\mid (d,\nonomial)\in C^{\Imc_i}, \lambda_\semiringVars(\gamma) \text{ occurs in }\nonomial\}$, 
\item for every RI $P\sqsubseteq Q\in \Omc'$, ${P}^{\Jmc_i}=\{(d,e)\mid (d,e,\nonomial)\in P^{\Imc_i}\}$, 
\item for every RI $P'\sqsubseteq Q'\in {\Omc'}^\gamma$ \blue{such that $P'\sqsubseteq Q'\in f^\gamma(P\sqsubseteq Q)$}, ${P'}^{\Jmc_i}=\{(d,e)\mid (d,e,\nonomial)\in P^{\Imc_i}, \lambda_\semiringVars(\gamma) \text{ occurs in }\nonomial\}$. 
\end{itemize}
It is easy to see that $\Jmc_0$ is the interpretation that satisfies exactly the assertions in $\Omc'\cup{\Omc'}^\gamma\cup\{\gamma^\gamma\}$ (note that they are exactly the assertions in $\Omc'$, plus $\gamma^\gamma$ in the case where $\gamma$ is an assertion, and that in this case, $\Jmc_0\models\gamma^\gamma$ since $\Imc_0\models (\gamma,\lambda_\semiringVars(\gamma))$). Then each application of the chase rule from $\Imc_i$ to $\Imc_{i+1}$ that uses some $(e,\nonomial)\in C^{\Imc_i}$ and $(C\sqsubseteq D,v)\in\Omc$ and yields $(e,\nonomial\times v)\in D^{\Imc_{i+1}}$ corresponds to the following (possibly multiple) applications of the chase rule that build $\Jmc_{i+1}$ from $\Jmc_i$ \blue{using axioms from $\Omc'\cup{\Omc'}^\gamma\cup\{\gamma^\gamma\}$}:
\begin{itemize}
\item using $C\sqsubseteq D\in\Omc'$ and $e\in C^{\Jmc_i}=\{d\mid (d,\onomial)\in C^{\Imc_i}\}$, since it is indeed the case that $e\in D^{\Jmc_{i+1}}=\{d\mid (d,\onomial)\in D^{\Imc_{i+1}}\}$, as $(e,\nonomial\times v)\in D^{\Imc_{i+1}}$, 
\item in the case where $\gamma=C\sqsubseteq D$ (hence $\lambda_\semiringVars=v$), using $\gamma^\gamma=C\sqsubseteq D'$ (where $D'$ is obtained by replacing the unique predicate in $D$ by its adornment by $\gamma$) and $e\in C^{\Jmc_i}=\{d\mid (d,\onomial)\in C^{\Imc_i}\}$, since it is indeed the case that  $e\in {D'}^{\Jmc_{i+1}}=\{d\mid (d,\onomial)\in D^{\Imc_{i+1}}, \lambda_\semiringVars(\gamma) \text{ occurs in }\onomial\}$ since $(e,\nonomial\times v)\in D^{\Imc_{i+1}}$ and $\lambda_\semiringVars=v$, 
\item in the case where $\lambda_\semiringVars$ occurs in $\nonomial$, using every $C'\sqsubseteq D'\in f^\gamma(C\sqsubseteq D)$ and $e\in {C'}^{\Jmc_i}=\{d\mid (d,\onomial)\in C^{\Imc_i}, \lambda_\semiringVars(\gamma) \text{ occurs in }\onomial\}$, since it is indeed the case that $e\in {D'}^{\Jmc_{i+1}}=\{d\mid (d,\onomial)\in D^{\Imc_{i+1}}, \lambda_\semiringVars(\gamma) \text{ occurs in }\onomial\}$ since $(e,\nonomial\times v)\in D^{\Imc_{i+1}}$ and $\lambda_\semiringVars$ occurs in $\nonomial$,
\end{itemize} 
and similarly for application of the chase rule using some RI. Hence $\Jmc_{i+1}$ is obtained from $\Jmc_i$ by applying one or several chase rules using axioms of $\Omc'\cup{\Omc'}^\gamma\cup\{\gamma^\gamma\}$. Moreover, no rule is applied twice with the same axiom and tuple, and the rule application is fair since every rule applicable in $\Jmc_i$ corresponds to a rule applicable in $\Imc_i$ and will thus eventually be applied. We obtain that $\Jmc_\gamma$ is indeed the canonical model of $\Omc'\cup{\Omc'}^\gamma\cup\{\gamma^\gamma\}$.}

\new{We now consider the case where $\alpha$ is a GCI of the form $C\sqsubseteq D$ with $C$ and $D$ basic concepts such that $C$ is satisfiable \wrt $\Omc$ and $\Omc$ does not contain any GCI with $\top$ as left-hand side. Since \lin is \timesidem, by Theorem~\ref{th:red-concept}, for every $\monomial\in\lin$, $\Omc\models (C\sqsubseteq D,\monomial)$ iff $\Omc\cup\Tmc_D\cup\Amc_C\models (E(a_0), \monomial)$ where $\Tmc_D=\{(D\sqsubseteq E, 1)\}$,
 $\Amc_C=\{(C(a_0),1)\}$ if $C\in\NC$, 
and $\Amc_C=\{(P(a_0,b_0),1)\}$ if $C=\exists P$, 
with $a_0, b_0\in\NI\setminus\individuals{\Omc}$ and $E\in\NC\setminus\signature{\Omc}$. Hence, $\gamma\in\Omc'$ is relevant to entail $\alpha$ \wrt $\Omc$ iff there exists a monomial $\monomial$ such that $\lambda_\semiringVars(\gamma)$ occurs in $\monomial$ and $\Omc\cup\Tmc_D\cup\Amc_C\models (E(a_0), \monomial)$, i.e., iff $\lambda_\semiringVars(\gamma)$ is relevant to entail $E(a_0)$ \wrt $\Omc\cup\Tmc_D\cup\Amc_C$. It is easy to check that replacing the annotation $1$ by some fresh variables in $\Tmc_D\cup\Amc_C$, so that $\Omc\cup\Tmc_D\cup\Amc_C$ fulfills the conditions of the proposition (note that since $C$ is satisfiable \wrt $\Omc$, $\Omc\cup\Tmc_D\cup\Amc_C$ is satisfiable), has no impact on whether $\lambda_\semiringVars(\gamma)$ is relevant to entail $E(a_0)$. Hence, using the proposition in the assertion case, $\gamma$ is relevant to entail $E(a_0)$ \wrt $\Omc\cup\Tmc_D\cup\Amc_C$ iff it is usable to derive $E(a_0)$ \wrt the non-annotated version $\Omc'\cup\Tmc'_D\cup\Amc'_C$. Finally, one can check that $\gamma$ is usable to derive $E(a_0)$ \wrt $\Omc'\cup\Tmc'_D\cup\Amc'_C$ iff it is usable to derive $C\sqsubseteq D$ \wrt $\Omc'$.}

\new{The case where $\alpha$ is a positive RI whose left-hand side is satisfiable \wrt $\Omc$ can be handled in the same way, using Theorem~\ref{th:red-role}.}
\end{proof}


\section{Proof of Theorem \ref{theo:ELHIrestrcomplexity} (Complexity of $\ELHIbotrestr$)}\label{app:ELHIrestrcomplexity}
The proof of Theorem \ref{theo:ELHIrestrcomplexity} relies on the algorithms we develop for computing the \mbox{\why-}pro\-venance of assertions and queries \wrt $\ELHIbotrestr$ ontologies in Section \ref{sec:completionELHIrestr}.

\theoELHIrestrcomplexity*
\begin{proof}
Let $\Omc$ be a (non-annotated) ontology that belongs to $\ELHIbotrestr$, 
and let $\Omc'$ be the ontology obtained by applying exhaustively the saturation rules for $\ELHIbotrestr$ defined in Section \ref{sec:completionELHIrestr}, while ignoring the monomials part of the rules. 
\begin{itemize}
\item We have shown in the proof of Corollary~\ref{th:complexity:provmonomial} that $\Omc'$ can be built in polynomial time \wrt $|\Omc|$ and contains some $\bot(a)$ iff $\Omc$ is unsatisfiable, so that satisfiability in $\ELHIbotrestr$ is in \PTime. 

\item Assertion entailment can be reduced to satisfiability in polynomial time as follows: $\Omc\models A(a)$ iff $\Omc\cup\{A\sqcap B\sqsubseteq\bot, B(a)\}$ is unsatisfiable where $B\notin\signature{\Omc}$, and $\Omc\models R(a,b)$ iff $\Omc\cup\{R\sqcap S\sqsubseteq\bot, S(a,b)\}$ is unsatisfiable where $S\notin\signature{\Omc}$. Note that in both cases the modified ontologies belong to $\ELHIbotrestr$. 
Hence assertion entailment is in \PTime. 

\item RI entailment can be reduced to role assertion entailment in polynomial time as follows: $\Omc\models P_1\sqsubseteq P_2$ iff $\Omc\cup\{P_1(a_0,b_0)\}\models P_2(a_0,b_0)$ where $a_0,b_0$ are fresh individual names. Hence RI entailment is in \PTime.

\item GCI entailment can be reduced to concept  assertion entailment in polynomial time. 
\new{Given a GCI $C\sqsubseteq D$, } 
let $\Tmc_D=\emptyset$ if $D=\bot$ and $\Tmc_D$ be the set of GCIs obtained from the normalization of $D\sqsubseteq E$ otherwise, where $E\notin\signature{\Omc}$. Note that $\Tmc_D$ belongs to $\ELHIbotrestr$ since for any $\ELHIbot$ concept $D$, the normalization of $D\sqsubseteq E$ will only produce GCIs with concept names in the right-hand side. 
Let $\Amc_C=f(C,a_0)$ where $a_0$ is an individual name that does not occur in $\Omc$ and $f$ is the function inductively defined as 
follows, where all constants introduced are fresh: 
\begin{itemize}
\item $f(\top,a)=\emptyset$,
\item $f(A,a)=\{A(a)\}$ if $A\in\NC$, 
\item $f(\exists R.B,a)=\{R(a,b)\}\cup f(B,b)$, 
\item $f(\exists R^-.B,a)=\{R(b,a)\}\cup f(B,b)$, 
\item $f(B\sqcap B',a)=f(B,a)\cup f(B',a)$. 
\end{itemize}
We can show that 
$\Omc\models C\sqsubseteq D$ iff $\Omc\cup\Tmc_D\cup\Amc_C\models E(a_0)$. 
Hence GCI entailment is in \PTime.

\item For BCQ entailment, we adapt the rewriting algorithm (Definition \ref{def:rewriting}) so that it does not take into account monomials and uses $\Omc'$ instead of $\mn{saturate}(\Omc)$. The proof is then similar to the one for Theorem \ref{th:complexity:provBCQmonomial}: We obtain a \NP-upper bound by guessing $\Omc'$, a rewriting $q^*$, a rewriting sequence and a match for $q^*$ in the set $\Dmc$ of assertions in $\Omc'$. Since $\Omc'=\{\alpha\mid(\alpha,\monomial)\in \mn{saturate}(\Omc)\}$, it is easy to check that $q^*$ can be obtained by the adapted rewriting algorithm iff there exists $(q^*,\monomial^*)\in\Rew(q^*,\Omc^{\why})$, and that the BCQ entailment algorithm is correct. 
\end{itemize}
\NP-hardness of  BCQ entailment already holds for, e.g., DL-Lite \shortcite[Theorem~44]{DBLP:journals/jar/CalvaneseGLLR07}, \new{and \PTime-hardness of axiom entailment already holds for the language of GCIs of the form $C\sqsubseteq D$ with  $C:= A\mid A_1\sqcap A_2$ and $D:=A \mid \bot$~\cite{Cook2011-COOLFO-2}
	(see also \shortcite[Theorem 4.3, case 1]{DBLP:conf/kr/CalvaneseGLLR06},  for PTime-hardness of instance checking in the language of GCIs of the form $C\sqsubseteq D$, with $C:= A\mid A_1\sqcap A_2 \mid \exists R.A$ and $D:=A$, where $A,A_1,A_2\in \NC$ and $R\in\NR$}). 
\end{proof}

\end{document}